\let\OLDthebibliography\thebibliography
\renewcommand\thebibliography[1]{
  \OLDthebibliography{#1}
  \setlength{\parskip}{0pt}
	\setlength{\itemsep}{0pt}
}
\DeclareRobustCommand{\mybox}[2][gray!20]{%
\begin{tcolorbox}[   %% Adjust the following parameters at will.
        breakable,
        left=0pt,
        right=0pt,
        top=0pt,
        bottom=0pt,
        colback=#1,
        colframe=#1,
        width=\dimexpr\textwidth\relax, 
        enlarge left by=0mm,
        boxsep=5pt,
        arc=0pt,outer arc=0pt,
        ]
        #2
\end{tcolorbox}
}
\newcommand{\NN}{{\sf I\kern-0.14emN}}   % Natural numbers
\newcommand{\ZZ}{{\sf Z\kern-0.45emZ}}   % Integers
\newcommand{\QQQ}{{\sf C\kern-0.48emQ}}   % Rational numbers
\newcommand{\RR}{{\sf I\kern-0.14emR}}   % Real numbers
\newcommand{\normallinespacing}{\renewcommand{\baselinestretch}{1.5} \normalsize}
\newcommand{\syncc}{~\stackrel{\textstyle \rhd\kern-0.57em\lhd}{\scriptstyle L}~}
\newtheorem{theorem}{Theorem}
\newtheorem{lemma}{Lemma}
\newtheorem{proposition}{Proposition}
\newtheorem{corollary}{Corollary}
\newtheorem{definition}{Definition}
\newtheorem{assumption}{Assumption}
\newtheorem{remark}{Remark}
\newtheorem{example}{Example}
\newtheorem{condition}{Condition}
\newcommand{\highlight}[1]{\textcolor{black}{#1}}
\begin{document}

% \includegraphics[width=8cm]{}\\[1cm]

% \title{\LARGE {\bf Multi-Agent Reinforcement Learning for Cooperative Tasks: Theory, Method and Application to Energy Network}\\
\title{\LARGE {\bf Shapley Value Based Multi-Agent Reinforcement Learning: Theory, Method and Its Application to Energy Network}\\
 \vspace*{6mm}
}

\author{\textbf{Jianhong Wang}}
% \vspace*{6mm}
% Main supervisor: Dr. Yunjie Gu\\
% Co-supervisor: Prof. Timothy C. Green\\
% \vspace{2mm}
% Co-supervisor: Dr. Tae-Kyun Kim
% }

\submitdate{\highlight{01/01} 2023}

\normallinespacing
\maketitle

\preface

\addcontentsline{toc}{chapter}{Abstract}

\begin{abstract}
    Multi-agent reinforcement learning is an area of rapid advancement in artificial intelligence and machine learning. One of the important questions to be answered is how to conduct credit assignment in a multi-agent system. There have been many schemes designed to conduct credit assignment by multi-agent reinforcement learning algorithms. Although these credit assignment schemes have been proved useful in improving the performance of multi-agent reinforcement learning, most of them are designed heuristically without a rigorous theoretic basis and therefore infeasible to understand how agents cooperate. In this thesis, we aim at investigating the foundation of credit assignment in multi-agent reinforcement learning via cooperative game theory. We first extend a game model called convex game and a payoff distribution scheme called Shapley value in cooperative game theory to Markov decision process, named as Markov convex game and Markov Shapley value respectively. We represent a global reward game as a Markov convex game under the grand coalition. As a result, Markov Shapley value can be reasonably used as a credit assignment scheme in the global reward game. Markov Shapley value possesses the following virtues: (i) efficiency; (ii) identifiability of dummy agents; (iii) reflecting the contribution and (iv) symmetry, which form the fair credit assignment. Based on Markov Shapley value, we propose two multi-agent reinforcement learning algorithms called SHAQ and SQDDPG. To address the direct approximation problem existing in SQDDPG, we also propose SMFPPO. Furthermore, we extend Markov convex game to partial observability to deal with the partially observable problems, named as partially observable Markov convex game. In this game, we propose partially observable Shapley policy iteration and partially observable Shapley value iteration which endow the capability of tackling partially observable scenarios for SHAQ, SQDDPG and SMFPPO. In application, we evaluate SQDDPG and SMFPPO on the real-world problem in energy networks. 

\end{abstract}

\cleardoublepage

\addcontentsline{toc}{chapter}{Acknowledgements}

\begin{acknowledgements}

I would like to thank my parents for supporting me in accomplishing my dream on research and assisting my life with financial and mental support. 

Also, I extremely appreciate my supervisors, Dr. Yunjie Gu, Prof. Tim C. Green, and Dr. Tae-Kyun Kim. Yunjie is an ambitious researcher, who always encouraged me to deal with fundamental problems. Without his advice, it is impossible for me to insist on continuing my research on Shapley value for multi-agent reinforcement learning. I received very valuable advice from Tim on the applications of multi-agent reinforcement learning to power networks. Tae-Kyun gives me important help in pointing out some issues which I might have missed and polishing my idea.

Additionally, I would like to thank my friends, since many interesting ideas were born from the casual communication and entertainment with them. Especially, I would like to thank Wangkun Xu, who always talked to me about snippets of thoughts. Although many of them cannot be realized, some of them were still helpful to my research and stimulated some new ideas. 

I especially thank Yuan Zhang, my collaborator and good friend, who gave me lots of help and insightful comments on my research.

Finally, I would like to thank my girl friend, Dr. Li Guo, who always supported me to chase my dream and provided me with illumination when I was in the dark.

\end{acknowledgements}

\cleardoublepage

\addcontentsline{toc}{chapter}{Statement of Originality}

\begin{statement}
I, Jianhong Wang, confirm that this work is the result of my own endeavor, and all work undertaken by other authors is appropriately referenced. 
\end{statement}

\cleardoublepage

\addcontentsline{toc}{chapter}{Copyright Declaration}

\begin{CopyrightDeclaration}
The copyright of this thesis rests with the author. Unless otherwise indicated, its contents are licensed under a Creative Commons Attribution-Non Commercial 4.0 International Licence (CC BY-NC).

Under this licence, you may copy and redistribute the material in any medium or format. You may also create and distribute modified versions of the work. This is on the condition that: you credit the author and do not use it, or any derivative works, for a commercial purpose. 

When reusing or sharing this work, ensure you make the licence terms clear to others by naming the licence and linking to the licence text. Where a work has been adapted, you should indicate that the work has been changed and describe those changes.

Please seek permission from the copyright holder for uses of this work that are not included in this licence or permitted under UK Copyright Law.
	
\end{CopyrightDeclaration}

\cleardoublepage

\addcontentsline{toc}{chapter}{List of Abbreviations}

\begin{abbreviations}
\begin{tabular}{p{2cm}lp{20cm}l}
\newcommand{\tabincell}[2]{\begin{tabular}{@{}#1@{}}#2\end{tabular}}
AI & \textbf{A}rtificial \textbf{I}ntelligence \\
ML & \textbf{M}achine \textbf{L}earning \\
MAS & \textbf{M}ulti-\textbf{A}gent \textbf{S}ystem \\
RL & \textbf{R}einforcement \textbf{L}earning \\
MARL & \textbf{M}ulti-\textbf{A}gent \textbf{R}einforcement \textbf{L}earning \\
CG & \textbf{C}onvex \textbf{G}ame \\
MCG & \textbf{M}arkov \textbf{C}onvex \textbf{G}ame \\
POMCG & \textbf{P}artially \textbf{O}bservable \textbf{M}arkov \textbf{C}onvex \textbf{G}ame \\
GRG & \textbf{G}lobal \textbf{R}eward \textbf{G}ame \\
PSRO & \textbf{P}olicy-\textbf{S}pace \textbf{R}esponse \textbf{O}racle \\
ARMA & \textbf{A}uto\textbf{R}egressive–\textbf{M}oving-\textbf{A}verage Model \\
MDP & \textbf{M}arkov \textbf{D}ecision \textbf{P}rocess \\
PI & \textbf{P}olicy \textbf{I}teration \\
VI & \textbf{V}alue \textbf{I}teration \\
TD & \textbf{T}emporal \textbf{D}ifference \\
GPI & \textbf{G}eneralized \textbf{P}olicy \textbf{I}teration \\
DPG & \textbf{D}eterministic \textbf{P}olicy \textbf{G}radient \\
DNN & \textbf{D}eep \textbf{N}eural \textbf{N}etwork \\
DQN & \textbf{D}eep \textbf{Q}-\textbf{N}etwork \\
DDPG & \textbf{D}eep \textbf{D}eterministic \textbf{P}olicy \textbf{G}radient \\
TRPO & \textbf{T}rust \textbf{R}egion \textbf{P}olicy \textbf{O}ptimization \\
PPO & \textbf{P}roximal \textbf{P}olicy \textbf{O}ptimization \\
CTDE & \textbf{C}entralized \textbf{T}raining and \textbf{D}ecentralized \textbf{E}xecution \\
IGM & \textbf{I}ndividual-\textbf{G}lobal-\textbf{M}ax \\
SV & \textbf{S}hapley \textbf{V}alue \\
MSV & \textbf{M}arkov \textbf{S}hapley \textbf{V}alue
\end{tabular}

\begin{tabular}{p{2cm}lp{20cm}l}
\newcommand{\tabincell}[2]{\begin{tabular}{@{}#1@{}}#2\end{tabular}}
PV & \textbf{P}hoto\textbf{V}oltaics \\
DSO & \textbf{D}istributed \textbf{S}ystem \textbf{O}perator \\
SC & \textbf{S}hunt \textbf{C}apacitor \\
SVR & \textbf{S}tep \textbf{V}oltage \textbf{R}egulator \\
STATCOM & \textbf{STAT}ic Synchronous \textbf{COM}pensator \\
OPF & \textbf{O}ptimal \textbf{P}ower \textbf{F}low \\
ADMM & \textbf{A}lternating \textbf{D}irection \textbf{M}ethod of \textbf{M}ultipliers \\
SVC & \textbf{S}tatic \textbf{V}ar \textbf{C}ompensator \\
SHAQ & \textbf{SHA}pley \textbf{Q}-Learning \\
SQDDPG & \textbf{S}hapley \textbf{Q}-Value \textbf{D}eep \textbf{D}eterministic \textbf{P}olicy \textbf{G}radient \\
SMFPPO & \textbf{S}hapley Value \textbf{M}odel-\textbf{F}ree \textbf{P}olicy \textbf{G}radient \\
MSQ & \textbf{M}arkov \textbf{S}hapley \textbf{Q}-value \\
SBOE & \textbf{S}hapley-\textbf{B}ellman \textbf{O}ptimality \textbf{E}quation \\
SBO & \textbf{S}hapley-\textbf{B}ellman \textbf{O}perator \\
POMDP & \textbf{P}artially \textbf{O}bservable \textbf{M}arkov \textbf{D}ecision \textbf{P}rocess \\
Dec\text{-}POMDP & \textbf{D}ecentralized \textbf{P}artially \textbf{O}bservable \textbf{M}arkov \textbf{D}ecision \textbf{P}rocess \\
POSPI & \textbf{P}artially \textbf{O}bservable \textbf{S}hapley \textbf{P}olicy \textbf{I}teration \\
POSVI & \textbf{P}artially \textbf{O}bservable \textbf{S}hapley \textbf{V}alue \textbf{I}teration
\end{tabular}

\end{abbreviations}

\cleardoublepage

\addcontentsline{toc}{chapter}{List of Symbols}
\makenomenclature
\renewcommand{\nomname}{List of Symbols}

\renewcommand{\nompreamble}{The next list describes the commonly used symbols that will appear in the thesis.}

\nomenclature{\(\mathcal{S}\)}{State space (a set of states)}
\nomenclature{\(\mathcal{A}\)}{Action space (a set of actions), indicating the joint action space in multi-agent scenarios}
\nomenclature{\(T\)}{Probability transition function}
\nomenclature{\(Pr\)}{Probability measure}
\nomenclature{\(R_{t}\)}{Reward function at time step $t$}
\nomenclature{\(G_{t}\)}{Return from time step $t$}
\nomenclature{\(\pi\)}{Stationary policy function, indicating the joint policy in multi-agent scenarios}
\nomenclature{\(\pi_{i}\)}{Stationary policy of Agent $i$}
\nomenclature{\(\pi_{\theta}\)}{Parametric stationary policy function (indicating the joint policy in multi-agent scenarios)}
\nomenclature{\(\pi_{\theta_{i}}\)}{Parametric stationary policy function of Agent $i$}
\nomenclature{\(\mathbf{S}_{t}\)}{State variable at time step $t$, where $t$ may be ignored}
\nomenclature{\(\mathbf{S}_{t}'\)}{Next state variable at time step $t$, where $t$ may be ignored}
\nomenclature{\(\mathbf{A}_{t}\)}{Action variable at time step $t$, where $t$ may be ignored}
\nomenclature{\(\mathbf{s}\)}{Arbitrary value of state}
\nomenclature{\(\mathbf{s}'\)}{Arbitrary value of next state}
\nomenclature{\(\mathbf{a}\)}{Arbitrary value of action}
\nomenclature{\(V^{\pi}\)}{State value function evaluating policy $\pi$}
\nomenclature{\(Q^{\pi}\)}{Q-value function (action value function) evaluating policy $\pi$}
\nomenclature{\(V^{*}\)}{Optimal state value function}
\nomenclature{\(Q^{*}\)}{Optimal Q-value function}
\nomenclature{\(Q^{\pi^{*}}\)}{Optimal joint Q-value function}
\nomenclature{\(\gamma\)}{Discount factor for cumulative rewards}
\nomenclature{\(\mathbb{E}[\cdot]\)}{Expectation value}
\nomenclature{\(\nabla_{\theta}\)}{Gradient of some function with respect to arbitrary parameter $\theta$}
\nomenclature{\(d^{\pi}(\mathbf{s})\)}{Stationary state distribution under policy $\pi$}
\nomenclature{\(\left(f_{i}\right)_{i=1,2,...}\)}{Tuple of functions indexed by $i$}
\nomenclature{\(\mathcal{N}\)}{Set of agents (sometimes indicating Gaussian distribution for convention)}
\nomenclature{\(\mathcal{I}\)}{Set of agents when $\mathcal{N}$ is used to indicate Gaussian distribution}
\nomenclature{\(V\)}{Coalition value function in cooperative game theory (the set of nodes in the context of power networks for convention)}
\nomenclature{\(\mathcal{CS}\)}{Coalition structure in cooperative game theory}
\nomenclature{\(\mathcal{C}\)}{Coalition in cooperative game theory}
\nomenclature{\(x_{i}\)}{Payoff distribution of Agent $i$ in cooperative game theory}
\nomenclature{\(\mathbf{x}\)}{Tuple of payoff distributions in cooperative game theory}
\nomenclature{\(\delta_{i}^{m}(\mathcal{C})\)}{Marginal contribution of Agent $i$ about coalition $\mathcal{C}$ appeared in permutation $m$, in cooperative game theory}
\nomenclature{\(Sh_{i}\)}{Shapley value of Agent $i$ in cooperative game theory}
\nomenclature{\(\vert \mathcal{M} \vert \)}{Cardinality of an arbitrary set $\mathcal{M}$}
\nomenclature{\(p_{i}\)}{Active power injection at Node $i$}
\nomenclature{\(q_{i}\)}{Reactive power injection at Node $i$}
\nomenclature{\(p_{i}^{\scriptscriptstyle PV}\)}{Active power generated by PV at Node $i$}
\nomenclature{\(q_{i}^{\scriptscriptstyle PV}\)}{Reactive power generated by PV inverter at Node $i$}
\nomenclature{\(p_{i}^{\scriptscriptstyle L}\)}{Active power consumed by load at Node $i$}
\nomenclature{\(q_{i}^{\scriptscriptstyle L}\)}{Reactive power consumed by load at Node $i$}
\nomenclature{\(v_{i}\)}{Voltage at Node $i$}
\nomenclature{\(\theta_{i}\)}{Phase at Node $i$}
\nomenclature{\(\theta_{ij}\)}{Phase difference between Node $i$ and $j$}
\nomenclature{\(s_{i}\)}{Apparent power injection at Node $i$}
\nomenclature{\(g_{ij}\)}{Conductance on branch $(i, j)$}
\nomenclature{\(b_{ij}\)}{Susceptance on branch $(i, j)$}
\nomenclature{\(z_{ij}\)}{Impedance on branch $(i, j)$}
\nomenclature{\(r_{ij}\)}{Resistance on branch $(i, j)$}
\nomenclature{\(x_{ij}\)}{Reactance on branch $(i, j)$}
\nomenclature{\(v_{0}\)}{Voltage at the slack bus (Node $0$)}
\nomenclature{\(v_{\text{ref}}\)}{Voltage at the slack bus (Node $0$)}
\nomenclature{\(\Delta v_{ij}\)}{Voltage drop between Node $i$ and $j$}
\nomenclature{\(p_{0}\)}{Active power at the slack bus (Node $0$)}
\nomenclature{\(\pi_{\scriptscriptstyle \mathcal{C}}\)}{Policy of coalition $\mathcal{C}$}
\nomenclature{\(\mathcal{A}_{\scriptscriptstyle \mathcal{C}}\)}{Coalition action set}
\nomenclature{\(V^{\pi_{\mathcal{C}}}(\mathbf{s})\)}{Coalition value function with respect to coalition policy $\pi_{\scriptscriptstyle \mathcal{C}}$}
\nomenclature{\(Q^{\pi_{\mathcal{C}}}\)}{Coalition Q-value function with respect to coalition policy $\pi_{\scriptscriptstyle \mathcal{C}}$}
\nomenclature{\(\max_{\pi_{i}} x_{i}(\mathbf{s})\)}{Payoff distribution under the optimal joint policy and state $\mathbf{s}$}
\nomenclature{\(Q^{\pi_{\mathcal{D}}^{*}}(\mathbf{s}, \mathbf{a}_{\scriptscriptstyle\mathcal{C}})\)}{Optimal coalition Q-value of $\mathcal{C}$ w.r.t. the optimal policy of sub-coalition $\mathcal{D} \ \mathlarger{\subseteq} \ \mathcal{C}$ and the suboptimal policy of sub-coalition $\mathcal{C} \backslash \mathcal{D}$}
\nomenclature{\(\Phi_{i}(\mathbf{s} \vert \mathcal{C}_{i})\)}{Marginal contribution of an agent $\mathit{i}$ under intermediate coalition $\mathcal{C}_{i}$}
\nomenclature{\(\Upphi_{i}(\mathbf{s}, a_{i} \vert \mathcal{C}_{i})\)}{Action marginal contribution of an agent $\mathit{i}$ under intermediate coalition $\mathcal{C}_{i}$}
\nomenclature{\(V^{\phi}_{i}(\mathbf{s})\)}{Markov Shapley value of Agent $i$ under state $\mathbf{s}$}
\nomenclature{\(Q^{\phi}_{i}(\mathbf{s}, a_{i})\)}{Markov Shapley Q-value of Agent $i$ under state $\mathbf{s}$ and action $a_{i}$}
\nomenclature{\(\mathbb{I}\)}{Indicator function}
\nomenclature{\(\mathcal{O}\)}{Joint observation set in POMCG}
\nomenclature{\(\mathcal{O}_{i}\)}{Observation set of Agent $i$ in POMCG}
\nomenclature{\(\Omega\)}{Observation probability function in POMCG}
\nomenclature{\(\mathbf{o}_{t}\)}{Joint observation at time step $t$ ($t$ is usually ignored) in POMCG}
\nomenclature{\(\mathbf{o}_{i, t}\)}{Observation of Agent $i$ at time step $t$ ($t$ is usually ignored) in POMCG}
\nomenclature{\(h_{i, t}\)}{Agent $i$'s action-observation history (AOH) for time step $t$ in POMCG}
\nomenclature{\(\mathbf{h}_{t}\)}{Joint history at time step $t$ in POMCG}
\nomenclature{\(b_{t}\)}{Joint belief state at time step $t$ in POMCG}
\nomenclature{\(\mathcal{B}\)}{Joint belief state space in POMCG}
\nomenclature{\(R_{b}\)}{Reward function with respect to belief state in POMCG}
\nomenclature{\(T_{b}\)}{Transition function with respect to belief state in POMCG}
\nomenclature{\(\tau(\mathbf{o}_{t+1}, \mathbf{a}_{t}, b_{t} \vert \mathcal{CS})\)}{Function of current belief state, action and next observation to represent next belief state in POMCG}

\printnomenclature

\body

\chapter{Introduction} 
\label{ch:introduction}
    Multi-agent reinforcement learning (MARL) is an area of rapid advancement in artificial intelligence (AI) and machine learning (ML) to solve many realistic decentralised control problems with cooperative structure, e.g. robotic teams for emergent rescues \cite{koes2006constraint,ramchurn2010decentralized}, traffic network control \cite{mannion2016experimental} and energy network control \cite{cao2020reinforcement}. Unlike the conventional supervised learning in an offline learning paradigm, the learning process of MARL discussed in this thesis is completely conducted in an online learning paradigm. Owing to the recent advances of reinforcement learning (RL) (i.e. an area of ML which investigates the long-term optimal control through maximizing the accumulated discounted rewards) \cite{silver2016mastering,heinrich2016deep,schulman2017proximal,haarnoja2018soft}, the concept, model and method in game theory were extended and incorporated with the techniques developed in RL, forming the research area of MARL \cite{tesauro2003extending,lanctot2017unified}. Therefore, the research area of MARL is dedicated to solving more challenging problems that can be modelled as game models from the perspective of game theory, but are difficult to be addressed by the traditional methods in game theory.
    
    Cooperative game is a critical research area in the field of game theory and it can well model the cooperative tasks, where each decentralised controller is seen as an agent with the decentralised control regime and the whole system is named as multi-agent system (MAS). The most severe challenge of cooperative games is how cooperation among agents can be explicitly represented, so as to guide developing theory or methods to search the solution of control schemes. To address the challenge, a global reward is usually designed to give feedback to a team of agents, evaluating whether their decisions satisfy the common goal encoded in the global reward. The cooperative game that is equipped with a global reward is named as global reward game \cite{chang2004all}. Its objective is to achieve a joint policy of agents that reaches a long-term (one-shot) common goal, through maximizing the long-term accumulated global rewards (the one-shot global reward).
    
    The usual approach in MARL to solve global reward game is letting each agent individually maximize the accumulated discounted global rewards (also known as the value). Although it can guarantee the convergence to a stability solution called Nash equilibrium in the non-cooperative game theory under some conditions (e.g., centralised optimization with no conflicting explorations during learning), the contribution of each agent is failed to be identified which will intensively defect the convergence rate \cite{balch1997learning,balch1999reward} and even degrade the final joint policy into a policy for which only one agent performs the task with other dummy (idle) agents to the worst case. For example, when an agent is a dummy that contributes nothing to the whole group, it still receives the same value as other agents, which becomes a misleading signal to optimize the policy during learning. If the above learning process repeats for a while, it is highly likely that the dummy agent would not learn any useful policy to the team, and the joint policy degrades. The signal given to each agent is also called credit, which can be understood as a metric to measure an agent's decision as per its contribution to the team. 
    
    To address the above problem, there appears a branch of MARL called credit assignment which studies the scheme to assign a proper credit to each agent. In this thesis, we focus on investigating a novel theoretical framework to incorporate a credit assignment scheme into global reward game from the perspective of cooperative game theory.

\section{Motivation and Objectives} 
\label{sec:motivation_and_objectives}
    In general, credit assignment cannot be interpreted by a cooperative game guided by the solution concept for non-cooperative game theory such as Nash equilibrium. Specifically, each agent's payoff function is reformulated to be an identical function called potential function (equivalent to a global reward function), which can encourage cooperation among agents. Under such a situation, each agent only maximizes the accumulated global rewards, with no consideration of their own contributions to the global reward. This motivates us to establish a new theoretical framework and a novel solution concept that reasonably involves credit assignment along with searching the optimal joint policy of agents.
        
    Furthermore, most of previous works only raised the motivation of applying credit assignment. For example, although it is possible to manually shape an individual reward function for each agent, however, this often impedes learning performance due to the inaccurate description of reward shaping \cite{foerster2018counterfactual}. Moreover, \cite{SunehagLGCZJLSL18} showed up an example about inefficiency of the policies learned by the global value, i.e., only one agent learns a useful policy with other agents being lazy (i.e., contributing nothing to the team). Unfortunately, these works on credit assignment did not come up with any understanding or interpretation of the credit assigned to each agent during learning. To bridge this gap, we raise the following questions:
    \begin{enumerate}
        \item \textit{Is credit assignment valid in a global reward game?}
        \item \textit{Is it possible to derive theoretically guaranteed MARL algorithms that identify agents' contributions by credit assignment in a global reward game?}
        % \item \textit{How is credit assignment related to the real-world physical systems?}
    \end{enumerate}
    % What does the credit assigned to each agent imply to its contribution to the team?
    
    In this thesis, we attempt to answer these questions by extending the concepts in cooperative game theory, which is for showing the validity of credit assignment in global reward game. Especially, a payoff distribution scheme in cooperative game theory called Shapley value \cite{shapley1953value} is generalised and incorporated into MARL algorithms. Furthermore, we propose new MARL algorithms based on the generalised Shapley value.

\section{Contributions}
\label{sec:contributions}
    The contributions of this thesis will be summarized and discussed in Chapter \ref{chap:conclusion}. We now give a brief summary to help readers have an overall picture of the main contributions of this thesis. The general contribution is to construct a theoretical framework that extends the convex game to Markov decision process, named as Markov convex game, so that it can be used to rationalize the credit assignment in global reward game. Shapley value \cite{shapley1953value} is chosen as a credit assignment scheme that has been well studied in cooperative game theory, generalised to Markov convex game, named as Markov Shapley value. Markov Shapley value is proved to converge to the Markov core proposed in this thesis (i.e. a generalised solution concept extended from convex game to Markov convex game). By the property of Markov convex game, we prove that using Markov Shapley value as a credit assignment scheme leads to maximization of the global value, which matches the objective of the global reward game. In addition, Markov Shapley value inherits the properties of fairness from Shapley value that can well identify and quantify each agent's contribution to the team. The above reasons show up why Markov convex game is a suitable theoretical framework to represent global reward game to solve and validate the credit assignment problem.
    
    Using Markov Shapley value as a credit assignment scheme incorporated into an existing multi-agent reinforcement learning algorithm called MADDPG \cite{lowe2017multi} that belongs to the category of deterministic policy gradient algorithms, we propose SQDDPG. Although the above theoretical results are sufficient to motivate using Markov Shapley value as a credit assignment, it is with no theoretical guarantees on the convergence and the reliability may be risky. This directly impedes the further application to the real-world problems which request the restrict requirements of reliability. To address this problem, we develop the theory of Shapley-Bellman operator that is guaranteed to converge to the optimal Markov Shapley values. Depending on the theory, a multi-agent reinforcement learning algorithm called SHAQ is derived. Moreover, owing to the close relationship between Q-learning \cite{watkins1992q} and deterministic policy gradient \cite{silver2014deterministic}, the reliability of SQDDPG is also guaranteed. Regarding SQDDPG, Markov Shapley value is implemented as the convex combination of learnable marginal contributions (i.e., each marginal contribution is a direct function). Although this implementation can reduce the bias of fitting function, the issue of inconsistent credit assignments (i.e., the credit assignment of each agent could be formed by different classes of coalition values) may be induced, which prevents its application to real-world problems. To resolve the bias-inconsistency problem, we propose to directly learn coalition values and then use the learned coalition values to form marginal contributions and Markov Shapley values. The bias of directly learning coalition values is acceptable in light of our theoretical analysis. Since the difference of two coalition values is difficult to be differentiable, we use PPO as the base algorithm and propose SMFPPO. Due to that many real-world applications are only partially observable, we further extend Markov convex game to adapt to partially observable scenarios, named as partially observable Markov convex game. Furthermore, we propose Shapley policy iteration and Shapley value iteration for this game, which can be applied to instruct the implementations of SQDDPG, SMFPPO and SHAQ for partial observation.
    
    We evaluate performance of SQDDPG and SHAQ on the benchmark tasks from the community of machine learning. Both algorithms perform generally better than the state-of-the-art baselines and exhibit interpretability of credit assignments (i.e., the assigned credits can reflect agents' contributions and can be used as an index to interpret agents' behaviours). Moreover, we apply SQDDPG and SMFPPO to solve problems in energy networks. Amid the trend of decolonisation, an electric power network is the most important pillar of an energy network. An electric power network is also more challenging to control and operate compared to other energy networks (e.g. gas and heat networks) due to its fast and nonlinear behaviour. For this reason, in this thesis, we focus on the control of an electric power network, and more particularly, on the active voltage control problem in electric power distribution networks, as a test bench for the proposed MARL algorithms to solve real-world problems. In simulations, we show that SQDDPG and SMFPPO outperform other state-of-the-art multi-agent reinforcement learning algorithms.
    
    The output of the original work presented in this thesis is listed as follows.
    \begin{enumerate}
        \item Extending convex game in cooperative game theory to Markov decision process, named as Markov convex game and showing that it can represent global reward game, which provides a foundation of applying credit assignment in global reward game;
        \item Generalising a payoff distribution scheme in cooperative game theory called Shapley value to Markov convex game, which is then used as a credit assignment scheme to solve global reward game, named as Markov Shapley value;
        \item Incorporating the Markov Shapley value into the Bellman equation (i.e. the foundation of reinforcement learning) named as Shapley-Bellman equation to formulate a complete theoretical framework for multi-agent reinforcement learning. In more details, the proposed Shapley-Bellman equation is guaranteed to converge to the optimal Markov Shapley values and the optimal joint policy;
        \item Proposing three multi-agent reinforcement learning algorithms such as SHAQ, SQDDPG and SMFPPO, based on the above theoretical framework and practical implementation tricks in deep learning;
        \item Extending the Markov convex game to partial observability named as partially observable Markov convex game, the theoretical results of which guide the implementation of SHAQ, SQDDPG and SMFPPO to solve partially observable tasks;
        \item Formulating the active voltage control problem as a Dec-POMDP, so that applying multi-agent reinforcement learning algorithms to solve the task becomes reasonable;
        \item Releasing a simulator of active voltage control in power distribution networks for multi-agent reinforcement learning and evaluating the performance of SQDDPG and SMFPPO to demonstrate the potential of applying these two algorithms to solve the real-world problems. 
    \end{enumerate}

\section{Thesis Overview}
\label{sec:thesis_overview}
    \paragraph{Chapter 2.} This chapter introduces the necessary background knowledge about MARL, cooperative game theory and voltage control in electric power distribution networks. As for MARL, we start from multi-agent learning for traditional games, followed by the foundation of RL. Then, we introduce MARL based on the definition and knowledge shown in these two sections. Finally, we introduce credit assignment that is a traditional problem existing in MARL and the target problem to be solved in this thesis. About cooperative game theory, we start from clarifying the motivation from non-cooperative game theory, followed by the background knowledge of convex game and Shapley value which are the basic concepts to be studied and generalised in this thesis. 
    
    \paragraph{Chapter 3.} This chapter is the main part of this thesis, involving the theory and the derived methods. More specifically, the validity of applying credit assignment in global reward game is firstly discussed based on cooperative game theory. Then, a theory to incorporate Shapley value from cooperative game theory as a credit assignment scheme into multi-agent reinforcement learning is established. Based on the theory, we propose three MARL algorithms such as SQDDPG, SHAQ and SMFPPO. Moreover, we further extend Markov convex game to partial observability named as partially observable Markov convex game (POMCG), and propose Shapley policy iteration and Shapley value iteration that can solve this problem in theory. Finally, depending on the theory of POMCG, we provide an insight into practical implementation of SQDDPG, SHAQ and SMFPPO on solving partially observable problems.
    
    \paragraph{Chapter 4.} This chapter evaluates the performance of SQDDPG and SHAQ on the popular benchmarks from the community of machine learning. 
    
    \paragraph{Chapter 5.} This chapter evaluates the performance of SQDDPG and SMFPPO on the active voltage control problem in power distribution networks.
    
    \paragraph{Chapter 6.} This chapter summarizes the outcomes of this thesis and points out future works.

% ==================================================================
% Name of this chapter.
% ==================================================================

\chapter{Literature Review}

% ==================================================================
% Introduction
% ==================================================================
\section{The Roadmap of Multi-Agent Reinforcement Learning}
\label{sec:roadmap_of_multiagent_reinforcement_learning}
    In this section, we will review the progress and development influencing multi-agent reinforcement learning (MARL) and its relationship to the traditional realms such as game theory, control theory and reinforcement learning. In brief words, a multi-agent system is a group of interacting autonomous entities called agents sharing an environment, which they perceive by sensors and in which they act with actuators \cite{busoniu2008comprehensive}. There are a wide range of applications that can be modelled as a multi-agent system, such as robotic teams \cite{stone2000multiagent}, economics \cite{wooldridge2009introduction}, power systems \cite{wang2021multi} and so on.
    
    Among all these applications, it can be categorized as three classes of tasks: fully competitive scenarios, fully cooperative scenarios and mixed scenarios. Fully competitive scenarios imply that agents in the environment are opponents to each other (e.g., the sum of agents' returns or payoffs is zero, so called zero-sum game from the game theoretical perspective); fully cooperative scenarios imply that agents coordinate to accomplish a common goal (e.g., maximizing the common return or payoff from the game theoretical perspective); and mixed scenarios lie between these two settings (e.g., the general-sum game from the game theoretical perspective). In this thesis, we mainly concentrate on the fully cooperative scenarios, standing on the perspective of cooperative game theory, i.e. a theoretical framework assuming that there exists a binding agreement among agents about the distribution of payoffs or the choice of strategies \cite{chalkiadakis2011computational}. The payoff distribution is highly related to the credit assignment problem in multi-agent learning \cite{hernandez2019survey}.
    
    To provide a clear logic of the progress of MARL in history, the review commences as per the following topics: multi-agent learning, reinforcement learning, multi-agent reinforcement learning (including global reward game) and credit assignment. The taxonomy of the above topics is shown in Figure \ref{fig:marl_taxonomy_topics}. Note that multi-agent reinforcement learning as the conjunction of multi-agent learning and reinforcement learning, can solve more general games or scenarios (e.g. general-sum Markov games \cite{perolat2017learning}), however, in this thesis we merely concentrate on global reward game and the credit assignment problem.
    \begin{figure}[ht!]
        \centering
        \includegraphics[width=\textwidth]{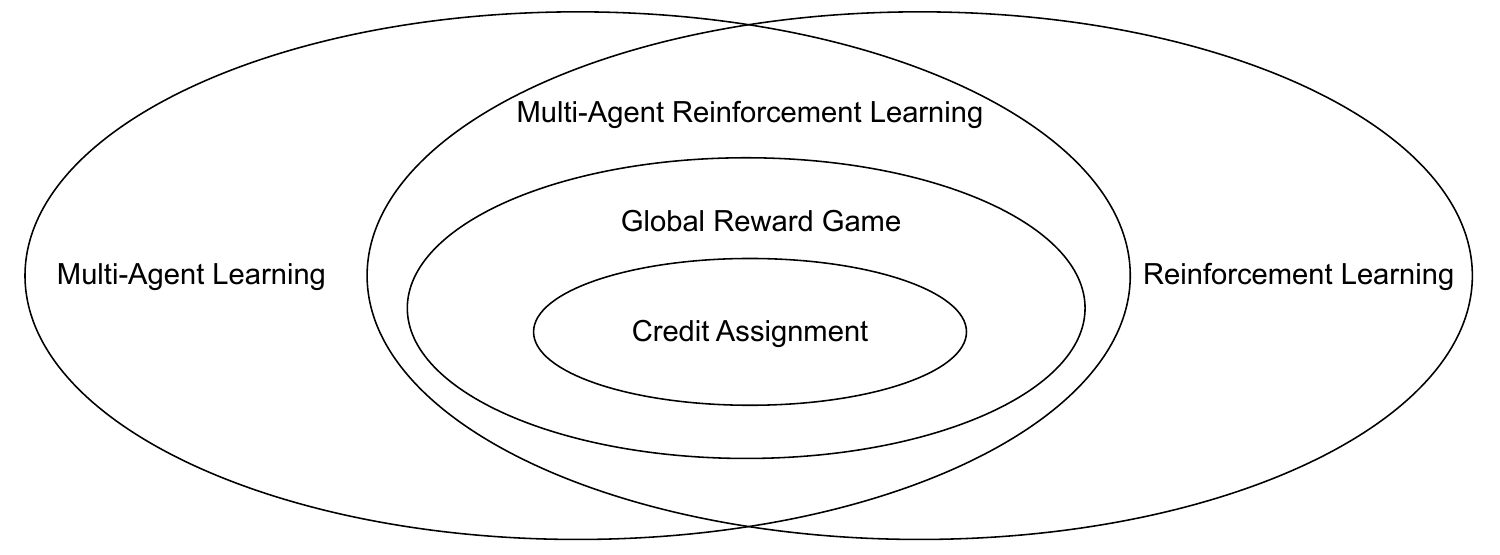}
        \caption{Taxonomy of multi-agent learning, reinforcement learning, multi-agent reinforcement learning, global reward game and the credit assignment problem. More specifically, multi-agent reinforcement learning is an interdiscipline of multi-agent learning and reinforcement learning. Global reward game is a problem formulating the fully cooperative scenarios for multi-agent reinforcement learning. Credit assignment is a critical question to the global reward game.}
    \label{fig:marl_taxonomy_topics}
    \end{figure}
    
    \subsection{Multi-Agent Learning for Traditional Games}
    \label{subsec:multi-agent_learning}
        Although it is possible to manually design the strategies of agents in advance, the increasing complexity and uncertainty appearing in real-world problems suggest automatically learning or searching strategies \cite{busoniu2008comprehensive,sen1999learning}. Multi-agent learning was initially arising and focused on solving static (stateless), repeated and multi-stage games in the literature of game theory \cite{fudenberg1998theory}. These games were mostly modelled to solve some social and economic problems \cite{osborne2004introduction}. 
        
        \paragraph{Game Class.} Static game is a game class that involves no dynamics and each agent just needs to make a one-shot decision. The representative example of static game is bi-matrix game, where there exist two agents and each agent possesses a payoff function described as a matrix influenced by both agents' decisions. Repeated game is an extension of static game and the only difference is that agents repeatedly make decisions on the same static game rather than the one-shot decision. During the process of decision making, each agent is able to collect the history of other agents' decisions that may help its own decision thereafter. Multi-stage game is an extension of repeated game, where the main difference is that the stage game will not be always identical, contrary to repeated game where the stage game is invariant. In both repeated game and multi-stage game, the outcome is a sequence of decisions for multiple stages (that may be infinite) and the payoff of each agent is simply calculated as the accumulated payoffs of stage games (that may be multiplied by discount factor for infinite stages). Both of games can be expressed as game-trees, equivalent to an extensive-form game with imperfect information (since the simultaneous move at each stage). Any extensive-form game can be represented as a static game \cite{osborne2004introduction}.
        
        \paragraph{Solution Concept.} The commonest solution concept to solve these games is called Nash equilibrium which is a stability criterion describing agents' decisions, where no agents would benefit by unilaterally varying its strategy. In other words, it is a joint strategy such that each agent's strategy is the best response to others. Any static game was proved to have a Nash equilibrium, while some games may possess more than one Nash equilibrium. In the context of MARL, Nash equilibrium is usually employed as a learning objective and a rule to update policies.
        
        \paragraph{Learning Algorithm.} The most famous and commonest algorithm to automatically solve these games (learn strategies) is called fictitious play that was proved to converge to Nash equilibrium in restricted classes of games such as fully cooperative games (also called potential game \cite{monderer1996potential} in game theory) and two-player zero-sum games \cite{monderer1996fictitious}. In a nutshell, at each iteration an agent acts the best response to an empirical model of other agents' history strategies, which can be regarded as a tracking method from the perspective of control theory. The follow-up works such as generalised weakened fictitious play \cite{leslie2006generalised}, joint strategy fictitious play \cite{marden2009joint} and fictitious self-play \cite{heinrich2015fictitious,heinrich2016deep}, generalised fictitious play to the model-free (or sample-based) paradigm without any knowledge of game models and combined it with the modern machine learning techniques, so that fictitious play can be applied in wider scenarios. Policy-space response oracle (PSRO) \cite{lanctot2017unified} generalised fictitious self-play to meta-games, where the strategy became the meta-strategy for choosing strategies.
        
        The multi-agent learning algorithms introduced above mainly aimed at solving the traditional class of games that can be represented as static games. For the dynamic games with uncertainties, these algorithms cannot be directly used. Rather, multi-agent reinforcement learning takes the place of solving such a class of game, called Markov game.\footnote{Some traditional learning methods were also extended to adapt to Markov game.}
    
    \subsection{Foundation of Reinforcement Learning}
    \label{subsec:foundation_of_reinforcement_learning}
        Reinforcement Learning (RL) is a sub-area of machine learning (ML) studying the interactive process between an agent and an environment, during which the agent takes an action at each time step as a reaction to the state it observes in order to maximize the (discounted) cumulative rewards received from the environment. The state is usually measured by sensors and the reward is usually hand-crafted via a real-valued function of state and action variables to encode an objective. In comparison with other two categories of ML algorithms, supervised learning and unsupervised learning, RL receives weak signals as rewards during learning rather than supervised learning with strong signals as labels and unsupervised learning with no explicit signals. More specifically, the RL paradigm can be seen as a sequential decision model (e.g. Markov chain \cite{gagniuc2017markov}, autoregressive-moving-average model (ARMA) \cite{box2015time}, etc.) from the perspective of supervised learning, but with no explicit ground-truth actions. On the other hand, the traditional RL only concentrates on the single-agent problem, however, it can be extended to the multi-agent scenarios with some natural mathematical extensions, which will be introduced in the following section. In the next paragraph, we will see a mathematical model as the foundation of RL called Markov decision process.

        % also called the expected return and denoted as $\max_{\pi} \mathbb{E}_{\pi} \left[ G_{0} | \mathbf{S}_{0} \right]$ that is
        \paragraph{Markov Decision Process.} The interactive process between an agent and an environment is typically described as a mathematical model called Markov decision process (MDP) \cite{bellman1952theory}, which can be specifically written as a 4-tuple $\langle \mathcal{S}, \mathcal{A}, T, R \rangle$. $\mathcal{S}$ is a set of states that indicates the set of events of an environment and $\mathcal{A}$ is a set of actions that the agent can select to interact with the environment. $T: \mathcal{S} \times \mathcal{A} \times \mathcal{S} \rightarrow [0, 1]$ is a (probability) transition function that describes the environmental dynamics, i.e., how the next state will be transited to, given the current state and an action taken by the agent. It is conventional to use a probability function $Pr(\mathbf{s}' | \mathbf{s}, \mathbf{a})$ to denote $T(\mathbf{s}', \mathbf{s}, \mathbf{a})$. $R: \mathcal{S} \times \mathcal{A} \rightarrow \mathbb{R}$ is a reward function that describes the goal of learning (or the desired behaviours of the agents). The purpose of solving MDP is finding an optimal sequence of actions that can maximize the discounted cumulative rewards such that $\max_{\{a_{t}\}_{t=0}^{\infty}} \mathbb{E}_{\{a_{t}\}_{t=0}^{\infty}} \left[ \sum_{t=0}^{\infty} \gamma^{t} R_{t} | \mathbf{S}_{0} \right]$. Usually, we consider a Markov stationary policy to simplify the problem, i.e., $\pi: \mathcal{S} \times \mathcal{A} \rightarrow [0, 1]$. Thereby, the above optimization problem is written as $\max_{\pi} \mathbb{E}_{\pi} \left[ \sum_{t=0}^{\infty} \gamma^{t} R_{t} | \mathbf{S}_{0} \right]$, which is usually written as the optimal state value function denoted as $V^{*}(\mathbf{S}_{0}) = \max_{\pi} V^{\pi}(\mathbf{S}_{0})$, where $V^{\pi}(\mathbf{S}_{0})$ is called state value function. If considering state-action pairs, then $Q^{\pi}(\mathbf{S}_{0}, \mathbf{A}_{0}) = \mathbb{E}_{\pi} \left[ \sum_{t=0}^{\infty} \gamma^{t} R_{t} | \mathbf{S}_{0}, \mathbf{A}_{0} \right]$, which is called action-value function or Q-value function.\footnote{The ``Q'' in Q-value is the abbreviation of the word quality.} It is known that this problem can be solved by an optimality criterion called Bellman optimality equation \cite{sutton2018reinforcement} such that
        \begin{equation}
        \label{eq:bellman_optimality_eq_v}
            V^{*}(\mathbf{s}) = \max_{\mathbf{a}} \sum_{\mathbf{s}' \in \mathcal{S}} Pr(\mathbf{s}' | \mathbf{s}, \mathbf{a}) \left[ R(\mathbf{s}, \mathbf{a}) + \gamma V^{*}(\mathbf{s}') \right], \quad \forall \mathbf{s} \in \mathcal{S}.
        \end{equation}
        
        Since $V^{*}(\mathbf{S}_{t}) = \max_{\mathbf{a}} Q^{*}(\mathbf{S}_{t}, \mathbf{a})$, we can equivalently write Eq.~\ref{eq:bellman_optimality_eq_v} as follows:
        \begin{equation}
        \label{eq:bellman_optimality_eq_q}
            Q^{*}(\mathbf{s}, \mathbf{a}) = \sum_{\mathbf{s}' \in \mathcal{S}} Pr(\mathbf{s}' | \mathbf{s}, \mathbf{a}) \left[ R(\mathbf{s}, \mathbf{a}) + \gamma \max_{\mathbf{a}'} Q^{*}(\mathbf{s}', \mathbf{a}') \right], \quad \forall \mathbf{s} \in \mathcal{S}, \mathbf{a} \in \mathcal{A}.
        \end{equation}
        
        There are two categories of dynamic programming methods to solve the Bellman optimality equation, called policy iteration (PI) and value iteration (VI) respectively. In more details, PI can be described as a process of two stages: policy evaluation and policy improvement, which can be mathematically expressed as follows:
        \begin{equation}
        \label{eq:policy_iteration}
            \begin{split}
                &\textbf{Policy Evaluation: } V(\mathbf{s}) \leftarrow \sum_{\mathbf{s}'} Pr(\mathbf{s}' | \mathbf{s}, \pi(\mathbf{s})) \left[ R + \gamma V(\mathbf{s}') \right], \forall \mathbf{s} \in \mathcal{S}, \\
                &\textbf{Policy Improvement: } \pi(\mathbf{s}) \leftarrow \arg\max_{\mathbf{a}} \sum_{\mathbf{s}'} Pr(\mathbf{s}' | \mathbf{s}, \mathbf{a}) \left[ R + \gamma V(\mathbf{s}') \right], \forall \mathbf{s} \in \mathcal{S},
            \end{split}
        \end{equation}
        where the update operation for policy evaluation is called dynamic programming (or Bellman operator in some other literature) which should nearly converge to a fixed point of $V(\mathbf{s})$, while policy improvement is performed by the argmax operation. These two stages are alternatively performed until the convergence to a nearly-optimal policy and the corresponding optimal values. 
        
        VI is a compact format of the combination of policy evaluation and policy improvement in one simple update operation such that
        \begin{equation}
        \label{eq:value_iteration_v}
            V(\mathbf{s}) \leftarrow \max_{\mathbf{a}} \sum_{\mathbf{s}'} Pr(\mathbf{s}' | \mathbf{s}, \mathbf{a}) \left[ R + \gamma V(\mathbf{s}') \right], \forall \mathbf{s} \in \mathcal{S}.
        \end{equation}
        
        Eq.~\ref{eq:value_iteration_v} is a dynamic programming which is performed recursively until the convergence to a fixed point of values. The fixed point is proved to be unique, denoted as $V^{*}(\mathbf{s})$. Then, the optimal policy can be recovered by performing the following operation such that
        \begin{equation}
        \label{eq:value_iteration_pi}
            \pi(\mathbf{s}) = \arg\max_{\mathbf{a}} \sum_{\mathbf{s}'} Pr(\mathbf{s}' | \mathbf{s}, \mathbf{a}) \left[ R + \gamma V^{*}(\mathbf{s}') \right], \forall \mathbf{s} \in \mathcal{S}.
        \end{equation}
        
        Eq.~\ref{eq:value_iteration_v} is guaranteed to converge to Bellman optimality equation expressed in Eq.~\ref{eq:bellman_optimality_eq_v}. If the Q-value $Q(\mathbf{s}, \mathbf{a})$ is in place of the state value $V(\mathbf{s})$, Eq.~\ref{eq:value_iteration_v} can be rewritten as the following update operation such that
        \begin{equation}
        \label{eq:value_iteration_q}
            Q^{\pi}(\mathbf{s}, \mathbf{a}) \leftarrow \sum_{\mathbf{s}'} Pr(\mathbf{s}' | \mathbf{s}, \mathbf{a}) \left[ R + \gamma \max_{\mathbf{a}'} Q^{\pi}(\mathbf{s}', \mathbf{a}') \right], \forall \mathbf{s} \in \mathcal{S}, \mathbf{a} \in \mathcal{A},
        \end{equation}
        which is proven to converge to the Bellman optimality equation expressed in Eq.~\ref{eq:bellman_optimality_eq_q} \cite{bertsekas2019reinforcement}.
        
        The main difference between PI and VI are discussed as follows. VI do not store the explicit policy function during learning, instead of which, it infers the optimal policy from the optimal values during execution. Policy iteration stores both policy function and value function during learning, so that it can directly make decision by the explicit policy function during execution.
        
        \paragraph{Approximation and Learning.} Solving the Bellman optimality equation by dynamic programming needs not only the access of a transition function (i.e. not always easy to be obtained in real-world applications), but also computational cost due to the curse of dimensionality over state size and action size. To address these two issues, Monte Carlo estimation is applied to form two methods, Monte Carlo control and temporal difference (TD) learning, which learn the optimal policy from the raw experience collected from simulation, without any exact model of environmental dynamics. Besides, we can extend these two methods to the situations where the function approximation is applied and introduce some modern RL algorithms that are used in this thesis such as Q-learning, Actor-Critic method, deterministic policy gradient and etc..
        
        As for Monte Carlo control, the Q-values are estimated by averaging the collected returns of sampled episodes and policy improvement is performed based on the estimated Q-values, following the paradigm that extends from PI called generalized policy iteration (GPI), where the ``generalized'' means the approximation of policy evaluation. To improve the representative capability of policy denoted as $\theta$, it is common to use a technique called function approximation to represent a policy as a parametric function. Therefore, it can be optimized by policy gradient method. The general idea is deriving a proximal gradient with respect to parameters of the policy (since an objective function is usually non-differentiable with respect to parameters of the policy) and employing gradient ascent to update the parameters. Mathematically, it can be expressed as follows:
        \begin{equation}
        \label{eq:policy_gradient}
            \begin{split}
                \nabla_{\theta} J(\theta) = \mathbb{E}_{\pi} \left[ Q^{\pi}(\mathbf{s}, \mathbf{a}) \nabla_{\theta} \log \pi_{\theta}(\mathbf{a} | \mathbf{s}) \right], \\
                \theta \leftarrow \theta + \alpha \nabla_{\theta} J(\theta),
            \end{split}
        \end{equation}
        where $\nabla_{\theta} J(\theta)$ is the policy gradient with respect to the objective function such that
        \begin{equation}
        \label{eq:policy_gradient_objective}
            J(\theta) = \sum_{\textbf{s} \in \mathcal{S}} d^\pi(\textbf{s}) V^\pi(\textbf{s}) = \sum_{\textbf{s} \in \mathcal{S}} d^\pi(\textbf{s}) \sum_{\textbf{a} \in \mathcal{A}} \pi_\theta(\textbf{a} \vert \textbf{s}) Q^\pi(\textbf{s}, \textbf{a}),
        \end{equation}
        where $d^\pi(\textbf{s})$ denotes the stationary distribution of a Markov chain involving $\pi_{\theta}(\mathbf{a} | \mathbf{s})$. 
        
        Based on the Monte Carlo estimation of Q-values (i.e., estimating the Q-values by the returns of sampled episodes), $\nabla_{\theta} J(\theta)$ can be written as $\mathbb{E}_{\pi} \left[ G_{t} \nabla_{\theta} \log \pi_{\theta}(\mathbf{A}_{t} | \mathbf{S}_{t}) \right]$, $G_{t}$ called return indicates the discounted cumulative rewards starting from timestep $t$. By sampling only one episode of experience to approximate the expected return, Eq.~\ref{eq:policy_gradient} becomes the update operation as follows:
        \begin{equation}
        \label{eq:reinforce}
            \theta \leftarrow \theta + \alpha G_{t} \nabla_{\theta} \log \pi_{\theta}(\mathbf{A}_{t} | \mathbf{S}_{t}),
        \end{equation}
        which is called REINFORCE \cite{williams1992simple}.
        
        As for TD learning, the general idea is using Monte Carlo estimation to approximate the dynamic programming. In other words, it estimates the value of the current state based on the estimation of a value of another successor state. This idea is also called bootstrapping. To ease the understanding of readers, we only introduce the basic off-policy TD(0) control. For exploring all state-action pairs in case of being trapped in the suboptimal policy, the raw experience is collected by an $\epsilon$-greedy policy that is called behaviour policy (i.e., with the probability of $\epsilon > 0$ a random action is selected, while with the probability of $1 - \epsilon$ the optimal policy is selected). Then, we use Monte Carlo estimation to approximate the dynamic programming as value iteration (see Eq.~\ref{eq:value_iteration_q}) and update Q-values with the collected raw experience such that
        \begin{equation}
        \label{eq:q_learning}
            Q(\mathbf{S}, \mathbf{A}) \leftarrow Q(\mathbf{S}, \mathbf{A}) + \alpha \left[ R + \gamma \max_{\textbf{a}} Q(\mathbf{S}', \mathbf{a}) - Q(\mathbf{S}, \mathbf{A}) \right],
        \end{equation}
        where the target policy (i.e., the policy we aim to learn) is a greedy policy and $R + \gamma \max_{\textbf{a}} Q(\mathbf{S}', \mathbf{a})$ is usually called TD target. Eq.~\ref{eq:q_learning} is also known as Q-learning, which is proved to converge to the optimal Q-values given some necessary conditions \cite{melo2001convergence}. With the precondition of greedy policy, we have the relationship such that $V^{\pi}(\mathbf{S}) = Q^{\pi}(\mathbf{S}, \mathbf{A})$. By the policy evaluation in Eq.~\ref{eq:policy_iteration} and Monte Carlo estimation, we can rewrite the policy evaluation as follows:
        \begin{equation}
        \label{eq:td_prediction}
            Q(\mathbf{S}, \mathbf{A}) \leftarrow Q(\mathbf{S}, \mathbf{A}) + \alpha \left[ R + \gamma Q(\mathbf{S}', \mathbf{A}) - Q(\mathbf{S}, \mathbf{A}) \right].
        \end{equation}
        
        Eq.~\ref{eq:td_prediction} is called TD(0) prediction that can be used to estimate Q-values directly. Replacing the return in Monte Carlo control by TD(0) prediction with the parametric Q-value function denoted as $Q_{\omega}(\mathbf{S}, \mathbf{A})$ in REINFORCE, a new algorithm called actor-critic method \cite{konda2000actor} is yielded such that
        \begin{equation}
            \begin{split}
                \omega \leftarrow \omega + \alpha^{\omega} \left[ R + \gamma Q_{\omega}(\mathbf{S}', \mathbf{A}) - Q_{\omega}(\mathbf{S}, \mathbf{A}) \right] \nabla_{\omega} Q_{\omega}(\mathbf{S}, \mathbf{A}), \\
                \theta \leftarrow \theta + \alpha^{\theta} Q_{\omega}(\mathbf{S}, \mathbf{A}) \nabla_{\theta} \log \pi_{\theta}(\mathbf{A} | \mathbf{S}).
            \end{split}
        \end{equation}
        
        The upper operation is called critic that stands for updating parameters of the action-value function, whereas the lower operation is called actor that stands for updating parameters of the policy function.
        
        If actions are continuous and that $\mathbf{a} = \pi_{\theta}(\mathbf{s})$, Eq.~\ref{eq:policy_gradient_objective} becomes the following equation such that
        \begin{equation}
        \label{eq:deterministic_policy_gradient_objective}
            J(\theta) = \sum_{\textbf{s} \in \mathcal{S}} d^\pi(\textbf{s}) Q^\pi(\textbf{s}, \textbf{a}).
        \end{equation}
        
        Since the action in $Q^{\pi}(\mathbf{s}, \mathbf{a})$ is now parametric, we can directly derive its gradient by the chain rule in calculus such that
        \begin{equation}
        \label{eq:deterministic_policy_gradient}
            \nabla_{\theta} J(\theta) = \mathbb{E}_{\mathbf{s} \sim d^\pi(\textbf{s})} \left[ \nabla_{\mathbf{a}} Q^{\pi}(\mathbf{s}, \mathbf{a}) \nabla_{\theta} \pi_{\theta}(\mathbf{s}) |_{\mathbf{a} = \pi_{\theta}(\mathbf{s}) } \right],
        \end{equation}
        which is called deterministic policy gradient (DPG) \cite{silver2014deterministic}. Using Monte Carlo estimation, the operation to update parameters of the policy is written as follows:
        \begin{equation}
            \theta \leftarrow \theta + \alpha^{\theta} \nabla_{\mathbf{a}} Q^{\pi}(\mathbf{s}, \mathbf{a}) \nabla_{\theta} \pi_{\theta}(\mathbf{s}) |_{\mathbf{a} = \pi_{\theta}(\mathbf{s})}.
        \end{equation}
        
        \paragraph{Practical Tricks and Implementations.} As the growing of deep learning, RL is combined with several tricks and deep neural networks (DNNs) to improve performance. The first improvement is using the DNNs to approximate the Q-value function to increase the capability of representation, however, Q-learning may suffer from instability and divergence when combined with the approximation of a nonlinear Q-value function and bootstrapping. To mitigate the issues, \cite{mnih2015human} proposed experience replay and periodically updated target. The main idea of experience replay is constructing a replay buffer $D = \left\{ e_{1}, e_{2}, ..., e_{T} \right\}$ that stores the raw experience of transitions $e_{t} = \left( \mathbf{S}_{t}, \mathbf{A}_{t}, R_{t}, \mathbf{S}_{t+1} \right)$ collected by $\epsilon$-greedy policy. During each update of Q-learning, one or a batch of transitions are randomly sampled from the replay buffer and any samples in the replay buffer can be repeatedly used to improve data efficiency, remove correlations in the observation sequences, and smooth over changes in the data distribution. The main idea of periodically updated target is that the Q-value function in the TD target is periodically updated, which makes training more stable as it overcomes the short-term oscillations. Accordingly, the loss function to update the Q-value function is as follows:
        \begin{equation}
        \label{eq:fitted_q_learning}
            \min_{\omega} \mathbb{E}_{\left( \mathbf{s}, \mathbf{a}, R, \mathbf{s}' \right) \sim U(D)} \left[ \left( R + \max_{\mathbf{a}'} Q_{\omega-}(\mathbf{s}, \mathbf{a}') -  Q_{\omega}(\mathbf{s}, \mathbf{a}) \right) \right].
        \end{equation}
        
        Eq.~\ref{eq:fitted_q_learning} is the form of fitted Q-learning \cite{riedmiller2005neural} that tackles the continuous function approximation of Q-values (due to the approximation using DNN here). $U(D)$ indicates uniform sampling from the replay buffer. $Q_{\omega-}(\mathbf{s}, \mathbf{a}')$ is updated periodically (i.e., more slowly than the update frequency of $Q_{\omega}(\mathbf{s}, \mathbf{a}')$) by cloning $Q_{\omega}(\mathbf{s}, \mathbf{a}')$. Combined these two tricks and the function approximation using DNNs with Q-learning, it forms deep Q-network (DQN) \cite{mnih2015human}.
        
        Following the same philosophy, the algorithm of deep deterministic policy gradient (DDPG) \cite{lillicrap2015continuous} was proposed. Different from using the $\epsilon$-greedy policy as a behaviour policy for DQN, it uses an extra Gaussian noise over a continuous action to conduct exploration such that $\pi'(\mathbf{s}) = \pi(\mathbf{s}) + \mathcal{N}$, where $\mathcal{N}$ indicates a unit Gaussian distribution. Another modification is that in DDPG the target Q-value function is updated softly, i.e., $\omega^{-} \leftarrow \tau \omega + (1 - \tau) \omega^{-}$, where $\tau$ is the learning rate to periodically update parameters of the target Q-value function $\omega^{-}$. Besides, parameters of the target policy is updated similarly as the target Q-value function.
        
        The rapid change of policy led by the updates of parameters may cause training instability. To mitigate this issue, the update of policy should not be too much at each step, based on which the trust region policy optimization (TRPO) \cite{schulman2015trust} was proposed. Before looking into TRPO, let us see the off-policy model in Eq.~\ref{eq:policy_gradient_objective}, i.e., optimizing the policy $\pi$ based on the raw experience collected by another policy $\beta$. The mismatch between the training data distribution and the true policy state distribution is compensated by importance sampling \cite{kloek1978bayesian}. This off-policy model can be described as follows:
        \begin{equation}
        \label{eq:off-policy_model}
            \begin{split}
                J(\theta) &= \sum_{\textbf{s} \in \mathcal{S}} d^{\pi_{\theta_{\text{old}}}}(\textbf{s}) \sum_{\textbf{a} \in \mathcal{A}} \pi_\theta(\textbf{a} \vert \textbf{s}) \hat{Q}^{\pi_{\theta_{\text{old}}}}(\textbf{s}, \textbf{a}) \\
                &= \sum_{\textbf{s} \in \mathcal{S}} d^{\pi_{\theta_{\text{old}}}}(\textbf{s}) \sum_{\textbf{a} \in \mathcal{A}} \beta(\mathbf{a} | \mathbf{s}) \frac{\pi_\theta(\textbf{a} \vert \textbf{s})}{\beta(\mathbf{a} | \mathbf{s})} \hat{Q}^{\pi_{\theta_{\text{old}}}}(\textbf{s}, \textbf{a}) \\
                &= \mathbb{E}_{\mathbf{s} \sim d^{\pi_{\theta_{\text{old}}}}, \mathbf{a} \sim \beta} \left[ \frac{\pi_\theta(\textbf{a} \vert \textbf{s})}{\beta(\mathbf{a} | \mathbf{s})} \hat{Q}^{\pi_{\theta_{\text{old}}}}(\textbf{s}, \textbf{a}) \right],
            \end{split}
        \end{equation}
        where $\pi_{\theta_{\text{old}}}$ indicates parameters of the policy before the update and $\hat{Q}^{\pi_{\theta_{\text{old}}}}(\textbf{s}, \textbf{a})$ indicates the estimated Q-value function. Theoretically, during an on-policy training process, the behaviour policy should be consistent with the target policy. Nevertheless, when the rollout workers and optimization are processed asynchronously, the behaviour policy could be old-dated. TRPO captures this phenomenon and models the behaviour policy as $\pi_{\theta_{\text{old}}}$ and Eq.~\ref{eq:off-policy_model} becomes the following equation such that
        \begin{equation}
        \label{eq:trpo_objective}
            J^{\text{TRPO}}(\theta) = \mathbb{E}_{\mathbf{s} \sim d^{\pi_{\theta_{\text{old}}}}, \mathbf{a} \sim \pi_{\theta_{\text{old}}}} \left[ \frac{\pi_\theta(\textbf{a} \vert \textbf{s})}{\pi_{\theta_{\text{old}}}(\mathbf{a} | \mathbf{s})} \hat{Q}^{\pi_{\theta_{\text{old}}}}(\textbf{s}, \textbf{a}) \right].
        \end{equation}
        
        Besides, TRPO also considers the trust region constraint that restricts the update from an old policy to a new policy through the measure of KL divergence \cite{kullback1951information} within some threshold $\delta > 0$ such that
        \begin{equation}
        \label{eq:trpo_constraint}
            \mathbb{E}_{\textbf{s} \sim d^{\pi_{\theta_{\text{old}}}}} \left[ D_\text{KL}(\pi_{\theta_{\text{old}}}( \cdot | \mathbf{s}) \| \pi_\theta(\cdot \vert \textbf{s}) \right] \leq \delta. 
        \end{equation}
        
        TRPO was proved to guarantee a monotonic improvement over PI \cite{schulman2015trust}.
        
        To simplify TRPO, proximal policy optimization (PPO) \cite{schulman2017proximal} was proposed to replace the formulation of TRPO by a clipped surrogate objective. If the ratio between an old policy and a new policy is denoted as that
        \begin{equation}
            r(\theta) = \frac{\pi_\theta(\mathbf{a} \vert \textbf{s})}{\pi_{\theta_{\text{old}}}(\mathbf{a} | \mathbf{s})},
        \end{equation}
        the objective of TRPO becomes the following equation such that
        \begin{equation}
            J^{\text{TRPO}}(\theta) = \mathbb{E}_{\mathbf{s} \sim d^{\pi_{\theta_{\text{old}}}}, \mathbf{a} \sim \pi_{\theta_{\text{old}}}} \left[ r(\theta) \hat{Q}^{\pi_{\theta_{\text{old}}}}(\textbf{s}, \textbf{a}) \right].
        \end{equation}
        
        Since $J^{\text{TRPO}}(\theta)$ will suffer instability if the parameter update is too large, PPO enforces the $r(\theta)$ to stay around 1 by adding the clipping operation such that $clip(r(\theta), 1-\epsilon, 1+\epsilon)$, where $\epsilon > 0$ is small enough and $r(\theta) \in [1-\epsilon, 1+\epsilon]$. By taking the minimum between the original TRPO objective and the clipped objective to avoid the extreme policy update, the objective of PPO is expressed as follows:
        \begin{equation}
            J^{\text{PPO}}(\theta) = \mathbb{E}_{\mathbf{s} \sim d^{\pi_{\theta_{\text{old}}}}, \mathbf{a} \sim \pi_{\theta_{\text{old}}}} \left[ \min \left\{ r(\theta) \hat{Q}^{\pi_{\theta_{\text{old}}}}(\textbf{s}, \textbf{a}), clip(r(\theta), 1-\epsilon, 1+\epsilon) \hat{Q}^{\pi_{\theta_{\text{old}}}}(\textbf{s}, \textbf{a}) \right\} \right].
        \end{equation}
        
        The policy of PPO is modelled as Gaussian distribution for continuous actions (i.e., learning both mean and variance), and categorical distribution for discrete actions. Moreover, maximizing an extra entropy term over the policy is needed to encourage exploration. Some other tricks such as normalization and clipping of rewards and observations could influence training performance.
        
    \subsection{Multi-Agent Reinforcement Learning}
    \label{subsec:multi-agent_rl}
        \paragraph{Markov Game.} We begin with the common setting of multi-agent dynamic system under Markov property called Markov game (also known as stochastic game in some literature). Mathematically, a Markov game can be described as a tuple $\langle \mathcal{I}, \mathcal{S}, \mathcal{A}, T, (R_{i})_{i = 1,...,|\mathcal{I}|} \rangle$. $\mathcal{I}$ is a set of agents. $\mathcal{S}$ is the set of states; $\mathcal{A} = \mathlarger{\mathlarger{\times}}_{i = 1}^{|\mathcal{I}|} \mathcal{A}_{i}$ is a joint action set and $\mathcal{A}_{i}$ is the action set of each agent. $T: \mathcal{S} \times \mathcal{A} \times \mathcal{S} \rightarrow [0, 1]$ is a transition function describing the environmental dynamics (i.e., how the next state is achieved given the current state and a joint action of agents). Similar to MDP, it is conventional to use a probability function $Pr(\mathbf{s}' | \mathbf{s}, \mathbf{a})$ to express $T(\mathbf{s}', \mathbf{s}, \mathbf{a})$. $R_{i}: \mathcal{S} \times \mathcal{A} \rightarrow \mathbb{R}$ is a reward function describing how much an arbitrary agent $i$ can benefit from its action in some state. Each agent's goal is to maximize the accumulated discounted rewards such that $\mathbb{E}\left[ \sum_{t=1}^{\infty} \gamma^{t-1} R_{i} \right]$. If the sum of agents' rewards is equal to zero such that $\sum_{i \in \mathcal{I}} R_{i} = 0$, then the Markov game becomes fully competitive. If each agent's reward function is identical such that $R_{1} = R_{2} = ... = R_{|\mathcal{I}|}$, then the Markov game becomes fully cooperative, named as Global Reward Game (a.k.a. Team Reward Game), which is the scenario considered in this thesis.
        
        \paragraph{Goal of Learning.} From the perspective of learning, in addition to the stability of learning dynamics, some researchers focus on adaption of the dynamic behaviours of other agents \cite{busoniu2008comprehensive} or awareness to opponents \cite{hernandez2019survey}. More specifically, the stability implies the convergence to a stationary joint policy that satisfies some solution concept (e.g. Nash equilibrium and its extension to dynamic games called Markov perfect equilibrium \cite{maskin2001markov}), while the adaption means an agent's performance can be improved or maintained as other agents change their policies. The criteria used to define the adaption include targeted optimality, compatibility, safety \cite{powers2004new} and rationality \cite{bowling2002multiagent,bowling2004convergence}. 
        
        The rationality can be further specified as two schemes such that (1) an agent is required to converge to a best response when other agents remain stationary; and (2) an agent should reach a return that is at least as good as the return of any stationary strategy, and this holds for any set of other agents' strategies, which is usually called no-regret. It is noticed that stability and adaption are two orthogonal goals, and they can be considered either individually or simultaneously when designing a solution concept. In other words, the satisfaction of either stability or adaption does not necessarily implies the satisfaction of another. Nevertheless, there is an exception that a Nash equilibrium is achieved whenever the rationality defined by (1) above is satisfied. This example implies that the goal of learning highly depends on a subjective definition rather than an objective property. Therefore, the meaningfulness of a learning goal is critical, which is an indispensable part of understanding multi-agent behaviours.
        
        In this thesis, we define the learning goal based on a concept in cooperative game theory called core. It is a mathematical description of the extent of cooperation under a specific game model called convex game. In more details, if the outcome of the game lies in the core, it is believable that agents have formed collaboration with the optimal coalition value.
        % (so that the assigned individual value of each agent based on the optimal coalition value also reaches the optimum).
        
        \paragraph{Learning Algorithm Taxonomy.} The most popular taxonomy to categorize MARL algorithms is based on task taxonomy \cite{busoniu2008comprehensive} (i.e., there exist a group of algorithms that can overcome a type of task). Besides, an MARL algorithm can be categorized as per prior knowledge of a task. If an agent knows full or partial information, it is called model-based learning.\footnote{In the modern MARL community, a paradigm that learns a environmental model is also called model-based MARL.} Otherwise, it is called model-free learning. Most of existing works focus on the research on model-free learning, due to its easy implementation. Nevertheless, the increasing requirement of sample efficiency (with less interactions with environments) encourages advances of model-based learning from both theoretical and empirical aspects \cite{bai2020provable,krupnik2020multi,willemsen2021mambpo}.
        
        Another taxonomy of MARL algorithms is based on learning paradigms, by which MARL can be categorized as independent learning, agent awareness and agent tracking (agent modelling). Typically, independent learning is fully targeting on stability regardless of other agents' behaviours, while agent tracking is fully targeting on adaption with no consideration of stability; and agent awareness lies between these two paradigms \cite{bu2008comprehensive}. However, the toughest obstacle of adopting independent learning is the non-stationary environment to each agent when regarding other agents as part of its environment\footnote{In the recent work, \cite{LyuXDA21} proved that independent learning can also converge to the optimal policy (i.e. decentralized critic in Actor-Critic framework) given specific assumptions on the correlation between an agent's and other agents' sample histories.}. To resolve this issue, there are two paradigms such that (1) permitting communication among agents as part of information for decision (during both the learning and the execution phases) and (2) centralized training and decentralized execution (CTDE), whereby the information is only shared among agents (e.g. through communication) during the learning phase. Furthermore, centralized training can be seen as a sort of agent awareness, since it considers other agents' actions as part of its input that is equivalent to tracking agents' decisions. 
        
        In this thesis, we mainly focus on applying the CTDE paradigm to design a novel MARL algorithm. There are two main reasons for this choice: (1) the centralized training can mitigate defect of non-stationary environment problems caused by simultaneous multi-agent decisions, and adapt to variations of environmental dynamics (i.e., adaption to other agents' policies caused by emergencies); and (2) decision is flexible to be conducted when some emergencies (e.g., communication is unavailable in real-world applications such as natural disasters).
        
    \subsection{Credit Assignment in Global Reward Game}
    \label{subsec:credit_assignment}
        Credit assignment is a significant problem that has been studied in global reward game for a long period. The original motivation is solving the problem of unfairness (i.e., each agent's contribution cannot be fairly reflected) \cite{wolpert2002optimal} and addressing negative influence on convergence to the optimal joint policy \cite{balch1997learning,balch1999reward}. The counterpart solution to credit assignment solving global reward game is the shared reward approach, where all agents jointly maximize the cumulative global rewards. A classic example to show the importance of credit assignment is shown in Example \ref{exa:didactic_example}.
        \begin{example}
        \label{exa:didactic_example}
            Suppose that two agents are dummies and they contribute nothing to a group of three agents, while the rest agent performs beneficial actions to the group. By employing the shared reward approach, these two dummy agents receive the same credits as the rest agent. This may lead to a problem that the two dummy agents will believe that their meaningless actions can still optimize the global rewards and continue perform sub-optimal policies. Thereafter, the multi-agent problem will be degraded to a single-agent problem, wherein only one agent truly aims at solving the task. As a result, the mismatched credits will probably impede convergence to the optimal joint policy.
        \end{example}

        % of convergence to the optimal joint deterministic policy
        % based on the global Q-value 
        % and deliberate investigations in the past years
        The earlier literature incorporated coordination graphs to linearly factorize the global value as a method to implement the credit assignment \cite{guestrin2002coordinated,kok2005using}. In contrast, \cite{chang2004all} attempted using Kalman filter to infer the credit to each agent. \cite{foerster2018counterfactual} and \cite{nguyen2018credit} modelled the marginal contributions inspired by the reward difference \cite{wolpert2002optimal}. VDN \cite{SunehagLGCZJLSL18} was proposed to learn the credit assignment as factorised Q-values, assuming that any global Q-value equals to the sum of factorised Q-values. Nevertheless, this factorisation may limit the representation of the global Q-value. To mitigate this issue, QMIX \cite{RashidSWFFW18} and QTRAN \cite{SonKKHY19} were proposed to represent the global Q-value in a richer class with respect to factorised Q-values, as per an assumption named as Individual-Global-Max (IGM) (see Definition \ref{def:igm}). Another research thread is investigating representation of the credit assignment such as COMA \cite{foerster2018counterfactual}. It is not difficult to observe that the analytical form of COMA shown in Eq.~\ref{eq:coma} naturally satisfies IGM. The theoretical framework of Markov convex game proposed in this thesis can be seen as an realization of the monolithic IGM, but with immense theoretical backgrounds.
        \begin{definition}
        \label{def:igm}
            For a joint Q-value $Q^{\pi}(\mathbf{s}, \mathbf{a})$ with a deterministic policy, if the following equation is assumed to hold such that
            \begin{equation}
                \arg \max_{\mathbf{a}} Q^{\pi}(\mathbf{s}, \mathbf{a}) = \big(\arg\max_{a_{i}} Q_{i}(\mathbf{s}, a_{i}) \big)_{i=1, 2, ..., |\mathcal{N}|},
            \label{eq:igm}
            \end{equation}
            then we say that $\big(\mathit{Q}_{i}(\mathbf{s}, a_{i})\big)_{i=1, 2, ..., |\mathcal{N}|}$ satisfies Individual-Global-Max (IGM) and $\mathit{Q}^{\pi}(\mathbf{s}, \mathbf{a})$ can be factorised by $\big(\mathit{Q}_{i}(\mathbf{s}, a_{i})\big)_{i=1, 2, ..., |\mathcal{N}|}$.
        \end{definition}

        % (extending the concept of core to Markov dynamics) 
        % , and therefore answer the question we raise in Chapter \ref{ch:introduction}
        On the other hand, the branch of works discussed above did not take into consideration of equilibrium of the credit assignment (i.e., the credit assignment to a team of agents with the optimal joint policy is unclear). To address this weakness, in this thesis we introduce a solution concept named as Markov core (that can be regarded as a stability criterion) to evaluate the credit assignment corresponding to the optimal joint policy. This bridges the gap between the credit assignment and the optimal global value. 
        % As the theoretical results shown in this thesis, the extension of Shapley value \cite{shapley1953value} called Markov Shapley value is an outcome lying in the Markov core, which will be illustrated in details in the following chapter.
        
        \paragraph{VDN.} VDN linearly factorises a global value function such that
        \begin{equation}
            Q^{\pi}(\mathbf{s}, \mathbf{a}) = \sum_{i \in \mathcal{N}} Q_{i}(\mathbf{s}, a_{i}),
        \label{eq:vdn}
        \end{equation}
        so that Eq.~\ref{eq:igm} holds.
        
        \paragraph{QMIX.} QMIX learns a monotonic mixing function $\mathit{f}_{\mathbf{s}}: \mathlarger{\mathlarger{\times}}_{i \in \mathcal{N}} Q_{i}(\mathbf{s}, a_{i}) \times \mathbf{s} \mapsto \mathbb{R}$ to implement the factorisation such that
        \begin{equation}
            Q^{\pi}(\mathbf{s}, \mathbf{a}) = f_{\mathbf{s}}\big(Q_{1}(\mathbf{s}, a_{1}), ..., Q_{|\mathcal{N}|}(\mathbf{s}, a_{|\mathcal{N}|})\big),
        \label{eq:qmix}
        \end{equation}
        so that Eq.~\ref{eq:igm} holds. Although QMIX has a richer functional class of factorisation than that of VDN, it meets a problem that $\max_{\mathbf{a}} Q^{\pi}(\mathbf{s}, \mathbf{a}) = \sum_{i \in \mathcal{N}} \max_{a_{i}} Q_{i}(\mathbf{s}, a_{i})$ does not necessarily hold, which may lead to biases on Q-value estimation \cite{SonKKHY19} and corrupt the learning process to achieve the optimal joint policy. Theoretically, VDN does not possess the problem discussed above, however, the functional class of the simply additive factorisation is so restrictive \cite{RashidSWFFW18}.
        
        \paragraph{QTRAN.} QTRAN gives a sufficient condition for value factorisation that satisfies IGM such that
        \begin{equation}
            \sum_{i \in \mathcal{N}} Q_{i}(\mathbf{s}, a_{i}) - Q^{\pi}(\mathbf{s}, \mathbf{a}) + V^{\pi}(\mathbf{s}) = 
            \begin{cases}
                 0 & \mathbf{a} = \mathbf{\bar{a}}, \\
                 \geq 0 & \mathbf{a} \neq \mathbf{\bar{a}},
            \end{cases}
        \label{eq:qtran}
        \end{equation}
        wherein
        \begin{equation*}
            V^{\pi}(\mathbf{s}) = \max_{\mathbf{a}} Q^{\pi}(\mathbf{s}, \mathbf{a}) - \sum_{i \in \mathcal{N}} Q_{i}(\mathbf{s}, \bar{a}_{i}).
        \end{equation*}
        In Eq.~\ref{eq:qtran}, $\mathbf{a} = \mathlarger{\mathlarger{\times}}_{\scriptscriptstyle i \in \mathcal{N}} a_{i}$; and $\mathbf{\bar{a}} = \mathlarger{\mathlarger{\times}}_{\scriptscriptstyle i \in \mathcal{N}} \bar{a}_{i}$ where $\bar{a}_{i} = \arg\max_{a_{i}} Q_{i}(\mathbf{s}, a_{i})$ because of IGM. Additionally, \cite{SonKKHY19} showed that the above condition also holds for affine transformation on $Q_{i}, \forall i \in \mathcal{N}$ such that $w_{i} Q_{i} + b_{i}$. For this reason, an additional transformed global Q-value such that $Q^{\pi'}(\mathbf{s}, \mathbf{a}) = \sum_{i \in \mathcal{N}} Q_{i}(\mathbf{s}, a_{i})$ by setting $w_{i} = 1$ and $\sum_{i \in \mathcal{N}} b_{i} = 0$ is used to represent the value factorisation. It is forced to fit the above condition with a learned global Q-value $Q^{\pi}(\mathbf{s}, \mathbf{a})$ and $V^{\pi}(\mathbf{s})$. \cite{SonKKHY19} argued that finding a factorisation of $Q^{\pi'}(\mathbf{s}, \mathbf{a})$ is equivalent to finding $[Q_{i}]_{i \in \mathcal{N}}$ to satisfy IGM. 
        % Therefore, a value factorisation for obtaining factorised Q-values that satisfies IGM is found.
        
        \paragraph{COMA.} COMA \cite{foerster2018counterfactual} was inspired by the idea of difference rewards \cite{wolpert2002optimal} and proposed to subtract the counterfactual baseline that excludes the effect of an arbitrary agent $i$ from the global Q-value to represent the credit assigned to the agent $i$. The mathematical expression is as follows:
        \begin{equation}
            Q_{i}(\mathbf{s}, \mathbf{a}) = Q^{\pi}(\mathbf{s}, \mathbf{a}) - \sum_{a_{i} \in \mathcal{A}_{i}} \pi_{i}(a_{i} | \mathbf{s}) Q^{\pi}\left( \mathbf{s}, (\mathbf{a}_{-i}, a_{i}) \right),
        \label{eq:coma}
        \end{equation}
        where $\sum_{a_{i} \in \mathcal{A}_{i}} \pi_{i}(a_{i} | \mathbf{s}) Q^{\pi}\left( \mathbf{s}, (\mathbf{a}_{-i}, a_{i}) \right)$ is called the counterfactual baseline. COMA is actually a special case of the marginal contribution from the perspective of cooperative game theory and we will discuss it in the next chapter.
        
\section{Cooperative Game Theory}
\label{sec:cooperative_game_theory}
    Game theory aims at studying interactions between self-interested computational entities called agents, where the research object is called game that captures the main attributes of a scenario populated by a set of agents \cite{osborne2004introduction,chalkiadakis2011computational}. Each game is equipped with an outcome to describe the result of the game. A primary concern of a game is investigating the rational outcome. To achieve this goal, game theorists have developed an enormous number of solution concepts to depict rationality. In other words, if an outcome lies in an appropriate solution concept, then it captures the rational outcome of the game. The known challenge is that these solution concepts are not guaranteed to exist in a game, which leads to the emergence of a range of solution concepts (e.g. Nash equilibrium, correlation equilibrium and etc.) representing different rationality. This can be slightly mitigated by restricting the categories of games. From the categories of games, it can be classified to non-cooperative games and cooperative games. Cooperative game theory is a subject that mainly studies the cooperative games modelling collaborations among agents, which is the research interest of this thesis.
    
    \subsection{A Motivation of Cooperative Game Theory}
    \label{subsec:difference_from_non-cooperative_game_theory}
        % the relationship between cooperative game theory and non-cooperative game theory
        To understand the motivation of establishing cooperative game theory, it is a good point to see why a non-cooperative game is non-cooperative. We will give an example of a conventional non-cooperative game called Prisoner's Dilemma from \cite{axelrod1981evolution,chalkiadakis2011computational} to elucidate the reason. The Prisoner's Dilemma is usually described as Example \ref{exp:prisoners_delimma} shows.
        \begin{example}
        \label{exp:prisoners_delimma}
            There are two men who were accused of crimes and held in separate cells with no way of communication or meeting, so with no way to make binding agreements. They are told that
            % \vspace{-5mm}
            \begin{itemize}
                \item if one confesses and the other does not, the confessor will be freed and the other will be jailed for three years;
                \item if both confess, then each will be jailed for two years;
                \item if neither confesses, then they will be jailed for one year.
            \end{itemize}
        \end{example}
        
        The key question is whether the prisoners decide to cooperate (not confess) or not cooperate (confess). Since in this game both prisoners are symmetric, the thought and decision of either agent should be equal to another. It is not difficult to see that for either agent, regardless of another agent's decision, the optimal choice should be to confess. Therefore, the outcome becomes that both agents confess and they will be jailed for two years. This actually follows a solution concept under the non-cooperative game theory known as dominant strategy equilibrium, which describes the unique rational outcome under this solution concept. Nevertheless, if two agents cooperate, then the outcome will be better for both agents (i.e., each agent only needs to be jailed for one year). The miss of the later outcome is due to the lack of conditions for cooperation (i.e., binding agreements are not possible). In more details, this game is designed as a scenario where no agents can communicate or meet to each other. As a result, they cannot trust one another and can solely maximize their own utilities given an assumption that the other will think in the same way. If communication is available, they could make binding agreements to cooperate and achieve a better outcome than the dominant strategy equilibrium, and the dilemma will naturally disappear. This example shows up the significance of making binding agreements in cooperative games, which leads to the definition of a cooperative game as Definition \ref{def:cooperative_game} shows.
        \begin{definition}[\cite{peleg2007introduction} ]
        \label{def:cooperative_game}
            A game is cooperative if the agents can make binding agreements about the distribution of payoffs or the choice of strategies, even if these agreements are not specified or implied by the rules of the game.
        \end{definition}
        
        Cooperative game theory focuses on the solution concepts that take into consideration of making binding agreements. We would emphasize that such binding agreements are common in the real-life scenarios such as building contracts that forms the commerce and global economy \cite{chalkiadakis2011computational}. This enables this theoretical framework meaningful. We assume that some methods exist for shaping binding agreements when we develop the theory to simplify analysis. 
        
    \subsection{Convex Game}
    \label{subsec:convex_game}
        \paragraph{Problem Definition.} Convex game (CG) \cite{shapley1971cores} is a typical characteristic game belonging to the category of transferable utility games in cooperative game theory. We now introduce the basic definitions referred to a popular textbook \cite{chalkiadakis2011computational}. The CG is formally represented as $\Gamma=\langle \mathcal{N}, V \rangle$, where $\mathcal{N}$ is the set of all agents and $V$ is the value function to measure the profits earned by a coalition. $\mathcal{N}$ also indicates the grand coalition (i.e., the largest coalition which includes all agents). The value function $V: 2^{\scriptscriptstyle |\mathcal{N}|} \rightarrow \mathbb{R}_{\geq0}$ is a characteristic function that maps from the coalition space to real numbers. For all coalitions $\mathcal{C}, \mathcal{D} \subset \mathcal{N}$, the value function satisfies the following property such that
        \begin{equation}
            V(\mathcal{C} \cup \mathcal{D}) + V(\mathcal{C} \cap \mathcal{D}) \geq V(\mathcal{C}) + V(\mathcal{D}).
        \label{eq:convex_function}
        \end{equation}
        
        Eq.~\ref{eq:convex_function} is defined as a supermodular function and a game with a supermodular characteristic function is defined as convex, which is the origin of the name of convex game. Note that the value of an empty coalition is defined as zero, i.e., $V(\emptyset) = 0$. If imposing an additional assumption such that any two coalitions are independent, i.e., $\mathcal{C} \cap \mathcal{D} = \emptyset$, Eq.~\ref{eq:convex_function} is reduced to the following form such that
        \begin{equation}
            V(\mathcal{C} \cup \mathcal{D}) \geq V(\mathcal{C}) + V(\mathcal{D}) - V(\mathcal{C} \cap \mathcal{D}) = V(\mathcal{C}) + V(\mathcal{D}).
        \label{eq:superadditive_function}
        \end{equation}
        
        Eq.~\ref{eq:superadditive_function} is called superadditive. Note that convex game is necessarily superadditive, however, the reverse is not always true. In this thesis, with the condition of independent coalitions, Eq.~\ref{eq:superadditive_function} is used as the definition of convex game in analysis for conciseness.
    
        \paragraph{Solution Concept.} The outcome of CG is a tuple $(\mathcal{CS}, \mathbf{x})$, where $\mathcal{CS} = \{ \mathcal{C}_{1}, \mathcal{C}_{2}, ..., \mathcal{C}_{m} \}$ is a coalition structure, and $\mathbf{x} = (x_{i})_{i \in \mathcal{N}}$ indicates the payoffs distributed to agents which satisfy two conditions: (1) $x_{i} \geq 0, \forall i \in \mathcal{N}$; and (2) $\sum_{i \in \mathcal{C}} x_{i} \leq V(\mathcal{C}), \forall \mathcal{C} \subseteq \mathcal{CS}$. The solution concepts of interests are from two aspects: (1) the fairness of payoff distribution scheme; and (2) the stability of coalition structure.
        
        Core is a stable outcome set of CG that defines the stability of coalition structure. In details, it can be mathematically defined as the following set such that $$\texttt{Core}(\Gamma) = \left\{ (\mathcal{C}, \mathbf{x}) \ \bigg| \ \sum_{i \in \mathcal{C}} x_{i} \geq V(\mathcal{C}), \forall \mathcal{C} \subseteq \mathcal{N} \right\}.$$ It ensures a rational payoff distribution scheme and no subset of agents would have incentives to deviate from its coalition to acquire more profits. This can be easily verifiable by the following didactic example. Suppose that $\sum_{i \in \mathcal{C}} x_{i} < V(\mathcal{C})$ for some $\mathcal{C} \ \mathlarger{\mathlarger{\subseteq}} \ \mathcal{N}$, then the agents considered in $\mathcal{C}$ may intend to form their own coalition $\mathcal{C}$. For instance, an agent $i$ can gain the profit such that $x_{i}' = x_{i} + \frac{V(\mathcal{C}) - \sum_{i \in \mathcal{C}} x_{i}}{|\mathcal{C}|}$, where $x_{i}$ is the payoff from the original outcome and coalition structure without considering that the coalition $\mathcal{C}$ is formed, and $x_{i}'$ is the new payoff after actually forming the coalition $\mathcal{C}$.

    \subsection{Shapley Value}
    \label{subsec:shapley_value}
        Shapley value \cite{shapley1953value} is a solution concept describing the notion of fairness, which is usually considered under the grand coalition. In the next paragraph, we will retrospect the procedure of forming Shapley Value step by step referred to the contents in \cite{chalkiadakis2011computational}. It is intuitive that each agent's payoff should reflect its contribution to accomplish the fairness. To realize this idea, a possible implementation is assigning each agent $i$ the payoff $V(\mathcal{N}) - V(\mathcal{N} \backslash \{i\})$. Nevertheless, the sum of payoff distributions via this payoff distribution scheme could be disparate from the $V(\mathcal{N})$. To address this problem, an ordering (permutation) of agents can be fixed and each agent receives the payoff according to the contribution to its predecessor coalition in this ordering, the payoff scheme of which is called marginal contribution. By this payoff distribution scheme, two agents that play the symmetric role could receive different payoffs, caused by the ordering selection. To eliminate the ordering dependence, Shapley proposed to average over all possible orderings of agents. The whole procedure above forms the final shape of Shapley Value, whose formal definition is shown as follows. Given a cooperative game $\Gamma=(\mathcal{N}, V)$, for any permutation $m \in \Pi(\mathcal{N})$, let $\delta_{i}^{m}(\mathcal{C}) = V(\mathcal{C}_{i}^{m} \cup \{i\}) - V(\mathcal{C}_{i}^{m})$ denote a marginal contribution of agent $i$ in the ordering $m$ to form the grand coalition, then the Shapley value of each agent $i$ can be written as follows:
        \begin{equation}
            \text{Sh}_{i}(\Gamma)= \frac{1}{|\mathcal{N}|!} \sum_{m \in \Pi(\mathcal{N})} \delta_{i}^{m}(\mathcal{C}).
        \label{eq:shapley_value_org_1}
        \end{equation}
        
        For the sake of the assumption of set-valued functions for characteristic value functions, Eq.~\ref{eq:shapley_value_org_1} can be equivalently transformed to the following expression such that
        \begin{equation}
            \begin{split}
                \text{Sh}_{i}(\Gamma) &= \frac{1}{|\mathcal{N}|!} \sum_{m \in \Pi(\mathcal{N})} \delta_{i}^{m}(\mathcal{C}) \\
                &= \frac{1}{|\mathcal{N}|!} \sum_{m \in \Pi(\mathcal{N})} V(\mathcal{C}_{i}^{m} \cup \{i\}) - V(\mathcal{C}_{i}^{m}) \\
                &= \frac{1}{|\mathcal{N}|!} \sum_{\mathcal{C} \subseteq \mathcal{N} \backslash \{i\}} |\mathcal{C}|!(|\mathcal{N}|-|\mathcal{C}|-1)! \left[ V(\mathcal{C} \cup \{i\}) - V(\mathcal{C}) \right] \\
                &\triangleq \sum_{\mathcal{C} \subseteq \mathcal{N} \backslash \{i\}} \frac{|\mathcal{C}|!(|\mathcal{N}|-|\mathcal{C}|-1)!}{|\mathcal{N}|!} \cdot \delta_{i}(\mathcal{C}).
            \end{split}
        \label{eq:shapley_value_org_2}
        \end{equation}
        
        In more details, the transformation is based on the fact that there are multiple repeated coalitions among all possible permutations of agents, which suggests that the Shapley value of agent $i$ is defined based on all its predecessor coalitions, i.e., $\mathcal{C} \subseteq \mathcal{N} \backslash \{i\}$. It is not difficult to check that each coalition $\mathcal{C}$ is repeated $|\mathcal{C}|!(|\mathcal{N}|-|\mathcal{C}|-1)!$ times. By defining the marginal contribution based on an arbitrary coalition $\mathcal{C}$ such that $\delta_{i}(\mathcal{C}) = V(\mathcal{C} \cup \{i\}) - V(\mathcal{C})$, Eq.~\ref{eq:shapley_value_org_2} turns out and it is the commonest form of Shapley value appears in the related literature.
        
        As Eq.~\ref{eq:shapley_value_org_2} shows, Shapley value takes the average of marginal contributions of all possible coalitions, so that it satisfies efficiency and fairness (i.e., sensitivity to dummy agents and symmetry) as Theorem \ref{thm:shapley_value_property} states. If we calculate the Shapley value for an agent, we have to consider $2^{\scriptscriptstyle |\mathcal{N}|}-1$ possible coalitions where the agent could join during the process of forming the grand coalition, which may lead to the computational catastrophe.
        \begin{theorem}[\cite{chalkiadakis2011computational} ]
        \label{thm:shapley_value_property}
            Shapley value is the only payoff distribution scheme with the following properties:
            % \vspace{-15mm}
            \begin{itemize}
                \item[(1)] Efficiency: $\sum_{i \in \mathcal{N}} \text{Sh}_{i}(\Gamma) = V(\mathcal{N})$;
                \item[(2)] Identifiability of dummy agents: if agent $i$ is a dummy, then $\text{Sh}_{i}(\Gamma) = 0$;
                \item[(3)] Reflecting the contribution;\footnote{Note that this property was not described in the past introduction of Shapley value \cite{chalkiadakis2011computational}. Nevertheless, it is the underlying insight to construct the Shapley value, so we mention it here as a property.}
                \item[(4)] Symmetry: if agent $i$ and $j$ are symmetric in $\Gamma$, then $\text{Sh}_{i}(\Gamma) = \text{Sh}_{j}(\Gamma)$.
            \end{itemize}
        \end{theorem}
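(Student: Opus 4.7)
The plan is to prove this in two directions: existence (the Shapley value of Eq.~\ref{eq:shapley_value_org_2} satisfies all four properties) and uniqueness (any payoff scheme satisfying these properties must coincide with the Shapley value). For existence, I would verify each property directly from the averaged-marginal-contribution form. Efficiency is obtained by noting that for a fixed permutation $m \in \Pi(\mathcal{N})$ the marginal contributions telescope: $\sum_{i \in \mathcal{N}} \delta_{i}^{m}(\mathcal{C}_{i}^{m}) = V(\mathcal{N}) - V(\emptyset) = V(\mathcal{N})$, and averaging over the $|\mathcal{N}|!$ permutations preserves the sum, so $\sum_{i \in \mathcal{N}} \text{Sh}_{i}(\Gamma) = V(\mathcal{N})$. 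The dummy property is immediate: if $V(\mathcal{C} \cup \{i\}) - V(\mathcal{C}) = 0$ for every $\mathcal{C} \subseteq \mathcal{N}\setminus\{i\}$, every marginal contribution in the defining sum vanishes. Symmetry follows by building a bijection on $\Pi(\mathcal{N})$ that swaps the roles of two symmetric agents $i$ and $j$ in each permutation; under this bijection $\delta_{i}^{m}$ is mapped to $\delta_{j}^{m'}$ with the same coefficient, so the averages coincide. The ``reflecting the contribution'' property is built into the definition itself, since $\text{Sh}_{i}(\Gamma)$ is a convex combination of the marginal contributions $\delta_{i}(\mathcal{C})$.

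For uniqueness, the natural route is via unanimity games. Recall the unanimity game $u_{T}$ defined by $u_{T}(\mathcal{S}) = 1$ if $T \subseteq \mathcal{S}$ and $0$ otherwise; every characteristic function $V$ admits a unique Möbius decomposition
\begin{equation*}
V \;=\; \sum_{\emptyset \neq T \subseteq \mathcal{N}} c_{T}\, u_{T},
\qquad c_{T} = \sum_{\mathcal{S} \subseteq T} (-1)^{|T|-|\mathcal{S}|} V(\mathcal{S}).
\end{equation*}
On a single $u_{T}$, every agent $i \notin T$ is a dummy, so property (2) forces zero payoff; the agents inside $T$ are pairwise symmetric, so property (4) forces them to share equally; efficiency (1) then pins the common share to $1/|T|$. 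This determines the scheme on each scaled unanimity game $c_{T} u_{T}$. I would then use the ``reflecting the contribution'' property in its natural linear form (each agent's payoff is an affine function of its marginal contributions, as in Young's marginality axiom) to lift the result from unanimity games to arbitrary $V$: because the map $V \mapsto (\delta_{i}(\mathcal{C}))_{\mathcal{C}}$ is linear in $V$ and the Möbius decomposition is unique, two schemes that agree on every $u_{T}$ and both reflect contributions must agree on every $V$.

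The main obstacle is the last step. The classical Shapley axiomatisation requires an additivity (or Young-style marginality) axiom to lift uniqueness from unanimity games to all $V$, and here that role is implicit in the informally stated property of ``reflecting the contribution''. To make the argument watertight I would formalise this property as: if $\delta_{i}(\mathcal{C}) = \tilde{\delta}_{i}(\mathcal{C})$ for all $\mathcal{C} \subseteq \mathcal{N}\setminus\{i\}$ in two games $V, \tilde{V}$, then agent $i$ receives the same payoff in both. Given this reading, I would induct on the number of non-zero Möbius coefficients of $V$, using the previous step as the base case and subtracting off a single $c_{T} u_{T}$ at each stage; combined with efficiency, symmetry, and the dummy property, this forces the scheme to coincide with Eq.~\ref{eq:shapley_value_org_2} on every $V$, establishing uniqueness and completing the proof.
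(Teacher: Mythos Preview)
Your proposal is a correct and standard treatment of Shapley's axiomatisation: the existence direction via telescoping, the dummy and symmetry arguments, and the uniqueness direction via the Möbius decomposition into unanimity games are all textbook. You are also right to flag that the informally stated property~(3) must be read as a marginality/additivity-type axiom in order for the uniqueness step to go through; without some linearity-in-$V$ assumption, properties (1), (2), and (4) alone do not determine a unique scheme.

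However, there is nothing in the paper to compare your proof against. Theorem~\ref{thm:shapley_value_property} is stated as a cited background result from \cite{chalkiadakis2011computational} and the paper gives no proof of it. The paper's own contribution in this direction is Proposition~\ref{prop:shapley_value_properties}, which establishes the analogous properties (i)--(iv) for the \emph{Markov} Shapley value; that proof (in Appendix~\ref{sec:proof_of_markov_shapley_value}) parallels your existence direction closely---telescoping for efficiency, direct vanishing of marginal contributions for the dummy case, and the permutation-swap bijection for symmetry---but it does not attempt a uniqueness claim. So your existence arguments match the spirit of what the paper does for its extended setting, while your uniqueness argument goes well beyond anything the paper proves.
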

        
\section{Voltage Control in Electric Power Distribution Networks}
\label{sec:voltage_control_in_electric_power_distribution_networks}
    \subsection{Power Distribution Network}
    \label{subsec:power_distribution_network}
        \begin{figure}[ht!]
            \centering
            \includegraphics[width=1.\linewidth]{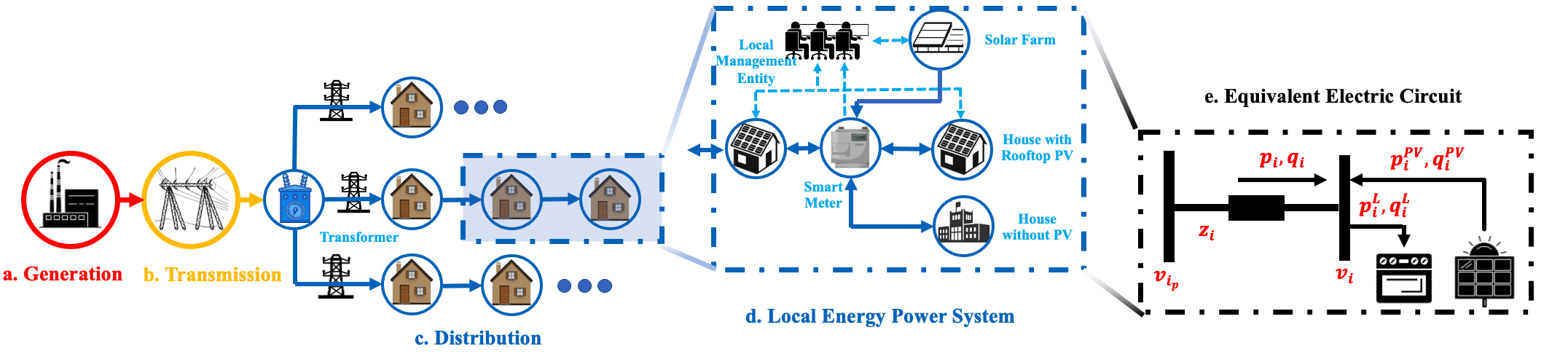}
            \caption{Illustration on distribution network (block a-b-c) under PV penetration. The solid and dotted lines represent the power and information flows respectively. Block d is the detailed version of distribution network and block e is the circuit model of block d.}
        \label{fig:block}
        \end{figure}
        
        An electric power distribution network is illustrated in Figure \ref{fig:block} stage a to c. The electricity is generated from power plant and transmitted through transmission lines. Muti-stage transformers are applied to reduce the voltage levels while the electricity is being delivered to the distribution network. The electricity is then consumed by residential and industrial clients. A typical PV unit consists of PV panels and voltage-source inverters which can be installed either on roof-top or in the solar farm. Conventionally, there exist management entities such as distributed system operator (DSO) monitoring and operating the PV resources through the local communication channels. With emergent PV penetration, distribution network gradually grows to be an active participant in power networks that can deliver power and service to its users and the main grid (see the bidirectional power flows in Figure \ref{fig:block} stage-d). Nevertheless, this also confronts the new challenges to the safety and resilience of power networks, since the active power generated by PV penetration is uncertain due to the weather conditions (i.e., difficult to be modelled in closed form). This motivates us to apply MARL as a solution to mitigate this issue in this thesis. 
    
    \subsection{System Model and Voltage Deviation}
    \label{subsec:system_model_and_voltage_deviation}
        In this thesis, we consider medium (10-24kV) and low (0.23-1kV) voltage distribution networks where PVs are highly penetrated. We model the distribution network in Figure \ref{fig:block} as a tree graph $\mathcal{G} = (V, E)$, where $V = \{0,1,\dots,N\}$ and $E = \{1,2,\dots,N\}$ represent the set of nodes (buses) and edges (branches) respectively \cite{gan2013optimal}. Bus 0 is considered as the connection to the main grid, balancing the active and reactive power in the distribution network. For each bus $i\in V$, let $\mathit{v}_i$ and $\theta_i$ be the magnitude and phase angle of the complex voltage and $\mathit{s}_j = p_i + j q_i$ be the complex power injection. Then the active and reactive power injection can be defined as follows:
        \begin{equation}
            \label{eq:power_injection}
            \begin{split}
                p_{i}^{\scriptscriptstyle{PV}} - p_{i}^{\scriptscriptstyle{L}} = v_i^2\sum_{j\in V_i}g_{i j}
                     -v_{i} \sum_{j\in V_i} v_{j}\left(g_{i j} \cos \theta_{i j}+b_{i j} \sin \theta_{i j}\right), \quad \forall i \in V \setminus \{0\} \\
                q_{i}^{\scriptscriptstyle{PV}} - q_{i}^{\scriptscriptstyle{L}} = -v_i^2\sum_{j\in V_i}b_{i j}
                + v_{i} \sum_{j\in V_i} v_{j}\left(g_{i j} \sin \theta_{i j} + b_{i j} \cos \theta_{i j}\right), \quad \forall i \in V \setminus \{0\}
            \end{split}
        \end{equation}
        where $V_i \coloneqq \{ j \ | \ (i,j) \in E \}$ is the index set of buses connected to bus $\mathit{i}$. $\mathit{g}_{i j}$ and $\mathit{b}_{i j}$ are the conductance and susceptance on branch $(\mathit{i},\mathit{j})$. $\theta_{i j} = \theta_i - \theta_j$ is the phase difference between bus $\mathit{i}$ and $\mathit{j}$. $p_{i}^{\scriptscriptstyle{PV}}$ and $q_{i}^{\scriptscriptstyle{PV}}$ are active power and reactive power of the PV on the bus $i$ (that are zeros if there is no PV on the bus $i$). $p_{i}^{\scriptscriptstyle{L}}$ and $q_{i}^{\scriptscriptstyle{L}}$ are active power and reactive power of the loads on the bus $i$ (that are zeros if there is no loads on the bus $i$). Eq.~\ref{eq:power_injection} can represent the power system dynamics which is essential for solving the power flow problem and active voltage control problem \cite{saadat1999power} (see details in Appendix A.1-A.2). For the safe and optimal operation, $5\%$ voltage deviation is usually allowed, i.e., $v_0 = 1.0$ per unit ($p.u.$) and $0.95 \ p.u. \leq v_i \leq 1.05 \ p.u., \forall i \in V \setminus \{0\}$. When the load is heavy during the nighttime, the end-user voltage could be smaller than $0.95 \ p.u.$ \cite{agalgaonkar2013distribution}. In contrast, to export its power, large penetration of $p_i^{\scriptscriptstyle PV}$ leads to reverse current flow that would increase $v_i$ out of the nominal range (Figure \ref{fig:block}-d) \cite{masters2002voltage, yang2015voltage}.
    
        \paragraph{Two-Bus Network Analysis.} To intuitively show up how voltage is varied by PVs and how PV inverters can participate in the voltage control, we give an example for a two-bus distribution network as shown in Figure \ref{fig:block_v}. In Figure \ref{fig:block_v}, $\mathit{z}_{0i} = r_{0i} + j x_{0i}$ represents the impedance on branch $(0, i)$; $\mathit{r}_{0i}$ and $\mathit{x}_{0i}$ are resistance and reactance on branch $(0, i)$, respectively; $\mathit{p}_i^{\scriptscriptstyle L}$ and $\mathit{q}_i^{\scriptscriptstyle L}$ denote active and reactive power consumption, respectively; $\mathit{p}_i^{\scriptscriptstyle PV}$ and $q_i^{\scriptscriptstyle PV}$ indicate active and reactive PV power generation, respectively. The parent bus voltage $v_{0}$ is set as the reference for the two-bus network.
        \begin{figure}[ht!]
            \centering
            \includegraphics[width = 0.55\linewidth]{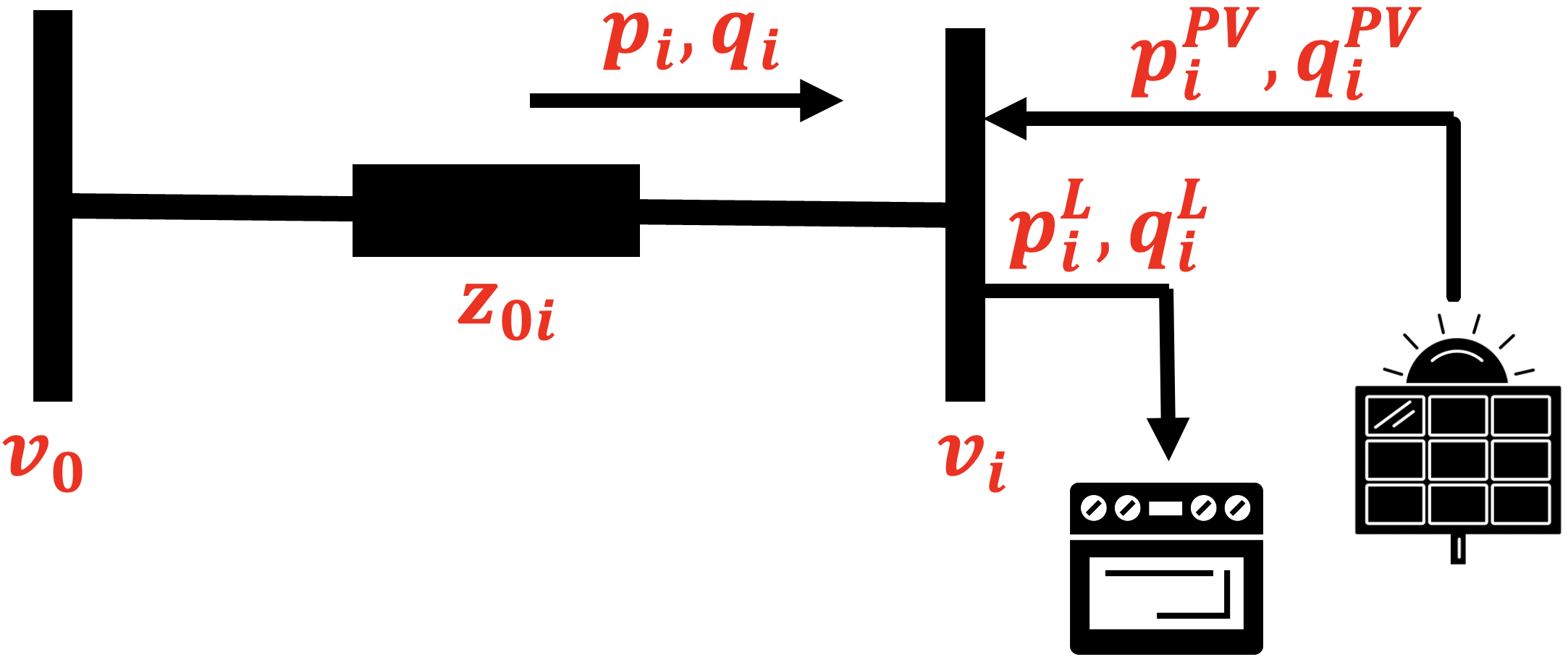}
            \caption{Two-bus electric circuit of the distribution network.}
            \label{fig:block_v}
        \end{figure}
        
        The voltage drop $\Delta v_{0i} = v_{0} - v_i$ in Figure \ref{fig:block_v} can be approximated as follows:
        \begin{equation}
        \label{eq:voltage_dev}
             \Delta v_{0i} = \frac{r_{0i} ( p_i^{\scriptscriptstyle L} - p_i^{\scriptscriptstyle PV} ) + x_{0i} ( q_i^{\scriptscriptstyle L} - q_i^{\scriptscriptstyle PV} ) }{ v_{i} }.
        \end{equation}
        
        The power loss of the 2-bus network in Figure \ref{fig:block_v} can be written as follows:
        \begin{equation}
        \label{eq:power_loss}
            P_{\text{loss}} = \frac{(p_i^{\scriptscriptstyle L} - p_i^{\scriptscriptstyle PV})^2 + (q_i^{\scriptscriptstyle L} - q_i^{\scriptscriptstyle PV})^2}{v_{0}^2} \cdot r_{0i}.
        \end{equation}
        
        \paragraph{Traditional Voltage Control Methods.} Conventionally, PVs are not allowed to participate in the voltage control so that $q_i^{\scriptscriptstyle PV}$ is restricted to 0 by the grid code. To export its power, large penetration of $p_i^{\scriptscriptstyle PV}$ may increase $v_i$ out of its safe range, causing reverse current flow \cite{masters2002voltage, yang2015voltage}. Voltage control devices, such as shunt capacitor (SC) and step voltage regulator (SVR) are usually equipped in the network to maintain the voltage level \cite{senjyu2008optimal}. Nonetheless, these methods cannot respond to intermittent solar radiation, e.g. frequent voltage fluctuation due to cloud cover \cite{singhal2019real}. Additionally, with the rising PV penetration in the network, the operation of traditional regulators would be at their control limit (i.e. runaway condition) \cite{agalgaonkar2014distribution}.
        
        \paragraph{Inverter-based Volt/Var Control.} To adapt to the continually rising PV penetration, grid-support services, such as voltage and reactive power control are required for every new-installed PV by the latest grid code IEEE Std-1547\textsuperscript{\texttrademark}-2018 \cite{ieee2018ieee}. For instance, the PV reactive power can be regulated by the PV inverter under partial static synchronous compensator (STATCOM) mode \cite{varma2018pv}. Depending on the voltage deviation levels, the inverter can inject or absorb different amount of reactive power exceeding its capacity  \cite{varma2009night}. This control method is then named as Volt/Var control, as the reactive power (with unit VAR) is determined by the voltage (with unit Volt). Intuitively by Eq.~\ref{eq:voltage_dev}, when the voltage increases due to large PV penetration in the lunch-time, the PV inverter absorbs reactive power, while during the night-time, the full inverter capacity is used to balance voltage fluctuation caused by increasing load \cite{agalgaonkar2014distribution}. Note that the only control variable in Eq.~\ref{eq:voltage_dev} and Eq.~\ref{eq:power_loss} is $q_i^{\scriptscriptstyle PV}$ which represents reactive power generated by the PV. Based on Eq.~\ref{eq:voltage_dev}, to enforce zero voltage deviation, the reactive power should satisfy the following condition such that
        \begin{equation}
        \label{eq:voltage_regu}
            q_i^{\scriptscriptstyle PV} = \frac{r_{0i}}{x_{0i}}(p_i^{\scriptscriptstyle L} - p_i^{\scriptscriptstyle PV})+q_i^{\scriptscriptstyle L}.
        \end{equation}
        
        Since the ratio $r_{0i}/x_{0i}$ in the distribution network is extremely large, $q_i^{\scriptscriptstyle PV}$ could become negative (i.e. absorbing reactive power) with great magnitude during the period of the peak PV injection (i.e., $p_i^{\scriptscriptstyle PV} \gg p_i^{\scriptscriptstyle L}$). From Eq.~\ref{eq:power_loss}, to achieve the least power loss, $q_i^{\scriptscriptstyle PV}$ needs to be equal to $q_i^{\scriptscriptstyle L}$ (i.e. no reactive power injection). This result may conflict with the voltage control target in Eq.~\ref{eq:voltage_regu}, implying that it is hard to simultaneously maintain safe voltage levels and minimise the power losses, even for the two-bus network. This section only demonstrates a 2-bus network which has linear relationship between voltage deviation and PV reactive power. Although the power systems in the real world are non-linear and more complex, they possess the same phenomenon on the contradiction between voltage control and power loss minimisation.
    
    \subsection{Foundation of Voltage Control in Power System}
    \label{subsec:foundation_of_voltage_control_in_power_system}
        \paragraph{Power Flow.} The power flow problem is designed to find the steady-state operation point of a power system. After measuring power injections $p_{i}^{\scriptscriptstyle PV} - p_{i}^{\scriptscriptstyle L}$ and $q_{i}^{\scriptscriptstyle PV} - q_{i}^{\scriptscriptstyle L}$, the bus voltages $v_i\angle \theta_i$ can be retrieved by iteratively solving Eq.~\ref{eq:power_injection} using Newton-Raphson or Gauss-Seidel method \cite{gomez2018electric}. The power plow serves as the fundamental role in grid planning and security assessment by locating any voltage deviations. It is also used as the system dynamics to generate observations during the runs of MARL.
        
        \paragraph{Optimal Power Flow.} Optimal power flow (OPF) is an optimization problem that aims at minimizing the total power loss subject to the power balance constraints defined in Eq.~\ref{eq:power_injection}, PV reactive power limits, and bus voltage limits \cite{gomez2018electric}. As the centralized OPF has full access to the system topology, measurements, and PV resources, it provides the optimal active voltage control performance and can be used as a benchmark method. However, the performance of the OPF highly depends on the accuracy of the grid model and the optimisation is time-consuming which makes it difficult to be deployed online.
        
        The OPF considered in this paper can be briefly formulated as:
        \begin{equation}
        \label{eq:opf}
            \begin{array}{ll}
            \min_{q_i^{\scriptscriptstyle PV}} & p_0 \\
            \text { s.t. } & \text{Eq.~}\ref{eq:power_injection} \\
            &  |q_i^{\scriptscriptstyle PV}| \leq q_{i,\max}^{\scriptscriptstyle PV},\quad i \in V^{\scriptscriptstyle PV} \\
            & v_{i, \min} \leq v_i \leq v_{i, \max}, \quad i \in V \setminus 0 \\
            & v_0 = v_{\text{ref}}, \\
            \end{array}
        \end{equation}
        where $p_0$ and $v_{0}$ are active power and reference voltage of the slack bus, respectively. $V^{\scriptscriptstyle PV}$ is the index set of the buses equipped with PVs. $p_i^{\scriptscriptstyle PV}$, $q_i^{\scriptscriptstyle PV}$, and $s_i$ are the active power, reactive power, and the capacity of PV at bus $i$,  respectively. In this thesis, each PV inverter is oversized with $s_i = 1.2 \ p_{i,\text{max}}, \forall i \in V^{\scriptscriptstyle PV}$. The maximum PV reactive power is $q_{i, \max}^{\scriptscriptstyle PV} = \sqrt{s_i^2 - ( p_i^{\scriptscriptstyle PV} )^2 }$. Note that the objective of the OPF problem is equivalent to minimizing the overall power loss. Eq.~\ref{eq:opf} may be infeasible due to the large penetration of PVs. In this case, slack variables can be added on the voltage constraint.
        
        \paragraph{Droop Control.} To regulate local voltage deviation, the standard droop control defines a piece-wise linear relationship between PV reactive power generation and voltage deviation at a bus equipped with inverter-based PVs \cite{singhal2019real,ieee2018ieee}. It is a fully decentralised control and ignore both the total voltage divisions and the power loss. The droop control, as recommended by IEEE Std-1547\textsuperscript{\texttrademark}-2018 \cite{ieee2018ieee}, follows the control strategy $q^{\scriptscriptstyle PV}_i=f(v_i)$, where $q^{\scriptscriptstyle PV}_i$ and $v_i$ are the PV reactive power and the voltage measurement of a PV bus $i$. $f(\cdot)$ is piecewise linear as shown in Figure \ref{fig:droop_control}. In more details, $v_{\text{ref}}$ represents the voltage set point (e.g. 1.0 p.u.). $v_\text{a}$ and $v_\text{d}$ represent the saturation regions limited by the PV inverter capacity and the current PV active power. There also exists a dead-band between $v_\text{b}$ and $v_\text{c}$ that does not require any control. For the voltage lower than $v_\text{b}$, the inverter provides reactive power proportional to the voltage deviation against $v_\text{ref}$. If the voltage is higher than $v_\text{c}$, the inverter absorbs reactive power until convergence achieves. The droop control only requires the local voltage measurements, which is simple and efficient to be implemented. However, it cannot directly minimise the power losses nor respond to fast voltage changes. For simplicity, we set $v_{\text{b}} = v_{\text{c}} = v_{\text{ref}}$ in this thesis.
        \begin{figure}[ht!]
            \centering
              \includegraphics[width=0.55\textwidth]{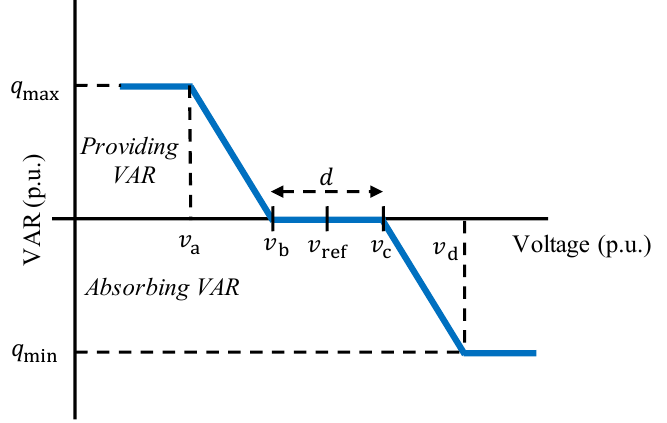}
            \caption{Illustration of the droop control law.}
        \label{fig:droop_control}
        \end{figure}
    
    \subsection{Active Voltage Control}
    \label{subsec:prior_works_on_voltage_control}
        \paragraph{Traditional Methods for Active Voltage Control.} Voltage rising and fluctuation problem in distribution networks has been studied for 20 years \cite{masters2002voltage}. The traditional voltage regulation devices such as OLTC and capacitor banks \cite{senjyu2008optimal} are often installed at substations and therefore may not be effective in regulating voltages at the far end of the line \cite{fusco2021decentralized}. The emergence of distributed generation, such as root-top PVs, introduces new approaches for voltage regulation by the active reactive power control of grid-connected inverters \cite{turistsyn2011options}. The state-of-the-art active voltage control strategies can be roughly classified into two categories: (1) reactive power dispatch based on optimal power flow (OPF) \cite{agalgaonkar2014distribution,vovos2007centralized}; and (2) droop control based on local voltage and power measurements \cite{jahangiri2013distributed,schiffer2016voltage}. Specifically, centralised OPF \cite{gan2013optimal,xu2017multi,anese2014optimal} minimises the power loss while fulfilling voltage constraints (e.g. power flow equation defined in Eq.~\ref{eq:power_injection}); distributed OPF \cite{zheng2016fully,tang2020distributed} used distributed optimization techniques, such as alternating direction method of multipliers (ADMM), to replace the centralised solver. The primary limitation of OPF is the need of exact system model \cite{sun2019review}. Besides, solving constrained optimisation problem is time-consuming, so it is difficult to respond to the rapid change of load profile \cite{singhal2019real}. On the other hand, droop control only depends on its local measurements, but its performance severely relies on the manually-designed parameters, which is often sub-optimal due to the lack of global information \cite{singhal2019real}. It is possible to enhance droop control by distributed algorithms, but extra communications are needed \cite{zeraati2018voltage,fusco2021decentralized}. In this thesis, we investigate the possibility of applying MARL to the active voltage control problem. Compared with the previous works on traditional methods, (1) MARL is model-free, so no exact system model is needed; and (2) the response of MARL is fast to handle the rapid changes of environments (e.g. the intermittency of renewable energy).
        
        \paragraph{Multi-Agent Reinforcement Learning for Active Voltage Control.} We now discuss the previous works that applied MARL to active voltage control problem in the power system community. \cite{cao2020multi,liu2021online} applied MADDPG with reactive power of inverters or static var compensators (SVCs) as control actions. \cite{wang2020data} applied MADDPG with a manually designed voltage inner loop, so that agents set reference voltage (instead of reactive power) as their control actions. \cite{cao2020distributed} applied MATD3 also with reactive power as control actions. \cite{cao2021data} applied MASAC, where both reactive power and the curtailment of active power are used as control actions. In the above works, distribution networks are divided into regions, with each region controlled by a single agent \cite{burger2019restructuring1,burger2019restructuring2}. It is not clear if these MARL approaches scales well for increasing number of agents. In particular, it is not clear if each single inverter in a distribution network can behave as an independent agent. In this thesis, we model the active voltage control problem as a decentralized partially observable Markov decision process (Dec-POMDP) \cite{oliehoek2016concise}, where each inverter is controlled by an agent. We propose Bowl-shape as a barrier function to represent voltage constraint as part of the reward.

% ==================================================================
% Name of this chapter.
% ==================================================================
\chapter{Theory and Method}
\label{cha:theory_and_method}
    In this chapter, we introduce the main contributions of this thesis. We firstly extend convex game to Markov decision process, named Markov convex game. Then, we justify that applying payoff distribution schemes in the Markov convex game as a credit assignment scheme to solve a global reward game is feasible. Thereby, we have showed that credit assignment is valid in a global reward game from the perspective of cooperative game theory. Next, we extend Shapley value, a payoff distribution scheme in the convex game, to Markov convex game, named Markov Shapley value, and incorporate the Markov Shapley value into the MARL algorithmic framework, forming a generic theoretical framework. Based on the theoretical framework, we derive three Shapley value based MARL algorithms called Shapley Q-learning (SHAQ), Shapley value deep deterministic policy gradient (SQDDPG) and Shapley value model-free proximal policy gradient (SMFPPO). Finally, we extend the theoretical framework to partial observability and show how the above Shapley value based MARL algorithms can be tweaked to solve partially observable tasks in practice.
% ==================================================================
% Introduction
% ================================================================
% \section{Validity of Credit Assignment in Global Reward Game}
% \label{sec:validity_of_credit_assignment}
\section{Markov Convex Game}
\label{sec:markov_convex_game}
    % (i.e., equivalent to the global reward)
    % (i.e., also called the global value since the equivalence proof from \cite{Wang_2020})
    Extending from convex game, we now formally define Markov convex game (MCG) that can be described as a tuple $\Gamma = \left\langle \mathcal{N}, \mathcal{S}, \mathcal{A}, T, \Lambda, \pi, R_{t}, \gamma \right\rangle$. $\mathcal{N}$ is the set of all agents. $\mathcal{S}$ is the set of states and $\mathcal{A} = \mathlarger{\mathlarger{\times}}_{i \in \mathcal{N}} \mathcal{A}_{i}$ is the joint action set of all agents wherein $\mathcal{A}_{i}$ is each agent's action set. $T( \mathbf{s}', \mathbf{s}, \mathbf{a} ) = Pr(\mathbf{s}' | \mathbf{s}, \mathbf{a})$ is defined as the transition probability between two states. $\mathcal{CS} = \left\{ \mathcal{C}_{1}, ..., \mathcal{C}_{n} \right\}$ is a \textit{coalition structure}, where $\mathcal{C}_{i} \ \mathlarger{\subseteq} \ \mathcal{N}$ called a \textit{coalition} is a subset of all agents. $\Lambda$ is a collection of coalition structures. $\emptyset$ and $\mathcal{N}$ are two special cases of coalitions, i.e., the \textit{empty coalition} and the \textit{grand coalition} respectively. Conventionally, it is assumed that $\mathcal{C}_{m} \ \mathlarger{\mathlarger{\cap}} \ \mathcal{C}_{k} = \emptyset, \forall \mathcal{C}_{m}, \mathcal{C}_{k} \ \mathlarger{\mathlarger{\subseteq}} \ \mathcal{N}$. $\pi = \mathlarger{\mathlarger{\times}}_{\scriptscriptstyle i \in \mathcal{N}} \pi_{i}$ is the joint policy of all agents. For any coalition $\mathcal{C}$, a \textit{coalition policy} $\pi_{\scriptscriptstyle\mathcal{C}}(\mathbf{a}_{\scriptscriptstyle\mathcal{C}} | \mathbf{s}) = \mathlarger{\mathlarger{\times}}_{\scriptscriptstyle i \in \mathcal{C}} \pi_{i}(a_{i}|\mathbf{s})$ is defined over the \textit{coalition action set} $\mathcal{A}_{\scriptscriptstyle\mathcal{C}} = \mathlarger{\mathlarger{\times}}_{\scriptscriptstyle i \in \mathcal{C}} \mathcal{A}_{i}$. Therefore, $\pi$ can be seen as the \textit{grand coalition policy}. $R: \mathcal{S} \times \mathcal{A}_{\scriptscriptstyle\mathcal{C}} \rightarrow [0, \infty)$ as a characteristic function is the \textit{coalition reward function}. Accordingly, $R(\mathbf{s}, \mathbf{a})$ is the \textit{grand coalition reward} for $\textbf{S}_{t} = \textbf{s}$ and $\textbf{A}_{t} = \textbf{a}$ at time step $t$, which could be shortened as $\mathit{R}_{t}$ for conciseness in the rest of this thesis. $\gamma \in (0, 1)$ is the discounted factor. The infinite long-term discounted cumulative coalition rewards is defined as $V^{\pi_{\scriptscriptstyle\mathcal{C}}}(\mathbf{s}) = \mathbb{E}_{\pi_{\scriptscriptstyle\mathcal{C}}} \big[ \sum_{t=1}^{\infty} \gamma^{t-1} R_{t}(\mathbf{s}, \mathbf{a}_{\scriptscriptstyle\mathcal{C}}) \ | \ \mathbf{S}_{t}=\mathbf{s} \big] \in [0, \infty)$, called a \textit{coalition value}. Moreover, the empty coalition value $V^{\pi_{\emptyset}}(\mathbf{s}) = 0$ and $V^{\pi}(\mathbf{s})$ denotes the grand coalition value. If the grand coalition value is equivalent to a global reward, then the grand coalition value is equal to a global value. 
    
    The solution to MCG is finding a tuple $\left\langle \mathcal{CS}, \left( \max_{\pi_{i}} x_{i}(\mathbf{s}) \right)_{i \in \mathcal{N}} \right\rangle$, where $\left( \max_{\pi_{i}} x_{i}(\mathbf{s}) \right)_{i \in \mathcal{N}}$ indicates the \textit{payoff distribution} under the optimal joint policy given a coalition structure $\mathcal{CS}$. Under the assumption that $\mathcal{C}_{m} \ \mathlarger{\mathlarger{\cap}} \ \mathcal{C}_{k} = \emptyset, \forall \mathcal{C}_{m}, \mathcal{C}_{k} \ \mathlarger{\subseteq} \ \mathcal{N}$, the condition for MCG becomes the inequality as follows:
    \begin{equation}
        \max_{\pi_{\mathcal{C}_{\cup}}} V^{\pi_{\mathcal{C}_{\cup}}}(\mathbf{s}) \geq
        \max_{\pi_{\mathcal{C}_{m}}} V^{\pi_{\mathcal{C}_{m}}}(\mathbf{s})
        + \max_{\pi_{\mathcal{C}_{k}}} V^{\pi_{\mathcal{C}_{k}}}(\mathbf{s}), \quad
        \forall \mathcal{C}_{m}, \mathcal{C}_{k} \ \mathlarger{\subseteq} \ \mathcal{N}, \mathcal{C}_{\cup}=\mathcal{C}_{m} \ \mathlarger{\cup} \ \mathcal{C}_{k}.
    \label{eq:mcg_assumption}
    \end{equation}
    
    In an MCG with the grand coalition as the coalition structure, i.e., $\mathcal{CS} = \{ \mathcal{N} \}$, \textit{Markov core}, a solution concept describing stability, is defined as a set of payoff distribution schemes by which no agent has incentives to deviate from the grand coalition to gain more profits. Mathematically, Markov core is expressed as follows:
    \begin{equation}
        \texttt{MarkovCore}(\Gamma) = \left\{ \left( \max_{\pi_{i}} x_{i}(\mathbf{s}) \right)_{i \in \mathcal{N}} \ \bigg| \max_{\pi_{\mathcal{C}}} x(\mathbf{s}|\mathcal{C}) \geq
        \max_{\pi_{\mathcal{C}}} V^{\pi_{\mathcal{C}}}(\mathbf{s}),
        \forall \mathcal{C} \ \mathlarger{\subseteq} \ \mathcal{N}, \mathbf{s} \in \mathcal{S} \ \right\},
    \label{eq:epsilon_core}
    \end{equation}
    where $\max_{\pi_{\mathcal{C}}} x(\mathbf{s}|\mathcal{C}) = \sum_{i \in \mathcal{C}} \max_{\pi_{i}} x_{i}(\mathbf{s})$. The objective of the MCG is finding a payoff distribution scheme $\left( x_{i}(\mathbf{s}) \right)_{i \in \mathcal{N}}$ that finally converges to Markov core under the optimal joint policy.
    
    To assist the application on Q-learning, we similarly define \textit{coalition Q-value} as $Q^{\pi_{\mathcal{C}}}(\mathbf{s}, \mathbf{a}_{\scriptscriptstyle\mathcal{C}}) \in [0, +\infty)$ for all coalitions $\mathcal{C} \ \mathlarger{\mathlarger{\subset}} \ \mathcal{N}$. Following the above convention, the grand coalition Q-value (or the global Q-value) can be written as $Q^{\pi}(\mathbf{s}, \mathbf{a})$. Moreover, the optimal coalition Q-value of a coalition $\mathcal{C}$ with respect to the optimal joint policy of another coalition $\mathcal{D} \ \mathlarger{\subseteq} \ \mathcal{C}$ (i.e., $\pi_{\scriptscriptstyle\mathcal{D}}^{*}$) and the sub-optimal joint policy of the coalition $\mathcal{C} \backslash \mathcal{D}$ (i.e., $\pi_{\scriptscriptstyle\mathcal{C} \backslash \mathcal{D}}$) is defined as $Q^{\pi_{\mathcal{D}}^{*}}(\mathbf{s}, \mathbf{a}_{\scriptscriptstyle\mathcal{C}})$. Therefore, the optimal coalition Q-value of a coalition $\mathcal{C}$ with respect to the optimal joint policy of the coalition $\mathcal{C}$ is defined as $Q^{\pi_{\mathcal{C}}^{*}}(\mathbf{s}, \mathbf{a}_{\scriptscriptstyle\mathcal{C}})$. Similarly, the optimal global coalition Q-value with respect to the optimal joint policy of the grand coalition is denoted as $Q^{\pi^{*}}(\mathbf{s}, \mathbf{a})$.

    Eq.~\ref{eq:mcg_assumption} implies a fact existing in most real-life scenarios that a larger coalition results in the greater distributed payoffs (see Remark \ref{rmk:credit_assignment}) and therefore the greater optimal global value in cooperation, which directly increases agents' incentives to join the grand coalition. This interpretation for the dynamic scenario in this thesis is consistent with the static scenario given by \cite{shapley1971cores}, which is also known as the snowball effect.
    \begin{remark}
    \label{rmk:credit_assignment}
        Suppose that there are two coalitions $\mathcal{T}, \mathcal{S}$ such that $\mathcal{T} \ \mathlarger{\mathlarger{\subset}} \ \mathcal{S} \ \mathlarger{\mathlarger{\subset}} \ \mathcal{N}$ and an agent $i \in \mathcal{N} \backslash \mathcal{S}$. For convenience, we denote $\mathcal{C}_{1} = \mathcal{T} \ \mathlarger{\mathlarger{\cup}} \ \{i\}$ and $\mathcal{C}_{2} = \mathcal{S}$, and thus $\mathcal{C}_{\scriptscriptstyle\cap} = \mathcal{C}_{1} \ \mathlarger{\mathlarger{\cap}} \ \mathcal{C}_{2} = (\mathcal{T} \ \mathlarger{\mathlarger{\cup}} \ \{i\}) \ \mathlarger{\mathlarger{\cap}} \ \mathcal{S} = \mathcal{T}$ and $\mathcal{C}_{\scriptscriptstyle\cup} = \mathcal{C}_{1} \ \mathlarger{\mathlarger{\cup}} \ \mathcal{C}_{2} = (\mathcal{T} \ \mathlarger{\mathlarger{\cup}} \ \{i\}) \ \mathlarger{\mathlarger{\cup}} \ \mathcal{S} = \mathcal{S} \ \mathlarger{\mathlarger{\cup}} \ \{i\}$. By Eq.~\ref{eq:mcg_assumption}, we can write down the following inequalities such that
        \begin{equation}
            \begin{split}
                \max_{\pi_{\mathcal{S} \cup \{i\}}} V^{\pi_{\mathcal{S} \cup \{i\}}}(\mathbf{s}) - \max_{\pi_{\mathcal{S}}} V^{\pi_{\mathcal{S}}}(\mathbf{s}) &= \max_{\pi_{\mathcal{\mathcal{C}_{\cup}}}} V^{\pi_{\mathcal{C}_{\cup}}}(\mathbf{s}) - \max_{\pi_{\mathcal{C}_{2}}} V^{\pi_{\mathcal{C}_{2}}}(\mathbf{s}) \\
                &\geq \max_{\pi_{\mathcal{C}_{1}}} V^{\pi_{\mathcal{C}_{1}}}(\mathbf{s}) - \max_{\pi_{\mathcal{\mathcal{C}_{\cap}}}} V^{\pi_{\mathcal{C}_{\cap}}}(\mathbf{s}) \\
                &= \max_{\pi_{\mathcal{T} \cup \{i\}}} V^{\pi_{\mathcal{T} \cup \{i\}}}(\mathbf{s}) - \max_{\pi_{\mathcal{T}}} V^{\pi_{\mathcal{T}}}(\mathbf{s}).
            \end{split}
        \end{equation}
    It is intuitive to see that each agent can gain more payoffs if the size of the coalition grows.
    \end{remark}

% \section{Connection between Global Reward Game and Markov Convex Game}
% \label{sec:equivalence_between_markov_convex_game_and_global_reward_game}
\section{Validity of Credit Assignment in Global Reward Game}
\label{sec:validity_of_credit_assignment}
    In this section, we show up the connection between global reward game and Markov convex game, and then show the validity of credit assignment in a global reward game. More specifically, we mainly prove two following results: (1) Finding the payoff distribution under the grand coalition in Markov core also maximizing the social welfare (see Proposition \ref{prop:CFS_objective}); (2) solving a Markov convex game under the grand coalition is equivalent to solving a global reward game (see Theorem \ref{thm:eqv_mcg_grg}). Accordingly, the purpose of solving a global reward game is more evident and a global reward game is able to be represented as a Markov convex game under the grand coalition. Thereafter, it is feasible to solve a global reward game by a payoff distribution scheme defined in the Markov convex game (e.g. Markov Shapley value) as a credit assignment scheme. As a result, we have showed that credit assignment is valid to solve a global reward game from the perspective of cooperative game theory.
    
    \begin{lemma}[\cite{chalkiadakis2011computational} ]
        If an outcome $(\mathcal{CS}, \mathbf{x})$ is in the core of a characteristic function game $\langle \mathcal{N}, V \rangle$, then $V(\mathcal{CS}) \geq V(\mathcal{CS}')$ for every coalition structure $\mathcal{CS}' \in \mathcal{CS}_{\scriptscriptstyle \mathcal{N}}$, where $V(\mathcal{CS}) = \sum_{\mathcal{C} \in \mathcal{CS}} V(\mathcal{C})$ denotes the social welfare under the coalition structure $\mathcal{CS}$ and $\mathcal{CS}_{\scriptscriptstyle \mathcal{N}}$ is the set of all possible coalition structures.
    \label{lem:cfs_core->v}
    \end{lemma}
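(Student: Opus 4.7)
The plan is to exploit the two kinds of inequalities that are bundled into the notion of the core. On one hand, the feasibility side (condition (2) stated earlier on the payoff distribution) requires $\sum_{i\in\mathcal{C}} x_i \leq V(\mathcal{C})$ for every $\mathcal{C}\in\mathcal{CS}$. On the other hand, the rationality side, captured by the displayed definition of $\texttt{Core}(\Gamma)$, gives $\sum_{i\in\mathcal{C}} x_i \geq V(\mathcal{C})$ for every $\mathcal{C}\subseteq\mathcal{N}$ --- crucially, this holds for arbitrary subsets of $\mathcal{N}$, not merely for the cells of the fixed partition $\mathcal{CS}$. The strategy is then to use the total payoff $\sum_{i\in\mathcal{N}} x_i$ as a bridge quantity connecting $V(\mathcal{CS})$ and $V(\mathcal{CS}')$.

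First I would fix an arbitrary competing coalition structure $\mathcal{CS}' \in \mathcal{CS}_{\scriptscriptstyle \mathcal{N}}$ and, since $\mathcal{CS}'$ is by definition a partition of $\mathcal{N}$, apply the rationality inequality to each cell $\mathcal{C}' \in \mathcal{CS}'$ separately. Summing these inequalities yields
\[
V(\mathcal{CS}') \;=\; \sum_{\mathcal{C}'\in\mathcal{CS}'} V(\mathcal{C}') \;\leq\; \sum_{\mathcal{C}'\in\mathcal{CS}'}\sum_{i\in\mathcal{C}'} x_i \;=\; \sum_{i\in\mathcal{N}} x_i,
\]
where the last equality uses only the partition property of $\mathcal{CS}'$.

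Next I would evaluate the same bridge quantity along the original $\mathcal{CS}$. For each $\mathcal{C}\in\mathcal{CS}$, the cell simultaneously satisfies rationality (since $\mathcal{C}\subseteq\mathcal{N}$, so $\sum_{i\in\mathcal{C}} x_i \geq V(\mathcal{C})$) and feasibility ($\sum_{i\in\mathcal{C}} x_i \leq V(\mathcal{C})$), which together force the equality $\sum_{i\in\mathcal{C}} x_i = V(\mathcal{C})$. Summing over $\mathcal{C}\in\mathcal{CS}$ and using that $\mathcal{CS}$ is a partition of $\mathcal{N}$ gives $\sum_{i\in\mathcal{N}} x_i = V(\mathcal{CS})$. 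Chaining this with the previous display yields $V(\mathcal{CS}) \geq V(\mathcal{CS}')$, which is the claim.

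The proof is essentially a one-line bookkeeping argument, so there is no serious technical obstacle. The only delicate point --- and what I would flag most prominently --- is conceptual: one has to remember that the core's stability condition is quantified over \emph{every} subset of $\mathcal{N}$ (so it bites against the cells of any competing partition, not just those that appear in $\mathcal{CS}$), and that feasibility tightens to an equality on the cells of $\mathcal{CS}$ itself once combined with rationality. Once that observation is in hand, the inequality falls out immediately by summing.
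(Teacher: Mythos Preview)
Your argument is correct and is exactly the standard textbook proof of this fact. Note, however, that the paper does not supply its own proof of this lemma at all: it is simply quoted from \cite{chalkiadakis2011computational} and then invoked in the proof of Proposition~\ref{prop:CFS_objective}. So there is nothing in the paper to compare your approach against---you have filled in the (omitted) classical argument, and done so correctly.
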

    
    \begin{proposition}
        For an Markov convex game, if an outcome with the grand coalition as the coalition structure (i.e., $\{ \mathcal{N} \}$) is in the Markov core, then it leads to the maximal social welfare, i.e., $\max_{\pi} V^{\pi}(\{\mathcal{N}\}) \geq \max_{\pi} V^{\pi}(\mathcal{CS}')$ for every coalition structure $\mathcal{CS}' \in \mathcal{CS}_{\scriptscriptstyle \mathcal{N}}$.
    \label{prop:CFS_objective}
    \end{proposition}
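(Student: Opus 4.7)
The plan is to mimic the static argument of Lemma~\ref{lem:cfs_core->v} state by state, exploiting the disjointness of coalitions in any coalition structure and the efficiency condition that is built into the feasibility plus core requirements. Concretely, fix an arbitrary state $\mathbf{s} \in \mathcal{S}$ and an arbitrary coalition structure $\mathcal{CS}' = \{ \mathcal{C}'_{1}, \ldots, \mathcal{C}'_{k} \} \in \mathcal{CS}_{\mathcal{N}}$; since $\mathcal{CS}'$ partitions $\mathcal{N}$, the coalitions are pairwise disjoint and their union is $\mathcal{N}$.

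First I would apply the Markov-core condition \eqref{eq:epsilon_core} to each coalition $\mathcal{C}'_{j}$ in $\mathcal{CS}'$, giving
\begin{equation*}
    \sum_{i \in \mathcal{C}'_{j}} \max_{\pi_{i}} x_{i}(\mathbf{s}) \;\geq\; \max_{\pi_{\mathcal{C}'_{j}}} V^{\pi_{\mathcal{C}'_{j}}}(\mathbf{s}), \qquad j = 1, \ldots, k.
\end{equation*}
Summing these $k$ inequalities and using the partition property $\bigsqcup_{j} \mathcal{C}'_{j} = \mathcal{N}$ yields
\begin{equation*}
    \sum_{i \in \mathcal{N}} \max_{\pi_{i}} x_{i}(\mathbf{s}) \;\geq\; \sum_{j=1}^{k} \max_{\pi_{\mathcal{C}'_{j}}} V^{\pi_{\mathcal{C}'_{j}}}(\mathbf{s}) \;=\; \max_{\pi} V^{\pi}(\mathcal{CS}'),
\end{equation*}
where the last equality is the definition of social welfare under $\mathcal{CS}'$, combined with the fact that maximising the sum of coalition values over independent coalition policies decomposes as the sum of coalition-wise maxima (again using disjointness of $\{\mathcal{C}'_{j}\}$).

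Next I would identify the left-hand side as $\max_{\pi} V^{\pi}(\{\mathcal{N}\})$. This uses the Markov-core condition at the grand coalition $\mathcal{C} = \mathcal{N}$, which gives $\sum_{i \in \mathcal{N}} \max_{\pi_{i}} x_{i}(\mathbf{s}) \geq \max_{\pi} V^{\pi}(\mathbf{s})$, together with the feasibility requirement on payoff distributions stated in Section~\ref{sec:markov_convex_game} (namely $\sum_{i \in \mathcal{C}} x_{i}(\mathbf{s}) \leq V^{\pi_{\mathcal{C}}}(\mathbf{s})$ for all $\mathcal{C}$ in the coalition structure), which specialises at $\mathcal{C} = \mathcal{N}$ to the reverse inequality. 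Combining the two gives the efficiency identity $\sum_{i \in \mathcal{N}} \max_{\pi_{i}} x_{i}(\mathbf{s}) = \max_{\pi} V^{\pi}(\{\mathcal{N}\})$, and chaining with the previous display produces $\max_{\pi} V^{\pi}(\{\mathcal{N}\}) \geq \max_{\pi} V^{\pi}(\mathcal{CS}')$ for every $\mathbf{s}$ and every $\mathcal{CS}'$, as required.

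The main obstacle, I expect, is the efficiency step in the second paragraph: the Markov-core definition quoted in \eqref{eq:epsilon_core} only records the lower bound $\sum_{i} \max_{\pi_{i}} x_{i}(\mathbf{s}) \geq \max_{\pi} V^{\pi}(\mathbf{s})$, so one must carefully invoke the feasibility conditions on $(\mathcal{CS}, \mathbf{x})$ stated separately in the MCG definition to obtain the matching upper bound. Everything else is essentially a state-wise lift of the static argument in Lemma~\ref{lem:cfs_core->v}, using the MCG assumption \eqref{eq:mcg_assumption} implicitly only insofar as it guarantees that the relevant coalition values are well-defined and non-negative.
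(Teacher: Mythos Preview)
Your proposal is correct and is essentially the same argument as the paper's. The paper simply packages your state-wise computation as a one-line reduction to Lemma~\ref{lem:cfs_core->v}: for each fixed $\mathbf{s}$ it sets $\tilde{V}(\mathcal{C}) := \max_{\pi_{\mathcal{C}}} V^{\pi_{\mathcal{C}}}(\mathbf{s})$, observes that the Markov core condition at $\mathbf{s}$ is exactly the static core condition for $(\mathcal{N},\tilde{V})$, and then quotes the lemma; you instead unroll that lemma's proof explicitly (sum the core inequalities over the blocks of $\mathcal{CS}'$ and invoke efficiency at $\mathcal{N}$), which has the merit of making the role of the feasibility upper bound visible.
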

    
    \begin{proof}
        We first define $\max_{\pi} V^{\pi}(\mathbf{s}; \mathcal{CS}) = \sum_{\mathcal{C} \in \mathcal{CS}} \max_{\pi_{\mathcal{C}}} V^{\pi_{\mathcal{C}}}(\mathbf{s})$. Since Markov convex game is a characteristic function game and Markov core is an analogue of the core in convex game, if we define $\max_{\pi_{\mathcal{C}}} V^{\pi_{\mathcal{C}}}(\mathbf{s}) = \tilde{V}(\mathcal{C})$ for an arbitrary $\mathbf{s} \in \mathcal{S}$, the result is directly obtained.
    \end{proof}
    
    Proposition \ref{prop:CFS_objective} shows that solving payoff distribution under the grand coalition for satisfying Markov core also maximize the social welfare with respect to the coalition structure. This gives more insight into the reason why we need to use Markov core as the solution concept for solving Markov convex game.
    \begin{theorem}
        If an outcome with the grand coalition as the coalition structure (i.e., $\{ \mathcal{N} \}$) is in the Markov core, then solving a Markov convex game under the grand coalition is equivalent to solving a global reward game.
    \label{thm:eqv_mcg_grg}
    \end{theorem}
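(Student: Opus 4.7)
The plan is to bridge the two games by recognizing that, under the coalition structure $\{\mathcal{N}\}$, the grand coalition value of the MCG \emph{is} the global value of the GRG, and then using the Markov-core inequality together with the feasibility clause of the outcome definition to sandwich the sum of individual payoffs against $\max_{\pi} V^{\pi}(\mathbf{s})$. First, I would recall that a GRG is a Markov game with shared reward $R_{1} = \cdots = R_{|\mathcal{N}|} = R$ and objective $\pi^{*} \in \arg\max_{\pi} V^{\pi}(\mathbf{s})$ for every $\mathbf{s}$; since the coalition reward function of the MCG restricted to $\mathcal{N}$ is exactly this $R$, the grand coalition value of the MCG coincides with the global value of the GRG by definition.

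Next, I would specialize the Markov-core condition in Eq.~\ref{eq:epsilon_core} to $\mathcal{C} = \mathcal{N}$, which gives $\sum_{i \in \mathcal{N}} \max_{\pi_{i}} x_{i}(\mathbf{s}) \geq \max_{\pi} V^{\pi}(\mathbf{s})$. The matching reverse inequality comes from the feasibility clause in the outcome definition, $\sum_{i \in \mathcal{C}} x_{i}(\mathbf{s}) \leq V^{\pi_{\mathcal{C}}}(\mathbf{s})$ for all $\mathcal{C}$; taking $\mathcal{C} = \mathcal{N}$ and maximizing over the joint policy yields $\sum_{i \in \mathcal{N}} \max_{\pi_{i}} x_{i}(\mathbf{s}) \leq \max_{\pi} V^{\pi}(\mathbf{s})$. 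Combining these produces the equality $\sum_{i \in \mathcal{N}} \max_{\pi_{i}} x_{i}(\mathbf{s}) = \max_{\pi} V^{\pi}(\mathbf{s})$ pointwise in $\mathbf{s}$.

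I would then read off the equivalence in both directions. If the outcome under $\{\mathcal{N}\}$ lies in the Markov core, then the joint policy that realizes those payoffs attains $\max_{\pi} V^{\pi}(\mathbf{s})$ and is therefore a solution of the GRG; conversely, any GRG-optimal joint policy, paired with an efficient payoff distribution, saturates the core inequality and is consistent with the grand coalition structure, which Proposition~\ref{prop:CFS_objective} already identifies as the welfare-maximizing choice. Thus the two solution problems coincide.

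The main obstacle is the interchange of the agent-wise maximum $\max_{\pi_{i}}$ with the sum over agents in the Markov-core inequality, since in general $\sum_{i} \max_{\pi_{i}} x_{i}(\mathbf{s}) \geq \max_{\pi} \sum_{i} x_{i}(\mathbf{s})$ and equality is nontrivial. I would handle this by leaning on the convention stated after Eq.~\ref{eq:mcg_assumption} that $\max_{\pi_{i}} x_{i}(\mathbf{s})$ denotes the payoff to agent $i$ \emph{under the optimal joint policy}, so every agent's individually maximized share is realized by a common $\pi^{*}$, together with the feasibility bound that prevents the sum of individual payoffs from exceeding the grand coalition value; once this bookkeeping is in place the sandwich argument closes cleanly.
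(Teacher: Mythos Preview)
Your argument is correct, but it differs in route from the paper's proof. The paper argues at the level of \emph{objectives} rather than \emph{payoff sums}: it invokes Proposition~\ref{prop:CFS_objective} to say that an outcome in the Markov core under $\{\mathcal{N}\}$ forces the social-welfare objective $\max_{\pi} V^{\pi}(\mathbf{s};\{\mathcal{N}\})$, and then observes the trivial identity $V^{\pi}(\mathbf{s};\{\mathcal{N}\}) = V^{\pi}(\mathbf{s})$ (since the coalition structure is a singleton), which is exactly the GRG objective. No sandwich on $\sum_i \max_{\pi_i} x_i(\mathbf{s})$ is needed.

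Your approach is more hands-on: you bypass Proposition~\ref{prop:CFS_objective} entirely and instead derive the efficiency identity $\sum_{i\in\mathcal{N}} \max_{\pi_i} x_i(\mathbf{s}) = \max_{\pi} V^{\pi}(\mathbf{s})$ directly from the core inequality plus feasibility. This buys you something the paper's proof does not: the efficiency relation itself, which the paper only states afterward as a remark (``by definition'') rather than proving inside the theorem. The cost is that you must wrestle with the $\max_{\pi_i}$-vs-sum interchange, which you correctly flag as the delicate step and resolve via the paper's stated convention that $\max_{\pi_i} x_i(\mathbf{s})$ denotes the payoff under a common optimal joint policy. The paper's route is shorter because it offloads the structural work to Lemma~\ref{lem:cfs_core->v} and Proposition~\ref{prop:CFS_objective}; yours is more self-contained and makes the payoff-level equivalence explicit.
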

    
    \begin{proof}
        If an outcome with the grand coalition as the coalition structure (i.e., $\{ \mathcal{N} \}$) is in the Markov core, we can conclude that $\max_{\pi} V^{\pi}(\mathbf{s}; \{\mathcal{N}\})$ is the resulting objective function which we need to optimize to find the stationary optimal joint policy. Since $\max_{\pi} V^{\pi}(\mathbf{s}; \{\mathcal{N}\}) = \max_{\pi} V^{\pi}(\mathbf{s})$, we can observe that solving the objective function resulting from a Markov convex game is equivalent to solving the objective function of a global reward game.
    \end{proof}

    % the outcome under the grand coalition in
    By Theorem \ref{thm:eqv_mcg_grg}, the result from Proposition \ref{prop:CFS_objective} can be regarded as a reason to clarify the motivation of studying the global reward game from the perspective of cooperative game theory. Furthermore, the payoff distribution scheme belonging to the Markov core obeys a relationship such that $\max_{\pi} V^{\pi}(\mathbf{s}) = \sum_{i \in \mathcal{N}} \max_{\pi_{i}} x_{i}(\mathbf{s})$ by definition. This result provides a justification for applying credit assignment (a.k.a. value decomposition or factorization in some literature \cite{SunehagLGCZJLSL18,RashidSWFFW18}) in a global reward game from the perspective of cooperative game theory, by unifying the concepts of payoff distribution scheme and credit assignment scheme. In the rest of this thesis, we may replace the concept ``payoff distribution'' with ``credit assignment'' for the ease of understanding.
    
    When solving a global reward game as a Markov convex game under the grand coalition, the goal is finding the optimal credit assignment scheme to satisfy the Markov core. This leads to the definition of Markov Shapley value which is the optimal credit assignment scheme reaching the Markov core, and therefore can be used to solve the global reward game. We will introduce the construction of Markov Shapley value in details in Section \ref{sec:shaley_value_for_multi-agent_reinforcement_learning}. Before that, we show the relationship among Markov convex game, global reward game and Markov Shapley value in a diagram as shown in Figure \ref{fig:mcg_grg_msv} for clarity.
    \begin{figure}[ht!]
        \centering
        \includegraphics[width = 0.95\linewidth]{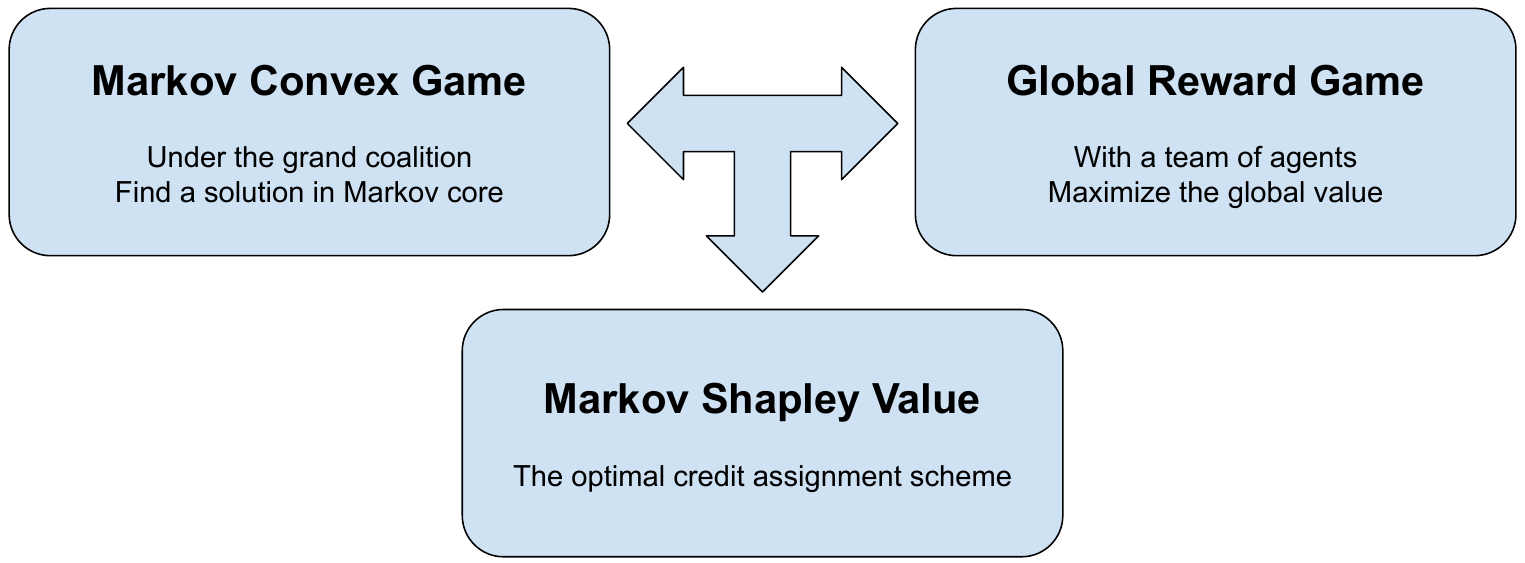}
        \caption{Solving a Markov convex game under the grand coalition by finding a solution in the Markov core, is equivalent to, solving a global reward game with a team of agents to maximize the global value. For this reason, Markov Shapley value belonging to the Markov core can be used as the optimal credit assignment scheme to solve a global reward game.}
    \label{fig:mcg_grg_msv}
    \end{figure}
        
\section{Shapley Value for Multi-Agent Reinforcement Learning}
\label{sec:shaley_value_for_multi-agent_reinforcement_learning}
    In this chapter, we introduce how Shapley value is incorporated into the framework of multi-agent reinforcement learning. Overall, we extend marginal contribution to Markov convex game. Next, we form Markov Shapley value in the Markov convex game by the new defined marginal contribution and prove that the properties of original Shapley value are inherited. Then, we derive a theoretical framework based on Bellman equation and Bellman operator, to regularize the optimality of Markov Shapley value and the approach to find it respectively. Furthermore, by approximating the components defined above, we derive Shapley value Q-learning, Shapley Q-value deep deterministic policy gradient, and Shapley model-free proximal policy gradient. Finally, we establish the relationship between the proposed framework and algorithms in this thesis, and prior works. To ease reading, the important definitions, conditions and theorems are wrapped with shading boxes which readers can follow to avoid losing themselves in the loads of mathematical languages.

    \subsection{Assumptions}
    \label{sec:assumptions_shaspley-q}
        \begin{assumption}
        \label{assm:basic_condition}
            We assume the following conditions hold: (1) The state space and the action space are finite and (2) The joint policy is stationary.
        \end{assumption}

        \begin{assumption}
        \label{assm:agent_policy_assumption}
            For the ease of analysis, in this thesis we assume that each agent's policy will not be affected by coalition formation. In other words, each agent's policy is regarded as its inherent feature, which is invariant throughout the interaction with other agents.
        \end{assumption}
        
        \begin{assumption}
        \label{assm:assumption_for_joint_policy_factorisation}
            Any coalition policy can be factorised into a permutation of decentralised (i.e., disjoint) policies, i.e., $\pi_{\scriptscriptstyle\mathcal{C}} = \mathlarger{\mathlarger{\times}}_{\scriptscriptstyle i \in \mathcal{C}} \pi_{i}$, where $\pi_{i}$ is agent $i$'s policy. Each $\pi_{\scriptscriptstyle\mathcal{C}}$ uniquely corresponds to a $V^{\pi_{\mathcal{C}}}(\mathbf{s})$ as a characteristic function (i.e., a set-valued function). Meanwhile, policies of the agents outside an coalition will not affect the coalition and therefore not change the coalition value (or coalition Q-value).
        \end{assumption}
        
        \begin{assumption}
        \label{assm:dummy_agent}
            If an agent $\mathit{i}$ is a dummy in an arbitrary state $\mathbf{s} \in \mathcal{S}$, it will not provide any contribution to any coalition $\mathcal{C}_{i} \ \mathlarger{\mathlarger{\subseteq}} \ \mathcal{N} \backslash \{i\}$ such that $V^{\pi_{\mathcal{C}}}(\mathbf{s}) = V^{\pi_{\mathcal{C} \cup \{i\}}}(\mathbf{s})$. Additionally, no members in the coalition $\mathcal{C}_{i}$ will react in another way after an agent $\mathit{i}$ joins.
        \end{assumption}
        
        \begin{assumption}
            If two arbitrary agent $i$ and agent $j$ are symmetric in an arbitrary state $\mathbf{s} \in \mathcal{S}$, $V^{\pi_{\mathcal{C} \cup \{i\}}}(\mathbf{s}) = V^{\pi_{\mathcal{C} \cup \{j\}}}(\mathbf{s})$ to any coalitions $\mathcal{C} \ \mathlarger{\mathlarger{\subseteq}} \ \mathcal{N} \backslash \{i, j\}$. Literally, the contributions of these two agents are equal to any coalition $\mathcal{C}$.
        \label{assm:symmetry}
        \end{assumption}
        
        \begin{assumption}
        \label{assm:max_shapley_value}
            For any agent $\mathit{i} \in \mathcal{N}$, its optimal Markov Shapley value in an arbitrary $\mathbf{s} \in \mathcal{S}$ denoted as $\max_{\pi_{i}} V_{i}^{\phi}(\mathbf{s})$ satisfies the following equation such that
            \begin{equation*}
                \max_{\pi_{i}} V_{i}^{\phi}(\mathbf{s}) = \sum_{\mathcal{C}_{i} \ \mathlarger{\mathlarger{\subseteq}} \ \mathcal{N} \backslash \{i\} } \frac{|\mathcal{C}_{i}|!(|\mathcal{N}|-|\mathcal{C}_{i}|-1)!}{|\mathcal{N}|!} \cdot \max_{\pi_{i}} \Phi_{i}(\mathbf{s} | \mathcal{C}_{i}),
            \end{equation*}
            where $\pi_{i}$ is agent $\mathit{i}$'s policy.
        \end{assumption}
        
        Assumption \ref{assm:basic_condition} shows the common conditions for the ease of analysis in the Markov decission process. Assumption \ref{assm:agent_policy_assumption} depicts a hypothesis that each agent's policy will not be affected by coalition formation, which supports the coalition policy factorisation shown in Assumption \ref{assm:assumption_for_joint_policy_factorisation}. Assumption \ref{assm:assumption_for_joint_policy_factorisation} is natural to hold given the chain rule in probability theory, the independence of each agent's policy and the definition of value function in reinforcement learning. Assumption \ref{assm:dummy_agent} and \ref{assm:symmetry} directly inherit the definitions from cooperative game theory \cite{chalkiadakis2011computational}. Assumption \ref{assm:max_shapley_value} inherits the definition from Shapley value \cite{shapley1953value} with extra consideration of agent $i$'s policy, an underlying condition of which is that the maximizer (i.e., $\pi_{i}$) of each $\Phi_{i}(\mathbf{s} | \mathcal{C}_{i}) \in \left\{ \Phi_{i}(\mathbf{s} | \mathcal{C}_{i}) | \mathcal{C}_{i} \ \mathlarger{\mathlarger{\subseteq}} \ \mathcal{N} \backslash \{i\} \right\}$ needs to be identical, in any $\mathbf{s} \in \mathcal{S}$. In other words, it implies that different permutations to from the grand coalition correspond to different long-term rewards that probably encode some unexpected events (i.e., each permutation is mapped to a marginal contribution of agent $i$), but with the same optimal policy as the solutions. Therefore, learning through Markov Shapley value is primarily for the fair credit assignment, with no changes to each agent's optimal policy. We argue for the existence of this condition by Example \ref{exp:max_shapley_value}.
        \begin{example}
        \label{exp:max_shapley_value}
            Suppose that there are two agents in total (i.e., $|\mathcal{N}|=2$), and we consider an arbitrary agent $i$ belonging to $\mathcal{N}$ whose action set is defined as $\mathcal{A}_{i} = \{ 0, 0.15, 0.25 \}$. Accordingly, there are only two intermediate coalitions for the agent $i$ to join and therefore two marginal contributions. To ease understanding, we only discuss a two-stage scenario and the result can be naturally extended to long-term scenarios. Agent $i$'s policy can be expressed as a sequence of actions such that $\pi_{i} = \left\langle a_{i}^{0}, a_{i}^{1} \right\rangle$. The set of marginal contributions of the agent $i$ is supposed to be a set $\mathcal{M}$ such that $$\mathcal{M} = \left\{ \Phi_{i}(\mathbf{s} | \{-i\}) := - (a_{i}^{0} + a_{i}^{1} - 0.5)^{2} + 1 + ||\mathbf{s}||_{2}, \Phi_{i}(\mathbf{s} | \emptyset) := \sin (a_{i}^{0} + a_{i}^{1}) + ||\mathbf{s}||_{2} \right\}.$$ Since $V_{i}^{\phi}(\mathbf{s}) = \frac{1}{2} \left( \Phi_{i}(\mathbf{s} | \{-i\}) + \Phi_{i}(\mathbf{s} | \emptyset) \right)$, it is easy to observe that Assumption \ref{assm:max_shapley_value} holds.
        \end{example}
        
    \subsection{Marginal Contribution}
    \label{subsec:preliminary_theoretical_results}
        By the view of cooperative game theory, the grand coalition is progressively formed by a permutation of agents. Accordingly, a marginal contribution measures the contribution of any agent $i$ to an arbitrary intermediate coalition, as shown in Definition \ref{def:marginal_contribution}.
        \mybox{
        \begin{definition}
        \label{def:marginal_contribution}
            In a Markov convex game, when a permutation of agents $\langle j_{1}, j_{2}, ..., j_{\scriptscriptstyle|\mathcal{N}|} \rangle, \forall j_{n} \in \mathcal{N}$ forms the grand coalition $\mathcal{N}$, where $n \in \{1, ..., |\mathcal{N}|\}, j_{a} \neq j_{b} \text{ if } a \neq b$, the marginal contribution of an agent $\mathit{i}$ is defined as the following equation such that
            \begin{equation}
                \Phi_{i}(\mathbf{s} | \mathcal{C}_{i}) = \max_{\pi_{\mathcal{C}_{i}}} V^{\pi_{\mathcal{C}_{i} \cup \{i\}}}(\mathbf{s}) - \max_{\pi_{\mathcal{C}_{i}}} V^{\pi_{\mathcal{C}_{i}}}(\mathbf{s}),
            \label{eq:marginal_contribution_v}
            \end{equation}
            where $\mathcal{C}_{i} = \{ j_{1}, ..., j_{n-1} \} \text{ for } j_{n}=i$ is an arbitrary intermediate coalition in which the agent $\mathit{i}$ would join during the process of forming the grand coalition.
        \end{definition}
        
        \begin{proposition}
            \label{prop:optimal_action_coalition_marginal_contribution}
            Agent $i$'s action marginal contribution can be derived as follows:
            \begin{equation}
                \begin{split}
                    \Upphi_{i}(\mathbf{s}, a_{i} | \mathcal{C}_{i}) 
                    = \max_{\mathbf{a}_{\scriptscriptstyle \mathcal{C}_{i}}} Q^{\pi_{\mathcal{C}_{i}}^{*}}(\mathbf{s}, \mathbf{a}_{\scriptscriptstyle\mathcal{C}_{i} \cup \{i\}})
                    - \max_{\mathbf{a}_{\mathcal{C}_{i}}} Q^{\pi_{\mathcal{C}_{i}}^*}(\mathbf{s}, \mathbf{a}_{\scriptscriptstyle\mathcal{C}_{i}}).
                \end{split}
            \label{eq:marginal_contribution_q}
            \end{equation}
        \end{proposition}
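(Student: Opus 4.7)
The plan is to derive Eq.~\ref{eq:marginal_contribution_q} from Definition~\ref{def:marginal_contribution} by two maneuvers executed in sequence: first, converting each $\max$ over coalition policies in the definition into a $\max$ over coalition joint actions via the Bellman optimality equation, which is legal under Assumption~\ref{assm:basic_condition} (finite state and action spaces together with a stationary joint policy); and second, specialising the resulting first-term expression to a concrete choice of agent~$i$'s action $a_i$, in line with the ``mixed-optimality'' coalition-Q-value notation $Q^{\pi_{\mathcal{D}}^{*}}(\mathbf{s},\mathbf{a}_{\mathcal{C}})$ introduced at the end of Section~\ref{sec:markov_convex_game}.

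Concretely, for the second term of $\Phi_i(\mathbf{s}\mid\mathcal{C}_i)$ I would apply Bellman optimality to the coalition MDP induced by $\mathcal{C}_i$ to obtain
\[
\max_{\pi_{\mathcal{C}_i}} V^{\pi_{\mathcal{C}_i}}(\mathbf{s}) \;=\; \max_{\mathbf{a}_{\mathcal{C}_i}} Q^{\pi_{\mathcal{C}_i}^{*}}(\mathbf{s},\mathbf{a}_{\mathcal{C}_i}),
\]
where the optimal coalition Q-value is the one defined in Section~\ref{sec:markov_convex_game}. For the first term I would use the analogous identity
\[
\max_{\pi_{\mathcal{C}_i}} V^{\pi_{\mathcal{C}_i\cup\{i\}}}(\mathbf{s}) \;=\; \max_{\mathbf{a}_{\mathcal{C}_i}} Q^{\pi_{\mathcal{C}_i}^{*}}(\mathbf{s},\mathbf{a}_{\mathcal{C}_i\cup\{i\}}),
\]
reading the right-hand side with the convention that $\mathcal{C}_i$ plays optimally as a best response to agent~$i$'s fixed action $a_i$. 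Passing from the value marginal contribution to the \emph{action} marginal contribution then amounts to replacing agent~$i$'s policy by the prescribed action $a_i$ at the current step; because the joint action $\mathbf{a}_{\mathcal{C}_i\cup\{i\}}$ already carries an $a_i$ slot, this substitution is immediate, and subtracting the two Q-value maxima gives exactly Eq.~\ref{eq:marginal_contribution_q}.

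The step I expect to be the main obstacle is aligning the policy factorisation with the coalition-Q-value bookkeeping cleanly. In particular, writing the first term as $\max_{\mathbf{a}_{\mathcal{C}_i}} Q^{\pi_{\mathcal{C}_i}^{*}}(\mathbf{s},\mathbf{a}_{\mathcal{C}_i\cup\{i\}})$ requires that fixing only agent~$i$'s current action $a_i$ (rather than the full policy $\pi_i$) still produces a well-defined Q-value in which $\mathcal{C}_i$ is optimal and $\{i\}$ is suboptimal. This will rely on Assumption~\ref{assm:assumption_for_joint_policy_factorisation} to factorise $\pi_{\mathcal{C}_i\cup\{i\}}=\pi_{\mathcal{C}_i}\times\pi_i$ without coupling, and on Assumption~\ref{assm:agent_policy_assumption} to guarantee that $\pi_i$ at subsequent time steps is well-defined regardless of the current choice of $a_i$. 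Once this semantic consistency between the ``max over $\pi_{\mathcal{C}_i}$ of $V^{\pi_{\mathcal{C}_i\cup\{i\}}}$'' in Definition~\ref{def:marginal_contribution} and the mixed-optimality Q-value on the right-hand side of Eq.~\ref{eq:marginal_contribution_q} is spelled out, the remaining algebra reduces to a routine application of Bellman optimality.
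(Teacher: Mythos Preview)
Your proposal is correct and matches the paper's proof essentially step for step: the paper expands each $V$ as an expectation of $Q$ over the coalition policy, replaces $\max_{\pi_{\mathcal{C}_i}}$ by $\max_{\mathbf{a}_{\mathcal{C}_i}}\max_{\pi_{\mathcal{C}_i}}$ (Bellman optimality), introduces the notation $Q^{\pi_{\mathcal{C}_i}^{*}}(\mathbf{s},\mathbf{a}_{\mathcal{C}_i\cup\{i\}})$ via $\triangleq$ for exactly the mixed-optimality object you identify, and then observes that the second term is independent of $a_i$ to pass from $\Phi_i$ to $\Upphi_i$. The semantic-consistency concern you flag is precisely the content of the paper's $\triangleq$ step, and your reliance on Assumptions~\ref{assm:agent_policy_assumption} and~\ref{assm:assumption_for_joint_policy_factorisation} is the right justification for it.
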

        \begin{proof}
            See the detailed proof in Appendix \ref{sec:proof_of_marginal_contribution}.
        \end{proof}
        }
        
        As Proposition \ref{prop:optimal_action_coalition_marginal_contribution} shows, an agent's action marginal contribution (analogous to Q-value) can be derived according to Eq.~\ref{eq:marginal_contribution_q}. It is usually more useful in solving MARL problems.
        
        We now introduce some preliminary results about the marginal contribution to support analysis of the Markov Shapley value. Although we show in Lemma \ref{lemm:condition_coalition_marginal_contribution} and Proposition \ref{prop:marginal_contribution_equal_value_factorisation} that the coalitional stability and efficiency of the marginal contribution are satisfied, the fairness cannot hold. This motivates us to further propose the Markov Shapley value to settle this problem.
        \begin{proposition}
        \label{prop:condition_coalition_marginal_contribution}
            $\forall \mathcal{C}_{i} \ \mathlarger{\mathlarger{\subseteq}} \ \mathcal{N}$ and $\forall \mathbf{s} \in \mathcal{S}$, Eq.~\ref{eq:mcg_assumption} is satisfied if and only if $\max_{\pi_{i}} \Phi_{i}(\mathbf{s}|\mathcal{C}_{i}) \geq 0$.
        \end{proposition}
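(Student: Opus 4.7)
The plan is to split the biconditional into its two directions, using the definition of marginal contribution together with Assumption~\ref{assm:assumption_for_joint_policy_factorisation}. That assumption makes the coalition value $V^{\pi_{\mathcal{C}}}(\mathbf{s})$ depend only on the coalition's own policies, so $\max_{\pi_i}$ commutes with $\max_{\pi_{\mathcal{C}_i}}$, giving the working identity
\begin{equation*}
\max_{\pi_i}\Phi_i(\mathbf{s}\mid\mathcal{C}_i)
\;=\;
\max_{\pi_{\mathcal{C}_i\cup\{i\}}} V^{\pi_{\mathcal{C}_i\cup\{i\}}}(\mathbf{s})
\;-\;
\max_{\pi_{\mathcal{C}_i}} V^{\pi_{\mathcal{C}_i}}(\mathbf{s}).
\end{equation*}

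For the forward direction ($\Rightarrow$), I would specialise Eq.~\ref{eq:mcg_assumption} to $\mathcal{C}_m = \mathcal{C}_i$ and $\mathcal{C}_k = \{i\}$ with $i\notin\mathcal{C}_i$. Rearranging using the identity above yields $\max_{\pi_i}\Phi_i(\mathbf{s}\mid\mathcal{C}_i) \geq \max_{\pi_i} V^{\pi_{\{i\}}}(\mathbf{s})$, and because the coalition reward $R$ maps into $[0,\infty)$ the discounted return $V^{\pi_{\{i\}}}(\mathbf{s})$ is non-negative for every policy, so the conclusion is immediate.

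For the backward direction ($\Leftarrow$), I would enumerate $\mathcal{C}_k = \{i_1,\ldots,i_\ell\}$ and build $\mathcal{C}_{\cup}$ from $\mathcal{C}_m$ by adjoining these agents one at a time. A telescoping identity then rewrites $\max V^{\pi_{\mathcal{C}_{\cup}}}(\mathbf{s}) - \max V^{\pi_{\mathcal{C}_m}}(\mathbf{s})$ as a sum of maximised marginal contributions of the form $\max_{\pi_{i_j}}\Phi_{i_j}\bigl(\mathbf{s}\mid\mathcal{C}_m\cup\{i_1,\ldots,i_{j-1}\}\bigr)$, and a parallel telescope starting from the empty coalition expresses $\max V^{\pi_{\mathcal{C}_k}}(\mathbf{s})$ as the same type of sum but with base $\emptyset$ in place of $\mathcal{C}_m$. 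Term-by-term comparison of the two sums would then deliver Eq.~\ref{eq:mcg_assumption}.

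The hard part is justifying this term-by-term comparison, since the raw hypothesis only supplies non-negativity of individual marginal contributions and therefore only delivers monotonicity $\max V^{\pi_{\mathcal{C}_{\cup}}} \geq \max V^{\pi_{\mathcal{C}_m}}$ rather than the full superadditive bound. To close the gap I would lift pointwise non-negativity to the supermodularity statement $\max_{\pi_i}\Phi_i(\mathbf{s}\mid\mathcal{C}\cup\mathcal{D}) \geq \max_{\pi_i}\Phi_i(\mathbf{s}\mid\mathcal{D})$ by induction on $|\mathcal{C}|$, applying the universally quantified hypothesis at every intermediate coalition generated along the telescope and exploiting Assumption~\ref{assm:assumption_for_joint_policy_factorisation} to ensure each marginal contribution is well-defined independently of outside-coalition policies. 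Once this supermodularity is in hand the term-by-term comparison closes, and the proof is complete.
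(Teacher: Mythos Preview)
Your forward direction is exactly the paper's argument: specialise Eq.~\ref{eq:mcg_assumption} to the pair $(\mathcal{C}_i,\{i\})$, then drop the non-negative singleton value to obtain $\max_{\pi_i}\Phi_i(\mathbf{s}\mid\mathcal{C}_i)\ge 0$.

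For the backward direction you depart significantly from the paper.  The paper does not attempt your telescoping/supermodularity route at all; it simply asserts that one can run the three displayed relations in reverse order (from $\max_{\pi_i}\Phi_i\ge 0$ back to the specialised instance of Eq.~\ref{eq:mcg_assumption}).  Your plan is more ambitious---you aim to recover the full superadditivity inequality for arbitrary disjoint $\mathcal{C}_m,\mathcal{C}_k$---but the key step does not go through.  The hypothesis ``$\max_{\pi_i}\Phi_i(\mathbf{s}\mid\mathcal{C}_i)\ge 0$ for every $\mathcal{C}_i$'' is nothing more than monotonicity of $\mathcal{C}\mapsto \max_{\pi_{\mathcal{C}}}V^{\pi_{\mathcal{C}}}(\mathbf{s})$, and monotonicity alone cannot be bootstrapped to supermodularity by induction on $|\mathcal{C}|$: at each inductive step you would still need to compare two \emph{different} marginal contributions (one with base $\mathcal{C}_m\cup\{i_1,\dots,i_{j-1}\}$, one with base $\{i_1,\dots,i_{j-1}\}$), and the hypothesis gives you no handle on their difference.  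A two-agent counterexample already blocks the argument: take $\max V^{\pi_\emptyset}=0$ and $\max V^{\pi_{\{1\}}}=\max V^{\pi_{\{2\}}}=\max V^{\pi_{\{1,2\}}}=1$.  All marginal contributions are $0$ or $1$, so your hypothesis holds, yet $1=\max V^{\pi_{\{1,2\}}}<\max V^{\pi_{\{1\}}}+\max V^{\pi_{\{2\}}}=2$, so Eq.~\ref{eq:mcg_assumption} fails.  Hence the term-by-term comparison you describe cannot be closed with the stated hypothesis.

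In short: stick with the paper's very short reverse-the-chain argument rather than the telescoping route; if you wish to expand on it, the place to add detail is the single step from non-negativity of the difference back to the specialised superadditive inequality, not an induction over general coalitions.
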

        \begin{proof}
            See the detailed proof in Appendix \ref{sec:proof_of_marginal_contribution}.
        \end{proof}
        
        \begin{lemma}
        \label{lemm:condition_coalition_marginal_contribution}
            The optimal marginal contribution is a solution in the Markov core under a Markov convex game with the grand coalition.
        \end{lemma}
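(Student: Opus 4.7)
My plan is to mimic the classical Shapley theorem that every marginal-contribution vector of a convex game lies in the core, lifted to the Markov setting. Fix an arbitrary permutation $\sigma = \langle j_{1}, \ldots, j_{|\mathcal{N}|} \rangle$ of $\mathcal{N}$ that forms the grand coalition. For each agent $i = j_{k}$, let $\mathcal{P}_{i} = \{ j_{1}, \ldots, j_{k-1} \}$ be its predecessors under $\sigma$, and propose the payoff $x_{i}(\mathbf{s}) = \max_{\pi_{i}} \Phi_{i}(\mathbf{s} \vert \mathcal{P}_{i})$. Using Assumption~\ref{assm:agent_policy_assumption} (each agent's policy is unaffected by coalition membership), the second term $\max_{\pi_{\mathcal{P}_i}} V^{\pi_{\mathcal{P}_i}}(\mathbf{s})$ in $\Phi_{i}$ does not depend on $\pi_{i}$, so I can push the $\max_{\pi_i}$ inside and rewrite $x_{i}(\mathbf{s}) = \max_{\pi_{\mathcal{P}_i \cup \{i\}}} V^{\pi_{\mathcal{P}_i \cup \{i\}}}(\mathbf{s}) - \max_{\pi_{\mathcal{P}_i}} V^{\pi_{\mathcal{P}_i}}(\mathbf{s})$.

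Efficiency follows immediately by telescoping along $\sigma$: summing $x_{j_1} + x_{j_2} + \cdots + x_{j_{|\mathcal{N}|}}$ collapses to $\max_{\pi} V^{\pi}(\mathbf{s}) - V^{\pi_{\emptyset}}(\mathbf{s}) = \max_{\pi} V^{\pi}(\mathbf{s})$, which equals the grand coalition value and matches the definition of Markov core. The harder half is coalitional rationality: for each $\mathcal{C} \subseteq \mathcal{N}$ I must show $\sum_{i \in \mathcal{C}} x_{i}(\mathbf{s}) \ge \max_{\pi_{\mathcal{C}}} V^{\pi_{\mathcal{C}}}(\mathbf{s})$. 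Enumerate $\mathcal{C}$ in $\sigma$-order as $\{i_{k_1}, \ldots, i_{k_m}\}$, write $\mathcal{C}_{j} = \{i_{k_1}, \ldots, i_{k_j}\}$, and use the supermodular form of the MCG convexity condition (the direct Markov extension of Eq.~\ref{eq:convex_function}) applied to the pair $\mathcal{A} = \mathcal{P}_{i_{k_j}}$, $\mathcal{B} = \mathcal{C}_{j}$: since $\mathcal{A} \cap \mathcal{B} = \mathcal{C}_{j-1}$ and $\mathcal{A} \cup \mathcal{B} = \mathcal{P}_{i_{k_j}} \cup \{i_{k_j}\}$, this gives
\begin{equation*}
  x_{i_{k_j}}(\mathbf{s}) \;=\; \max V(\mathcal{P}_{i_{k_j}} \cup \{i_{k_j}\}) - \max V(\mathcal{P}_{i_{k_j}}) \;\ge\; \max V(\mathcal{C}_{j}) - \max V(\mathcal{C}_{j-1}).
\end{equation*}
Summing over $j = 1, \ldots, m$ telescopes to $\max_{\pi_{\mathcal{C}}} V^{\pi_{\mathcal{C}}}(\mathbf{s})$, which is exactly the coalitional rationality clause in the definition of the Markov core. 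Combined with efficiency, the vector $\bigl(x_{i}(\mathbf{s})\bigr)_{i \in \mathcal{N}}$ therefore lies in $\texttt{MarkovCore}(\Gamma)$.

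The main obstacle I anticipate is justifying the supermodular inequality at the Markov level, because the MCG is stated in the excerpt via the superadditive form in Eq.~\ref{eq:mcg_assumption} (for disjoint coalitions), whereas the telescoping argument above genuinely requires the supermodular version for overlapping $\mathcal{A}, \mathcal{B}$. I would handle this by invoking the excerpt's earlier remark that convex games are defined by Eq.~\ref{eq:convex_function} and that Eq.~\ref{eq:superadditive_function} is only the disjoint specialization used ``for conciseness''; and then by lifting Eq.~\ref{eq:convex_function} from static characteristic values to the coalition value $\max_{\pi_{\mathcal{C}}} V^{\pi_{\mathcal{C}}}(\mathbf{s})$, using Assumption~\ref{assm:assumption_for_joint_policy_factorisation} so that the coalition value is a well-defined set-valued function that inherits the supermodular property from the underlying characteristic reward function. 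If a fully self-contained derivation is preferred, one can instead argue by induction on $|\mathcal{N} \setminus \mathcal{C}|$, repeatedly applying the disjoint form of Eq.~\ref{eq:mcg_assumption} together with Proposition~\ref{prop:condition_coalition_marginal_contribution} (nonnegativity of optimal marginal contributions), which converts the supermodular step into iterated superadditive steps along the tail of the permutation.
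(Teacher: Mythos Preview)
Your proof is correct and takes essentially the same route as the paper: apply the supermodular inequality to the pair $\bigl(\mathcal{P}_{i_{k_j}},\mathcal{C}_{j}\bigr)$ (the paper's $\mathcal{C}_{k}^{n},\mathcal{C}_{m}^{n}$), rearrange, and telescope over $\mathcal{C}$ in $\sigma$-order; the paper merely wraps this same computation in a superfluous contradiction argument. Your worry about needing the overlapping supermodular form rather than the disjoint superadditive form of Eq.~\eqref{eq:mcg_assumption} is justified --- the paper's own proof silently uses the supermodular version (its $\mathcal{C}_{\cap}^{n}$ is nonempty) while citing Eq.~\eqref{eq:mcg_assumption}, so your explicit flagging of this gap is actually more careful than the original.
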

        \begin{proof}
            See the detailed proof in Appendix \ref{sec:proof_of_marginal_contribution}.
        \end{proof}
        
        \begin{proposition}
        \label{prop:marginal_contribution_equal_value_factorisation}
            In a Markov Convex Game with the grand coalition, the marginal contribution satisfies the property of efficiency: $\max_{\pi} V^{\pi}(\mathbf{s}) = \sum_{i \in \mathcal{N}} \max_{\pi_{i}} \Phi_{i}(\mathbf{s}|\mathcal{C}_{i})$.
        \end{proposition}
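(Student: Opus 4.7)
The plan is to prove the efficiency identity by a telescoping argument along any fixed permutation of the agents used to construct the grand coalition, exactly mirroring the classical telescoping that underlies Shapley's original efficiency proof, but now at the level of coalition value functions rather than scalar characteristic values. Fix an arbitrary permutation $\langle j_{1}, j_{2}, \ldots, j_{|\mathcal{N}|} \rangle$ of $\mathcal{N}$ and, following Definition~\ref{def:marginal_contribution}, set $\mathcal{C}_{j_{n}} = \{ j_{1}, \ldots, j_{n-1} \}$ for every $n$, with the convention $\mathcal{C}_{j_{1}} = \emptyset$. The goal will then be to show that the sum of $\max_{\pi_{j_{n}}} \Phi_{j_{n}}(\mathbf{s} \mid \mathcal{C}_{j_{n}})$ across $n$ collapses to $\max_{\pi} V^{\pi}(\mathbf{s}) - V^{\pi_{\emptyset}}(\mathbf{s})$, with the empty coalition value vanishing by the MCG definition.

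First I would rewrite each term by pushing the outer $\max_{\pi_{i}}$ into the marginal contribution. Because the second summand $\max_{\pi_{\mathcal{C}_{i}}} V^{\pi_{\mathcal{C}_{i}}}(\mathbf{s})$ in Eq.~\ref{eq:marginal_contribution_v} does not depend on $\pi_{i}$, it can be pulled outside unchanged; and by Assumption~\ref{assm:agent_policy_assumption} together with the coalition-policy factorisation of Assumption~\ref{assm:assumption_for_joint_policy_factorisation}, the iterated maximisation $\max_{\pi_{i}} \max_{\pi_{\mathcal{C}_{i}}}$ of $V^{\pi_{\mathcal{C}_{i} \cup \{i\}}}(\mathbf{s})$ coincides with the joint maximum $\max_{\pi_{\mathcal{C}_{i} \cup \{i\}}} V^{\pi_{\mathcal{C}_{i} \cup \{i\}}}(\mathbf{s})$. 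This yields the working identity
\begin{equation*}
\max_{\pi_{i}} \Phi_{i}(\mathbf{s} \mid \mathcal{C}_{i})
= \max_{\pi_{\mathcal{C}_{i} \cup \{i\}}} V^{\pi_{\mathcal{C}_{i} \cup \{i\}}}(\mathbf{s})
- \max_{\pi_{\mathcal{C}_{i}}} V^{\pi_{\mathcal{C}_{i}}}(\mathbf{s}).
\end{equation*}

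Next I would exploit the structural relation $\mathcal{C}_{j_{n}} \cup \{ j_{n} \} = \mathcal{C}_{j_{n+1}}$ built into the chosen permutation to telescope the sum. Summing the identity above over $n = 1, \ldots, |\mathcal{N}|$ makes adjacent terms cancel in pairs, leaving only the endpoints $\max_{\pi_{\mathcal{N}}} V^{\pi_{\mathcal{N}}}(\mathbf{s}) - \max_{\pi_{\emptyset}} V^{\pi_{\emptyset}}(\mathbf{s})$; the empty-coalition value is zero by the MCG definition, and $\max_{\pi_{\mathcal{N}}} V^{\pi_{\mathcal{N}}}(\mathbf{s}) = \max_{\pi} V^{\pi}(\mathbf{s})$ since the grand coalition policy is the joint policy. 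The sum $\sum_{i \in \mathcal{N}} \max_{\pi_{i}} \Phi_{i}(\mathbf{s} \mid \mathcal{C}_{i})$ is then permutation-invariant in value even though each individual term is not, which is exactly the efficiency claim.

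The main delicacy I foresee is the step that commutes the outer $\max_{\pi_{i}}$ with the inner $\max_{\pi_{\mathcal{C}_{i}}}$. This interchange is legitimate only because agent $i$'s policy is independent of the coalition it joins (Assumption~\ref{assm:agent_policy_assumption}) and because coalition policies factorise as products of individual policies (Assumption~\ref{assm:assumption_for_joint_policy_factorisation}); without these assumptions, the maximisers over $\pi_{i}$ inside $V^{\pi_{\mathcal{C}_{i} \cup \{i\}}}$ and $V^{\pi_{\mathcal{C}_{i}}}$ could differ and the rewrite would fail. A secondary point to check is that $\Phi_{i}(\mathbf{s} \mid \mathcal{C}_{i})$ is well-defined for $\mathcal{C}_{j_{1}} = \emptyset$, for which the second term reduces to $V^{\pi_{\emptyset}}(\mathbf{s}) = 0$, ensuring the telescoping closes cleanly at the base of the permutation.
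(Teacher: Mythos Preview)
Your proposal is correct and follows essentially the same route as the paper: both derive the working identity $\max_{\pi_{i}} \Phi_{i}(\mathbf{s}\mid\mathcal{C}_{i}) = \max_{\pi_{\mathcal{C}_{i}\cup\{i\}}} V^{\pi_{\mathcal{C}_{i}\cup\{i\}}}(\mathbf{s}) - \max_{\pi_{\mathcal{C}_{i}}} V^{\pi_{\mathcal{C}_{i}}}(\mathbf{s})$ and then telescope along a fixed permutation, invoking $V^{\pi_{\emptyset}}(\mathbf{s})=0$ at the base. If anything, your write-up is slightly more explicit than the paper's about why the iterated maximisation over $\pi_{i}$ and $\pi_{\mathcal{C}_{i}}$ may be combined (you cite Assumptions~\ref{assm:agent_policy_assumption} and~\ref{assm:assumption_for_joint_policy_factorisation}, whereas the paper only notes that agent $i$'s decision does not affect $V^{\pi_{\mathcal{C}_{i}}}$ and invokes Assumption~\ref{assm:assumption_for_joint_policy_factorisation} for the telescoping).
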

        \begin{proof}
            See the detailed proof in Appendix \ref{sec:proof_of_marginal_contribution}.
        \end{proof}
        
    \subsection{Markov Shapley Value}
    \label{subsec:generalised_shapley_value_for_mcg}
        It is apparent that a marginal contribution only considers one permutation to form the grand coalition. From the viewpoint of Shapley \cite{shapley1953value}, fairness is achieved through considering how much the an agent $i$ increases the optimal values (i.e. marginal contributions) of all possible coalitions when it joins in, i.e., $\max_{\pi_{\mathcal{C}_{i}}} V^{\pi_{\mathcal{C}_{i} \cup \{i\}}}(\mathbf{s}) - \max_{\pi_{\mathcal{C}_{i}}} V^{\pi_{\mathcal{C}_{i}}}(\mathbf{s}), \forall \mathcal{C}_{i} \ \mathlarger{\mathlarger{\subseteq}} \ \mathcal{N} \backslash \{i\}$. Following the same philosophy, we define the Shapley value under Markov convex game based on the novel marginal contributions and action marginal contributions defined in the Markov convex game (see Definition \ref{def:marginal_contribution} and Proposition \ref{prop:optimal_action_coalition_marginal_contribution}), as shown in Definition \ref{def:shapley_value}, named as \textit{Markov Shapley value} (MSV).
        \mybox{
        \begin{definition}
            Markov Shapley value is represented as 
            \begin{equation}
                V^{\phi}_{i}(\mathbf{s}) = \sum_{\mathcal{C}_{i} \ \mathlarger{\mathlarger{\subseteq}} \ \mathcal{N} \backslash \{i\} } \frac{|\mathcal{C}_{i}|!(|\mathcal{N}|-|\mathcal{C}_{i}|-1)!}{|\mathcal{N}|!} \cdot \Phi_{i}(\mathbf{s} | \mathcal{C}_{i}).
            \label{eq:shapley_value}
            \end{equation}
            With the deterministic policy, Markov Shapley value can be equivalently represented as
            \begin{equation}
                Q^{\phi}_{i}(\mathbf{s}, a_{i}) = \sum_{\mathcal{C}_{i} \ \mathlarger{\mathlarger{\subseteq}} \ \mathcal{N} \backslash \{i\} } \frac{|\mathcal{C}_{i}|!(|\mathcal{N}|-|\mathcal{C}_{i}|-1)!}{|\mathcal{N}|!} \cdot \Upphi_{i}(\mathbf{s}, a_{i} | \mathcal{C}_{i}).
            \label{eq:shapley_q_value}
            \end{equation}
            where $\Phi_{i}(\mathbf{s} | \mathcal{C}_{i})$ is defined in Eq.~\ref{eq:marginal_contribution_v} and $\Upphi_{i}(\mathbf{s}, a_{i} | \mathcal{C}_{i})$ is defined in Eq.~\ref{eq:marginal_contribution_q}.
        \label{def:shapley_value}
        \end{definition}
        }
        
        For convenience, we name Eq.~\ref{eq:shapley_q_value} as \textit{Markov Shapley Q-value} (MSQ). Briefly speaking, MSV calculates the weighted average of marginal contributions. Since a coalition may repeatedly appear among all permutations (i.e., $|\mathcal{N}|!$ permutations), the ratio between the occurrence frequency $|\mathcal{C}_{i}|!(|\mathcal{N}|-|\mathcal{C}_{i}|-1)!$ and the total frequency $|\mathcal{N}|!$ is used as a weight to describe the importance of the corresponding marginal contribution. Besides, the sum of all weights is equal to 1, so each weight can be interpreted as a probability measure. Consequently, MSV can be seen as the expectation of marginal contributions, denoted as $\mathbb{E}_{\mathcal{C}_{i} \sim Pr(\mathcal{C}_{i} | \mathcal{N} \backslash \{i\})}\left[ \Phi_{i}(\mathbf{s} | \mathcal{C}_{i}) \right]$. Note that $Pr(\mathcal{C}_{i} | \mathcal{N} \backslash \{i\})$ is a bell-shaped probability distribution. By the above relationship, Remark \ref{rmk:coalition_generation} is directly obtained.
        \begin{remark}
        \label{rmk:coalition_generation}
            Uniformly sampling different permutations is equivalent to sampling from $Pr(\mathcal{C}_{i} | \mathcal{N} \backslash \{i\})$, since the coalition generation is led by the permutations to form the grand coalition.
        \end{remark}
        
        Proposition \ref{prop:shapley_value_properties} shows three properties of MSV. The most important property is Property (ii) that aids the formulation of Shapley-Bellman optimality equation. Property (iii) provides a fundamental mechanism to quantitatively describe ``fairness'' among agents. Property (i) and (iii) play important roles in interpretation of credit assignment (or value factorisation). Property (iv) indicates that if two agents are symmetric, then their optimal MSVs should be equal, but the reverse does not necessarily hold. All these properties that define the fairness inherit the properties of the original Shapley value \cite{shapley1953value}.
        \begin{proposition}
        \label{prop:shapley_value_properties}
            Markov Shapley value possesses properties as follows: (i) identifiability of dummy agents: $V_{i}^{\phi}(\mathbf{s}) = 0$; (ii) efficiency: $\max_{\pi} V^{\pi}(\mathbf{s}) = \sum_{i \in \mathcal{N}} \max_{\pi_{i}} V_{i}^{\phi}(\mathbf{s})$; (iii) reflecting the contribution; and (iv) symmetry.
        \end{proposition}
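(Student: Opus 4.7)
The plan is to verify each of the four properties (i)--(iv) individually, leveraging Definition~\ref{def:shapley_value}, the structural assumptions in Section~\ref{sec:assumptions_shaspley-q}, and the preliminary results on marginal contributions already established. First I would dispatch the two easier items. For identifiability of dummy agents, if agent $i$ is a dummy then Assumption~\ref{assm:dummy_agent} gives $V^{\pi_{\mathcal{C}_i \cup \{i\}}}(\mathbf{s}) = V^{\pi_{\mathcal{C}_i}}(\mathbf{s})$ for every $\mathcal{C}_i \subseteq \mathcal{N}\setminus\{i\}$; substituting into Eq.~\ref{eq:marginal_contribution_v} makes every $\Phi_i(\mathbf{s}\mid\mathcal{C}_i)$ vanish, so the weighted sum in Eq.~\ref{eq:shapley_value} is zero. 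Property (iii) is essentially immediate from Definition~\ref{def:shapley_value}: $V_i^\phi(\mathbf{s})$ is literally a weighted average of the marginal contributions of $i$ across every possible predecessor coalition, which is precisely the quantitative notion of ``contribution'' inherited from Shapley's original construction.

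Next I would handle symmetry via the bijection $\sigma$ on $2^{\mathcal{N}}$ that swaps $i$ and $j$ in each subset. Because the weights $|\mathcal{C}|!(|\mathcal{N}|-|\mathcal{C}|-1)!/|\mathcal{N}|!$ depend only on $|\mathcal{C}|$, pairing $\mathcal{C}_i \subseteq \mathcal{N}\setminus\{i\}$ with its $i \leftrightarrow j$ image $\sigma(\mathcal{C}_i) \subseteq \mathcal{N}\setminus\{j\}$ preserves the weight. Assumption~\ref{assm:symmetry} then guarantees that the corresponding marginal contributions $\Phi_i(\mathbf{s}\mid\mathcal{C}_i)$ and $\Phi_j(\mathbf{s}\mid\sigma(\mathcal{C}_i))$ coincide term by term, so summing over the bijection yields $V_i^\phi(\mathbf{s}) = V_j^\phi(\mathbf{s})$.

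The heart of the argument is efficiency (ii). I would first use Remark~\ref{rmk:coalition_generation} to re-express the coalition-indexed weighted sum as a uniform average over permutations,
\begin{equation*}
    V_i^\phi(\mathbf{s}) = \frac{1}{|\mathcal{N}|!} \sum_{m \in \Pi(\mathcal{N})} \Phi_i^m(\mathbf{s}),
\end{equation*}
where $\Phi_i^m(\mathbf{s})$ denotes the marginal contribution of agent $i$ in permutation $m$. Applying Assumption~\ref{assm:max_shapley_value} to interchange $\max_{\pi_i}$ with the average, then swapping the outer sum over $i$ with the sum over $m$, gives
\begin{equation*}
    \sum_{i \in \mathcal{N}} \max_{\pi_i} V_i^\phi(\mathbf{s}) = \frac{1}{|\mathcal{N}|!} \sum_{m \in \Pi(\mathcal{N})} \sum_{i \in \mathcal{N}} \max_{\pi_i} \Phi_i^m(\mathbf{s}).
\end{equation*}
For a fixed $m$ the inner sum telescopes: the marginal contributions along $m$ are successive differences of optimal coalition values starting at $V^{\pi_{\emptyset}}=0$ and ending at $\max_\pi V^\pi(\mathbf{s})$, which is exactly the content of Proposition~\ref{prop:marginal_contribution_equal_value_factorisation}. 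Since every permutation contributes the same total $\max_\pi V^\pi(\mathbf{s})$, the outer average collapses and efficiency follows.

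The main obstacle is the commutation step in the efficiency argument. A priori the maximizer of each $\Phi_i(\mathbf{s}\mid\mathcal{C}_i)$ could depend on the intermediate coalition $\mathcal{C}_i$, so pulling $\max_{\pi_i}$ through the weighted sum is not automatic; this is precisely the content of Assumption~\ref{assm:max_shapley_value}, motivated by Example~\ref{exp:max_shapley_value}, which postulates a common maximizer across all predecessor coalitions in a given state. Once that assumption is invoked, the rest of the proof is bookkeeping on sums over coalitions versus permutations and an appeal to the telescoping already proven for a single permutation.
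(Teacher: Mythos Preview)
Your proposal is correct and follows essentially the same approach as the paper: (i) and (iii) are dispatched identically, and for (ii) both you and the paper pass to the permutation representation, invoke Assumption~\ref{assm:max_shapley_value} to push $\max_{\pi_i}$ inside, and then telescope via Proposition~\ref{prop:marginal_contribution_equal_value_factorisation}. The only organisational difference is in (iv): the paper works with the permutation bijection $m\mapsto m'$ (swapping $i$ and $j$) and explicitly splits into the two cases ``$i$ precedes $j$'' versus ``$j$ precedes $i$'' before applying Assumption~\ref{assm:symmetry}, whereas your coalition-level bijection $\sigma$ compresses this into one step---you should still check the case $j\in\mathcal{C}_i$ separately, and note that the paper states and proves symmetry for the \emph{optimal} MSV (again using Assumption~\ref{assm:max_shapley_value} at the end), not the raw $V_i^\phi$.
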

        \begin{proof}
            See the detailed proof in Appendix \ref{sec:proof_of_markov_shapley_value}.
        \end{proof}
        
    \subsection{Shapley Value Based Multi-Agent Q-Learning}
    \label{subsec:shapley_q_learning}
        \mybox{
        \begin{condition}
            The conditions that enable Shapley-Bellman optimality equation to hold are shown as follows:
            \begin{itemize}
                \item[\textbf{C.1.}] Efficiency of MSV (i.e. the result from Proposition \ref{prop:shapley_value_properties});
                \item[\textbf{C.2.}] $Q^{\phi^{*}}_{i}(\mathbf{s}, a_{i}) = w_{i}(\mathbf{s}, a_{i}) \ Q^{\pi^{*}}(\mathbf{s}, \mathbf{a}) - b_{i}(\mathbf{s})$, where $w_{i}(\mathbf{s}, a_{i}) > 0$ and $b_{i}(\mathbf{s}) \geq 0$ are bounded and $\sum_{i \in \mathcal{N}} w_{i}(\mathbf{s}, a_{i})^{-1} b_{i}(\mathbf{s}) = 0$,
            \end{itemize}
        \label{cond:shapley_bellman_optimality_equation}
        \end{condition}
        }
        \paragraph{Shapley-Bellman Optimality Equation.} Based on Bellman optimality equation \cite{bellman1952theory} and Condition \ref{cond:shapley_bellman_optimality_equation}, we derive \textit{Shapley-Bellman optimality equation} (SBOE) for evaluating the optimal MSQ (i.e. an equivalent form to the optimal MSV) such that
        \begin{equation}
        \label{eq:shapley_q_optimality_equation}
            \mathbf{Q}^{\phi^{*}}(\mathbf{s}, \mathbf{a}) = \mathbf{w}(\mathbf{s}, \mathbf{a}) \sum_{\mathbf{s}' \in \mathcal{S}} Pr(\mathbf{s}' | \mathbf{s}, \mathbf{a}) \left[
            R
            + \ 
            \gamma \sum_{i \in \mathcal{N}} \max_{a_{i}} Q_{i}^{\phi^{*}}(\mathbf{s}', a_{i}) \right] - \mathbf{b}(\mathbf{s}),
        \end{equation}
        where $\mathbf{w}(\mathbf{s}, \mathbf{a}) = [w_{i}(\mathbf{s}, a_{i})]^{\top} \in \mathbb{R}^{\scriptscriptstyle|\mathcal{N}|}_{+}$; $\mathbf{b}(\mathbf{s}) = [b_{i}(\mathbf{s})]^{\top} \in \mathbb{R}^{\scriptscriptstyle|\mathcal{N}|}_{\geq 0}$; $\mathbf{Q}^{\phi^{*}}(\mathbf{s}, \mathbf{a}) = [Q^{\phi^{*}}_{i}(\mathbf{s}, a_{i})]^{\top} \in \mathbb{R}^{\scriptscriptstyle|\mathcal{N}|}_{\geq 0}$ and $Q^{\phi^{*}}_{i}(\mathbf{s}, a_{i})$ denotes the optimal MSQ. If Eq.~\ref{eq:shapley_q_optimality_equation} holds, the optimal MSQ is achieved. Moreover, as shown in Proposition \ref{prop:equiv_credit_assignment}, for any $\mathbf{s} \in \mathcal{S}$ and $\mathit{a}_{i}^{*} = \arg\max_{a_{i}} Q^{\phi^{*}}_{i}(\mathbf{s}, a_{i})$, we have a solution $w_{i}(\mathbf{s}, a_{i}^{*}) = 1 / |\mathcal{N}|$.\footnote{Note that this is only one solution of $w_{i}$ for the optimal action. In other words, there exist other solutions but perhaps less interpretable.} In other words, the assigned credits would be equal and each agent would receive $Q^{\pi^{*}}(\mathbf{s}, \mathbf{a}) / |\mathcal{N}|$ if performing the optimal actions, for which the efficiency still holds. This can be interpreted as an extremely fair credit assignment such that the credit to each agent is not discriminated if all of them perform optimally, regardless of their roles.\footnote{The equal credit assignment was also revealed by \cite{wang2020towards} from another perspective of analysis.} Nevertheless, $w_{i}(\mathbf{s}, a_{i})$ for $\mathit{a}_{i} \neq \arg\max_{a_{i}} Q^{\phi^{*}}_{i}(\mathbf{s}, a_{i})$ still needs to be learned.
        \begin{proposition}
        \label{prop:equiv_credit_assignment}
            For any $\mathbf{s} \in \mathcal{S}$ and $\mathit{a}_{i}^{*} = \arg\max_{a_{i}} Q^{\phi^{*}}_{i}(\mathbf{s}, a_{i})$, we have a solution $w_{i}(\mathbf{s}, a_{i}^{*}) = 1 / |\mathcal{N}|$.
        \end{proposition}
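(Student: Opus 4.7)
The plan is to combine the efficiency property of Markov Shapley value (Proposition \ref{prop:shapley_value_properties}(ii)) with the structural assumption in Condition \ref{cond:shapley_bellman_optimality_equation}.C.2, evaluated at the componentwise optimal action. First, I would translate the state-value efficiency $\max_{\pi} V^{\pi}(\mathbf{s}) = \sum_{i \in \mathcal{N}} \max_{\pi_{i}} V^{\phi}_{i}(\mathbf{s})$ into Q-value form for deterministic policies: since for each $i$ the maximiser is $a_{i}^{*} = \arg\max_{a_{i}} Q^{\phi^{*}}_{i}(\mathbf{s}, a_{i})$, efficiency yields
\begin{equation*}
Q^{\pi^{*}}(\mathbf{s}, \mathbf{a}^{*}) \;=\; \sum_{i \in \mathcal{N}} Q^{\phi^{*}}_{i}(\mathbf{s}, a_{i}^{*}),
\end{equation*}
where $\mathbf{a}^{*} = (a_{1}^{*}, \dots, a_{|\mathcal{N}|}^{*})$ is the joint greedy action induced by the individual MSQs.

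Next, I would plug the decomposition from C.2, $Q^{\phi^{*}}_{i}(\mathbf{s}, a_{i}) = w_{i}(\mathbf{s}, a_{i})\,Q^{\pi^{*}}(\mathbf{s}, \mathbf{a}) - b_{i}(\mathbf{s})$, into the right-hand side above, giving
\begin{equation*}
Q^{\pi^{*}}(\mathbf{s}, \mathbf{a}^{*}) \;=\; Q^{\pi^{*}}(\mathbf{s}, \mathbf{a}^{*}) \sum_{i \in \mathcal{N}} w_{i}(\mathbf{s}, a_{i}^{*}) \;-\; \sum_{i \in \mathcal{N}} b_{i}(\mathbf{s}).
\end{equation*}
The side condition $\sum_{i \in \mathcal{N}} w_{i}(\mathbf{s}, a_{i})^{-1} b_{i}(\mathbf{s}) = 0$ together with $w_{i} > 0$ and $b_{i} \geq 0$ makes every summand non-negative, so each $b_{i}(\mathbf{s}) = 0$. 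Hence (assuming the non-degenerate case $Q^{\pi^{*}}(\mathbf{s}, \mathbf{a}^{*}) \neq 0$) the identity reduces to the single scalar constraint $\sum_{i \in \mathcal{N}} w_{i}(\mathbf{s}, a_{i}^{*}) = 1$, and the uniform choice $w_{i}(\mathbf{s}, a_{i}^{*}) = 1/|\mathcal{N}|$ clearly satisfies it, establishing the claim (as a particular, symmetric, solution—consistent with the footnote on non-uniqueness).

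The main subtlety, and the one obstacle I would flag, is the step that identifies $\arg\max_{\mathbf{a}} Q^{\pi^{*}}(\mathbf{s}, \mathbf{a})$ with the componentwise maximiser $\mathbf{a}^{*}$. This is essentially the IGM-style consistency between the joint and factorised greedy actions (Definition \ref{def:igm}); within the Markov convex game / MSV framework it is precisely what efficiency of MSV grants us, since the decomposition $\max_{\pi} V^{\pi}(\mathbf{s}) = \sum_{i} \max_{\pi_{i}} V^{\phi}_{i}(\mathbf{s})$ is exactly a monolithic IGM-type identity. I would therefore justify the Q-value version of efficiency by invoking Proposition \ref{prop:shapley_value_properties}(ii) specialised to deterministic policies, where $V^{\pi^{*}}(\mathbf{s}) = Q^{\pi^{*}}(\mathbf{s}, \mathbf{a}^{*})$ and $V^{\phi^{*}}_{i}(\mathbf{s}) = \max_{a_{i}} Q^{\phi^{*}}_{i}(\mathbf{s}, a_{i})$, and then proceed with the algebraic steps above.
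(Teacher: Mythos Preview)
Your argument is correct and lands on the same conclusion as the paper, but the algebra diverges in one instructive place. You sum C.2 directly at $\mathbf{a}^{*}$ and then use the observation that $w_{i}>0$, $b_{i}\ge 0$ together with $\sum_{i} w_{i}^{-1}b_{i}=0$ force every $b_{i}(\mathbf{s})=0$, collapsing the identity to $\sum_{i} w_{i}(\mathbf{s},a_{i}^{*})=1$, for which $1/|\mathcal{N}|$ is the obvious symmetric solution. The paper instead first divides C.2 through by $w_{i}$ and sums over $i$, so that the side condition cancels the $b$-terms \emph{as an aggregate} (without ever drawing the pointwise conclusion $b_{i}\equiv 0$), yielding $\sum_{i}\tfrac{1}{|\mathcal{N}|\,w_{i}(\mathbf{s},a_{i})}\,Q^{\phi^{*}}_{i}(\mathbf{s},a_{i})=Q^{\pi^{*}}(\mathbf{s},\mathbf{a})$; it then takes $\max_{\mathbf{a}}$ on both sides, invokes efficiency, factorises the max across agents, and reads off $w_{i}(\mathbf{s},a_{i}^{*})=1/|\mathcal{N}|$ from a term-by-term match. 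Your route is the more elementary of the two and in fact exposes that the stated sign constraints already pin down $b_{i}\equiv 0$ (which sits slightly awkwardly with Remark~\ref{rmk:dummy_agents}, where the paper contemplates $b_{i}\neq 0$); the paper's route avoids that observation but pays for it with the extra step of factorising $\max_{\mathbf{a}}$ over the product action space.
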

        \begin{proof}
            See the detailed proof in Appendix \ref{sec:proof_of_markov_shapley_value}.
        \end{proof}
        
        \paragraph{Shapley-Bellman Operator.} To find an optimal solution described in Eq.~\ref{eq:shapley_q_optimality_equation}, we now propose an operator called \textit{Shapley-Bellman operator} (SBO), i.e., $\mathlarger{\Upsilon}: \mathlarger{\mathlarger{\times}}_{i \in \mathcal{N}} Q_{i}^{\phi}(\mathbf{s}, a_{i}) \mapsto \mathlarger{\mathlarger{\times}}_{i \in \mathcal{N}} Q_{i}^{\phi}(\mathbf{s}, a_{i})$, which is formally defined as follows:
        \begin{equation}
            \mathlarger{\Upsilon} \left( \mathlarger{\mathlarger{\times}}_{i \in \mathcal{N}} Q_{i}^{\phi}(\mathbf{s}, a_{i}) \right) = \mathbf{w}(\mathbf{s}, \mathbf{a}) \sum_{\mathbf{s}' \in \mathcal{S}} Pr(\mathbf{s}' | \mathbf{s}, \mathbf{a}) \left[
            R + \ \gamma \sum_{i \in \mathcal{N}} \max_{a_{i}} Q_{i}^{\phi}(\mathbf{s}', a_{i}) \right] - \mathbf{b}(\mathbf{s}),
        \label{eq:shapley_q_operator}
        \end{equation}
        where $w_{i}(\mathbf{s}, a_{i}) = 1 / |\mathcal{N}|$ when $\mathit{a}_{i} = \arg\max_{a_{i}} Q^{\phi}_{i}(\mathbf{s}, a_{i})$. We prove that the optimal joint deterministic policy can be achieved by recursively running SBO in Theorem \ref{thm:shapley_q_optimal} that is proved based on the results from Lemma \ref{lemm:shapley_q_contraction_mapping} and Corollary \ref{coro:shapley_q_fixed_point}.
        \begin{lemma}
        \label{lemm:shapley_q_contraction_mapping}
            For all $\mathbf{s} \in \mathcal{S}$ and $\mathbf{a} \in \mathcal{A}$, Shapley-Bellman operator is a contraction mapping in a non-empty complete metric space when $\max_{\mathbf{s}} \big\{ \sum_{i \in \mathcal{N}} \max_{a_{i}} w_{i}(\mathbf{s}, a_{i}) \big\} < \frac{1}{\gamma}$.
        \end{lemma}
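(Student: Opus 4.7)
The plan is to equip tuples $\mathbf{Q} = (Q_{i}^{\phi})_{i \in \mathcal{N}}$ of bounded real-valued functions on $\mathcal{S} \times \mathcal{A}_{i}$ with the weighted sup-norm
\[
\|\mathbf{Q}\|_{*} \;=\; \max_{\mathbf{s} \in \mathcal{S}} \sum_{i \in \mathcal{N}} \max_{a_{i} \in \mathcal{A}_{i}} \bigl|\, Q_{i}^{\phi}(\mathbf{s}, a_{i})\, \bigr|.
\]
Under Assumption \ref{assm:basic_condition}, $\mathcal{S}$ and each $\mathcal{A}_{i}$ are finite, so this is a finite-dimensional (hence non-empty and complete) normed vector space, disposing of the metric-space prerequisites in one line. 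The norm is chosen purposefully: it is the only natural arrangement in which $i$ is summed and $a_{i}$ is maxed, so that the factor $\sum_{i} \max_{a_{i}} w_{i}(\mathbf{s}, a_{i})$ appearing in the hypothesis will emerge exactly.

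Next I would compute the $i$-th component of $\Upsilon(\mathbf{Q}^{1}) - \Upsilon(\mathbf{Q}^{2})$ for any two tuples. The reward term $R$ and the additive bias $\mathbf{b}(\mathbf{s})$ cancel, leaving
\[
\Upsilon(\mathbf{Q}^{1})_{i}(\mathbf{s}, a_{i}) - \Upsilon(\mathbf{Q}^{2})_{i}(\mathbf{s}, a_{i}) \;=\; \gamma\, w_{i}(\mathbf{s}, a_{i}) \sum_{\mathbf{s}'} Pr(\mathbf{s}' \mid \mathbf{s}, \mathbf{a}) \sum_{j \in \mathcal{N}} \Bigl[\, \max_{a_{j}'} Q_{j}^{\phi, 1}(\mathbf{s}', a_{j}') - \max_{a_{j}'} Q_{j}^{\phi, 2}(\mathbf{s}', a_{j}')\, \Bigr].
\]
The elementary inequality $|\max f - \max g| \leq \max|f - g|$ applied termwise, combined with the triangle inequality and the fact that $Pr(\cdot \mid \mathbf{s}, \mathbf{a})$ is a probability distribution, then yields
\[
\bigl|\, \Upsilon(\mathbf{Q}^{1})_{i}(\mathbf{s}, a_{i}) - \Upsilon(\mathbf{Q}^{2})_{i}(\mathbf{s}, a_{i})\, \bigr| \;\leq\; \gamma\, w_{i}(\mathbf{s}, a_{i})\, \|\mathbf{Q}^{1} - \mathbf{Q}^{2}\|_{*}.
\]
Taking $\max_{a_{i}}$, then summing over $i \in \mathcal{N}$, and finally taking $\max_{\mathbf{s}}$ produces
\[
\|\Upsilon(\mathbf{Q}^{1}) - \Upsilon(\mathbf{Q}^{2})\|_{*} \;\leq\; \gamma \cdot \max_{\mathbf{s}} \sum_{i \in \mathcal{N}} \max_{a_{i}} w_{i}(\mathbf{s}, a_{i}) \cdot \|\mathbf{Q}^{1} - \mathbf{Q}^{2}\|_{*},
\]
and the hypothesis $\max_{\mathbf{s}} \sum_{i} \max_{a_{i}} w_{i}(\mathbf{s}, a_{i}) < 1/\gamma$ forces the prefactor strictly below $1$, establishing the contraction.

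The chief obstacle is selecting the norm so that the weights arrange themselves precisely as $\sum_{i} \max_{a_{i}} w_{i}$ (matching the hypothesis), rather than as, say, $\max_{i}$ or $\sum_{i,a_{i}}$, which would produce a looser constant depending on $|\mathcal{N}|$ or $|\mathcal{A}_{i}|$. The $\|\cdot\|_{*}$ above is what forces exactly the correct arrangement. A secondary technicality, worth a brief remark in the formal write-up, is that $w_{i}(\mathbf{s}, a_{i})$ is pinned to $1/|\mathcal{N}|$ only on $\arg\max_{a_{i}} Q_{i}^{\phi}(\mathbf{s}, a_{i})$ and may therefore differ between $\mathbf{Q}^{1}$ and $\mathbf{Q}^{2}$ if their argmaxes disagree; this is handled by bounding the involved weight uniformly by $\max_{a_{i}} w_{i}(\mathbf{s}, a_{i})$, which is independent of the tuple under evaluation and leaves the final contraction constant intact.
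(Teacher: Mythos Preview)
Your argument is correct and follows essentially the same skeleton as the paper's proof: cancel $R$ and $\mathbf{b}(\mathbf{s})$, apply $|\max f - \max g|\le\max|f-g|$ termwise, average out the transition kernel, and read off the contraction constant $\gamma\,\max_{\mathbf{s}}\sum_{i}\max_{a_{i}} w_{i}(\mathbf{s},a_{i})$. The two proofs diverge only in presentation. The paper represents $\mathbf{Q}^{\phi}$ as an $|\mathcal{N}|\times|\mathcal{S}||\mathcal{A}|$ matrix (with the joint action $\mathbf{a}$ indexing columns), equips it with the induced matrix $\|\cdot\|_{1}$ norm, and invokes two auxiliary lemmas (sub-multiplicativity of $\|\cdot\|_{1}$ and the Banach-algebra fact) solely to secure completeness of the ambient space. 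Because each $Q_{i}^{\phi}$ depends only on $a_{i}$, the paper's $\max_{\mathbf{s},\mathbf{a}}\sum_{i}|Q_{i}^{\phi}(\mathbf{s},a_{i})|$ collapses to your $\max_{\mathbf{s}}\sum_{i}\max_{a_{i}}|Q_{i}^{\phi}(\mathbf{s},a_{i})|$, so the two norms coincide on the objects in play. Your route---declaring the space finite-dimensional and therefore complete---sidesteps the Banach-algebra machinery entirely, which is cleaner since sub-multiplicativity plays no role in the contraction estimate itself. Your closing remark about $w_{i}$ being pinned to $1/|\mathcal{N}|$ at the (tuple-dependent) argmax is a subtlety the paper's proof silently glosses over by treating $\mathbf{w}(\mathbf{s},\mathbf{a})$ as fixed.
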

        \begin{proof}
            See the detailed proof in Appendix \ref{sec:proof_of_shapley-bellman_operator}.
        \end{proof}
        
        \begin{corollary}
        \label{coro:shapley_q_fixed_point}
            According to Banach fixed-point theorem \cite{banach1922operations}, Shapley-Bellman operator admits a unique fixed point. Moreover, starting by an arbitrary start point, the sequence recursively generated by Shapley-Bellman operator can finally converge to that fixed point.
        \end{corollary}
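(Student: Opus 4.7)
The plan is to derive the corollary as an immediate consequence of Lemma \ref{lemm:shapley_q_contraction_mapping} combined with Banach's fixed-point theorem, so essentially no new machinery is needed. First I would recall the statement of Banach's fixed-point theorem: if $(X, d)$ is a non-empty complete metric space and $T: X \to X$ is a contraction mapping (i.e., there exists $\kappa \in [0,1)$ such that $d(T(x), T(y)) \leq \kappa\, d(x, y)$ for all $x, y \in X$), then $T$ admits a unique fixed point $x^{*} \in X$, and for any initial $x_{0} \in X$ the Picard sequence defined by $x_{n+1} = T(x_{n})$ converges to $x^{*}$.

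Next I would invoke Lemma \ref{lemm:shapley_q_contraction_mapping} to certify both hypotheses of this theorem for our operator $\mathlarger{\Upsilon}$ defined in Eq.~\ref{eq:shapley_q_operator}: the ambient space $\mathlarger{\mathlarger{\times}}_{i \in \mathcal{N}} Q_{i}^{\phi}(\mathbf{s}, a_{i})$ (endowed with a suitable sup-norm metric, as established in the proof of that lemma) is non-empty and complete, and under the stated condition $\max_{\mathbf{s}} \{ \sum_{i \in \mathcal{N}} \max_{a_{i}} w_{i}(\mathbf{s}, a_{i}) \} < \tfrac{1}{\gamma}$, the operator $\mathlarger{\Upsilon}$ is a contraction on that space. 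With both hypotheses verified, the conclusion of Banach's theorem transfers verbatim: a unique fixed point of $\mathlarger{\Upsilon}$ exists, and the iteration $\mathlarger{\Upsilon}^{n}(x_{0})$ converges to it for every starting point $x_{0}$.

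There is essentially no obstacle here since the corollary is a textbook consequence of the preceding lemma; the only minor care needed is to make sure the metric space referenced in the statement of Banach's theorem is exactly the same one used in Lemma \ref{lemm:shapley_q_contraction_mapping}, so that the contraction constant and completeness carry over unambiguously. I would close by noting that this unique fixed point is precisely the optimal $\mathbf{Q}^{\phi^{*}}$ satisfying the Shapley-Bellman optimality equation Eq.~\ref{eq:shapley_q_optimality_equation}, which sets up the subsequent Theorem \ref{thm:shapley_q_optimal} on convergence to the optimal joint deterministic policy.
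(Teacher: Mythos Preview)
Your proposal is correct and matches the paper's own proof essentially verbatim: both simply observe that Lemma~\ref{lemm:shapley_q_contraction_mapping} establishes $\mathlarger{\Upsilon}$ as a contraction on a non-empty complete metric space, and then invoke Banach's fixed-point theorem directly. One small remark: the metric used in the paper's proof of Lemma~\ref{lemm:shapley_q_contraction_mapping} is the induced matrix norm $\|\cdot\|_{1}$ on $\mathbb{R}^{|\mathcal{N}|\times |\mathcal{S}||\mathcal{A}|}$ rather than a sup-norm, so you may wish to reference that specific norm to keep the argument precise.
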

        \begin{proof}
            Since $\langle \mathbb{R}^{|\mathcal{N}|\times |\mathcal{S}| |\mathcal{A}|}, ||\cdot||_{1} \rangle$ is a non-empty complete metric space and Shapley-Bellman operator $\mathlarger{\Upsilon}$ is shown as a contraction mapping in Lemma \ref{lemm:shapley_q_contraction_mapping}, by Banach fixed-point theorem \cite{banach1922operations} we can directly conclude that Shapley-Bellman operator $\mathlarger{\Upsilon}$ admits the unique fixed point. Furthermore, starting by an arbitrary start point, the sequence recursively generated by Shapley-Bellman operator $\mathlarger{\Upsilon}$ can finally converge to that fixed point.
        \end{proof}
            
        \mybox{
        \begin{theorem}
        \label{thm:shapley_q_optimal}
            Shapley-Bellman operator converges to the optimal Markov Shapley Q-value and the corresponding optimal joint deterministic policy when $\max_{\mathbf{s}} \big\{ \sum_{i \in \mathcal{N}} \max_{a_{i}} w_{i}(\mathbf{s}, a_{i}) \big\} < \frac{1}{\gamma}$.
        \end{theorem}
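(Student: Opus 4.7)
The plan is to combine the already-established contraction property of the Shapley-Bellman operator with a verification that the optimal Markov Shapley Q-value itself solves Shapley-Bellman optimality equation, so that uniqueness of the fixed point forces the two to coincide; the optimal joint deterministic policy is then read off componentwise via individual $\arg\max$ operations.

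First I would invoke Corollary~\ref{coro:shapley_q_fixed_point}: under the stated condition $\max_{\mathbf{s}}\{\sum_{i}\max_{a_i} w_i(\mathbf{s},a_i)\} < 1/\gamma$, Lemma~\ref{lemm:shapley_q_contraction_mapping} makes $\mathlarger{\Upsilon}$ a contraction on the non-empty complete metric space $\langle \mathbb{R}^{|\mathcal{N}|\times|\mathcal{S}||\mathcal{A}|}, \|\cdot\|_1\rangle$, so Banach's fixed-point theorem gives a unique fixed point $\mathbf{Q}^{\infty}$ and convergence to it from any initialization. Because Eq.~\ref{eq:shapley_q_optimality_equation} is literally the equation $\mathbf{Q} = \mathlarger{\Upsilon}(\mathbf{Q})$, this $\mathbf{Q}^{\infty}$ is also the unique solution of Shapley-Bellman optimality equation.

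Next I would verify that the optimal Markov Shapley Q-value $\mathbf{Q}^{\phi^{*}}$ solves Shapley-Bellman optimality equation. Starting from the classical Bellman optimality equation (Eq.~\ref{eq:bellman_optimality_eq_q}) for the joint optimal policy, $Q^{\pi^{*}}(\mathbf{s},\mathbf{a}) = \sum_{\mathbf{s}'} Pr(\mathbf{s}'|\mathbf{s},\mathbf{a})[R + \gamma \max_{\mathbf{a}'} Q^{\pi^{*}}(\mathbf{s}',\mathbf{a}')]$, I would multiply by $w_i(\mathbf{s},a_i)$, subtract $b_i(\mathbf{s})$, and apply Condition~C.2 to rewrite the left-hand side as $Q_i^{\phi^{*}}(\mathbf{s},a_i)$. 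On the right-hand side, the efficiency property (Condition~C.1, in its Markov Shapley Q-value form obtained by applying $\max_{\mathbf{a}'}$ to Proposition~\ref{prop:shapley_value_properties}(ii)) gives $\max_{\mathbf{a}'} Q^{\pi^{*}}(\mathbf{s}',\mathbf{a}') = \sum_{i\in\mathcal{N}} \max_{a_i'} Q_i^{\phi^{*}}(\mathbf{s}',a_i')$. Stacking the $|\mathcal{N}|$ scalar equations into vector form with $\mathbf{w}(\mathbf{s},\mathbf{a})$ and $\mathbf{b}(\mathbf{s})$ yields exactly Eq.~\ref{eq:shapley_q_optimality_equation}. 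Hence $\mathbf{Q}^{\phi^{*}}$ is a fixed point of $\mathlarger{\Upsilon}$, and by uniqueness $\mathbf{Q}^{\infty}=\mathbf{Q}^{\phi^{*}}$, so the SBO iterates converge to $\mathbf{Q}^{\phi^{*}}$.

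Finally I would recover the optimal joint deterministic policy by defining $\pi_i^{*}(\mathbf{s}) = \arg\max_{a_i} Q_i^{\phi^{*}}(\mathbf{s},a_i)$ for each agent. By the efficiency property in Q-form and Proposition~\ref{prop:equiv_credit_assignment}, we have $\max_{\mathbf{a}} Q^{\pi^{*}}(\mathbf{s},\mathbf{a}) = \sum_{i\in\mathcal{N}} \max_{a_i} Q_i^{\phi^{*}}(\mathbf{s},a_i) = \sum_{i\in\mathcal{N}} Q_i^{\phi^{*}}(\mathbf{s},\pi_i^{*}(\mathbf{s}))$, so that the decentralised joint action $(\pi_i^{*}(\mathbf{s}))_{i\in\mathcal{N}}$ attains the joint $\max$ in the standard Bellman optimality equation; this is the IGM decomposition of Definition~\ref{def:igm} and certifies optimality of the recovered joint deterministic policy.

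The main obstacle I anticipate is the second paragraph: carefully deriving that $\mathbf{Q}^{\phi^{*}}$ satisfies Shapley-Bellman optimality equation. The subtle point is the simultaneous, consistent use of Condition~C.2 (which must tie $Q_i^{\phi^{*}}$ to $Q^{\pi^{*}}$ with precisely the $w_i,b_i$ that appear in $\mathlarger{\Upsilon}$) and the efficiency property, including the passage from the pointwise additive relation $\sum_i w_i^{-1}(Q_i^{\phi^{*}}+b_i) = |\mathcal{N}|\, Q^{\pi^{*}}$ to the maximised right-hand side $\sum_i \max_{a_i} Q_i^{\phi^{*}}$, which depends on Proposition~\ref{prop:equiv_credit_assignment} forcing $w_i = 1/|\mathcal{N}|$ and $\sum_i w_i^{-1}b_i = 0$ at the joint optimal action so that individual and joint argmaxes are consistent.
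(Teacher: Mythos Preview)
Your proposal is correct and follows essentially the same route as the paper. The paper's proof is terser: it invokes Corollary~\ref{coro:shapley_q_fixed_point} for the unique fixed point, observes that Eq.~\ref{eq:shapley_q_optimality_equation} is ``obviously'' a fixed point of $\mathlarger{\Upsilon}$ (the derivation of SBOE from the classical Bellman equation via Condition~\ref{cond:shapley_bellman_optimality_equation} is done in the text preceding the theorem rather than inside the proof), and then uses efficiency to pass from the optimal Markov Shapley Q-values to the optimal global Q-value and hence to the optimal joint deterministic policy---exactly the three steps you outline, with your second and third paragraphs simply making explicit what the paper treats as already established.
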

        \begin{proof}
            By Corollary \ref{coro:shapley_q_fixed_point}, we get that Shapley-Bellman operator admits the unique fixed point. Since Shapley-Bellman optimality equation (i.e., Eq.~\ref{eq:shapley_q_optimality_equation}) is obviously a fixed point for Shapley-Bellman operator, it is not difficult to draw a conclusion that the optimal Markov Shapley Q-value is achieved. Since the sum of the optimal Markov Shapley Q-values is equal to the optimal global Q-value and the optimal global Q-value corresponds to the optimal joint deterministic policy, we show that the optimal joint deterministic policy is achieved. Furthermore, it is obvious that Shapley-Bellman optimality equation can be transformed back to the Bellman optimality equation with respect to the optimal global Q-value, given the property of efficiency of Markov Shapley value.
        \end{proof}
        }
                
        \paragraph{Stochastic Approximation of Shapley-Bellman Operator.} We now derive the stochastic approximation of Shapley-Bellman operator over the value space, i.e. a form of Q-learning derived from Shapley-Bellman operator. By sampling from $Pr(\mathbf{s}'|\mathbf{s}, \mathbf{a})$ via Monte Carlo method, the Q-learning algorithm can be expressed as follows:
        \begin{equation}
            \mathbf{Q}^{\phi}_{t+1}(\mathbf{s}, \mathbf{a}) \leftarrow \mathbf{Q}^{\phi}_{t}(\mathbf{s}, \mathbf{a}) + \alpha_{t}(\mathbf{s}, \mathbf{a}) \left[ \mathbf{w}(\mathbf{s}, \mathbf{a}) \left( R_{t} + \gamma \sum_{i \in \mathcal{N}} \max_{a_{i}} (Q_{i}^{\phi})_{t}(\mathbf{s}', a_{i}) \right) - \mathbf{b}(\mathbf{s}) - \mathbf{Q}^{\phi}_{t}(\mathbf{s}, \mathbf{a}) \right].
        \label{eq:shapley_q_learning_primal}
        \end{equation}
        
        \begin{lemma}[\cite{jaakkola1994convergence} ]
            The random process $\{\Delta_{t}\}$ taking values $\mathbb{R}^{n}$ defined as follows: $$\Delta_{t+1}(x) = (1 - \alpha_{t}(x)) \Delta_{t}(x) + \alpha_{t}(x) F_{t}(x),$$ which converges to 0 w.p.1 under the following assumptions:
            \begin{itemize}
                \item $0 \leq \alpha_{t} \leq 1$, $\sum_{t} \alpha_{t}(x) = \infty$ and $\sum_{t} \alpha_{t}^{2} \leq \infty$;
                \item $|| \mathbb{E}[F_{t}(x) | \mathcal{F}_{t}] ||_{W} \leq \delta || \Delta_{t} ||_{W}$, with $0 \leq \delta < 1$;
                \item $\textbf{var} [F_{t}(x) | \mathcal{F}_{t}] \leq C ( 1 + ||\Delta_{t}||_{W}^{2} )$, for $C > 0$.
            \end{itemize}
        \label{lemm:stochastic_process}
        \end{lemma}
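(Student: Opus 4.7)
The plan is to reduce the coupled recursion to a Robbins--Siegmund almost-supermartingale, following the standard blueprint for stochastic approximation with biased contractive updates. First I would decompose the innovation as $F_{t}(x) = f_{t}(x) + w_{t}(x)$, where $f_{t}(x) = \mathbb{E}[F_{t}(x) \mid \mathcal{F}_{t}]$ is the conditional mean and $w_{t}(x) = F_{t}(x) - f_{t}(x)$ is a martingale difference with respect to $\{\mathcal{F}_{t}\}$. Substituting gives the rewritten recursion
\begin{equation*}
\Delta_{t+1}(x) = (1 - \alpha_{t}(x)) \Delta_{t}(x) + \alpha_{t}(x) f_{t}(x) + \alpha_{t}(x) w_{t}(x),
\end{equation*}
where the three hypotheses translate to: the Robbins--Monro step-size conditions hold componentwise in $x$; $\|f_{t}\|_{W} \leq \delta \|\Delta_{t}\|_{W}$ with $\delta < 1$, i.e., the drift is a uniform contraction in the weighted norm; and $\mathbb{E}[w_{t}(x)^{2} \mid \mathcal{F}_{t}] \leq C(1 + \|\Delta_{t}\|_{W}^{2})$, so the noise variance grows at most quadratically in $\|\Delta_{t}\|_{W}$.

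Next I would introduce the Lyapunov quantity $U_{t} = \|\Delta_{t}\|_{W}^{2}$ and estimate its conditional one-step increment. Taking the weighted supremum of the deterministic part of the recursion and using the contraction bound yields $|(1-\alpha_{t}(x))\Delta_{t}(x) + \alpha_{t}(x) f_{t}(x)| \leq (1 - (1-\delta)\alpha_{t}(x))\|\Delta_{t}\|_{W}$ for every $x$; combining this envelope estimate with $\mathbb{E}[w_{t} \mid \mathcal{F}_{t}] = 0$ and the variance bound, I expect to arrive at an inequality of the shape
\begin{equation*}
\mathbb{E}[U_{t+1} \mid \mathcal{F}_{t}] \leq \left(1 + C'\alpha_{t}^{2}\right) U_{t} - 2(1-\delta)\,\alpha_{t}\, U_{t} + C'\alpha_{t}^{2},
\end{equation*}
where the extra $C'\alpha_{t}^{2}\,U_{t}$ term arising from the quadratic noise bound is almost-surely summable because $\sum_{t} \alpha_{t}^{2} < \infty$, while $\sum_{t} 2(1-\delta)\alpha_{t}$ diverges. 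This is exactly the structure required by the Robbins--Siegmund almost-supermartingale convergence theorem, which then guarantees that $U_{t}$ converges almost surely to some finite random variable and that $\sum_{t} \alpha_{t}\,U_{t} < \infty$; combined with $\sum_{t} \alpha_{t} = \infty$, the limit must be zero, so $\Delta_{t}(x) \to 0$ almost surely.

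The main obstacle will be reconciling the componentwise nature of the recursion with the global (weighted-supremum) nature of the contraction hypothesis. Since $\|f_{t}\|_{W}$ controls only the norm of the drift, not the individual $f_{t}(x)$, a naive coordinate-by-coordinate Lyapunov argument fails, and I must pass to the envelope $\|\Delta_{t}\|_{W}$ and show that the supremum over $x$ still obeys a supermartingale inequality uniformly. Handling asynchronous step sizes $\alpha_{t}(x)$ requires an auxiliary lemma that interpolates between the coordinates whose $\alpha_{t}(x)$ are currently active and those that are not, using $\sum_{t}\alpha_{t}(x) = \infty$ pointwise in $x$ to guarantee every coordinate is visited infinitely often. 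A secondary subtlety is the quadratic-in-$\|\Delta_{t}\|_{W}$ noise bound: one must check that the coefficient multiplying $U_{t}$ stays of the form $1 + O(\alpha_{t}^{2})$ and does not overwhelm the $-2(1-\delta)\alpha_{t}U_{t}$ drift term; it is exactly here that the strict contraction $\delta < 1$ and the square-summability of $\alpha_{t}$ cooperate to absorb the noise and close the argument.
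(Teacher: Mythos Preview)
The paper does not prove this lemma; it is quoted from Jaakkola, Jordan and Singh (1994) and used as a black box in the proof of Theorem~\ref{thm:proof_of_shapley_q_learning}. So there is no in-paper proof to compare against, and the relevant question is whether your outline actually establishes the cited result.

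It does not, and the failure is exactly at the step you label ``the main obstacle'' but do not resolve. Your coordinate-wise calculation correctly gives, for each fixed $x$, $\mathbb{E}[\Delta_{t+1}(x)^{2}\mid\mathcal{F}_{t}] \le (1-(1-\delta)\alpha_{t}(x))^{2}\|\Delta_{t}\|_{W}^{2} + C\alpha_{t}(x)^{2}(1+\|\Delta_{t}\|_{W}^{2})$. But $U_{t+1}=\|\Delta_{t+1}\|_{W}^{2}$ is the supremum over $x$ \emph{before} taking conditional expectation, and there is no inequality letting you pass $\mathbb{E}[\,\cdot\mid\mathcal{F}_{t}]$ inside $\sup_{x}$. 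Concretely, write $\Delta_{t+1}(x)=D_{t}(x)+\alpha_{t}(x)w_{t}(x)$ with $D_{t}$ the $\mathcal{F}_{t}$-measurable drift; the argmax of $|\Delta_{t+1}(\cdot)|$ is random and correlated with $w_{t}$, so the cross term $2D_{t}(x^{\ast})\alpha_{t}(x^{\ast})w_{t}(x^{\ast})$ does \emph{not} have zero conditional mean once $x^{\ast}$ is selected by the noise. Your proposed fix, ``pass to the envelope and show the supremum still obeys a supermartingale inequality,'' merely restates the difficulty. Bounding $\mathbb{E}[\max_{x}(\cdot)]$ by $\sum_{x}\mathbb{E}[(\cdot)]$ via finiteness of the state space costs a factor $|\mathcal{X}|$ that destroys the contraction constant, and switching to an $\ell^{2}$-type Lyapunov function loses the sup-norm contraction hypothesis.

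This is precisely why the proof in the cited reference does \emph{not} use a squared-norm supermartingale. Jaakkola--Jordan--Singh (and the closely related Tsitsiklis 1994 argument) instead decompose $\Delta_{t}$ additively into a noise-only process $Z_{t}$ (same recursion with $F_{t}$ replaced by the centred $w_{t}$) and a drift process $Y_{t}$. The process $Z_{t}$ is driven to zero almost surely by a scalar martingale-plus-variance argument applied coordinate-by-coordinate (here finiteness of $x$ is used benignly), and then $Y_{t}$ is squeezed down deterministically by the sup-norm contraction once $\|Z_{t}\|_{W}$ is small; the asynchronous step sizes are handled by a rescaling-of-time lemma rather than the interpolation you sketch. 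If you want a self-contained proof, that two-process decomposition is the route to take.
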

    
        \mybox{
        \begin{theorem}
        \label{thm:proof_of_shapley_q_learning}
            For a Markov convex game, the Q-learning algorithm derived by Shapley-Bellman operator given by the update rule such that
            \begin{equation*}
                \mathbf{Q}^{\phi}_{t+1}(\mathbf{s}, \mathbf{a}) \leftarrow \mathbf{Q}^{\phi}_{t}(\mathbf{s}, \mathbf{a}) + \alpha_{t}(\mathbf{s}, \mathbf{a}) \left[ \mathbf{w}(\mathbf{s}, \mathbf{a}) \left( R_{t} + \gamma \sum_{i \in \mathcal{N}} \max_{a_{i}} (Q_{i}^{\phi})_{t}(\mathbf{s}', a_{i}) \right) - \mathbf{b}(\mathbf{s}) - \mathbf{Q}^{\phi}_{t}(\mathbf{s}, \mathbf{a}) \right],
            \end{equation*}
            converges w.p.1 to the optimal Markov Shapley Q-value if 
            \begin{equation}
                \sum_{t} \alpha_{t}(\mathbf{s}, \mathbf{a}) = \infty \ \ \ \ \ \ \ \ \sum_{t} \alpha^{2}_{t}(\mathbf{s}, \mathbf{a}) \leq \infty
            \label{eq:alpha_condition}
            \end{equation}
            for all $\mathbf{s} \in \mathcal{S}$ and $\mathbf{a} \in \mathcal{A}$ as well as $\max_{\mathbf{s}} \big\{ \sum_{i \in \mathcal{N}} \max_{a_{i}} w_{i}(\mathbf{s}, a_{i}) \big\} < \frac{1}{\gamma}$.
        \end{theorem}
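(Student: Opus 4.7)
The plan is to apply the stochastic approximation result of Lemma (Jaakkola et al.) component-wise to the vector-valued error process $\Delta_{t}(\mathbf{s}, \mathbf{a}) = \mathbf{Q}^{\phi}_{t}(\mathbf{s}, \mathbf{a}) - \mathbf{Q}^{\phi^{*}}(\mathbf{s}, \mathbf{a})$, where $\mathbf{Q}^{\phi^{*}}$ is the unique fixed point of $\mathlarger{\Upsilon}$ guaranteed by Corollary (contraction-mapping fixed point). First I would rewrite the update rule in the form required by Lemma by subtracting $\mathbf{Q}^{\phi^{*}}(\mathbf{s}, \mathbf{a})$ from both sides and identifying the noise term
\begin{equation*}
F_{t}(\mathbf{s}, \mathbf{a}) = \mathbf{w}(\mathbf{s}, \mathbf{a})\left( R_{t} + \gamma \sum_{i \in \mathcal{N}} \max_{a_{i}} (Q_{i}^{\phi})_{t}(\mathbf{s}', a_{i}) \right) - \mathbf{b}(\mathbf{s}) - \mathbf{Q}^{\phi^{*}}(\mathbf{s}, \mathbf{a}),
\end{equation*}
so that the iteration is exactly $\Delta_{t+1}(x) = (1-\alpha_{t}(x))\Delta_{t}(x) + \alpha_{t}(x) F_{t}(x)$, with $x = (\mathbf{s}, \mathbf{a})$. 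I would then check the three hypotheses of Lemma in turn, working in the weighted max-norm $\|\cdot\|_{W}$ induced by the weights used in Lemma (which establishes the contraction of $\mathlarger{\Upsilon}$).

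The step-size hypothesis is immediate from the assumed condition Eq.~\ref{eq:alpha_condition}. For the contraction hypothesis I would take conditional expectation with respect to the natural filtration $\mathcal{F}_{t}$: since $\mathbf{s}'$ is sampled from $Pr(\cdot \mid \mathbf{s}, \mathbf{a})$ and $R_{t}$ has conditional mean $R(\mathbf{s}, \mathbf{a})$, we obtain
\begin{equation*}
\mathbb{E}\bigl[\mathbf{w}(\mathbf{s}, \mathbf{a})(R_{t} + \gamma {\textstyle\sum_{i}} \max_{a_{i}} (Q_{i}^{\phi})_{t}(\mathbf{s}', a_{i})) - \mathbf{b}(\mathbf{s}) \,\big|\, \mathcal{F}_{t}\bigr] = \mathlarger{\Upsilon}(\mathbf{Q}^{\phi}_{t})(\mathbf{s}, \mathbf{a}).
\end{equation*}
Using $\mathbf{Q}^{\phi^{*}} = \mathlarger{\Upsilon}(\mathbf{Q}^{\phi^{*}})$, this gives $\mathbb{E}[F_{t}(x) \mid \mathcal{F}_{t}] = \mathlarger{\Upsilon}(\mathbf{Q}^{\phi}_{t})(x) - \mathlarger{\Upsilon}(\mathbf{Q}^{\phi^{*}})(x)$, so Lemma (contraction of $\mathlarger{\Upsilon}$ with modulus $\delta = \gamma \max_{\mathbf{s}} \sum_{i \in \mathcal{N}} \max_{a_{i}} w_{i}(\mathbf{s}, a_{i}) < 1$, valid under the stated restriction on $w_{i}$) delivers $\|\mathbb{E}[F_{t}(x) \mid \mathcal{F}_{t}]\|_{W} \leq \delta \|\Delta_{t}\|_{W}$, which is hypothesis two.

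For the variance hypothesis I would observe that $R_{t}$ is bounded (the finite state and action spaces from Assumption \ref{assm:basic_condition} together with the definition $R : \mathcal{S} \times \mathcal{A}_{\mathcal{C}} \to [0, \infty)$ extended to a compact range force this), that $\mathbf{w}(\mathbf{s}, \mathbf{a})$ and $\mathbf{b}(\mathbf{s})$ are bounded by Condition \ref{cond:shapley_bellman_optimality_equation}, and that the bootstrap term is linear in $\mathbf{Q}^{\phi}_{t} = \Delta_{t} + \mathbf{Q}^{\phi^{*}}$, so a routine triangle-inequality argument on $\textbf{var}[F_{t}(x) \mid \mathcal{F}_{t}]$ yields the required bound $C(1 + \|\Delta_{t}\|_{W}^{2})$ for some constant $C > 0$. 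Lemma (Jaakkola) then yields $\Delta_{t} \to 0$ with probability one, i.e. $\mathbf{Q}^{\phi}_{t} \to \mathbf{Q}^{\phi^{*}}$, which by Theorem \ref{thm:shapley_q_optimal} is the optimal Markov Shapley Q-value.

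The main obstacle is verifying hypothesis two cleanly: one has to argue that the contraction of $\mathlarger{\Upsilon}$ established in Lemma (in its particular weighted norm) carries over to the error process even though the $w_{i}(\mathbf{s}, a_{i})$ depend on the current iterate through the rule $w_{i}(\mathbf{s}, a_{i}) = 1/|\mathcal{N}|$ at the greedy action; this requires showing the contraction modulus $\delta < 1$ remains uniform across iterations under the hypothesis $\max_{\mathbf{s}}\{\sum_{i} \max_{a_{i}} w_{i}(\mathbf{s}, a_{i})\} < 1/\gamma$, so that the weighted-norm inequality from Lemma applies to every step of the stochastic recursion, not merely to the deterministic operator.
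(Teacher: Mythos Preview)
Your proposal is correct and follows essentially the same route as the paper: rewrite the recursion as $\Delta_{t+1}=(1-\alpha_t)\Delta_t+\alpha_t F_t$, identify $\mathbb{E}[F_t\mid\mathcal{F}_t]=\mathlarger{\Upsilon}\mathbf{Q}^{\phi}_t-\mathlarger{\Upsilon}\mathbf{Q}^{\phi^*}$, invoke the contraction lemma for hypothesis two, use boundedness of $R_t$, $\mathbf{w}$, $\mathbf{b}$ for the variance hypothesis, and apply the Jaakkola--Jordan--Singh lemma. The only cosmetic difference is that the paper works in the induced matrix norm $\|\cdot\|_{1}$ (established as sub-multiplicative in a preparatory lemma) rather than a weighted max-norm; also, the paper does not discuss the iterate-dependence of $w_i$ that you flag in your final paragraph, so your concern there goes slightly beyond what the paper's own proof addresses.
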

        \begin{proof}
            See the detailed proof in Appendix \ref{sec:proof_of_shapley-bellman_operator}.
        \end{proof}
        }
        
        \paragraph{Shapley Q-Learning.} For the easy implementation, we conduct transformation for the stochastic approximation of SBO and derive \textit{Shapley Q-learning} (SHAQ) whose TD error is shown as follows:
        \begin{equation}
            \begin{split}
                \Delta(\mathbf{s}, \mathbf{a}, \mathbf{s}') = R \ + \ \gamma \sum_{i \in \mathcal{N}} \max_{a_{i}} Q_{i}^{\phi}(\mathbf{s}', a_{i})
                - \sum_{i \in \mathcal{N}} \delta_{i}(\mathbf{s}, a_{i}) \ Q^{\phi}_{i}(\mathbf{s}, a_{i}),
            \end{split}
        \label{eq:td_error_shapley_q_learning}
        \end{equation}
        where 
        \begin{equation}
            \delta_{i}(\mathbf{s}, a_{i}) = \begin{cases} 
                                                  1 & a_{i} = \arg\max_{a_{i}} Q^{\phi}_{i}(\mathbf{s}, a_{i}), \\
                                                  \alpha_{i}(\mathbf{s}, a_{i}) & a_{i} \neq \arg\max_{a_{i}} Q^{\phi}_{i}(\mathbf{s}, a_{i}).
                                             \end{cases}
        \label{eq:delta}
        \end{equation}
        Note that the closed-form expression of $\delta_{i}(\mathbf{s}, a_{i})$ is written as $|\mathcal{N}|^{-1} w_{i}(\mathbf{s}, a_{i})^{-1}$. If inserting the condition that $w_{i}(\mathbf{s}, a_{i}) = 1 / |\mathcal{N}|$ when $\mathit{a}_{i} = \arg\max_{a_{i}} Q^{\phi}_{i}(\mathbf{s}, a_{i})$ as well as defining $\delta_{i}(\mathbf{s}, a_{i})$ as $\alpha_{i}(\mathbf{s}, a_{i})$ when $\mathit{a}_{i} \neq \arg\max_{a_{i}} Q^{\phi}_{i}(\mathbf{s}, a_{i})$, Eq.~\ref{eq:delta} is obtained. The term $\mathbf{b}(\mathbf{s})$ is cancelled in Eq.~\ref{eq:td_error_shapley_q_learning} thanks to the condition that $\sum_{i \in \mathcal{N}} w_{i}(\mathbf{s}, a_{i})^{-1} b_{i}(\mathbf{s}) = 0$. Note that the condition to $w_{i}(\mathbf{s}, a_{i})$ in Theorem \ref{thm:shapley_q_optimal} should hold for the convergence of SHAQ in implementation. The details of the derivation of Shapley Q-Learning are shown in the following paragraphs. 
        
        By stochastic approximation in value space, i.e. sampling $\mathbf{s}'$ from $Pr(\mathbf{s}'|\mathbf{s}, \mathbf{a})$ via Monte Carlo method, Shapley-Bellman operator can be expressed as follows:
        \begin{equation}
            \mathbf{Q}^{\phi}(\mathbf{s}, \mathbf{a}) = \mathbf{w}(\mathbf{s}, \mathbf{a}) \left( R + \gamma \sum_{i \in \mathcal{N}} \max_{a_{i}} Q_{i}^{\phi}(\mathbf{s}', a_{i}) \right)  - \mathbf{b}(\mathbf{s}),
        \label{eq:stochstic_approximation_shapley_operator}
        \end{equation}
        where $\mathbf{w}(\mathbf{s}, \mathbf{a}) = [w_{i}(\mathbf{s}, a_{i})]^{\top} \in \mathbb{R}^{\scriptscriptstyle|\mathcal{N}|}_{+}$; $\mathbf{b}(\mathbf{s}) = [b_{i}(\mathbf{s})]^{\top} \in \mathbb{R}^{\scriptscriptstyle|\mathcal{N}|}_{+}$; and $\mathbf{Q}^{\phi}(\mathbf{s}, \mathbf{a}) = [Q_{i}^{\phi}(\mathbf{s}, a_{i})]^{\top} \in \mathbb{R}^{\scriptscriptstyle|\mathcal{N}|}_{+}$. 
        
        Since $\mathbf{w}(\mathbf{s}, \mathbf{a}) = diag \big( \mathbf{w}(\mathbf{s}, \mathbf{a}) \big) \ \mathbf{1}$, where $diag(\cdot)$ denotes the diagonalization of a vector\footnote{It is a square diagonal matrix with the elements of vector v on the main diagonal, and the other entries of the matrix are zeros.} and $\mathbf{1}$ denotes the vector of ones, Eq.~\ref{eq:stochstic_approximation_shapley_operator} can be equivalently represented as follows:
        \begin{equation}
             \mathbf{Q}^{\phi}(\mathbf{s}, \mathbf{a}) = diag \big( \mathbf{w}(\mathbf{s}, \mathbf{a}) \big) \ \mathbf{1} \ \left( R + \gamma \sum_{i \in \mathcal{N}} \max_{a_{i}} Q_{i}^{\phi}(\mathbf{s}', a_{i}) \right) - \mathbf{b}(\mathbf{s}).
        \label{eq:stochstic_approximation_shapley_operator_0}
        \end{equation}
        Since $w_{i}(\mathbf{s}, a_{i}) > 0, \forall i \in \mathcal{N}$, we can write the following equivalent form to Eq.~\ref{eq:stochstic_approximation_shapley_operator_0} such that
        \begin{equation} 
            diag \big( \mathbf{w}(\mathbf{s}, \mathbf{a}) \big)^{-1} \mathbf{Q}^{\phi}(\mathbf{s}, \mathbf{a}) = \mathbf{1} \ \left( R + \gamma \sum_{i \in \mathcal{N}} \max_{a_{i}} Q_{i}^{\phi}(\mathbf{s}', a_{i}) \right) - diag \big( \mathbf{w}(\mathbf{s}, \mathbf{a}) \big)^{-1} \mathbf{b}(\mathbf{s}).
        \label{eq:stochstic_approximation_shapley_operator_1}
        \end{equation}
        Next, we multiply $\mathbf{1}^{\top}$ on both sides and obtain the following equation such that
        \begin{equation}
            \sum_{i \in \mathcal{N}} \frac{1}{w_{i}(\mathbf{s}, a_{i})} \cdot Q_{i}^{\phi}(\textbf{s}, a_{i}) = |\mathcal{N}| \ \left( R + \gamma \sum_{i \in \mathcal{N}} \max_{a_{i}} Q_{i}^{\phi}(\mathbf{s}', a_{i}) \right) - \sum_{i \in \mathcal{N}} w_{i}(\mathbf{s}, a_{i})^{-1} b_{i}(\mathbf{s}).
        \label{eq:stochstic_approximation_shapley_operator_2}
        \end{equation}
        
        Since the condition that $\sum_{i \in \mathcal{N}} w_{i}(\mathbf{s}, a_{i})^{-1} b_{i}(\mathbf{s}) = 0$, by dividing $|\mathcal{N}|$ on both sides we get the following equation such that
        \begin{equation}
            \sum_{i \in \mathcal{N}} \frac{1}{|\mathcal{N}|w_{i}(\mathbf{s}, a_{i})} \cdot Q_{i}^{\phi}(s, a_{i}) = R + \gamma \sum_{i \in \mathcal{N}} \max_{a_{i}} Q_{i}^{\phi}(s, a_{i}).
        \label{eq:stochstic_approximation_shapley_operator_3}
        \end{equation}
        
        Since $w_{i}(\mathbf{s}, a_{i}) = 1 / |\mathcal{N}|$ when $\mathit{a}_{i} = \arg\max_{a_{i}} Q^{\phi}_{i}(\mathbf{s}, a_{i})$, by defining $\delta_{i}(\mathbf{s}, a_{i}) = \frac{1}{|\mathcal{N}| \ w_{i}(\mathbf{s}, a_{i})}$ we can get that 
        \begin{equation}
            \delta_{i}(\mathbf{s}, a_{i}) = \begin{cases} 
                                                  1 & a_{i} = \arg\max_{a_{i}} Q^{\phi}_{i}(\mathbf{s}, a_{i}), \\
                                                  \alpha_{i}(\mathbf{s}, a_{i}) & a_{i} \neq \arg\max_{a_{i}} Q^{\phi}_{i}(\mathbf{s}, a_{i}),
                                             \end{cases}
        \label{eq:delta_1}
        \end{equation}
        where $\alpha_{i}(\mathbf{s}, a_{i})$ is a variable that expresses $\frac{1}{|\mathcal{N}| \ w_{i}(\mathbf{s}, a_{i})}$ when $a_{i} \neq \arg\max_{a_{i}} Q^{\phi}_{i}(\mathbf{s}, a_{i})$ for the ease of implementation.
        
        Substituting Eq.~\ref{eq:delta_1} into Eq.~\ref{eq:stochstic_approximation_shapley_operator_3}, we can get the following equation such that
        \begin{equation}
            \sum_{i \in \mathcal{N}} \delta_{i}(\mathbf{s}, a_{i}) \ Q^{\phi}_{i}(\mathbf{s}, a_{i}) = R + \gamma \sum_{i \in \mathcal{N}} \max_{a_{i}} Q_{i}^{\phi}(\mathbf{s}', a_{i}).
        \label{eq:dual_shapley_q_operator}
        \end{equation}
        
        By rearranging Eq.~\ref{eq:dual_shapley_q_operator}, we obtain the TD error of Shapley Q-learning (SHAQ) such that
        \begin{equation}
            \Delta(\mathbf{s}, \mathbf{a}, \mathbf{s}') = R + \gamma \sum_{i \in \mathcal{N}} \max_{a_{i}} Q_{i}^{\phi}(\mathbf{s}', a_{i}) - \sum_{i \in \mathcal{N}} \delta_{i}(\mathbf{s}, a_{i}) \ Q^{\phi}_{i}(\mathbf{s}, a_{i}).
        \label{eq:td_error_shapley_q_learning_appendix}
        \end{equation}
        
        The solution resulting from the TD error of SHAQ is necessary for that resulting from the TD error of Eq.~\ref{eq:shapley_q_learning_primal} (i.e. the stochastic learning process that we proved to converge to the optimal Markov Shapley Q-value in Theorem \ref{thm:proof_of_shapley_q_learning}), and the condition $\max_{\mathbf{s}} \big\{ \sum_{i \in \mathcal{N}} \max_{a_{i}} w_{i}(\mathbf{s}, a_{i}) \big\} < \frac{1}{\gamma}$ is required to be satisfied so that the convergence to the optimality is possible to hold.
        
        % \paragraph{Validity and Interpretability of Condition \ref{cond:shapley_bellman_optimality_equation}.}
        Next, we give a proof to show that the optimal MSV is a solution in the Markov core, as Theorem \ref{thm:shapley_value_core} shows. As a result, \textit{solving SBOE is equivalent to solving the Markov core and SHAQ is actually a learning algorithm that reliably converges to the Markov core}. As per the definition of Markov core, we can say that SHAQ leads to the result that no agents have incentives to deviate from the grand coalition. Additionally, by Proposition \ref{prop:CFS_objective} we can conclude that reaching the Markov core is equivalent to reaching the optimal social welfare with respect to coalition structures. These two statements provide an interpretation of credit assignment for a global reward game. C.2 in Condition \ref{cond:shapley_bellman_optimality_equation} \textit{maintains the validity of the relationship between the optimal MSQ and the optimal global Q-value if there exist dummy agents} (see Remark \ref{rmk:dummy_agents}), so that the definition of SBOE is valid for MCG and MSQ in almost every case, which preserves the completeness of the theory.
        \begin{lemma}
        \label{lemm:markov_core_convex_set}
            Markov core is a convex set.
        \end{lemma}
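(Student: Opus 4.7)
The plan is to prove convexity directly from the definition, by observing that the Markov core is an intersection of closed half-spaces in the space of payoff vectors and is therefore convex for the standard reason that any intersection of convex sets is convex. Concretely, I would pick two arbitrary elements $\mathbf{u} = (\max_{\pi_i} x_i(\mathbf{s}))_{i \in \mathcal{N}, \mathbf{s} \in \mathcal{S}}$ and $\mathbf{v} = (\max_{\pi_i} y_i(\mathbf{s}))_{i \in \mathcal{N}, \mathbf{s} \in \mathcal{S}}$ of $\texttt{MarkovCore}(\Gamma)$, fix any $\lambda \in [0,1]$, and examine the candidate point $\mathbf{w} := \lambda \mathbf{u} + (1-\lambda)\mathbf{v}$, whose $(i,\mathbf{s})$-component is $w_i(\mathbf{s}) = \lambda u_i(\mathbf{s}) + (1-\lambda) v_i(\mathbf{s})$.

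The central step is to verify that for every coalition $\mathcal{C} \subseteq \mathcal{N}$ and every state $\mathbf{s} \in \mathcal{S}$ the defining inequality
$$\sum_{i \in \mathcal{C}} w_i(\mathbf{s}) \;\geq\; \max_{\pi_{\mathcal{C}}} V^{\pi_{\mathcal{C}}}(\mathbf{s})$$
still holds. By linearity of the finite sum, the left-hand side splits as $\lambda \sum_{i \in \mathcal{C}} u_i(\mathbf{s}) + (1-\lambda) \sum_{i \in \mathcal{C}} v_i(\mathbf{s})$, and since $\mathbf{u},\mathbf{v} \in \texttt{MarkovCore}(\Gamma)$ each summand is individually at least $\lambda$ (respectively $1-\lambda$) times $\max_{\pi_{\mathcal{C}}} V^{\pi_{\mathcal{C}}}(\mathbf{s})$. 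Because $\lambda + (1-\lambda) = 1$, these two contributions combine to reproduce exactly the required lower bound, so $\mathbf{w}$ satisfies every defining inequality and therefore lies in $\texttt{MarkovCore}(\Gamma)$.

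The main obstacle is conceptual rather than technical. One has to be careful to treat the element of the Markov core as the tuple of scalar values $(\max_{\pi_i} x_i(\mathbf{s}))_{i,\mathbf{s}}$ in an ambient vector space, rather than as the underlying payoff functions $x_i$ parameterised by policies; in the scalar-tuple view the defining constraints are genuinely affine, which is what makes the half-space argument go through. Once this viewpoint is in place the verification is a one-line computation, and no appeal to properties of $V^{\pi_{\mathcal{C}}}$ beyond its appearance as a constant (independent of $\mathbf{w}$) on the right-hand side of each inequality is needed.
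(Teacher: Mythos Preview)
Your proposal is correct and follows essentially the same approach as the paper's proof: take two points in the Markov core, form the convex combination, and verify the defining inequality for every coalition and state by splitting the sum linearly and applying the two individual core inequalities with weights $\lambda$ and $1-\lambda$. The only differences are cosmetic---you frame the result as an intersection of half-spaces and add a remark about the ambient vector space, while the paper just writes out the chain of (in)equalities directly.
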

        \begin{proof}
            See the detailed proof in Appendix \ref{sec:validity_and_interpretability}.
        \end{proof}
        
        \begin{theorem}
        \label{thm:shapley_value_core}
            The optimal Markov Shapley value is a solution in the Markov core for a Markov convex game under the grand coalition.
        \end{theorem}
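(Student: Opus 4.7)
The plan is to adapt Shapley's classical argument that the Shapley value lies in the core of every convex game to the Markov setting, using Remark~\ref{rmk:credit_assignment} as the dynamic analogue of supermodularity. By the definition of the Markov core in Eq.~\ref{eq:epsilon_core}, it suffices to verify
\[
\sum_{i\in\mathcal{C}} \max_{\pi_{i}} V_{i}^{\phi}(\mathbf{s}) \;\geq\; \max_{\pi_{\mathcal{C}}} V^{\pi_{\mathcal{C}}}(\mathbf{s}), \qquad \forall\, \mathcal{C}\subseteq\mathcal{N},\; \mathbf{s}\in\mathcal{S},
\]
with equality for $\mathcal{C}=\mathcal{N}$ already supplied by the efficiency of MSV (Proposition~\ref{prop:shapley_value_properties}, property (ii)). Starting from Assumption~\ref{assm:max_shapley_value}, I would first repackage the left-hand side in permutation form, exploiting the equivalence between the coalition-indexed representation of Eq.~\ref{eq:shapley_value} and the permutation-indexed representation used for the classical Shapley value in Eq.~\ref{eq:shapley_value_org_1}, together with the observation that $\max_{\pi_{i}}\Phi_{i}(\mathbf{s}\mid\mathcal{C}_{i})$ coincides with the fully optimized increment $\max_{\pi_{\mathcal{C}_{i}\cup\{i\}}} V^{\pi_{\mathcal{C}_{i}\cup\{i\}}}(\mathbf{s}) - \max_{\pi_{\mathcal{C}_{i}}} V^{\pi_{\mathcal{C}_{i}}}(\mathbf{s})$. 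This reduces the core inequality to a per-permutation claim that can be averaged at the end.

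Next I would fix $\mathcal{C}\subseteq\mathcal{N}$ and a permutation $m\in\Pi(\mathcal{N})$, and list the elements of $\mathcal{C}$ as $i_{1},\ldots,i_{|\mathcal{C}|}$ in the order they appear in $m$. Setting $\mathcal{D}_{k}=\{i_{1},\ldots,i_{k-1}\}$ gives $\mathcal{D}_{k}\subseteq\mathcal{C}_{i_{k}}^{m}$, where $\mathcal{C}_{i_{k}}^{m}$ is the predecessor set of $i_{k}$ in $m$. Iterating Remark~\ref{rmk:credit_assignment} to bridge $\mathcal{D}_{k}$ and $\mathcal{C}_{i_{k}}^{m}$ one agent at a time yields
\[
\max_{\pi_{\mathcal{C}_{i_{k}}^{m}\cup\{i_{k}\}}} V^{\pi_{\mathcal{C}_{i_{k}}^{m}\cup\{i_{k}\}}}(\mathbf{s}) - \max_{\pi_{\mathcal{C}_{i_{k}}^{m}}} V^{\pi_{\mathcal{C}_{i_{k}}^{m}}}(\mathbf{s}) \;\geq\; \max_{\pi_{\mathcal{D}_{k}\cup\{i_{k}\}}} V^{\pi_{\mathcal{D}_{k}\cup\{i_{k}\}}}(\mathbf{s}) - \max_{\pi_{\mathcal{D}_{k}}} V^{\pi_{\mathcal{D}_{k}}}(\mathbf{s}).
\]
Summing over $k=1,\ldots,|\mathcal{C}|$ is telescopic on the right-hand side and collapses, via $V^{\pi_{\emptyset}}(\mathbf{s})=0$ and $\mathcal{D}_{|\mathcal{C}|+1}=\mathcal{C}$, to exactly $\max_{\pi_{\mathcal{C}}} V^{\pi_{\mathcal{C}}}(\mathbf{s})$. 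Averaging the resulting bound over all $|\mathcal{N}|!$ permutations $m$ then delivers the desired Markov core inequality for every $\mathcal{C}$ and $\mathbf{s}$.

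The main obstacle is the inductive expansion of Remark~\ref{rmk:credit_assignment}: as stated, it compares two nested coalitions that differ by a single agent, whereas here $\mathcal{D}_{k}$ and $\mathcal{C}_{i_{k}}^{m}$ may differ by many agents. Peeling agents from $\mathcal{C}_{i_{k}}^{m}\setminus\mathcal{D}_{k}$ one at a time and chaining the single-agent bounds is routine, but the chain requires Assumptions~\ref{assm:agent_policy_assumption} and \ref{assm:assumption_for_joint_policy_factorisation} to guarantee that the policies attaining $\max_{\pi_{\cdot}}$ on the intermediate coalitions exist and are consistent under restriction. A secondary care point is the $\max_{\pi_{i}}/\max_{\pi_{\mathcal{C}_{i}}}$ interchange used to identify MSV's marginal contributions with the fully optimized increments needed in the telescoping step; this is precisely where Assumption~\ref{assm:max_shapley_value} earns its keep.
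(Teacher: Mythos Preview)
Your proposal is correct, and the underlying mathematics matches the paper's: both hinge on the supermodularity inequality of Remark~\ref{rmk:credit_assignment} to bound marginal contributions below by a telescoping sum, and then pass from single permutations to the Shapley average. The packaging differs, however. The paper does not argue the core inequality directly; instead it first isolates the per-permutation claim as Lemma~\ref{lemm:condition_coalition_marginal_contribution} (the optimal marginal-contribution vector for any fixed ordering already lies in the Markov core), separately proves that the Markov core is a convex set (Lemma~\ref{lemm:markov_core_convex_set}), and then concludes in one line by noting that the optimal Markov Shapley value is a convex combination of those per-permutation vectors. Your direct averaging of the per-permutation inequalities is the unfolded version of exactly this reasoning and simply sidesteps the convexity abstraction; the paper's modular route is shorter at the point of use but front-loads the work into the two lemmas.
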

        \begin{proof}
            The optimal Markov Shapley value is the affine combination of the optimal marginal contributions. We know that Markov core is a convex set by Lemma \ref{lemm:markov_core_convex_set} and the optimal marginal contribution is in the Markov core by Lemma \ref{lemm:condition_coalition_marginal_contribution}. Since the affine combination of the points in a convex set is still in this convex set, we get that the optimal Markov Shapley value is still in the Markov core.
        \end{proof}
        
        \begin{remark}
        \label{rmk:dummy_agents}
            For an arbitrary state $\mathbf{s} \in \mathcal{S}$, by C.2 in Condition \ref{cond:shapley_bellman_optimality_equation} it is not difficult to check that even if an arbitrary agent $i$ is a dummy (i.e., $Q^{\phi^{*}}_{i}(\mathbf{s}, a_{i}) = 0$ for some $i \in \mathcal{N}$), $Q^{\pi^{*}}(\mathbf{s}, \mathbf{a})$ and $Q^{\phi^{*}}_{j}(\mathbf{s}, a_{j}), \forall j \neq i$ would not be zero if $b_{i}(\mathbf{s}) \neq 0$. If the extreme case happens that for an arbitrary state $\mathbf{s} \in \mathcal{S}$ all agents are dummies, since $\sum_{i \in \mathcal{N}} w_{i}(\mathbf{s}, a_{i})^{-1} b_{i}(\mathbf{s}) = 0$ we are allowed to set $b_{i}(\mathbf{s}) = 0, \forall i \in \mathcal{N}$ so that $Q^{\pi^{*}}(\mathbf{s}, \mathbf{a}) = 0$ and the property of efficiency such that $\max_{\mathbf{a}} Q^{\pi^{*}}(\mathbf{s}, \mathbf{a}) = \sum_{i \in \mathcal{N}} \max_{a_{i}} Q_{i}^{\phi^{*}}(\mathbf{s}, a_{i})$ is still valid.
        \end{remark}
        
        \paragraph{Implementation of SHAQ.} We now describe a practical implementation of SHAQ for decentralized partially observable Markov decision process (Dec-POMDP) \cite{oliehoek2012decentralized} (i.e., a global reward game but with partial observations). First, the global state is replaced by the history of each agent to guarantee the existence of the optimal deterministic joint policy \cite{oliehoek2012decentralized}. Accordingly, Markov Shapley Q-value is denoted as $Q_{i}^{\phi}(\tau_{i}, a_{i})$, wherein $\tau_{i}$ is a history of partial observations of an arbitrary agent $i$. Since the paradigm of centralised training decentralised execution (CTDE) \cite{oliehoek2008optimal} is applied, the global state (i.e., $\mathbf{s} \in \mathcal{S}$) for $\hat{\alpha}_{i}(\mathbf{s}, a_{i})$ can be obtained during learning.
        \begin{proposition}
        \label{prop:shapley_value_approximate}
            Suppose that any action marginal contribution can be factorised into the form such that $\Upphi_{i}(\mathbf{s}, a_{i} | \mathcal{C}_{i}) = \sigma(\mathbf{s}, \mathbf{a}_{ \scriptscriptstyle\mathcal{C}_{i} \cup \{i\} }) \ \hat{Q}_{i}(\mathbf{s}, a_{i})$. With the condition that
            \begin{equation*}
                \mathbb{E}_{\mathcal{C}_{i} \sim Pr(\mathcal{C}_{i} | \mathcal{N} \backslash \{i\})} \left[ \sigma(\mathbf{s}, \mathbf{a}_{ \scriptscriptstyle\mathcal{C}_{i} \cup \{i\} }) \right] =
                \begin{cases} 
                     1 & \ \ a_{i} = \arg\max_{a_{i}} Q^{\phi}_{i}(\mathbf{s}, a_{i}), \\
                     K \in (0, 1) & \ \ a_{i} \neq \arg\max_{a_{i}} Q^{\phi}_{i}(\mathbf{s}, a_{i}),
                \end{cases}
            \end{equation*}
            we have
            \begin{equation}
                \begin{cases} 
                     Q_{i}^{\phi}(\mathbf{s}, a_{i}) = \hat{Q}_{i}(\mathbf{s}, a_{i}) & \ \ a_{i} = \arg\max_{a_{i}} \hat{Q}_{i}(\mathbf{s}, a_{i}), \\
                     \alpha_{i}(\mathbf{s}, a_{i}) \ Q^{\phi}_{i}(\mathbf{s}, a_{i}) = \hat{\alpha}_{i}(\mathbf{s}, a_{i}) \ \hat{Q}_{i}(\mathbf{s}, a_{i}) & \ \ a_{i} \neq \arg\max_{a_{i}} \hat{Q}_{i}(\mathbf{s}, a_{i}),
                \end{cases}
            \label{eq:shapley_q_approximate}
            \end{equation}
            where $\hat{\alpha}_{i}(\mathbf{s}, a_{i}) = \mathbb{E}_{\mathcal{C}_{i} \sim Pr(\mathcal{C}_{i} | \mathcal{N} \backslash \{i\})} \left[ \hat{\psi}_{i}(\mathbf{s}, a_{i}; \mathbf{a}_{ \scriptscriptstyle\mathcal{C}_{i} }) \right]$ and $\hat{\psi}_{i}(\mathbf{s}, a_{i}; \mathbf{a}_{ \scriptscriptstyle\mathcal{C}_{i} }) := \alpha_{i}(\mathbf{s}, a_{i}) \ \sigma(\mathbf{s}, \mathbf{a}_{ \scriptscriptstyle\mathcal{C}_{i} \cup \{i\} })$.
        \end{proposition}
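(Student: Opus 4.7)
The proof is essentially algebraic. My plan is to start from the definitional expectation form of the Markov Shapley Q-value, substitute the assumed factorisation, and then translate the case condition stated in terms of $\arg\max_{a_{i}} Q^{\phi}_{i}$ into the desired conclusion stated in terms of $\arg\max_{a_{i}} \hat{Q}_{i}$.

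First, by Definition~\ref{def:shapley_value} together with Remark~\ref{rmk:coalition_generation}, I would rewrite the weighted sum in Eq.~\ref{eq:shapley_q_value} as an expectation, so that $Q^{\phi}_{i}(\mathbf{s}, a_{i}) = \mathbb{E}_{\mathcal{C}_{i} \sim Pr(\mathcal{C}_{i} | \mathcal{N} \backslash \{i\})}[\Upphi_{i}(\mathbf{s}, a_{i} | \mathcal{C}_{i})]$. Substituting the postulated factorisation $\Upphi_{i}(\mathbf{s}, a_{i} | \mathcal{C}_{i}) = \sigma(\mathbf{s}, \mathbf{a}_{\mathcal{C}_{i} \cup \{i\}})\,\hat{Q}_{i}(\mathbf{s}, a_{i})$ and noting that $\hat{Q}_{i}(\mathbf{s}, a_{i})$ carries no dependence on $\mathcal{C}_{i}$, I would pull it outside the expectation to obtain the master identity $Q^{\phi}_{i}(\mathbf{s}, a_{i}) = \hat{Q}_{i}(\mathbf{s}, a_{i}) \cdot \mathbb{E}_{\mathcal{C}_{i}}[\sigma(\mathbf{s}, \mathbf{a}_{\mathcal{C}_{i} \cup \{i\}})]$.

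The next step is to reconcile the two arg-max conventions, since the hypothesis labels the two cases by $\arg\max_{a_{i}} Q^{\phi}_{i}$ whereas the conclusion labels them by $\arg\max_{a_{i}} \hat{Q}_{i}$. Using the master identity together with non-negativity of the coalition Q-values, I would argue that these arg-max sets coincide: if $a^{\star} = \arg\max_{a_{i}} \hat{Q}_{i}(\mathbf{s}, a_{i})$, then for any $a_{i} \neq a^{\star}$ the hypothesis gives $Q^{\phi}_{i}(\mathbf{s}, a_{i}) = K\,\hat{Q}_{i}(\mathbf{s}, a_{i}) \leq K\,\hat{Q}_{i}(\mathbf{s}, a^{\star}) \leq \hat{Q}_{i}(\mathbf{s}, a^{\star}) = Q^{\phi}_{i}(\mathbf{s}, a^{\star})$ because $K \in (0,1)$ and $\hat{Q}_{i} \geq 0$, so $a^{\star}$ is also a maximiser of $Q^{\phi}_{i}$. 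With this identification, Case~1 of Eq.~\ref{eq:shapley_q_approximate} follows immediately from $\mathbb{E}[\sigma] = 1$, while Case~2 yields $Q^{\phi}_{i}(\mathbf{s}, a_{i}) = K\,\hat{Q}_{i}(\mathbf{s}, a_{i})$.

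For Case~2, multiplying by $\alpha_{i}(\mathbf{s}, a_{i})$ gives $\alpha_{i} Q^{\phi}_{i} = K\,\alpha_{i}\,\hat{Q}_{i}$, and it remains to recognise $K\,\alpha_{i}$ as $\hat{\alpha}_{i}$. By the definition $\hat{\alpha}_{i} = \mathbb{E}_{\mathcal{C}_{i}}[\hat{\psi}_{i}] = \mathbb{E}_{\mathcal{C}_{i}}[\alpha_{i}\,\sigma]$, and since $\alpha_{i}(\mathbf{s}, a_{i})$ is independent of $\mathcal{C}_{i}$, the factor $\alpha_{i}$ comes out of the expectation and leaves $\hat{\alpha}_{i} = \alpha_{i} \cdot K$, closing the identity $\alpha_{i} Q^{\phi}_{i} = \hat{\alpha}_{i} \hat{Q}_{i}$. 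The only delicate step is the arg-max reconciliation above; the remainder is routine use of linearity of expectation and the independence of $\hat{Q}_{i}$ and $\alpha_{i}$ from the sampled coalition $\mathcal{C}_{i}$.
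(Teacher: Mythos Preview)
Your proposal is correct and mirrors the paper's argument: write $Q^{\phi}_{i}$ as the expectation $\mathbb{E}_{\mathcal{C}_{i}}[\Upphi_{i}]$, substitute the factorisation, pull $\hat{Q}_{i}$ outside, and then handle the two cases by linearity (with $\alpha_{i}$ likewise pulled through the expectation to produce $\hat{\alpha}_{i}$). The only stylistic difference is that the paper keeps the case split on $\arg\max Q^{\phi}_{i}$ throughout and remarks at the end that $\hat{Q}_{i}$ is just a positive rescaling of $Q^{\phi}_{i}$ so ``the decisions are consistent'', whereas you try to establish the $\arg\max$ coincidence up front; note that your chain of inequalities there is slightly circular (you apply the hypothesis's case labels, which are keyed on $\arg\max Q^{\phi}_{i}$, to actions indexed by $\arg\max \hat{Q}_{i}$), but the paper's treatment of this point is no more formal.
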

        \begin{proof}
            See the detailed proof in Appendix \ref{sec:proof_of_the_implementation_of_shapley_q-learning}.    
        \end{proof}
        
        Compatible with the decentralised execution, we use only one parametric function $\hat{Q}_{i}(\tau_{i}, a_{i})$ to directly approximate $Q_{i}^{\phi}(\tau_{i}, a_{i})$. Under some conditions (see Proposition \ref{prop:shapley_value_approximate}), the information about coalition formation can be equivalently transferred to $\hat{\psi}_{i}(\mathbf{s}, a_{i}; \mathbf{a}_{ \scriptscriptstyle\mathcal{C}_{i} })$. As a result, $\delta_{i}(\mathbf{s}, a_{i})$ is equivalent to the form as follows:
        \begin{equation}
            \hat{\delta}_{i}(\mathbf{s}, a_{i}) = \begin{cases} 
                                                  1 & a_{i} = \arg\max_{a_{i}} \hat{Q}_{i}(\mathbf{s}, a_{i}), \\
                                                  \hat{\alpha}_{i}(\mathbf{s}, a_{i}) & a_{i} \neq \arg\max_{a_{i}} \hat{Q}_{i}(\mathbf{s}, a_{i}),
                                             \end{cases}
        \label{eq:delta_new}
        \end{equation}
        where $\hat{\alpha}_{i}(\mathbf{s}, a_{i}) = \mathbb{E}_{\mathcal{C}_{i} \sim Pr(\mathcal{C}_{i} | \mathcal{N} \backslash \{i\})} \left[ \hat{\psi}_{i}(\mathbf{s}, a_{i}; \mathbf{a}_{ \scriptscriptstyle\mathcal{C}_{i} }) \right]$. To solve partial observability, $\hat{Q}_{i}(\tau_{i}, a_{i})$ is empirically represented as recurrent neural network (RNN) with GRUs \cite{chung2014empirical}. $\hat{\psi}_{i}(\mathbf{s}, a_{i}; \mathbf{a}_{ \scriptscriptstyle\mathcal{C}_{i} })$ is directly approximated by a parametric function $\mathlarger{F}_{\mathbf{s}} + 1$ and thus $\hat{\alpha}_{i}(\mathbf{s}, a_{i})$ can be expressed as the following equation such that
        \begin{equation}
            \hat{\alpha}_{i}(\mathbf{s}, a_{i}) = \frac{1}{M}\sum_{k = 1}^{M} \mathlarger{F}_{\mathbf{s}} \Big( \hat{Q}_{\mathcal{C}_{i}^{k}}(\tau_{\mathcal{C}_{i}^{k}}, \mathbf{a}_{\mathcal{C}_{i}^{k}}), \ \hat{Q}_{i}(\tau_{i}, a_{i}) \Big) + 1,
        \label{eq:alpha_deep_representation}
        \end{equation}
        where $\hat{Q}_{\mathcal{C}_{i}^{k}}(\tau_{\mathcal{C}_{i}^{k}}, \mathbf{a}_{\mathcal{C}_{i}^{k}}) = \frac{1}{|\mathcal{C}_{i}^{k}|}\sum_{j \in \mathcal{C}_{i}^{k}} \hat{Q}_{j}(\tau_{j}, a_{j})$ and $\mathcal{C}_{i}^{k}$ is sampled $\mathit{M}$ times from $\mathit{Pr}(\mathcal{C}_{i} | \mathcal{N} \backslash \{i\})$ (implemented as Remark \ref{rmk:coalition_generation} suggests) to approximate $\mathbb{E}_{\mathcal{C}_{i} \sim \mathit{Pr}(\mathcal{C}_{i} | \mathcal{N} \backslash \{i\})}[ \hat{\psi}_{i}(\mathbf{s}, a_{i}; \mathbf{a}_{ \scriptscriptstyle\mathcal{C}_{i} }) ]$ using Monte Carlo approximation; and $\mathlarger{F}_{\mathbf{s}}$ is a monotonic function, with an absolute activation function on the output, whose weights are generated from hyper-networks with the global state as the input. We show that Eq.~\ref{eq:alpha_deep_representation} satisfies the condition that $\max_{\mathbf{s}} \big\{ \sum_{i \in \mathcal{N}} \max_{a_{i}} w_{i}(\mathbf{s}, a_{i}) \big\} < \frac{1}{\gamma}$ (see Proposition \ref{prop:hatalpha_satisfying_condition}) and therefore the implementation of SHAQ is reliable and trustworthy.
        \begin{proposition}
        \label{prop:hatalpha_satisfying_condition}
            $\hat{\alpha}_{i}(\mathbf{s}, a_{i})$ satisfies the condition $\max_{\mathbf{s}} \big\{ \sum_{i \in \mathcal{N}} \max_{a_{i}} w_{i}(\mathbf{s}, a_{i}) \big\} < \frac{1}{\gamma}$.
        \end{proposition}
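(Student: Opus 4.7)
The plan is to unwind the definition of $\hat{\alpha}_{i}$ back into a bound on $w_{i}(\mathbf{s}, a_{i})$ and then sum over agents. Recall from the derivation leading to Shapley Q-learning (see Eq.~\ref{eq:delta_1}) that the relation $\delta_{i}(\mathbf{s}, a_{i}) = \frac{1}{|\mathcal{N}|\, w_{i}(\mathbf{s}, a_{i})}$ holds by construction. In the implementation this becomes $\hat{\delta}_{i}(\mathbf{s}, a_{i})$ as defined in Eq.~\ref{eq:delta_new}, so that the effective weights satisfy $w_{i}(\mathbf{s}, a_{i}) = \frac{1}{|\mathcal{N}|\, \hat{\delta}_{i}(\mathbf{s}, a_{i})}$ for every action, with $\hat{\delta}_{i} = 1$ at the greedy action and $\hat{\delta}_{i} = \hat{\alpha}_{i}(\mathbf{s}, a_{i})$ otherwise.

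First I would establish the key inequality $\hat{\alpha}_{i}(\mathbf{s}, a_{i}) \geq 1$ for every $i$, $\mathbf{s}$, $a_{i}$. This follows directly from the structural choice in Eq.~\ref{eq:alpha_deep_representation}: because $\mathlarger{F}_{\mathbf{s}}$ is composed with an absolute-value activation on its output, each summand satisfies $\mathlarger{F}_{\mathbf{s}}(\cdot, \cdot) \geq 0$, so the Monte Carlo average is non-negative and the additive constant $+1$ forces $\hat{\alpha}_{i}(\mathbf{s}, a_{i}) \geq 1$.

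Next I would translate this into a uniform bound on $w_{i}$. At the greedy action $w_{i}(\mathbf{s}, a_{i}) = 1/|\mathcal{N}|$ exactly; at any non-greedy action $w_{i}(\mathbf{s}, a_{i}) = \frac{1}{|\mathcal{N}|\, \hat{\alpha}_{i}(\mathbf{s}, a_{i})} \leq \frac{1}{|\mathcal{N}|}$. Hence $\max_{a_{i}} w_{i}(\mathbf{s}, a_{i}) \leq 1/|\mathcal{N}|$ for every state $\mathbf{s}$ and every agent $i$. Summing over agents,
\begin{equation*}
\max_{\mathbf{s}} \Bigl\{ \sum_{i \in \mathcal{N}} \max_{a_{i}} w_{i}(\mathbf{s}, a_{i}) \Bigr\} \leq |\mathcal{N}| \cdot \frac{1}{|\mathcal{N}|} = 1.
\end{equation*}
Since $\gamma \in (0,1)$ implies $1/\gamma > 1$, the required strict inequality $\sum_{i} \max_{a_{i}} w_{i}(\mathbf{s}, a_{i}) \leq 1 < 1/\gamma$ follows and the proposition is proved.

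The plan is essentially a one-line verification once the reader sees that the $+1$ shift and the absolute activation in Eq.~\ref{eq:alpha_deep_representation} were introduced precisely to enforce $\hat{\alpha}_{i} \geq 1$. The only mildly subtle step is bookkeeping between $\hat{\alpha}_{i}$, $\hat{\delta}_{i}$, and $w_{i}$ through the identity $w_{i} = (|\mathcal{N}|\, \hat{\delta}_{i})^{-1}$; once that identity is made explicit the bound is immediate, so I do not anticipate a genuine technical obstacle here.
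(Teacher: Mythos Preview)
Your argument is correct and follows essentially the same route as the paper: show $\hat{\alpha}_{i}\geq 1$ from the absolute activation plus the $+1$ shift, convert via $\delta_{i}=1/(|\mathcal{N}|\,w_{i})$ to get $w_{i}\leq 1/|\mathcal{N}|$, sum to obtain $\leq 1<1/\gamma$. The only bookkeeping nuance the paper adds is that $w_{i}$ is tied to the theoretical $\delta_{i}$ (hence $\alpha_{i}$), not directly to $\hat{\delta}_{i}$; it uses the relation $\alpha_{i}=K^{-1}\hat{\alpha}_{i}$ with $K\in(0,1)$ from Proposition~\ref{prop:shapley_value_approximate} to pass from $\hat{\alpha}_{i}\geq 1$ to $\alpha_{i}>1$, but this changes nothing in the final bound since the greedy action already saturates $w_{i}=1/|\mathcal{N}|$.
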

        \begin{proof}
            See the detailed proof in Appendix \ref{sec:proof_of_the_implementation_of_shapley_q-learning}.
        \end{proof}
        
        Using the framework of fitted Q-learning \cite{ernst2005tree} to solve an extremely large number of states (which could be usually infinite) and plugging in the above designed modules, the practical least-square-error loss function derived from Eq.~\ref{eq:td_error_shapley_q_learning} can be expressed as follows:
        \begin{equation}
            \begin{split}
                \min_{\theta, \lambda} \mathbb{E}_{\mathbf{s}, \mathbf{\tau}, \mathbf{a}, R, \mathbf{\tau}'} \left[ \ \left( \ R \ + \ 
                \gamma \sum_{i \in \mathcal{N}} \max_{a_{i}} \hat{Q}_{i}(\tau_{i}', a_{i}; \theta^{-})
                - 
                \sum_{i \in \mathcal{N}} \hat{\delta}_{i}(\mathbf{s}, a_{i}; \lambda) \ \hat{Q}_{i}(\tau_{i}, a_{i}; \theta) \ \right)^{2} \ \right],
            \end{split}
        \label{eq:deep_shapley_q_learning_loss}
        \end{equation}
        where all agents share the parameters of $\hat{Q}_{i}(\mathbf{s}, a_{i}; \theta)$ and $\hat{\alpha}_{i}(\mathbf{s}, a_{i}; \lambda)$ respectively; and $\hat{Q}_{i}(\mathbf{s}', a_{i}; \theta^{-})$ works as the target Q-value function where $\theta^{-}$ is periodically updated. The learning procedure follows the paradigm of DQN \cite{mnih2015human}, with a replay buffer to store the online collection of agents' experiences. To provide an overview of Shapley Q-learning, we present the pseudo code in Algorithm \ref{alg:shapley_q}.
        \begin{algorithm}[ht!]
            \caption{Shapley Q-learning}
            \label{alg:shapley_q}
            \scalebox{0.95}{
            \begin{minipage}{1.0\linewidth}
            \begin{algorithmic}[1]
                \STATE Initialise a set of agents $\mathcal{N}$ and set $N = |\mathcal{N}|$
                \STATE Initialise $\hat{Q}_{i}(\tau_{i}, a_{i}; \theta)$ with the shared parameters among agents
                \STATE Initialise $\hat{\alpha}_{i}(\mathbf{s}, a_{i}; \lambda)$ with the shared parameters among agents
                \STATE Initialise $\hat{Q}_{i}(\tau_{i}, a_{i}; \theta^{-})$ by copying $\hat{Q}_{i}(\tau_{i}, a_{i}; \theta)$ with the shared parameters among agents
                \STATE Initialise a replay buffer $\mathcal{B}$
                \REPEAT
                    \STATE Initialise a container $\mathcal{E}$ for storing an episode
                    \STATE Observe an initial global state $\mathbf{s}^{1}$ and each agent's partial observation $\mathit{o}_{i}^{1}$ from an environment
                    \FOR{t=1:T}
                        \STATE Get $\tau_{i}^{t} = (o_{i}^{m})_{m=1:t}$ for each agent
                        \STATE For each agent $\mathit{i}$, select an action 
                        \begin{equation*}
                            \mathit{a}_{i}^{t} = \begin{cases}
                                                \text{a random action} & \text{with probability }\epsilon \\
                                                \arg\max_{a_{i}} \hat{Q}^{*}_{i}(\tau_{i}^{t}, a_{i}; \theta) & \text{otherwise}
                                             \end{cases}
                        \end{equation*}
                        \STATE Execute $\mathit{a}_{i}^{t}$ of each agent to get the global reward $\mathit{R}^{t}$, $\mathbf{s}^{t+1}$ and each agent's $\mathit{o}_{i}^{t+1}$
                        \STATE Store $\big\langle \mathbf{s}^{t}, (o_{i}^{t})_{i=1:N}, (a_{i}^{t})_{i=1:N}, R^{t}, \mathbf{s}^{t+1}, (o_{i}^{t+1})_{i=1:N} \big\rangle$ to $\mathcal{E}$
                    \ENDFOR
                    \STATE Store $\mathcal{E}$ to $\mathcal{B}$
                    \STATE Sample a batch of episodes with batch size B from $\mathcal{B}$
                        \FOR{each sampled episode}
                            \FOR{k=1:T}
                                \STATE Get each transition $\big\langle \mathbf{s}^{k}, (o_{i}^{k})_{i=1:N}, (a_{i}^{k})_{i=1:N}, R^{k}, \mathbf{s}^{k+1}, (o_{i}^{k+1})_{i=1:N} \big\rangle$
                                \STATE For each agent $\mathit{i}$, get $\tau_{i}^{k} = (o_{i}^{m})_{m=1:k}$
                                \STATE For each agent $\mathit{i}$, calculate $\hat{Q}_{i}(\tau_{i}^{k}, a_{i}^{k}; \theta)$
                                \STATE For each agent $\mathit{i}$, calculate $\alpha_{i}(\mathbf{s}^{k}, a_{i}^{k}; \lambda)$ by Algorithm \ref{alg:getting_alpha}
                                \STATE For each agent $\mathit{i}$, calculate $\delta_{i}(\mathbf{s}^{k}, a_{i}^{k}; \lambda)$ as follows:
                                    \begin{equation*}
                                        \hat{\delta}_{i}(\mathbf{s}^{k}, a_{i}^{k}; \lambda) = \begin{cases} 
                                                                              1 & a_{i}^{k} = \arg\max_{a_{i}} \hat{Q}_{i}(\mathbf{s}^{k}, a_{i}; \theta) \\
                                                                              \hat{\alpha}_{i}(\mathbf{s}^{k}, a_{i}^{k}; \lambda) & a_{i}^{k} \neq \arg\max_{a_{i}} \hat{Q}_{i}(\mathbf{s}^{k}, a_{i}; \theta) \ \ (\text{via Algorithm \ref{alg:getting_alpha}})
                                                                         \end{cases}
                                    % \label{eq:delta}
                                    \end{equation*}
                                \STATE For each agent $\mathit{i}$, get $\tau_{i}^{k+1} = (o_{i}^{m})_{m=1:k+1}$
                                \STATE For each agent $\mathit{i}$, get $\mathit{a}_{i}^{k+1}$ by $\arg\max_{a_{i}} \hat{Q}_{i}(\tau_{i}^{k+1}, a_{i}; \theta)$
                                \STATE For each agent $\mathit{i}$, calculate $\hat{Q}_{i}(\tau_{i}^{k+1}, a_{i}^{k+1}; \theta^{-})$
                            \ENDFOR
                        \ENDFOR
                        \STATE Construct a loss as follows:
                            \begin{equation*}
                                \begin{split}
                                    \min_{\theta, \lambda} \frac{1}{B} \sum_{k=1}^{B} \Big[ \ \big( \ R^{k} \ + \ 
                                    \gamma \sum_{i \in \mathcal{N}} \max_{a_{i}^{k}} \hat{Q}_{i}(\tau_{i}^{k+1}, a_{i}^{k+1}; \theta^{-}) - 
                                    \sum_{i \in \mathcal{N}} \hat{\delta}_{i}(\mathbf{s}^{k}, a_{i}^{k}; \lambda) \ \hat{Q}_{i}(\tau_{i}^{k}, a_{i}^{k}; \theta) \ \big)^{2} \ \Big]
                                \end{split}
                            % \label{eq:deep_shapley_q_learning_loss}
                            \end{equation*}
                        \STATE Update $\theta$ and $\lambda$ through the above loss
                        \STATE Periodically update $\theta^{-}$ by copying $\theta$
               \UNTIL{$\hat{Q}_{i}(\tau_{i}, a_{i}; \theta)$ converges}
            \end{algorithmic}
            \end{minipage}
            }
        \end{algorithm}
        
        \begin{algorithm}[ht!]
            \caption{Calculating $\hat{\alpha}_{i}(\mathbf{s}, a_{i})$}
            \label{alg:getting_alpha}
            \scalebox{0.95}{
            \begin{minipage}{1.0\linewidth}
            \begin{algorithmic}[1]
               \STATE {\bfseries Input:} $\mathbf{s}$, $\big( \hat{Q}_{i}(\tau_{i}, a_{i}; \theta) \big)_{i=1:N}$, $\mathit{M}$
               \STATE {\bfseries Output: $\big( \hat{\alpha}_{i}(\mathbf{s}, a_{i}) \big)_{i=1:N}$}
               \FOR{each agent $\mathit{i}$}
                    \STATE Sample $\mathit{M}$ preceding coalitions $\mathcal{C}_{i}^{k} \sim \mathit{Pr}(\mathcal{C}_{i} | \mathcal{N} \backslash \{i\})$
                    \FOR{k=1:M}
                        \STATE Get $\hat{Q}_{\mathcal{C}_{i}^{k}}(\tau_{\mathcal{C}_{i}^{k}}, \mathbf{a}_{\mathcal{C}_{i}^{k}}) = \frac{1}{|\mathcal{C}_{i}^{k}|}\sum_{j \in \mathcal{C}_{i}^{k}} \hat{Q}_{j}(\tau_{j}, a_{j})$
                    \ENDFOR
                    \STATE Get $\hat{\alpha}_{i}(\mathbf{s}, a_{i}) = \mathlarger{\frac{1}{M}}\mathlarger{\sum}_{k = 1}^{M} \mathlarger{F}_{\mathbf{s}} \Big( \hat{Q}_{\mathcal{C}_{i}^{k}}(\tau_{\mathcal{C}_{i}^{k}}, \mathbf{a}_{\mathcal{C}_{i}^{k}}), \ \hat{Q}_{i}(\tau_{i}, a_{i}) \Big) + 1$
                \ENDFOR
            \end{algorithmic}
            \end{minipage}
            }
        \end{algorithm}
        
    \subsection{Shapley Value Based Multi-Agent Deterministic Policy Gradient}
    \label{subsec:shapley_q_value_for_multi-agent_policy_gradient}
        Since the Q-learning algorithm usually cannot deal with the problem with continuous actions, we further propose a deep deterministic policy gradient (DDPG) algorithm \cite{lillicrap2015continuous} based on Markov Shapley value, named as Shapley Q-value deep deterministic policy gradient (SQDDPG). First, SQDDPG belongs to actor-critic methods (i.e., an approximation of generalized policy iteration), while SHAQ belongs to value based methods (i.e., an approximation of value iteration). Since the value iteration is equivalent to the policy iteration with one-step policy evaluation, we can directly learn an explicit joint policy by optimizing the Markov Shapley Q-value for SQDDPG, as per the theory behind SHAQ. Specifically, the learning procedure of SQDDPG repeatedly performs the following two-step process:
        \begin{equation}
        \label{eq:sqddpg_v0}
            \begin{split}
                &\textbf{Step 1:} \quad \min_{\theta} \mathbb{E}_{\mathbf{s}, \mathbf{a}, R, \mathbf{s}'} \left[ \ \left( \ R \ + \ 
                \gamma \sum_{i \in \mathcal{N}} \hat{Q}_{i}^{\phi}(\mathbf{s}', a_{i}'; \theta^{-})
                - 
                \sum_{i \in \mathcal{N}} \hat{\delta}_{i}(\mathbf{s}, a_{i}; \lambda) \hat{Q}_{i}^{\phi}(\mathbf{s}, a_{i}; \theta) \ \right)^{2} \ \right]. \\
                &\textbf{Step 2:} \quad \pi_{i}(\mathbf{s}) \in \arg\max_{a_{i}} \hat{Q}_{i}^{\phi}(\mathbf{s}, a_{i}; \theta).
            \end{split}
        \end{equation}
        
        It is not difficult to observe that SQDDPG defined in Eq.~\ref{eq:sqddpg_v0} ideally converges to the same optimal Markov Shapley Q-values as SHAQ does (due to the equivalence between one-step policy iteration and value iteration) such that 
        \begin{equation}
        \label{eq:sqddpg_convergence}
            \mathbb{E}_{\mathbf{s}, \mathbf{s}'} \left[ \ \left( \ \max_{\mathbf{a}} R(\mathbf{s}, \mathbf{a}) \ + \ 
                \gamma \sum_{i \in \mathcal{N}} \max_{a_{i}'} \hat{Q}_{i}^{\phi^{*}}(\mathbf{s}', a_{i}')
                - 
                \sum_{i \in \mathcal{N}} \max_{a_{i}} \hat{Q}_{i}^{\phi^{*}}(\mathbf{s}, a_{i}) \ \right)^{2} \ \right] = 0.
        \end{equation}
        Nevertheless, when we derive SHAQ the important factor to decide $\hat{\delta}_{i}(\mathbf{s}, a_{i}; \lambda)$ is the solution of $w_{i}(\mathbf{s}, a_{i}^{*})$ that is defined as $1/|\mathcal{N}|$ in Proposition \ref{prop:equiv_credit_assignment}. Different from the discrete action space considered in SHAQ, for the continuous action space considered in SQDDPG, it is expensive to distinguish the optimal actions from the sub-optimal actions in practice. The main reason is that the continuous actions cannot be processed in parallel, while the discrete actions can. To address this problem, we propose an approximation. In more details, it regards the actions for exploration collected from the joint behaviour policy (i.e., an isotropic multivariable Gaussian distribution with a fixed variance) as the optimal actions. In other words, the Gaussian distribution takes the place of the original deterministic policy and becomes the optimal policy. Accordingly, the joint target policy also becomes a Gaussian distribution to maintain the consistency. As a result, Step 1 becomes the following objective function such that
        \begin{equation}
            \min_{\theta} \mathbb{E}_{\mathbf{s}, \mathbf{a}, R, \mathbf{s}', \mathbf{a}'} \left[ \ \left( \ R \ + \ 
                \gamma \sum_{i \in \mathcal{N}} \hat{Q}_{i}^{\phi}(\mathbf{s}', a_{i}'; \theta^{-})
                - 
                \sum_{i \in \mathcal{N}} \hat{Q}_{i}^{\phi}(\mathbf{s}, a_{i}; \theta) \ \right)^{2} \ \right].
        \end{equation}
        
        To guarantee that the joint policy is an $\epsilon$-soft policy (i.e., $\pi(\mathbf{a} | \mathbf{s}) > 0$), we additionally bound the random variable so that it becomes a truncated Gaussian distribution. Owing to the $\epsilon$-soft policy theorem \cite{sutton2018reinforcement}, the above approximation is guaranteed to converge to the nearly-optimal joint policy. Since the Gaussian distribution is formed by adding a white noise to the joint deterministic policy, the mode of $\pi(\mathbf{a} | \mathbf{s})$ (i.e., the mean of the Gaussian distribution) should be the optimal action. In other words, the original deterministic policy gradient is still valid under optimizing the mean of each agent's Gaussian distribution $\mu_{\theta_{i}}$.\footnote{Note that it is consistent with the Step 1 that the we tend to obtain a nearly-optimal policy by a Gaussian distribution centered with the optimal action for each agent.} Applying the deterministic policy gradient \cite{silver2014deterministic}, Step 2 becomes the following update operation such that
        \begin{equation}
        \label{eq:sqddpg_policy_improvement}
            \theta_{i} \leftarrow \theta_{i} + \alpha_{i} \nabla_{\theta_{i}}J(\theta_{i}),
        \end{equation}
        where
        \begin{equation}
            \nabla_{\theta_{i}}J(\theta_{i}) = \mathbb{E} \left[ \nabla_{\theta_{i}} \mu_{\theta_{i}}(\textbf{s}) \nabla_{a_{i}} \hat{Q}_{i}^{\phi}(\textbf{s}, a_{i}; \omega_{i}) \mathlarger{\mathlarger{|}}_{a_{i}=\mu_{\theta_{i}}(\textbf{s})} \right].
        \end{equation}
        
        About representation of the marginal contribution of an agent $i$, the input is the concatenation of two parts: information of the agent $i$ and information of the agents in the coalition $\mathcal{C}$. More specifically, the mathematical expression of the agent $i$'s marginal contribution is expressed as follows:
        \begin{equation}
        \label{eq:marginal_contribution}
            \hat{\Upphi}_{i}(\textbf{s}, \textbf{a}_{\scriptscriptstyle \mathcal{C} \cup \{i\}}): \mathcal{S} \times \mathcal{A}_{\scriptscriptstyle \mathcal{C}} \times \mathcal{A}_{i} \mapsto \mathbb{R},
        \end{equation}
        where $\mathcal{S}$ indicates a set of the global state of an environment; $\mathcal{A}_{\scriptscriptstyle \mathcal{C}}$ indicates actions of the agents in the coalition $\mathcal{C}$ that is invariant among different permutations of agents; and $\mathcal{A}_{i}$ indicates the agent $i$'s action. Followed by the probabilistic view that has been introduced in Section \ref{subsec:generalised_shapley_value_for_mcg}, Markov Shapley Q-value can be rewritten as follows:
        \begin{equation}
        \label{eq:asq_org}
            Q_{i}^{\phi}(\textbf{s}, a_{i}) = \mathbb{E}_{\mathcal{C} \sim Pr(\mathcal{C}|\mathcal{N} \backslash \{i\})} \left[ \Upphi_{i}(\mathbf{s}, a_{i} | \mathcal{C}_{i}) \right].
        \end{equation}
        
        To enable Eq.~\ref{eq:asq_org} to be tractable in realization, we use Monte Carlo estimation to approximate $Q_{i}^{\Phi}(\textbf{s}, a_{i})$ here. Also, substituting the marginal contribution for the approximate one in Eq.~\ref{eq:marginal_contribution}, we can approximate the Markov Shapley Q-value as the expression such that
        \begin{equation}
        \label{eq:shapley_approx}
            \hat{Q}_{i}^{\phi}(\textbf{s}, a_{i}) \approx \frac{1}{M} \sum_{k=1}^{M} \hat{\Upphi}_{i}(\textbf{s}, \textbf{a}_{\scriptscriptstyle \mathcal{C}_{k} \cup \{i\}}), \ \ \forall \mathcal{C}_{k} \sim Pr(\mathcal{C}|\mathcal{N} \backslash \{i\}).
        \end{equation}
        
        We show the pseudo code of SQDDPG in Algorithm \ref{alg:sqpg}.
        \begin{algorithm*}[ht!]
        \caption{Shapley Q-value deep deterministic policy gradient (SQDDPG)}
        \label{alg:sqpg}
            \begin{algorithmic}[1]
                \STATE Initialize actor parameters $\theta_{i}$, and critic (AMC) parameters $\omega_{i}$ for each agent $i \in \mathcal{N}$
                \STATE Initialize target actor parameters $\theta_{i}'$, and target critic parameters $\omega_{i}'$ for each agent $i \in \mathcal{N}$
                \STATE Initialize the sample size $M$ for approximating Markov Shapley Q-value
                \STATE Initialize the learning rate $\tau$ for updating target network
                \STATE Initialize the discount rate $\gamma$
                \FOR{episode = 1 to D}
                    \STATE Observe initial state $\textbf{s}_{1}$ from the environment
                    \FOR{t = 1 to T}
                        \STATE $u_{i} \gets \mu_{\theta_{i}}(\textbf{s}_{t}) + N_{t}$ for each agent $i$ 
                        % \COMMENT{\small{Select action according to the current policy and exploration noise.}}
                        \normalsize
                        \STATE Execute actions $\textbf{u}_{t}=\times_{i \in \mathcal{N}} u_{i}$ and observe the global reward $R_{t}$ and the next state $\textbf{s}_{t+1}$
                        \STATE Store $(\textbf{s}_{t}, \textbf{u}_{t}, R_{t}, \textbf{s}_{t+1})$ in the replay buffer $\mathcal{B}$
                        \STATE Sample a minibatch of G samples $(\textbf{s}_{k}, \textbf{u}_{k}, R_{k}, \textbf{s}_{k+1})$ from $\mathcal{B}$
                        \STATE Get $\textbf{a}_{k} = \times_{i \in \mathcal{N}} \mu_{\theta_{i}}(\textbf{s}_{k})$ for each sample $(\textbf{s}_{k}, \textbf{u}_{k}, R_{k}, \textbf{s}_{k+1})$
                        \STATE Get $\hat{\textbf{a}}_{k} = \times_{i \in \mathcal{N}} \left\{ \mu_{\theta_{i}'}(\textbf{s}_{k+1}) + N_{t} \right\}$ for each sample $(\textbf{s}_{k}, \textbf{u}_{k}, R_{k}, \textbf{s}_{k+1})$
                        \FOR{each agent $i$}
                            \normalsize
                            \STATE Sample $M$ ordered coalitions by $\mathcal{C} \sim Pr(\mathcal{C} | \mathcal{N} \backslash \{i\})$
                            \FOR{each sampled coalition $\mathcal{C}_{m}$}
                                \normalsize
                                \STATE Mask the irrelevant agents' actions for $\textbf{a}_{k}$, storing it to $\textbf{a}_{k}^{m}$
                                \STATE Mask the irrelevant agents' actions for $\hat{\textbf{a}}_{k}$, storing it to $\hat{\textbf{a}}_{k}^{m}$
                                \STATE Mask the irrelevant agents' actions for $\textbf{u}_{k}$ , storing it to $\textbf{u}_{k}^{m}$
                            \ENDFOR
                            \STATE Get $\hat{Q}_{i, k}^{\phi}(\textbf{s}_{k}, a_{i}; \omega_{i}) \gets \frac{1}{M} \sum_{m=1}^{M} \hat{\Upphi}_{i}(\textbf{s}_{k}, \textbf{a}_{k}^{m}; \omega_{i})$ for each sample $(\textbf{s}_{k}, \textbf{u}_{k}, R_{k}, \textbf{s}_{k+1})$
                            \STATE Get $\hat{Q}_{i, k}^{\phi}(\textbf{s}_{k}, u_{i}; \omega_{i}) \gets \frac{1}{M} \sum_{m=1}^{M} \hat{\Upphi}_{i}(\textbf{s}_{k}, \textbf{u}_{k}^{m}; \omega_{i})$ for each sample $(\textbf{s}_{k}, \textbf{u}_{k}, R_{k}, \textbf{s}_{k+1})$
                            \STATE Get $\hat{Q}_{i, k}^{\phi}(\textbf{s}_{k}, \hat{a}_{i}; \omega_{i}') \gets \frac{1}{M} \sum_{m=1}^{M} \hat{\Upphi}_{i}(\textbf{s}_{k}, \hat{\textbf{a}}_{k}^{m}; \omega_{i}')$ for each sample $(\textbf{s}_{k}, \textbf{u}_{k}, R_{k}, \textbf{s}_{k+1})$
                            \normalsize
                            \STATE Update $\theta_{i}$ by deterministic policy gradient:
                            \begin{equation*}
                                \nabla_{\theta_{i}}J(\theta_{i}) = \frac{1}{G} \sum_{k=1}^{G} \left\{ \nabla_{\theta_{i}} \mu_{\theta_{i}}(\textbf{s}_{k}) \nabla_{a_{i}} \hat{Q}_{i, k}^{\phi}(\textbf{s}_{k}, a_{i}; \omega_{i})|_{a_{i}=\mu_{\theta_{i}}(\textbf{s}_{k})} \right\}
                            \end{equation*}
                        \ENDFOR
                        \STATE Set $y_{k} = R_{k} + \gamma \sum_{i \in \mathcal{N}} \hat{Q}_{i, k}^{\phi}(\textbf{s}_{k}, \hat{a}_{i}; \omega_{i}')$ for each sample $(\textbf{s}_{k}, \textbf{u}_{k}, R_{k}, \textbf{s}_{k+1})$
                        \STATE Update $\omega_{i}$ for each agent $i$ by minimizing the optimization problem:
                        \begin{equation*}
                            \min_{\omega_{i}} \frac{1}{G} \sum_{k=1}^{G} \frac{1}{2}\left( \ y_{k} - \sum_{i \in \mathcal{N}} \hat{Q}_{i, k}^{\phi}(\textbf{s}_{k}, u_{i}; \omega_{i}) \right)^{2}
                        \end{equation*}
                        \STATE Update target network parameters for each agent $i$:
                            \begin{align*}
                                &\theta_{i}' \gets \tau \theta_{i} + (1 - \tau) \theta_{i}'\\
                                &\omega_{i}' \gets \tau \omega_{i} + (1 - \tau) \omega_{i}'
                            \end{align*}
                    \ENDFOR
                \ENDFOR
            \end{algorithmic}
        \end{algorithm*}
        
    \subsection{Limitation in Direct Approximation of Marginal Contribution}
    \label{subsec:problem_of_direct_approximation_of_cmc}
        As for the SQDDPG \cite{Wang_2020}, the Markov Shapley value is approximated by the direct approximation of marginal contributions. Although this approach is simple and easy-to-implement in practice, some properties such as the efficiency (i.e., the sum of maximum MSV is equal to the maximum grand coalition value) and the fairness may be violated. We now mathematically describe this phenomenon as follows. By the property of efficiency, we can get an expression such that
        \begin{equation}
            \max_{\pi} V^{\pi}(\mathbf{s}) = \sum_{i \in \mathcal{N}} \underbrace{\sum_{\mathcal{C}_{i} \subseteq \mathcal{N} \backslash \{i\}} \frac{|\mathcal{C}_{i}|!(|\mathcal{N}| - |\mathcal{C}_{i}| - 1)!}{|\mathcal{N}|!} \cdot \left[ \max_{\pi_{\mathcal{C}_{i} \cup \{i\}}} V^{\pi_{\mathcal{C}_{i} \cup \{i\}}}(\mathbf{s}) - \max_{\pi_{\mathcal{C}_{i}}} V^{\pi_{\mathcal{C}_{i}}}(\mathbf{s}) \right]}_{\text{each agent's Markov Shapley value}}.
        \label{eq:efficiency1}
        \end{equation}
        
        By rearranging Eq.~\ref{eq:efficiency1}, we can get the following equivalent relationship such that
        \begin{equation}
            \max_{\pi} V^{\pi}(\mathbf{s}) = \frac{1}{|\mathcal{N}|!} \sum_{m \in \Pi} \underbrace{\sum_{\mathcal{C}_{i} \in \mathcal{M}(m)} \left[\max_{\pi_{\mathcal{C}_{i} \cup \{i\}}} V^{\pi_{\mathcal{C}_{i} \cup \{i\}}}(\mathbf{s}) - \max_{\pi_{\mathcal{C}_{i}}} V^{\pi_{\mathcal{C}_{i}}}(\mathbf{s}) \right]}_{\text{the sum of intermediate coalition values generated from a permutation}},
        \label{eq:efficiency2}
        \end{equation}
        where $\Pi$ indicates the set of all permutations to form the grand coalition; $\mathcal{M}(m)$ indicates the set of all intermediate coalitions generated by a permutation $m$. It is not difficult to observe that if $V^{\pi_{\mathcal{C}_{i} \cup \{i\}}}(\mathbf{s})$ and $V^{\pi_{\mathcal{C}_{i}}}(\mathbf{s})$ belong to the same parametric function, the successive intermediate maximum coalition values will be cancelled. Therefore, only the maximum grand coalition value and the maximum empty coalition value are left. The coalition values fulfil the above operations are defined as the \textit{consistent coalition values}. Since the empty coalition value is defined as 0, it is obvious that the LHS is equal to the RHS in Eq.~\ref{eq:efficiency2}. If $\Phi_{i}(\cdot \ | \ \cdot;\ \theta_{i})$ is a direct parametric function, a combination of $\theta_{i}$ and the input of the information about some coalitions may be aligned with multiple possibilities of differences between successive maximum coalition values. This will highly probably lead to inconsistent coalition values. As a result, the property of efficiency does not hold (since Eq.~\ref{eq:efficiency2} becomes violated) and the fairness is corrupted.\footnote{Since the coalition values become inconsistent and each resultant marginal contribution probably implies the credit assignment from a different set of coalition values.} To address this problem, we suggest learning a parametric coalition value function and then use it to form the MSV. A fixed parametric function can be seen as a method to encode a set of coalition values. We call this method \textit{indirect approximation of marginal contributions}. Due to the complexity in learning, we conduct an analysis of the feasibility of learning in practice as Proposition \ref{prop:feasibility_msv_by_cv} shows (see Appendix \ref{subsec:fixing_the_problem_of_inconsistency} for details). Since the equivalence between the MSV and the MSQ, the above discussion on the MSV also adapts to the MSQ.
        \begin{proposition}
        \label{prop:feasibility_msv_by_cv}
            The approximate maximum Markov Shapley value generated by approximate maximum coalition values is feasible to be learned in practice.
        \end{proposition}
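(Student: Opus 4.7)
The plan is to bound the approximation error between the true maximum Markov Shapley value and the one assembled from learned maximum coalition values, then argue that each coalition value individually solves a well-defined optimality problem that is learnable by standard RL machinery. By the property of efficiency together with Assumption \ref{assm:max_shapley_value}, the maximum MSV is a weighted sum of differences of maximum coalition values; so if each such coalition value can be approximated, the maximum MSV can be assembled term-by-term.

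First I would write the approximate maximum MSV explicitly as
\begin{equation*}
    \max_{\pi_i} \hat{V}^{\phi}_i(\mathbf{s}) = \sum_{\mathcal{C}_{i} \subseteq \mathcal{N} \backslash \{i\}} \frac{|\mathcal{C}_{i}|!(|\mathcal{N}|-|\mathcal{C}_{i}|-1)!}{|\mathcal{N}|!} \left[ \hat{V}^{\pi^*_{\mathcal{C}_i \cup \{i\}}}(\mathbf{s}) - \hat{V}^{\pi^*_{\mathcal{C}_i}}(\mathbf{s}) \right],
\end{equation*}
and exploit the fact that the Shapley weights form a probability measure over $\mathcal{C}_i \subseteq \mathcal{N}\setminus\{i\}$ that sums to $1$. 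If every learned coalition value is within $\epsilon$ of the true $\max_{\pi_{\mathcal{C}}} V^{\pi_\mathcal{C}}(\mathbf{s})$ in sup-norm, the triangle inequality and the bound $\sum_{\mathcal{C}_i} \tfrac{|\mathcal{C}_i|!(|\mathcal{N}|-|\mathcal{C}_i|-1)!}{|\mathcal{N}|!} = 1$ immediately yield an error of at most $2\epsilon$ on $\max_{\pi_i} \hat V^\phi_i(\mathbf{s})$, so approximation errors propagate linearly rather than compounding exponentially in the number of coalitions. Since the MSQ is equivalent to the MSV, the same bound transfers to the approximate maximum MSQ.

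Second I would show that each $\max_{\pi_\mathcal{C}} V^{\pi_\mathcal{C}}(\mathbf{s})$ is itself the fixed point of a standard Bellman optimality operator on the coalition-restricted MDP with reward $R(\mathbf{s}, \mathbf{a}_\mathcal{C})$ and coalition policy $\pi_\mathcal{C} = \mathlarger{\mathlarger{\times}}_{i\in\mathcal{C}} \pi_i$ (well defined under Assumption \ref{assm:assumption_for_joint_policy_factorisation}). This operator is a $\gamma$-contraction by the usual argument, hence admits a unique fixed point and is learnable, e.g., by fitted value iteration or the Q-learning variant used to derive SHAQ in Theorem \ref{thm:proof_of_shapley_q_learning}. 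To sidestep the $2^{|\mathcal{N}|}$ blow-up in the number of coalitions, I would share parameters across coalitions by using a single function approximator $\hat V_\psi(\mathbf{s}, \mathbbm{1}_\mathcal{C})$ that takes a coalition indicator as auxiliary input, and train it on coalitions drawn from the Shapley distribution $Pr(\mathcal{C}_i \mid \mathcal{N}\setminus\{i\})$ already exploited in Remark \ref{rmk:coalition_generation} and Eq.~\ref{eq:alpha_deep_representation}. The Monte Carlo estimator used by SHAQ and SQDDPG can be reused verbatim, keeping per-update cost linear in the sample size $M$.

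The main obstacle I anticipate is controlling the two layers of approximation error simultaneously: the function-approximation error on each coalition value under a shared network, and the Monte Carlo error introduced by sampling the Shapley expectation over coalitions. I would expect to invoke a Hoeffding-type concentration bound for the latter (the Shapley weights are bounded, so the summand is bounded) and a standard fitted-value-iteration propagation bound for the former, then combine both into a single high-probability statement that the learned $\max_{\pi_i} \hat V^\phi_i(\mathbf{s})$ is within $\mathcal{O}(\epsilon_{\text{fit}} + M^{-1/2})$ of the true maximum MSV. Feasibility in practice then follows because both terms can be driven down by standard means (more capacity/data and larger $M$), and unlike the direct approximation of marginal contributions discussed around Eq.~\ref{eq:efficiency2}, the efficiency property is preserved exactly by construction since the successive coalition values telescope within a single shared $\hat V_\psi$.
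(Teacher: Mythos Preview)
Your core error-propagation step is exactly what the paper does: the paper splits it into two lemmas --- first that an $\epsilon$-accurate coalition value yields a $2\epsilon$-accurate marginal contribution (triangle inequality on the difference), and second that since the Shapley weights sum to $1$, an $\epsilon$-accurate marginal contribution yields an $\epsilon$-accurate MSV --- while you collapse both into a single $2\epsilon$ bound on the MSV. Same argument, same constant.

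Where you diverge is in what ``feasible'' is taken to mean and how it is concluded. The paper's proof is purely \emph{comparative}: it introduces an explicit assumption (Assumption~\ref{assm:model_learning}) that any function class can be learned to the same tolerance $\epsilon$, so the direct approximation of marginal contributions achieves MSV error $\epsilon$ while the indirect route via coalition values achieves $2\epsilon$; since the ratio is only $2$, the indirect route is declared feasible. There is no appeal to Bellman contractions, fitted-value-iteration bounds, Hoeffding-type concentration for the Monte Carlo sum, or parameter sharing across coalitions --- all of that machinery in your second and third paragraphs is your own addition. Your argument is \emph{absolute} (each coalition value is learnable as the fixed point of a $\gamma$-contraction, and both sources of error can be driven down), which is more self-contained and does not rest on the rather strong Assumption~\ref{assm:model_learning}, but it is considerably more elaborate than what the paper actually proves. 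If you want to match the paper, stop after the $2\epsilon$ bound and invoke Assumption~\ref{assm:model_learning} to compare against the direct approach; if you want a stronger statement, your route gives it, but be aware the paper does not claim or need the Bellman/Hoeffding layer.
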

        \begin{proof}
            See the detailed proof in Appendix \ref{subsec:fixing_the_problem_of_inconsistency}.
        \end{proof}
        
        Owing to the need of precision and safety to the algorithms applied for real-world applications, we propose to incorporate the indirect approximation of marginal contributions into Shapley value based MARL algorithms. Since the framework of DDPG \cite{heinrich2016deep} is infeasible to fit the difference-type value functions without losing the advantage of marginal contributions,\footnote{Even if using Q-values, the subtrahend Q-value does not involve the agent's action and cannot be differentiable with respect to the parameters of policy.} we apply the framework of PPO \cite{schulman2017proximal} as the base algorithm and propose an algorithm named as \textit{Shapley model-free PPO} (SMFPPO). Different from SQDDPG, each agent's policy in SMFPPO is modelled as a parametric Gaussian distribution with a learnable mean and a learnable variance to deal with the continuous action problem.\footnote{The control variables of many real-world problems are belonging to the continuous actions, e.g., the active voltage control problem considered in this thesis where the reactive power is a continuous action.} The pseudo code of SMFPPO is shown in Algorithm \ref{alg:smfppo}.
        \begin{algorithm*}[ht!]
        \caption{Shapley Model-Free PPO (SMFPPO)}
        \label{alg:smfppo}
            \begin{algorithmic}[1]
                \STATE Initialize actor parameters $\theta_{i}$ for each agent $i \in \mathcal{N}$, and coalition Q-value parameters $\omega$ 
                \STATE Initialize the sample size $M$ for approximating Markov Shapley Q-value
                \STATE Initialize the discount rate $\gamma$ and the $\epsilon$
                \FOR{episode = 1 to D}
                    \STATE Observe initial state $\textbf{s}_{1}$ from the environment
                    \FOR{t = 1 to T}
                        \STATE Sample $a_{i}$ from $\pi_{\theta_{i}}(\textbf{s}_{t})$ for each agent $i$
                        \STATE Execute actions $\textbf{a} = \mathlarger{\mathlarger{\times}}_{\scriptscriptstyle i \in \mathcal{N}} a_{i}$ and observe the global reward $r_{t}$ and the next state $\textbf{s}_{t+1}$
                        \FOR{each agent $i$}
                            \STATE Sample $M$ coalitions by $\mathcal{C}_{m} \sim Pr(\mathcal{C} | \mathcal{N} \backslash \{i\})$
                            \FOR{each sampled coalition $\mathcal{C}_{m}$}
                                \STATE Mask the irrelevant agents' actions for $\textbf{a}$, storing it as $\textbf{a}_{\scriptscriptstyle \mathcal{C}_{m} \cup \{i\}}$ and $\textbf{a}_{\scriptscriptstyle \mathcal{C}_{m}}$
                                \STATE Get $\hat{\Upphi}_{i}^{m}(\textbf{s}_{t}, a_{i}; \omega) \gets \hat{Q}^{\pi_{\scriptscriptstyle \mathcal{C}_{m} \cup \{i\}}}(\textbf{s}_{t}, \textbf{a}_{\scriptscriptstyle \mathcal{C}_{m} \cup \{i\}}; \omega) - \hat{Q}^{\pi_{\scriptscriptstyle \mathcal{C}_{m}}}(\textbf{s}_{t}, \textbf{a}_{\scriptscriptstyle \mathcal{C}_{m}}; \omega)$
                            \ENDFOR
                            \STATE Get $\hat{Q}^{\Phi_{i}}(\textbf{s}_{t}, a_{i}; \omega) \gets \frac{1}{M} \sum_{m=1}^{M} \hat{\Upphi}_{i}^{m}(\textbf{s}_{t}, a_{i}; \omega)$
                            \STATE Update $\theta_{i}$ by maximizing the following loss:
                            \begin{equation*}
                                \max_{\theta_{i}} \mathbb{E} \left[ \min \left\{ \frac{\pi_{\theta_{i}}(a_{i} \vert \textbf{s})}{\pi_{\theta_{i, \text{old}}}(a_{i} | \mathbf{s})} \ \hat{Q}^{\Phi_{i}}(\textbf{s}_{t}, a_{i}; \omega), clip(\frac{\pi_{\theta_{i}}(a_{i} \vert \textbf{s})}{\pi_{\theta_{i, \text{old}}}(a_{i} | \mathbf{s})}, 1-\epsilon, 1+\epsilon) \ \hat{Q}^{\Phi_{i}}(\textbf{s}_{t}, a_{i}; \omega) \right\} \right]
                            \end{equation*}
                        \ENDFOR
                        \STATE Update $\omega$ by minimizing the optimization problem:
                        \begin{equation*}
                            \min_{\omega} \left( R_{t} + \hat{Q}^{\pi}(\textbf{s}_{t+1}, \textbf{a}; \omega) - \hat{Q}^{\pi}(\textbf{s}_{t}, \textbf{a}; \omega) \right)^{2}
                        \end{equation*}
                    \ENDFOR
                \ENDFOR
            \end{algorithmic}
        \end{algorithm*}
        
    \subsection{Relationship to Other Multi-Agent Reinforcement Learning Algorithms}
    \label{subsec:comparison_with_other_learning_algorithms}
        In this section, we compare SHAQ, SQDDPG and SMFPPO with other MARL algorithms and investigate the relationship to these algorithms such as VDN \cite{SunehagLGCZJLSL18}, COMA \cite{foerster2018counterfactual} and independent learning (IL) \cite{claus1998dynamics}. We wish our analysis can provide some insights into the further works on credit assignment for global reward game. The relationship among those algorithms is shown in Figure \ref{fig:taxonomy_marl}.
        \begin{figure}[ht!]
            \centering
            \includegraphics[width=.95\textwidth]{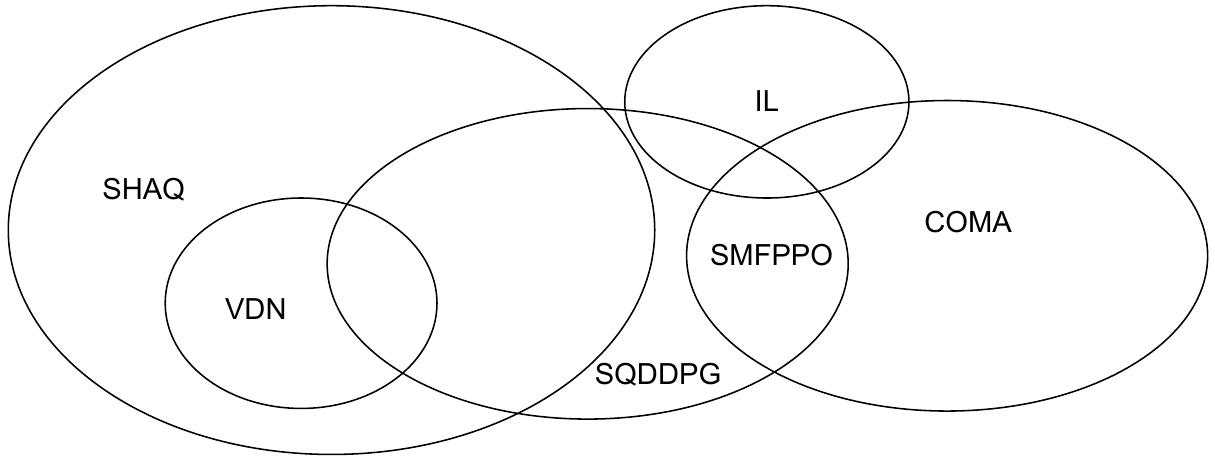}
            \caption{Taxonomy of the proposed Shapley value based and other MARL algorithms.}
            \label{fig:taxonomy_marl}
        \end{figure}

        \paragraph{Relationship to COMA.} Compared with COMA \cite{foerster2018counterfactual}, the credit assigned to each agent $i$ denoted by $\bar{Q}_{i}(\mathbf{s}, a_{i})$ is mathematically expressed as follows:
        \begin{equation*}
            \begin{split}
                \bar{Q}_{i}(\mathbf{s}, a_{i}) = \bar{Q}^{\pi}(\mathbf{s}, \mathbf{a}) - \bar{Q}^{\pi_{-i}}(\mathbf{s}, \mathbf{a}_{- i}), \\
                \bar{Q}^{\pi_{-i}}(\mathbf{s}, \mathbf{a}_{-i}) = \sum_{a_{i}} \pi_{i}(a_{i}|\mathbf{s}) \bar{Q}^{\pi}(\mathbf{s}, (\mathbf{a}_{-i}, a_{i})),
            \end{split}
        \end{equation*}
        where the subscript $-i$ indicates the set of all agents excluding $i$. $\bar{Q}_{i}(\mathbf{s}, a_{i})$ can be seen as the action marginal contribution between the grand coalition Q-value and the coalition Q-value excluding the agent $i$, under \textit{some permutation to form the grand coalition} wherein agent $i$ is located at the \textit{last position}. The efficiency is obviously violated (i.e., the sum of optimal action marginal contributions defined here is unlikely to be equal to the optimal grand coalition Q-value). In contrast to COMA, SMFPPO considers all permutations to form the grand coalition to preserve the efficiency and therefore learns a function that represents all coalition value functions.
        
        \paragraph{Relationship to VDN.} By setting $\delta_{i}(\mathbf{s}, a_{i}) = 1$ for all state-action pairs, SHAQ degrades to VDN \cite{SunehagLGCZJLSL18}. Although VDN tried to tackle the problem of dummy agents, \cite{SunehagLGCZJLSL18} did not give a theoretical guarantee on identifying it. The Markov Shapley value theory proposed in this thesis well addresses this issue from both theoretical and empirical aspects. These aspects show that VDN is a subclass of SHAQ, which has the same value loss function as SQDDPG. The theoretical framework proposed in this thesis answers the question of why VDN works well in most scenarios but poorly in some scenarios (i.e., $\delta_{i}(\mathbf{s}, a_{i})=1$ in Eq.~\ref{eq:td_error_shapley_q_learning} was incorrectly defined over the sub-optimal actions).
        
        \paragraph{Relationship to Independent Learning.} Independent learning (e.g., IQL \cite{claus1998dynamics}) can also be seen as a special credit assignment, however, the credit assigned to each agent is still with no intuitive interpretation. Mathematically, suppose that $\bar{Q}_{i}(\mathbf{s}, a_{i})$ is the independent Q-value of agent $i$, we can rewrite it in the form of the linear combination of action marginal contributions such that
        \begin{equation*}
            \bar{Q}_{i}(\mathbf{s}, a_{i}) = \mathbb{E}_{\mathcal{C}_{i} \sim Pr(\mathcal{C}_{i} | \mathcal{N} \backslash \{i\})} \left[ \bar{\Upphi}_{i}(\mathbf{s}, a_{i} | \mathcal{C}_{i}) \right].
        \end{equation*}
        
        It is intuitive to see that the independent Q-value is a direct approximation of MSQ, ignoring the detailed process of coalition formation, while SHAQ, SQDDPG and SMFPPO consider the coalition formation in representing credit assignment. This gives an explanation for why independent learning works well in some cooperative tasks \cite{PapoudakisC0A21}.
        
        \subsection{Relationship to Other Theoretical Frameworks}
        \label{subsec:relationship_with_other_theoretical_framework}
            Overall, Markov convex game has an intersection with Individual-Global-Max \cite{SonKKHY19}. In more details, Markov convex game can solve the scenarios more than the grand coalition, while the Individual-Global-Max solely considers the grand coalition. Under the grand coalition as the coalition structure, the Markov convex game is a subclass of Individual-Global-Max, by the view of credit assignment. More specifically, Individual-Global-Max describes a class of credit assignment as value distribution under a condition that the optimal policy of each agent's value leads to the optimal joint policy. In contrast, Markov convex game under the grand coalition presumes the convexity condition that leads to an analytic form of the distributed value, so it is a stronger condition and the theory is weaker. Distributed Q-learning \cite{lauer2000algorithm} gives another analytic form of each distributed value to satisfy the condition in Individual-Global-Max,\footnote{In distributed Q-learning, each distributed Q-value is constructed as $Q_{i}(\mathbf{s}, a_{i}) = \max_{\mathbf{a}_{-i}} Q(\mathbf{s}, \mathbf{a}_{-i}, a_{i})$, where $\mathbf{a}_{-i}$ indicates the joint actions of agents except for agent $i$.} so it is a weaker theory than Individual-Global-Max. Besides, distributed Q-learning is in parallel with the Markov convex game under the grand coalition, since both theories study the methods to fulfil the condition in Individual-Global-Max from different aspects. Although the theory of Markov convex game under the grand coalition is weaker than Individual-Global-Max, the condition of the Markov convex game gives more insights into the investigation of the full picture of Individual-Global-Max.
            \begin{figure}[ht!]
                \centering
                \includegraphics[width=.95\textwidth]{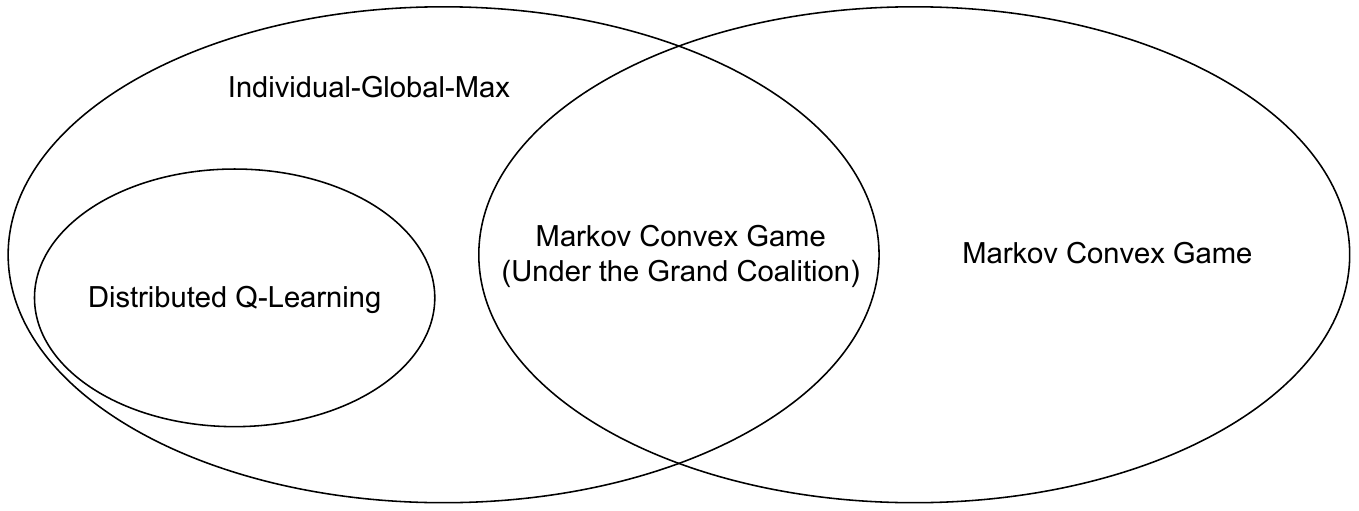}
                \caption{Taxonomy of the Markov convex game proposed in this thesis and the relevant theoretical frameworks for relating credit assignment to the optimal global value in multi-agent reinforcement learning. Markov convex game has an intersection with the theoretical framework of Individual-Global-Max, and can cover scenarios excluded from Individual-Global-Max such that the coalition structure is not the grand coalition. On the other hand, the Markov convex game under the grand coalition and distributed Q-learning are two theories that realize the condition of Individual-Global-Max and deepen the understanding of optimizing assigned credits to reach the optimal joint policy.}
                \label{fig:theoretical_framework_taxonomy}
            \end{figure}
        
\section{Tackling The Partial Observability Problem}
\label{sec:the_partial_observability_problem}
    In this section, we aim at presenting the variants of multi-agent reinforcement learning algorithms based on Shapley value to fit the partially observable environments. We first extend Markov convex game to the partially observable scenarios, named as partially observable Markov game (POMCG). Second, we propose two algorithms to solve the POMCG, named as partially observable Shapley policy iteration and partially observable Shapley value iteration, and prove their convergence to the optimal joint policy and the Markov core. Finally, we show that these two algorithms can be leveraged to guide the practical implementations of SQDDPG, SMFPPO and SHAQ to tackle the partially observable tasks.
    
    \subsection{Additional Assumptions about Partial Observability}
    \label{subsec:assumptions_partial_observability}
        In addition to the assumptions considered in Section \ref{sec:assumptions_shaspley-q}, we need the further assumptions on the analysis of partial observability as shown in Assumption \ref{assm:partial_observability} which is a common assumption applied to partially observable Markov decision process (POMDP).
        \begin{assumption}
        \label{assm:partial_observability}
            We assume that a value function is representable by a finite set of vectors, and therefore the value function is convex \cite{littman1995efficient}.
        \end{assumption}
        
    \subsection{Partially Observable Markov Convex Game}
    \label{subsec:problem_formulation}
        To adapt to the decentralised control in many real-world tasks with partial observation (e.g. the active voltage control in electric power systems), we extend the MCG to partial observability, named as partially observable Markov convex game (POMCG). The POMCG can be described as the following tuple $\left\langle \mathcal{N}, \mathcal{S}, \mathcal{A}, T, \Lambda, \pi, R_{t}, \gamma, \Omega, \mathcal{O}, Pr(\mathbf{s}_{0}) \right\rangle$. The two terms lacking in the MCG are a joint observation set $\mathcal{O} = \mathlarger{\mathlarger{\times}}_{i \in \mathcal{N}} \mathcal{O}_{i}$ where $\mathcal{O}_{i}$ is agent $i$'s observation set, and an observation probability function $\Omega(\mathbf{o}_{t+1} | \mathbf{s}_{t+1}, \mathbf{a}_{t}, \mathcal{CS})$ where $\mathbf{o} = \mathlarger{\mathlarger{\times}}_{\scriptscriptstyle i \in \mathcal{N}} o_{i} \in \mathcal{O}$, $\mathbf{s} \in \mathcal{S}$ and $\mathbf{a} \in \mathcal{A}$ by Definition \ref{def:pomcg}. Agent $i$'s \textit{action-observation history} (AOH) at timestep $t$ is defined as $h_{i, t} = \langle o_{i, 1}, a_{i, 1}, ..., a_{i, t-1}, o_{i, t}\rangle$ that can be recursively written as $h_{i, t} = \langle h_{i, t-1}, a_{i, t-1}, o_{i, t} \rangle$ \cite{oliehoek2016concise}. Likewise, the joint history is defined as $\mathbf{h}_{t} = \langle \mathbf{h}_{t-1}, \mathbf{a}_{t-1}, \mathbf{o}_{t} \rangle$. In the training phase of \textit{centralised training and decentralised execution} (CTDE), there exists a \textit{joint belief state} $b_{t}(\mathbf{s}_{t} | \mathcal{CS}) = Pr(\mathbf{s}_{t} | \mathbf{h}_{t}, \mathcal{CS}) \in \mathcal{B}(\mathcal{CS})$ to summarize the joint history and estimate the probability of states. Note that $\mathcal{B}(\mathcal{CS})$ is usually an infinite space that results in an infinite value space, but corresponding to some specific $\mathcal{CS}$. This is solved by representing the value function with finite number of vectors $\zeta \in \mathbb{R}^{|\mathcal{S}||\mathcal{CS}|}$. In more details, the resultant value function is convex \cite{sondik1971optimal} and can be expressed as $\max_{\zeta(\cdot | \mathcal{CS})} \sum_{\mathbf{s}} \zeta(\mathbf{s} | \mathcal{CS}) b(\mathbf{s} | \mathcal{CS})$. The minimal set of $\zeta(\cdot | \mathcal{CS})$ is unique \cite{littman1995efficient}. By simply applying \textit{Bayesian inference}, the successor joint belief state can be obtained as follows: $$b_{t+1}(\mathbf{s}_{t+1} | \mathcal{CS}) = \tau(\mathbf{o}_{t+1}, \mathbf{a}_{t}, b_{t} | \mathcal{CS}) = \frac{\Omega(\mathbf{o}_{t+1} | \mathbf{s}_{t+1}, \mathbf{a}_{t}, \mathcal{CS}) \left[ \sum_{\mathbf{s}_{t} \in \mathcal{S}} Pr(\mathbf{s}_{t+1} | \mathbf{s}_{t}, \mathbf{a}_{t}) b_{t}(\mathbf{s}_{t} | \mathcal{CS}) \right]}{Pr(\mathbf{o}_{t+1} | b_{t}, \mathbf{a}_{t}, \mathcal{CS})},$$
        where $Pr(\mathbf{o}_{t+1} | b_{t}, \mathbf{a}_{t}, \mathcal{CS}) = \sum_{\mathbf{s}_{t+1} \in \mathcal{S}} \left[ \Omega(\mathbf{o}_{t+1} | \mathbf{s}_{t+1}, \mathbf{a}_{t}, \mathcal{CS}) \sum_{\mathbf{s}_{t} \in \mathcal{S}} Pr(\mathbf{s}_{t+1} | \mathbf{s}_{t}, \mathbf{a}_{t}) b_{t}(\mathbf{s}_{t} | \mathcal{CS}) \right]$. The initial belief is defined as $b_{0}(\mathbf{s}_{0} | \mathcal{CS}) = Pr(\mathbf{s}_{0} | \mathcal{CS})$. The joint belief state constitutes a sufficient statistic of the joint history \cite{astrom1965optimal}. By defining a transition function and a reward function of belief state such that $T_{b}(b_{t+1}, b_{t}, \mathbf{a}_{t}, \mathcal{CS}) = Pr(b_{t+1} | b_{t}, \mathbf{a}_{t}, \mathcal{CS}) = \sum_{\mathbf{o}_{t+1} \in \mathcal{O}} \mathbb{I}_{\tau(b_{t}, \mathbf{a}_{t}, \mathbf{o}_{t+1})}(b_{t+1}) Pr(\mathbf{o}_{t+1} | b_{t}, \mathbf{a}_{t}, \mathcal{CS})$\footnote{$\mathbb{I}_{m}(x)$ is an indicator function. It equals to 1 when $x = m$, otherwise it equals to 0.} and $R_{b}(b_{t}, \mathbf{a}_{\scriptscriptstyle\mathcal{C} \scriptstyle, t}) = \sum_{\mathbf{s} \in \mathcal{S}} b_{t}(\mathbf{s} | \mathcal{CS}) R(\mathbf{s}, \mathbf{a}_{\scriptscriptstyle\mathcal{C} \scriptstyle, t})$, POMDP is transformed to an equivalent \textit{belief-MDP}. Note that each coalition $\mathcal{C}$ only corresponds to a subset of $\Lambda$ as shown in Definition \ref{def:extended_coalition_set}. It is known that solving a belief-MDP is equivalent to solving a corresponding POMDP \cite{aastrom1965optimal}. Since the training phase of POMCG is a specific form of POMDP, it is reasonable to solve POMCG as a belief-MDP.
        \mybox{
        \begin{definition}
        \label{def:pomcg}
            In a POMCG, the coalition structure $\mathcal{CS}$ affects the observability of an environment and therefore the belief state.
        \end{definition}
        }
        
        \begin{definition}
        \label{def:extended_coalition_set}
            For any $\mathcal{C} \ \mathlarger{\mathlarger{\subseteq}} \ \mathcal{N}$ belonging to a coalition structure included in a subset of $\Lambda$, the subset of $\Lambda$ is denoted as $\Psi(\mathcal{C}, \Lambda) \ \mathlarger{\mathlarger{\subseteq}} \ \Lambda$. In other words, only the coalition structure belonging to $\Psi(\mathcal{C}, \Lambda)$ contains $\mathcal{C}$, while $\Lambda \backslash \Psi(\mathcal{C}, \Lambda)$ does not.
        \end{definition}
        
    \subsection{Partially Observable Shapley Policy Iteration}
    \label{subsec:policy_iteration_for_pomcg}
        To enable the selection of control more easily during executions \cite{bertsekas2019reinforcement}, we propose an algorithmic framework in Algorithm \ref{alg:pi_pomcg} named as \textit{partially observable Shapley policy iteration} (POSPI) to learn the optimal joint policy. Although Line 10 to 19 is the procedure of sequentially updating the policies with \textit{no extra interactions with the environment}, it can be implemented by the \textit{simultaneous updates among agents} in practice. 
        % The intuitive reason is that all agents aim to maximize the same coalition values (seen as a function) and each agent's decision (seen as an independent variable) is independent of each other, so the sequential update is equivalent to the simultaneous update across agents' policies.
        \begin{algorithm}[ht!]
            \caption{Partially Observable Shapley Policy Iteration for POMCG.}
            \label{alg:pi_pomcg}
            \scalebox{1.0}{
            \begin{minipage}{1.0\linewidth}
            \begin{algorithmic}[1]
            
                \STATE Give an initialised $\pi^{k} = \mathlarger{\mathlarger{\times}}_{\scriptscriptstyle i \in \mathcal{N}} \pi_{i}^{k}$. 
                
                \REPEAT
                    \STATE \textbf{Policy Evaluation:} Compute $V^{\pi_{\mathcal{C}}^{k}}(b)$ for all $b$ to solve
                    
                    $$V^{\pi_{\mathcal{C}}^{k}}(b) = R_{b}(b, \mathbf{a}_{\scriptscriptstyle\mathcal{C}}^{k}) + \gamma \sum_{\mathbf{o}' \in \mathcal{O}} Pr(\mathbf{o}' | b, \mathbf{a}^{k}, \mathcal{CS}) V^{\pi_{\mathcal{C}}}(\tau(\mathbf{o}', \mathbf{a}^{k}, b | \mathcal{CS})),$$
                    by value iteration described in Lemma \ref{lemm:value_iteration}.
                    
                    \STATE \textbf{Policy Improvement:} Compute $\pi^{k+1}$ as follows:
                    \FOR{$\mathcal{CS} \in \Lambda$}
                        \FOR{$\mathcal{C} \in \mathcal{CS}$ and $b \in \mathcal{B}(\mathcal{CS})$}
                            \STATE $Q\left( b, \mathbf{a}_{\scriptscriptstyle\mathcal{C}} \right) \leftarrow R_{b} \left( b, \mathbf{a}_{\scriptscriptstyle\mathcal{C}} \right) + \gamma \sum_{\mathbf{o}' \in \mathcal{O}} Pr \left( \mathbf{o}' | b, \mathbf{a}, \mathcal{CS} \right) V^{\pi_{\mathcal{C}}^{k}} \left( \tau \left(\mathbf{o}', \mathbf{a}, b | \mathcal{CS} \right) \right)$.
                        \ENDFOR
                    \ENDFOR
                    
                    % and $b \in \mathcal{B}$
                    \FOR{$i \in \mathcal{N}$}
                        \STATE Set $w(\mathcal{C}_{i}) = Pr(\mathcal{C}_{i} | \mathcal{N} \backslash \{i\})$.
                        \FOR{$\mathcal{C}_{i} \ \mathlarger{\mathlarger{\subseteq}} \ \mathcal{N} \backslash \{i\}$, $\mathcal{CS} \in \Psi(\mathcal{C}_{i} \ \mathlarger{\mathlarger{\cup}} \ \{i\}, \Lambda)$ and $b \in \mathcal{B}(\mathcal{CS})$}
                            \STATE Get $Q(b, \mathbf{a}_{\scriptscriptstyle\mathcal{C}_{i} \cup \{i\}})$ and $Q(b, \mathbf{a}_{\scriptscriptstyle\mathcal{C}_{i}})$.
                            % \STATE Get $Q(b, \mathbf{a}_{\scriptscriptstyle\mathcal{C}_{i}})$.
                        \ENDFOR
                        \STATE $\pi_{i}^{k+1} \leftarrow \arg\max_{\pi_{i}} \sum_{\mathcal{C}_{i} \in \mathcal{N} \backslash \{i\}} w(\mathcal{C}_{i}) \left[ Q(b, \mathbf{a}_{\scriptscriptstyle\mathcal{C}_{i} \cup \{i\}}) - Q(b, \mathbf{a}_{\scriptscriptstyle\mathcal{C}_{i}}) \right]$.
                        \FOR{$\mathcal{C}_{i} \ \mathlarger{\mathlarger{\subseteq}} \ \mathcal{N} \backslash \{i\}$, $\mathcal{CS} \in \Psi(\mathcal{C}_{i} \ \mathlarger{\mathlarger{\cup}} \ \{i\}, \Lambda)$ and $b \in \mathcal{B}(\mathcal{CS})$}
                            \STATE $Q\left( b, \mathbf{a}_{\scriptscriptstyle\mathcal{C}_{i} \cup \{i\}} \right) \leftarrow Q\left( b, \mathbf{a}_{\scriptscriptstyle\mathcal{C}_{i}}, \pi_{i}^{k+1}(b) \right)$.
                        \ENDFOR
                    \ENDFOR
                    
                    \STATE $\pi^{k} \leftarrow \pi^{k+1}$.
                    
                \UNTIL $V^{\pi_{\mathcal{C}}^{k+1}}(b) = V^{\pi_{\mathcal{C}}^{k}}\left( b \right)$, for all $\mathcal{C} \ \mathlarger{\mathlarger{\subseteq}} \ \mathcal{N}$, $\mathcal{CS} \in \Psi(\mathcal{C}, \Lambda)$ and $b \in \mathcal{B}(\mathcal{CS})$. 
                %  and the algorithm terminates with $\pi^{k}$
            \end{algorithmic}
            \end{minipage}
            }
        \end{algorithm}

        \begin{lemma}
        \label{lemm:value_iteration}
            For all $b \in \mathcal{B}(\mathcal{CS})$ and $\mathcal{CS} \in \Psi(\mathcal{C}, \Lambda)$, the value iteration for $\pi_{\scriptscriptstyle\mathcal{C}}$ such that
            \begin{equation*}
                % V_{m+1}(b | \mathcal{CS}) \leftarrow R_{b}(b, \mathbf{a}_{\scriptscriptstyle\mathcal{C}} | \mathcal{CS}) + \gamma \sum_{\mathbf{o}' \in \mathcal{O}} Pr(\mathbf{o}' | b, \mathbf{a}, \mathcal{CS}) V_{m}(\tau(\mathbf{o}', \mathbf{a}, b | \mathcal{CS})),
                V_{m+1}(b) \leftarrow R_{b}(b, \mathbf{a}_{\scriptscriptstyle\mathcal{C}}) + \gamma \sum_{\mathbf{o}' \in \mathcal{O}} Pr(\mathbf{o}' | b, \mathbf{a}, \mathcal{CS}) V_{m}(\tau(\mathbf{o}', \mathbf{a}, b | \mathcal{CS})),
            \end{equation*}
            converges to $V^{\pi_{\mathcal{C}}}(b)$ as $m \rightarrow \infty$ under the infinity norm.
        \end{lemma}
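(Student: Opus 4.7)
The plan is to recast the iteration as the fixed-point iteration of a Bellman-style operator on an appropriate function space, and then verify the hypotheses of the Banach fixed-point theorem. Concretely, fix a coalition $\mathcal{C} \ \mathlarger{\subseteq} \ \mathcal{N}$, a coalition structure $\mathcal{CS} \in \Psi(\mathcal{C}, \Lambda)$, and the coalition policy $\pi_{\scriptscriptstyle\mathcal{C}}$ inducing the joint action $\mathbf{a}$ (so that $\mathbf{a}_{\scriptscriptstyle\mathcal{C}}$ is determined by $b$ through $\pi_{\scriptscriptstyle\mathcal{C}}$). Define the operator $\mathcal{T}$ acting on bounded real-valued functions on $\mathcal{B}(\mathcal{CS})$ by
\begin{equation*}
(\mathcal{T}V)(b) \ = \ R_{b}(b, \mathbf{a}_{\scriptscriptstyle\mathcal{C}}) \ + \ \gamma \sum_{\mathbf{o}' \in \mathcal{O}} Pr(\mathbf{o}' | b, \mathbf{a}, \mathcal{CS}) \, V\bigl(\tau(\mathbf{o}', \mathbf{a}, b | \mathcal{CS})\bigr),
\end{equation*}
so that the update in the statement is $V_{m+1} = \mathcal{T} V_{m}$.

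Next I would verify that $\mathcal{T}$ maps an appropriate complete metric space to itself and is a contraction of modulus $\gamma$ under the sup-norm. By Assumption~\ref{assm:basic_condition}, $\mathcal{S}$ and $\mathcal{A}$ are finite, and the reward function takes values in $[0, \infty)$; since it is a characteristic function over finite arguments, $R$ is bounded, hence so is $R_{b}(b, \mathbf{a}_{\scriptscriptstyle\mathcal{C}})$ uniformly in $b$. Consequently, the space of bounded functions on $\mathcal{B}(\mathcal{CS})$ equipped with $\|\cdot\|_{\infty}$ is a non-empty complete metric space, and $\mathcal{T}$ preserves it. For the contraction step, for any two bounded $V, U$ and any $b$, the reward term cancels, and since $\sum_{\mathbf{o}'} Pr(\mathbf{o}' | b, \mathbf{a}, \mathcal{CS}) = 1$, a direct estimate yields
\begin{equation*}
\bigl|(\mathcal{T}V)(b) - (\mathcal{T}U)(b)\bigr| \ \leq \ \gamma \sum_{\mathbf{o}' \in \mathcal{O}} Pr(\mathbf{o}' | b, \mathbf{a}, \mathcal{CS}) \, \bigl|V(\tau(\mathbf{o}', \mathbf{a}, b | \mathcal{CS})) - U(\tau(\mathbf{o}', \mathbf{a}, b | \mathcal{CS}))\bigr| \ \leq \ \gamma \|V - U\|_{\infty},
\end{equation*}
which, taking the supremum over $b$, gives $\|\mathcal{T}V - \mathcal{T}U\|_{\infty} \leq \gamma \|V - U\|_{\infty}$ with $\gamma \in (0,1)$.

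Applying the Banach fixed-point theorem then delivers a unique fixed point $V^{\ast}$ of $\mathcal{T}$, together with geometric convergence $\|V_{m} - V^{\ast}\|_{\infty} \leq \gamma^{m} \|V_{0} - V^{\ast}\|_{\infty}$ from any starting $V_{0}$. The final step is to identify $V^{\ast}$ with $V^{\pi_{\mathcal{C}}}$. This follows because $V^{\pi_{\mathcal{C}}}$, viewed as the value of policy $\pi_{\scriptscriptstyle\mathcal{C}}$ in the equivalent belief-MDP (by the belief-MDP reduction invoked in Section~\ref{subsec:problem_formulation}), satisfies the policy-evaluation Bellman equation $V^{\pi_{\mathcal{C}}} = \mathcal{T} V^{\pi_{\mathcal{C}}}$, so uniqueness of the fixed point forces $V^{\ast} = V^{\pi_{\mathcal{C}}}$.

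The one point that needs care is the handling of $\mathcal{B}(\mathcal{CS})$: it is generically infinite-dimensional, so working in the full space of bounded functions is what makes the sup-norm argument routine. Assumption~\ref{assm:partial_observability} ensures that each iterate (and the fixed point) lies in the subclass of convex, piecewise-linear value functions representable by a finite set of vectors $\zeta \in \mathbb{R}^{|\mathcal{S}||\mathcal{CS}|}$, so I would remark that the contraction argument restricts to this subclass and is consistent with the representational assumption; this is the only subtlety, and it is merely bookkeeping rather than a genuine obstacle.
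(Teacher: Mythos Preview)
Your argument is correct. You set up the policy-evaluation operator $\mathcal{T}$, verify it is a $\gamma$-contraction on the space of bounded functions under the sup-norm, invoke Banach's fixed-point theorem, and identify the unique fixed point with $V^{\pi_{\mathcal{C}}}$ via the Bellman evaluation equation in the belief-MDP. That is a complete and self-contained proof.

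The paper takes a shorter, citation-based route: it simply observes that once $\pi_{\scriptscriptstyle\mathcal{C}}$ is fixed, the action set at each belief collapses to the singleton $\{\pi_{\scriptscriptstyle\mathcal{C}}(b)\}$, so the policy-evaluation recursion is exactly the value-iteration recursion of Theorem~\ref{thm:belief_mdp_vi} (the Smallwood--Sondik convergence result for belief-MDPs) applied to this degenerate action set, and the conclusion follows directly from that theorem. In effect, the paper outsources the contraction argument you wrote out to the cited result. Your approach buys self-containment and makes the $\gamma$-contraction modulus explicit; the paper's approach buys brevity at the cost of depending on an external theorem. One minor remark: your closing comment that Assumption~\ref{assm:partial_observability} keeps each iterate piecewise-linear convex is slightly off for the \emph{evaluation} operator (without the $\max$, the iterates are actually linear in $b$, not merely PWLC), but as you note this is bookkeeping and does not affect the contraction argument, which lives on the full space of bounded functions.
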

        \begin{proof}
            See the detailed proof in Appendix \ref{sec:proof_of_shapley_policy_iteration_fo_pomcg}.
        \end{proof}
        
        \begin{proposition}
        \label{prop:policy_iteration}
            For all $\mathcal{C} \ \mathlarger{\mathlarger{\subseteq}} \ \mathcal{N}$, $b \in \mathcal{B}(\mathcal{CS})$ and $\mathcal{CS} \in \Psi(\mathcal{C}, \Lambda)$, partially observable Shapley policy iteration converges to the optimal coalition values and the optimal joint policy.
        \end{proposition}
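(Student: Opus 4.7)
The plan is to treat the POMCG, once observation histories are summarised by the joint belief state $b$, as a family of belief-MDPs indexed by coalition structures. The coalition value $V^{\pi_{\mathcal{C}}}(b)$ and coalition Q-value $Q(b, \mathbf{a}_{\scriptscriptstyle \mathcal{C}})$ then satisfy standard Bellman-type recursions over belief space, and Algorithm~\ref{alg:pi_pomcg} fits the generalised-policy-iteration template: the evaluation phase computes the exact coalition value function for the current joint policy, the improvement phase produces a joint policy whose evaluation is no worse, and iteration terminates at a joint policy satisfying the belief-space analogue of the Shapley-Bellman optimality condition.

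Policy evaluation is handled directly by Lemma~\ref{lemm:value_iteration}: the inner value iteration converges in the infinity norm to $V^{\pi_{\mathcal{C}}^{k}}(b)$ for every $\mathcal{C}$, $\mathcal{CS} \in \Psi(\mathcal{C},\Lambda)$ and $b \in \mathcal{B}(\mathcal{CS})$, so Line~3 produces exact coalition values. The delicate step is monotonic policy improvement. The key observation is that the subtracted term $Q(b,\mathbf{a}_{\scriptscriptstyle\mathcal{C}_{i}})$ in the Shapley argmax of Line~14 does not depend on $\pi_{i}$ because $i \notin \mathcal{C}_{i}$, so Line~14 reduces to $\pi_{i}^{k+1} \in \arg\max_{\pi_{i}} \sum_{\mathcal{C}_{i}} w(\mathcal{C}_{i})\, Q(b,\mathbf{a}_{\scriptscriptstyle \mathcal{C}_{i} \cup \{i\}})$. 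Because Algorithm~\ref{alg:pi_pomcg} refreshes the coalition Q-values containing $i$ using $\pi_{i}^{k+1}$ before moving to agent $i+1$, the loop over agents is a Gauss-Seidel coordinate ascent; I would establish monotonicity in two sub-steps. First, a per-agent claim: substituting $\pi_{i}^{k+1}$ for $\pi_{i}^{k}$ cannot decrease any coalition Q-value containing $i$, proved by Assumption~\ref{assm:agent_policy_assumption} together with a one-step policy-improvement argument in belief space. Second, a global claim appealing to efficiency (Proposition~\ref{prop:shapley_value_properties}(ii)), which lifts the per-agent non-decrease to $V^{\pi^{k+1}}(b) \geq V^{\pi^{k}}(b)$ for all $b$.

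Convergence to the optimum then follows by combining monotonicity with Assumption~\ref{assm:partial_observability}: value functions are represented by finitely many $\alpha$-vectors, so the monotone sequence $\{V^{\pi^{k}}\}$ lives in a compact set and converges to a fixed point. At the fixed point the argmax equality in Line~14 holds simultaneously for every agent, and by efficiency the per-agent Shapley optimality conditions assemble into the belief-space analogue of the Shapley-Bellman optimality equation (Eq.~\ref{eq:shapley_q_optimality_equation}); Theorem~\ref{thm:shapley_q_optimal}, reapplied in the belief-MDP, identifies this fixed point with the unique optimal coalition values and the optimal joint deterministic policy, and Theorem~\ref{thm:shapley_value_core} places the resulting payoff in the Markov core. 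The main obstacle is the per-agent claim above: the Gauss-Seidel sweep lets earlier agents' updates change coalition Q-values before later agents improve, so one must carefully track how $Q(b,\mathbf{a}_{\scriptscriptstyle \mathcal{C}_{j} \cup \{j\}})$ depends on the already-updated $\pi_{<j}$ while remaining independent of $\pi_{>j}$, which is exactly what Assumption~\ref{assm:assumption_for_joint_policy_factorisation} is designed to guarantee.
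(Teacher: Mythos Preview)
Your high-level plan matches the paper's: treat the training phase as a belief-MDP, invoke Lemma~\ref{lemm:value_iteration} for exact policy evaluation, establish monotone policy improvement, and read off optimality at the fixed point. The differences lie in how you execute the last two steps.

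For improvement, the paper isolates your ``per-agent claim'' as a separate result, Lemma~\ref{lemm:policy_improvemnent_msv}, which directly shows that the Shapley-based argmax in Line~14 yields $V^{\pi_{\mathcal{C}}^{k}}(b)\le Q^{\pi_{\mathcal{C}}^{k}}(b,\pi_{\scriptscriptstyle\mathcal{C}}^{k+1}(b))$ for \emph{every} coalition $\mathcal{C}$, not just the grand coalition. It then runs the classical Bertsekas sandwich: start the evaluation operator for $\pi^{k+1}$ at $V_{0}^{\mathcal{C}}=V^{\pi_{\mathcal{C}}^{k}}$ and show $V_{0}^{\mathcal{C}}(b)\le V_{1}^{\mathcal{C}}(b)\le\cdots\to V^{\pi_{\mathcal{C}}^{k+1}}(b)$. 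Your ``global claim appealing to efficiency'' is unnecessary here and, as stated, only yields improvement of the grand-coalition value $V^{\pi}(b)$, whereas the proposition requires coalition-wise improvement; the per-agent claim already delivers that once you note it holds for all $\mathcal{C}\ni i$.

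For termination and identification of the limit, the paper does not appeal to compactness of $\alpha$-vector representations nor to Theorem~\ref{thm:shapley_q_optimal}. It simply argues that the set of stationary policies is finite, so the monotone sequence stabilises at some $k$ with $V^{\pi_{\mathcal{C}}^{k}}=V^{\pi_{\mathcal{C}}^{k+1}}$ for all $\mathcal{C}$; substituting this into the improvement step immediately gives $V^{\pi_{\mathcal{C}}^{k}}(b)=\max_{\mathbf{a}_{\mathcal{C}}}\bigl[R_{b}(b,\mathbf{a}_{\scriptscriptstyle\mathcal{C}})+\gamma\sum_{\mathbf{o}'}T(\mathbf{o}'\mid b,\mathbf{a})V^{\pi_{\mathcal{C}}^{k}}(\zeta(\mathbf{o}',\mathbf{a},b))\bigr]$, i.e.\ the belief-space Bellman optimality equation of Theorem~\ref{thm:belief_mdp_vi}, whence $V^{\pi_{\mathcal{C}}^{k}}=V^{\pi_{\mathcal{C}}^{*}}$. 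Your compactness route gives convergence of values but not finite termination, and re-establishing Theorem~\ref{thm:shapley_q_optimal} in belief space is avoidable overhead when the coalition Bellman equation falls out of the stalled improvement step directly.
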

        \begin{proof}
            See the detailed proof in Appendix \ref{sec:proof_of_shapley_policy_iteration_fo_pomcg}.
        \end{proof}
        
        \begin{corollary}
        \label{coro:shapley_policy_iteration_core}
            Partially observable Shapley policy iteration is guaranteed to converge to the Markov core.
        \end{corollary}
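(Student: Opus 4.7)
The plan is to reduce Corollary \ref{coro:shapley_policy_iteration_core} to the fully observable case by exploiting the equivalence between the POMCG training phase and its associated belief-MDP, and then invoking the analogues of Lemma \ref{lemm:markov_core_convex_set} and Theorem \ref{thm:shapley_value_core} lifted to belief states. Concretely, by Proposition \ref{prop:policy_iteration}, POSPI converges to the optimal coalition values $\max_{\pi_{\mathcal{C}}} V^{\pi_{\mathcal{C}}}(b)$ and the optimal joint policy for every $\mathcal{C} \subseteq \mathcal{N}$, $\mathcal{CS} \in \Psi(\mathcal{C}, \Lambda)$, and $b \in \mathcal{B}(\mathcal{CS})$. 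The policy improvement step (lines 10--18 of Algorithm \ref{alg:pi_pomcg}) is precisely the argmax of a weighted sum of marginal contributions $Q(b, \mathbf{a}_{\mathcal{C}_{i}\cup\{i\}}) - Q(b, \mathbf{a}_{\mathcal{C}_{i}})$ with Shapley weights $Pr(\mathcal{C}_{i}|\mathcal{N}\backslash\{i\})$, so the credit assignment it produces is by construction the Markov Shapley value (Definition \ref{def:shapley_value}) evaluated at belief states rather than states.

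Next, I would restate the Markov core for the POMCG setting: the payoff distribution $(\max_{\pi_{i}} x_{i}(b))_{i \in \mathcal{N}}$ lies in the Markov core if $\sum_{i \in \mathcal{C}} \max_{\pi_{i}} x_{i}(b) \geq \max_{\pi_{\mathcal{C}}} V^{\pi_{\mathcal{C}}}(b)$ for every $\mathcal{C} \subseteq \mathcal{N}$ and every admissible $b$. Because the belief-MDP inherits the convex-game structure coalition-wise (the characteristic function $V^{\pi_{\mathcal{C}}}(b)$ still satisfies the superadditivity of Eq.~\ref{eq:mcg_assumption} at each belief, by Assumption \ref{assm:assumption_for_joint_policy_factorisation} and the fact that belief evolution is shared across coalitions inside a fixed $\mathcal{CS}$), the convexity argument of Lemma \ref{lemm:markov_core_convex_set} carries over verbatim with $\mathbf{s}$ replaced by $b$. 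Similarly, the observation underlying Lemma \ref{lemm:condition_coalition_marginal_contribution}, that each optimal marginal contribution satisfies the coalitional stability inequality, transfers directly to the belief-MDP because the argument is purely a rearrangement of Eq.~\ref{eq:mcg_assumption}.

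Combining these ingredients, the optimal Markov Shapley value at every belief $b$ is an affine combination, with nonnegative weights summing to one, of optimal marginal contributions, each of which lies in the (belief-wise) Markov core. Since Markov core is a convex set, the affine combination remains inside it, exactly mirroring the proof of Theorem \ref{thm:shapley_value_core}. Because POSPI is guaranteed by Proposition \ref{prop:policy_iteration} to converge to this optimal Markov Shapley value (its policy improvement step is the argmax of the Shapley expansion of the coalition Q-values), convergence to the Markov core follows.

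The main obstacle will be formally verifying that the convex-game superadditivity condition is preserved belief-wise, i.e., that $\max_{\pi_{\mathcal{C}_{\cup}}} V^{\pi_{\mathcal{C}_{\cup}}}(b) \geq \max_{\pi_{\mathcal{C}_{m}}} V^{\pi_{\mathcal{C}_{m}}}(b) + \max_{\pi_{\mathcal{C}_{k}}} V^{\pi_{\mathcal{C}_{k}}}(b)$ holds for every $b \in \mathcal{B}(\mathcal{CS})$ and every admissible partition, given that the belief state itself is coalition-structure dependent through $\Omega(\mathbf{o}|\mathbf{s},\mathbf{a},\mathcal{CS})$. This requires a careful argument that passing from the grand coalition to a sub-coalition only restricts the observability and policy class in a monotone way, so that the inequality transfers from the underlying state-based MCG to the belief-MDP without leakage across coalition structures. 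Once this is in hand, the rest of the proof is a direct translation of Theorem \ref{thm:shapley_value_core} to the belief setting.
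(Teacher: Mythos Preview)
Your approach is essentially the same as the paper's: combine Proposition~\ref{prop:policy_iteration} (POSPI converges to the optimal coalition values, hence to the optimal Markov Shapley values) with Theorem~\ref{thm:shapley_value_core} (the optimal Markov Shapley value lies in the Markov core). The paper's proof is literally a two-sentence invocation of these two results and does not separately re-derive belief-state analogues of Lemma~\ref{lemm:markov_core_convex_set} or Lemma~\ref{lemm:condition_coalition_marginal_contribution}; your extra work lifting the convexity and superadditivity arguments to the belief-MDP, and your flagged obstacle about whether Eq.~\eqref{eq:mcg_assumption} holds belief-wise under the $\mathcal{CS}$-dependent observation model, are concerns the paper simply glosses over by citing Theorem~\ref{thm:shapley_value_core} directly.
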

        \begin{proof}
            This result can be directly obtained by the result of Proposition \ref{prop:policy_iteration} that partially observable Shapley policy iteration converges to the optimal coalition values which can form the optimal Markov Shapley values, and the result of Theorem \ref{thm:shapley_value_core} that the optimal Markov Shapley value is a solution in the Markov core.
        \end{proof}
        
    \subsection{Partially Observable Shapley Value Iteration}
    \label{subsec:shapley_value_iteration_for_pomcg}
        \begin{proposition}
        \label{prop:value_iteration_pomcg}
            For all $\mathcal{C} \ \mathlarger{\mathlarger{\subseteq}} \ \mathcal{N}$, $b \in \mathcal{B}(\mathcal{CS})$ and $\mathcal{CS} \in \Psi(\mathcal{C}, \Lambda)$, the partially observable Shapley value iteration such that
            \begin{equation}
            \label{eq:shapley_value_iteration}
                Q_{m+1}(b, \mathbf{a}_{\scriptscriptstyle \mathcal{C}}) \leftarrow R_{b}(b, \mathbf{a}_{\scriptscriptstyle\mathcal{C}}) + \gamma \sum_{\mathbf{o}' \in \mathcal{O}} Pr(\mathbf{o}' | b, \mathbf{a}, \mathcal{CS}) \max_{\mathbf{a}_{\scriptscriptstyle \mathcal{C}}'} Q_{m}(\tau(\mathbf{o}', \mathbf{a}, b | \mathcal{CS}), \mathbf{a}_{\scriptscriptstyle \mathcal{C}}'),
            \end{equation}
            converges to the optimal coalition Q-values and the optimal joint policy.
        \end{proposition}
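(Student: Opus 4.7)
The plan is to mirror the structure of Theorem \ref{thm:shapley_q_optimal} and Lemma \ref{lemm:value_iteration}, but adapted to the belief-MDP reformulation of the POMCG and to the $Q$-function update with a maximization over coalition actions. First, I would fix an arbitrary coalition $\mathcal{C} \ \mathlarger{\mathlarger{\subseteq}} \ \mathcal{N}$ and an arbitrary $\mathcal{CS} \in \Psi(\mathcal{C}, \Lambda)$, and define the Bellman optimality operator $\mathcal{T}_{\scriptscriptstyle\mathcal{C}}$ acting on bounded functions $Q: \mathcal{B}(\mathcal{CS}) \times \mathcal{A}_{\scriptscriptstyle\mathcal{C}} \rightarrow \mathbb{R}$ by
\begin{equation*}
(\mathcal{T}_{\scriptscriptstyle\mathcal{C}} Q)(b, \mathbf{a}_{\scriptscriptstyle\mathcal{C}}) = R_{b}(b, \mathbf{a}_{\scriptscriptstyle\mathcal{C}}) + \gamma \sum_{\mathbf{o}' \in \mathcal{O}} Pr(\mathbf{o}' | b, \mathbf{a}, \mathcal{CS}) \max_{\mathbf{a}_{\scriptscriptstyle\mathcal{C}}'} Q(\tau(\mathbf{o}', \mathbf{a}, b | \mathcal{CS}), \mathbf{a}_{\scriptscriptstyle\mathcal{C}}'),
\end{equation*}
so that the iteration in Eq.~\ref{eq:shapley_value_iteration} is exactly $Q_{m+1} = \mathcal{T}_{\scriptscriptstyle\mathcal{C}} Q_{m}$.

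Next I would show that $\mathcal{T}_{\scriptscriptstyle\mathcal{C}}$ is a $\gamma$-contraction in the sup norm on the space of bounded functions over $\mathcal{B}(\mathcal{CS}) \times \mathcal{A}_{\scriptscriptstyle\mathcal{C}}$. Given two candidates $Q$ and $Q'$, the contraction follows from the elementary inequality $|\max_{x} f(x) - \max_{x} g(x)| \leq \max_{x} |f(x)-g(x)|$ together with the fact that $\sum_{\mathbf{o}'} Pr(\mathbf{o}' | b, \mathbf{a}, \mathcal{CS}) = 1$, which collapses the right-hand side to $\gamma \|Q - Q'\|_{\infty}$. Here Assumption \ref{assm:partial_observability} is what makes the relevant function space non-empty and complete with the sup norm: value functions admit a finite-vector representation, so we may work within that closed subset and the maximization over $\mathbf{a}_{\scriptscriptstyle\mathcal{C}}$ (finite by Assumption \ref{assm:basic_condition}) is well defined. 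Banach's fixed-point theorem then yields a unique fixed point $Q^{\pi_{\mathcal{C}}^{*}}$ to which $\{Q_{m}\}$ converges from any initialisation.

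The fixed-point equation $Q^{\pi_{\mathcal{C}}^{*}} = \mathcal{T}_{\scriptscriptstyle\mathcal{C}} Q^{\pi_{\mathcal{C}}^{*}}$ is precisely the Bellman optimality equation for the coalition $\mathcal{C}$ in the belief-MDP; since solving the belief-MDP is equivalent to solving the underlying POMCG (as noted in Section~\ref{subsec:problem_formulation}), $Q^{\pi_{\mathcal{C}}^{*}}$ is the optimal coalition Q-value of $\mathcal{C}$ under the coalition structure $\mathcal{CS}$. Running the argument for every $\mathcal{C} \ \mathlarger{\mathlarger{\subseteq}} \ \mathcal{N}$ and every $\mathcal{CS} \in \Psi(\mathcal{C}, \Lambda)$ delivers the full family of optimal coalition Q-values. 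Taking $\mathcal{C} = \mathcal{N}$ gives the optimal grand coalition Q-value directly, from which the greedy policy $\pi_{i}^{*}(b) \in \arg\max_{a_{i}} Q_{i}^{\phi^{*}}(b, a_{i})$ recovers the optimal joint policy; equivalently, the efficiency property of Markov Shapley value (Proposition \ref{prop:shapley_value_properties}) together with the optimal coalition Q-values for the intermediate coalitions decomposes $\max_{\mathbf{a}} Q^{\pi^{*}}(b, \mathbf{a})$ as $\sum_{i \in \mathcal{N}} \max_{a_{i}} Q_{i}^{\phi^{*}}(b, a_{i})$, certifying optimality of the induced joint policy.

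The main obstacle I anticipate is technical rather than conceptual: carefully handling the belief space $\mathcal{B}(\mathcal{CS})$, which is in general uncountable, so that the sup-norm contraction argument is rigorous. This is where Assumption \ref{assm:partial_observability} does the heavy lifting, restricting attention to the finite-vector parametrisation of convex value functions \cite{littman1995efficient} and guaranteeing that the relevant function space is closed under $\mathcal{T}_{\scriptscriptstyle\mathcal{C}}$. The remaining subtlety is that the dependence of $\tau(\mathbf{o}', \mathbf{a}, b | \mathcal{CS})$ and $Pr(\mathbf{o}' | b, \mathbf{a}, \mathcal{CS})$ on the full joint action $\mathbf{a}$ (not just $\mathbf{a}_{\scriptscriptstyle\mathcal{C}}$) must be treated consistently with Assumption \ref{assm:assumption_for_joint_policy_factorisation}, which ensures that agents outside $\mathcal{C}$ do not affect the coalition's value function, so that the maximization inside $\mathcal{T}_{\scriptscriptstyle\mathcal{C}}$ is truly over $\mathbf{a}_{\scriptscriptstyle\mathcal{C}}'$ alone and the contraction argument carries through cleanly.
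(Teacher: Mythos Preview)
Your proposal is correct but takes a somewhat different route from the paper. The paper's proof does not establish a contraction of $\mathcal{T}_{\scriptscriptstyle\mathcal{C}}$ on $Q$-functions directly; instead it reduces the $Q$-iteration to a $V$-iteration by setting $\hat{V}_{m}(b) = \max_{\mathbf{a}_{\scriptscriptstyle\mathcal{C}}} Q_{m}(b, \mathbf{a}_{\scriptscriptstyle\mathcal{C}})$, observes that the resulting update is exactly the belief-MDP value iteration of Theorem~\ref{thm:belief_mdp_vi} (Smallwood--Sondik) with action set $\mathcal{A}_{\scriptscriptstyle\mathcal{C}}$, and then reads off convergence to the Bellman optimality equation for $\hat{V}^{\pi_{\mathcal{C}}^{*}}$; the optimal coalition $Q$-values and the optimal joint policy (by taking $\mathcal{C}=\mathcal{N}$) follow immediately. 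Your approach is more self-contained---you redo the contraction argument from scratch on $Q$-functions and invoke Banach directly---whereas the paper outsources that work to the cited POMDP result. The paper's route is shorter and avoids having to discuss the function space on $\mathcal{B}(\mathcal{CS})\times\mathcal{A}_{\scriptscriptstyle\mathcal{C}}$ explicitly, since Theorem~\ref{thm:belief_mdp_vi} already packages the technicalities you flag about the uncountable belief space under Assumption~\ref{assm:partial_observability}; your route, on the other hand, makes the mechanism transparent and does not require the reader to check that the reduction from $Q$-iteration to $V$-iteration is lossless. Your closing remarks about efficiency and $Q_{i}^{\phi^{*}}$ are not needed here---the paper simply takes $\mathcal{C}=\mathcal{N}$ to conclude optimality of the joint policy, without passing through the Shapley decomposition.
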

        \begin{proof}
            See the detailed proof in Appendix \ref{sec:proof_of_shapley_policy_iteration_fo_pomcg}.
        \end{proof}
        
        In this section, we propose \textit{partially observable Shapley value iteration} (POSVI) that is a value-based algorithm in contrast with the POSPI that is a policy-based algorithm. It can be observed that the POSVI is a special case of the POSPI, where the policy evaluation is performed only once. Proposition \ref{prop:value_iteration_pomcg} proves that the POSVI converges to the optimal coalition Q-values and the optimal joint policy.
        
        Since $R_{b}(b, \mathbf{a}_{\scriptscriptstyle\mathcal{C}}) = \sum_{\mathbf{s} \in \mathcal{S}} b(\mathbf{s} | \mathcal{CS}) R(\mathbf{s}, \mathbf{a}_{\scriptscriptstyle\mathcal{C}})$ and the identifiablility of $Pr(\mathbf{o}' | b, \mathbf{a}, \mathcal{CS})$ (i.e., the expression can be factorised into estimands), we can sample $b_{t}(\mathbf{s} | \mathcal{CS})$ and $Pr(\mathbf{o}' | b, \mathbf{a}, \mathcal{CS})$ to derive the following operation to update coalition Q-values such that
        \begin{equation}
        \label{eq:pomcg_q-learning}
            Q_{m+1}(b, \mathbf{a}_{\scriptscriptstyle \mathcal{C}}) \leftarrow R(\mathbf{s}, \mathbf{a}_{\scriptscriptstyle\mathcal{C}}) + \gamma \max_{\mathbf{a}_{\scriptscriptstyle \mathcal{C}}'} Q_{m}(\tau(\mathbf{o}', \mathbf{a}, b | \mathcal{CS}), \mathbf{a}_{\scriptscriptstyle \mathcal{C}}').
        \end{equation}
    
    \subsection{The Model-Free Implementation}
    \label{subsec:the_model-free_implementation}
        In this section, we discuss the model-free implementations of POSPI and POSVI to solve the estimation of coalition value functions under the partially observable scenarios for SQDDPG, SMFPPO and SHAQ. Since SQDDPG and SMFPPO belong to the category of multi-agent policy gradient methods (i.e. an instance of multi-agent policy iteration), we discuss the model-free implementation of POSPI for SQDDPG and SMFPPO. Similarly, we also discuss the model-free implementation of POSVI for SHAQ.

        \paragraph{Implementation of POSPI for SQDDPG and SMFPPO.} The input of each agent's policy in the setting of POMCG is a belief state rather than an exact state, which is able to be inferred by its observation and the belief state at the preceding timestep \cite{oliehoek2016concise}. Motivated by this result, each agent's policy as an aggregation model of the belief inference model and the policy function is modelled as a recurrent neural network (RNN) for SQDDPG and SMFPPO. Similarly, the coalition value function can be modelled as an RNN also, but with the concatenation of all agents' observations as the input. Nevertheless, in practice the concatenation of agents' observations is empirically shown to be sufficient to represent the state in many scenarios \cite{LyuXDA21}. In light of this finding, SQDDPG and SMFPPO directly use multi-layer perceptrons (MLPs) to model the coalition value function with the input as the concatenation of agents' observations.
        
        \paragraph{Implementation of POSVI for SHAQ.} The coalition Q-value $Q(\tau(\mathbf{o}', \mathbf{a}, b | \mathcal{CS}), \mathbf{a}_{\scriptscriptstyle \mathcal{C}}')$ can be directly modelled as an RNN, where the belief state $b$ is represented as the hidden state. This gives an evidence of why the practical implementation of SHAQ uses RNN to handle partially observable scenarios (before further approximation to handling decentralised execution as shown in Proposition \ref{prop:shapley_value_approximate}). In more details, the Markov Shapley Q-value is directly modelled as an RNN in SHAQ, which can be seen as an aggregation model that linearly combines multiple RNNs of coalition Q-values. The linearity is resulting from a fact that Markov Shapley Q-value is equal to the convex combination of differences of coalition Q-values.

% ==================================================================
% Name of this chapter.
% ==================================================================

\chapter{Evaluation on Benchmark Tasks}
\label{chap:evaluation_on_benchmark_tasks}
    In this chapter, we show the evaluation of SQDDPG and SHAQ on several conventional benchmarks in the community of machine learning research. In the evaluation, we not only show the performance improvement, but also provide some visualization and analysis to demonstrate the interpretability of Markov Shapley value, which may show the potential application to the real-world tasks. For example, Markov Shapley value can be regarded as an index to evaluate the decision of agents. Moreover, if an agent is under cyber-attack, the changes of Markov Shapley value can capture the anomaly.
    
    \section{Evaluation of SQDDPG}
    \label{sec:evaluation_of_sqddpg}
        We evaluate SQDDPG on Cooperative Navigation, Prey-and-Predator \cite{lowe2017multi} and Traffic Junction \cite{NIPS2016_6398}. In the experiments, we compare SQDDPG with two independent algorithms (with decentralised critics), such as independent DDPG (IDDPG) \cite{lillicrap2015continuous} and independent A2C (IA2C) \cite{sutton2018reinforcement}, and two state-of-the-art methods with centralised critics, such as MADDPG \cite{lowe2017multi} and COMA \cite{foerster2018counterfactual}. To keep the fairness of comparison, the policy and critic networks for all MARL algorithms are parameterized by MLPs. All models are trained by the Adam optimizer \cite{kingma2014adam}. The details of experimental setups are given in Appendix \ref{sec:experimental_settings_of_benchmark_for_sqddpg}.
    
        \subsection{Cooperative Navigation}
        \label{subsec:cooperative_navigation}
            \begin{figure*}[ht!]
                \centering
                \begin{subfigure}[b]{0.35\textwidth}
                    \centering
                    \includegraphics*[width=\textwidth]{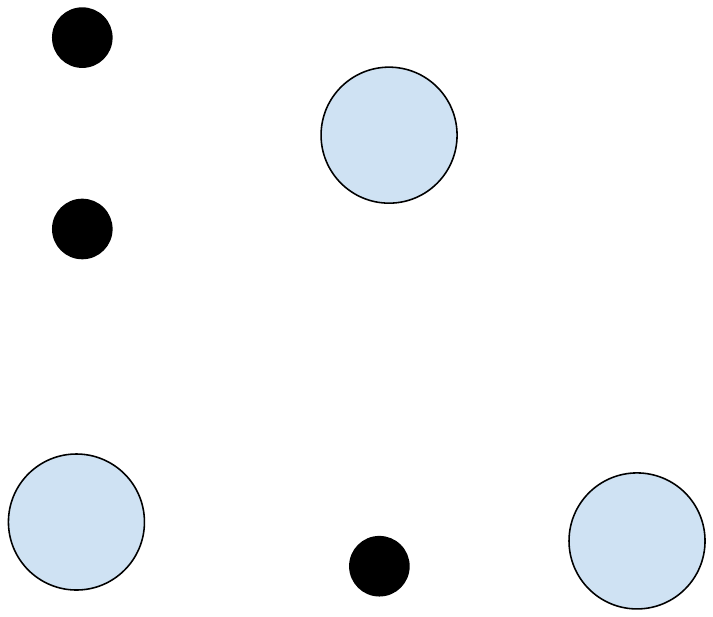}
                    \caption{Cooperative Navigation.}
                    \label{fig:cooperative_navigation_demo}
                \end{subfigure}%
                ~
                \begin{subfigure}[b]{0.35\textwidth}
                    \centering
                    \includegraphics*[width=\textwidth]{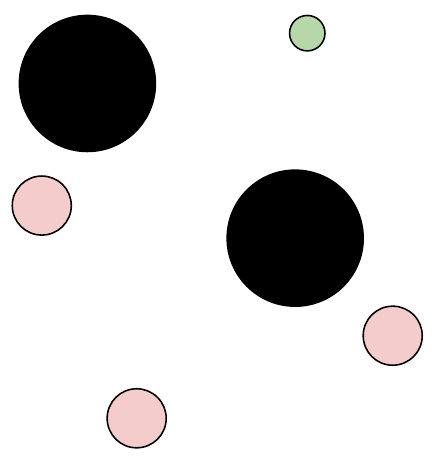}
                    \caption{Predator-Prey.}
                    \label{fig:predator_prey_demo}
                \end{subfigure}%
                \caption{Demonstration of the environments of Cooperative Navigation and Predator-Prey. (a) The circle in blue represents an agent and the circle in black represents a target. (b) The circle in red is a predator, the circle in green is a prey and the circle in black is an obstacle that cannot be crossed through.}
            \end{figure*}
            
            \paragraph{Environment Settings.} In this environment, there are 3 agents that are controllable and 3 targets, as shown in Figure \ref{fig:cooperative_navigation_demo}. The objective of each agent is moving towards a target, with no prior allocations of the targets to the agents. The observation of each agent in this environment involves the current position and velocity, the displacement to three targets, and the displacement to other agents. The action space of each agent includes \texttt{move\_up}, \texttt{move\_down}, \texttt{move\_right}, \texttt{move\_left} and \texttt{stay}. The global reward of this environment is defined as the negative sum of the distance between each target and the nearest agent to it. If a collision happens, the global reward will be reduced by 1.
            
            \paragraph{Performance Analysis.} As seen from Figure \ref{fig:3_agents_reward_spread}, the SQDDPGs with variant sample sizes (i.e., M in Eq.~\ref{eq:shapley_approx}) outperform the baselines on the convergence rate. We believe that if more training episodes are permitted, the algorithms except for IA2C can achieve the similar performance as SQDDPG. Therefore, our result supports the argument that the credit assignment method converges faster than learning with the shared reward approach \cite{balch1997learning,balch1999reward}. As the sample size grows, the approximate Shapley Q-value estimation will be more accurate and easier to converge to the optimal value. This explains the reason why the convergence rate of SQDDPG becomes faster when the sample size increases. Since we show that SQDDPG with the sample size of 1 can finally reach almost the same performance as other variants, we just run the SQDDPG with the sample size of 1 in the rest of experiments to reduce the computational complexity.
            \begin{figure}[ht!]
                \centering
                \includegraphics[scale=0.55]{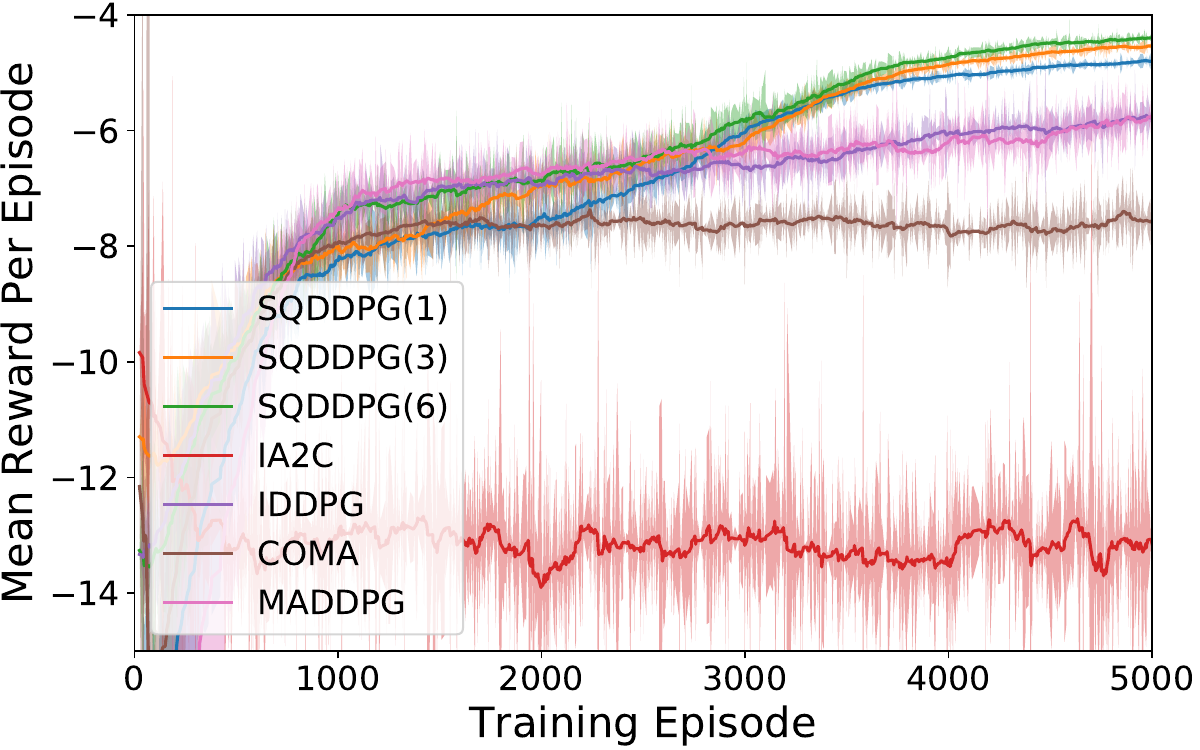}
                \caption{Mean reward per episode during training in Cooperative Navigation. SQDDPG(n) indicates SQDDPG with the sample size (i.e., M in Eq.~\ref{eq:shapley_approx}) of n. In the rest of experiments, since only SQDDPG with the sample size of 1 is run, we just use SQDDPG to represent SQDDPG(1).}
                \label{fig:3_agents_reward_spread}
            \end{figure}
            
        \subsection{Predator-Prey}
        \label{subsec:prey_predator}
            \paragraph{Environment Settings.} In this environment, the agents that can be controlled are 3 predators, while the prey is a random agent. The specific demonstration is shown in Figure \ref{fig:predator_prey_demo}. The aim of each predator is coordinating to capture the prey with the turns (timesteps) as less as possible. The observation of each predator involves the current position and velocity, the respective displacement to the prey and other predators, and velocity of the prey. The action space is the same as that defined in Cooperative Navigation. If the positions of any predator and the prey are overlapped, it indicates that the prey is captured. The global reward is defined as the negative minimal distance between the predators and the prey. If the prey is captured by any predator, the global reward will be 10 and the game terminates. 
            \begin{figure}[ht!]
                \centering
                \includegraphics[scale=0.55]{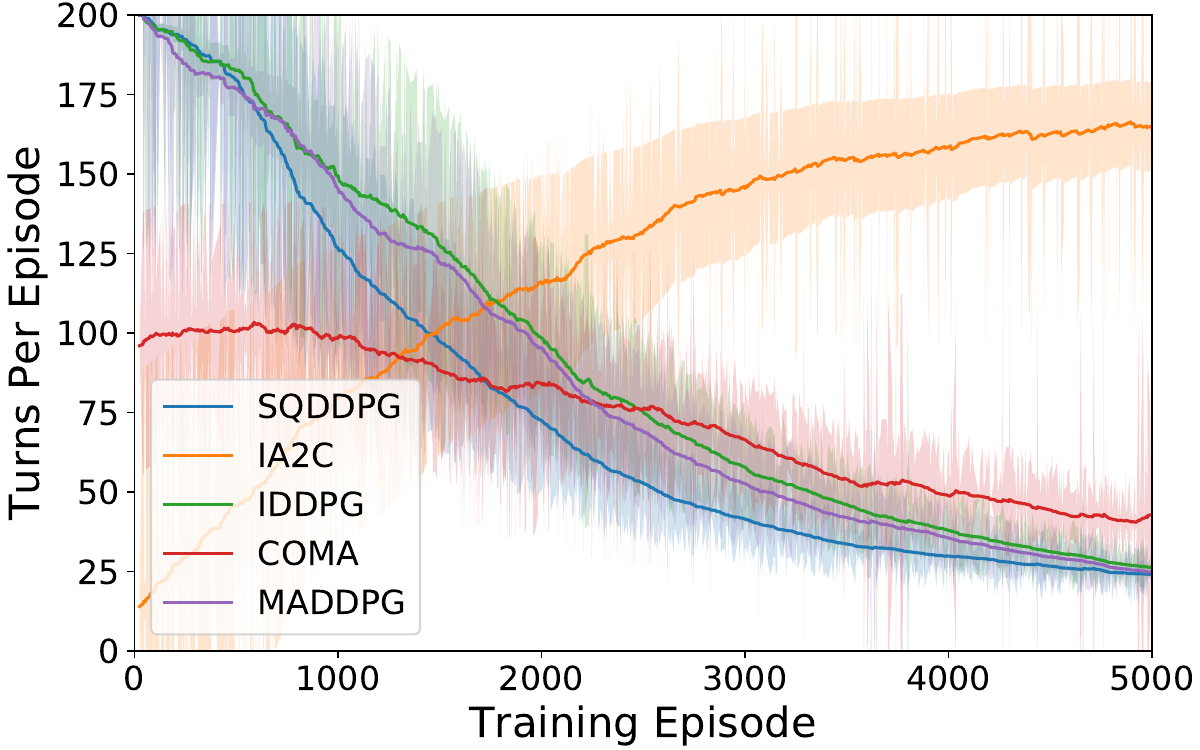}
                \caption{Turns to capture the prey per episode during training in Prey-and-Predator.}
                \label{fig:3_agents_turns_tag}
            \end{figure}
            
            \paragraph{Performance Analysis.} As Figure \ref{fig:3_agents_turns_tag} shows, SQDDPG converges fastest with around 25 turns to capture the prey, followed by MADDPG, IDDPG and COMA. This is because the fair credit assignment to each agent induced by the mechanism of Shapley value can well address the dummy agent problem we mentioned in Example \ref{exa:didactic_example} in Section \ref{subsec:credit_assignment}. IA2C is terribly the worst among all these algorithms which could suffer from the same issue as analysed for Cooperative Navigation.
            
        \subsection{Traffic Junction}
        \label{subsec:traffic_junction}
            \begin{figure*}[ht!]
            \centering
                \begin{subfigure}[b]{0.32\textwidth}
                    \includegraphics[width=\textwidth]{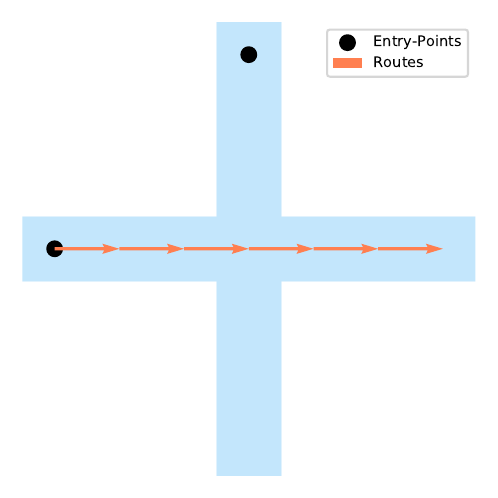}
                    \caption{Easy.}
                    \label{fig:easy_env}
                \end{subfigure}
                \begin{subfigure}[b]{0.32\textwidth}
                    \includegraphics[width=\textwidth]{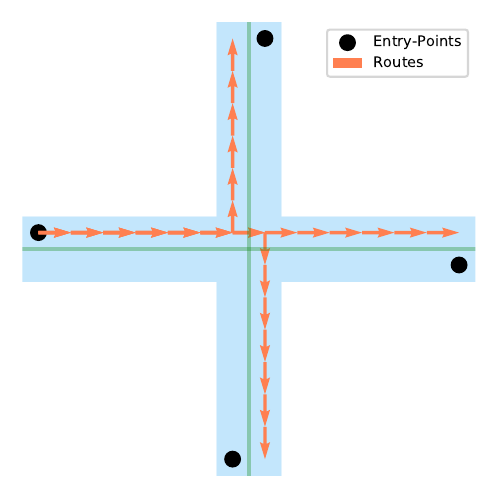} 
                    \caption{Medium.}
                    \label{fig:medium_env}
                \end{subfigure}
                \begin{subfigure}[b]{0.32\textwidth}
                    \includegraphics[width=\textwidth]{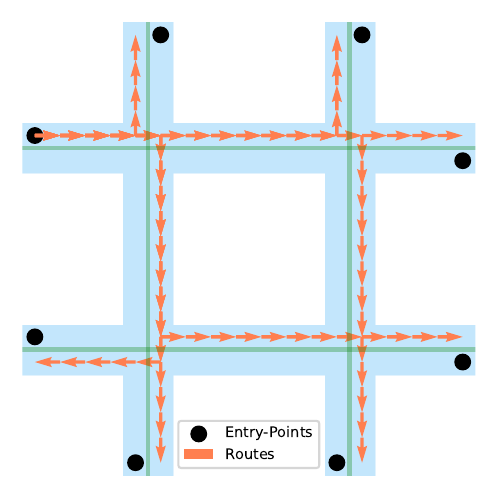}
                    \caption{Hard.}
                    \label{fig:hard_env}
                \end{subfigure}
                \caption{Visualizations of traffic junction environment. The black points represent the available entry points. The orange arrows represent the available routes at each entry point. The green lines separate the two-way roads.}
                \label{fig:tf_roads}
            \end{figure*}
            
            \paragraph{Environment Settings.} In this environment, cars move along the predefined routes which intersect on one or more traffic junctions. At each timestep, new cars enter into the environment with the probability $p_{\text{arrive}}$, and the total number of cars is restricted to $N_{\max}$. After a car finishes its mission, it will be removed from the environment and possibly sampled back onto a new route. Each car has a limited vision of 1, which means that it can only observe the circumstance within the 3x3 region surrounding it. No communication between cars is permitted in our experiment, in contrast to other experiments on the same task \cite{NIPS2016_6398,das2018tarmac}. The action space of each car includes \texttt{gas} and \texttt{brake}. The global reward function is $\sum_{i=1}^{N} \text{- }0.01t_{i}$, where $t_{i}$ is the timesteps that car $i$ is continuously active on the road in one mission and $N$ is the total number of cars. Additionally, if any collision happens, the global reward will be reduced by 10. We evaluate performance by the success rate, i.e., the proportion of episodes in which no collisions happen.
            
            \paragraph{Performance Analysis.} We compare SQDDPG with the baselines on the easy, medium and hard versions of Traffic Junction. The easy version is constituted of one traffic junction of two one-way roads on a $7\times7$ grid with $N_{\max}\text{ = }5$ and $p_{\text{arrive}}\text{ = }0.3$. The medium version is constituted of one traffic junction of two-way roads on a $14\times14$ grid with $N_{\max}\text{ = }10$ and $p_{\text{arrive}}\text{ = }0.2$. The hard version is constituted of four connected traffic junctions of two-way roads on a $18\times18$ grid with $N_{\max}\text{ = }20$ and $p_{\text{arrive}}\text{ = }0.05$. The demonstrations of the environments are shown in Figure \ref{fig:tf_roads}. From Table \ref{tab:traffic_junction}, we can see that on the easy version, except for IA2C, other algorithms can achieve the success rates over $93\%$, since this scenario is too easy. On the medium and hard versions, SQDDPG outperforms the baselines with the success rate of $88.98\%$ on the medium version and $87.04\%$ on the hard version, which demonstrates that SQDDPG is capable of solving the large-scale problems. Furthermore, the performance of SQDDPG significantly exceeds no-communication algorithms' performance reported as $84.9\%$ and $74.1\%$ in \cite{das2018tarmac}.
            \begin{table*}[ht!]
            \centering
            \caption{Success rate on Traffic Junction, tested with 20, 40, and 60 steps per episode in easy, medium and hard versions respectively. The results are obtained by running each algorithm after training for 1000 episodes.}
            \begin{tabular}{cccccc}
                \toprule
                    \textbf{Difficulty} & \textbf{IA2C} & \textbf{IDDPG} & \textbf{COMA} & \textbf{MADDPG} & \textbf{SQDDPG}  \\ 
                \midrule 
                    Easy & 65.01\%   & 93.08\%  & 93.01\%   & \textbf{93.72\%}   & 93.26\%  \\
                    Medium  & 67.51\%  & 84.16\% & 82.48\%  & 87.92\%         & \textbf{88.98\%} \\
                    Hard  & 60.89\% & 64.99\%  & 85.33\%   & 84.21\%         & \textbf{87.04\%} \\ 
            \bottomrule
            \end{tabular}
            \label{tab:traffic_junction}
            \end{table*}
            
        \subsection{Understanding Markov Shapley Value}
        \label{subsec:interpretability_of_sqddpg}
            To study and interpret the credit assignment, we visualize the Q-values of each MARL algorithm for one randomly selected trajectory of states and actions from an expert policy on Predator-Prey.\footnote{Note that the selected expert policy could be highly likely sub-optimal compared with the optimal policy forming the learned optimal Q-values.} For visualizing conveniently, we normalize the Q-values by min-max normalization \cite{patro2015normalization} for each MARL algorithm. We can see from Figure \ref{fig:credit_vis} that the credit assignment of SQDDPG is more interpretable than the baselines. Specifically, it is intuitive that the credit assigned to each agent by SQDDPG is inversely proportional to its distance to the prey. On the contrary, other MARL algorithms do not explicitly show this property. To verify the hypothesis, we also evaluate it quantitatively by Pearson correlation coefficient \cite{pearson1895vii} with 1000 randomly selected transition samples, to summarize the correlation between the credit assignment and the reciprocal of each predator's distance to the prey. The value of Pearson correlation coefficient is greater, the stronger the inverse proportion. As Table \ref{tab:correlation} shows, SQDDPG expresses the inverse proportion significantly, with the Pearson correlation coefficient as 0.3210. If a predator is closer to the prey, it is more likely to capture it and that predator's contribution should be more significant. Consequently, we demonstrate that Markov Shapley Q-value as a credit assignment scheme reflects the contribution to the team.
            \begin{figure*}[ht!]
                \centering
                \begin{subfigure}[b]{\textwidth}
                    \centering
                    \includegraphics*[width=\textwidth]{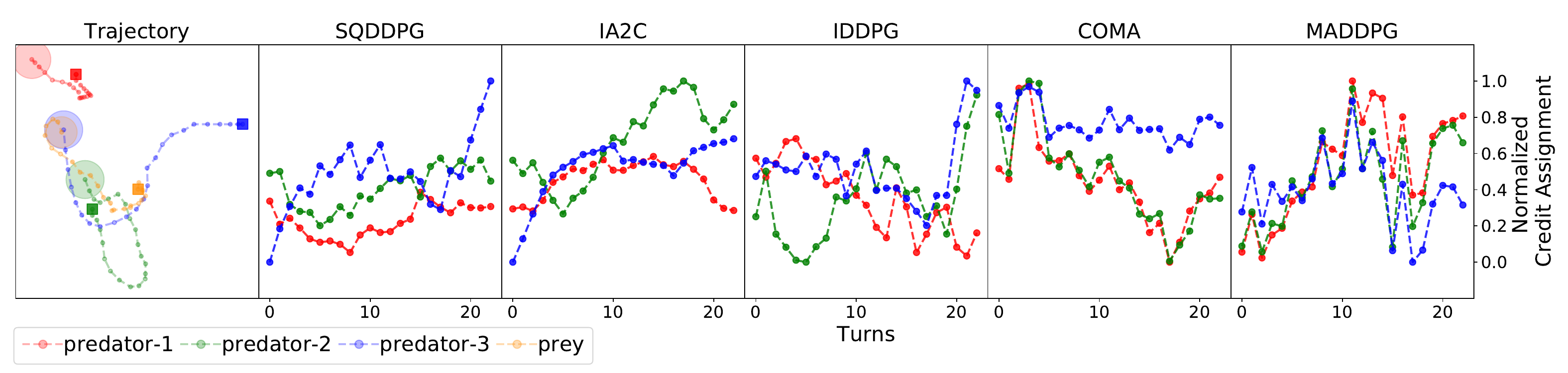}
                \end{subfigure}%
                \\
                \begin{subfigure}[b]{\textwidth}
                    \centering
                    \includegraphics*[width=\textwidth]{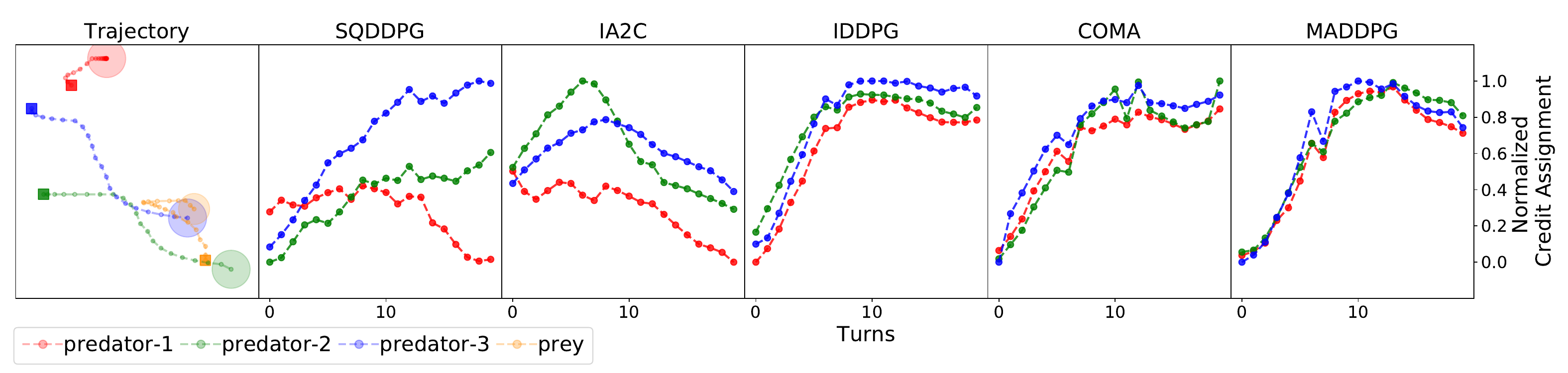}
                \end{subfigure}%
                \\
                \begin{subfigure}[b]{\textwidth}
                    \centering
                    \includegraphics*[width=\textwidth]{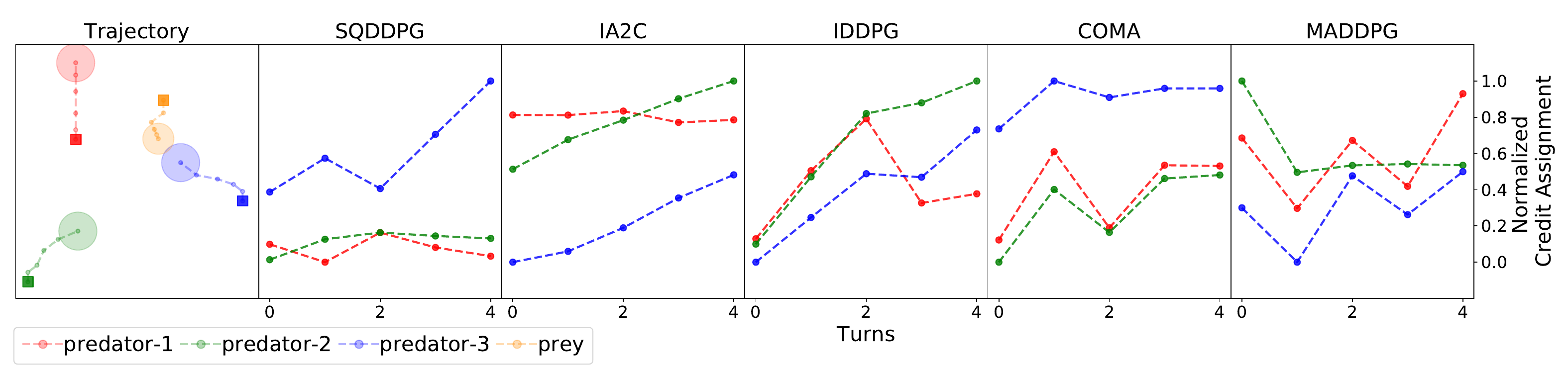}
                \end{subfigure}
                \\
                \begin{subfigure}[b]{\textwidth}
                    \centering
                    \includegraphics*[width=\textwidth]{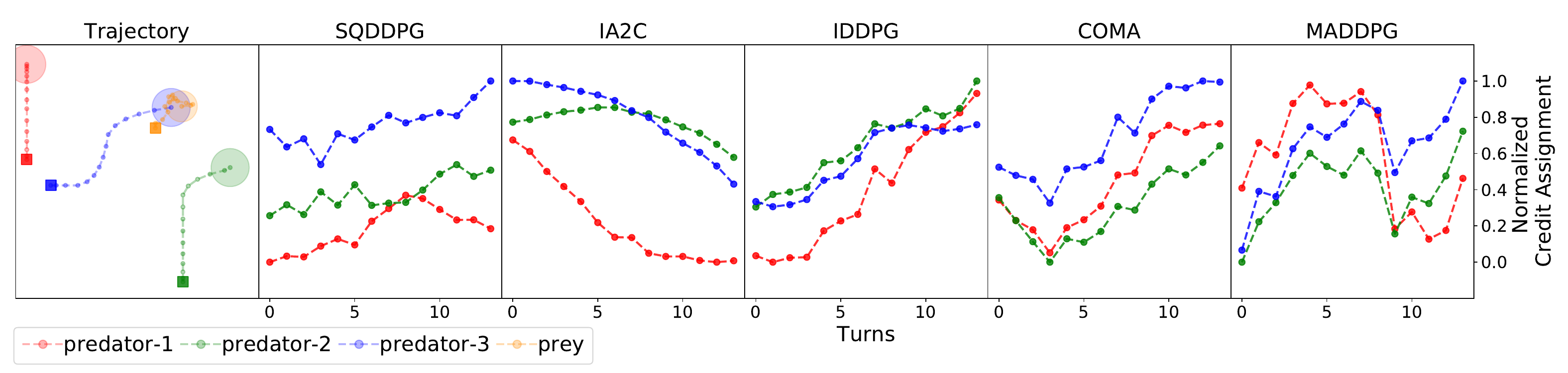}
                \end{subfigure}
                \\
                \begin{subfigure}[b]{\textwidth}
                    \centering
                    \includegraphics*[width=\textwidth]{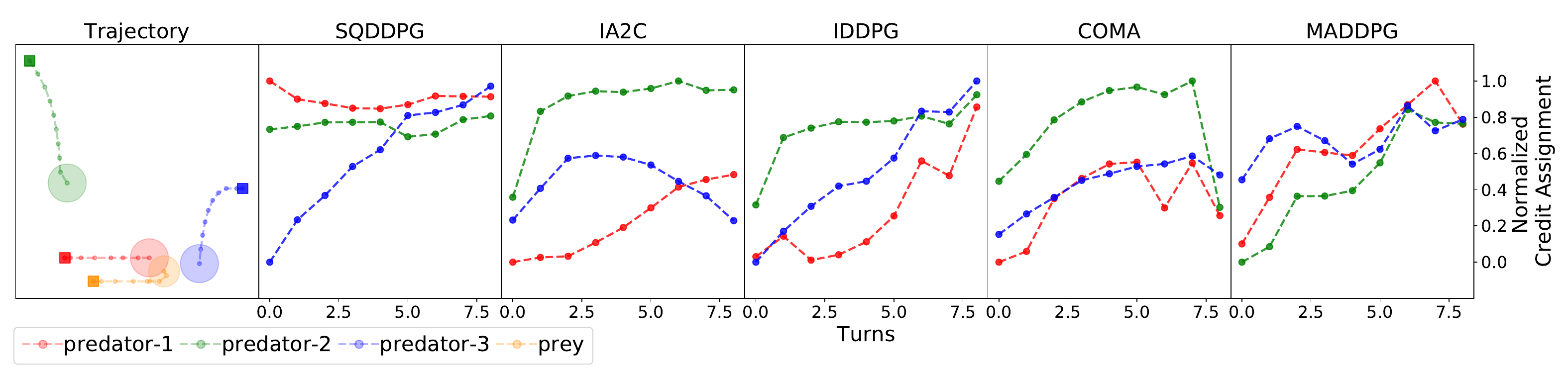}
                \end{subfigure}
                \caption{Credit assignment to each predator for a fixed trajectory. The leftmost figure records a trajectory sampled by an expert policy. The square represents the initial position, whereas the circle indicates the final position of each agent. The dots on the trajectory indicates each agent's temporary positions. The other figures show the normalized credit assignments generated by different MARL algorithms according to this trajectory.}
                \label{fig:credit_vis}
            \end{figure*}
            
            \begin{table*}[ht!]
                \centering
                \caption{Pearson correlation coefficient between the credit assignment to each predator and the reciprocal of its distance to the prey. This test is conducted by 1000 randomly selected episode samples.}
                \begin{tabular}{cccccc}
                    \toprule
                        \textbf{} & \textbf{IA2C} & \textbf{IDDPG} & \textbf{COMA} & \textbf{MADDPG} & \textbf{SQDDPG}  \\ 
                    \midrule 
                        Pearson correlation coefficient        &  0.0508  & 0.0061 & 0.1274 & 0.0094 & \textbf{0.3210} \\
                        two-tailed p-value &  1.6419e-1  & 8.6659e-1 & 4.6896e-4 & 7.9623e-1 & \textbf{1.9542e-19} \\
                \bottomrule
                \end{tabular}
                \label{tab:correlation}
            \end{table*}
        
    \section{Evaluation of SHAQ}
    \label{sec:evaluation_of_SHAQ}
        In this section, we show the evaluation of SHAQ on Predator-Prey \cite{bohmer2020deep} and various tasks in StarCraft Multi-Agent Challenge (SMAC).\footnote{The version that we use in this paper is SC2.4.6.2.69232 rather than the newer SC2.4.10. As reported from \cite{rashid2020weighted}, the performance is not comparable across versions.} The baselines that we select for comparison are COMA \cite{foerster2018counterfactual}, VDN \cite{SunehagLGCZJLSL18}, QMIX \cite{RashidSWFFW18}, MASAC \cite{iqbal2019actor}, QTRAN \cite{SonKKHY19}, QPLEX \cite{wang2020qplex} and W-QMIX (including CW-QMIX and OW-QMIX) \cite{rashid2020weighted}. The implementation details of SHAQ are shown in Appendix \ref{subsec:implementation_details_shapley_q_learning}, whereas the implementation of baselines are from \cite{rashid2020weighted}.\footnote{The source code of baseline implementation is from \url{https://github.com/oxwhirl/wqmix}.} For all experiments, we use the $\epsilon$-greedy exploration strategy, where $\epsilon$ is annealed from 1 to 0.05. The annealing timesteps vary among different experiments. For Predator-Prey, we apply 1 million timesteps for annealing, following the setup from \cite{wang2020qplex}. For the easy and hard maps in SMAC, we apply 50k time steps for annealing, the same as that leveraged in \cite{samvelyan2019starcraft}; while for the super-hard maps in SMAC, we apply 1 million timesteps for annealing to acquire more explorations so that more state-action pairs can be visited. About the replay buffer size, we set it as 5000 for all algorithms that is the same as \cite{rashid2020weighted}. To fairly evaluate all algorithms, we run each experiment with 5 random seeds. All figures showing experimental results are plotted with the median and 25\%-75\% quartile shading. The ablation study of SHAQ is shown in Section \ref{subsec:ablation_study}.
        
        \subsection{Predator-Prey}
        \label{subsec:predator-prey}
            \begin{figure}[ht!]
                \centering
                \includegraphics[scale=0.55]{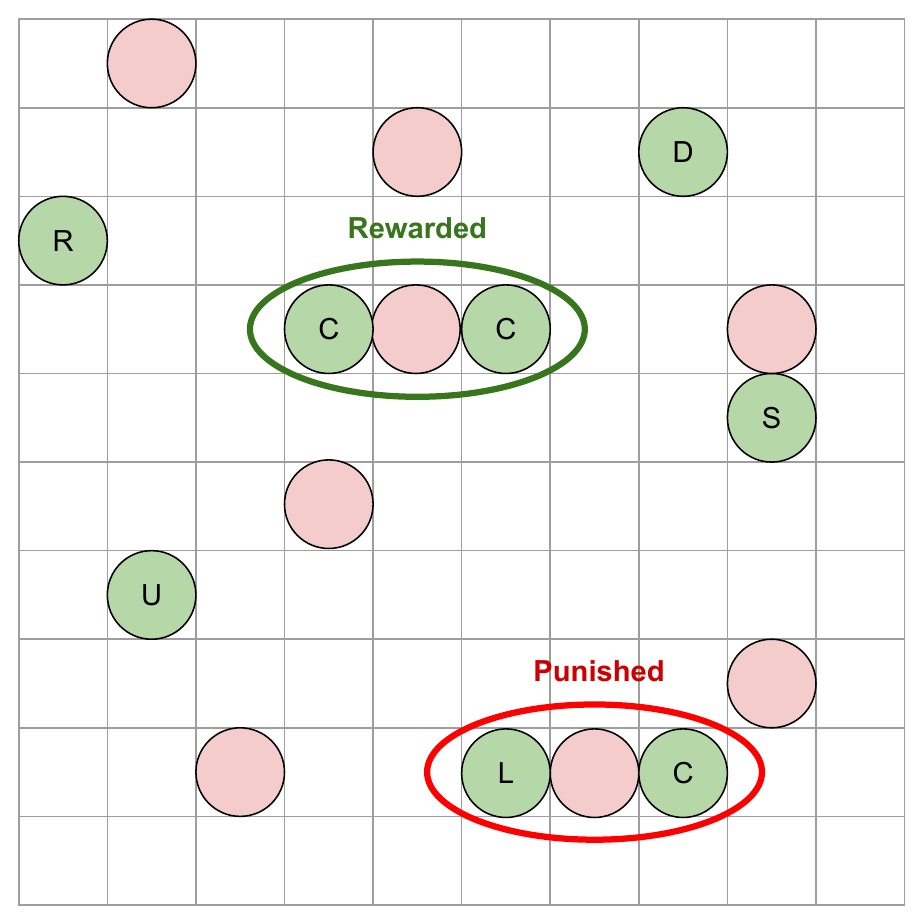}
                \caption{Environment of the grid-world version of Predator-Prey. The circle in red indicates a prey, while the circle in green indicates a predator. The alphabet insides a green circle indicates an action such that ``R'' means moving right, ``L'' means moving left, ``C'' means capturing, ``U'' means moving up, ``D'' means moving down, and ``S'' means staying (i.e., doing nothing).}
            \label{fig:predator_prey_demo_2}
            \end{figure}
            
            \begin{figure*}[ht!]
                \centering
                \begin{subfigure}[b]{0.48\textwidth}
                    \centering
                    \includegraphics[width=\textwidth]{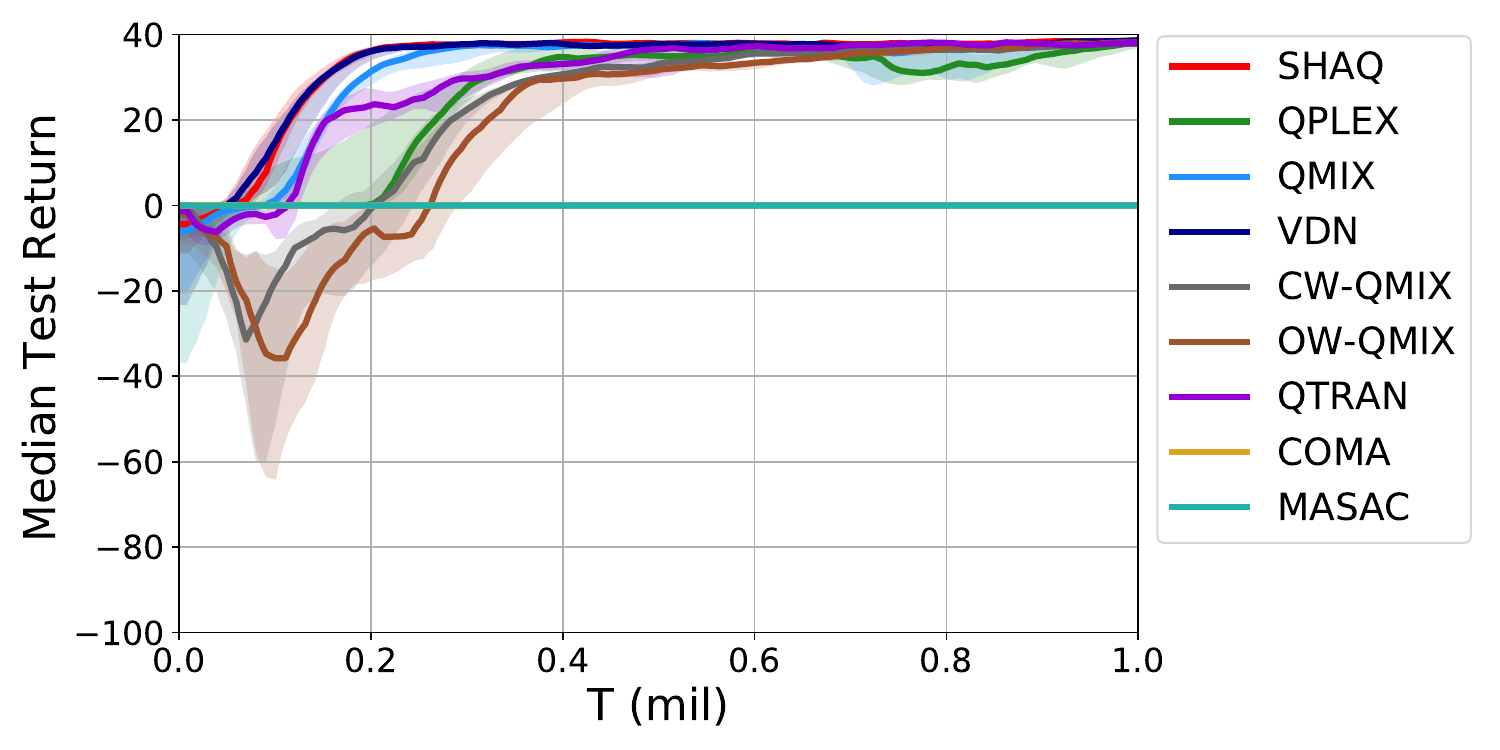}
                    \caption{p=-0.5.}
                \end{subfigure}
                ~
                \begin{subfigure}[b]{0.48\textwidth}
                    \centering
                    \includegraphics[width=\textwidth]{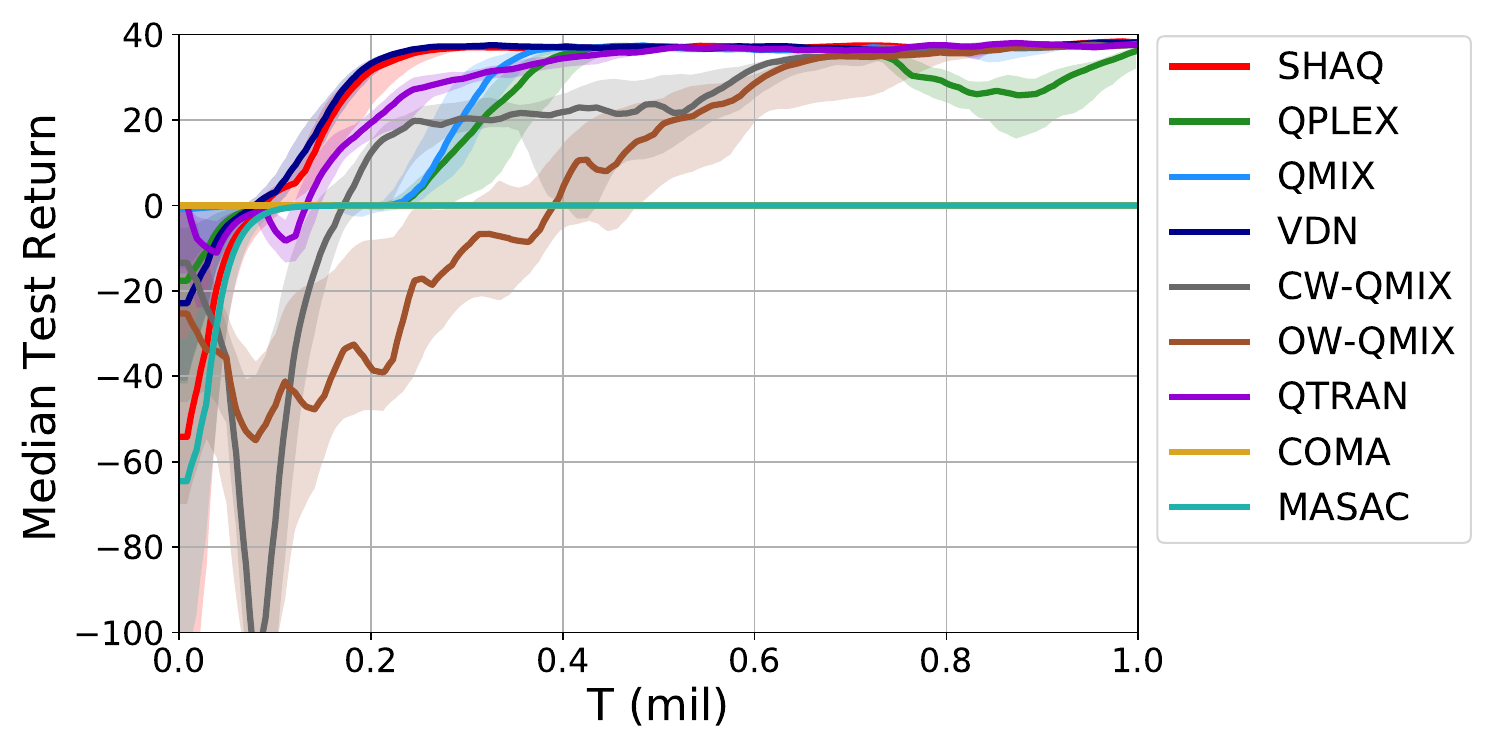}
                    \caption{p=-1.}
                \end{subfigure}
                ~
                \begin{subfigure}[b]{0.48\textwidth}
                    \centering
                    \includegraphics[width=\textwidth]{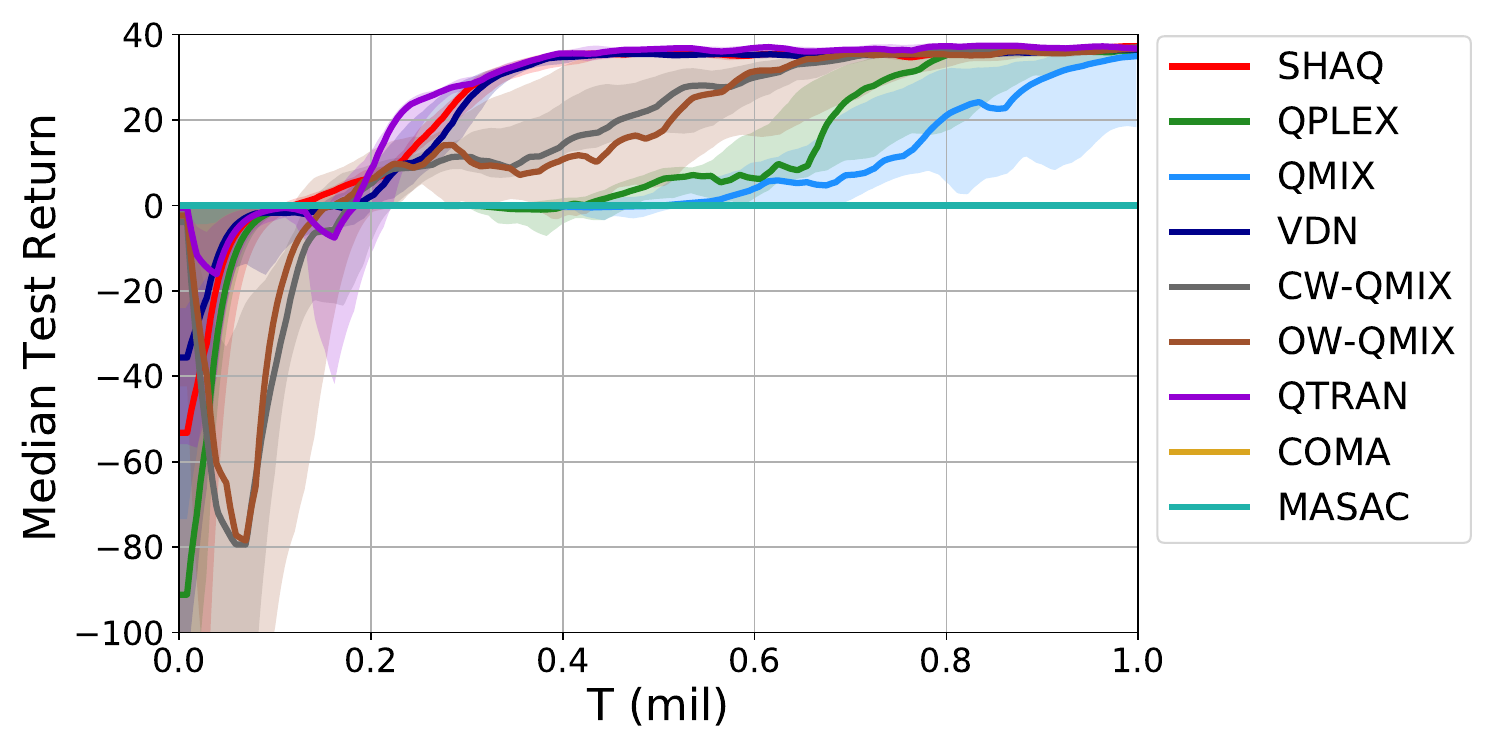}
                    \caption{p=-2.}
                \end{subfigure}
                \caption{Median test return for Predator-Prey with different values of p.}
            \label{fig:predator_prey}
            \end{figure*}
            
            \paragraph{Environment Settings.} In this environment, the world is formed as a 10x10 grid, where 8 predators are controllable, aiming at capturing 8 preys that randomly move \cite{bohmer2020deep}. Each predator's observation is a 5x5 sub-grid centering around it. If a prey is captured by the coordination between 2 predators, these 2 predators will be rewarded by 10. On the other hand, each unsuccessful attempt by only 1 predator will be punished by a negative reward p. In this experiment, we study the performance of each algorithm under different values of p (that describes different levels of coordination). The illustration of this environment is shown in Figure \ref{fig:predator_prey_demo_2}.
            
            \paragraph{Performance Analysis.} As \cite{rashid2020weighted} reported, only QTRAN and W-QMIX can solve this task, while \cite{wang2020qplex} found that the failure was primarily due to the lack of explorations. As a result, we apply the identical epsilon annealing schedule (i.e. 1 million time steps) employed in \cite{wang2020qplex}. As Figure \ref{fig:predator_prey} shows, SHAQ can always solve the tasks with different values of p. With the epsilon annealing strategy from \cite{wang2020qplex}, W-QMIX does not perform as well as reported in \cite{rashid2020weighted}. The reason could be its poor robustness to the increased explorations \cite{rashid2020weighted} for this environment (see the evidential experimental results in Figure \ref{fig:wqmix_pp} in Appendix \ref{subsec:extra_weighted_qmix_variant_hyperparameters}). The good performance of VDN validates our analysis in Section \ref{subsec:comparison_with_other_learning_algorithms}, whereas the performance of QTRAN is surprisingly almost invariant to the value of p. Furthermore, the performance of QPLEX and QMIX becomes apparently worse when p=-2. The failure of MASAC and COMA could be due to that relative overgeneralisation\footnote{Relative overgeneralisation is a common game theoretic pathology that the sub-optimal actions are preferred when matched with arbitrary actions from the collaborating agents \cite{wei2016lenient}.} prevents policy gradient methods from better coordination \cite{wei2018multiagent}.
        
        \subsection{StarCraft Multi-Agent Challenge}
        \label{subsec:smac}
            \paragraph{Environment Settings.} We now evaluate SHAQ on the more challenging StarCraft Multi-Agent Challenge (SMAC), the environmental settings of which are the same as that in \cite{samvelyan2019starcraft}. StarCraft II is a real-time strategy game that simulates a battle between two armies of units. A group of agents are controlled by the learned MARL algorithms, while the other group are controlled by the built-in game AI. SMAC creates some scenarios (maps) based on the game engine of StarCraft II. The goal of SMAC is maximizing the winning rate, i.e., the ratio of games won to the games played. Each agent's observation space is constructed based the following features within the sight range: $\texttt{distance}$, $\texttt{relative x}$, $\texttt{relative y}$, $\texttt{health}$, $\texttt{shield}$, and $\texttt{unit\_type}$. The shield is a protector that protects the agents from attacks which needs to be destroyed prior to reducing the health. The action space is constituted of the following discrete actions: $\texttt{move[direction]}$, $\texttt{attack[enemy\_id]}$, $\texttt{stop}$, and $\texttt{no-op}$. $\texttt{no-op}$ indicates conducting no operation and the dead agent can only execute this action. The number of combinations of above actions ranges from 7 to 70, depending on the scenarios. The reward function is shaped based on the hit-point damage dealt and enemy units killed (i.e., rewarding 10), along with the the bonus for the winning of a battle (i.e., rewarding 200). The rewards are scaled to improve the training stability, so that the maximum cumulative rewards that can be achieved in each scenario is around 20. To broadly compare the performance of SHAQ with other baselines, we select 4 easy maps: 8m, 3s5z, 1c3s5z and 10m\_vs\_11m; 3 hard maps: 5m\_vs\_6m, 3s\_vs\_5z and 2c\_vs\_64zg; and 4 super-hard maps: 3s5z\_vs\_3s6z, Corridor, MMM2 and 6h\_vs\_8z. All training is through online data collection. To maintain the conciseness, we only show partial results in the main part of this thesis and leave the rest results in Appendix \ref{subsec:experimental_results_on_extra_smac_maps}. 
            
            \paragraph{Performance Analysis.} It shows in Figure \ref{fig:easy_hard_smac} that SHAQ outperforms all baselines on all maps, except for 6h\_vs\_8z. On 6h\_vs\_8z, SHAQ can beat all baselines except for CW-QMIX. VDN performs well on 4 maps, but bad on the other 2 maps, which still validates our analysis in Section \ref{subsec:comparison_with_other_learning_algorithms}. QMIX and QPLEX perform well on the most of maps, except for 3s\_vs\_5z, 2c\_vs\_64zg and 6h\_vs\_8z. As for COMA, MADDPG and MASAC, their poor performances could be due to the weak adaptability to challenging tasks. Although QTRAN can theoretically represent the complete class of the global Q-value \cite{SonKKHY19}, its complicated learning paradigm could impede the convergence to the value function for challenging tasks and therefore result in the poor performance. Although W-QMIX performs well on some maps, owing to lacking a law on hyperparameter tuning \cite{rashid2020weighted} it is difficult to be adapted to all scenarios (see Appendix \ref{subsec:extra_weighted_qmix_variant_hyperparameters}).
            \begin{figure*}[ht!]
                \centering
                \begin{subfigure}[b]{0.48\linewidth}
                    \centering
                    \includegraphics[width=\textwidth]{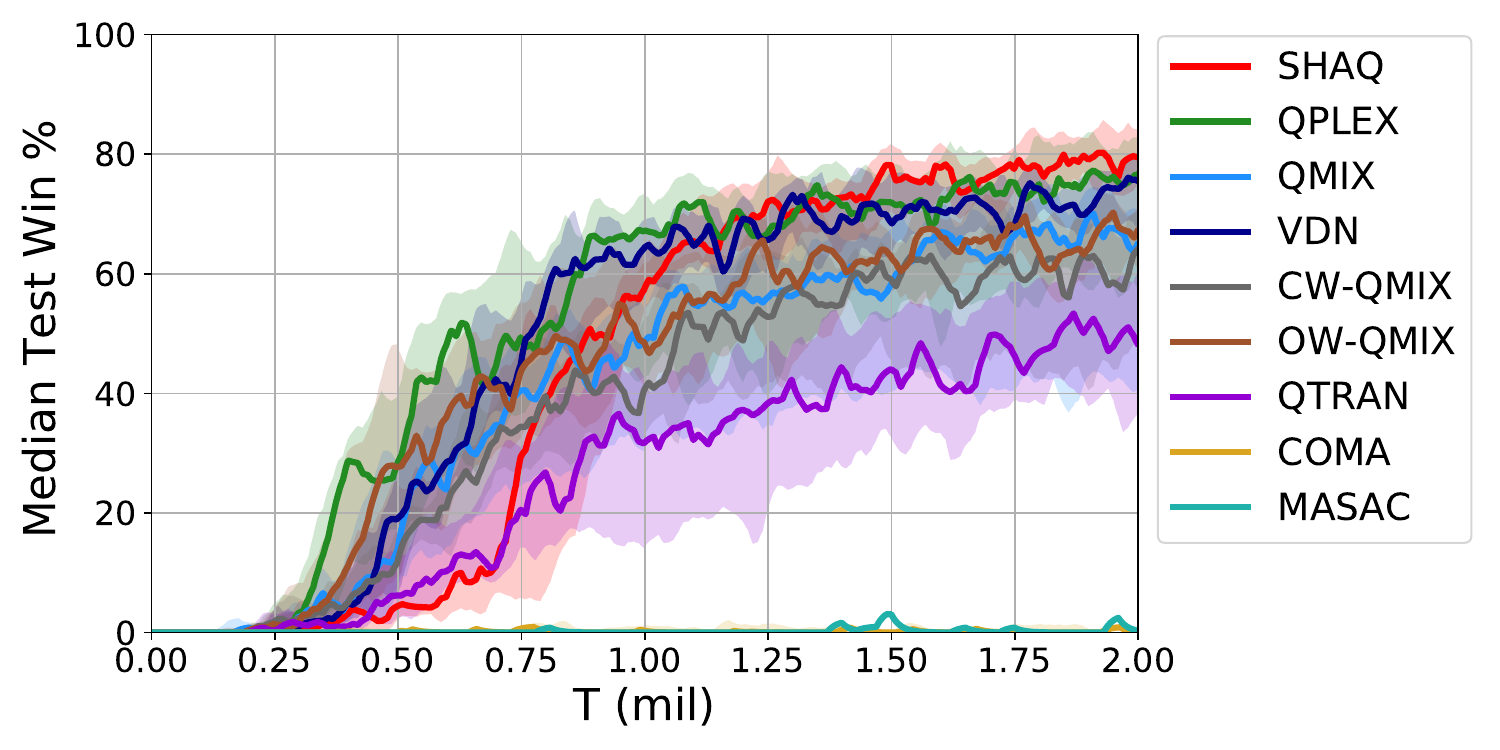}
                    \caption{5m\_vs\_6m.}
                \end{subfigure}
                % \quad
                ~
                \begin{subfigure}[b]{0.48\linewidth}
                    \centering
                    \includegraphics[width=\textwidth]{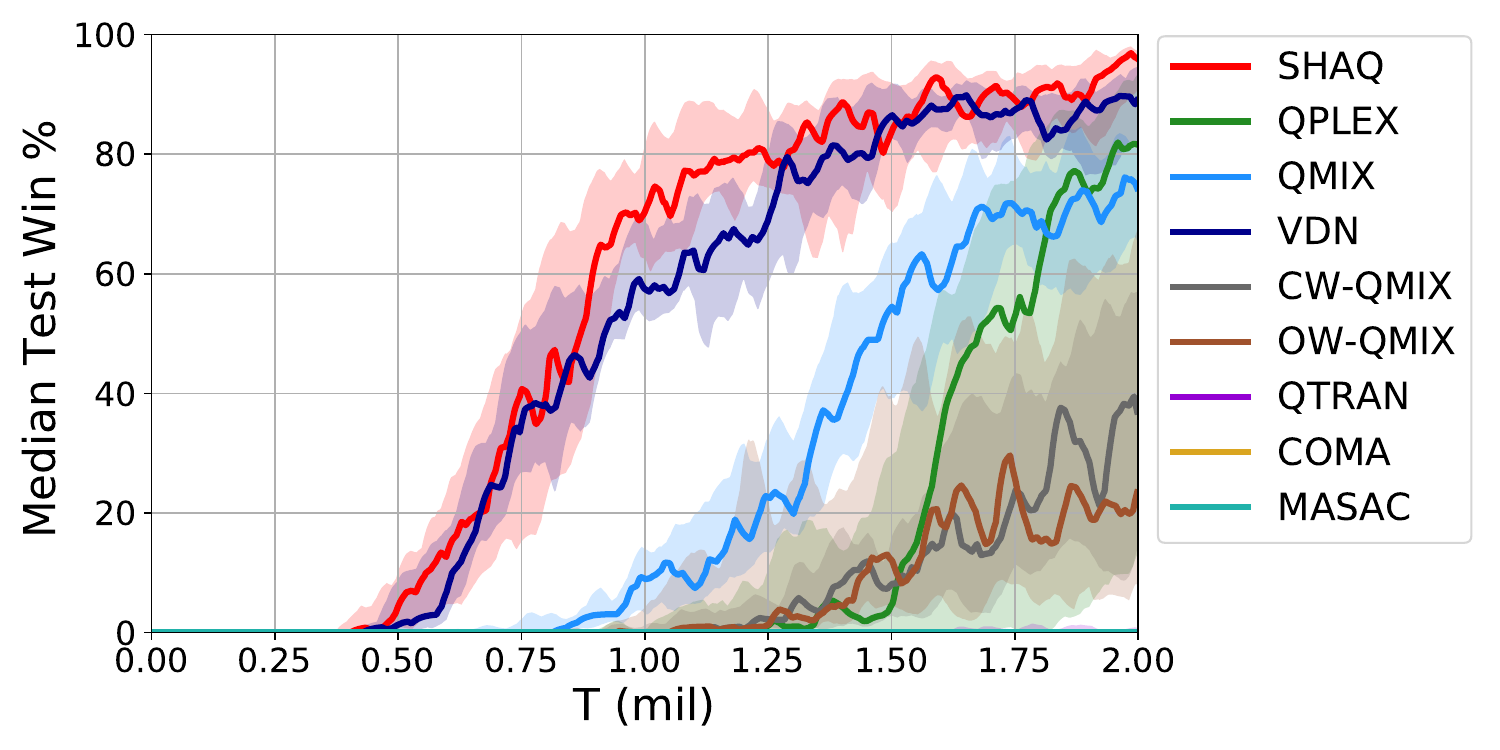}
                    \caption{3s\_vs\_5z.}
                \end{subfigure}
                % \quad
                ~
                \begin{subfigure}[b]{0.48\linewidth}
                    \centering              
                    \includegraphics[width=\textwidth]{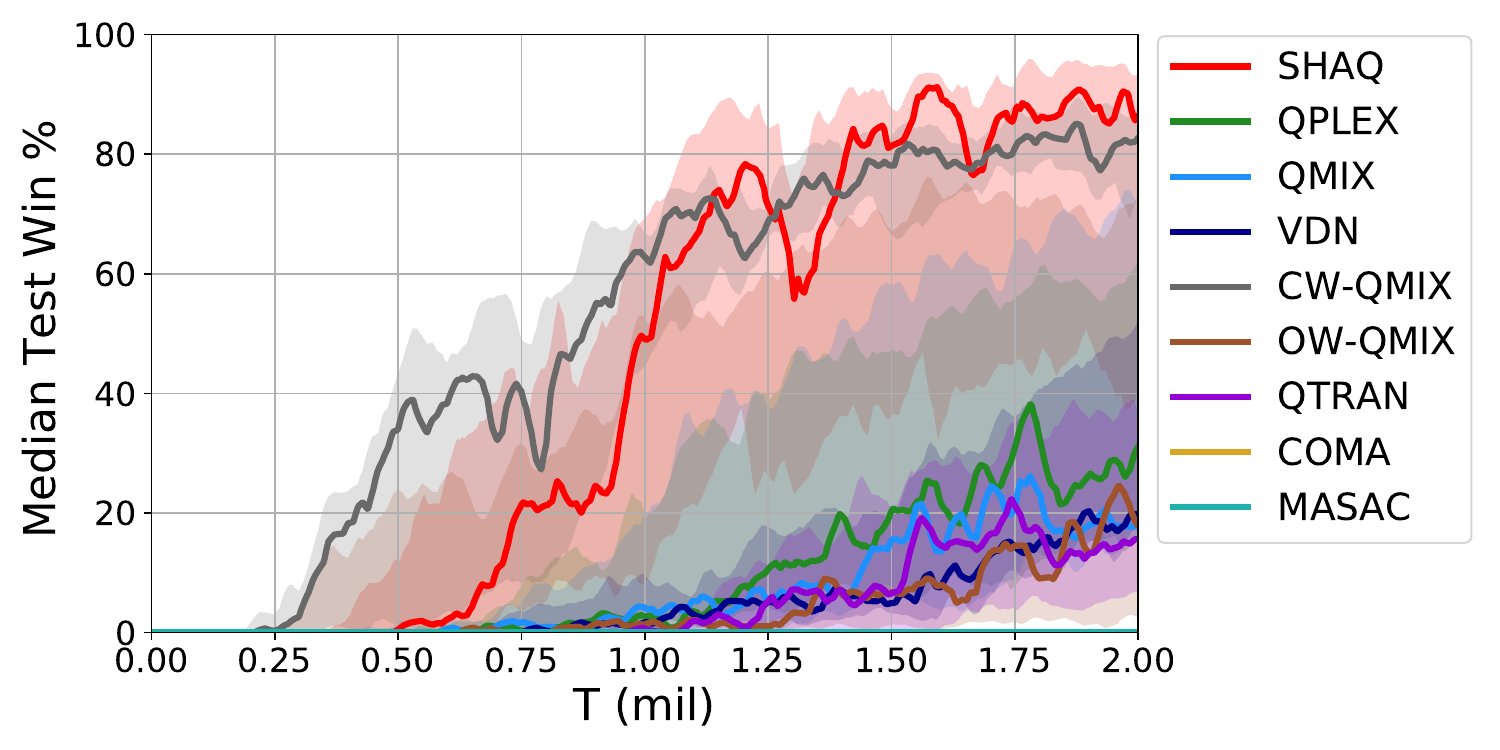}
                    \caption{2c\_vs\_64zg.}
                \end{subfigure}
                % \quad
                ~
                \begin{subfigure}[b]{0.48\linewidth}
                    \centering
                    \includegraphics[width=\textwidth]{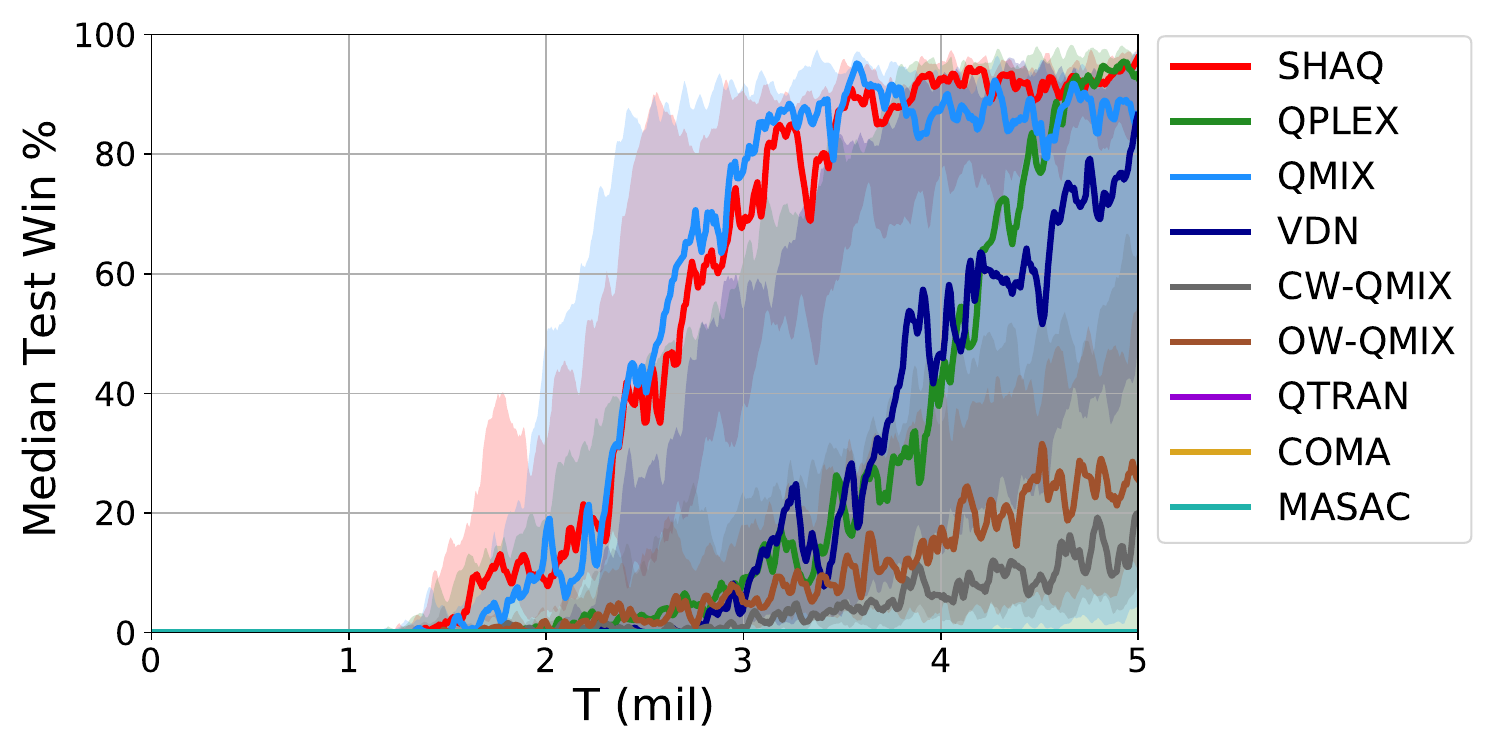}
                    \caption{3s5z\_vs\_3s6z.}
                \end{subfigure}
                % \quad
                ~
                \begin{subfigure}[b]{0.48\linewidth}
                    \centering
                    \includegraphics[width=\textwidth]{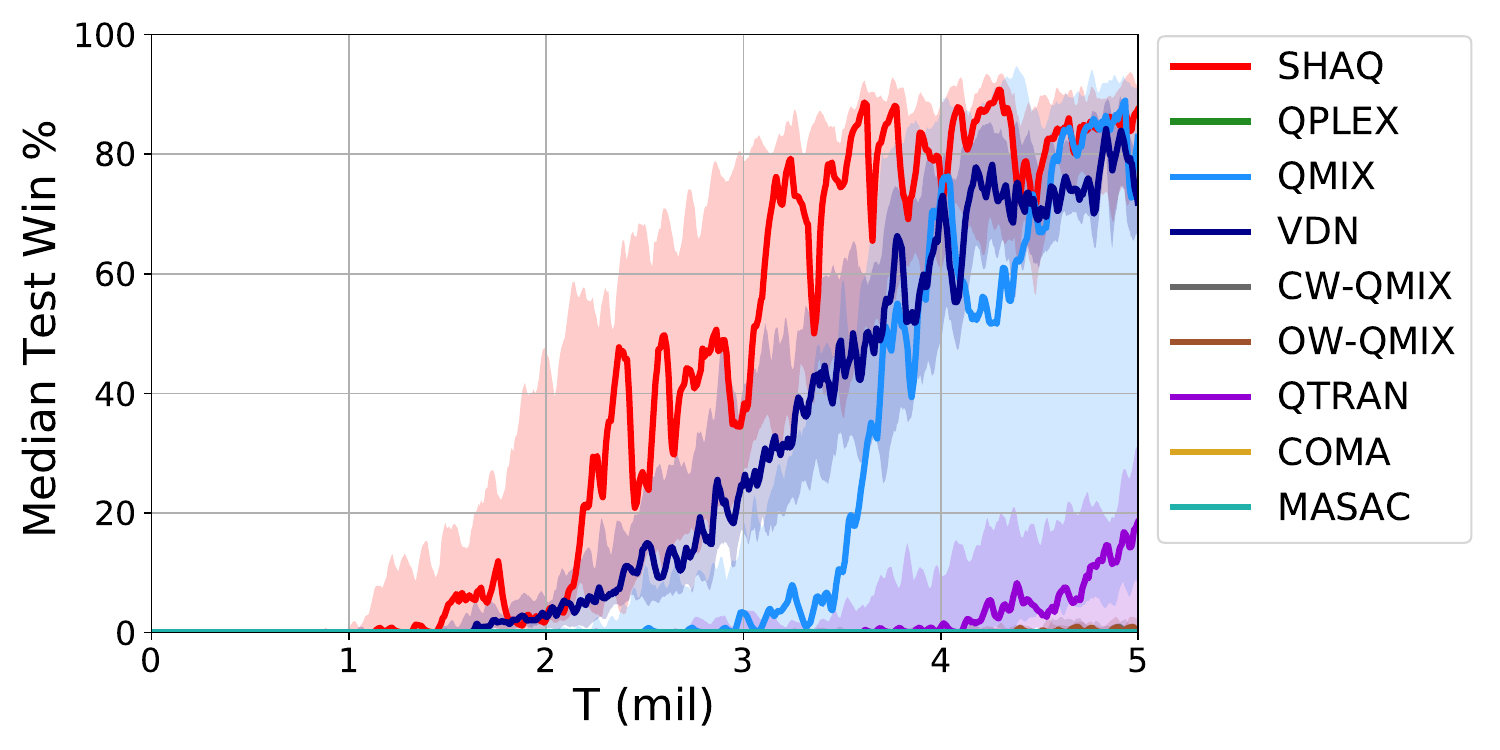}
                    \caption{Corridor.}
                \end{subfigure}
                % \quad
                ~
                \begin{subfigure}[b]{0.48\linewidth}
                    \centering
                    \includegraphics[width=\textwidth]{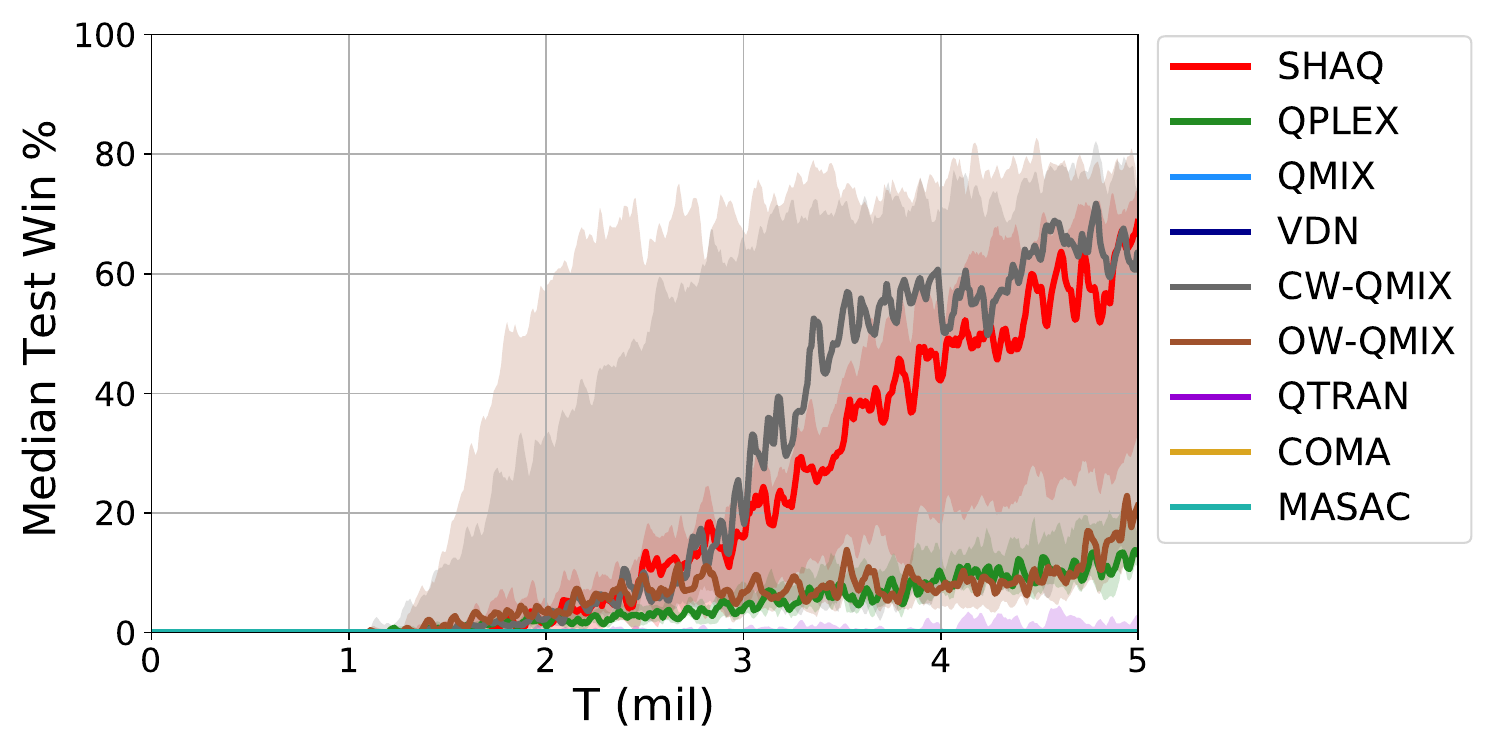}
                    \caption{6h\_vs\_8z.}
                \end{subfigure}
                \caption{Median test win \% for hard (a-c), and super-hard (d-f) maps of SMAC.}
            \label{fig:easy_hard_smac}
            \end{figure*}
            
        \subsection{Ablation Study}
        \label{subsec:ablation_study}
            \begin{figure*}[ht!]
            \centering
                \begin{subfigure}[b]{0.48\textwidth}
                    \centering                \includegraphics[width=\textwidth]{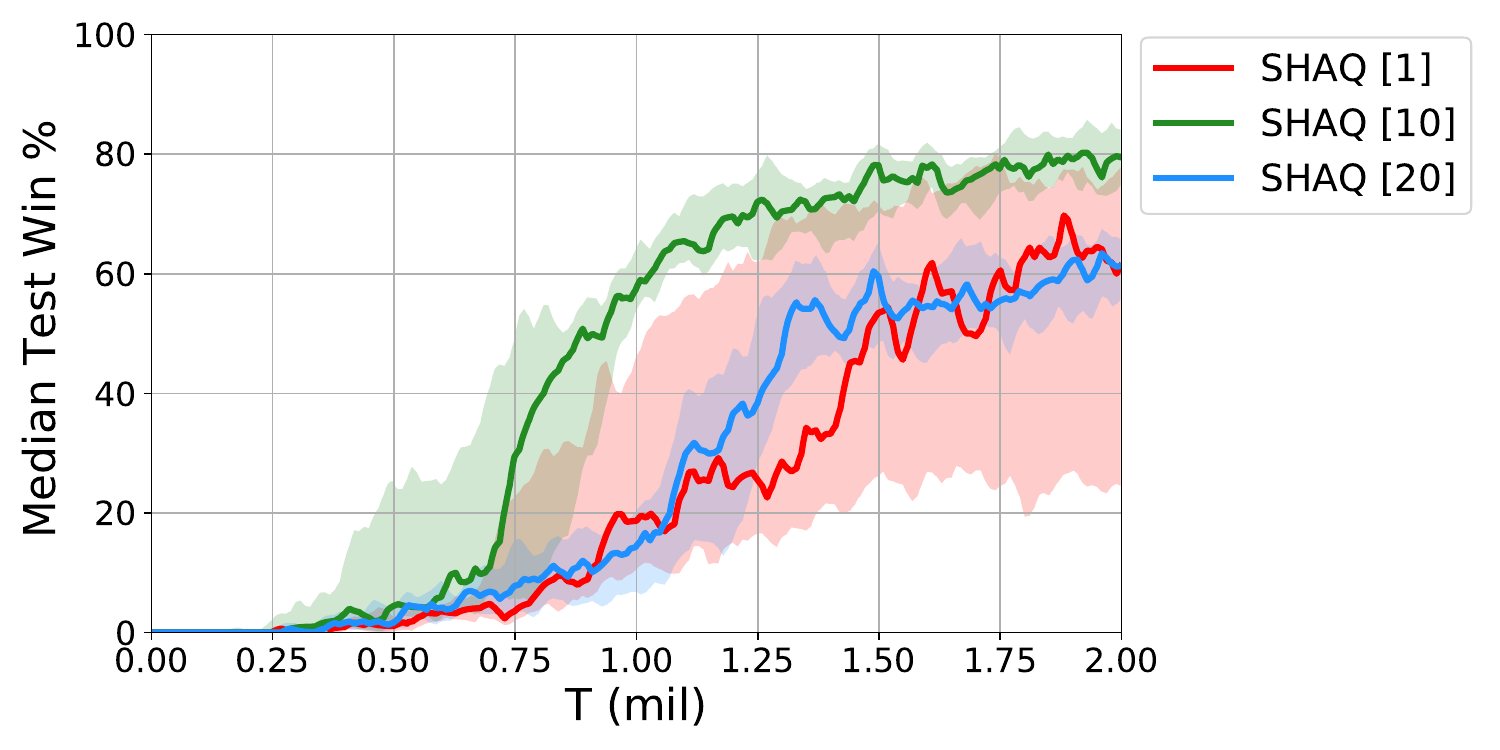}
                    \caption{Comparison among different values of M on 5m\_vs\_6m. The $[\cdot]$ indicates the value of M.}
                \label{fig:ablation_5m_vs_6m}
                \end{subfigure}
                \quad
                \begin{subfigure}[b]{0.48\textwidth}
                    \centering                 \includegraphics[width=\textwidth]{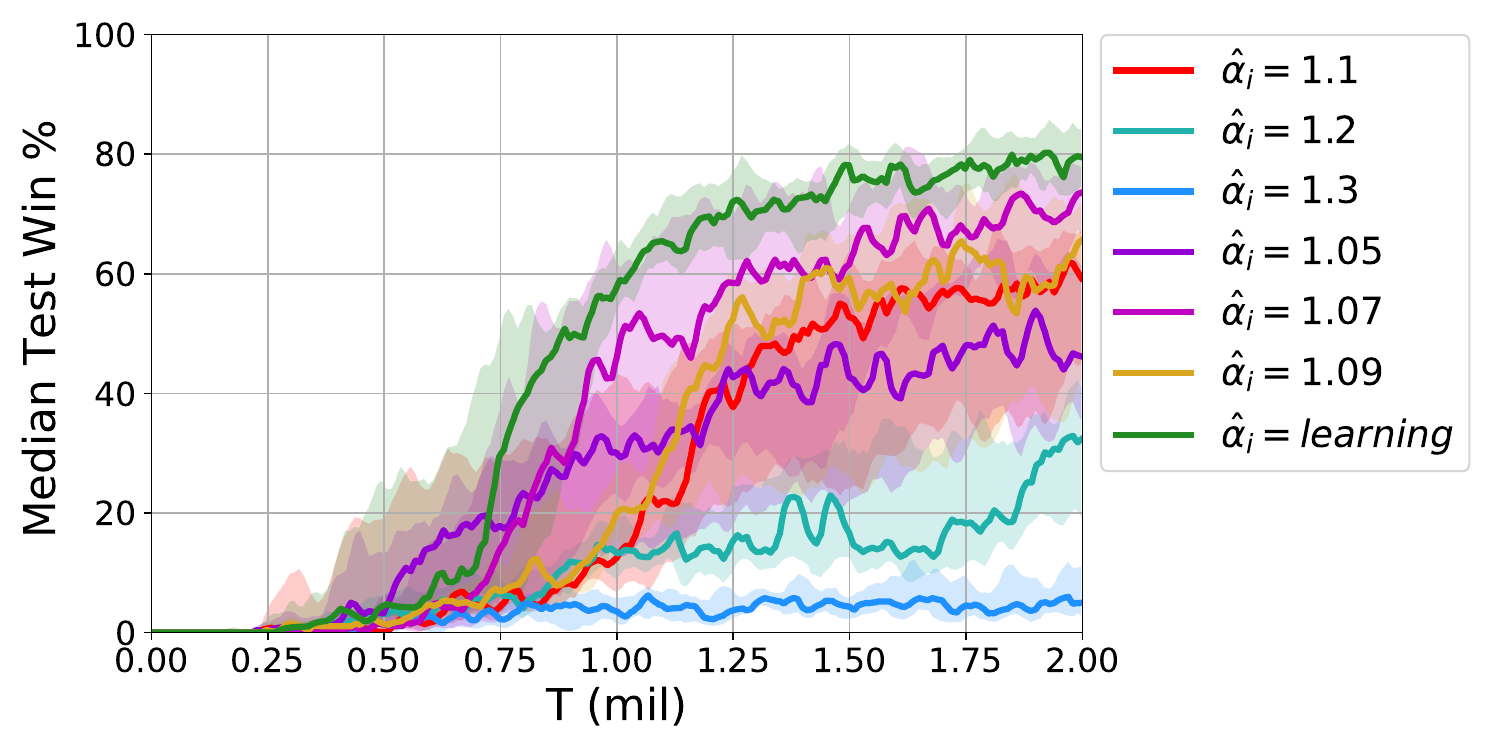}
                    \caption{Comparison between the manually preset and the learned $\hat{\alpha}_{i}(s, a_{i})$ on 5m\_vs\_6m.}
                \label{fig:manual_approximate_alpha}
                \end{subfigure}
                \quad
                \begin{subfigure}[b]{0.48\textwidth}
                    \centering                \includegraphics[width=\textwidth]{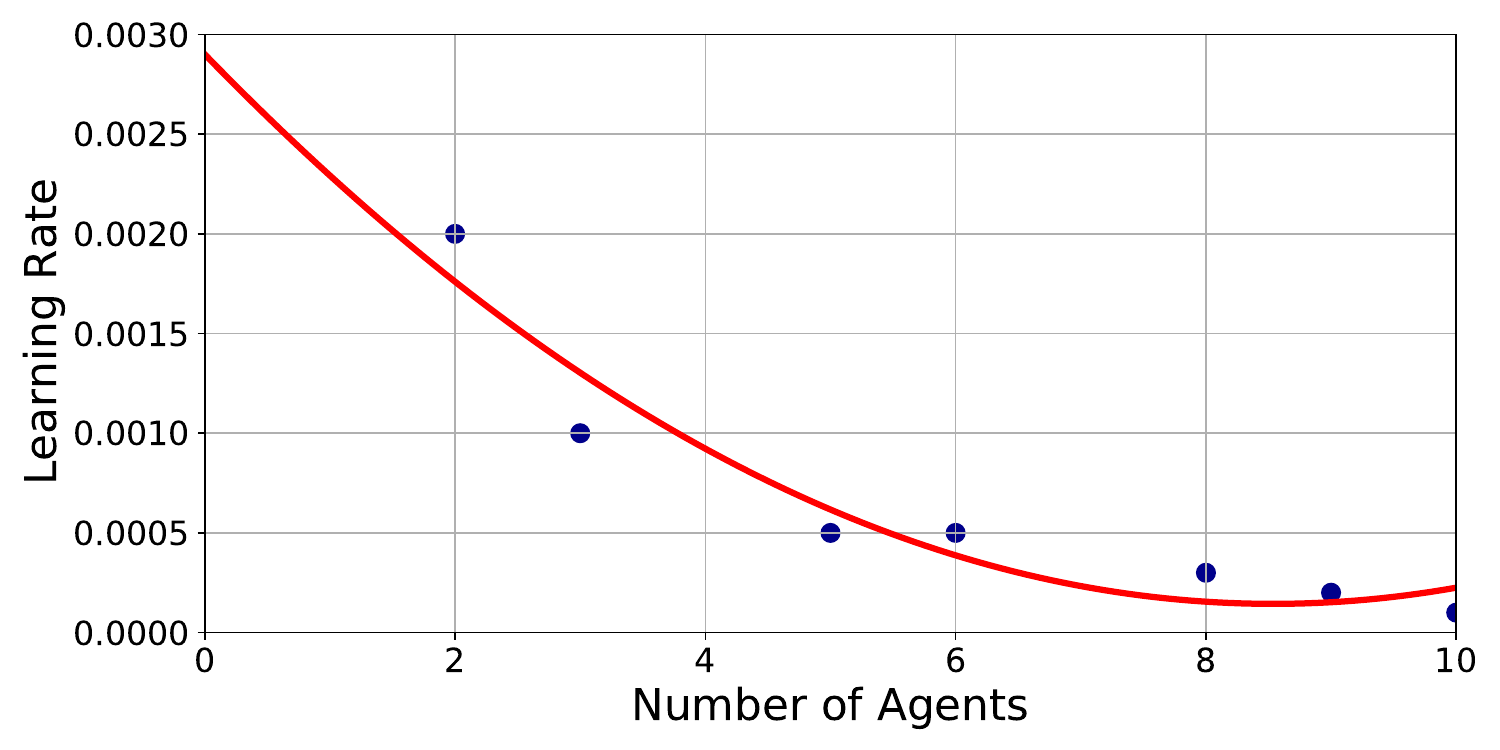}
                    \caption{Relationship between learning rate of $\hat{\alpha}_{i}(s, a_{i})$ and the number of agents (in red curve).}
                \label{fig:correlation_num_agents_to_mixer_lr}
                \end{subfigure}
                \caption{Three ablation studies of SHAQ on SMAC.}
            \label{fig:ablation_study}
            \end{figure*}
            
            We also conduct the ablation study of SHAQ, such as the sample size M for approximating $\hat{\alpha}_{i}(\mathbf{s}, a_{i})$, the empirical selection law on the learning rate of $\hat{\alpha}_{i}(\mathbf{s}, a_{i})$, and the necessity of learning $\hat{\alpha}_{i}(\mathbf{s}, a_{i})$ rather than manual setting. These results show that SHAQ is an easy-to-use algorithm which is potential to be applied to a new scenario with the minimum effort on tuning hyperparameters.
            
            \paragraph{Sample Size M for Approximating $\hat{\alpha}(\mathbf{s}, a_{i})$.} To study the impact of the sample size M on the performance of SHAQ, we conduct an ablation study as Figure \ref{fig:ablation_5m_vs_6m} shows. We observe that the small M is able to achieve fast convergence rate but with high variance, while the large M is with low variance but comparatively slow convergence rate. This observation is consistent with the conclusions from stochastic optimisation \cite{Byrd2012,NIPS2015_effc299a}. As a result, we select the sample size M as 10 in practice, to trade off between convergence rate and variance. 
            
            \paragraph{An Empirical Law on Selecting the Learning Rate of $\hat{\alpha}_{i}(s, a_{i})$.} To provide an empirical law on selecting the learning rate of $\hat{\alpha}_{i}(s, a_{i})$, we statistically fit a curve of the learning rate with respect to the number of controllable agents by the experimental results on SMAC that is shown in Figure \ref{fig:correlation_num_agents_to_mixer_lr}. It is seen that the learning rate of $\hat{\alpha}_{i}(s, a_{i})$ is generally negatively related to the number of agents. In other words, as the number of agents grows the learning rate of $\hat{\alpha}_{i}(s, a_{i})$ is recommended to be smaller. For example, if the number of agents is more than 10, the learning rate of $\hat{\alpha}_{i}(s, a_{i})$ is recommended to be 0.0001 as the guidance from Figure \ref{fig:correlation_num_agents_to_mixer_lr}.
            
            \paragraph{The Necessity of Learning $\hat{\alpha}_{i}(\mathbf{s}, a_{i})$.} Some readers may be concerned about the necessity of learning $\hat{\alpha}_{i}(\mathbf{s}, a_{i})$. To address this concern, we study the necessity of learning $\hat{\alpha}_{i}(\mathbf{s}, a_{i})$ on 5m\_vs\_6m. Since the learned $\hat{\alpha}_{i}(\mathbf{s}, a_{i})$ finally converges to $1.1029$, we grid search the fixed values of $\hat{\alpha}_{i}(\mathbf{s}, a_{i})$ around this number. As Figure \ref{fig:manual_approximate_alpha} shows, $\hat{\alpha}_{i}(\mathbf{s}, a_{i})$ with manually preset fixed value cannot work as well as the learned $\hat{\alpha}_{i}(\mathbf{s}, a_{i})$. Therefore, we validate the necessity of learning $\hat{\alpha}_{i}(\mathbf{s}, a_{i})$ here.
            
        \subsection{Understanding Markov Shapley Value}
        \label{subsec:interpretability_of_shaq}
            \begin{figure*}[ht!]
            \centering
                \begin{subfigure}[b]{0.42\textwidth}
                	\centering
            	    \includegraphics[width=\textwidth]{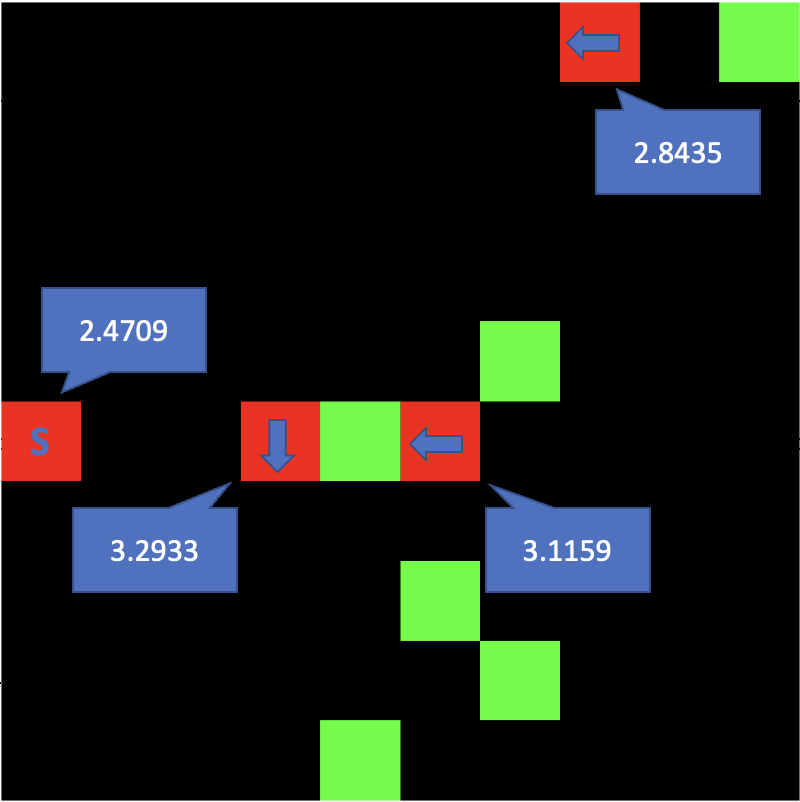}
            	    \caption{SHAQ: $\epsilon$-greedy.}
            	\label{fig:shaq_epsilon_greedy_pp}
                \end{subfigure}
                ~
                \begin{subfigure}[b]{0.42\textwidth}
                    \centering                \includegraphics[width=\textwidth]{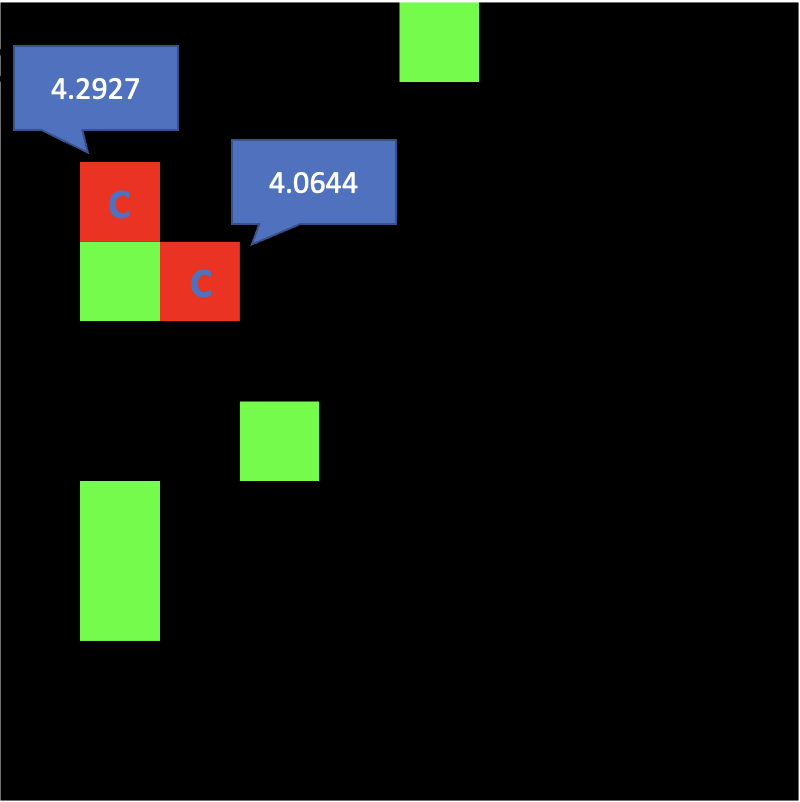}
                    \caption{SHAQ: greedy.}
                \label{fig:shaq_greedy_pp}
                \end{subfigure}
                \caption{Visualisation of the evaluation of SHAQ on Predator-Prey: each red square is a controllable predator, whereas each green square indicates a prey. Each predator's factorised Q-value is reported in a bubble in blue and the symbols within the squares indicate the action of each predator (i.e., arrows imply the movement direction, ``S'' implies staying and ``C'' implies capturing a prey that is valid only when the agent is around a prey). The epsilon in Figure (a) is chosen as 0.8, so it is highly likely that a random action is executed (i.e. $\epsilon$-greedy policy). While the actions performed in Figure (b) are optimal (i.e. greedy policy).}
            \label{fig:study_on_factorised_q_values_pp}
            \end{figure*}
            
            \paragraph{Scenario 1.} To verify that SHAQ possesses the interpretability, we show its credit assignment on Predator-Prey. As \cite{bohmer2020deep} illustrated, if both predators are around and capture a prey, they will be rewarded as 10. Otherwise, if any single predator attempts to capture a prey, they will be punished by p (that is set to -1 in this demonstration). As we see from Figure \ref{fig:shaq_greedy_pp}, all predators are around and capture a prey, so both of them perform the optimal actions and deserve almost the equal optimal credit assignment as $4.2927$ and $4.0644$, which verifies our theoretical claim. From Figure \ref{fig:shaq_epsilon_greedy_pp}, it can be seen that two predators are far away from preys, so they receive low credits as $2.4709$ and $2.8435$. On the other hand, the other two predators are around a prey, but they do not perform the optimal action ``capture'', so they receive less credits than the two predators in Figure \ref{fig:shaq_greedy_pp}. Nevertheless, they are around a prey, so they perform better than those predators that are far away from preys and receive comparatively greater credits as $3.2933$ and $3.1159$. The coherent credit assignment in both Figure \ref{fig:shaq_epsilon_greedy_pp} and \ref{fig:shaq_greedy_pp} implies that the assigned credits reflect the contributions (i.e., each agent receives the credit that is consistent with its decision).
            \begin{figure*}[ht!]
            \centering
                \begin{subfigure}[b]{0.48\textwidth}
                	\centering
            	    \includegraphics[width=\textwidth]{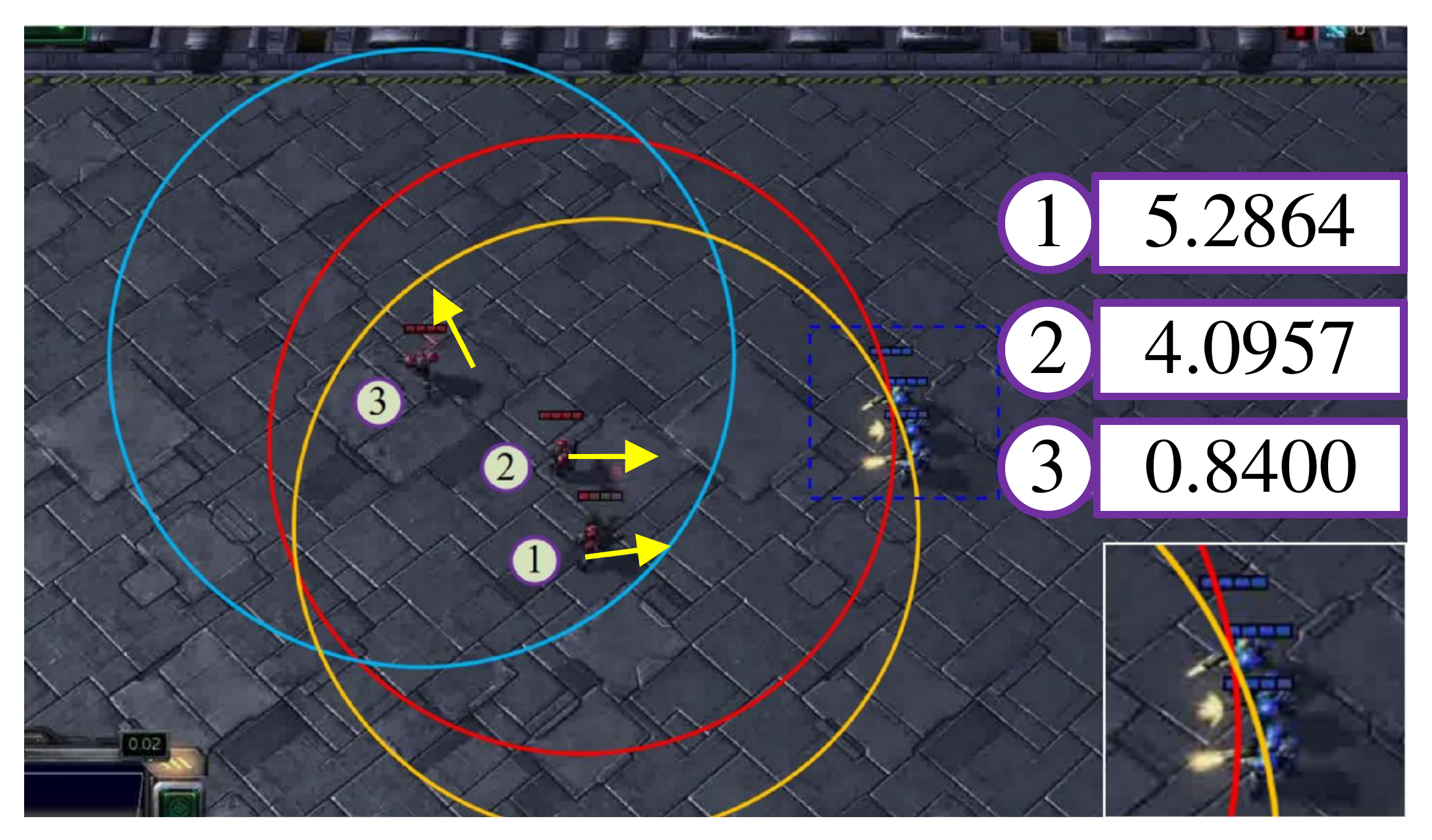}
            	    \caption{SHAQ: $\epsilon$-greedy.}
            	\label{fig:shaq_epsilon_greedy}
                \end{subfigure}
                ~
                \begin{subfigure}[b]{0.48\textwidth}
                    \centering                \includegraphics[width=\textwidth]{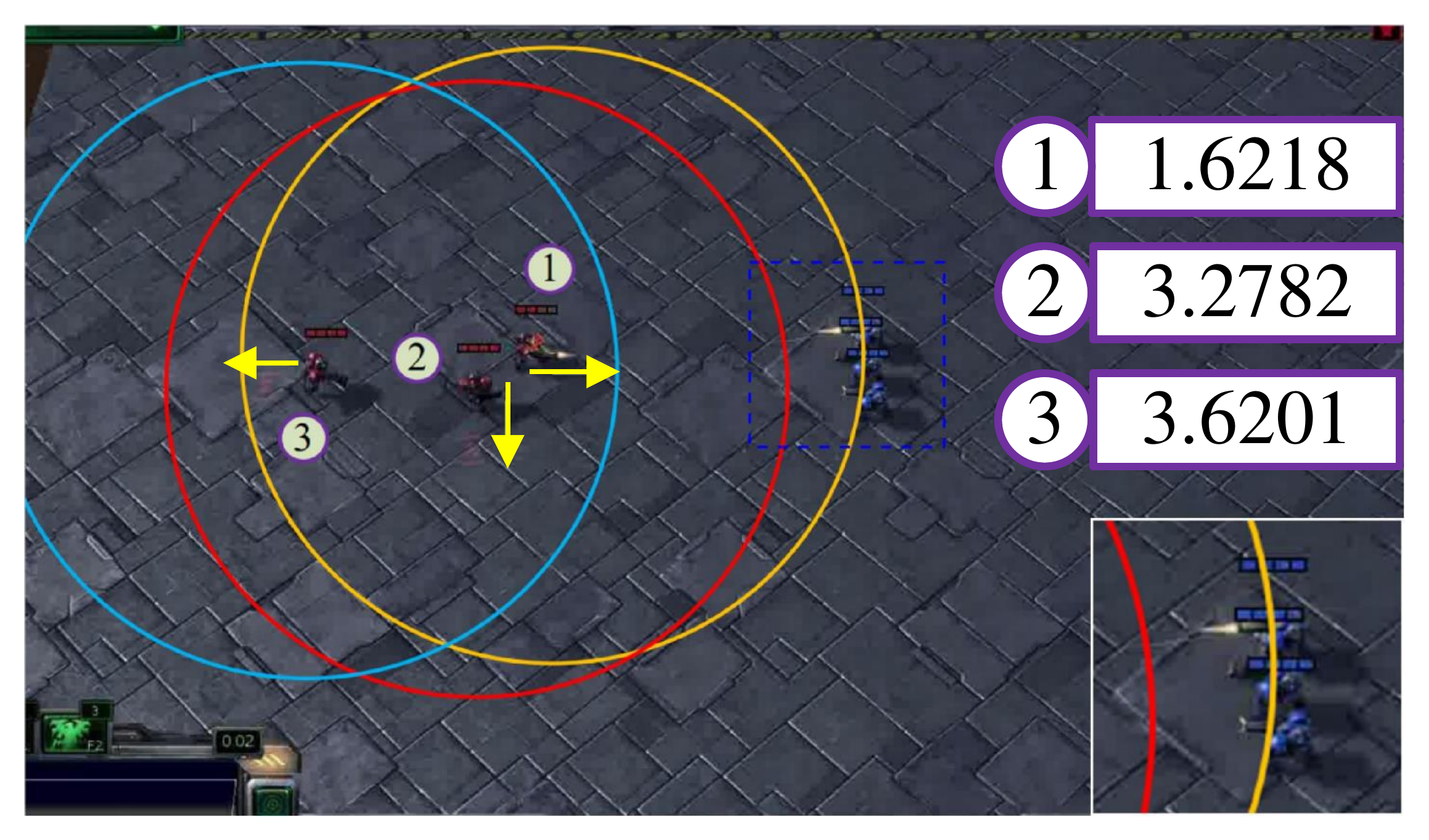}
                    \caption{VDN: $\epsilon$-greedy.}
                \label{fig:vdn_epsilon_greedy}
                \end{subfigure}
                ~
                \begin{subfigure}[b]{0.48\textwidth}
                    \centering                \includegraphics[width=\textwidth]{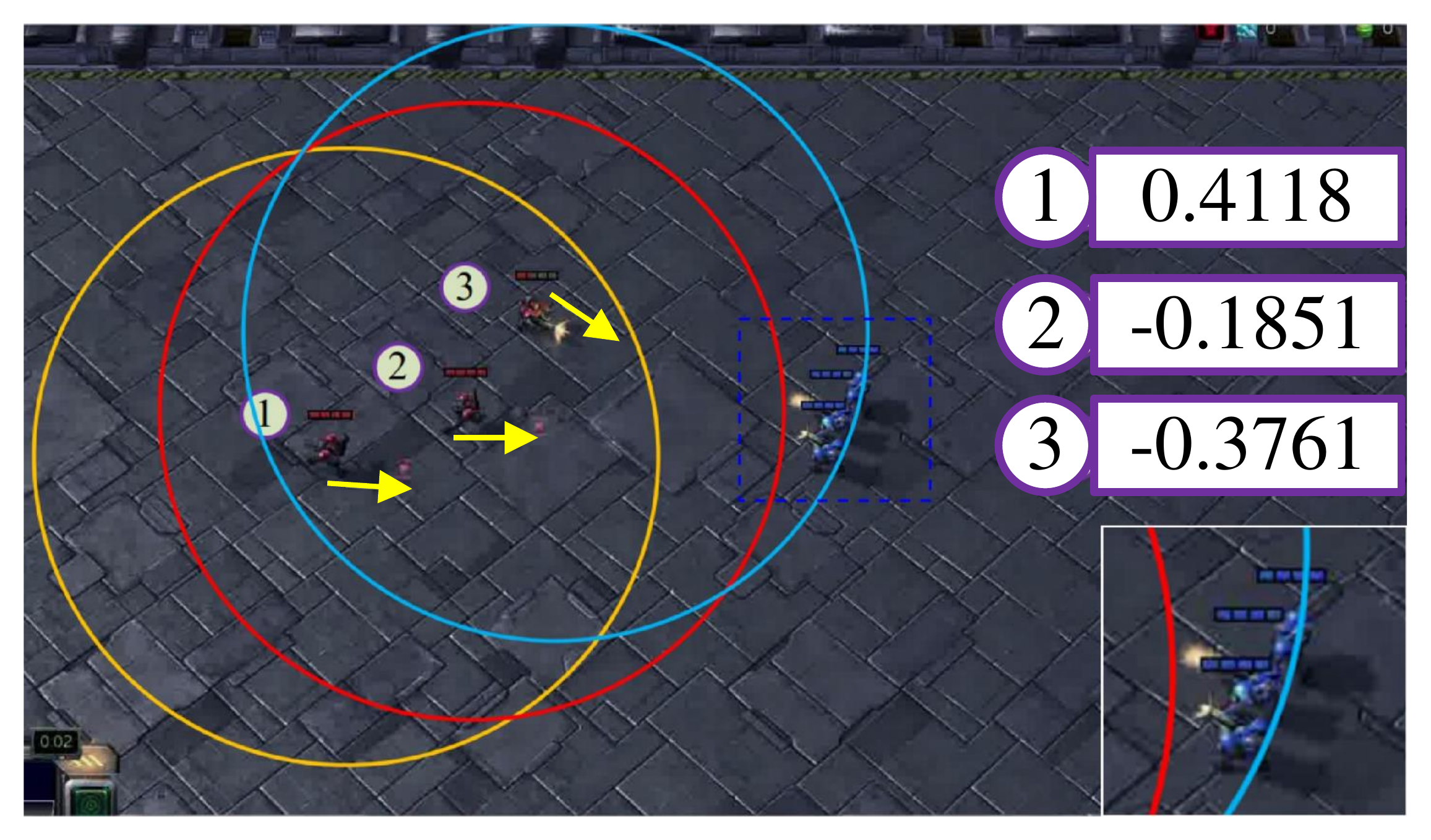}
                    \caption{QMIX: $\epsilon$-greedy.}
                \label{fig:qmix_epsilon_greedy}
                \end{subfigure}
                ~
                \begin{subfigure}[b]{0.48\textwidth}
            	    \centering        	    \includegraphics[width=\textwidth]{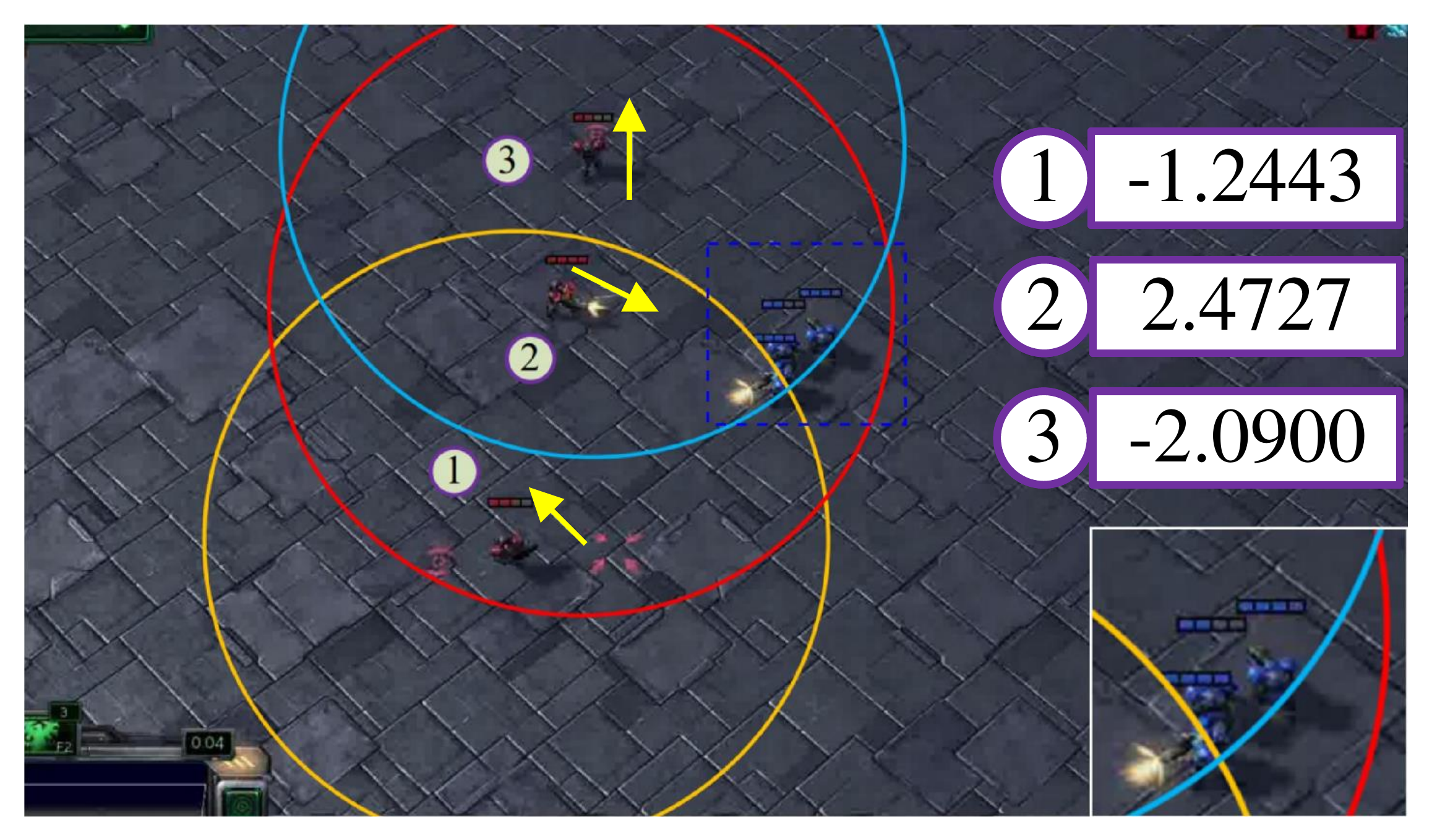}
            	    \caption{QPLEX: $\epsilon$-greedy.}
            	\label{fig:qplex_epsilon_greedy}
                \end{subfigure}
                
                \begin{subfigure}[b]{0.48\textwidth}
            	    \centering        	    \includegraphics[width=\textwidth]{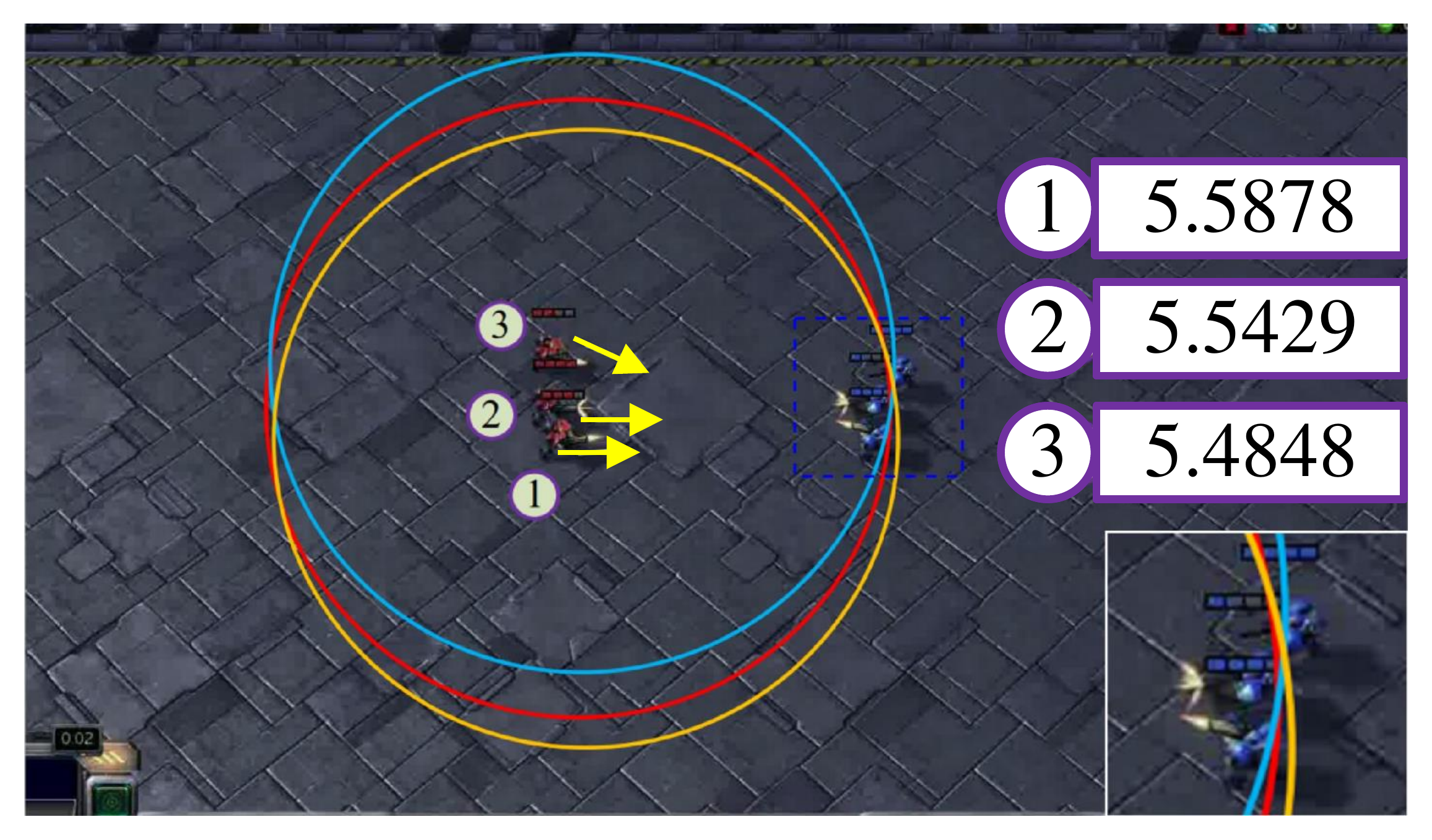}
            	    \caption{SHAQ: greedy.}
            	\label{fig:shaq_greedy}
                \end{subfigure}
                ~
                \begin{subfigure}[b]{0.48\textwidth}
                	\centering        	    \includegraphics[width=\textwidth]{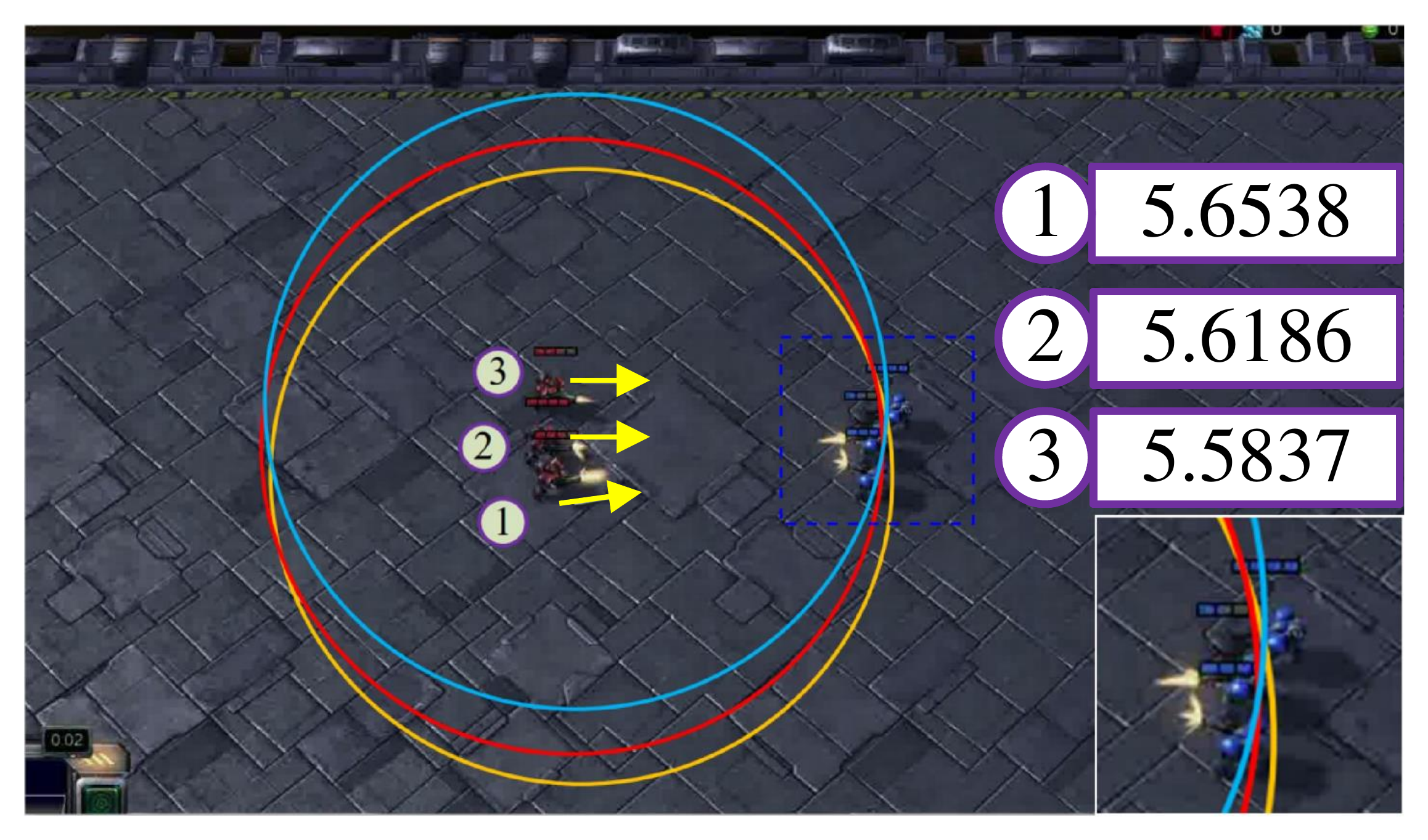}
            	    \caption{VDN: greedy.}
            	\label{fig:vdn_greedy}
                \end{subfigure}
                ~
                \begin{subfigure}[b]{0.48\textwidth}
            	    \centering        	    \includegraphics[width=\textwidth]{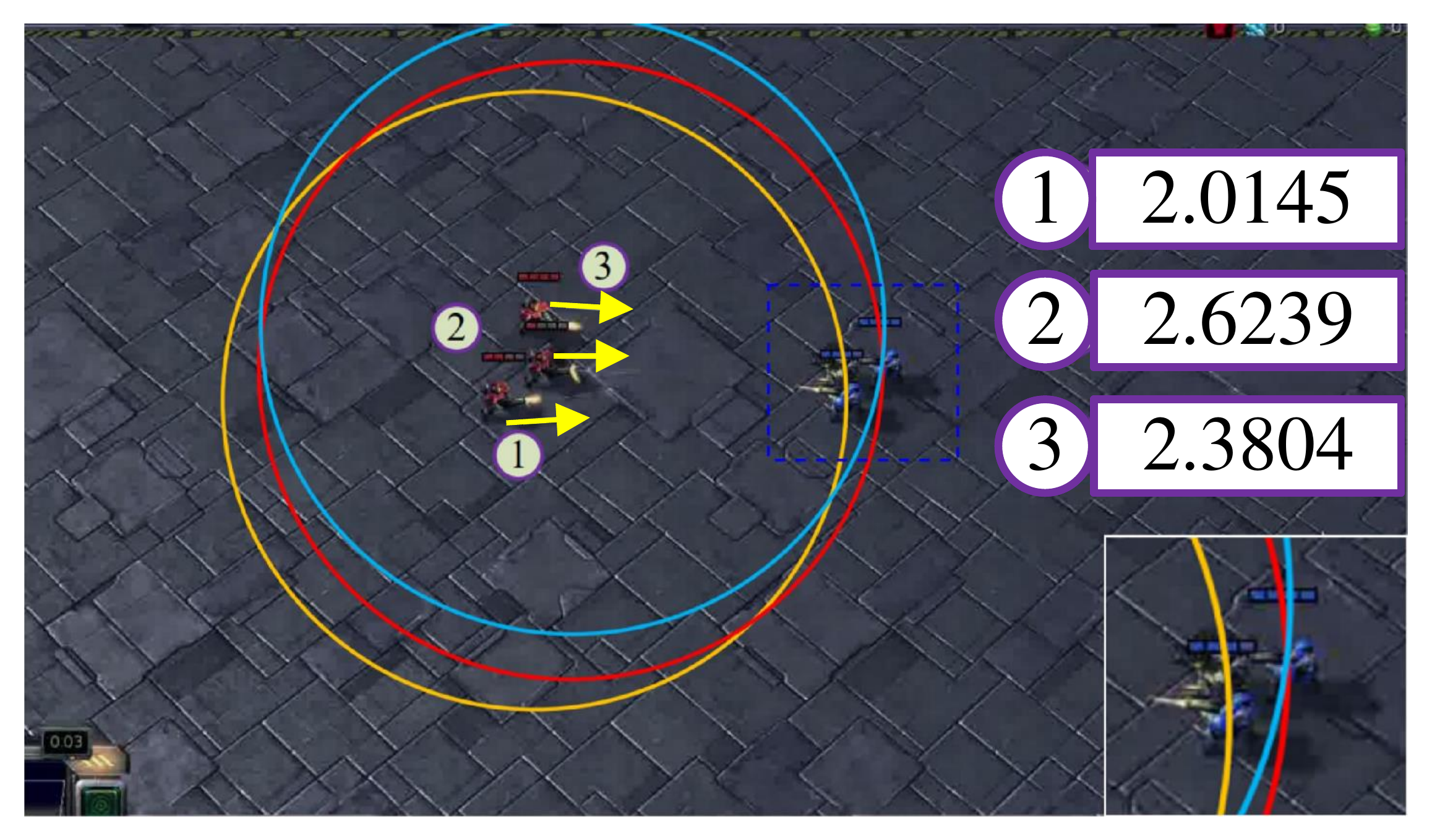}
            	    \caption{QMIX: greedy.}
            	\label{fig:qmix_greedy}
                \end{subfigure}
                ~
                \begin{subfigure}[b]{0.48\textwidth}
            	    \centering        	    \includegraphics[width=\textwidth]{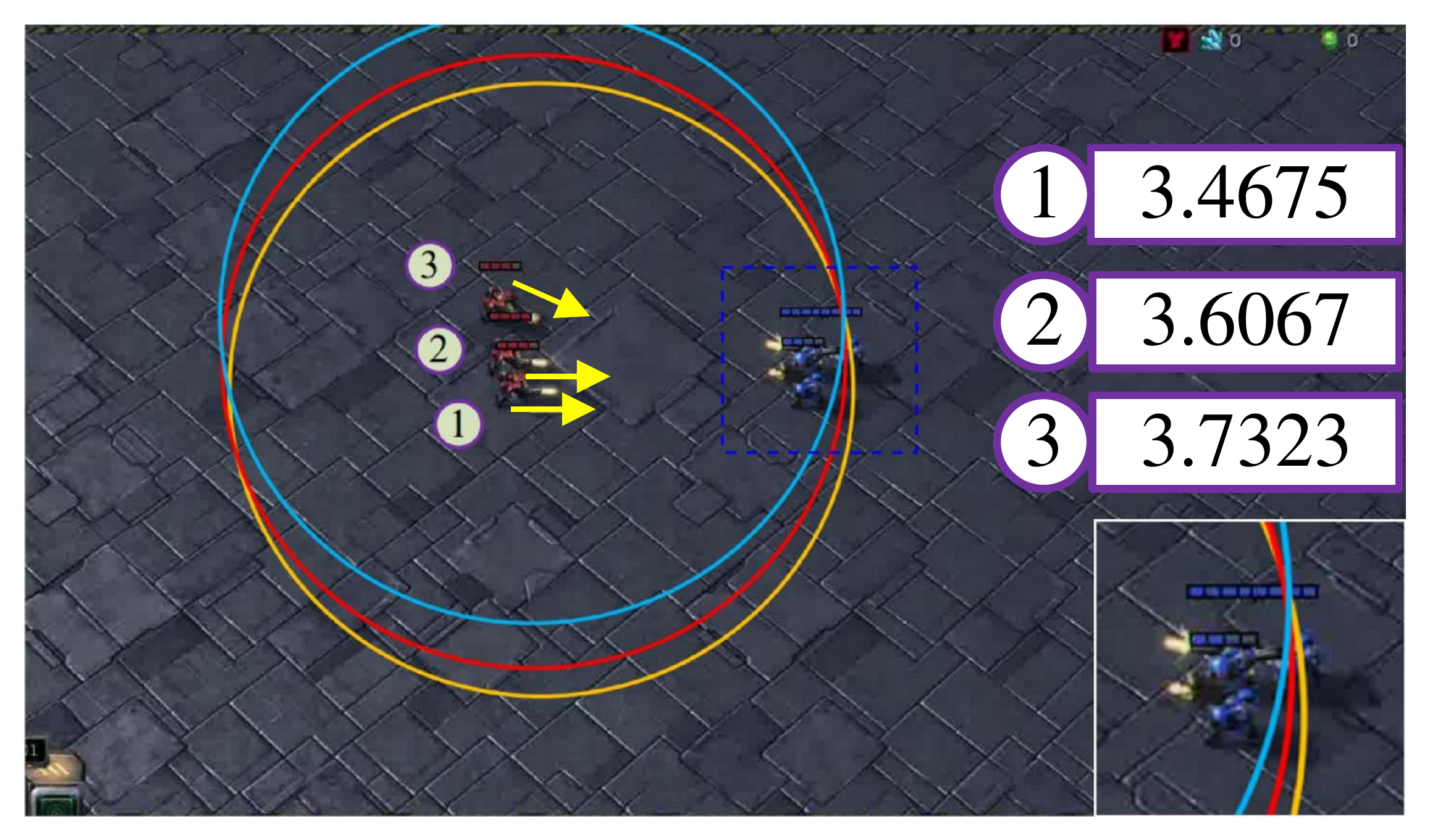}
            	    \caption{QPLEX: greedy.}
            	\label{fig:qplex_greedy}
                \end{subfigure}
                \caption{Visualisation of the test for SHAQ and baselines on 3m in SMAC: each colored circle is the centered attacking range of a controllable agent (in red), and each agent's factorised Q-value is reported on the right. We mark the direction that each agent face by an arrow for clarity.}
            \label{fig:study_on_factorised_q_values}
            \end{figure*}
            
            \paragraph{Scenario 2.} To further show the interpretability of SHAQ, we conduct a test on 3m (i.e., a simple task in SMAC), demonstrating the learned MSQs of both $\epsilon$-greedy policy (for obtaining the mixed optimal and sub-optimal actions) and greedy policy (for obtaining the pure optimal actions). As seen from Figure \ref{fig:shaq_epsilon_greedy}, agent 3 faces the direction opposite to enemies, meanwhile, the enemies are out of its attacking range. It can be understood as that agent 3 does not contribute to the team and thus it is almost a dummy agent. Its MSQ is therefore 0.84 (around 0), which correctly describes the manner of a dummy agent (verifying (i) in Proposition \ref{prop:shapley_value_properties}). In contrast, agent 1 and agent 2 are attacking enemies, while agent 1 suffers from more attacks (with lower health) than agent 2. As a result, agent 1 contributes more than agent 2 and therefore its MSQ is greater, which verifies the property of reflecting the contribution (verifying (iii) in Proposition \ref{prop:shapley_value_properties}). On the other hand, we can see from Figure \ref{fig:shaq_greedy} that with the optimal policies all agents receive almost identical MSQs (verifying the theoretical results in Section \ref{subsec:shapley_q_learning}). 
            
            The above results well verify the theoretical analysis that we deliver in Chapter \ref{cha:theory_and_method}. To verify that the MSQs learned by SHAQ are non-trivial, we also show the resulting Q-values of VDN, QMIX and QPLEX. It is surprising that the Q-values of these baselines are also almost identical among agents for the optimal actions. Since VDN is a subclass of SHAQ and possesses the same form of loss function for the optimal actions, it is reasonable that it obtains the similar results to SHAQ. The exploration of the results of QMIX and QPLEX deserves to be conducted in the future work. As for the sub-optimal actions, VDN does not possess an explicit interpretation as SHAQ due to its incorrect definition of $\delta_{i}(\mathbf{s}, a_{i})=1$ over the sub-optimal actions (verifying the statement in Section \ref{subsec:comparison_with_other_learning_algorithms}). Similarly, QMIX and QPLEX cannot show explicit interpretation of sub-optimal actions either.
            \begin{figure*}[ht!]
            \centering
                \begin{subfigure}[b]{0.48\textwidth}
                	\centering
            	    \includegraphics[width=\textwidth]{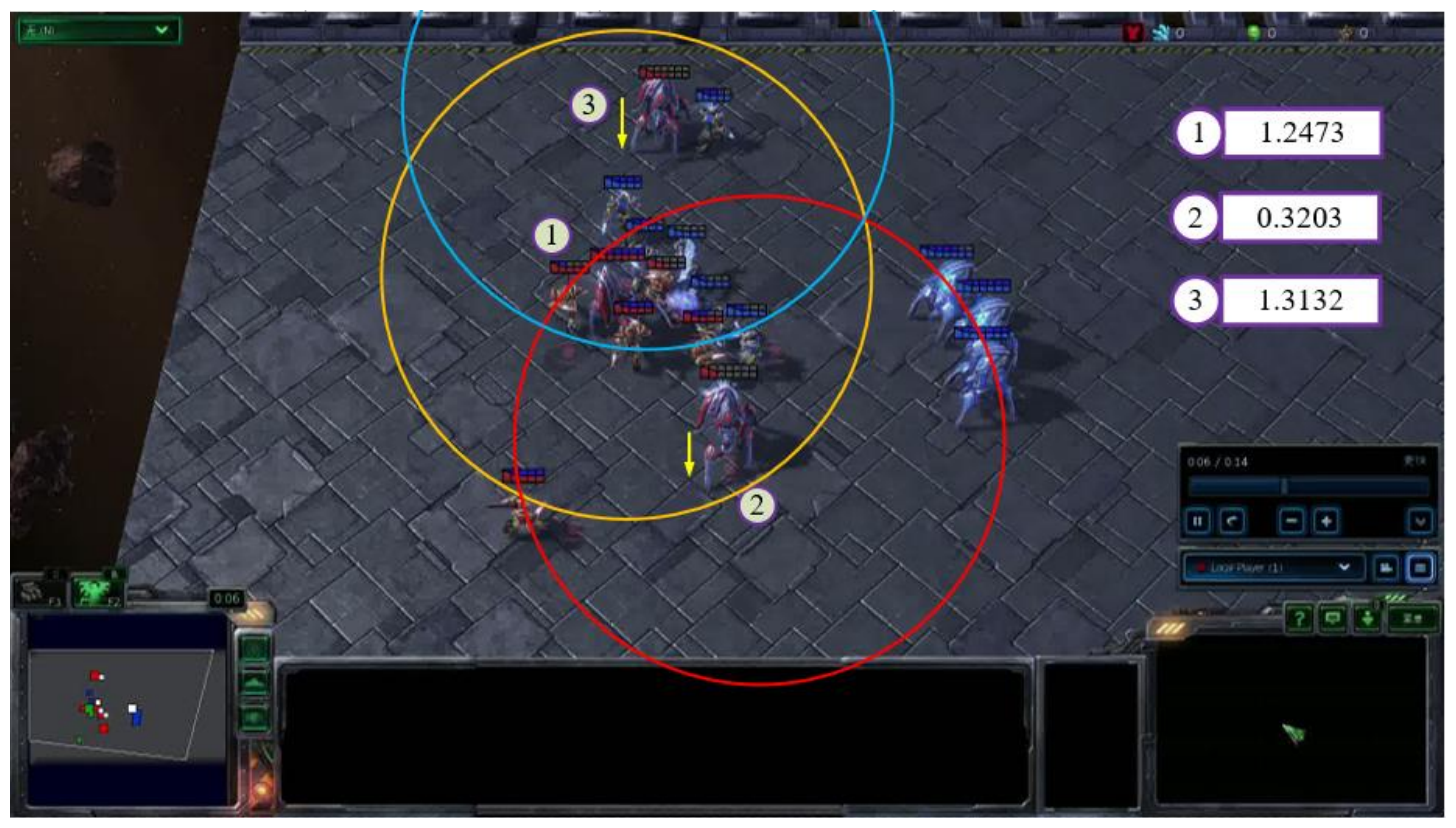}
            	    \caption{SHAQ: $\epsilon$-greedy.}
            	\label{fig:shaq_epsilon_greedy_appendix}
                \end{subfigure}
                ~
                \begin{subfigure}[b]{0.48\textwidth}
                    \centering                \includegraphics[width=\textwidth]{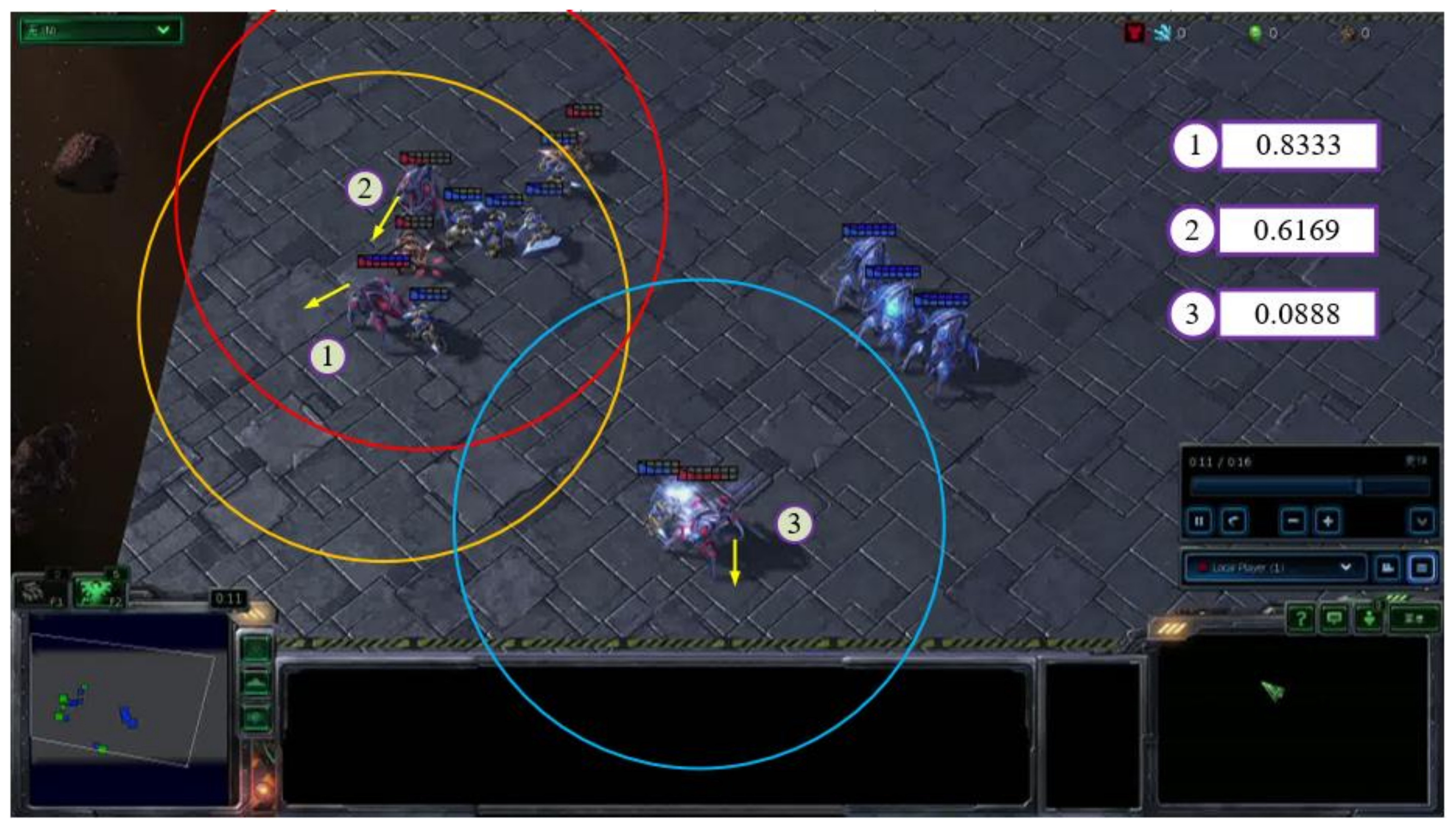}
                    \caption{VDN: $\epsilon$-greedy.}
                \label{fig:vdn_epsilon_greedy_appendix}
                \end{subfigure}
                ~
                \begin{subfigure}[b]{0.48\textwidth}
                    \centering                \includegraphics[width=\textwidth]{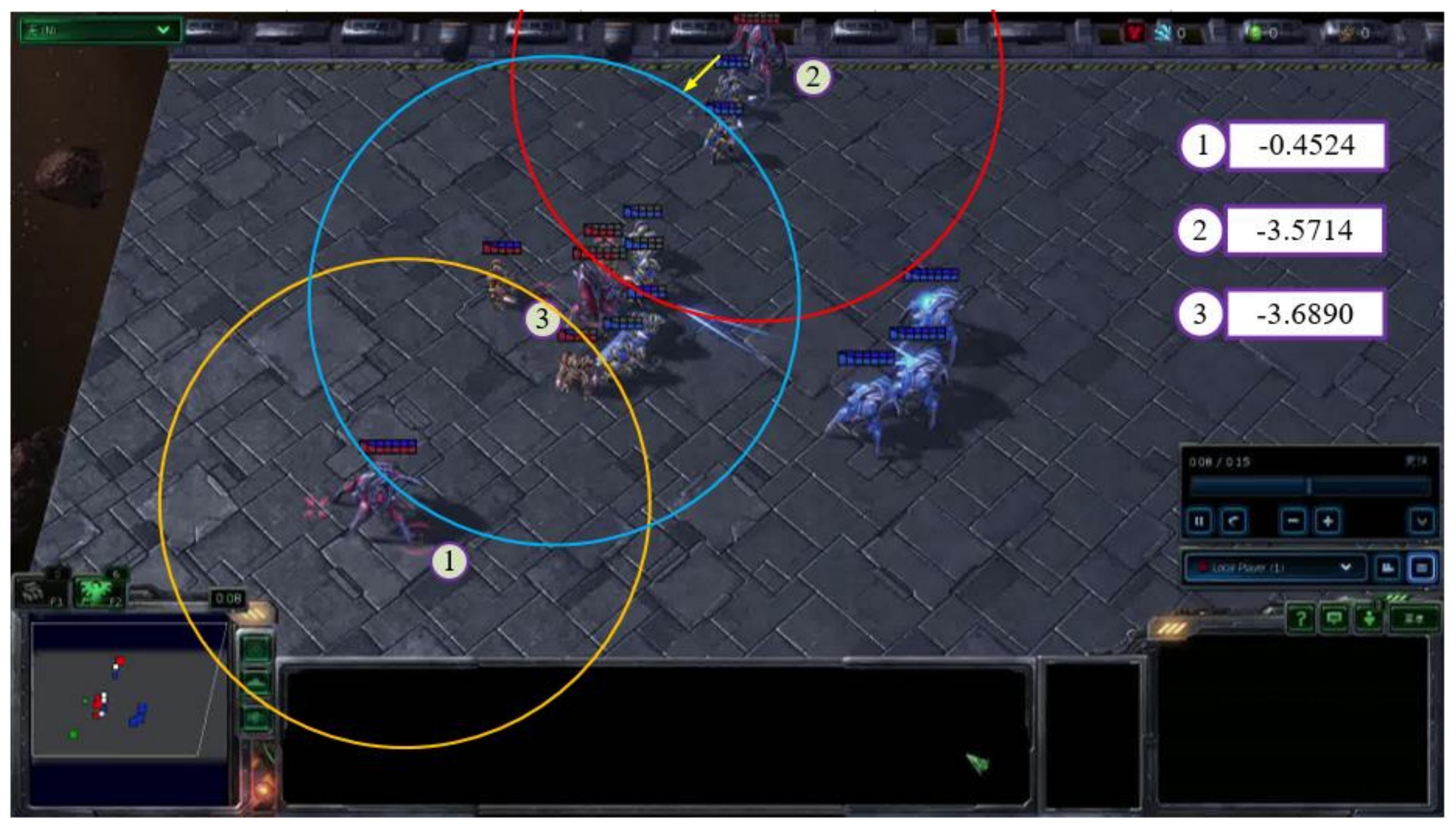}
                    \caption{QMIX: $\epsilon$-greedy.}
                \label{fig:qmix_epsilon_greedy_appendix}
                \end{subfigure}
                ~
                \begin{subfigure}[b]{0.48\textwidth}
            	    \centering        	    \includegraphics[width=\textwidth]{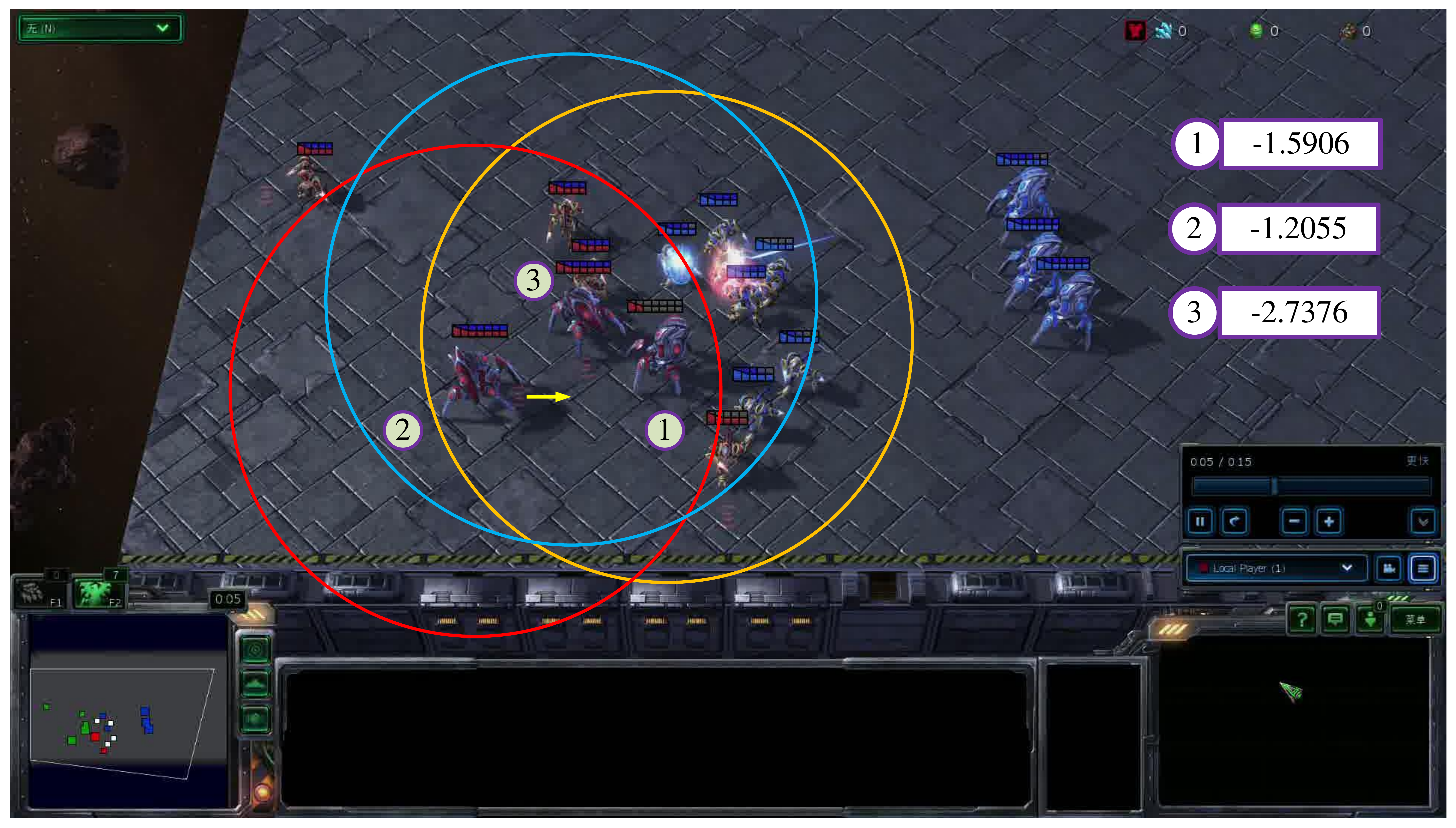}
            	    \caption{QPLEX: $\epsilon$-greedy.}
            	\label{fig:qplex_epsilon_greedy_appendix}
                \end{subfigure}
                
                \begin{subfigure}[b]{0.48\textwidth}
            	    \centering        	    \includegraphics[width=\textwidth]{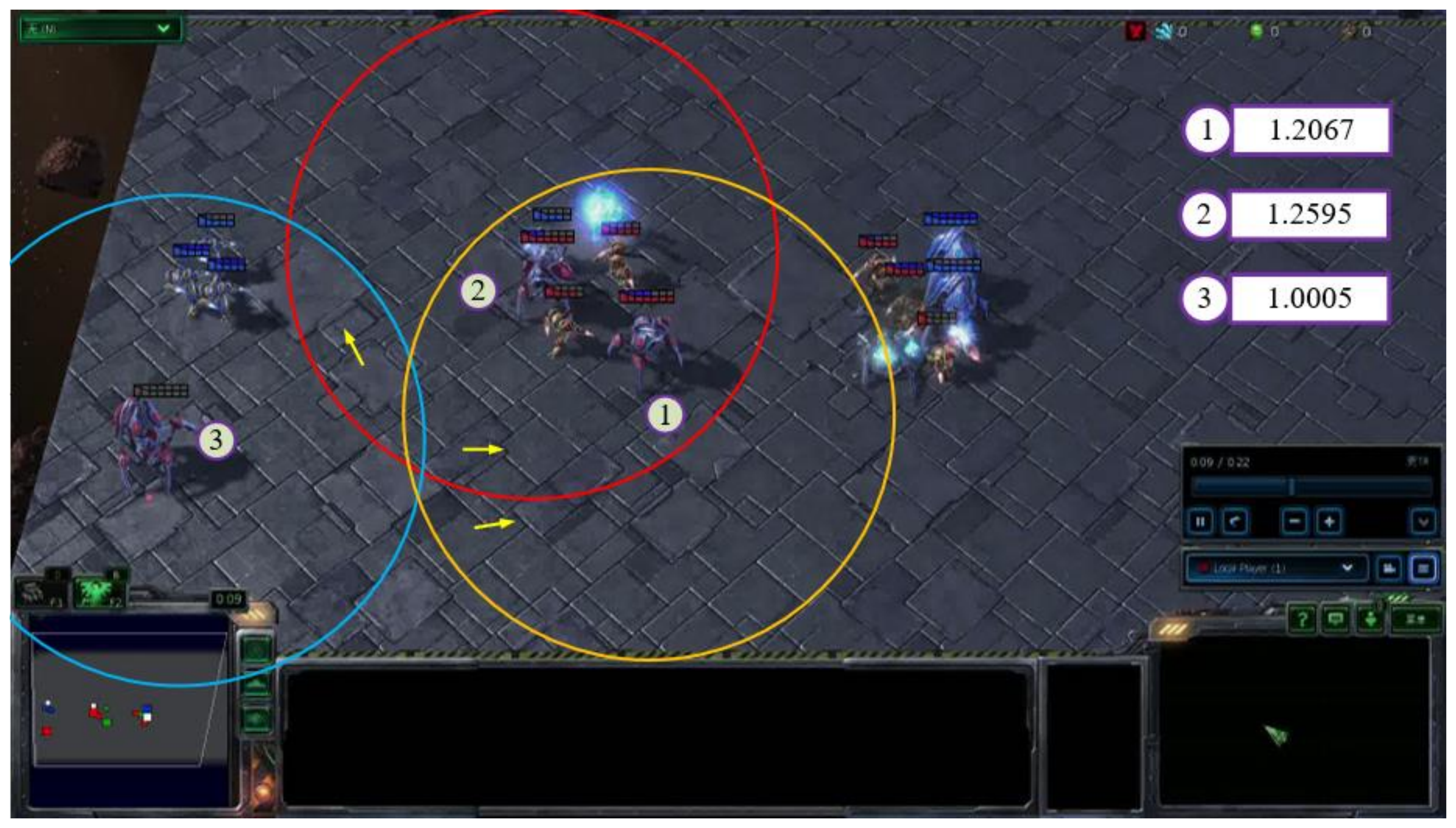}
            	    \caption{SHAQ: greedy.}
            	\label{fig:shaq_greedy_appendix}
                \end{subfigure}
                ~
                \begin{subfigure}[b]{0.48\textwidth}
                	\centering        	    \includegraphics[width=\textwidth]{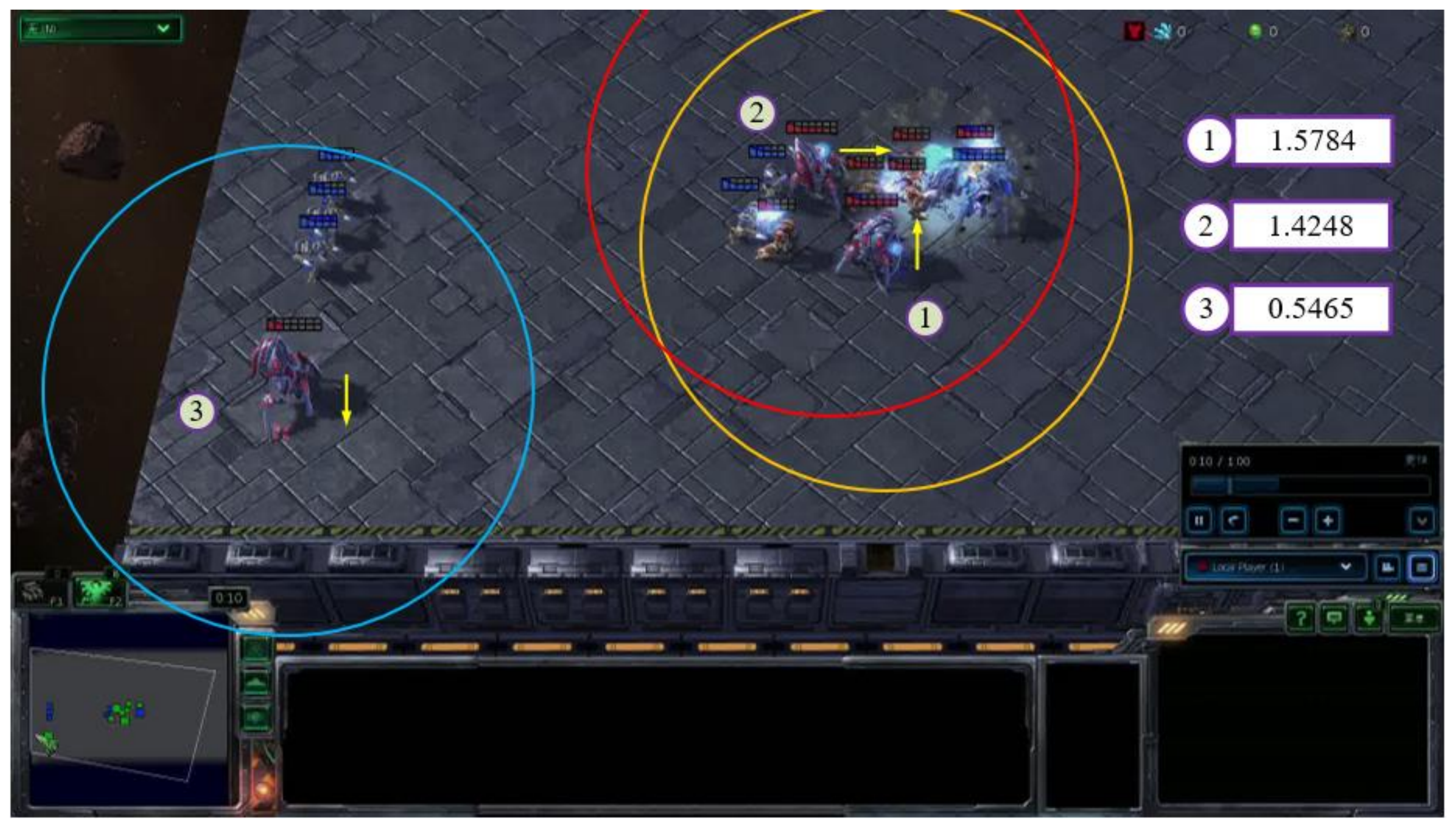}
            	    \caption{VDN: greedy.}
            	\label{fig:vdn_greedy_appendix}
                \end{subfigure}
                ~
                \begin{subfigure}[b]{0.48\textwidth}
            	    \centering        	    \includegraphics[width=\textwidth]{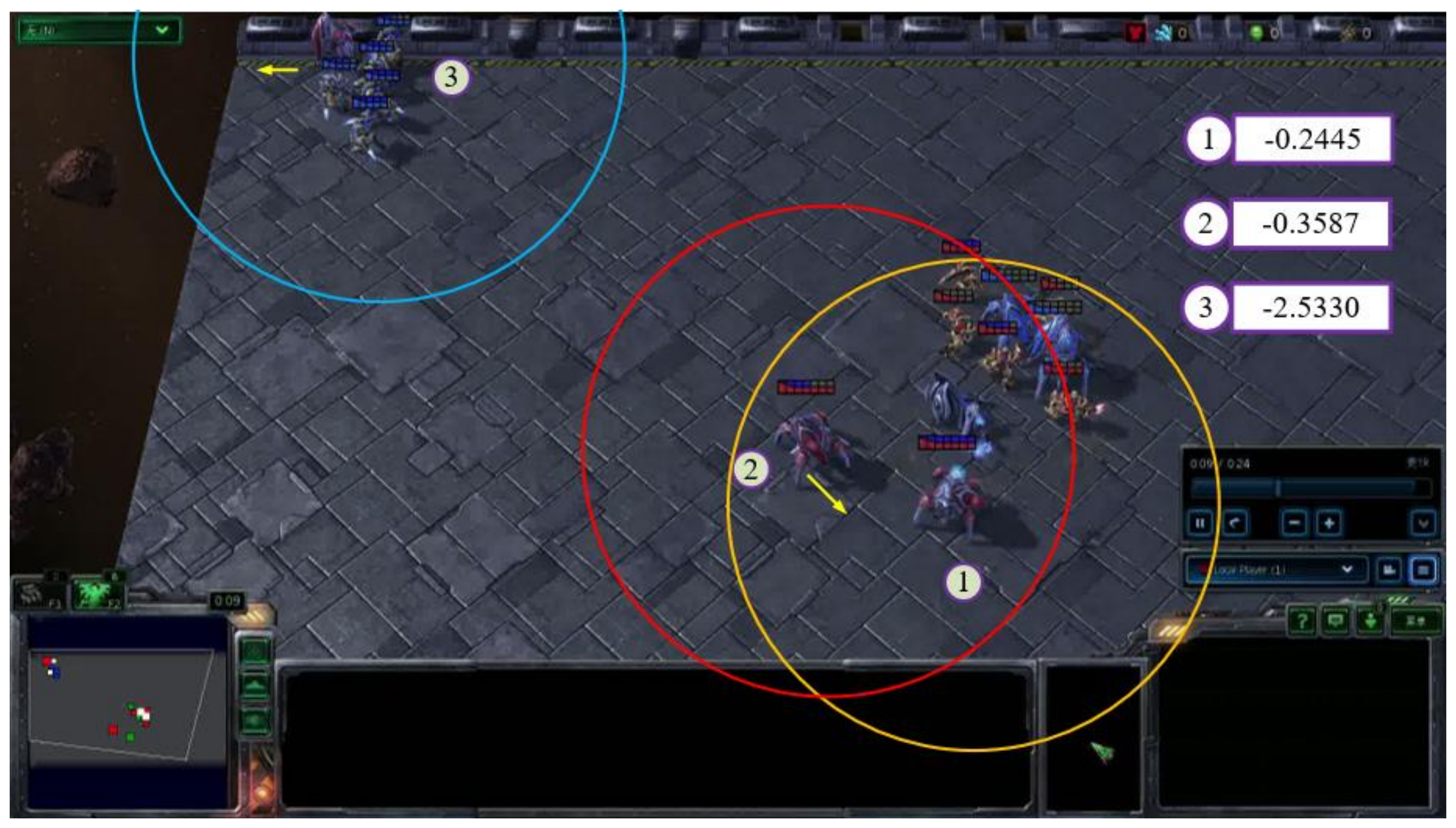}
            	    \caption{QMIX: greedy.}
            	\label{fig:qmix_greedy_appendix}
                \end{subfigure}
                ~
                \begin{subfigure}[b]{0.48\textwidth}
            	    \centering        	    \includegraphics[width=\textwidth]{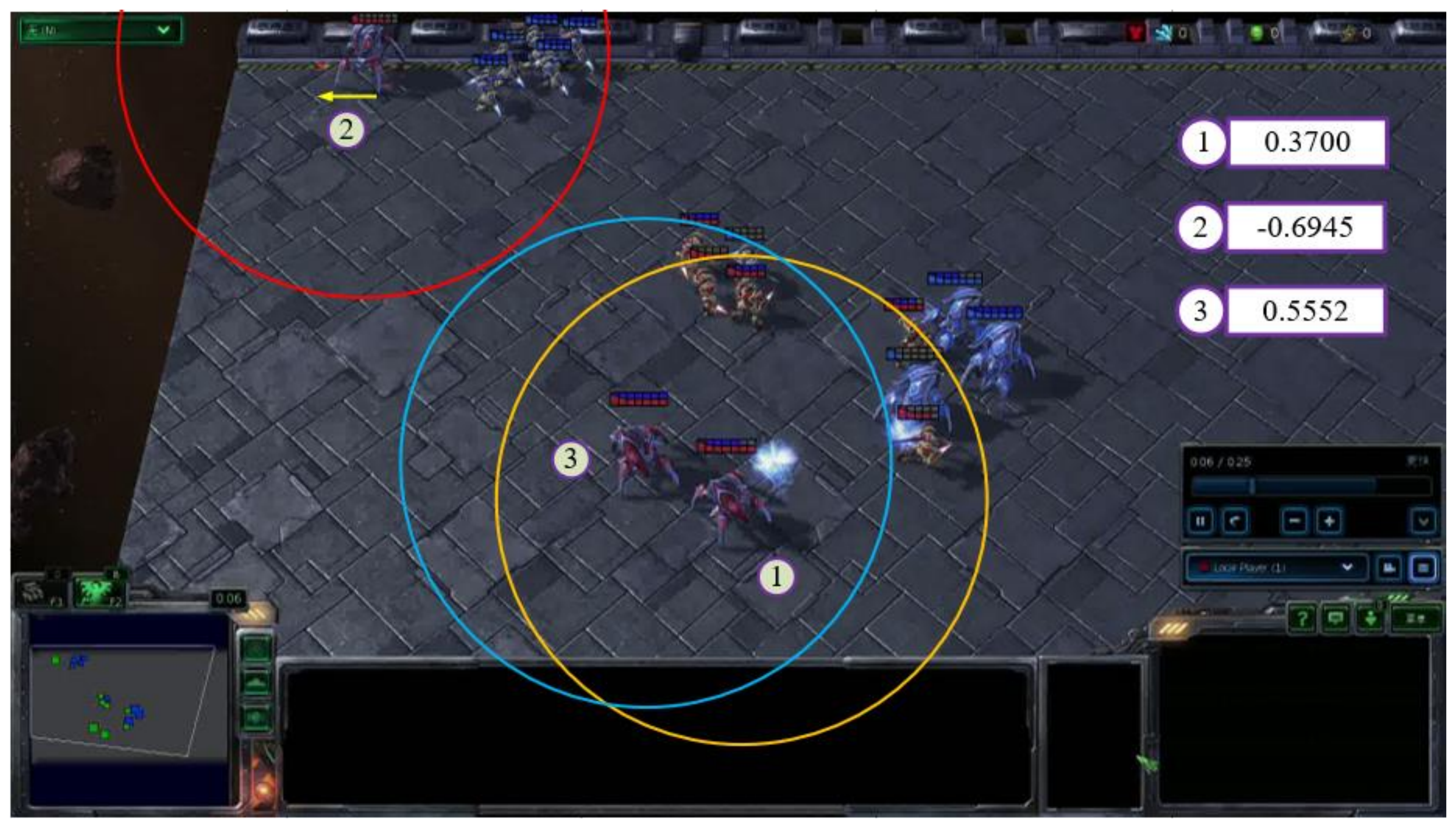}
            	    \caption{QPLEX: greedy.}
            	\label{fig:qplex_greedy_appendix}
                \end{subfigure}
                \caption{Visualisation of the evaluation for SHAQ and other baselines on 3s5z\_vs\_3s6z in SMAC: each colored circle is the centered attacking range of a controllable agent (in red), and each agent's factorised Q-value is reported on the right. We mark the direction that each moving agent face by an arrow.}
            \label{fig:study_on_factorised_q_values_appendix}
            \end{figure*}
            
            \paragraph{Scenario 3.}  To verify our theoretical results more firmly, we show the Q-values on a more complicated scenario in SMAC, i.e. 3s5z\_vs\_3s6z during test in Figure \ref{fig:study_on_factorised_q_values_appendix}. First, we take a look into the optimal actions. SHAQ can still demonstrate the equal credit assignment as we claimed before. Unfortunately, VDN does not explicitly show equal credit assignment. The possible reason is that part of parameters of Q-value are shared between optimal actions and sub-optimal actions. Therefore, the parametric effect of the mistakes committed on the sub-optimal actions to the optimal actions by VDN during learning may be exaggerated when the number of agents increases. About QMIX and QPLEX, the Q-values of optimal actions are difficult to be interpreted in this complicated scenario. For both of the two algorithms, the agent who is responsible for kiting\footnote{\url{https://en.wikipedia.org/wiki/Glossary_of_video_game_terms}.} (i.e., agent 3 for QMIX and agent 2 for QPLEX) receives the lowest credit, however, it is an important role to the team in a combat tactic. 
            
            Next, we focus on the sub-optimal actions. As for SHAQ, agent 1 and agent 3 are participating into the battle, so deserving almost the equal credit assignment. However, agent 2 drops teammates and escapes from the center of the battle field, so it contributes almost nothing to the team. As a result, it can be seen as a dummy agent and thus obtains the credit almost equal to 0. This is again consistent with our theoretical analysis. About VDN, it coincidentally assigns the credit almost equal to 0 to the dummy agent (i.e., agent 3) in this scenario. Nevertheless, the low credit assignments to the other 2 agents who participate in the battle is difficult to be interpreted. About QMIX, agent 2 and agent 3 who participate in the battle receive the lowest credits, while agent 1 who escapes from the battle field receives the highest credit. For QPLEX, the agents' behaviours are totally difficult to be interpreted.

% ==================================================================
% Name of this chapter.
% ==================================================================

\chapter{Application to Active Voltage Control in Power Distribution Networks}
\label{chap:application_to_distributed_voltage_control}
    In this chapter, we first model the active voltage control problem in power distribution networks as a Dec-POMDP to fit it into multi-agent reinforcement learning. Then, we specifically discuss how to design a global reward function that encodes the goal of this problem (i.e., confining the voltage within a safety range with the minimum power loss). Next, we evaluate the performance of SQDDPG and SMFPPO on three scenarios with different network topologies, load profiles and power profiles. Moreover, we compare the performance of SQDDPG and SMFPPO with the traditional control methods, and show its potential to solve this real-world challenge. Finally, we provide an illustrative example to demonstrate the possible physical implication in power distribution networks for the Markov Shapley value.
    
\section{Problem Formulation}
\label{sec:problem_formulation}
    For ease of operations, a large-scale power network is divided into multiple regions and there are several PVs installed into each region that is managed by the responsible distribution network operator. Each PV is with an inverter that generates active power, part of which is consumed by the local loads while the surplus can be injected into the grid. It is well known that PV power injection may induce voltage rise for the grid which can be solved by the reactive power injection from the PV inverters themselves. The reactive power injection creates voltage difference on the inductive distribution network and thus help restore the network voltage towards the nominal voltage, denoted as $v_{\scriptscriptstyle \text{ref}}$. The reactive power and voltage regulation of distributed PVs has a local-global characteristic. On one hand, each PV inverter can measure its local voltage and regulate its reactive power accordingly (e.g. Q-V droop control). On the other hand, the reactive injection at one node may affect the voltage on other nodes, meaning that the voltage regulation should be globally optimised. This local-global characteristic makes the voltage regulation problem a good candidate for multi-agent control \cite{burger2019restructuring1,burger2019restructuring2}. In our problem, each PV inverter is controlled in a distributed manner, i.e., each PV is considered an agent. All agents within a region share the observation of this region.\footnote{Sharing observation in this problem is reasonable, since only the sensor measurements (e.g. voltage, active power, etc.) are shared, which are not directly related to the commercial profits \cite{burger2019restructuring1,burger2019restructuring2}. The observation of each PV is collected by the distribution network owner and then the full information within the region is sent to each agent.} Since each agent can only observe partial information of the whole grid and maintaining the safety of the power network is a common goal among agents, it is reasonable to model the problem as a Dec-POMDP \cite{oliehoek2016concise} that can be mathematically described as a 10-tuple such that $\langle \mathcal{I}, \mathcal{S}, \mathcal{A}, \mathcal{R}, \mathcal{O}, T, R, \Omega, \rho, \gamma \rangle$, where $\rho$ is the probability distribution for drawing the initial state and $\gamma$ is the discount factor. 
    % The full definitions are shown as follows.
    
    \paragraph{Agent Set.} There is a set of agents controlling a set of PV inverters denoted as $\mathcal{I}$. Each agent is located at some node in $\mathcal{G}$ (i.e. a graph representing the power network defined as before). We define a function $\mathit{g}: \mathcal{I} \rightarrow V$ to indicate the node where an agent is located.
    
    \paragraph{Region Set.} The whole power network is separated into $\mathit{M}$ regions, whose union is denoted as $\mathcal{R} = \{ \mathcal{R}_{k} \subset V \ | \ k < M, k \in \mathbb{N} \}$, where $\bigcup_{\mathcal{R}_{k} \in \mathcal{R}} \mathcal{R}_{k} \subseteq V$ and $\mathcal{R}_{k_{1}} \cap \mathcal{R}_{k_{2}} = \emptyset$ if $k_{1} \neq k_{2}$. We define a function $\mathit{f}: V \rightarrow \mathcal{R}$ that maps a node to the region where it is involved.
    
    \paragraph{State and Observation Set.} The state set is defined as $\mathcal{S} = \mathcal{L} \times \mathcal{P} \times \mathcal{Q} \times \mathcal{V}$, where $\mathcal{L} = \{ (\textbf{p}^{\scriptscriptstyle L}, \textbf{q}^{\scriptscriptstyle L}) : \textbf{p}^{\scriptscriptstyle L}, \textbf{q}^{\scriptscriptstyle L} \in (0, \infty)^{\scriptscriptstyle |V|} \}$ is a set of (active and reactive) powers of loads; $\mathcal{P} = \{ \textbf{p}^{\scriptscriptstyle PV} : \textbf{p}^{\scriptscriptstyle PV} \in (0, \infty)^{\scriptscriptstyle |\mathcal{I}|} \}$ is a set of active powers generated by PVs; $\mathcal{Q} = \{ \textbf{q}^{\scriptscriptstyle PV} : \textbf{q}^{\scriptscriptstyle PV} \in (0, \infty)^{\scriptscriptstyle |\mathcal{I}|} \}$ is a set of reactive powers generated by PV inverters at the preceding step; $\mathcal{V} = \{ (\textbf{v}, \mathbf{\theta}) : \textbf{v} \in (0, \infty)^{\scriptscriptstyle |V|}, \mathbf{\theta} \in [-\pi, \pi]^{\scriptscriptstyle |V|} \}$ is a set of voltage wherein $\textbf{v}$ is a vector of voltage magnitudes and $\mathbf{\theta}$ is a vector of voltage phases measured in radius. $v_{i}$, $p_{i}^{\scriptscriptstyle L}$, $q_{i}^{\scriptscriptstyle L}$, $p_{i}^{\scriptscriptstyle PV}$ and $q_{i}^{\scriptscriptstyle PV}$ are denoted as the components of the vectors $\textbf{v}$, $\textbf{p}^{\scriptscriptstyle L}$, $\textbf{q}^{\scriptscriptstyle L}$, $\textbf{p}^{\scriptscriptstyle PV}$ and $\textbf{q}^{\scriptscriptstyle PV}$ respectively. We define a function $\mathit{h}: \mathbb{P}(V) \rightarrow \mathbb{P}(\mathcal{S})$ that maps a subset of $V$ to its correlated measures, where $\mathbb{P}(\mathcal{X})$ denotes the power set of an arbitrary set $\mathcal{X}$. The observation set is defined as $\mathcal{O} = \mathlarger{\mathlarger{\times}}_{\scriptscriptstyle i \in \mathcal{I}} \mathcal{O}_{i}$, where $\mathcal{O}_{i} = (h \circ  f \circ g) (\mathit{i})$ indicates the measures within the region where agent $\mathit{i}$ is located.
    
    \paragraph{Action Set.} Each agent $\mathit{i} \in \mathcal{I}$ is equipped with a continuous action set $\mathcal{A}_{i} = \{ \mathit{a}_{i} \ : \ - c \leq \mathit{a}_{i} \leq c, c > 0 \}$. The continuous action represents the ratio of maximum reactive power it generates. In more details, the reactive power generated from the $k$th PV inverter is $q_{k}^{\scriptscriptstyle PV} = a_{k} \ \sqrt{(s_{k}^{\scriptscriptstyle \max})^{2} - (p_{k}^{\scriptscriptstyle PV})^{2}}$, where $s_{k}^{\scriptscriptstyle \max}$ is the maximum apparent power of the $\mathit{k}$th node that is dependent on the physical capacity of the PV inverter.\footnote{Note that the reactive power range actually dynamically changes at each time step.}\footnote{Yielding $(q_{k}^{\scriptscriptstyle PV})_{t}$ at each time step $t$ is equivalent to yielding $\Delta_{t} (q_{k}^{\scriptscriptstyle PV})$ (i.e. the change of reactive power generation at each time step), since $(q_{k}^{\scriptscriptstyle PV})_{t} = (q_{k}^{\scriptscriptstyle PV})_{t-1} + \Delta_{t} (q_{k}^{\scriptscriptstyle PV})$. For easily satisfying the safety condition, we directly yield $q_{k}^{\scriptscriptstyle PV}$ at each time step in this work.} If $a_{k} > 0$, it means penetrating reactive powers to the distribution network. If $a_{k} < 0$, it means absorbing reactive powers from the distribution network. The value of $c$ is usually selected as per the loading capacity of a distribution network, which is for the safety of operations. The joint action set is denoted as $\mathcal{A} = \mathlarger{\mathlarger{\times}}_{\scriptscriptstyle i \in \mathcal{I}} \mathcal{A}_{i}$.
    
    \paragraph{State Transition Probability Function.} Since the state includes the last action and the change of loads is random (that theoretically can be modelled as any probabilistic distribution), we can naturally define the state transition probability function as $T: \mathcal{S} \times \mathcal{A} \times \mathcal{S} \rightarrow [0, 1]$ that follows Markov decision process. Specifically, $T(\mathbf{s}_{t+1}, \mathbf{s}_{t}, \mathbf{a}_{t}) = Pr(\mathbf{s}_{t+1} | \delta(\mathbf{s}_{t}, \mathbf{a}_{t}))$, where $\mathbf{a}_{t} \in \mathcal{A}$ and $\mathbf{s}_{t}, \mathbf{s}_{t+1} \in \mathcal{S}$. $\delta(\mathbf{s}_{t}, \mathbf{a}) \mapsto \mathbf{s}_{t+\tau}$ denotes the solution of the power flow, whereas $Pr(\mathbf{s}_{t+1} | \mathbf{s}_{t+\tau})$ describes the change of loads (i.e. highly correlated to the user behaviours). $\tau \ll \Delta t$ is an extremely short interval much less than the time interval between two controls (i.e. a time step) and $\Delta t = 1$ in this thesis.
    
    \paragraph{Observation Probability Function.} We now define the observation probability function. In the context of electric power network, it describes the measurement errors that may occur in sensors. Mathematically, we can define it as $\Omega: \mathcal{S} \times \mathcal{A} \times \mathcal{O} \rightarrow [0, 1]$. Specifically, $\Omega(\mathbf{o}_{t+1} | \mathbf{s}_{t+1}, \mathbf{a}_{t}) = \mathbf{s}_{t+1} + \mathcal{N}(\mathbf{0}, \Sigma)$, where $\mathcal{N}(\mathbf{0}, \Sigma)$ is an isotropic multi-variable Gaussian distribution and $\Sigma$ is dependent on the physical properties of sensors (e.g. smart meters). 

    \paragraph{Reward Function.} The reward function is defined as follows:
    \begin{equation}
        R = - \frac{1}{|V|} \sum_{i \in V} l_{v}(v_{i}) - \alpha \cdot l_{q}(\mathbf{q}^{\scriptscriptstyle PV}),
    \label{eq:reward_function}
    \end{equation}
    where $l_{v}(\cdot)$ is a voltage barrier function and $l_{q}(\mathbf{q}^{\scriptscriptstyle PV}) = \frac{1}{|\mathcal{I}|}||\mathbf{q}^{\scriptscriptstyle PV}||_{1}$ is the reactive power generation loss (i.e. a type of power loss approximation easy for computation). The objective is to control the voltage within a safety range around $v_{\text{\tiny{ref}}}$, while the reactive power generation is as less as possible, i.e., $l_{q}(\mathbf{q}^{\scriptscriptstyle PV}) < \epsilon \text{ and } \epsilon > 0$. Similar to the mathematical tricks used in $\beta$-VAE \cite{higgins2016beta}, by KKT conditions we can transform a constrained reward to a unconstrained reward by a Lagrangian multiplier $\alpha \in (0, 1)$ shown in Eq.~\ref{eq:reward_function}. Since $l_{v}(\cdot)$ is not easy to define in practice (i.e., it affects $l_{q}(\mathbf{q}^{\scriptscriptstyle PV})$), we aim at studying for a good choice in this paper. Although the action range has been restricted to avoid the violence of the loading capacity of power distribution networks, in simulation there still exist possibilities that this accident could happen. To address this problem, if the violence of the loading capacity appears, the system will backtrack to the last state and terminate the simulation, meanwhile, a penalty of $-200$ will become the reward instead of the one calculated in Eq.~\ref{eq:reward_function}.
    
    \paragraph{Objective Function.} The objective function of this problem is $\max_{\pi} \mathbb{E}_{\pi}[\sum_{t=0}^{\infty} \gamma^{t} R_{t}]$, where $\pi = \mathlarger{\mathlarger{\times}}_{\scriptscriptstyle i \in \mathcal{I}} \pi_{i}$; $\pi_{i}: \bar{\mathcal{O}}_{i} \times \mathcal{A}_{i} \rightarrow [0, 1]$ and $\bar{\mathcal{O}}_{i}=(\mathcal{O}_{i}^{\tau})_{\tau=1}^{h}$ is a history of observations with the length as $\mathit{h}$. Literally, we need to find the optimal joint policy $\pi$ to maximize the discounted cumulative rewards.

\section{Voltage Barrier Function}
\label{sec:voltage_loss}
    \begin{figure*}[ht!]
        \centering
        \begin{subfigure}[b]{0.45\textwidth}
        	\centering
    	    \includegraphics[width=\textwidth]{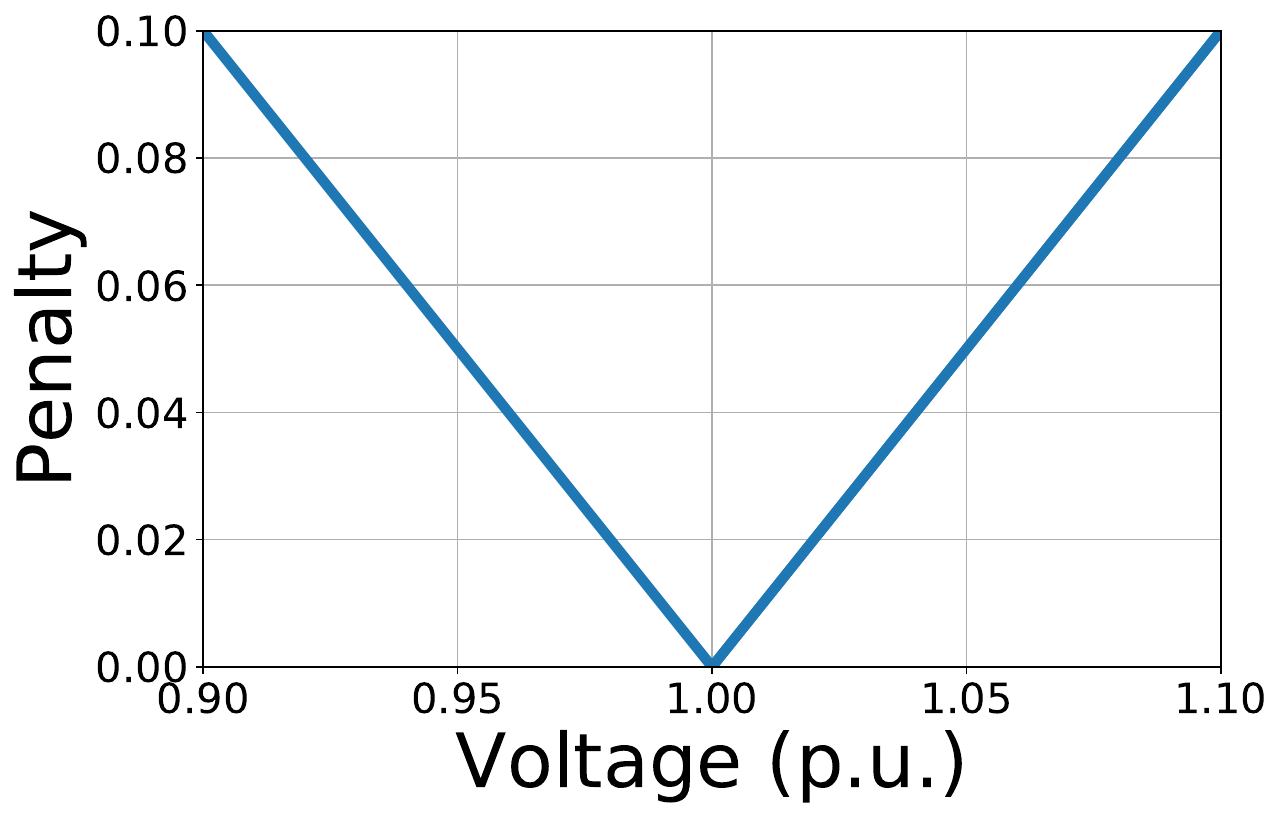}
    	    \caption{L1-shape.}
    	\label{fig:l1_loss}
        \end{subfigure}
        ~
        \begin{subfigure}[b]{0.45\textwidth}
            \centering                \includegraphics[width=\textwidth]{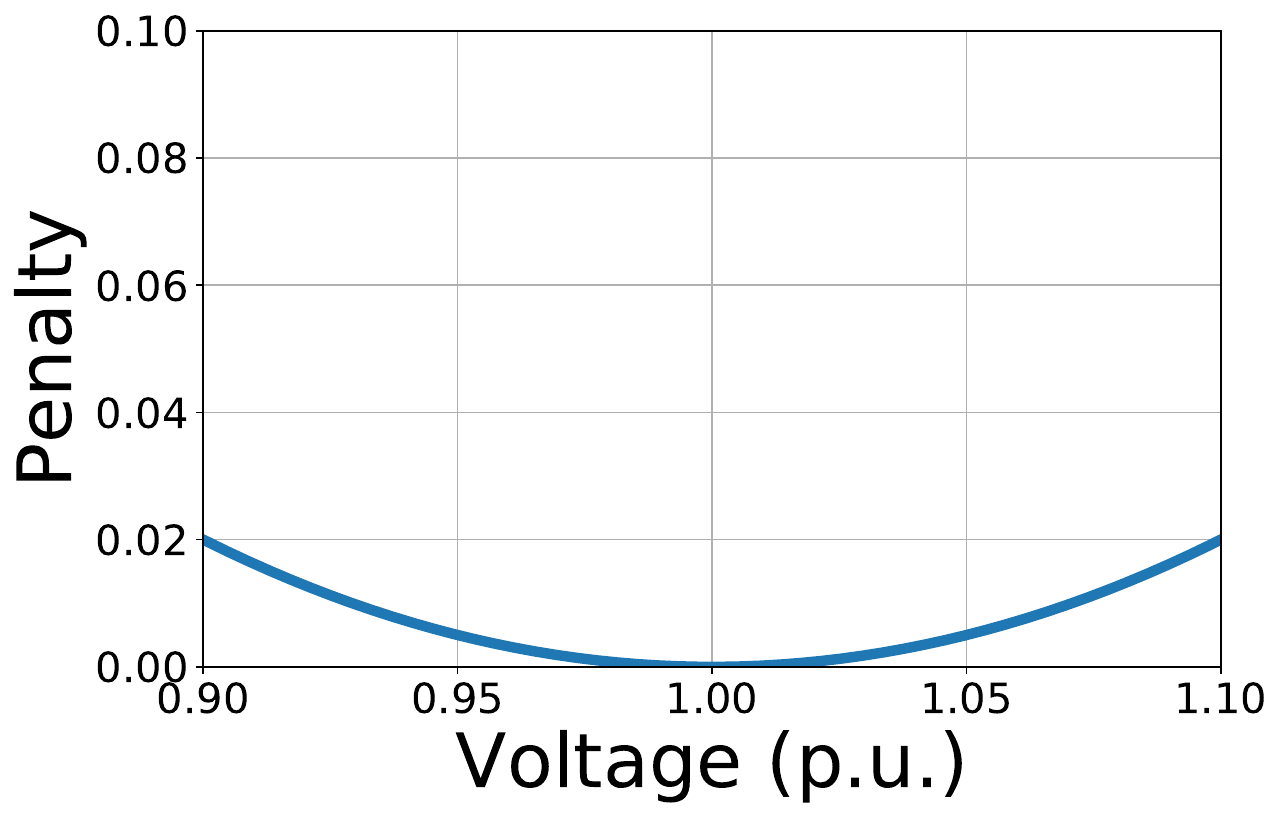}
            \caption{L2-shape.}
        \label{fig:l2_loss}
        \end{subfigure}
        ~
        \begin{subfigure}[b]{0.45\textwidth}
            \centering                \includegraphics[width=\textwidth]{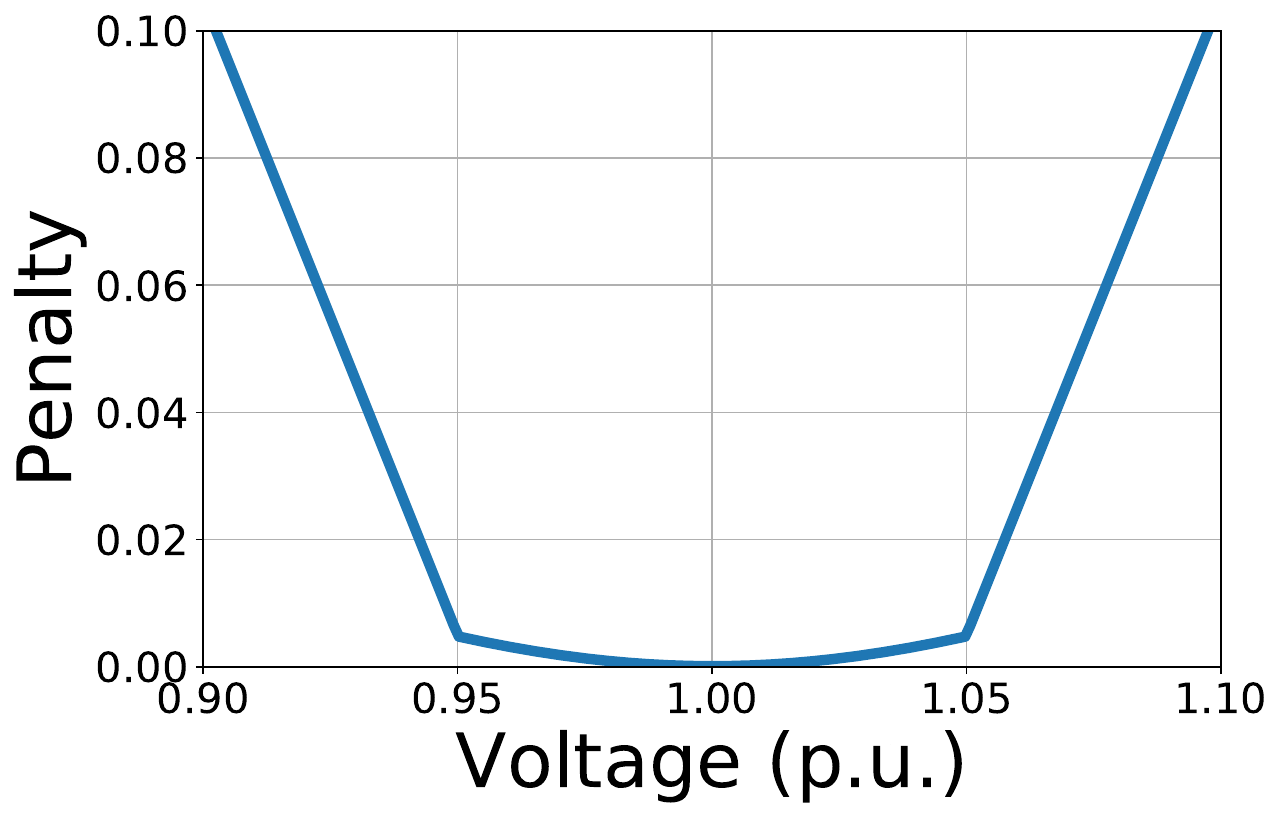}
            \caption{Bowl-shape.}
        \label{fig:bowl_loss}
        \end{subfigure}
        \caption{Three voltage barrier functions, where L1-shape and L2-shape are 2 baselines while Bowl-shape is proposed in this thesis.}
        \label{fig:voltage_loss}
    \end{figure*}
    
    We define $v_{\text{\tiny{ref}}} = 1 \ p.u.$ in this thesis, and the voltage needs to be controlled within the safety range from $0.95 \ p.u.$ to $1.05 \ p.u.$, which sets the constraint of control. The voltage constraint is difficult to be handled in MARL, so we use a barrier function to represent the constraint. L1-shape (see Figure \ref{fig:l1_loss}) was most frequently used in the previous work \cite{cao2020distributed,cao2020multi,cao2021data}, however, this may lead to wasteful reactive power generations since $\frac{|\Delta l_{v}|}{\alpha|\Delta l_{q}|} \gg 1$ within the safety range of voltage. Although L2-shape (see Figure \ref{fig:l2_loss}) may alleviate this problem, it may be slow to guide the policy outside the safety range. To address these problems, we propose a barrier function called Bowl-shape that combines the advantages of L1-shape and L2-shape. It gives a steep gradient outside the safety range, while it provides a slighter gradient as voltage tends to the $v_{\text{\tiny{ref}}}$ that enables $\frac{|\Delta l_{v}|}{\alpha|\Delta l_{q}|} \rightarrow 0$ as $v \rightarrow v_{\text{\tiny{ref}}}$.
    
    We now show and discuss the analytic forms of all voltage barrier functions mentioned above. The L1-shape can be expressed as follows:
    \begin{equation}
        l_{v}(v_{k}) = | v_{k} - v_{\text{\tiny{ref}}} |, \ \ \forall k \in V.
    \end{equation}
    The L2-shape can be expressed as follows:
    \begin{equation}
        l_{v}(v_{k}) = ( v_{k} - v_{\text{\tiny{ref}}} )^{2}, \ \ \forall k \in V.
    \end{equation}
    The Bowl-shape can be expressed as follows:
    \begin{equation}
        l_{v}(v_{k}) = 
        \begin{cases}
            a \cdot | v_{k} - v_{\text{\tiny{ref}}} | - b & \text{If } | v_{k} - v_{\text{\tiny{ref}}} | > 0.05, \\
            - c \cdot \mathcal{N}(v_{k} \ | \ v_{\text{\tiny{ref}}}, 0.1) + d & \text{Otherwise},
        \end{cases}
    \end{equation}
    where $a, b, c, d$ are 4 hyperparameters to adjust the shape and smoothness of function that are set to $2, 0.095, 0.01, 0.04$ respectively in this thesis; $\mathcal{N}(v_{k} \ | \ v_{\text{\tiny{ref}}}, 0.1)$ is a density function of the Gaussian distribution with the mean as $v_{\text{\tiny{ref}}}$ and the standard deviation as 0.1. In addition to the significance of satisfying the objective of active voltage control, this construction can also be interpreted as a sort of statistical implication. $v_{k}$ is assumed to follow the Laplace distribution outside the safety range, while it is assumed to follow the Gaussian distribution inside the safety range. Thereby, the active voltage control problem can be transformed to the maximum likelihood estimation (MLE) over a mixture distribution over voltage with a constraint on the reactive power generation.
    
\section{Simulation Settings}
\label{sec:experimental_setups}
    \paragraph{Power Network Topology.} Two MV networks, IEEE 33-bus \cite{baran1989network} and 141-bus \cite{khodr2008maximum} are modified as systems under test.\footnote{The original topologies and parameters can be found in MATPOWER \cite{zimmerman2010matpower} description file on \url{https://github.com/MATPOWER/matpower/tree/master/data}.} To show the flexibility on the network with multi-voltage levels, we construct a 110kV-20kV-0.4kV (high-medium-low voltage) 322-bus network using benchmark topology from SimBench \cite{meinecke20simbench}. For each network, a main branch is firstly determined and the control regions are partitioned by the shortest path between the terminal bus and the coupling point on the main branch. Each region consists of 1-4 PVs dependent on various regional sizes. The specific network description and partition are shown in Appendix \ref{subsec:network_topology}. To give an overall picture of the task, we demonstrate the 33-bus network in Figure \ref{fig:33_bus_illustration}.
    \begin{figure*}[ht!]
        \centering
        \includegraphics[width=0.95\linewidth]{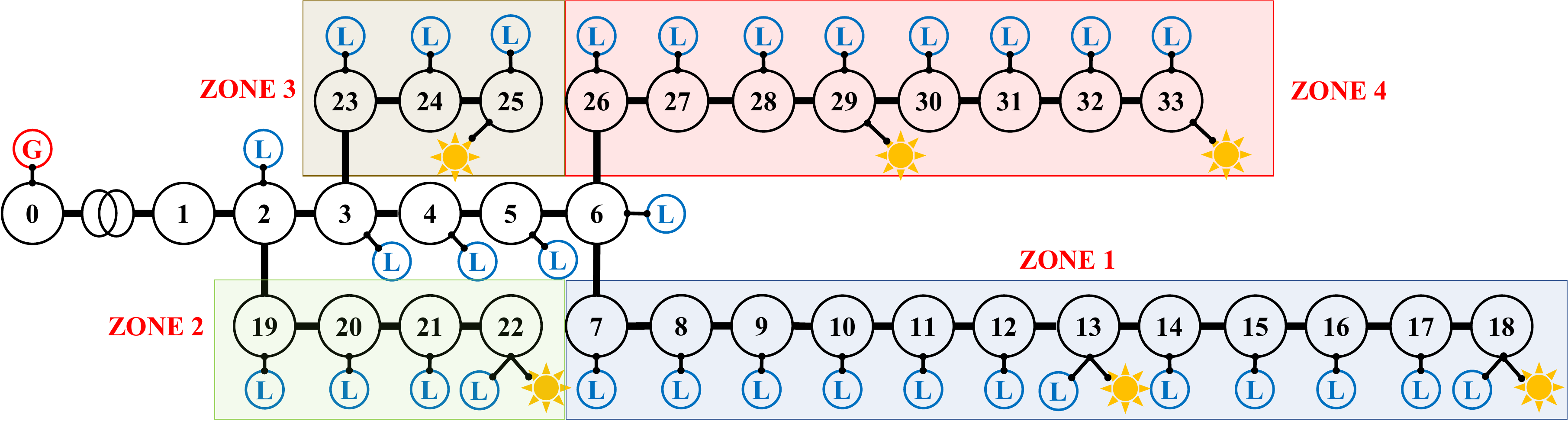}
        \caption{Illustration on 33-bus network. Each bus is indexed by a circle with a number. 4 control regions are partitioned by the smallest path from the terminal to the main branch (bus 1-6). We control the voltages on bus 2-33 whereas bus 0-1 represent the substation or main grid with the constant voltage and infinite active and reactive power capacity. \textbf{G} represents an external generator; small \textbf{L}s represent loads; and the sun emoji represents the location where a PV is installed.}
    \label{fig:33_bus_illustration}
    \end{figure*}
    
    \paragraph{Data Description.} The load profile of each network is modified based on the real-time Portuguese electricity consumption accounting for 232 consumers of 3 years.\footnote{\url{https://archive.ics.uci.edu/ml/datasets/ElectricityLoadDiagrams20112014}.} To highlight the difference between residential and industrial users, we randomly perturb $\pm 5\%$ on the default power factors defined in the case files and accordingly generate real-time reactive power consumption. The solar data is collected from Elia group,\footnote{\url{https://www.elia.be/en/grid-data/power-generation/} \\ \url{solar-pv-power-generation-data}.} i.e. a Belgium’s power network operator. The load and PV data are then interpolated with 3-min resolution that is consistent with the real-time control period in the grid. To distinguish among different solar radiation levels in various regions, the 3-year PV generations from 10 cites/regions are collected and PVs in the same control region possess the same generation profiles. We define the PV penetration rate ($PR$) as the ratio between rated PV generation and rated load consumption. In this thesis, we set $PR\in\{2.5,4,2.5\}$ as the default $PR$ for different topologies. We oversize each PV inverter by 20\% of its maximum active power generation to satisfy the IEEE grid code \cite{ieee2018ieee}. Besides, each PV inverter is considered to be able to generate reactive power in the STATCOM mode during night \cite{varma2018pv}. The median and 25\%-75\% quantile shadings of PV generation, and the mean and minima-maxima shading of the loads are illustrated in Figure \ref{fig:pv_load_profile}. The details are described in Appendix \ref{subsec:data_descriptions}.
    \begin{figure*}[ht!]
        \centering
        \begin{subfigure}[b]{0.32\textwidth}
        	\centering
    	    \includegraphics[width=\textwidth]{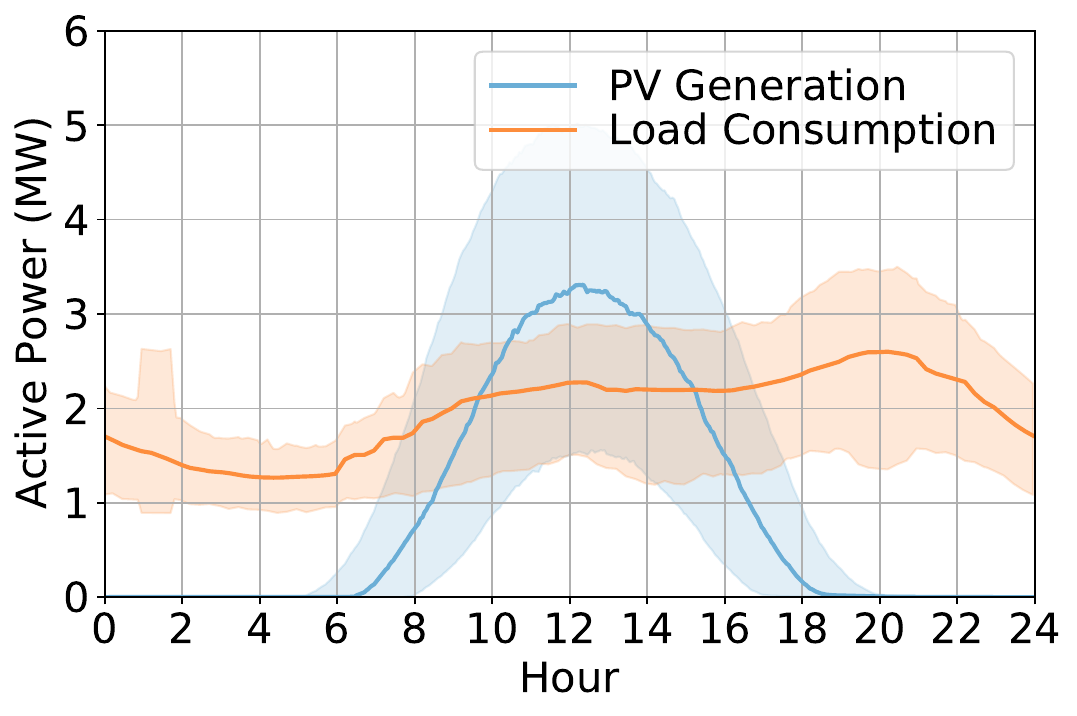}
    	    \caption{33-bus network.}
        \end{subfigure}
        ~
        \begin{subfigure}[b]{0.32\textwidth}
            \centering                
            \includegraphics[width=\textwidth]{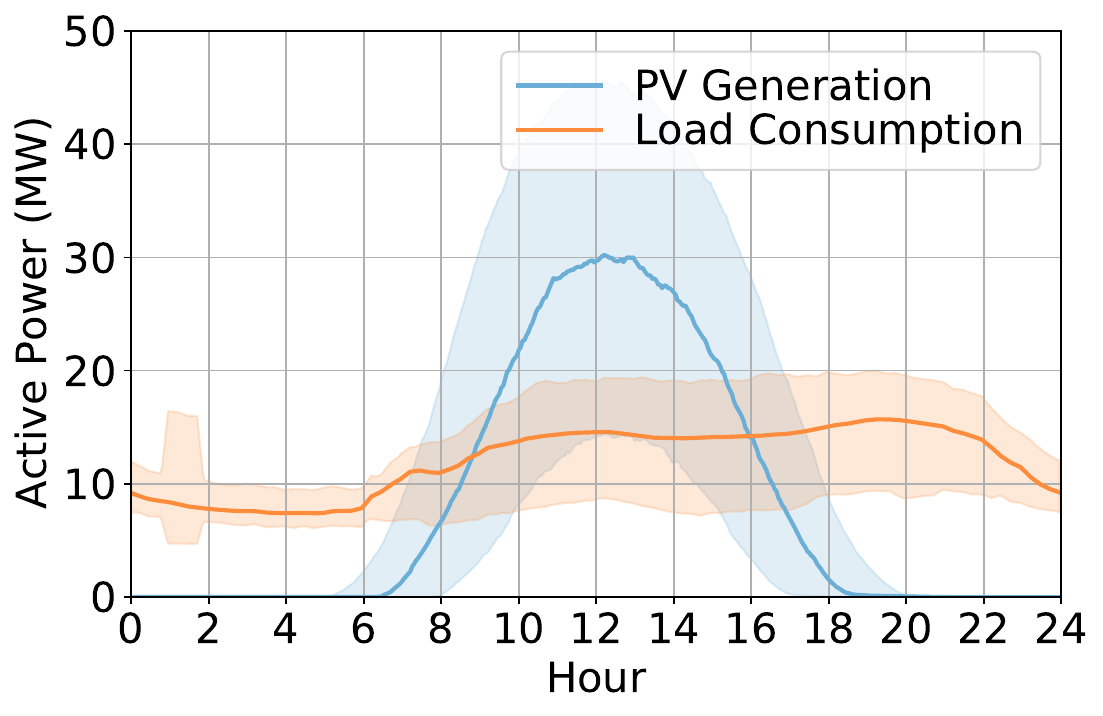}
            \caption{141-bus network.}
        \end{subfigure}
        ~
        \begin{subfigure}[b]{0.32\textwidth}
            \centering                
            \includegraphics[width=\textwidth]{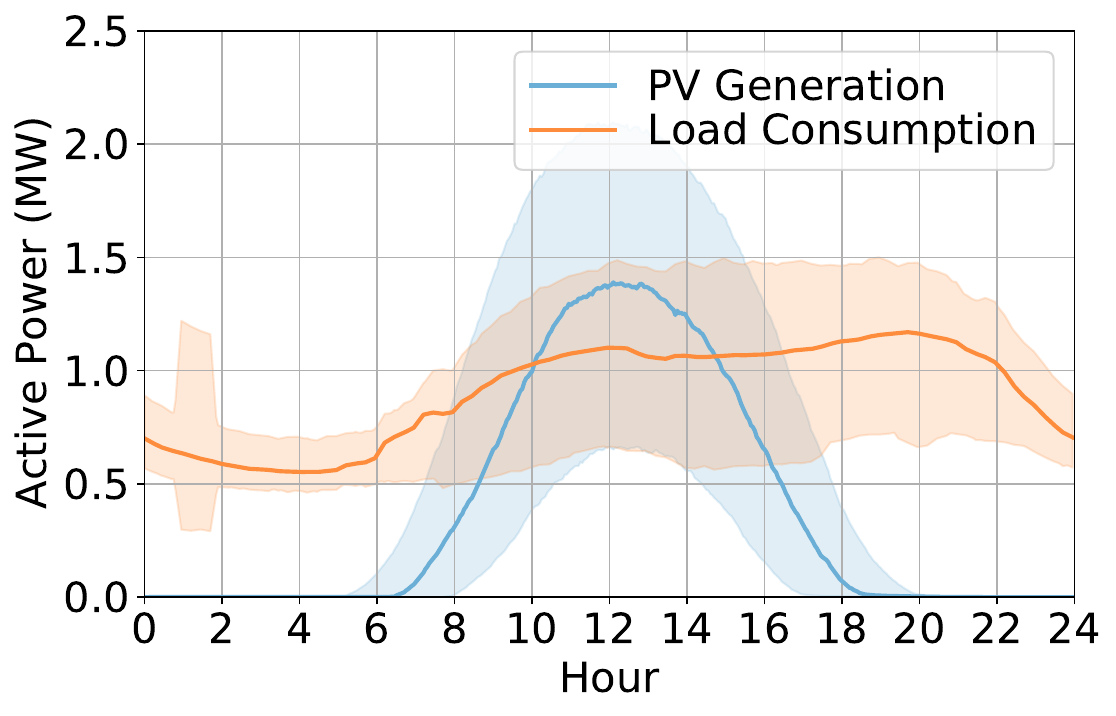}
            \caption{322-bus network.}
        \end{subfigure}
        \caption{Active PV generations and load consumption.}
        \label{fig:pv_load_profile}
    \end{figure*}
    
    \paragraph{MARL Simulation Setting.} We now describe the simulation setting by the view of MARL. In the 33-bus network, there are 4 regions with 6 agents. In the 141-bus network, there are 9 regions with 22 agents. In the 322-bus network, there are 22 regions with 38 agents. The discount factor $\gamma$ is set to $0.99$. $\alpha$ in Eq.~\ref{eq:reward_function} is set to $0.1$. To guarantee the safety of distribution networks, we manually set the range of actions for each scenario, with $[-0.8, 0.8]$ for 33-bus network, $[-0.6, 0.6]$ for 141-bus network, and $[-0.7, 0.7]$ for 322-bus network. During training, we randomly sample the initial state for an episode and each episode lasts for 240 time steps (i.e. a half day). Every simulation is run with 5 random seeds and the test results during training are given by the median and the 25\%-75\% quartile shading. Each test is conducted every 20 episodes with 10 randomly selected episodes for evaluation.
    
    \paragraph{Evaluation Metrics.} In simulation, we use the following two metrics to evaluate the performance of algorithms. We aim to find algorithms and reward functions with high \textit{control rate} (CR) and low \textit{power loss} (PL). The details of the metrics are shown as follows:
    \begin{itemize}[leftmargin=*]
    \item \textit{Control rate}: It calculates the ratio of the timesteps where all buses' voltages are under control to the total time steps during each episode.
    \item \textit{Power loss}: It calculates the average power loss (i.e., the total power loss of the power network divided by the number of buses) per timestep during each episode.
    \end{itemize}

    \paragraph{MARL Algorithm Settings.} We evaluate the performance of SQDDPG and SMFPPO, compared with state-of-the-art MARL algorithms, e.g. IDDPG \cite{Wang_2020}, MADDPG \cite{LoweWTHAM17}, COMA \cite{foerster2018counterfactual}, IPPO \cite{de2020independent}, MAPPO \cite{yu2021surprising}, and MATD3 \cite{ackermann2019reducing} on this real-world problem with continuous actions. Since COMA can only work for discrete actions, we conduct some modification to make it work for continuous actions (see Appendix \ref{subsec:continuous_coma} for more details). The details of algorithmic settings are shown in Appendix \ref{subsec:algo_settings_and_training_details}.
    
\section{Experimental Results}
\label{sec:experimental_results}
    \subsection{Main Results}
    \label{subsec:main_results}
        \paragraph{Diverse Algorithm Performance under Distinct Reward Functions.} To clearly show the relationship among all baseline algorithms and reward functions, we plot 3D surfaces over CR and PL with respect to algorithm types and reward types (i.e. distinct voltage barrier functions) in Figure \ref{fig:3d_reward_comparison}. It is apparent that the performance of algorithms are highly correlated with the reward types. In other words, the same algorithm could perform diversely even trained by different reward functions with the same objective but different shapes. This motivates us to find a comparatively good shape of reward function (i.e. mainly dependent on voltage barrier functions) to each scenario (averaging the performance of algorithms) to evaluate SQDDPG and SMFPPO.
        \begin{figure*}[ht!]
            \centering
            \begin{subfigure}[b]{0.325\linewidth}
                \centering
                \includegraphics[width=\textwidth]{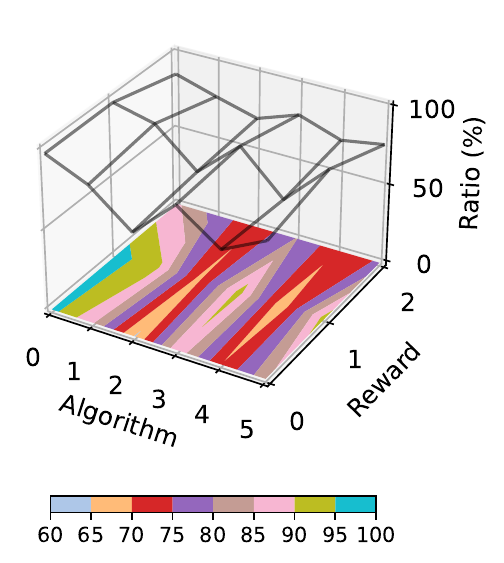}
                \caption{CR-33.}
            \end{subfigure}
            \hfill
            \begin{subfigure}[b]{0.325\linewidth}
                \centering
                \includegraphics[width=\textwidth]{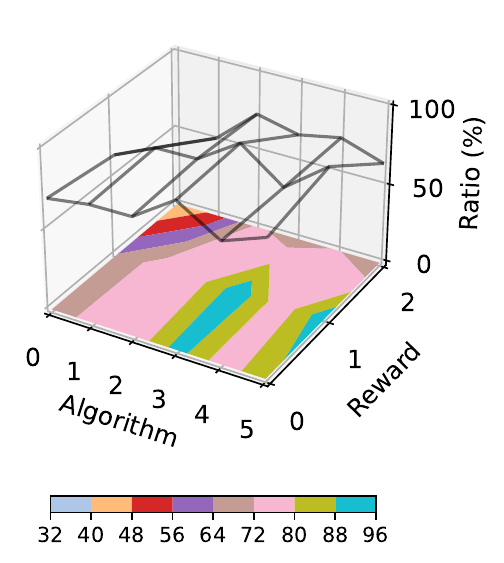}
                \caption{CR-141.}
            \end{subfigure}
            \hfill
            \begin{subfigure}[b]{0.325\linewidth}
                \centering
                \includegraphics[width=\textwidth]{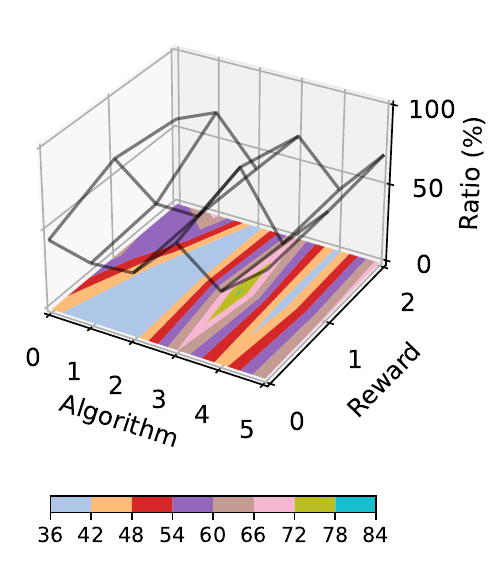}
                \caption{CR-322.}
            \end{subfigure}
            \hfill
            \begin{subfigure}[b]{0.325\linewidth}
                \centering                \includegraphics[width=\textwidth]{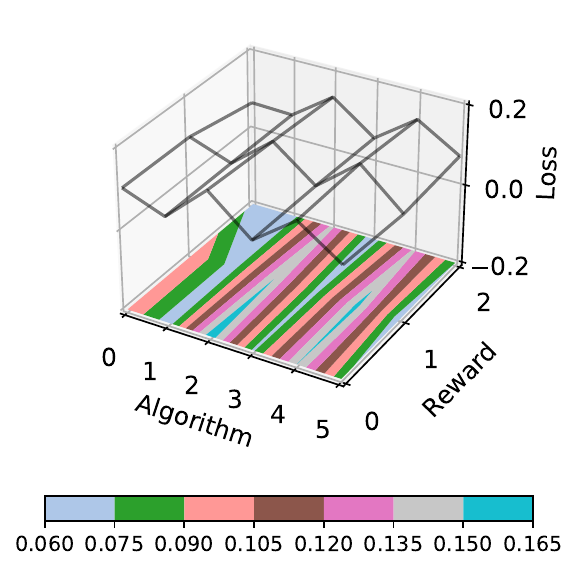}
                \caption{PL-33.}
            \end{subfigure}
            \hfill
            \begin{subfigure}[b]{0.325\linewidth}
                \centering
                \includegraphics[width=\textwidth]{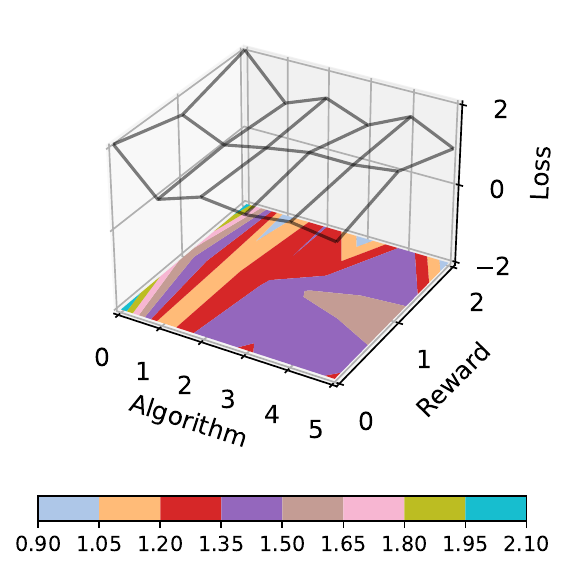}
                \caption{PL-141.}
            \end{subfigure}
            \hfill
            \begin{subfigure}[b]{0.325\linewidth}
                \centering
                \includegraphics[width=\textwidth]{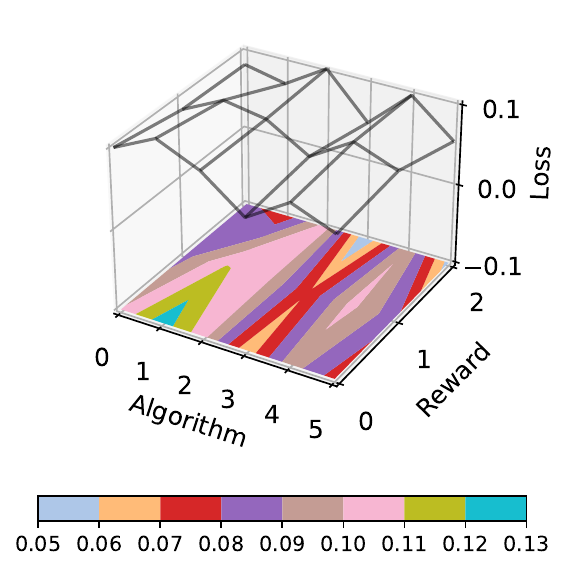}
                \caption{PL-322.}
            \end{subfigure}
            \caption{Median performances of overall algorithms trained with various rewards consist of distinct voltage barrier functions shown in 3D surfaces. The sub-caption indicates [metric]-[scenario].}
        \label{fig:3d_reward_comparison}
        \end{figure*}
        
        \paragraph{Voltage Barrier Function Comparisons.} To investigate the effect of different voltage barrier functions and select the best choice for each scenario, we show the median performance of overall six baseline MARL algorithms in Figure \ref{fig:reward_comparison}. It can be observed that the Bowl-shape can preserve the high CR, while maintain the low PL on the 33-bus and the 141-bus networks. Although L1-shape can achieve the best CR on the 33-bus and the 141-bus networks, its PL on the 141-bus network is the highest. L2-shape performs the worst on the 33-bus and the 141-bus networks, but performs the best on the 322-bus network with the highest CR and the lowest PL. The reason could be that its slighter gradients is more suitable for adapting to many agents. In summary, the above results show that the L1-shape, the Bowl-shape and the L2-shape are the best choices for the 33-bus network, the 141-bus network and the 322-bus network, respectively. As a result, we use the above setting of voltage barrier functions to evaluate SQDDPG and SMFPPO.
        \begin{figure*}[ht!]
            \centering
            \begin{subfigure}[b]{0.325\linewidth}
                \centering
                \includegraphics[width=\textwidth]{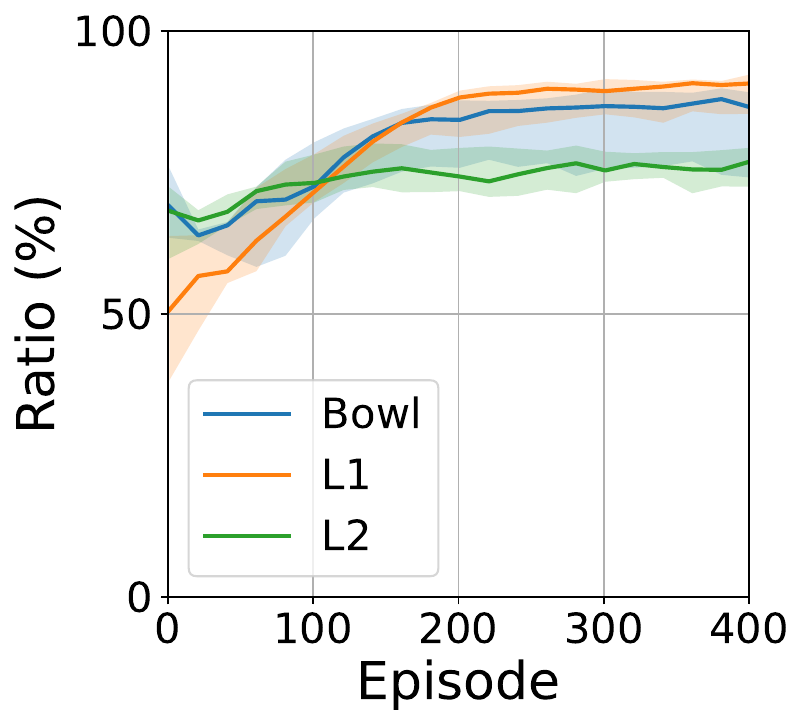}
                \caption{CR-33.}
            \end{subfigure}
            \hfill
            \begin{subfigure}[b]{0.325\linewidth}
                \centering
                \includegraphics[width=\textwidth]{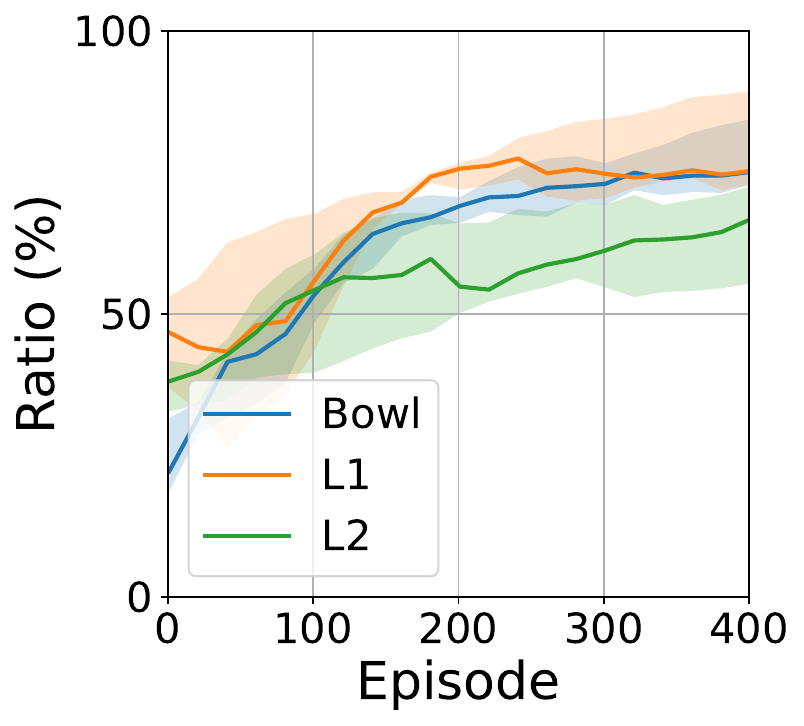}
                \caption{CR-141.}
            \end{subfigure}
            \hfill
            \begin{subfigure}[b]{0.325\linewidth}
                \centering
                \includegraphics[width=\textwidth]{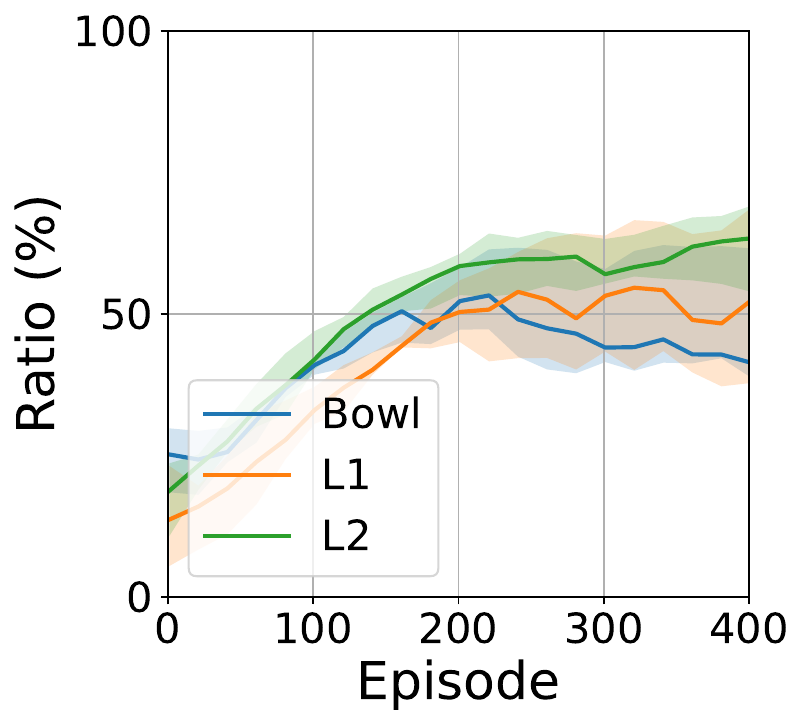}
                \caption{CR-322.}
            \end{subfigure}
            \hfill
            \begin{subfigure}[b]{0.325\linewidth}
                \centering                \includegraphics[width=\textwidth]{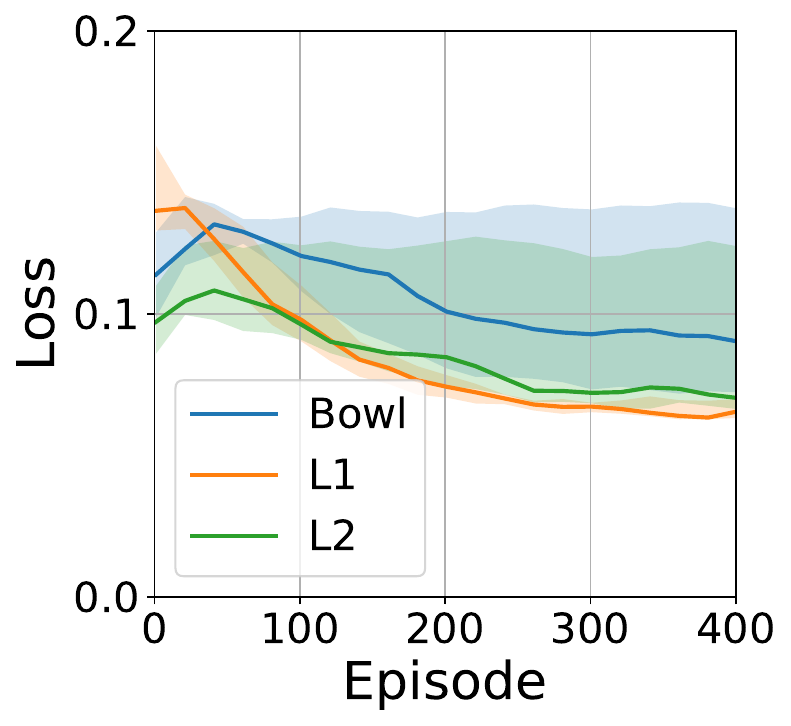}
                \caption{PL-33.}
            \end{subfigure}
            \hfill
            \begin{subfigure}[b]{0.325\linewidth}
                \centering
                \includegraphics[width=\textwidth]{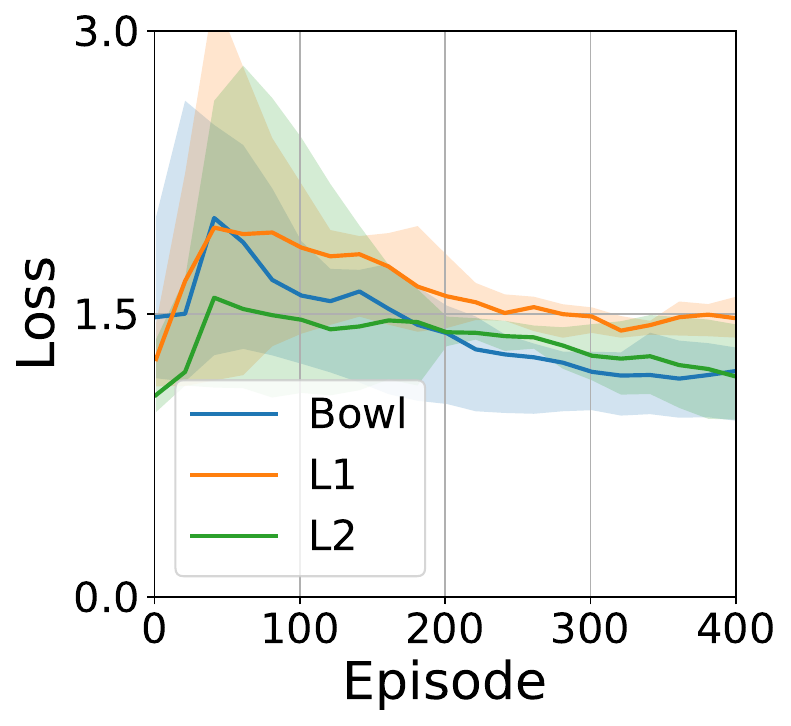}
                \caption{PL-141.}
            \end{subfigure}
            \hfill
            \begin{subfigure}[b]{0.325\linewidth}
                \centering
                \includegraphics[width=\textwidth]{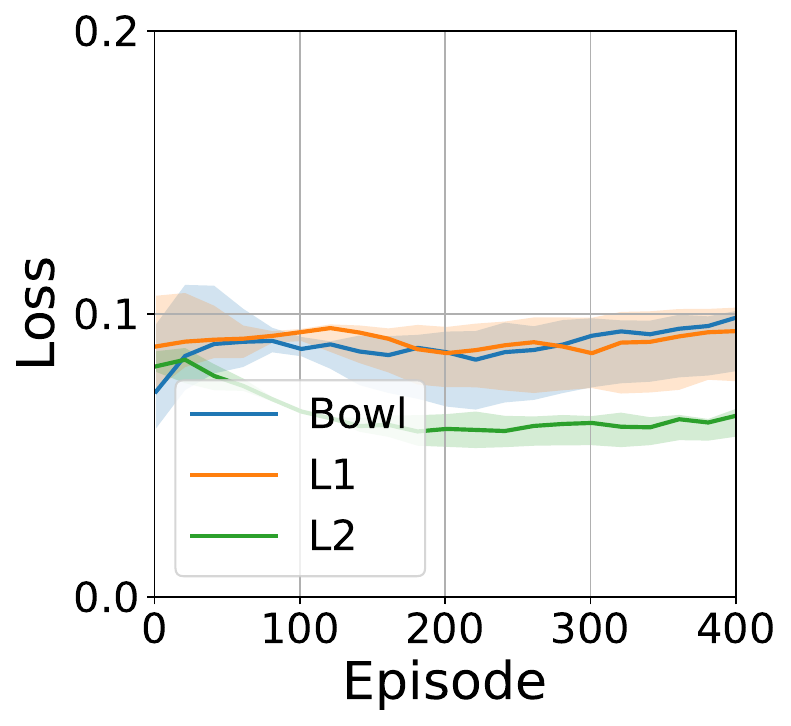}
                \caption{PL-322.}
            \end{subfigure}
            \caption{Median performance of overall algorithms with different voltage barrier functions. The sub-caption indicates [metric]-[scenario].}
        \label{fig:reward_comparison}
        \end{figure*}
        
        \paragraph{Algorithm Performance.} Based on the selection of the voltage barrier function for each scenario, we now show the main results of all algorithms on all scenarios in Figure \ref{fig:alg_comparison}. MADDPG and MATD3 generally perform well on all scenarios. COMA performs well over CR on 33-bus networks and the performance falls on the large scale scenarios, but its PL is high. This reveals the limitation of COMA on the scaling to many agents. Although MAPPO and IPPO perform well in games \cite{de2020independent,yu2021surprising}, their performance on the real-world power network problems are poor. Compared with IPPO and MAPPO, the main difference of SMFPPO is that the evaluation of return is based on the Shapley value mechanism. The superior performance of SMFPPO verifies the effectiveness of the main contribution of this thesis. IDDPG generally performs at the middle place, which may be due to non-stationary dynamics led by multi-agents \cite{LoweWTHAM17}. SQDDPG and SMFPPO generally perform well.
        \begin{figure*}[ht!]
            \centering
            \begin{subfigure}[b]{0.325\linewidth}
                \centering                \includegraphics[width=\textwidth]{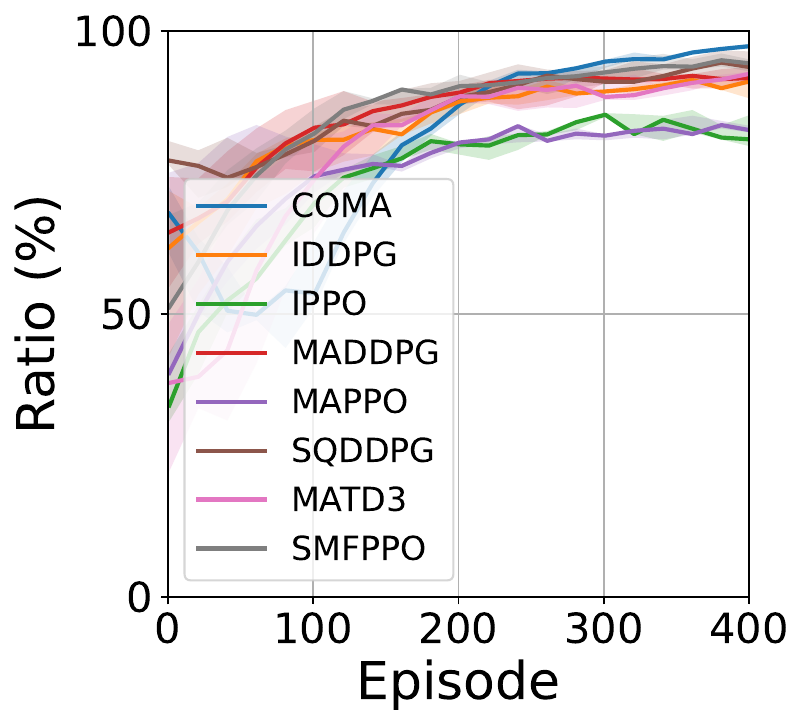}
                \caption{CR-L1-33.}
            \end{subfigure}
            \hfill
            \begin{subfigure}[b]{0.325\linewidth}
                \centering
                \includegraphics[width=\textwidth]{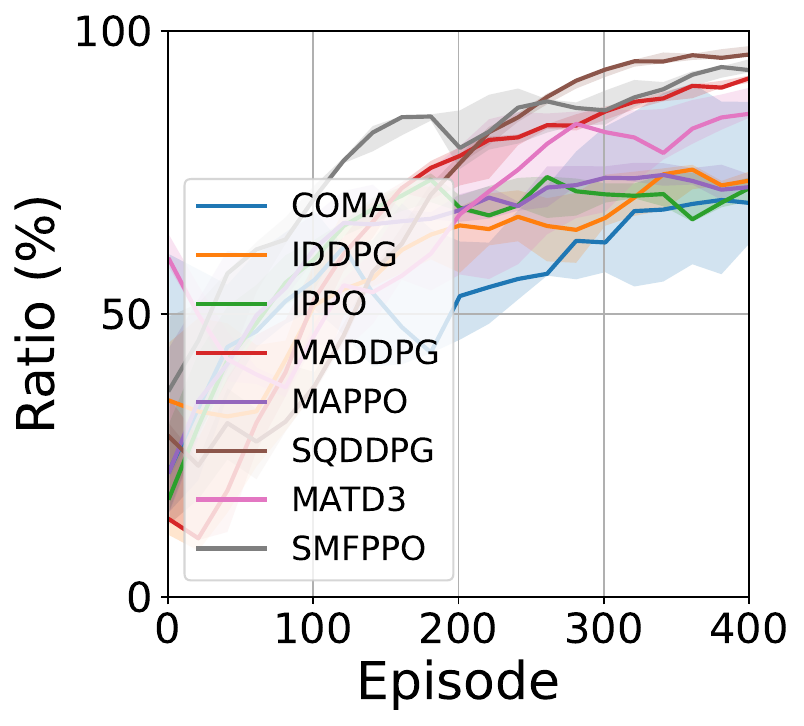}
                \caption{CR-Bowl-141.}
            \end{subfigure}
            \hfill
            \begin{subfigure}[b]{0.325\linewidth}
                \centering
                \includegraphics[width=\textwidth]{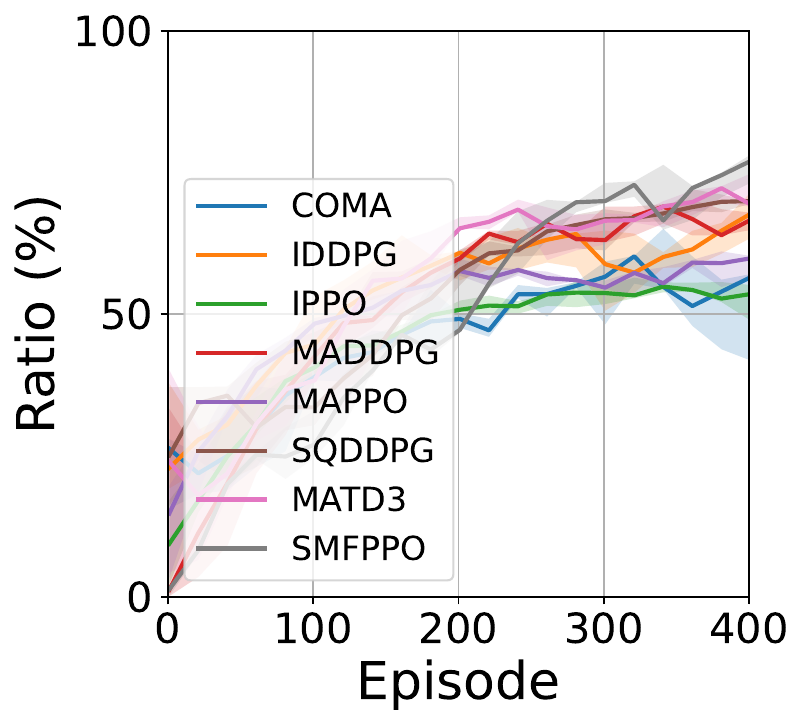}
                \caption{CR-L2-322.}
            \end{subfigure}
            \hfill
            \begin{subfigure}[b]{0.325\linewidth}
                \centering
                \includegraphics[width=\textwidth]{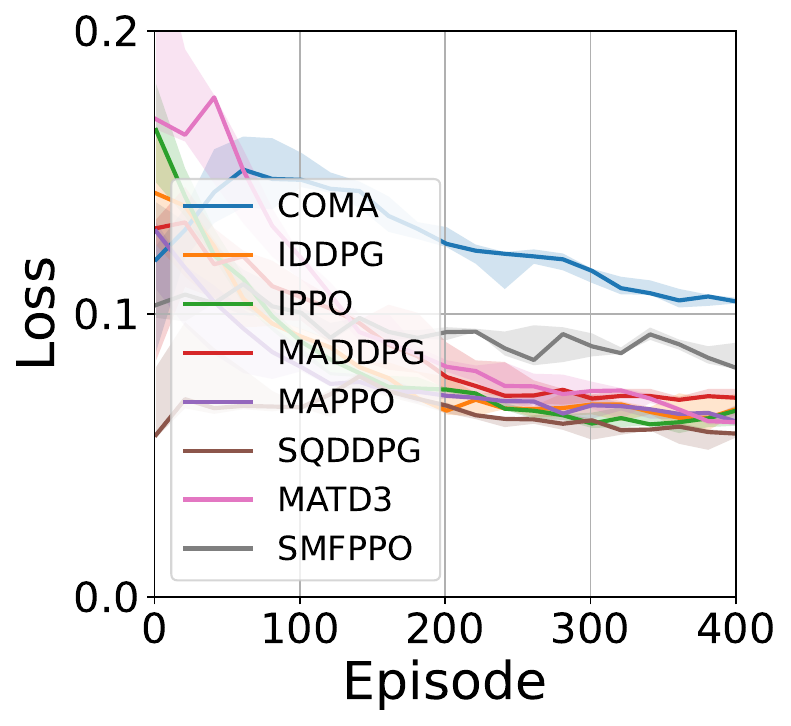}
                \caption{PL-L1-33.}
            \end{subfigure}
            \hfill
            \begin{subfigure}[b]{0.325\linewidth}
                \centering
                \includegraphics[width=\textwidth]{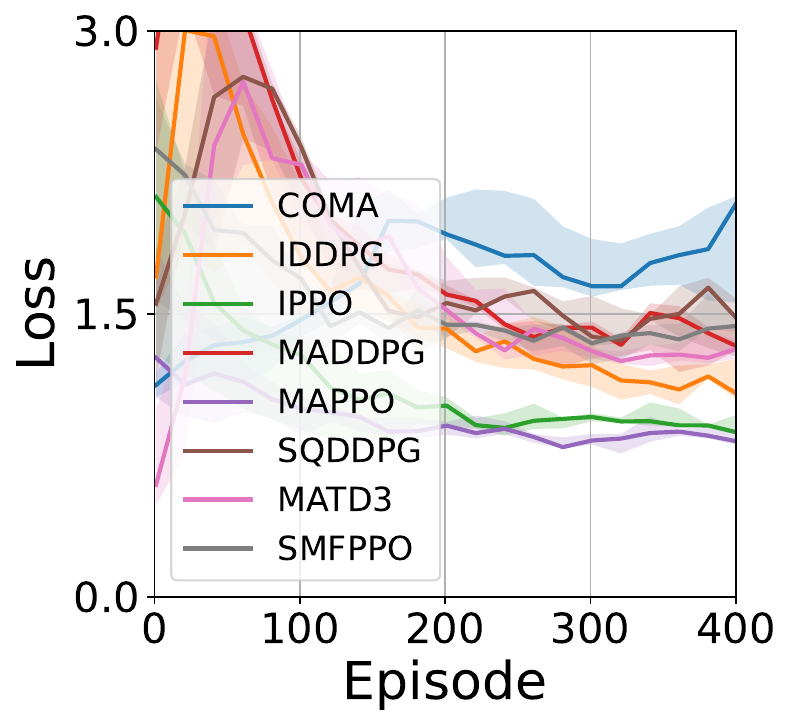}
                \caption{PL-Bowl-141.}
            \end{subfigure}
            \hfill
            \begin{subfigure}[b]{0.325\linewidth}
                \centering
                \includegraphics[width=\textwidth]{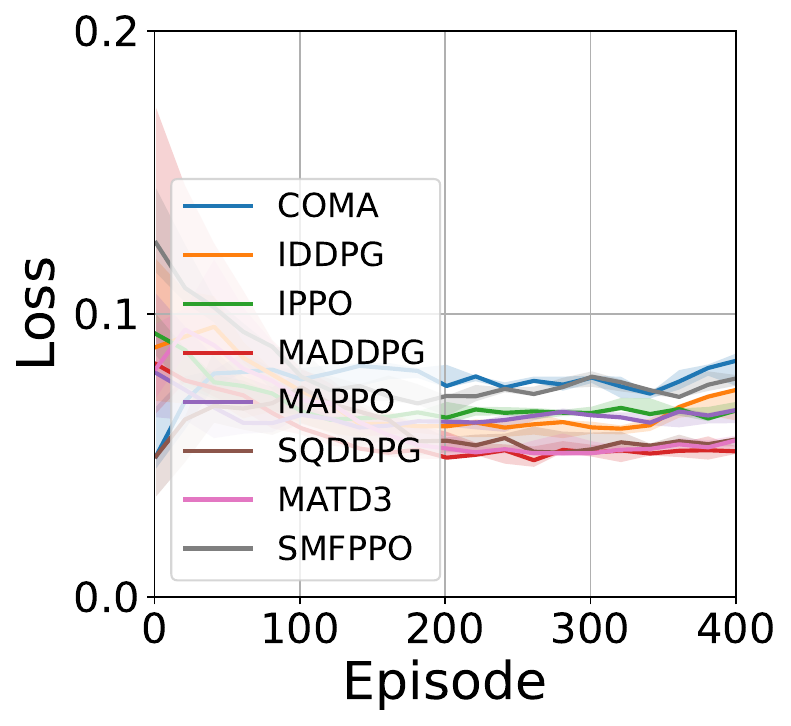}
                \caption{PL-L2-322.}
            \end{subfigure}
            \caption{Median CR and PL of algorithms with different voltage barrier functions. The sub-caption indicates [metric]-[barrier]-[scenario].}
        \label{fig:alg_comparison}
        \end{figure*}
        
    \subsection{Comparison between MARL and Traditional Control Methods}
    \label{subsec:comparison_between_marl_traditional_control_methods}
        To compare SMFPPO and SQDDPG with the traditional control methods, we conduct a series of tests on various network topologies (i.e. the 33-bus, the 141-bus, and the 322-bus networks). The traditional control method candidates are OPF \cite{gan2013optimal} and droop control \cite{jahangiri2013distributed}. For conciseness, we only demonstrate the voltage and the power of a typical bus with a PV installed (i.e. one of the most difficult buses to control) during a day (i.e. 480 consecutive timesteps) in summer and winter respectively. Given the current observation, either SMFPPO or SQDDPG performs an action once. In contrast, both traditional methods perform with an optimization procedure to asymptotically reach the stable and safe voltages. Besides, the droop gain of the droop control needs to be tuned and the OPF needs the global observation and network topology. Therefore, SMFPPO and SQDDPG are \textit{more economic and cost effective on running algorithms during execution}. This is also one of the motivations why we aim at investigating the potential of applying the technique of MARL to the real-world physical systems like power systems.
        
        \paragraph{One Bus in the 33-Bus Network.} From Figure \ref{fig:case_study_33_smfppo} and \ref{fig:case_study_33_sqddpg}, it can be seen that all methods control the voltage within the safety range in both summer and winter. Both SMFPPO and SQDDPG execute less power loss than the droop control but higher than the OPF. This phenomenon is possibly due to the fact that droop control is a fully distributed algorithm which cannot explicitly reduce the power loss and OPF is a centralised algorithm with the known system model, while SMFPPO and SQDDPG lie between these 2 types of algorithms. It is worth noting that the actions executed from SMFPPO and SQDDPG are similar to that of droop control. In comparison with the smoothness of the actions executed from SQDDPG, the actions executed from SMFPPO perform in a zigzag phenomenon.
        \begin{figure*}[ht!]
            \centering
            \begin{subfigure}[b]{0.30\textwidth}
            	\centering
        	    \includegraphics[width=\textwidth]{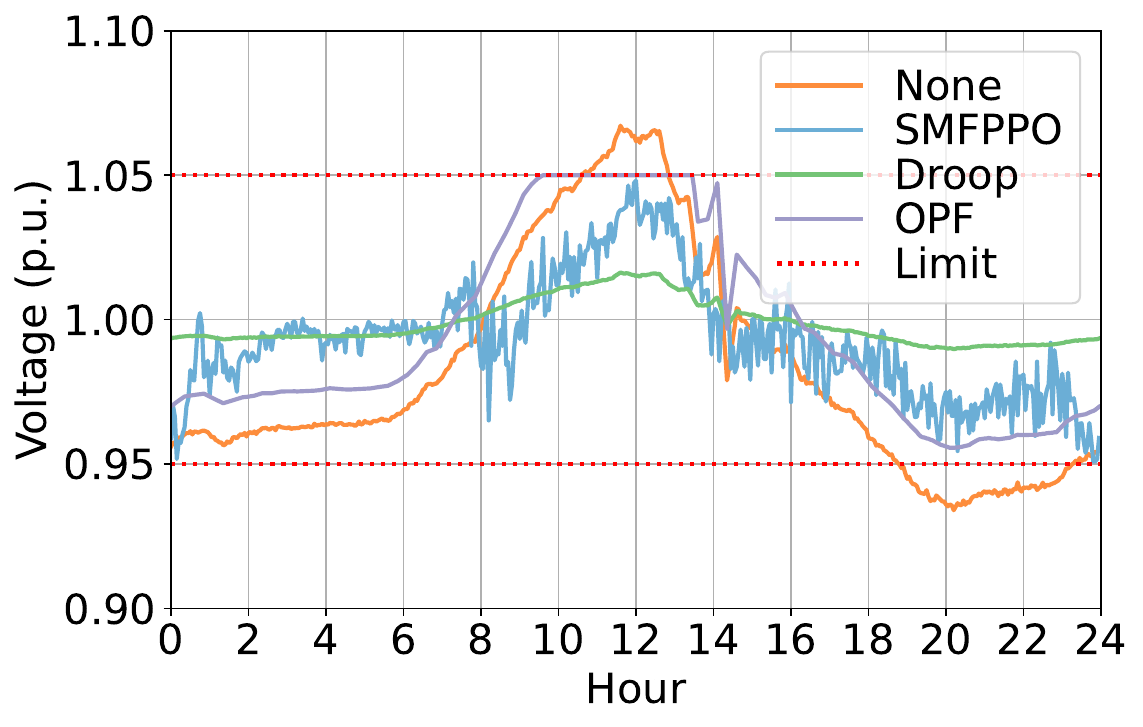}
            \end{subfigure}
            \quad
            \begin{subfigure}[b]{0.30\textwidth}
                \centering                
                \includegraphics[width=\textwidth]{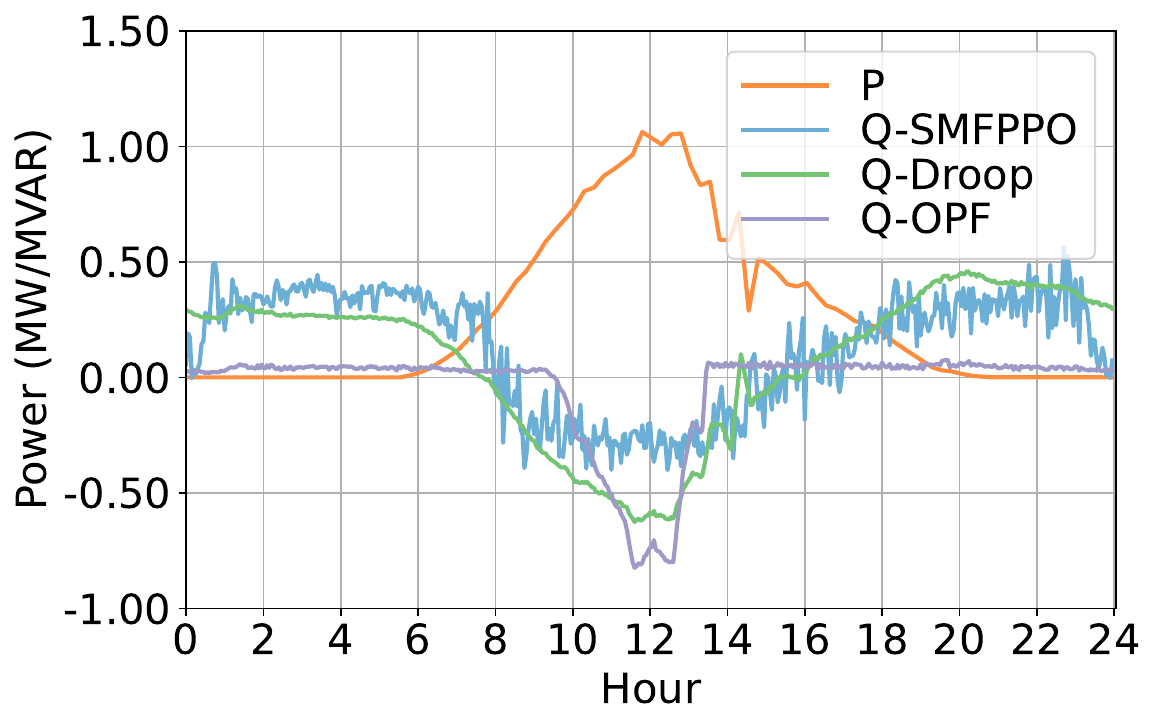}
            \end{subfigure}
            \quad
            \begin{subfigure}[b]{0.30\textwidth}
                \centering                
                \includegraphics[width=\textwidth]{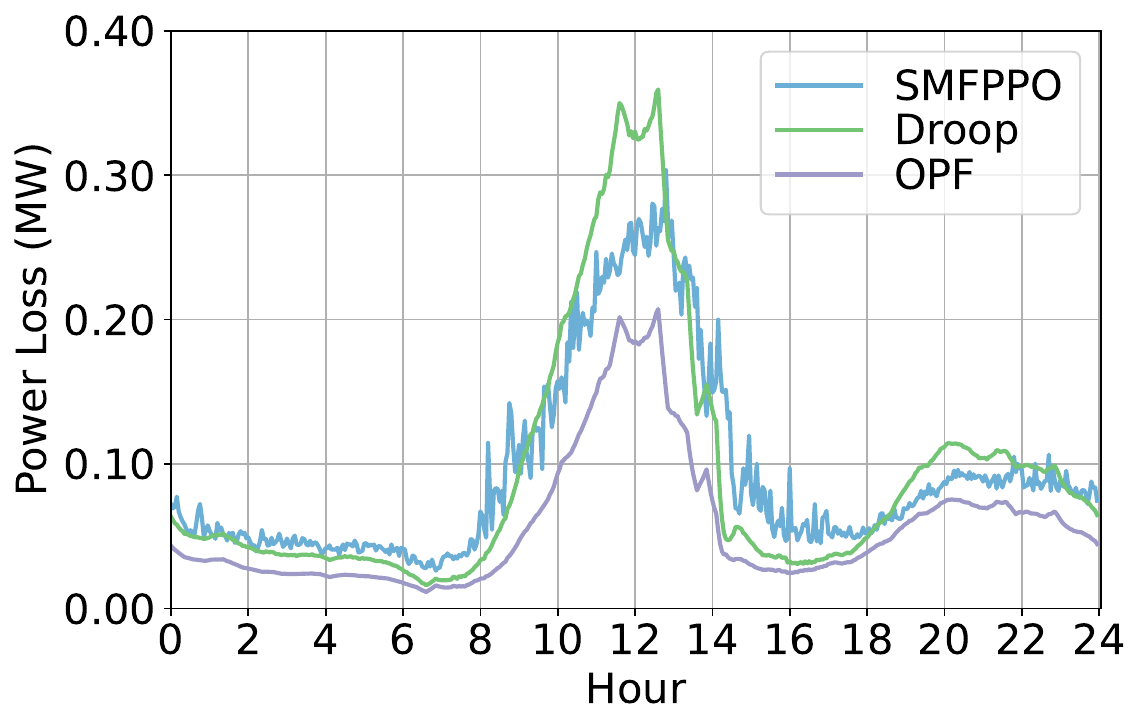}
            \end{subfigure}
            \quad
            \begin{subfigure}[b]{0.30\textwidth}
            	\centering
        	    \includegraphics[width=\textwidth]{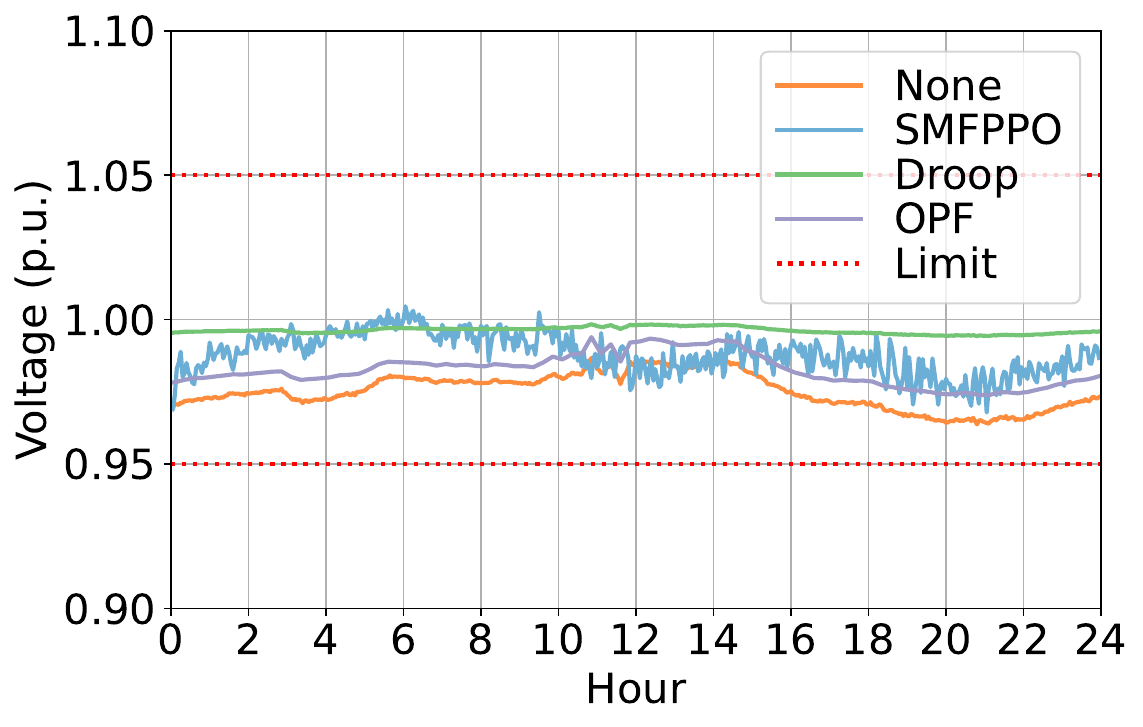}
        	    \caption{Voltage.}
            \end{subfigure}
            \quad
            \begin{subfigure}[b]{0.30\textwidth}
                \centering                
                \includegraphics[width=\textwidth]{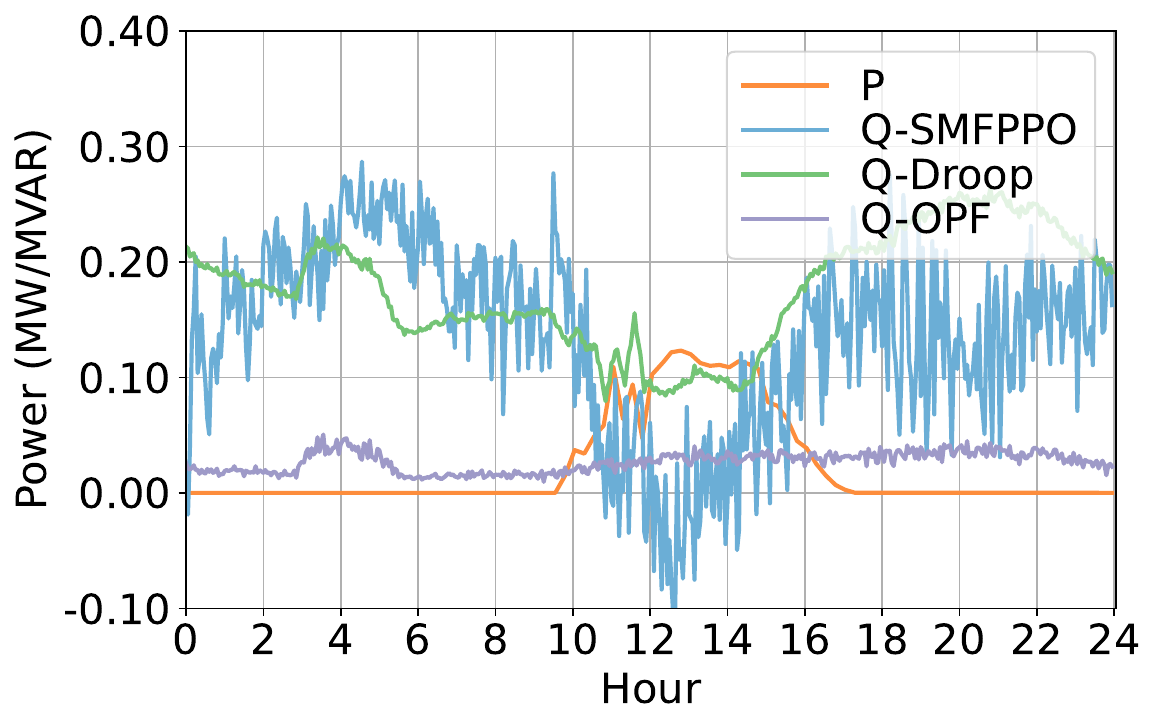}
                \caption{Power.}
            \end{subfigure}
            \quad
            \begin{subfigure}[b]{0.30\textwidth}
                \centering                
                \includegraphics[width=\textwidth]{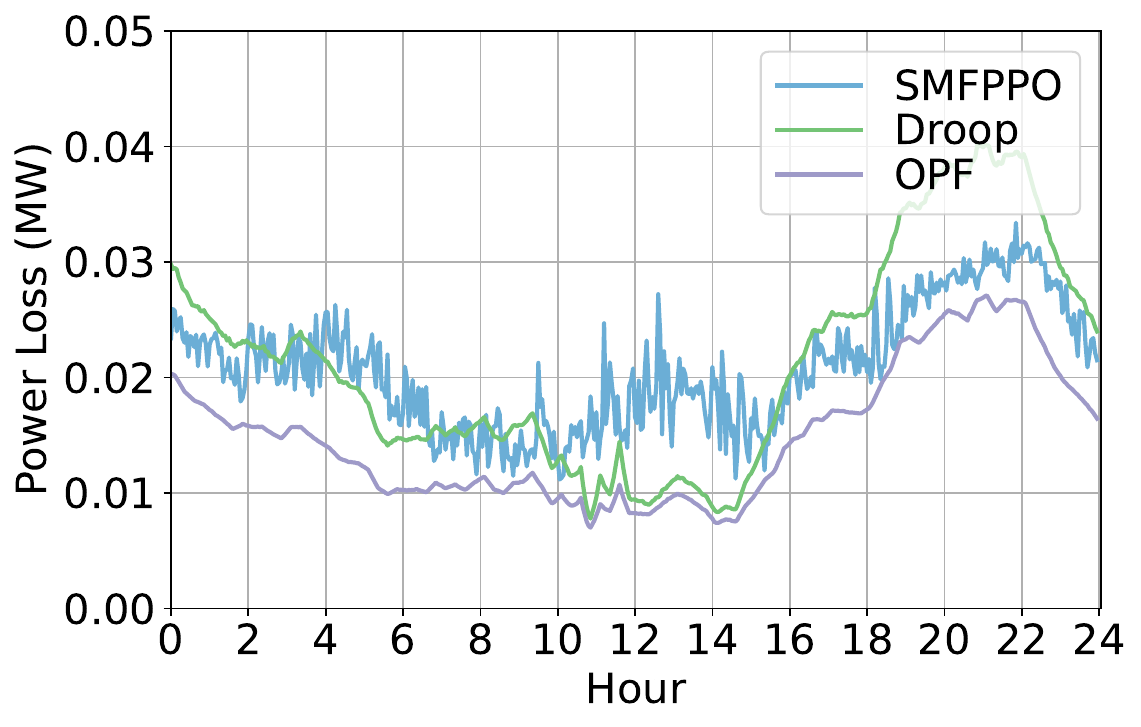}
                \caption{Power loss.}
            \end{subfigure}
            \caption{Comparing SMFPPO with traditional control methods on bus 18 during a day in the 33-bus network. 1st row: results of a summer day. 2nd row: results of a winter day. None and limit in (a) represent the voltage with no control and the safety voltage range respectively. P and Q in (b) indicate the PV active power and the reactive power by various methods.}
        \label{fig:case_study_33_smfppo}
        \end{figure*}
        
        \begin{figure*}[ht!]
            \centering
            \begin{subfigure}[b]{0.30\textwidth}
            	\centering
        	    \includegraphics[width=\textwidth]{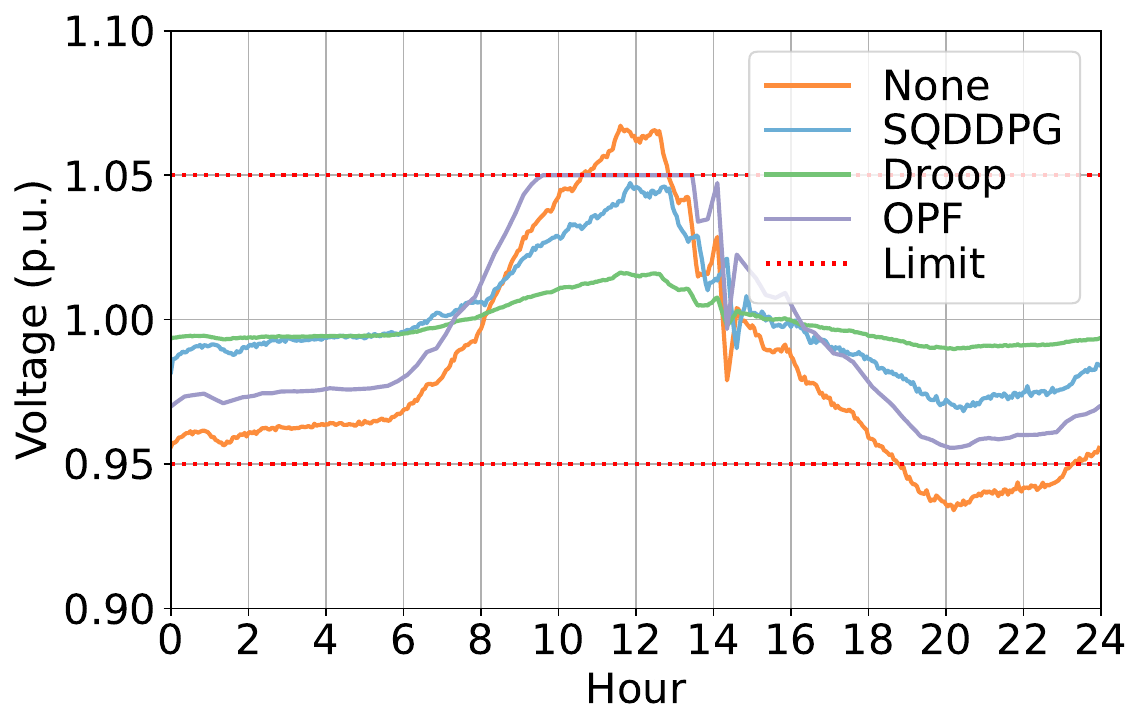}
            \end{subfigure}
            \quad
            \begin{subfigure}[b]{0.30\textwidth}
                \centering                
                \includegraphics[width=\textwidth]{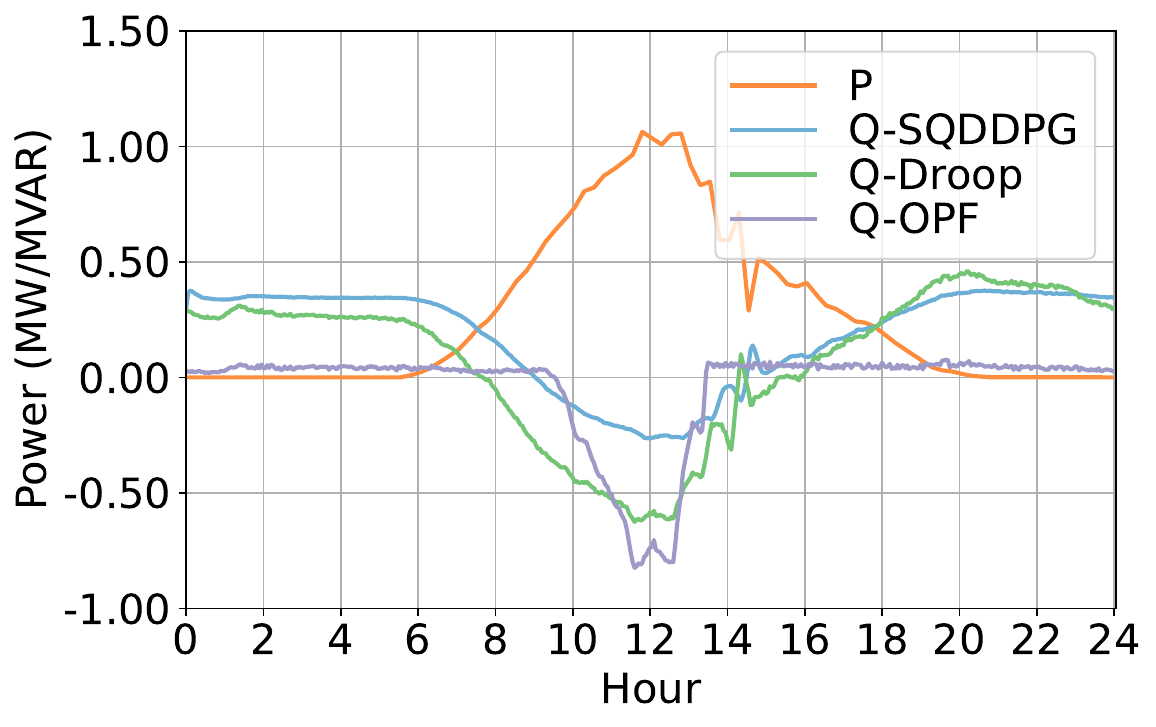}
            \end{subfigure}
            \quad
            \begin{subfigure}[b]{0.30\textwidth}
                \centering                
                \includegraphics[width=\textwidth]{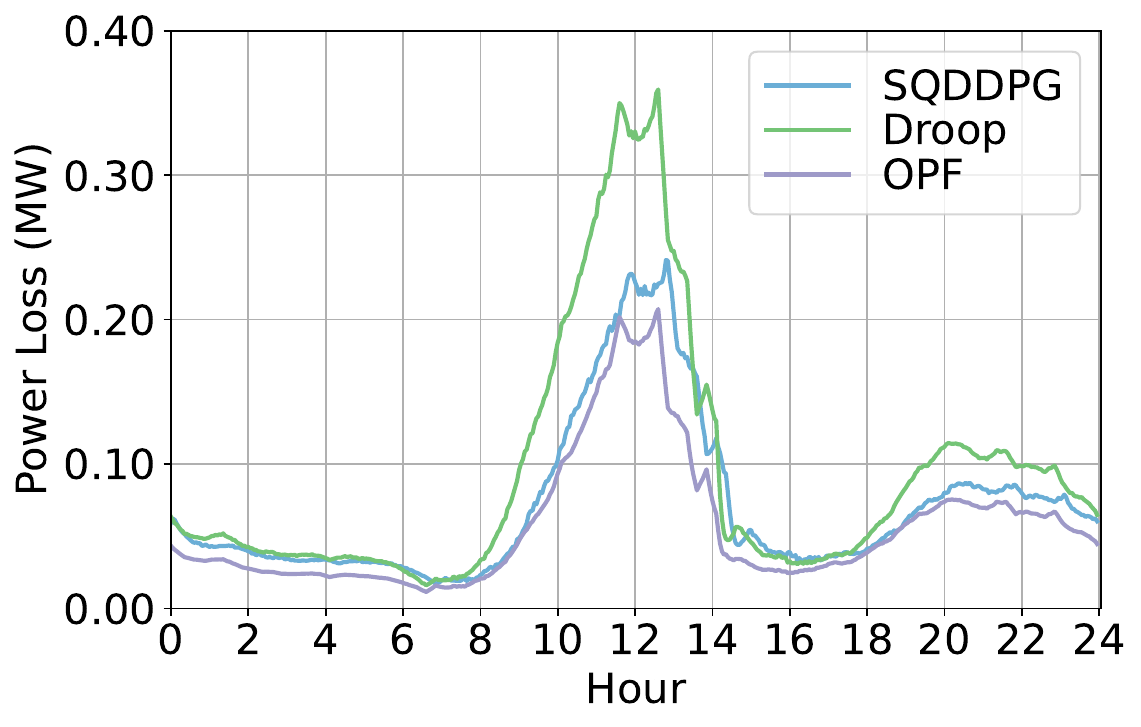}
            \end{subfigure}
            \quad
            \begin{subfigure}[b]{0.30\textwidth}
            	\centering
        	    \includegraphics[width=\textwidth]{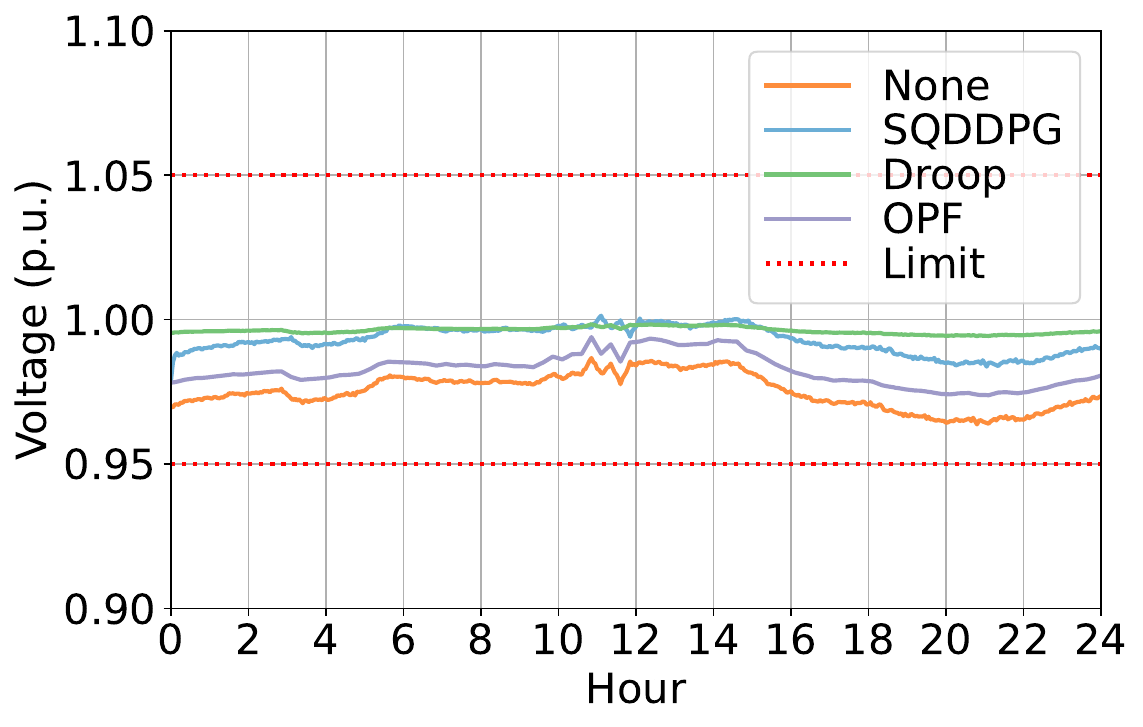}
        	    \caption{Voltage.}
            \end{subfigure}
            \quad
            \begin{subfigure}[b]{0.30\textwidth}
                \centering                
                \includegraphics[width=\textwidth]{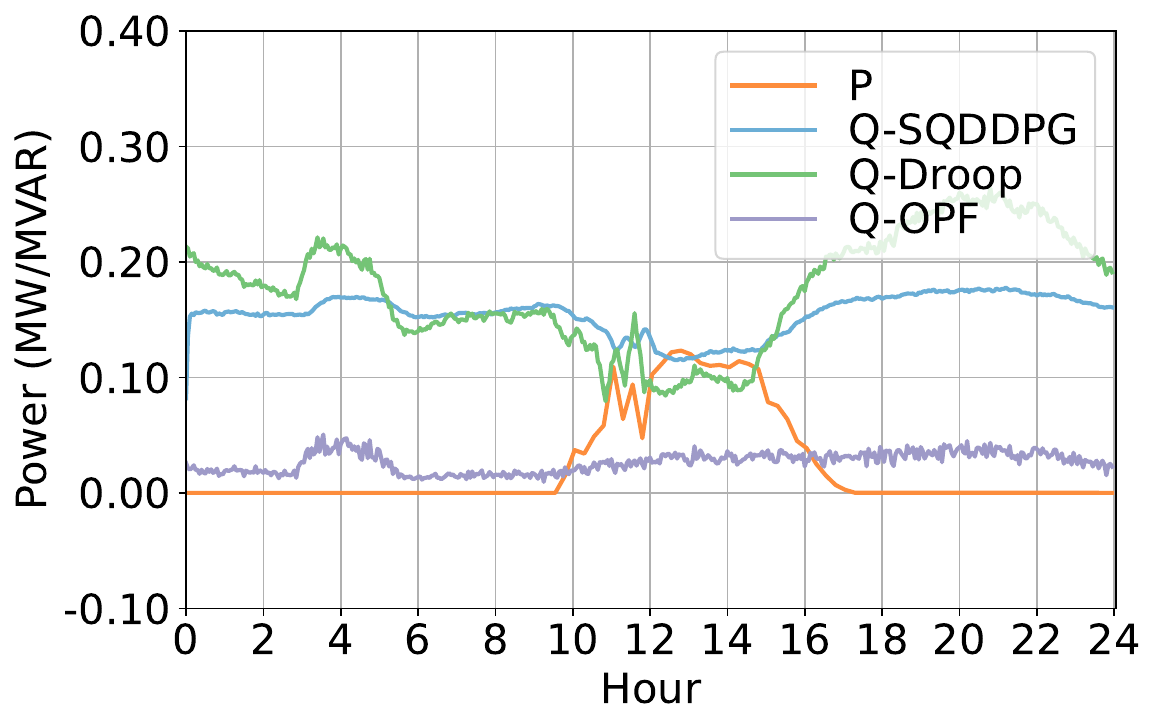}
                \caption{Power.}
            \end{subfigure}
            \quad
            \begin{subfigure}[b]{0.30\textwidth}
                \centering                
                \includegraphics[width=\textwidth]{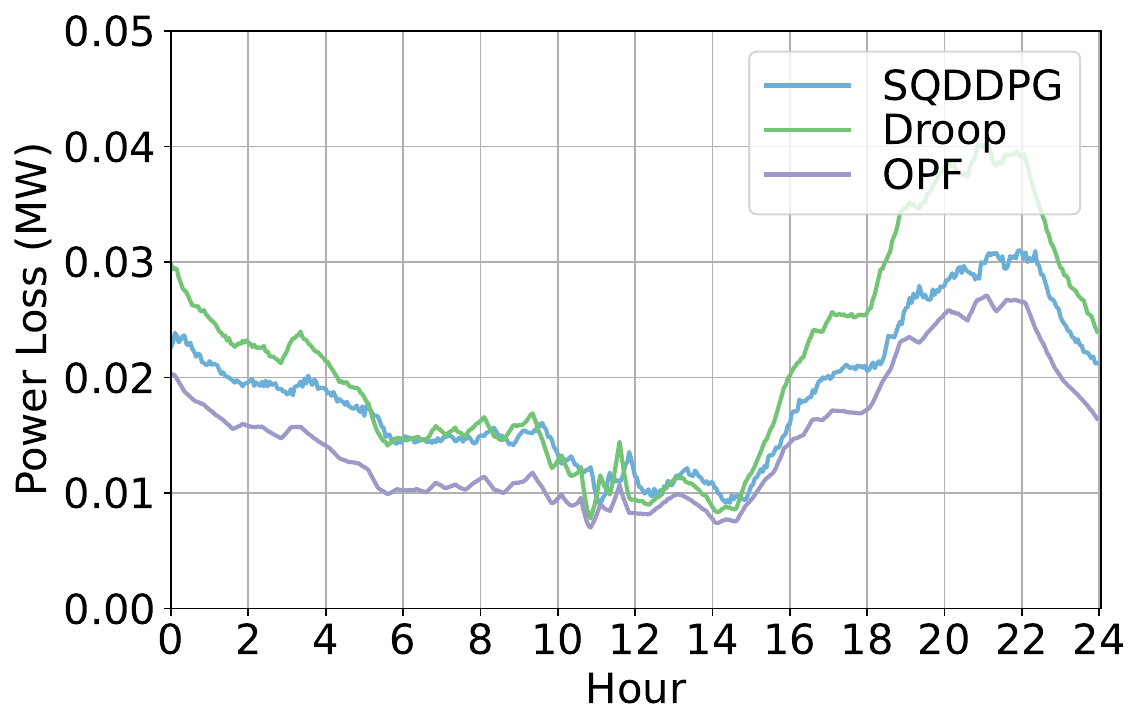}
                \caption{Power loss.}
            \end{subfigure}
            \caption{Comparing SQDDPG with traditional control methods on bus 18 during a day in the 33-bus network. 1st row: the results of a summer day. 2nd row: the results of a winter day. None and limit in (a) represent the voltage with no control and the safety voltage range respectively. P and Q in (b) indicate the PV active power and the reactive power by various methods.}
        \label{fig:case_study_33_sqddpg}
        \end{figure*}
        
        \paragraph{One Bus in the 141-Bus Network.} Figure \ref{fig:case_study_141_smfppo} and \ref{fig:case_study_141_sqddpg} show the results of a typical bus in the 141-bus network. In summer, all methods can control the voltage within the safety range. Nonetheless, the power loss of SMFPPO and SQDDPG is far lower than the droop control. In winter, all methods can still control the voltage within the safety range, however, SMFPPO and SQDDPG behave differently compared with the traditional control methods in generating the reactive power. For example, SMFPPO absorbs the reactive power in winter, which is different from the strategies of the droop control and the OPF, still yielding a low power loss.
        \begin{figure*}[ht!]
            \centering
            \begin{subfigure}[b]{0.30\textwidth}
            	\centering
        	    \includegraphics[width=\textwidth]{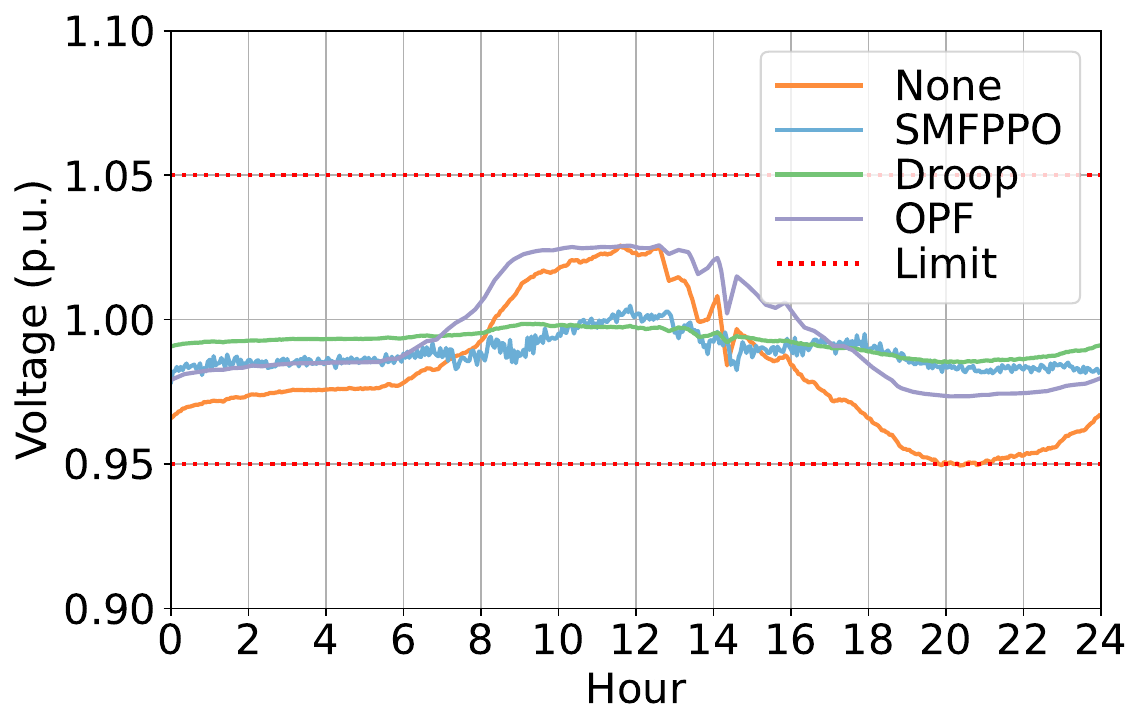}
            \end{subfigure}
            \quad
            \begin{subfigure}[b]{0.30\textwidth}
                \centering                
                \includegraphics[width=\textwidth]{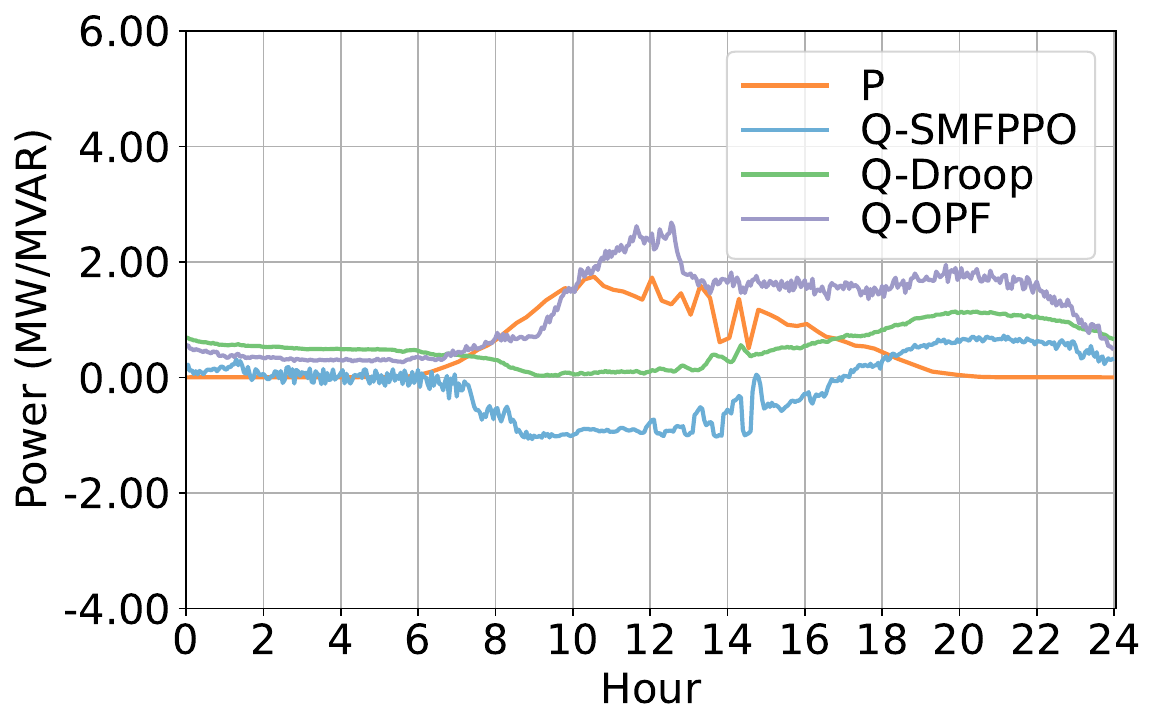}
            \end{subfigure}
            \quad
            \begin{subfigure}[b]{0.30\textwidth}
                \centering                
                \includegraphics[width=\textwidth]{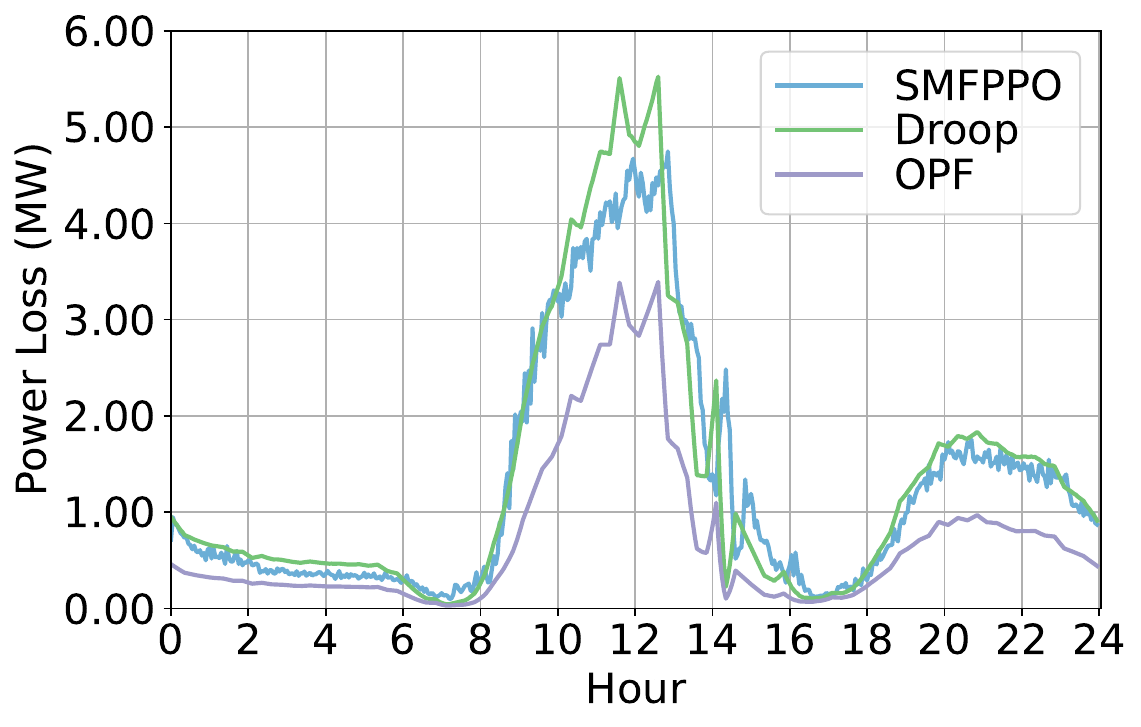}
            \end{subfigure}
            \quad
            \begin{subfigure}[b]{0.30\textwidth}
            	\centering
        	    \includegraphics[width=\textwidth]{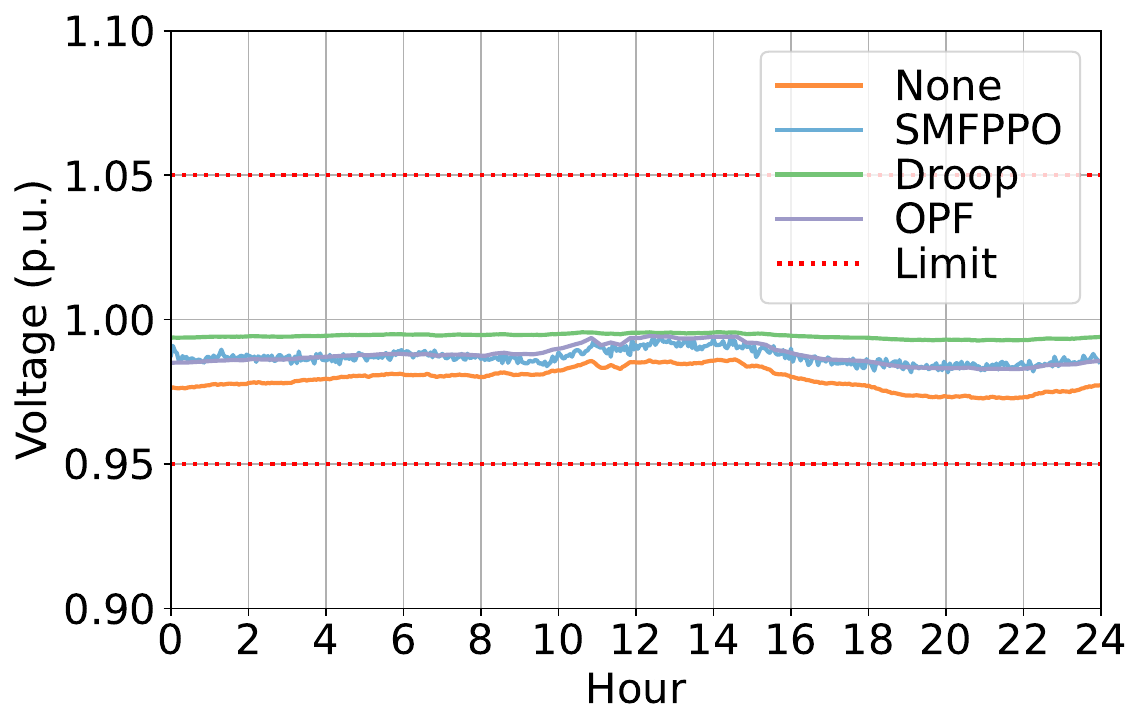}
        	    \caption{Voltage.}
            \end{subfigure}
            \quad
            \begin{subfigure}[b]{0.30\textwidth}
                \centering                
                \includegraphics[width=\textwidth]{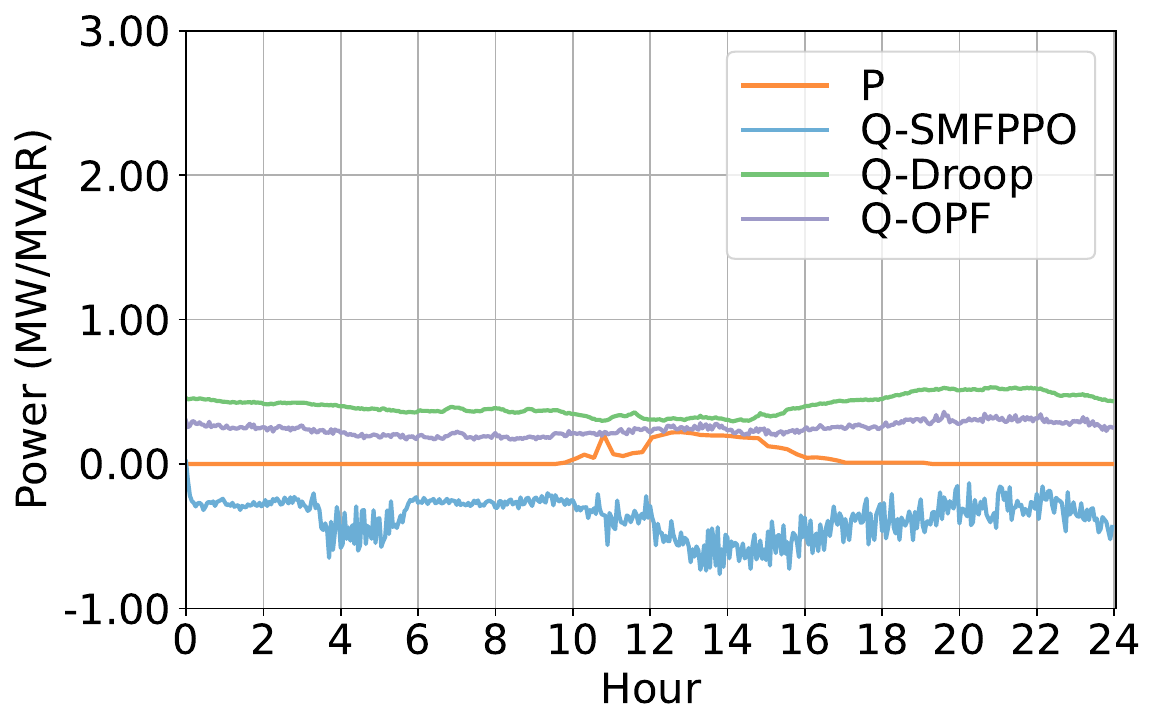}
                \caption{Power.}
            \end{subfigure}
            \quad
            \begin{subfigure}[b]{0.30\textwidth}
                \centering                
                \includegraphics[width=\textwidth]{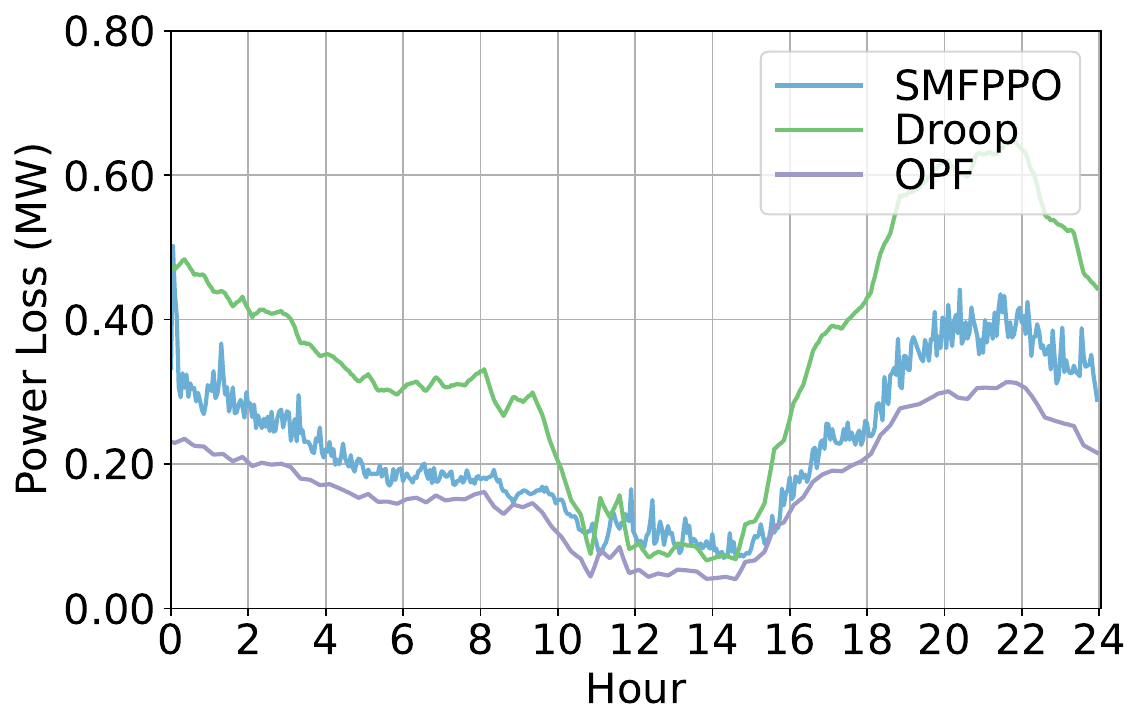}
                \caption{Power Loss.}
            \end{subfigure}
            \caption{Comparing SMFPPO with traditional control methods on a typical bus during a day in the 141-bus network. 1st row: the results of a summer day. 2nd row: the results of a winter day. None and limit in (a) represent the voltage with no control and the safety range respectively. P and Q in (b) indicate the PV active power and the reactive power by various methods.}
        \label{fig:case_study_141_smfppo}
        \end{figure*}
        
        \begin{figure*}[ht!]
            \centering
            \begin{subfigure}[b]{0.30\textwidth}
            	\centering
        	    \includegraphics[width=\textwidth]{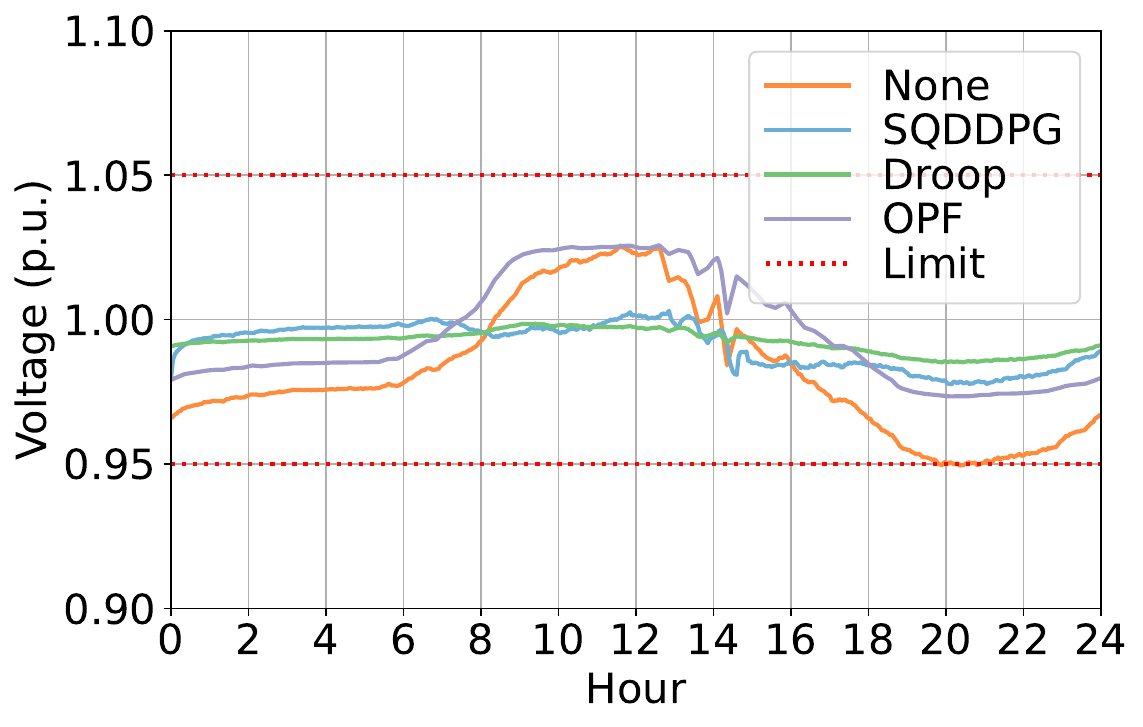}
            \end{subfigure}
            \quad
            \begin{subfigure}[b]{0.30\textwidth}
                \centering                
                \includegraphics[width=\textwidth]{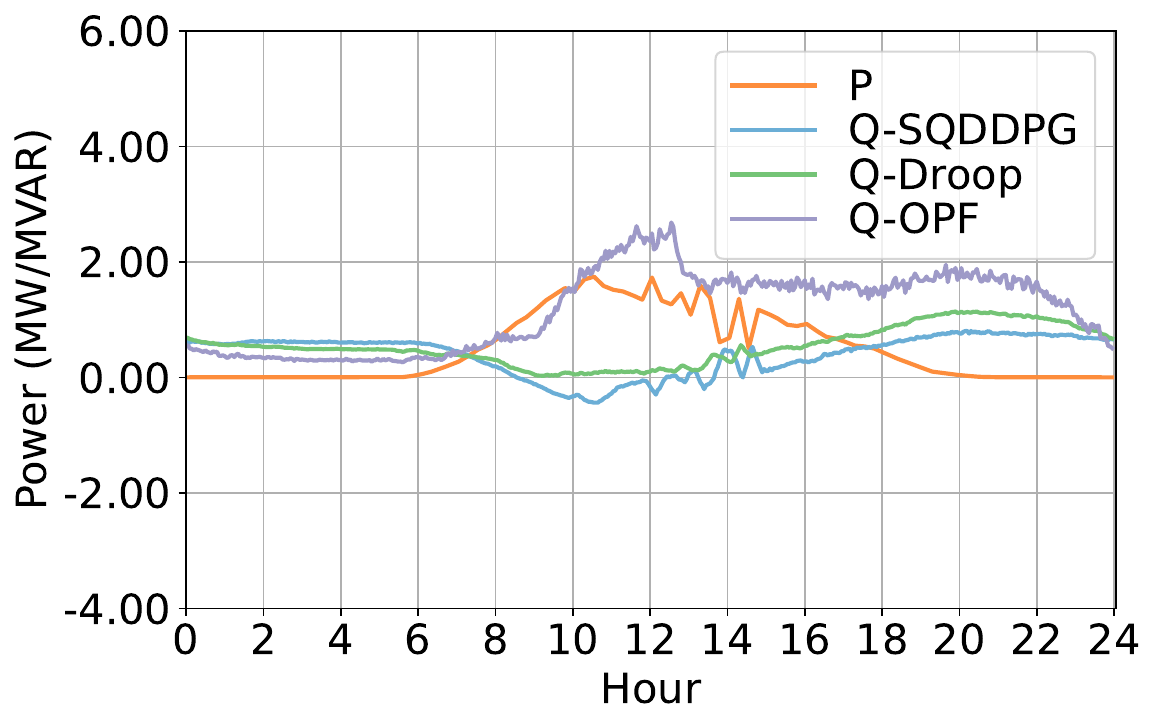}
            \end{subfigure}
            \quad
            \begin{subfigure}[b]{0.30\textwidth}
                \centering                
                \includegraphics[width=\textwidth]{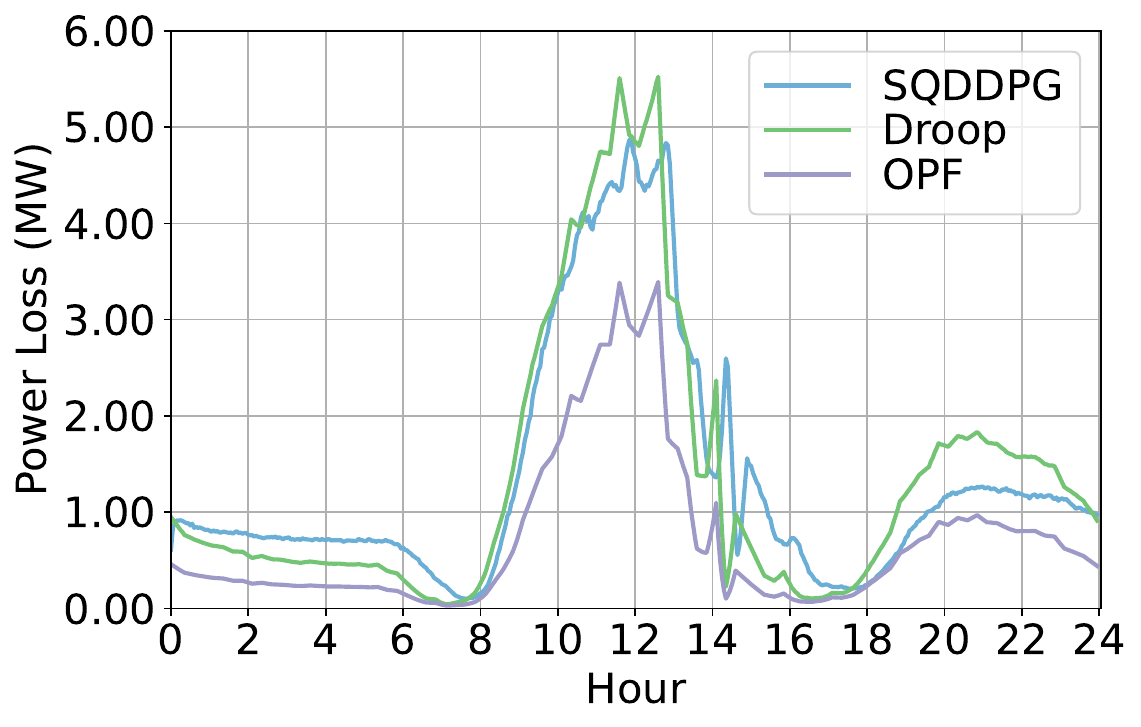}
            \end{subfigure}
            \quad
            \begin{subfigure}[b]{0.30\textwidth}
            	\centering
        	    \includegraphics[width=\textwidth]{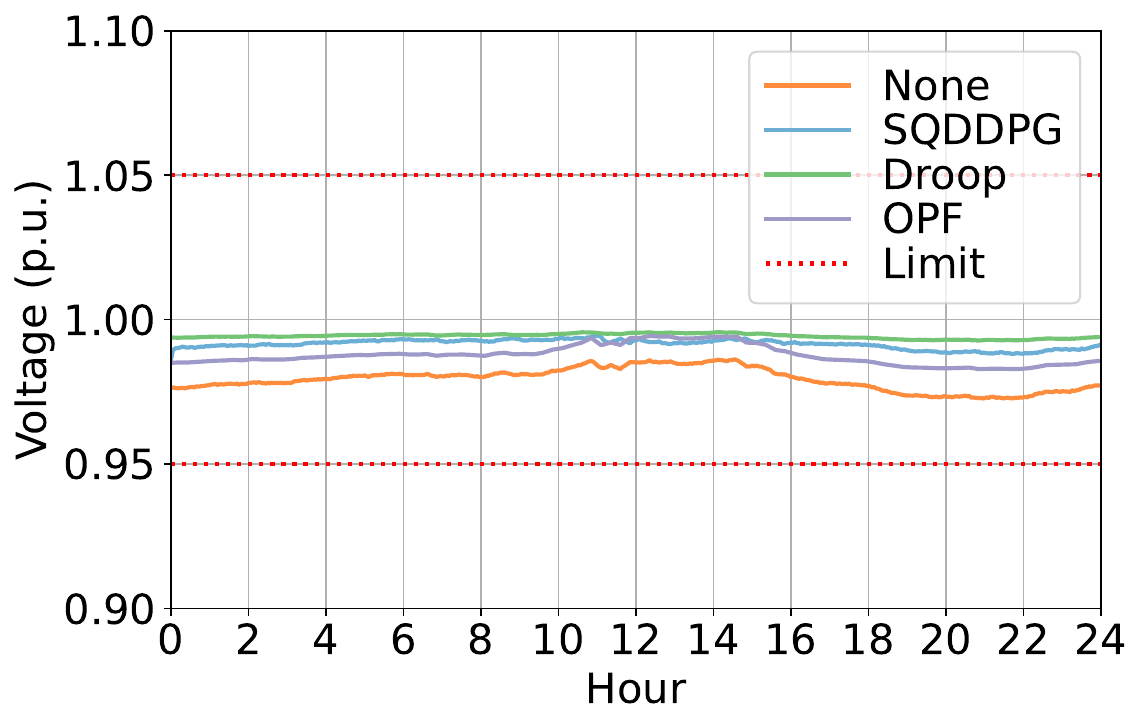}
        	    \caption{Voltage.}
            \end{subfigure}
            \quad
            \begin{subfigure}[b]{0.30\textwidth}
                \centering                
                \includegraphics[width=\textwidth]{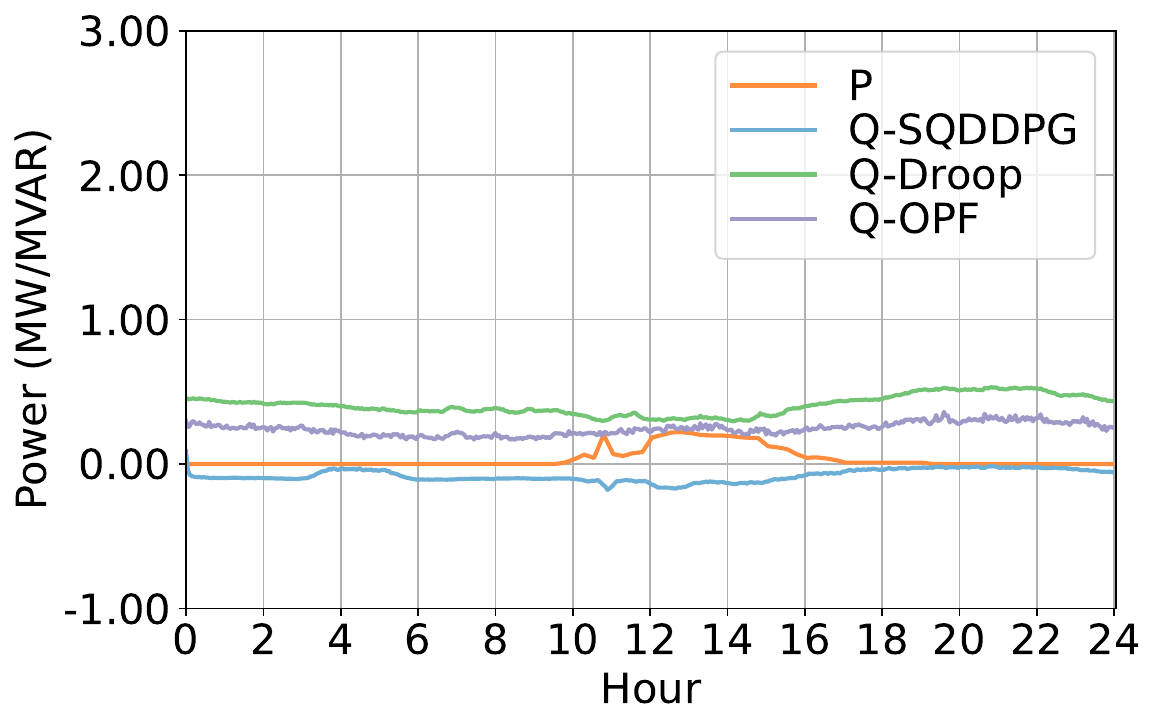}
                \caption{Power.}
            \end{subfigure}
            \quad
            \begin{subfigure}[b]{0.30\textwidth}
                \centering                
                \includegraphics[width=\textwidth]{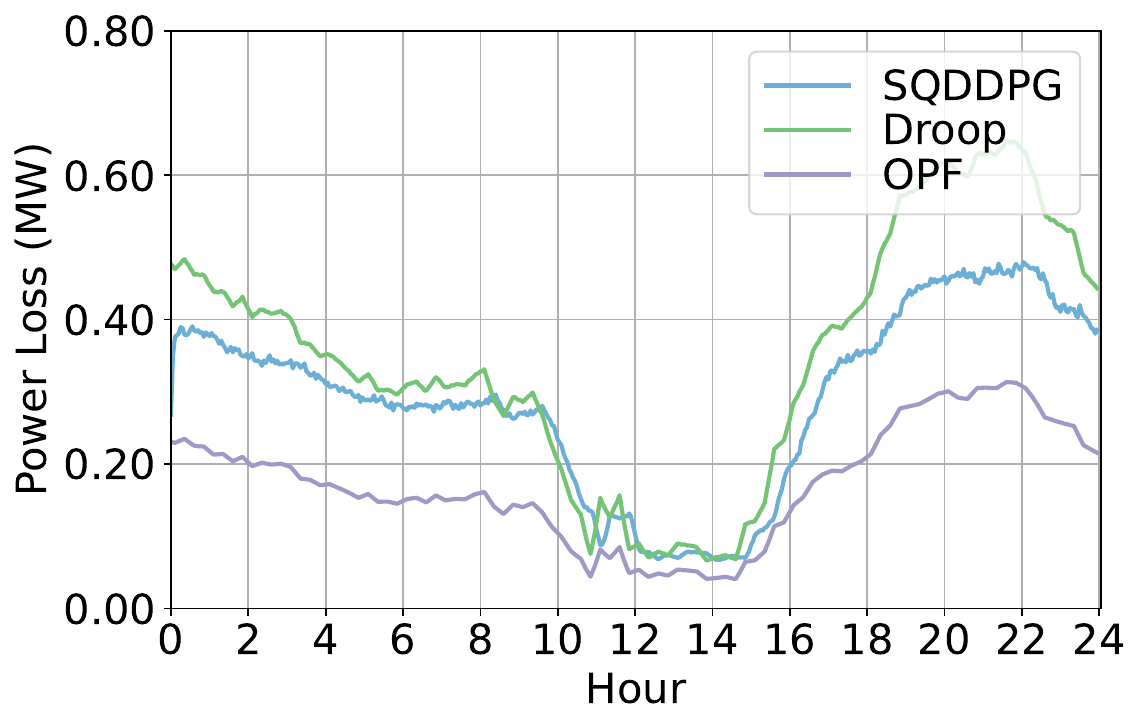}
                \caption{Power Loss.}
            \end{subfigure}
            \caption{Comparing SQDDPG with traditional control methods on a typical bus during a day in the 141-bus network. 1st row: the results of a summer day. 2nd row: the results of a winter day. None and limit in (a) represent the voltage with no control and the safety range respectively. P and Q in (b) indicate the PV active power and the reactive power by various methods.}
        \label{fig:case_study_141_sqddpg}
        \end{figure*}
        
        \paragraph{One Bus in the 322-Bus Network.} Figure \ref{fig:case_study_322_smfppo} and \ref{fig:case_study_322_sqddpg} show the results of a typical bus in the 322-bus network. In summer, it can be seen that SMFPPO can well control the voltage within the safety range while SQDDPG cannot. Additionally, the droop control can still control the voltage within the safety range, whereas the OPF slightly exceeds the lower limit. The inferior performance of OPF is perhaps due to the reason that the 322-bus network is so large and complicated that it may suffer the computational catastrophe with respect to the inverse of the topological matrix. In winter, all methods can control the voltage within the safety range, though the voltage at this time is originally within the safety range without no control. It can be observed that SQDDPG cannot generate the accurate power loss (i.e. the low power loss with the insufficient reactive power or the high power loss with the excessive reactive power). This is probably due to the incorrect credit assignment led by the direct approximation of marginal contribution in the implementation of SQDDPG (see Section \ref{subsec:problem_of_direct_approximation_of_cmc}) and therefore it leads to the difficulty of converging to the optimal joint policy. In contrast, SMFPPO generates comparatively more accurate reactive power and therefore the more appropriate power loss. It is not difficult to see that the droop control is a competitive baseline, however, its drawback is that the extra inner-loop optimization at each timestep is needed.
        \begin{figure*}[ht!]
            \centering
            \begin{subfigure}[b]{0.30\textwidth}
            	\centering
        	    \includegraphics[width=\textwidth]{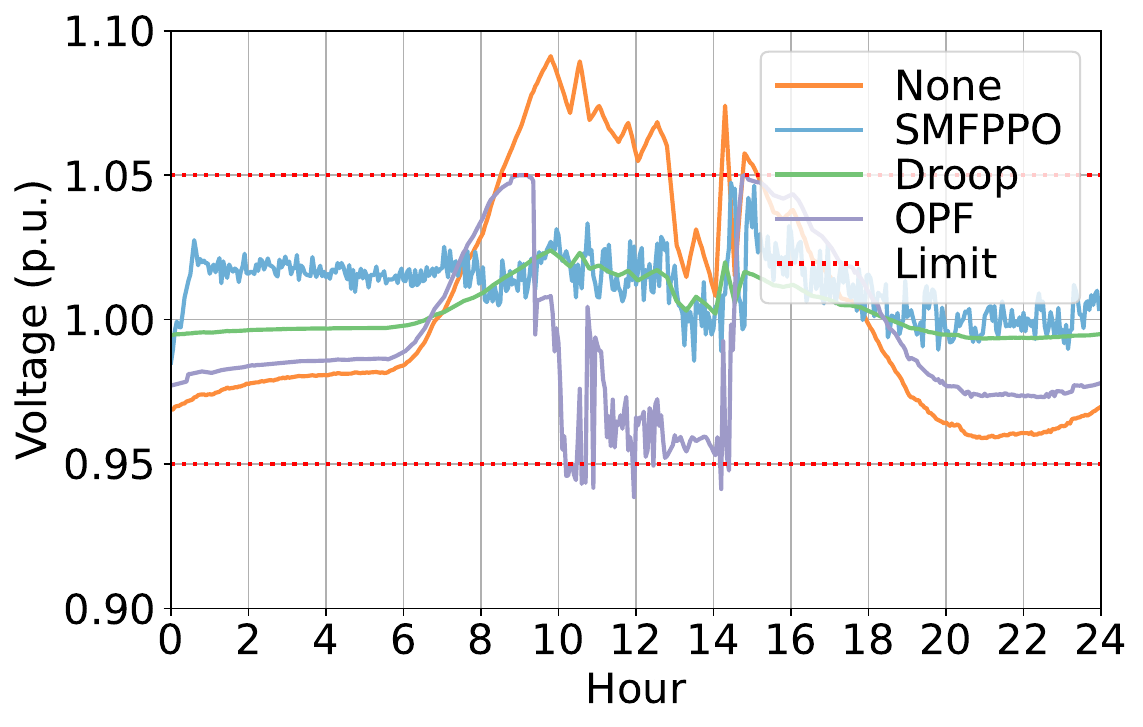}
            \end{subfigure}
            \quad
            \begin{subfigure}[b]{0.30\textwidth}
                \centering                
                \includegraphics[width=\textwidth]{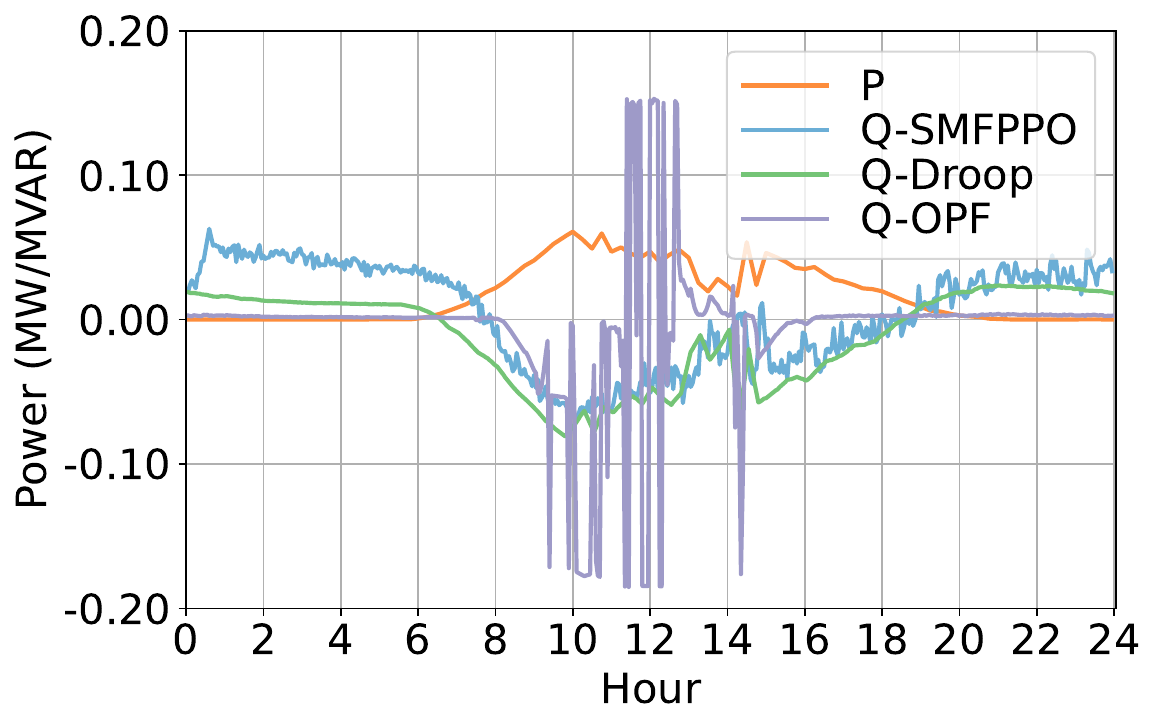}
            \end{subfigure}
            \quad
            \begin{subfigure}[b]{0.30\textwidth}
                \centering                
                \includegraphics[width=\textwidth]{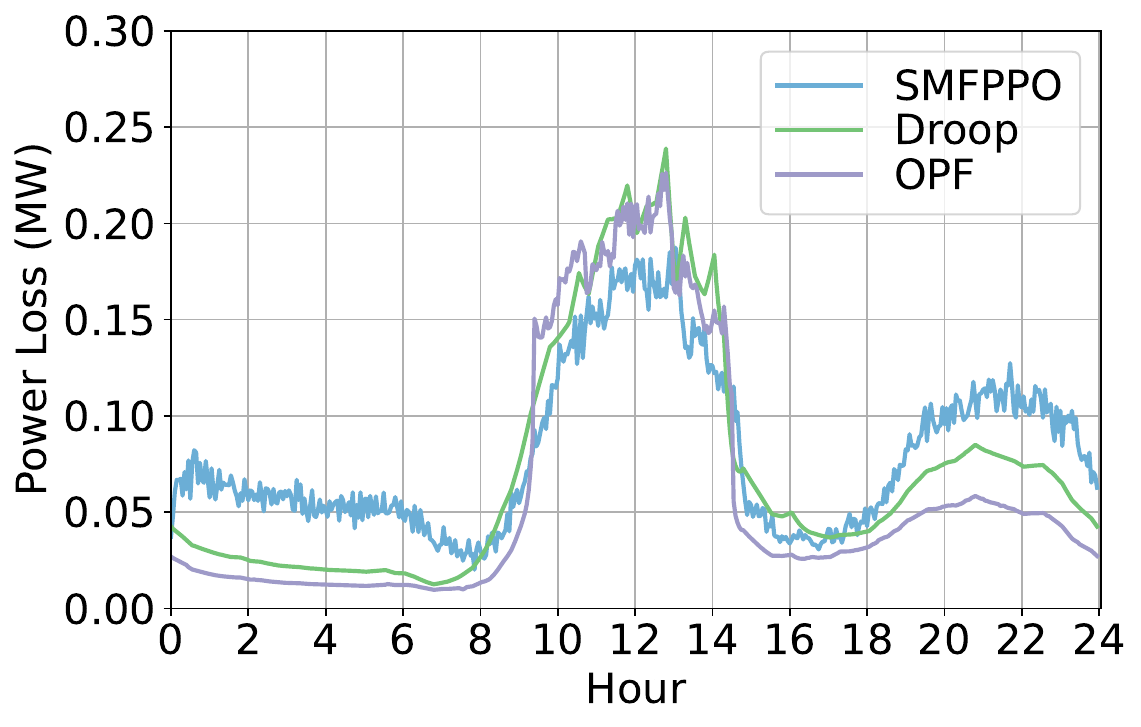}
            \end{subfigure}
            \quad
            \begin{subfigure}[b]{0.30\textwidth}
            	\centering
        	    \includegraphics[width=\textwidth]{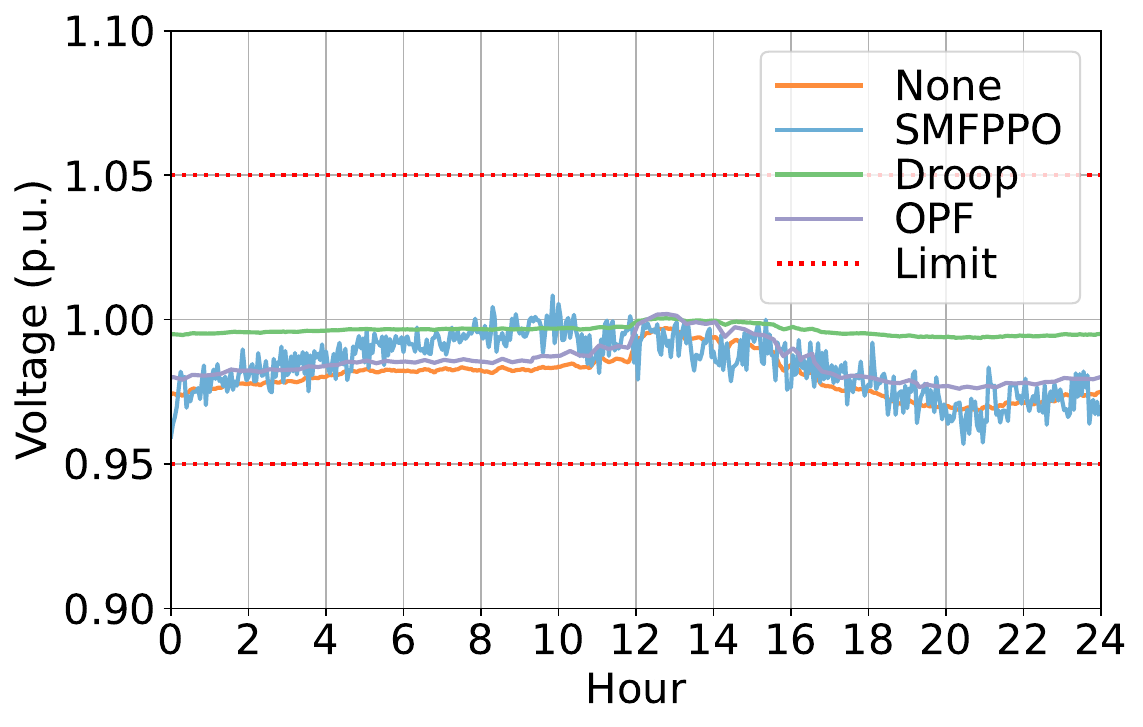}
        	    \caption{Voltage.}
            \end{subfigure}
            \quad
            \begin{subfigure}[b]{0.30\textwidth}
                \centering                
                \includegraphics[width=\textwidth]{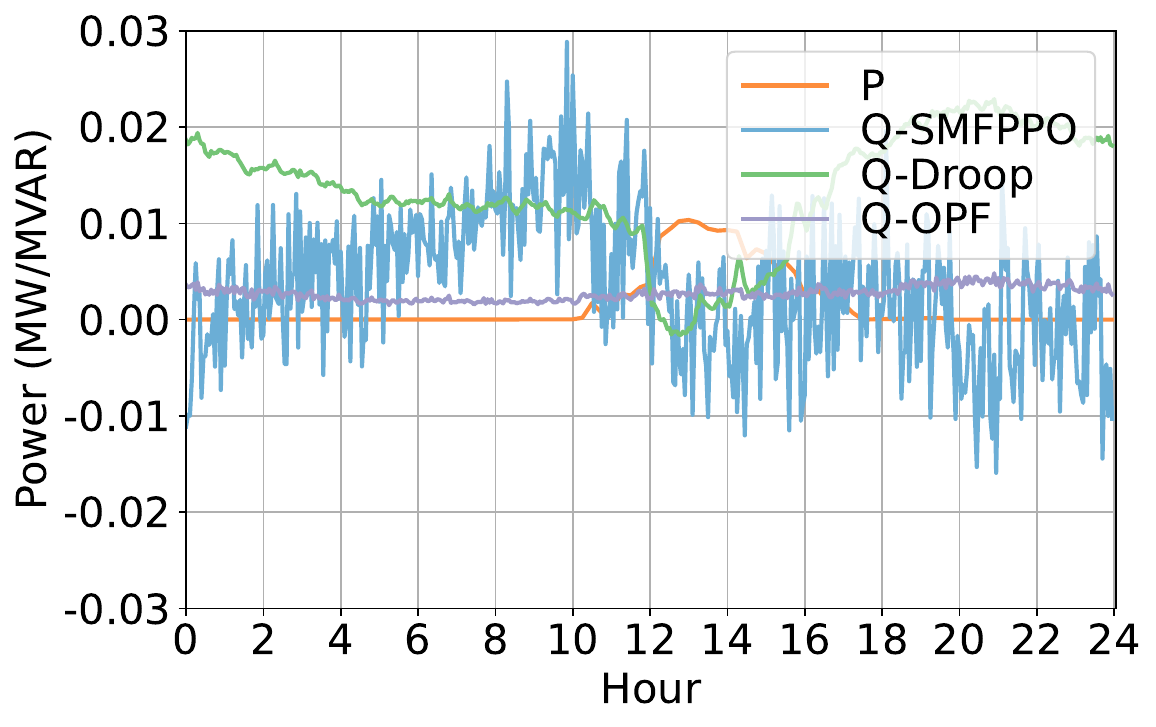}
                \caption{Power.}
            \end{subfigure}
            \quad
            \begin{subfigure}[b]{0.30\textwidth}
                \centering                
                \includegraphics[width=\textwidth]{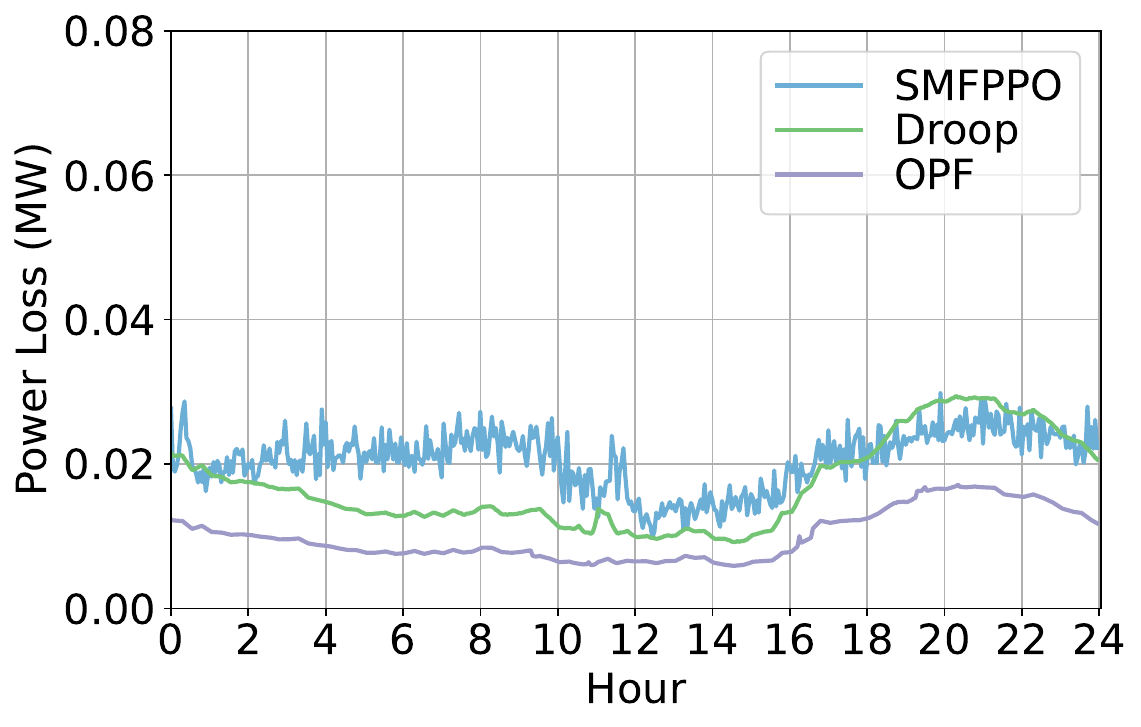}
                \caption{Power Loss.}
            \end{subfigure}
            \caption{Comparing SMFPPO with traditional control methods on a typical bus during a day in the 322-bus network. 1st row: the results of a summer day. 2nd row: the results of a winter day. None and limit in (a) represent the voltage with no control and the safety range respectively. P and Q in (b) indicate the PV active power and the reactive power by various methods.}
        \label{fig:case_study_322_smfppo}
        \end{figure*}
        
        \begin{figure*}[ht!]
            \centering
            \begin{subfigure}[b]{0.30\textwidth}
            	\centering
        	    \includegraphics[width=\textwidth]{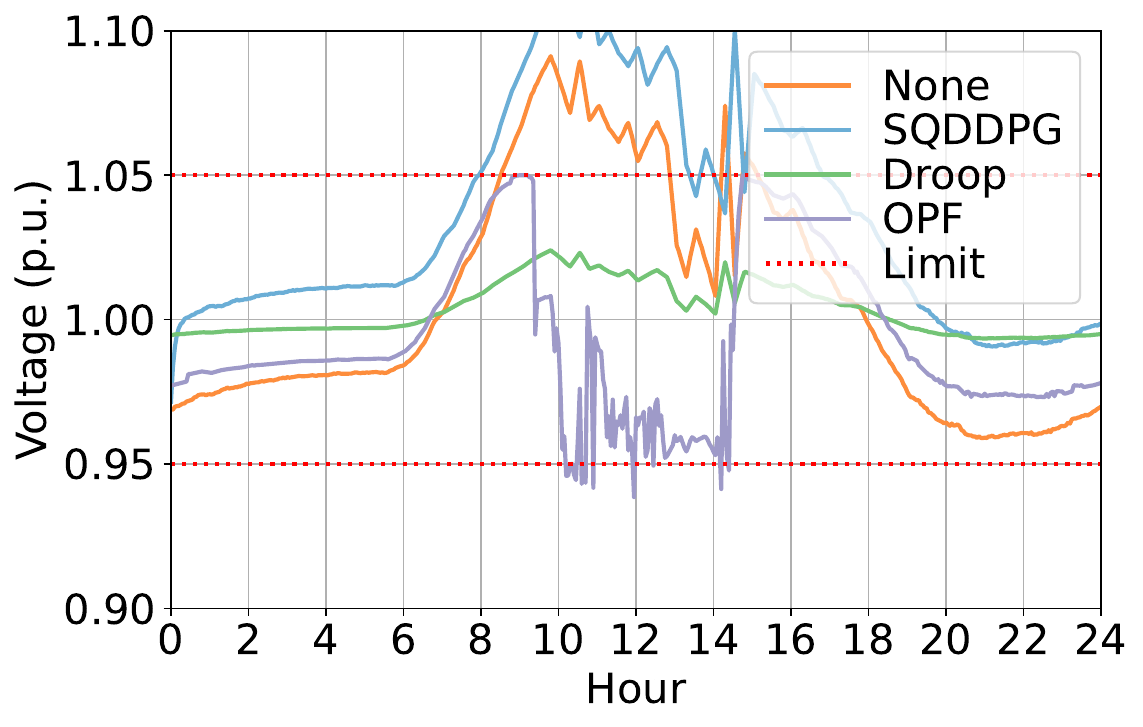}
            \end{subfigure}
            \quad
            \begin{subfigure}[b]{0.30\textwidth}
                \centering                
                \includegraphics[width=\textwidth]{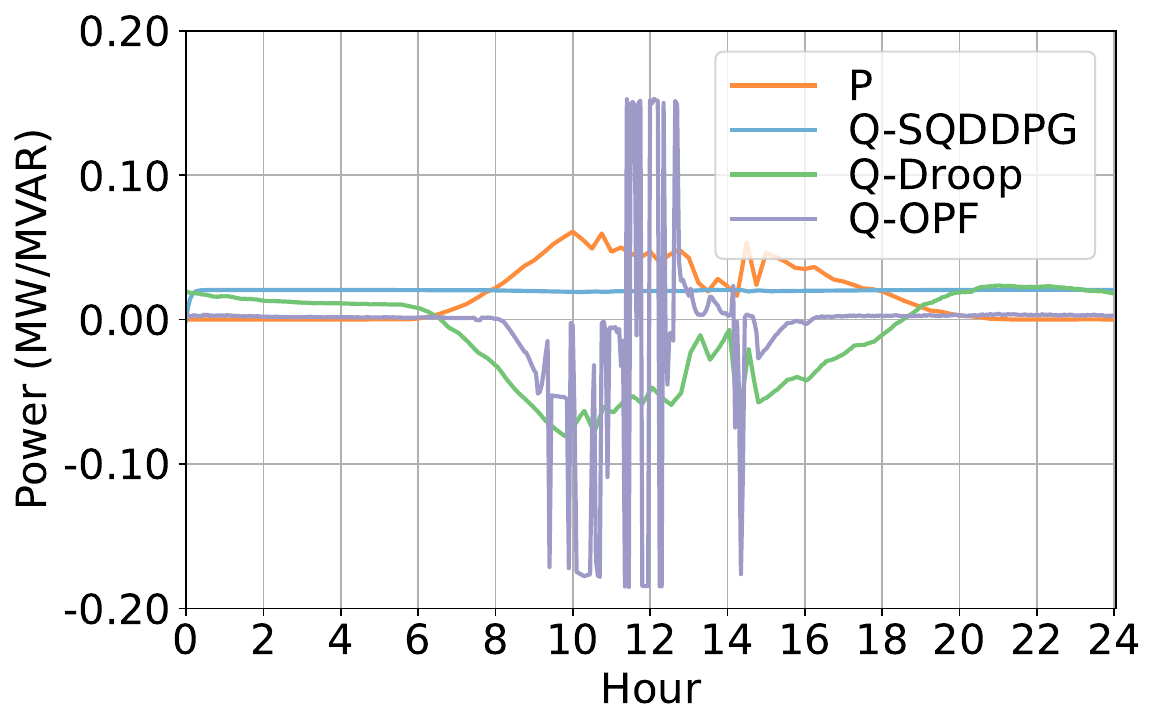}
            \end{subfigure}
            \quad
            \begin{subfigure}[b]{0.30\textwidth}
                \centering                
                \includegraphics[width=\textwidth]{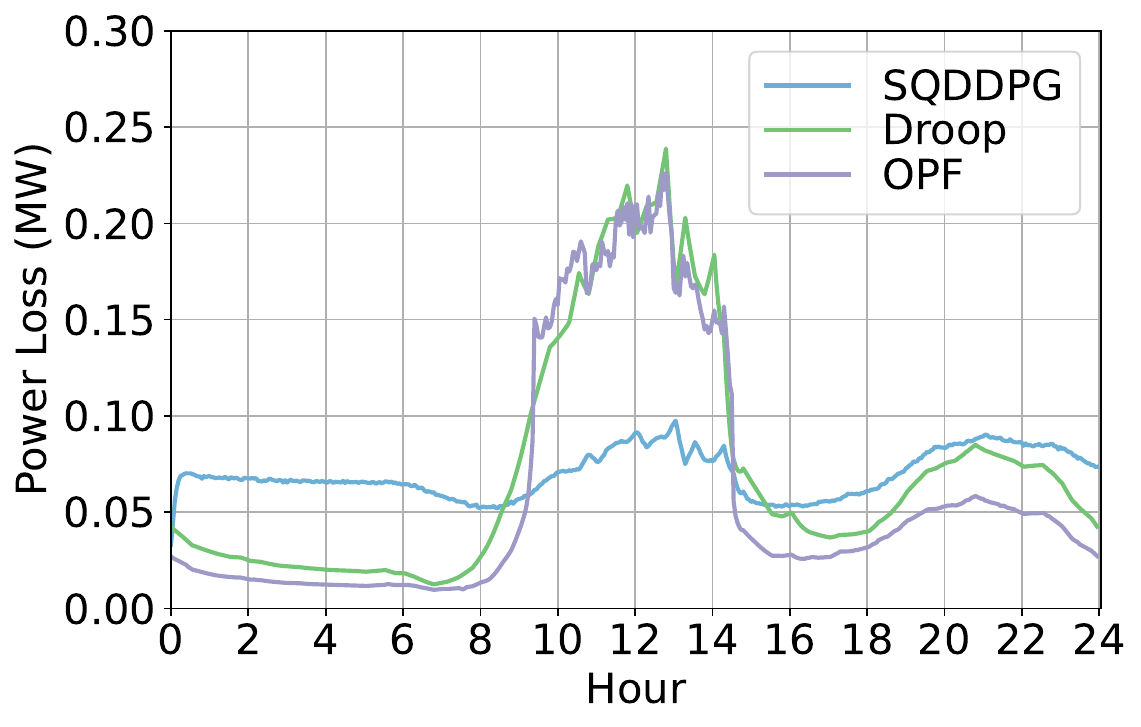}
            \end{subfigure}
            \quad
            \begin{subfigure}[b]{0.30\textwidth}
            	\centering
        	    \includegraphics[width=\textwidth]{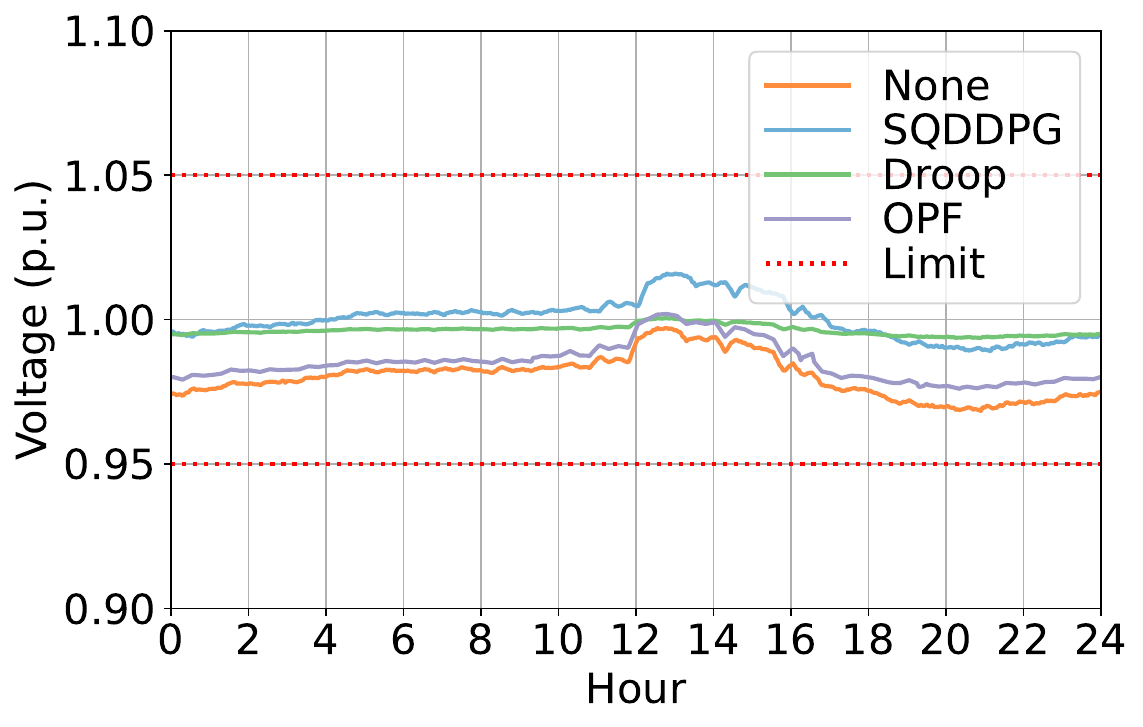}
        	    \caption{Voltage.}
            \end{subfigure}
            \quad
            \begin{subfigure}[b]{0.30\textwidth}
                \centering                
                \includegraphics[width=\textwidth]{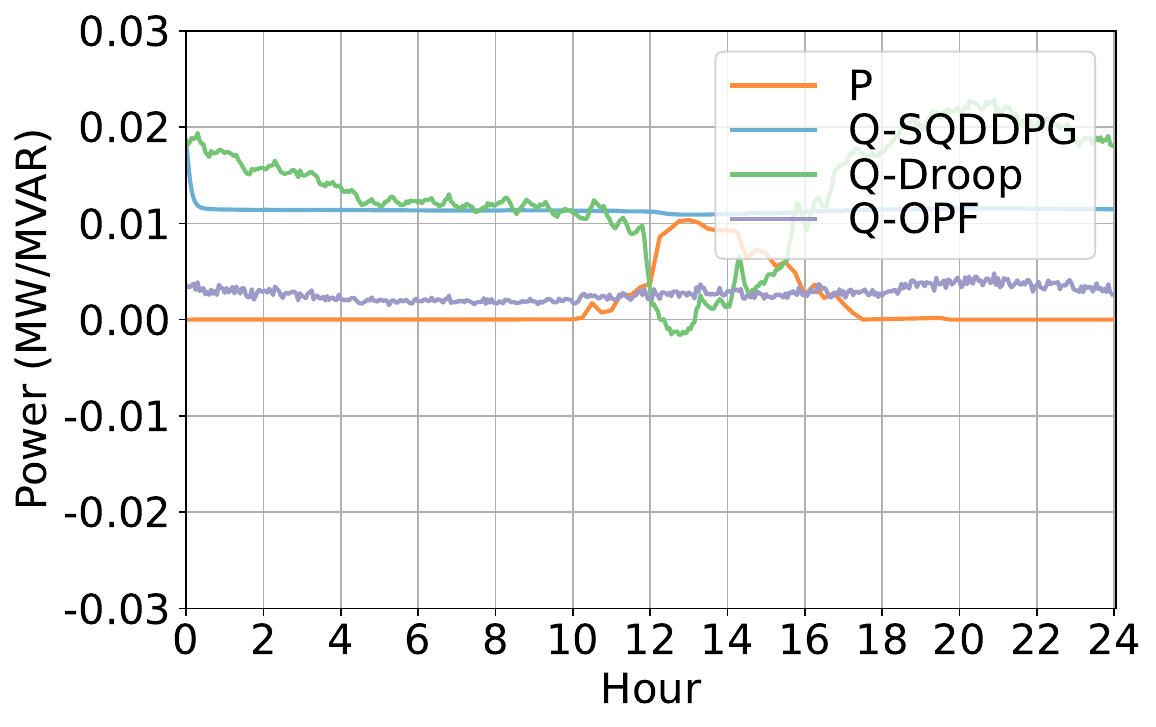}
                \caption{Power.}
            \end{subfigure}
            \quad
            \begin{subfigure}[b]{0.30\textwidth}
                \centering                
                \includegraphics[width=\textwidth]{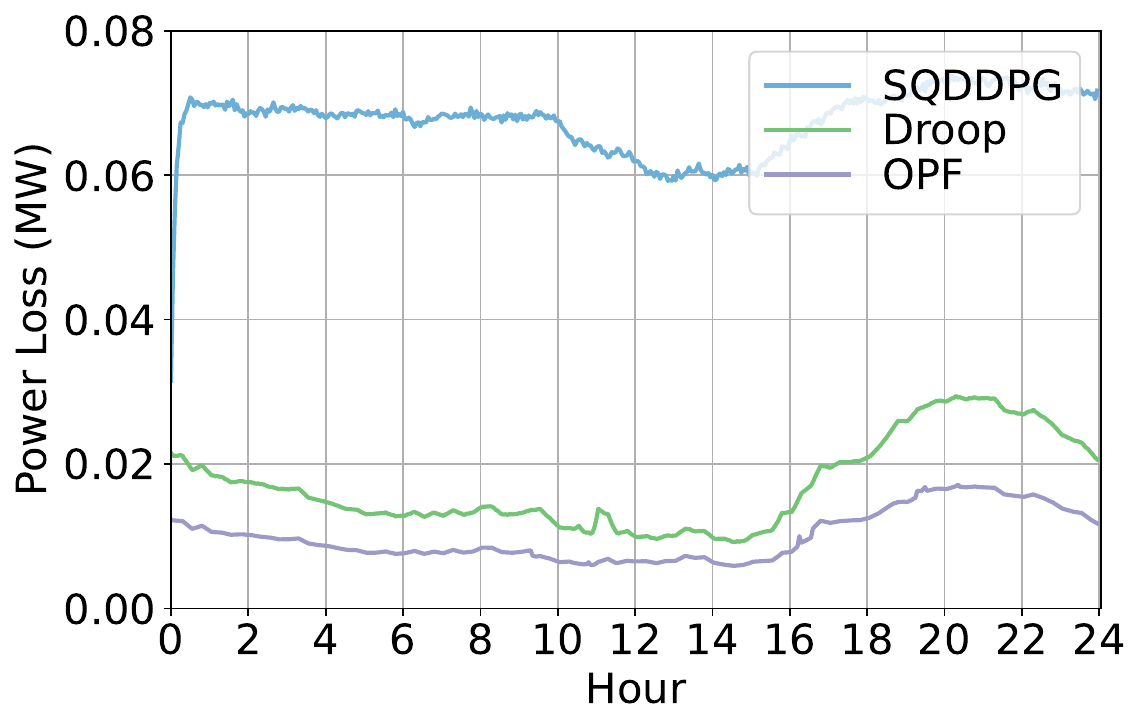}
                \caption{Power Loss.}
            \end{subfigure}
            \caption{Comparing SQDDPG with traditional control methods on a typical bus during a day in the 322-bus network. 1st row: the results of a summer day. 2nd row: the results of a winter day. None and limit in (a) represent the voltage with no control and the safety range respectively. P and Q in (b) indicate the PV active power and the reactive power by various methods.}
        \label{fig:case_study_322_sqddpg}
        \end{figure*}
        
        \begin{figure*}[ht!]
            \centering
            \begin{subfigure}[b]{0.30\textwidth}
            	\centering
        	    \includegraphics[width=\textwidth]{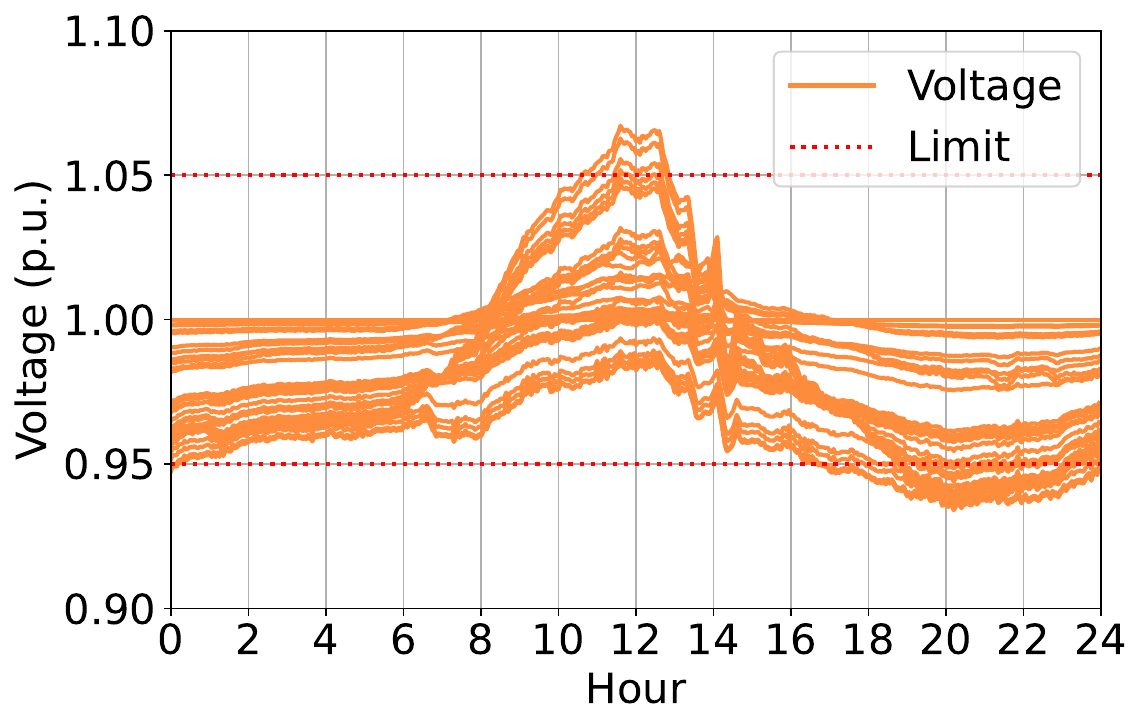}
        	    \caption{33-summer.}
            \end{subfigure}
            \quad
            \begin{subfigure}[b]{0.30\textwidth}
                \centering                
                \includegraphics[width=\textwidth]{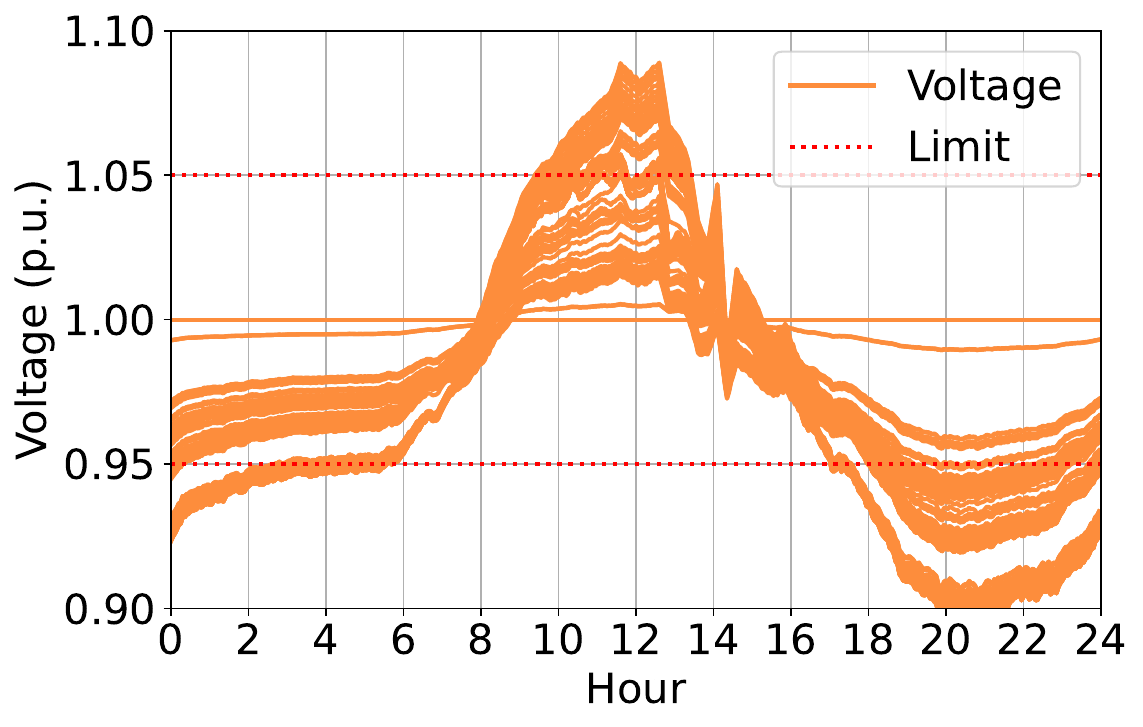}
                \caption{141-summer.}
            \end{subfigure}
            \quad
            \begin{subfigure}[b]{0.30\textwidth}
                \centering                
                \includegraphics[width=\textwidth]{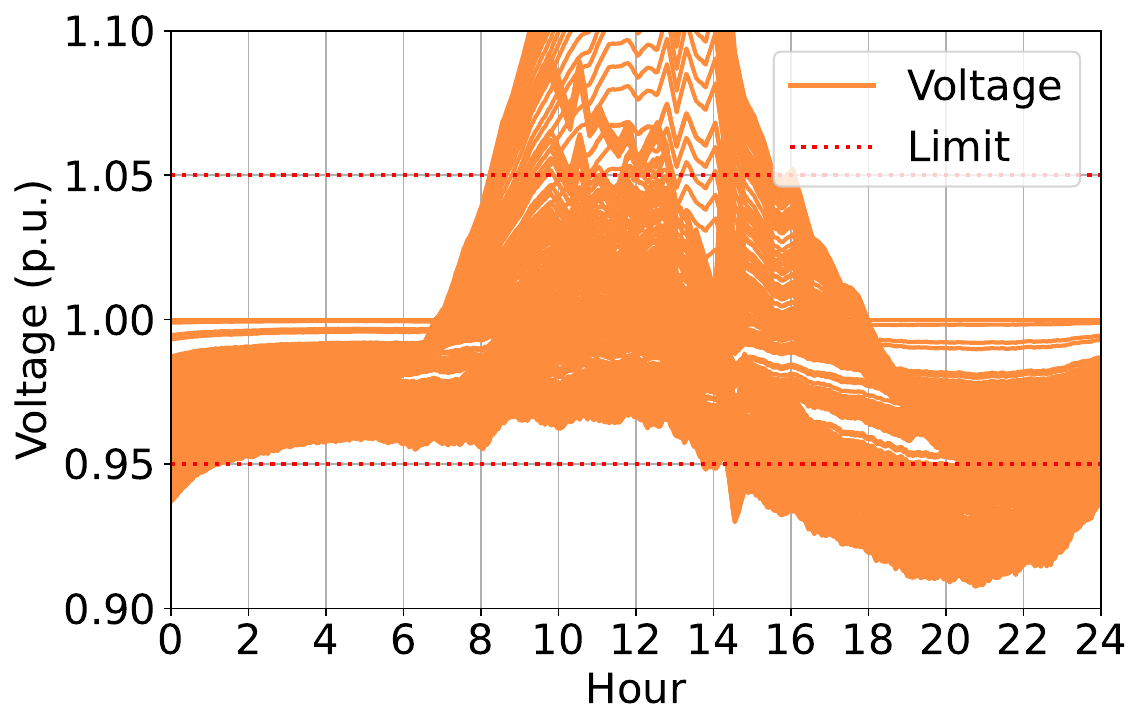}
                \caption{322-summer.}
            \end{subfigure}
            \quad
            \begin{subfigure}[b]{0.30\textwidth}
            	\centering
        	    \includegraphics[width=\textwidth]{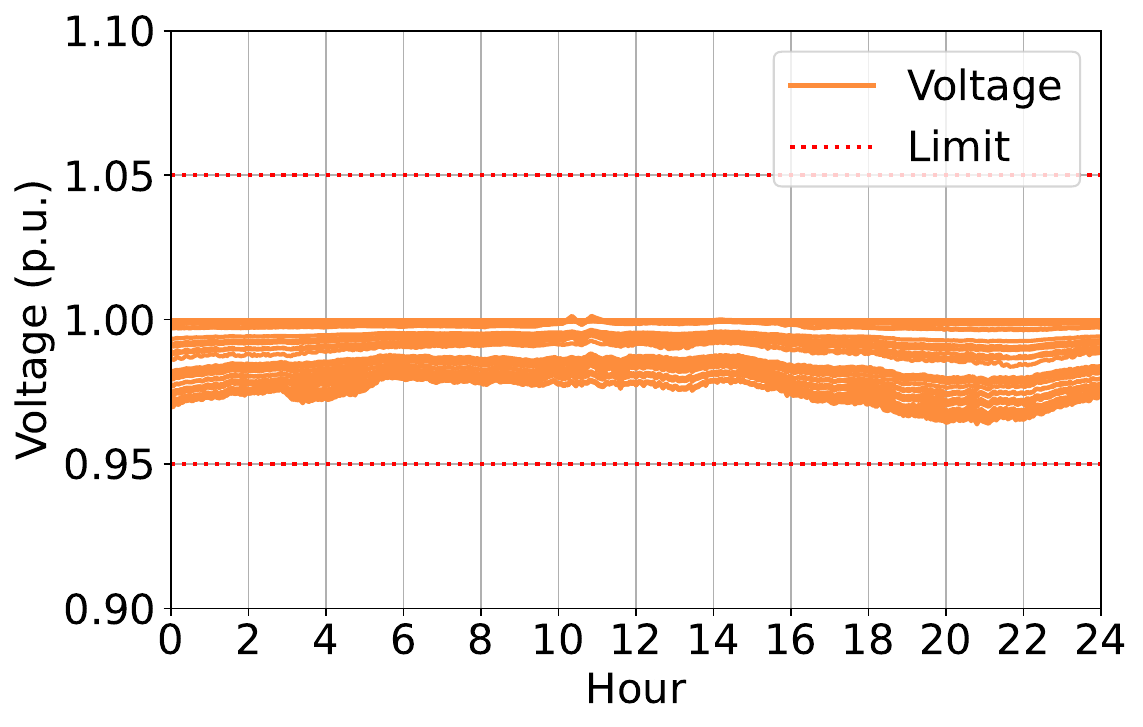}
        	    \caption{33-winter.}
            \end{subfigure}
            \quad
            \begin{subfigure}[b]{0.30\textwidth}
                \centering                
                \includegraphics[width=\textwidth]{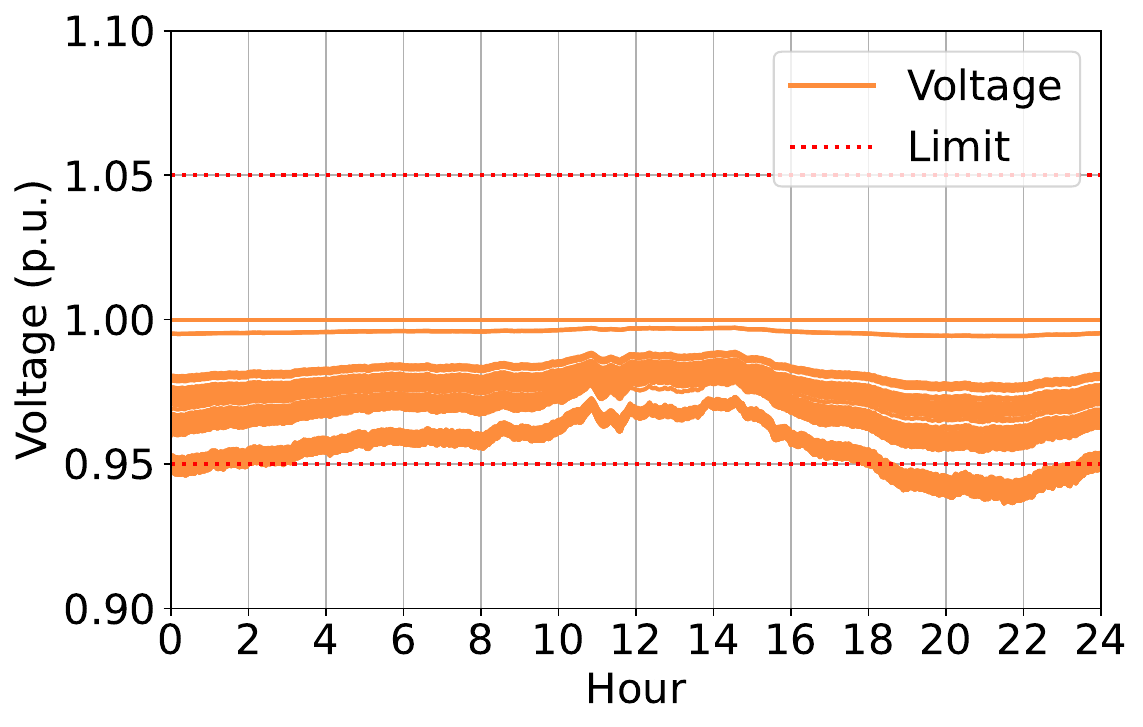}
                \caption{141-winter.}
            \end{subfigure}
            \quad
            \begin{subfigure}[b]{0.30\textwidth}
                \centering                
                \includegraphics[width=\textwidth]{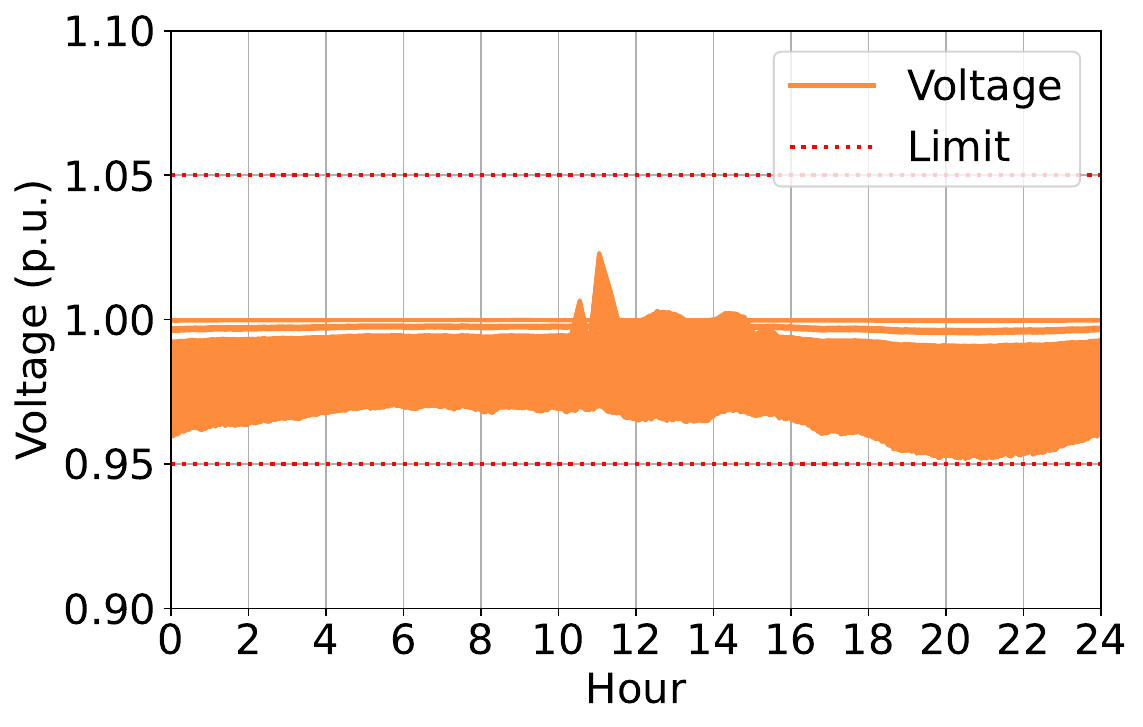}
                \caption{322-winter.}
            \end{subfigure}
            \caption{Status of all buses' voltage (in orange lines) in a day on the 33-bus, the 141-bus and the 322-bus networks in summer and winter. The red dashed lines are the safety boundaries. Each caption above indicates [network]-[season].}
        \label{fig:case_study_detail_pf}
        \end{figure*}
        
        \paragraph{Analysis for All Buses.} To give the whole picture of active voltage control for the days we select for demonstration above, we show the status of all buses with no control for all scenarios in Figure \ref{fig:case_study_detail_pf}; as well as the status of all buses under control methods in Figure \ref{fig:case_study_33_detail} about the 33-bus network, Figure \ref{fig:case_study_141_detail} about the 141-bus network and Figure \ref{fig:case_study_322_detail} about the 322-bus network. In winter, almost all methods can control the voltage of all buses within the safety range in all scenarios. For this reason, we only focus on the results of summer in the following discussion. 
        
        In the 33-bus network and the 141-bus network, it is obvious that all methods can control the voltage within the safety range. In the 322-bus network, the performance of droop control is far better than OPF, SMFPPO and SQDDPG, which is the only method controlling all buses' voltages within the safety range. The reason for the failure of the OPF is probably due to the computational burden as we discussed before. It is worth noting that the success in the droop control highly relies on a high-bandwidth inner loop in the inverter controller (i.e., analogous to the optimization procedure to solve a static game in multi-agent learning), so the effective control rate is much higher than the sample rate \cite{MajzoubiZCK0S20}. The failure of SMFPPO is probably due to the fact that the increasing number of agents leads to the difficulty of estimating the accurate coalition value function and therefore the Markov Shapley value. It is seen that the patterns of the resulting voltage yielded by SQDDPG are so close to the droop control. In contrast, the patterns of the resulting voltage yielded by SMFPPO lie between the OPF and the droop control. This is an interesting phenomenon which deserves to be investigated in the future work.
        \begin{figure*}[ht!]
            \centering
            \begin{subfigure}[b]{0.45\textwidth}
            	\centering
        	    \includegraphics[width=\textwidth]{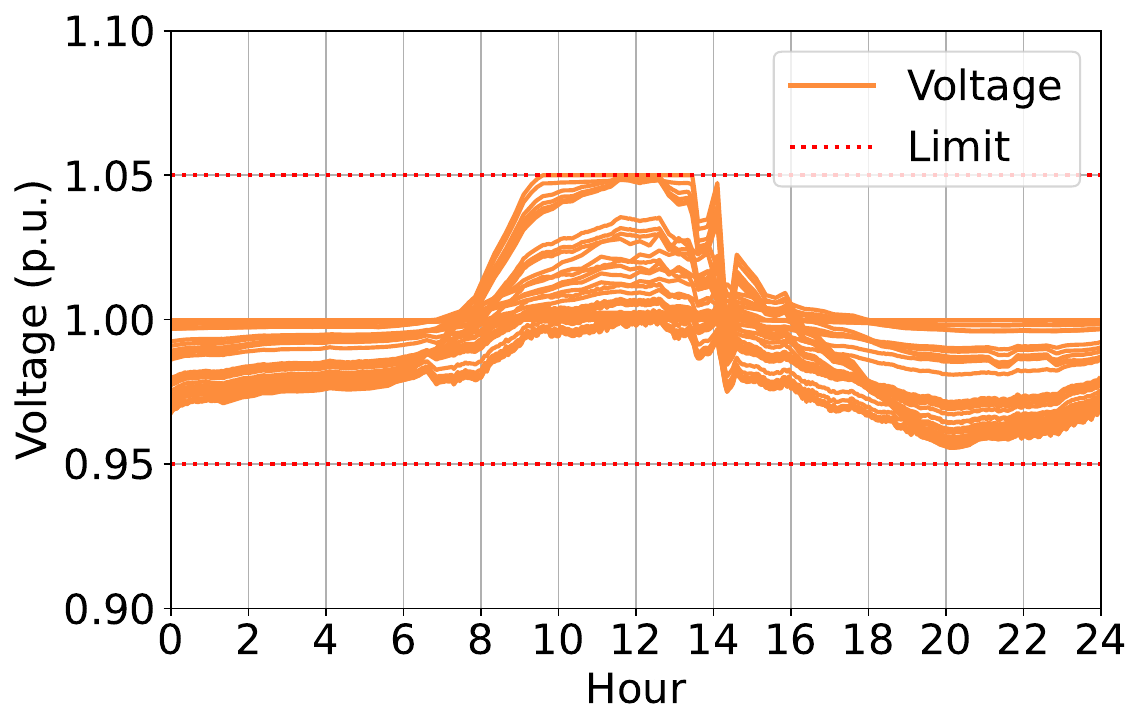}
        	    \caption{OPF-summer.}
            \end{subfigure}
            \quad
            \begin{subfigure}[b]{0.45\textwidth}
                \centering                
                \includegraphics[width=\textwidth]{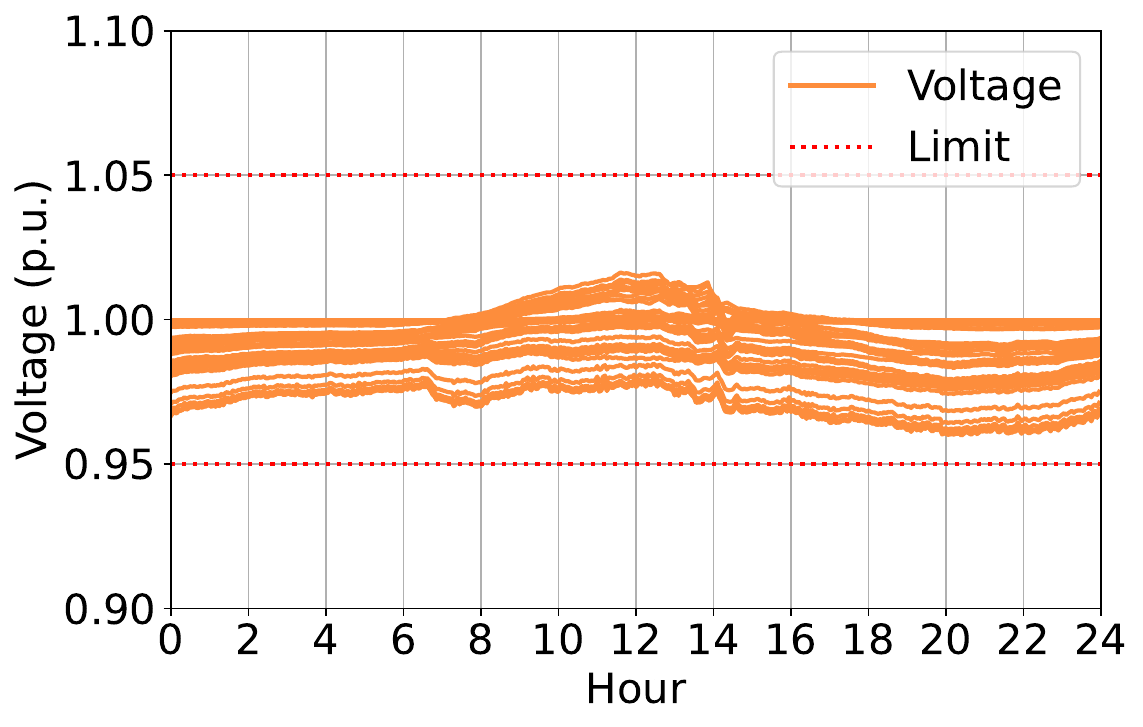}
                \caption{Droop-summer.}
            \end{subfigure}
            \quad
            \begin{subfigure}[b]{0.45\textwidth}
                \centering                
                \includegraphics[width=\textwidth]{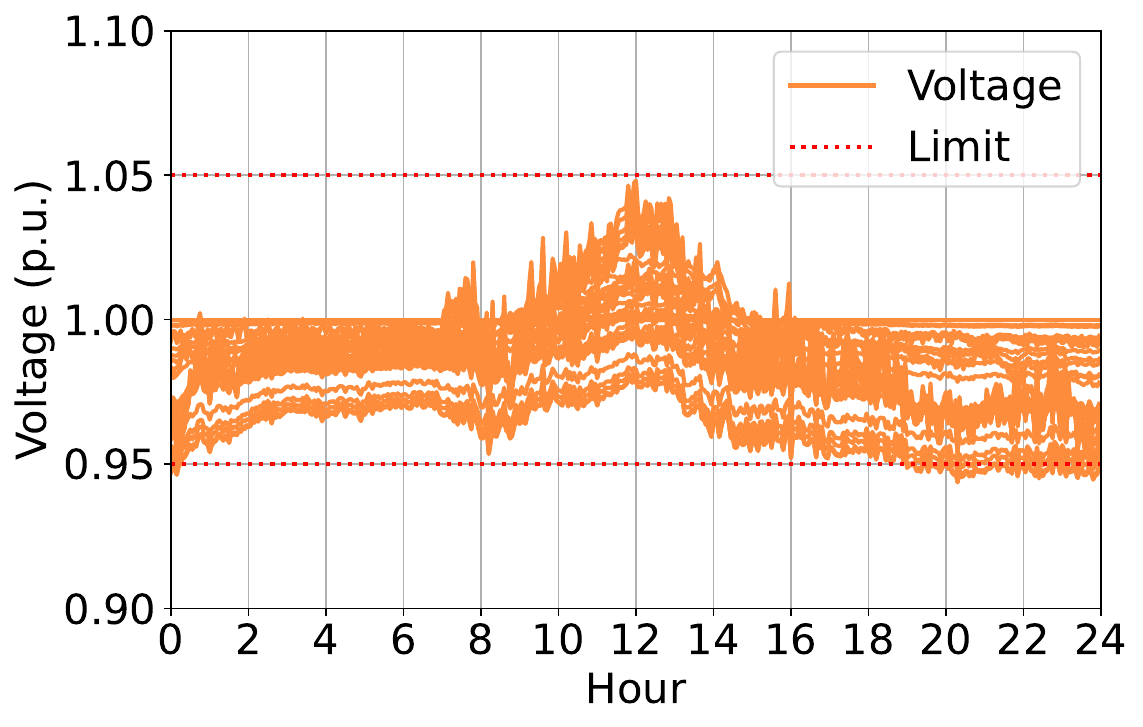}
                \caption{SMFPPO-summer.}
            \end{subfigure}
            \quad
            \begin{subfigure}[b]{0.45\textwidth}
                \centering                
                \includegraphics[width=\textwidth]{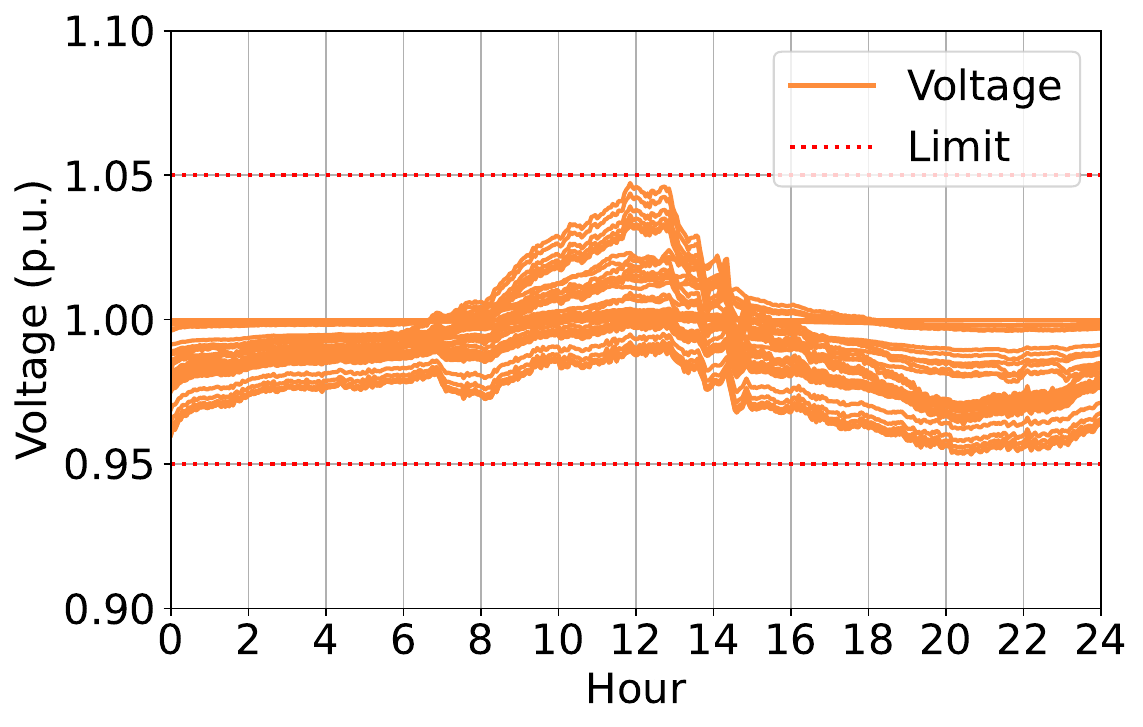}
                \caption{SQDDPG-summer.}
            \end{subfigure}
            \quad
            \begin{subfigure}[b]{0.45\textwidth}
            	\centering
        	    \includegraphics[width=\textwidth]{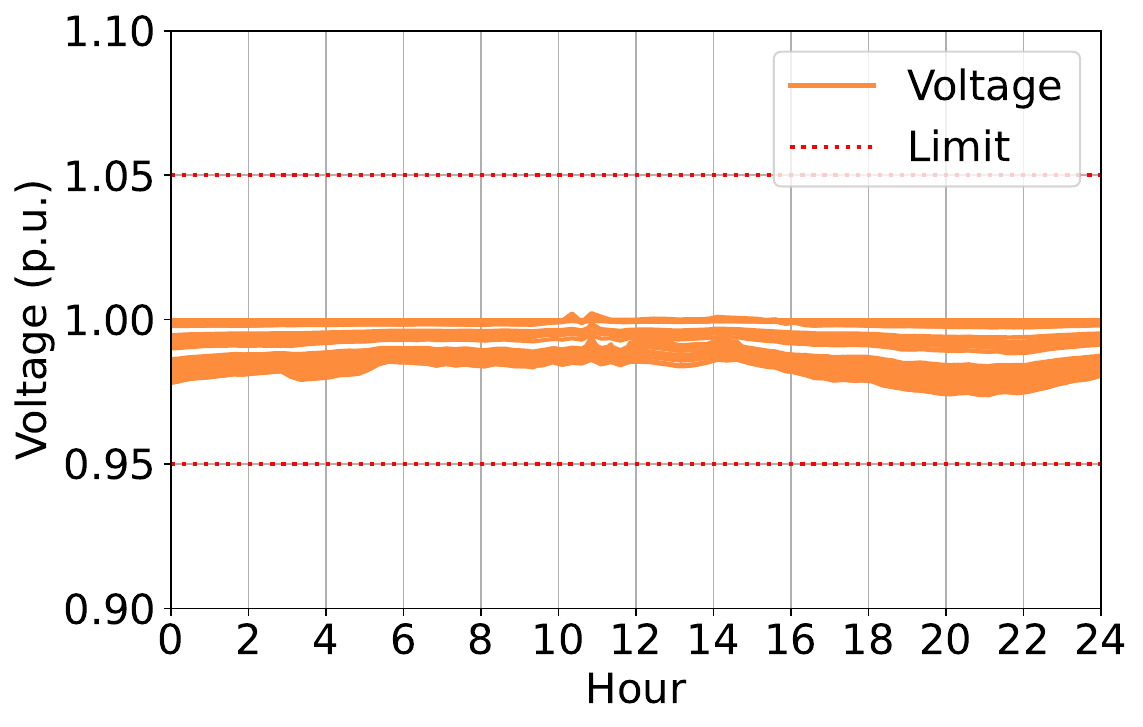}
        	    \caption{OPF-winter.}
            \end{subfigure}
            \quad
            \begin{subfigure}[b]{0.45\textwidth}
                \centering                
                \includegraphics[width=\textwidth]{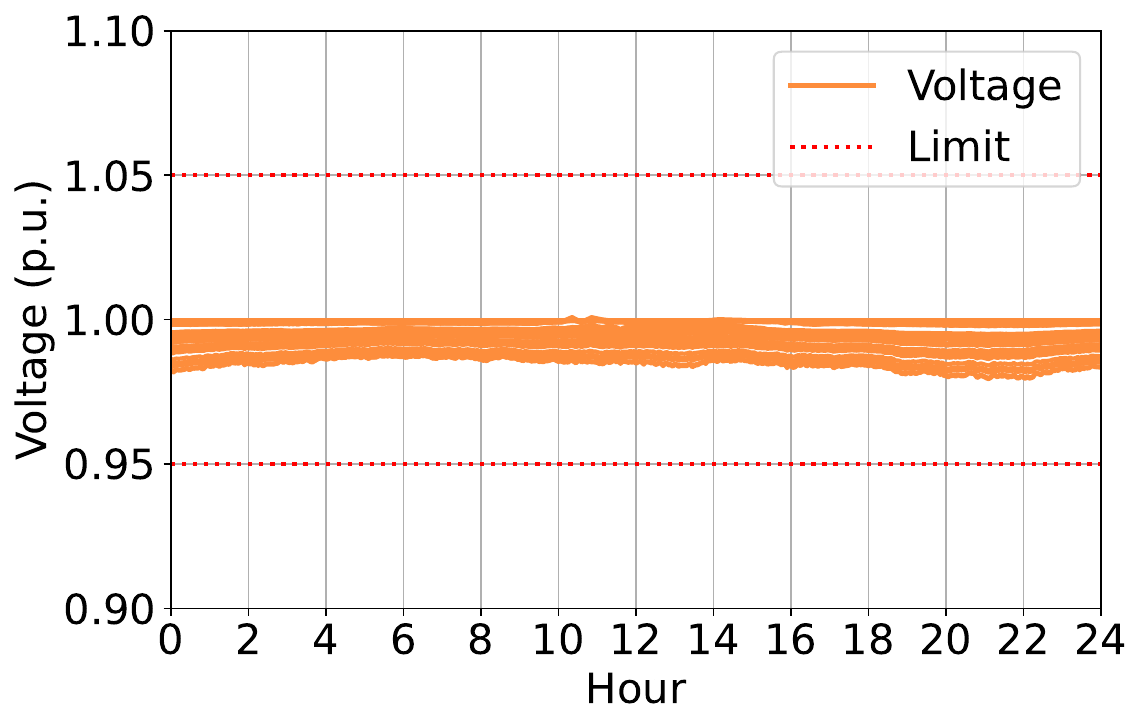}
                \caption{Droop-winter.}
            \end{subfigure}
            \quad
            \begin{subfigure}[b]{0.45\textwidth}
                \centering                
                \includegraphics[width=\textwidth]{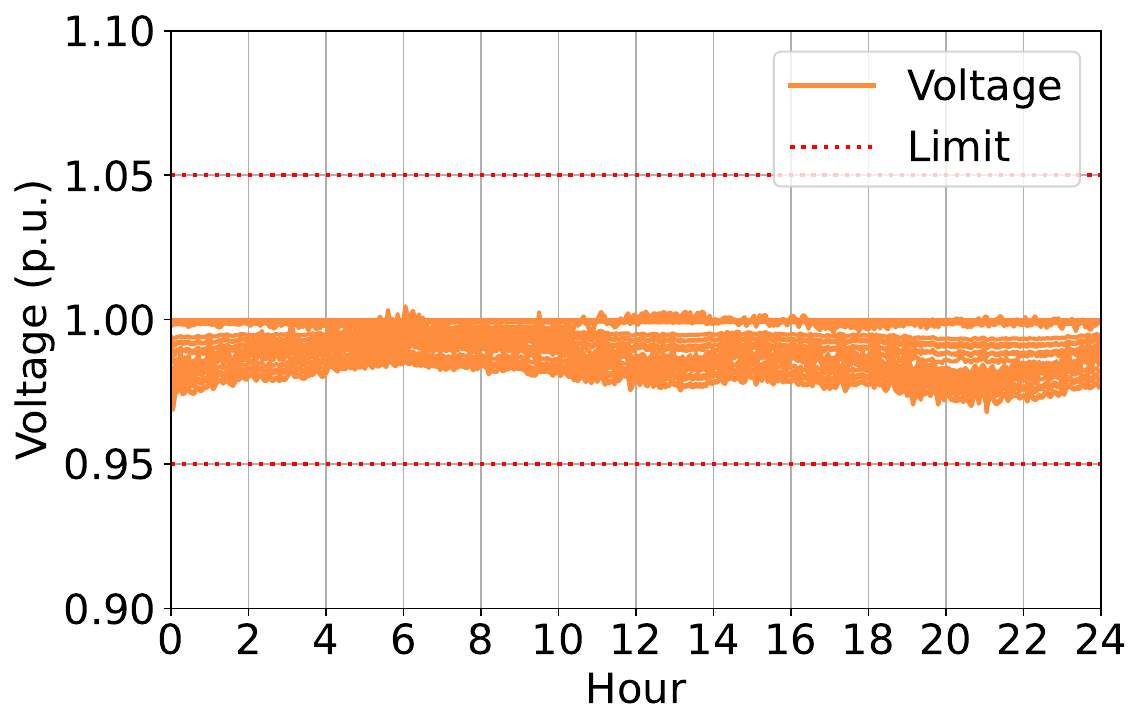}
                \caption{SMFPPO-winter.}
            \end{subfigure}
            \quad
            \begin{subfigure}[b]{0.45\textwidth}
                \centering                
                \includegraphics[width=\textwidth]{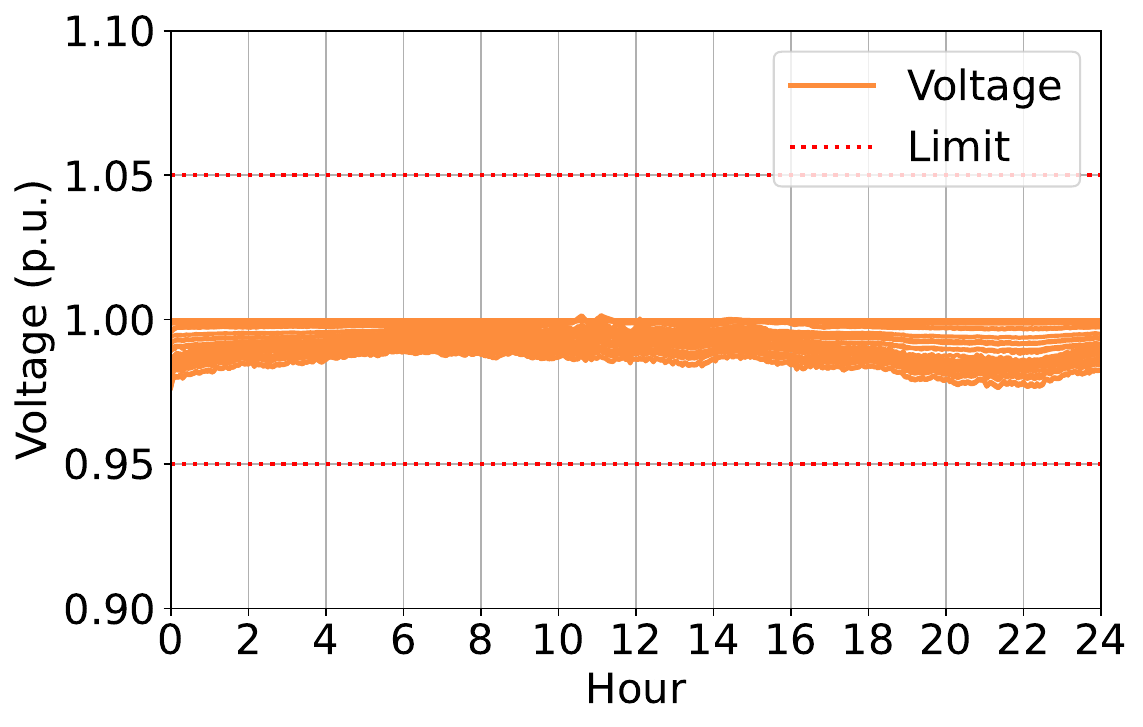}
                \caption{SQDDPG-winter.}
            \end{subfigure}
            \caption{Status of all buses in a day in the 33-bus network. The orange lines are the variation of buses' voltage and red dashed lines are the safety boundaries. Each caption above indicates [method]-[season].}
        \label{fig:case_study_33_detail}
        \end{figure*}
        
        \begin{figure*}[ht!]
            \centering
            \begin{subfigure}[b]{0.45\textwidth}
            	\centering
        	    \includegraphics[width=\textwidth]{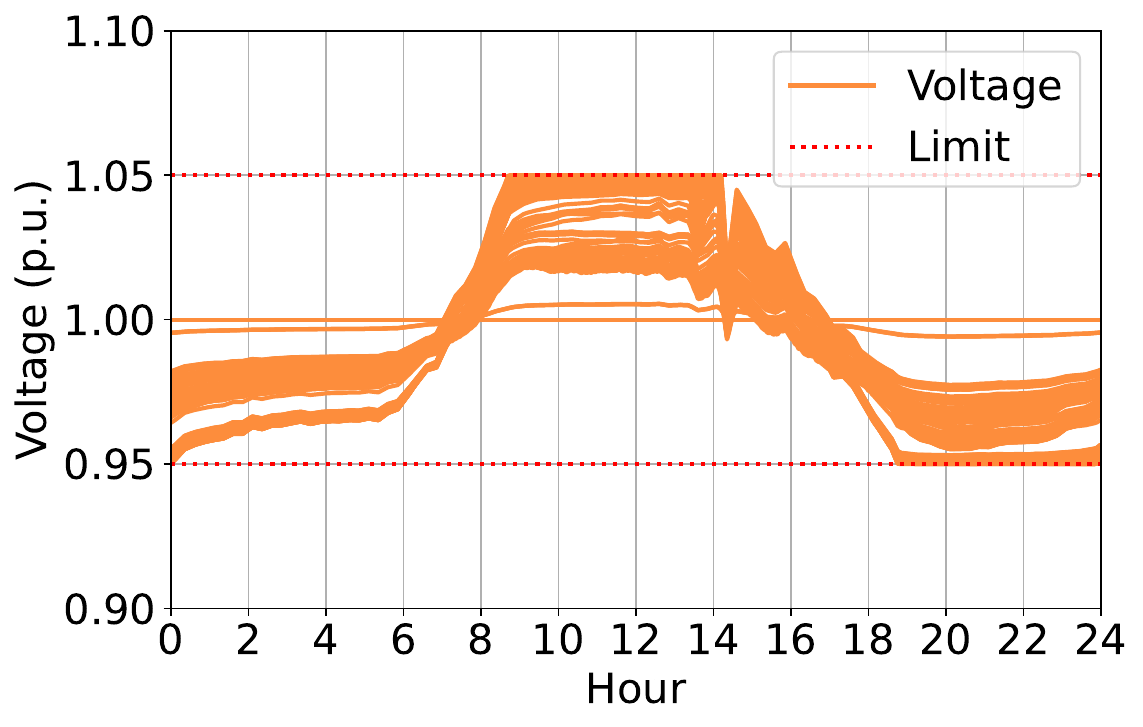}
        	    \caption{OPF-summer.}
            \end{subfigure}
            \quad
            \begin{subfigure}[b]{0.45\textwidth}
                \centering                
                \includegraphics[width=\textwidth]{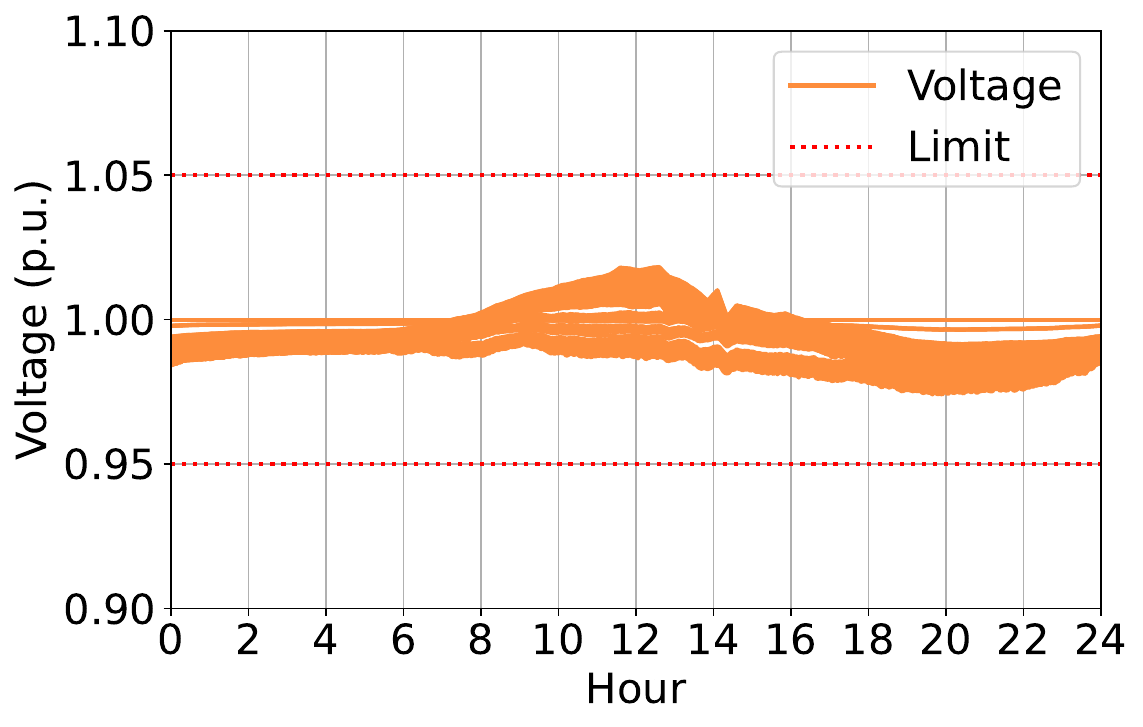}
                \caption{Droop-summer.}
            \end{subfigure}
            \quad
            \begin{subfigure}[b]{0.45\textwidth}
                \centering                
                \includegraphics[width=\textwidth]{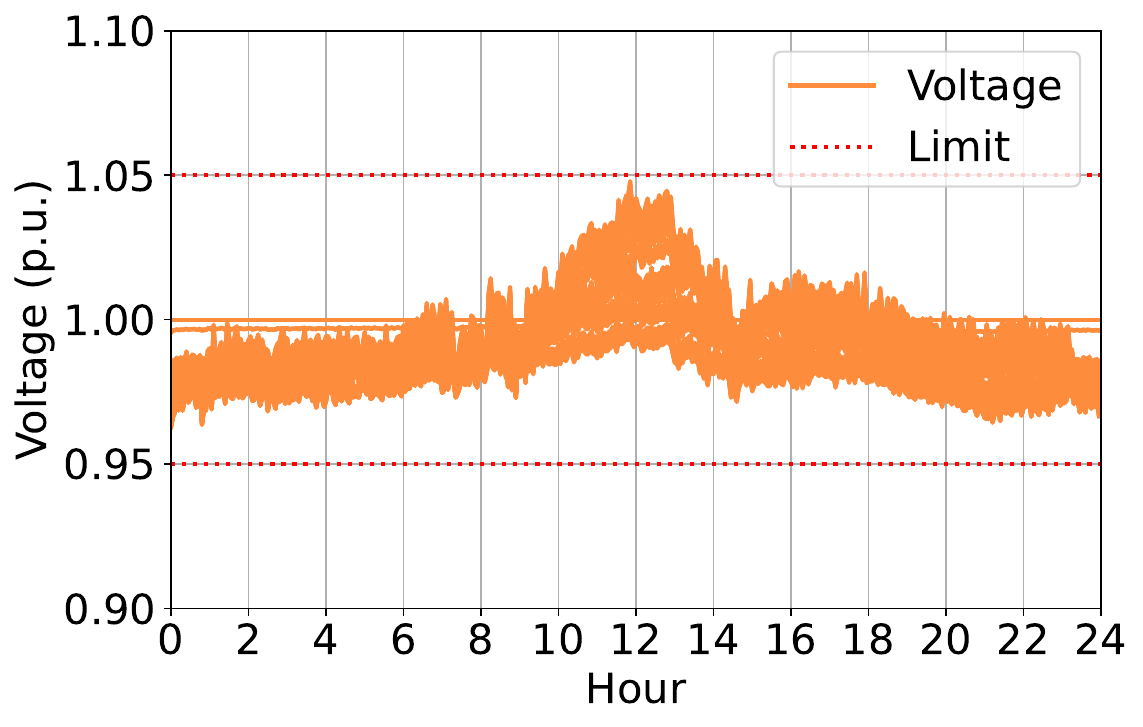}
                \caption{SMFPPO-summer.}
            \end{subfigure}
            \quad
            \begin{subfigure}[b]{0.45\textwidth}
                \centering                
                \includegraphics[width=\textwidth]{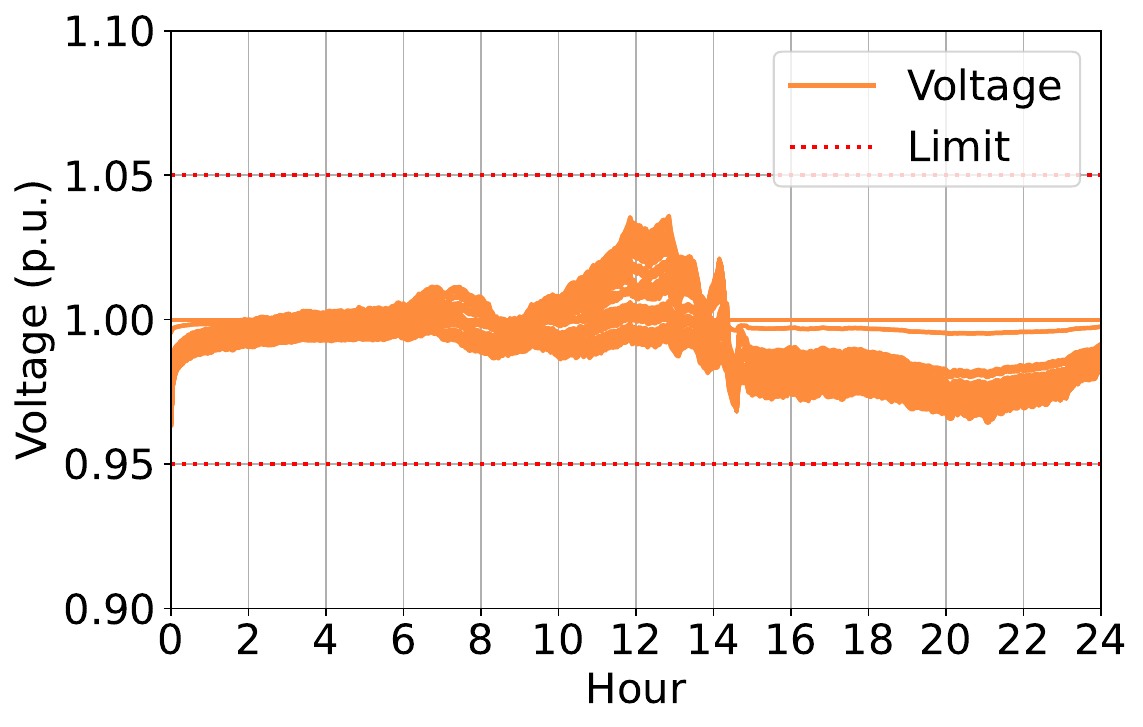}
                \caption{SQDDPG-summer.}
            \end{subfigure}
            \quad
            \begin{subfigure}[b]{0.45\textwidth}
            	\centering
        	    \includegraphics[width=\textwidth]{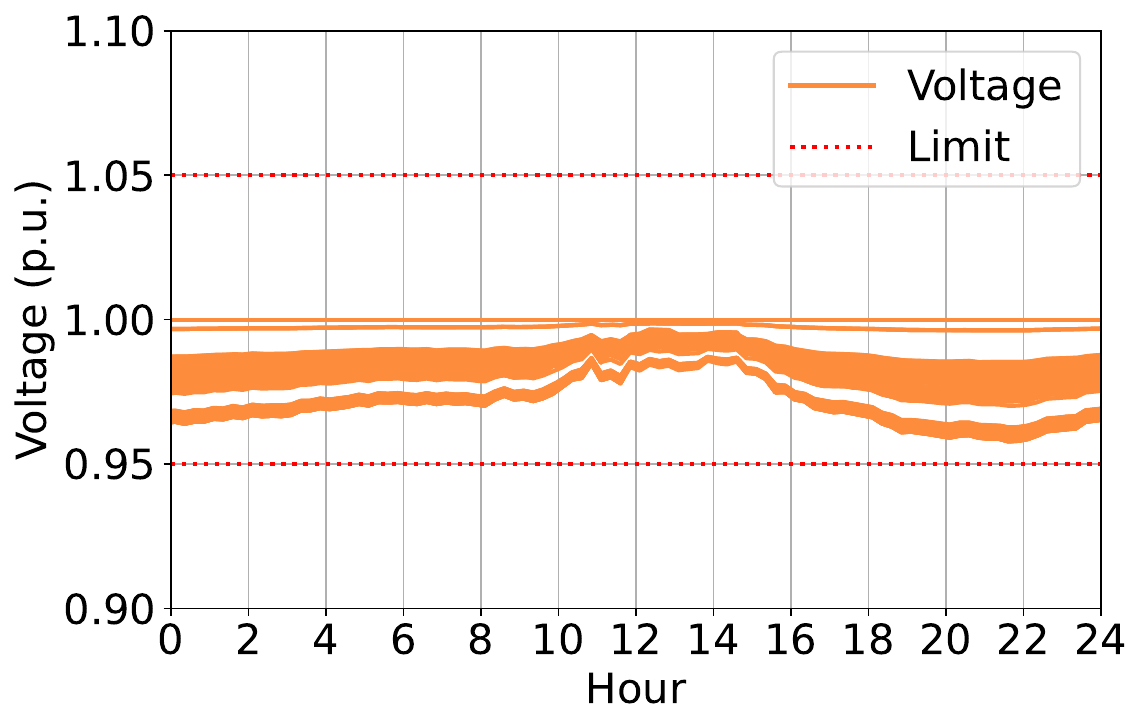}
        	    \caption{OPF-winter.}
            \end{subfigure}
            \quad
            \begin{subfigure}[b]{0.45\textwidth}
                \centering                
                \includegraphics[width=\textwidth]{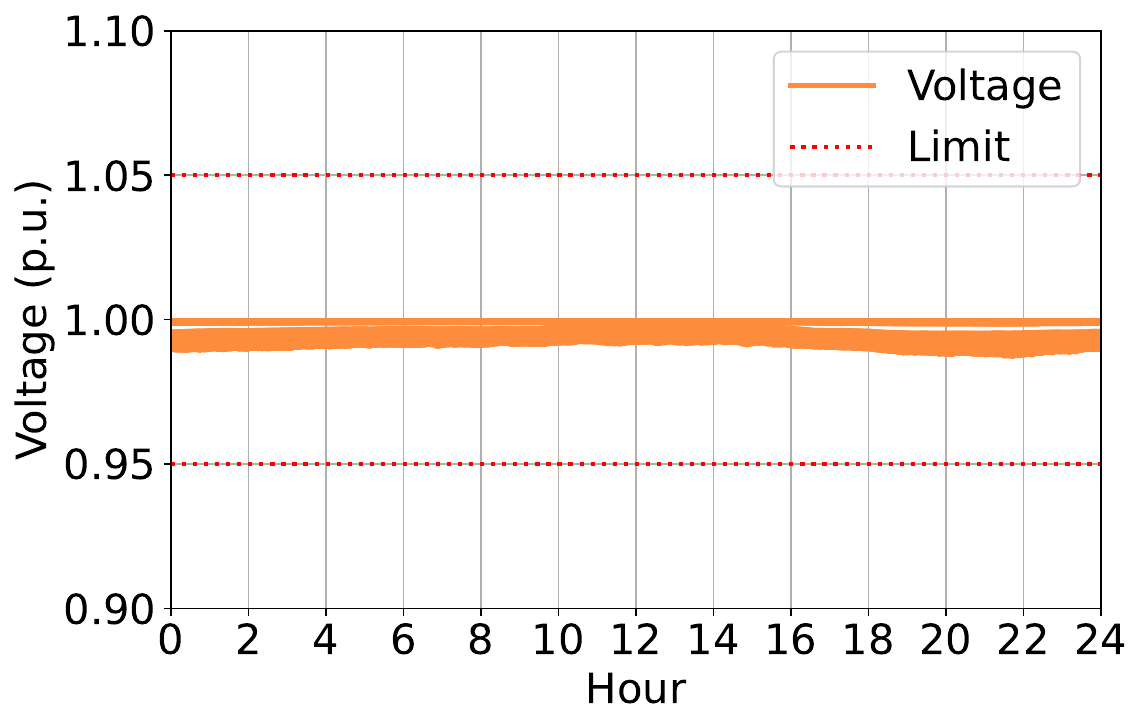}
                \caption{Droop-winter.}
            \end{subfigure}
            \quad
            \begin{subfigure}[b]{0.45\textwidth}
                \centering                
                \includegraphics[width=\textwidth]{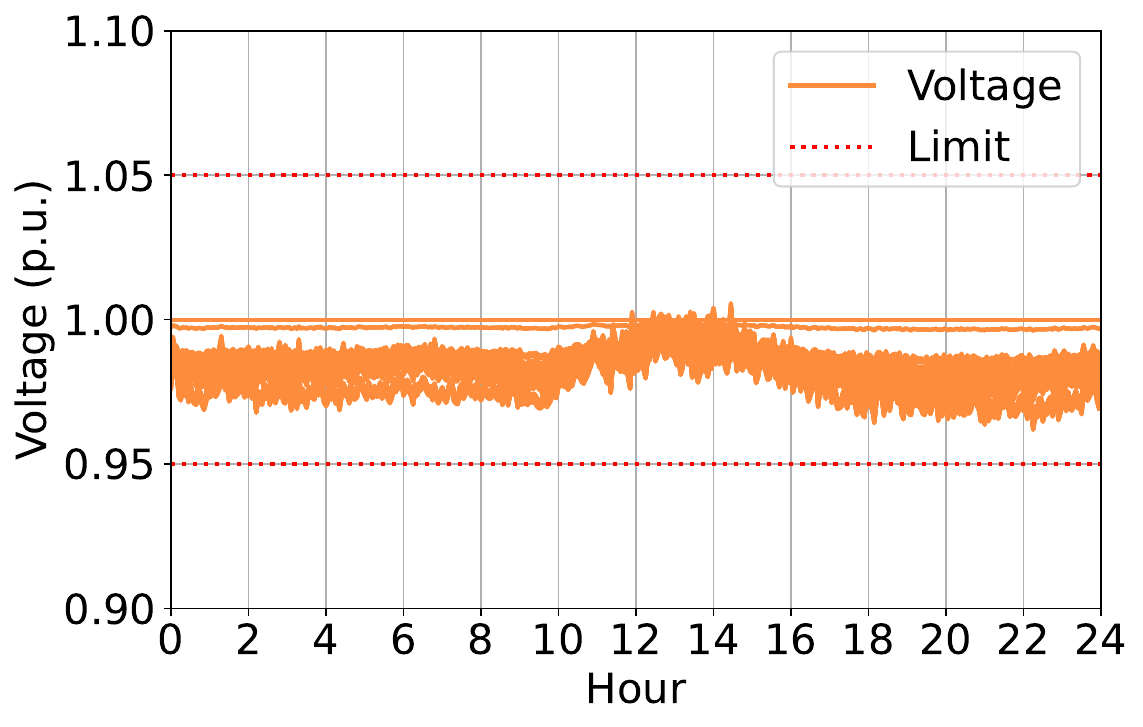}
                \caption{SMFPPO-winter.}
            \end{subfigure}
            \quad
            \begin{subfigure}[b]{0.45\textwidth}
                \centering                
                \includegraphics[width=\textwidth]{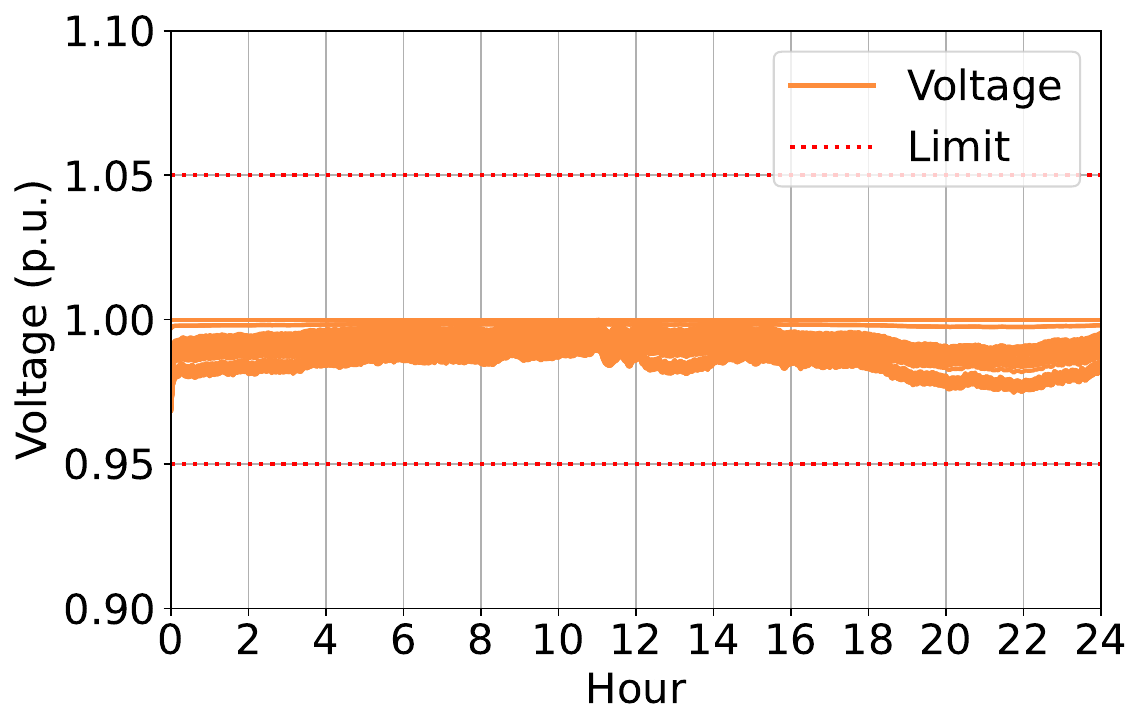}
                \caption{SQDDPG-winter.}
            \end{subfigure}
            \caption{Status of all buses in a day in the 141-bus network. The orange lines are the variation of buses' voltage and red dashed lines are the safety boundaries. Each caption above indicates [method]-[season].}
        \label{fig:case_study_141_detail}
        \end{figure*}
        
        \begin{figure*}[ht!]
            \centering
            \begin{subfigure}[b]{0.45\textwidth}
            	\centering
        	    \includegraphics[width=\textwidth]{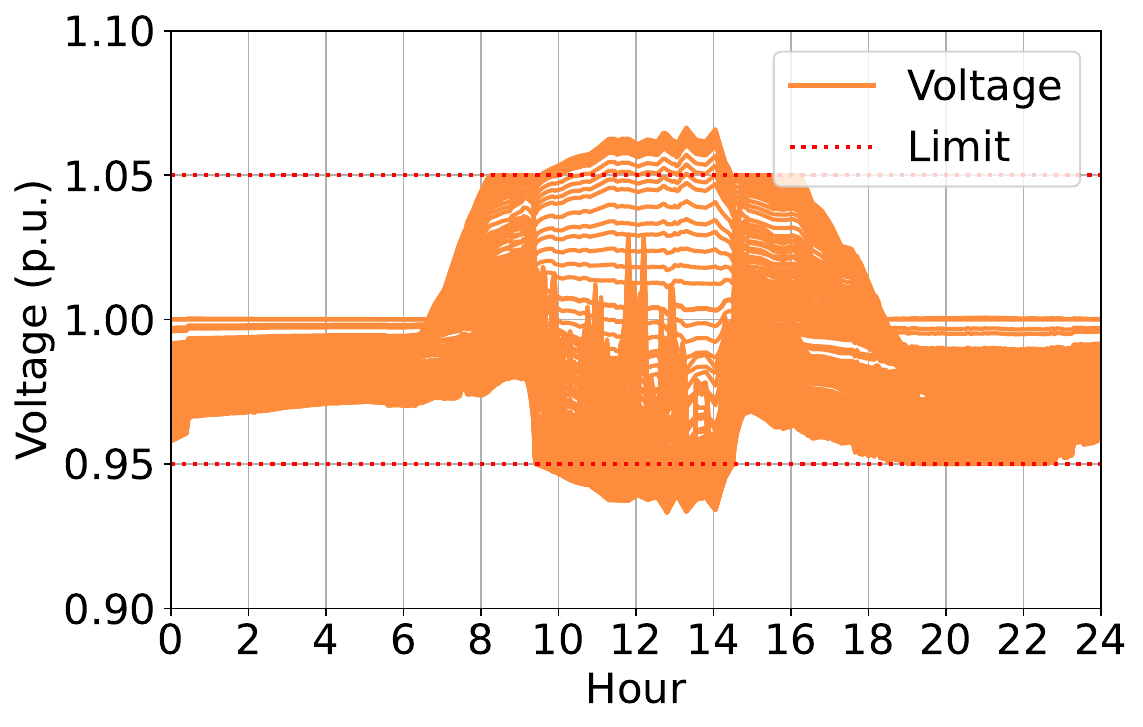}
        	    \caption{OPF-summer.}
            \end{subfigure}
            \quad
            \begin{subfigure}[b]{0.45\textwidth}
                \centering                
                \includegraphics[width=\textwidth]{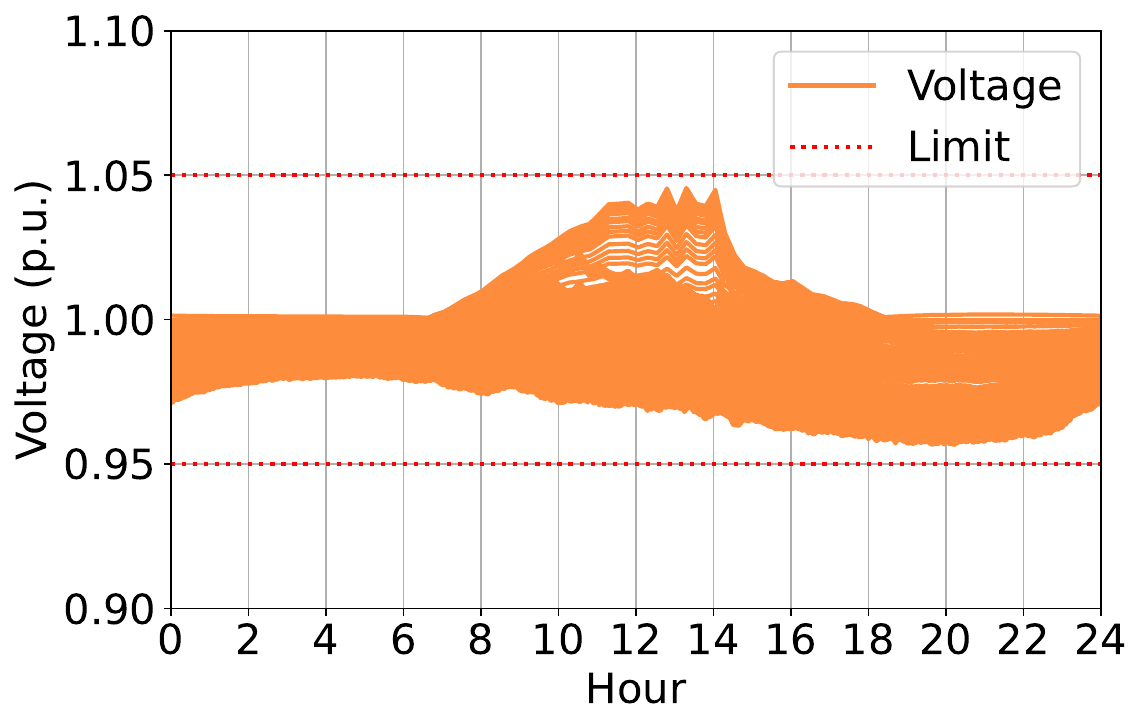}
                \caption{Droop-summer.}
            \end{subfigure}
            \quad
            \begin{subfigure}[b]{0.45\textwidth}
                \centering                
                \includegraphics[width=\textwidth]{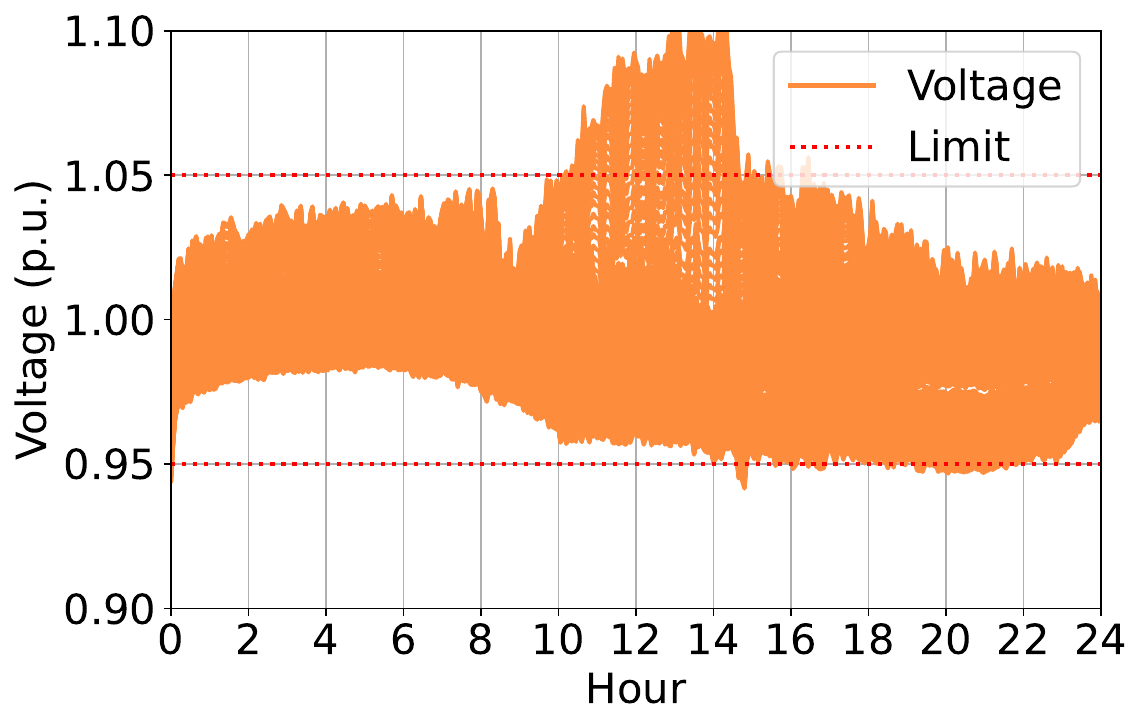}
                \caption{SMFPPO-summer.}
            \end{subfigure}
            \quad
            \begin{subfigure}[b]{0.45\textwidth}
                \centering                
                \includegraphics[width=\textwidth]{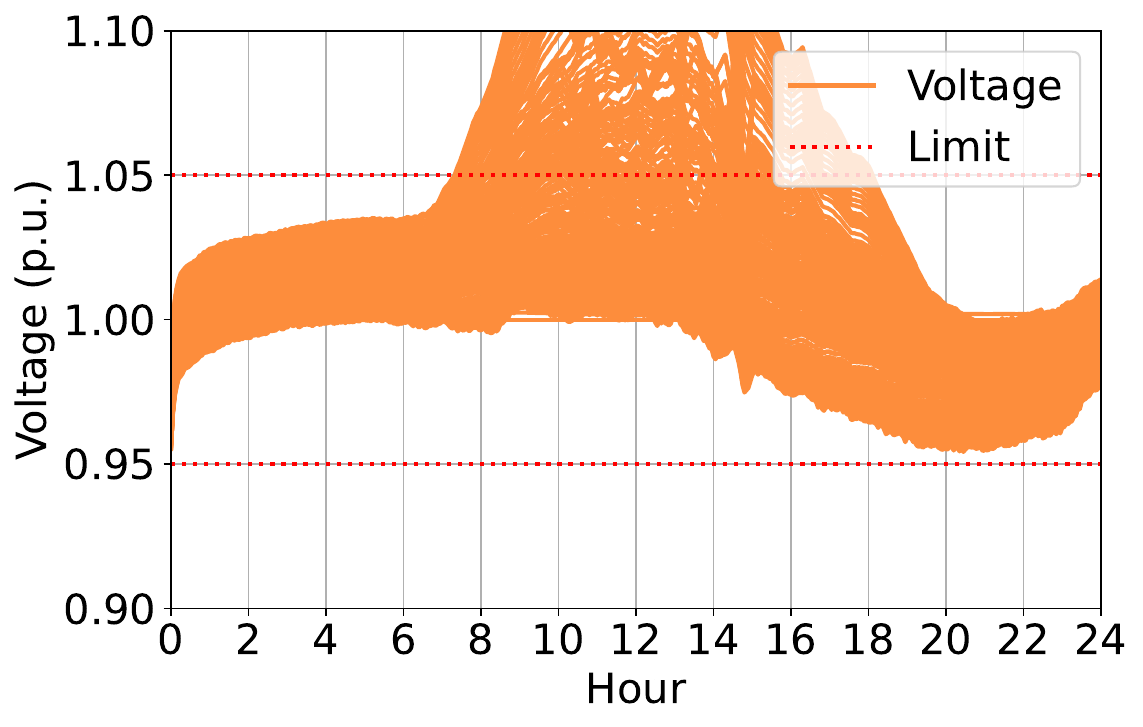}
                \caption{SQDDPG-summer.}
            \end{subfigure}
            \quad
            \begin{subfigure}[b]{0.45\textwidth}
            	\centering
        	    \includegraphics[width=\textwidth]{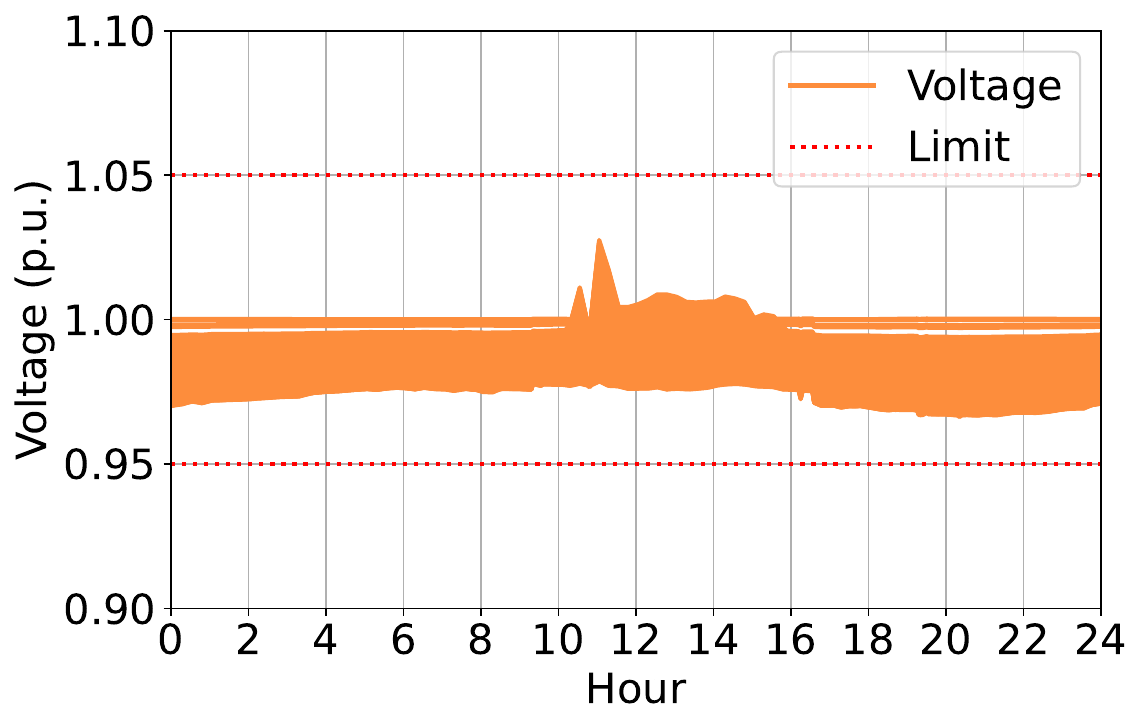}
        	    \caption{OPF-winter.}
            \end{subfigure}
            \quad
            \begin{subfigure}[b]{0.45\textwidth}
                \centering                
                \includegraphics[width=\textwidth]{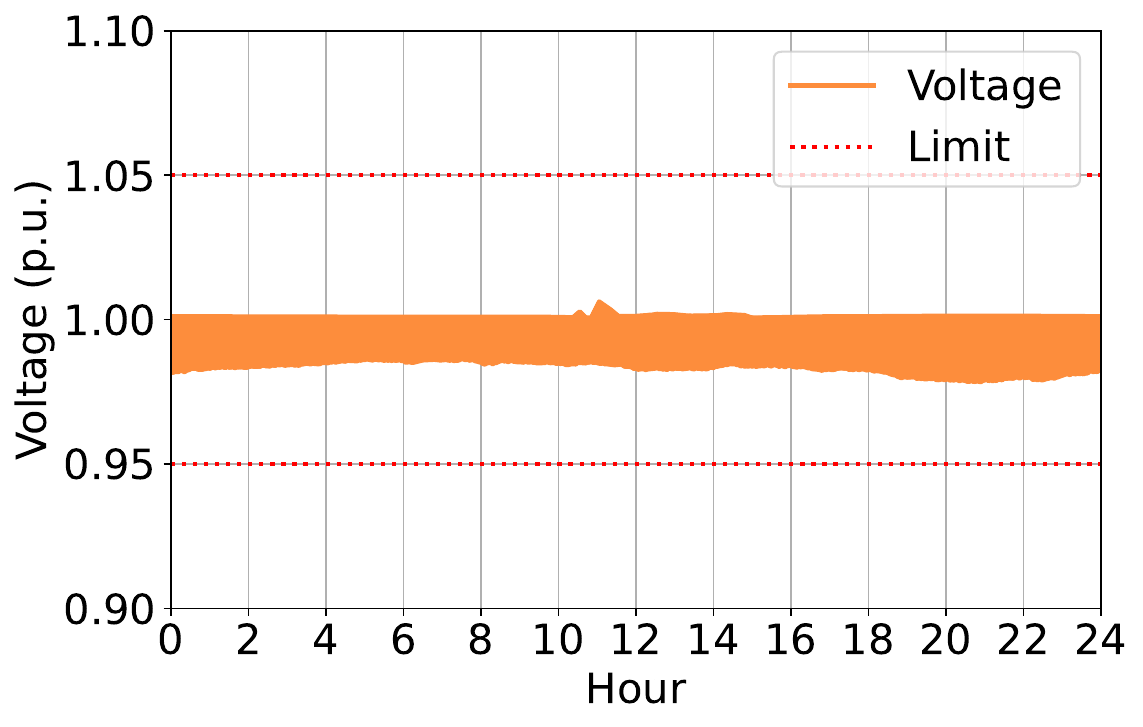}
                \caption{Droop-winter.}
            \end{subfigure}
            \quad
            \begin{subfigure}[b]{0.45\textwidth}
                \centering                
                \includegraphics[width=\textwidth]{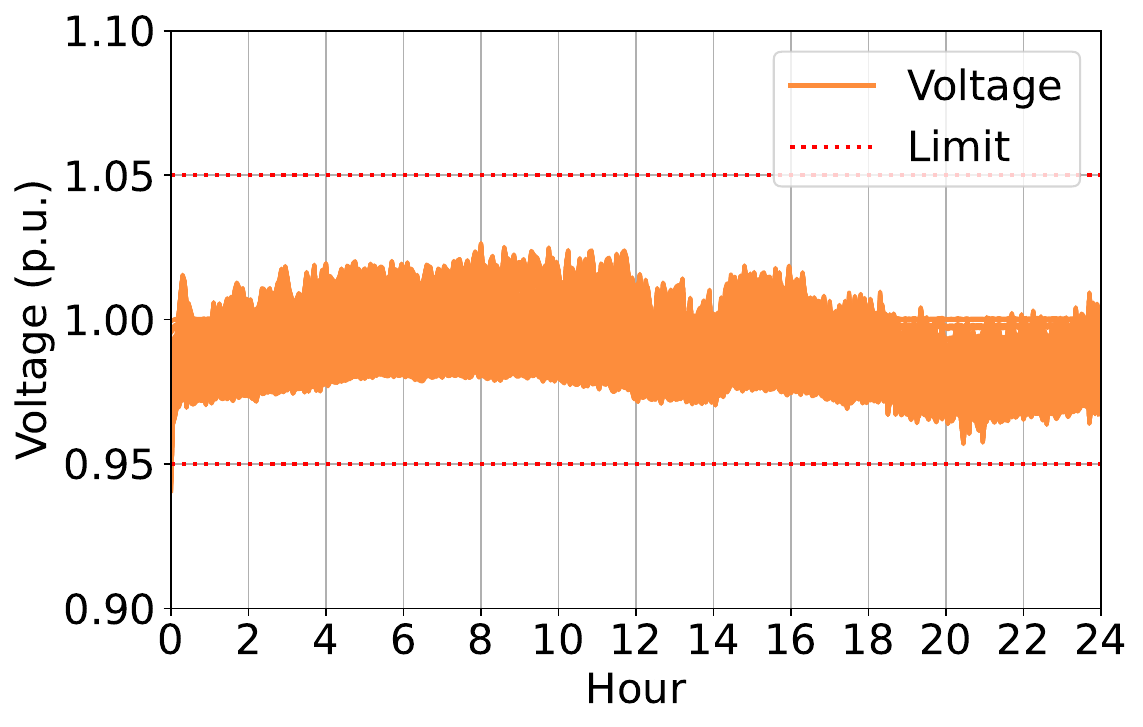}
                \caption{SMFPPO-winter.}
            \end{subfigure}
            \quad
            \begin{subfigure}[b]{0.45\textwidth}
                \centering                
                \includegraphics[width=\textwidth]{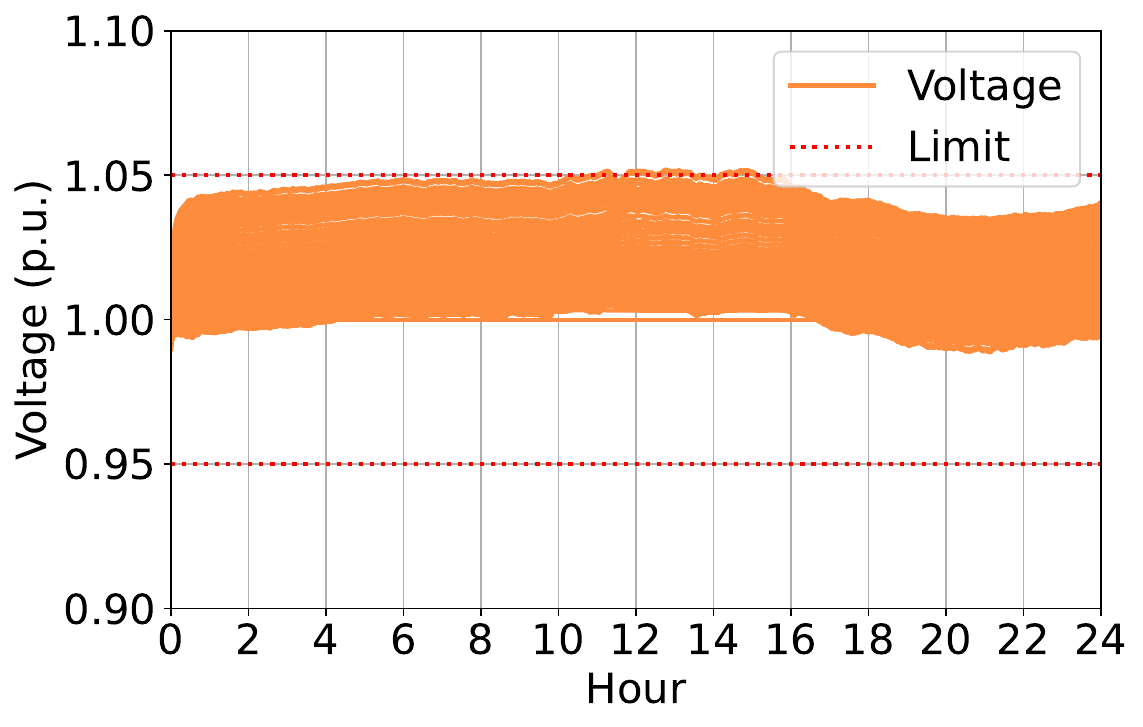}
                \caption{SQDDPG-winter.}
            \end{subfigure}
            \caption{Status of all buses in a day in the 322-bus network. The orange lines are the variation of buses' voltage and red dashed lines are the safety boundaries. Each caption above indicates [method]-[season].}
        \label{fig:case_study_322_detail}
        \end{figure*}
        
        \paragraph{Discussion.} We now discuss the phenomenons that we observe from the simulation results. 
        \begin{itemize}[leftmargin=*]
            \item It is obvious that the voltage barrier function may impact the performance of an algorithm (even with tiny change to the common goal). This may be of general importance as many real-world problems may contain constraints that are not indirect in the objective function. An overall methodology is needed for designing barrier functions in state-constraint MARL.
            \item MARL such as SMFPPO and SQDDPG may scale well to the number of agents and the complexity of networks, and only requires a very low control rate for the active voltage control.
            \item The combination of learning algorithms with domain knowledge is a potential roadmap towards the interpretable MARL. For the active voltage control problem, the domain knowledge may be presented as the network topology, the inner control loop (say droop control), and the load pattern. The exploitation of such domain knowledge reduces the dimensions of MARL exploration space and may offer a lower bound of performance as a guarantee. Encoding domain knowledge such as the network topology as a priori for model-based MARL is also a potential direction.
        \end{itemize}
        
    \subsection{Understanding Markov Shapley Value}
    \label{subsubsec:understanding_markov_shapley_value}
        In this section, we attempt to explore the interpretation of Markov Shapley value for the active voltage control in power distribution networks. Our method to investigate this novel task is firstly giving a hypothesis for the possible implication of Markov Shapley value to the power network and then verifying it with the sample data collected from testing cases. We hypothesize that each agent $i$'s Markov Shapley value is correlated to $p^{PV}_{i} + 0.01 \times q^{PV}_{i}$ which is the possible physical implication that may influence the active voltage control. Recall that $p^{PV}_{i}$ and $q^{PV}_{i}$ are the active power and the reactive power related to the bus where agent $i$ is located. To avoid the Simpson's paradox \cite{pearl2009causality},\footnote{Simpson's paradox is a phenomenon in statistics that a trend appears in several groups of data but disappears or reverses when the groups are combined. This phenomenon is usually caused by confounding variables, so one direct way to solve it is observing the confounding variables.} we separately calculate the Pearson correlation coefficient for each case with respect to the network topology and the season (both of which are possible confounding variables).
        
        \paragraph{Results and Analysis.} The result are shown in Table \ref{tab:correlation_power_network}. It can be seen from the table that SQDDPG is more correlated to the formula $p^{PV}_{i} + 0.01 \times q^{PV}_{i}$ than SMFPPO in the 33-bus network and the 141-bus network. Although the Markov Shapley value of SMFPPO should be theoretically more accurate than SQDDPG, \textit{the excessive approximation error during learning} of SMFPPO may lead to the difficulty of fitting the correct coalition value functions. Owing to the small number of agents and therefore the limited possibilities of functional forms of marginal contribution, SQDDPG can successfully learn the accurate marginal contributions. On the other hand, when the number of agents increase shown in the cases of 322-bus networks, \textit{the inaccurate representation of marginal contributions} in SQDDPG is extremely exaggerated and even cannot estimate the correlation in the positive way. Despite the burden of fitting, SMFPPO can still show the positive correlation in the cases of 322-bus networks. The result raises a dilemma on \textit{the trade-off between the ease of fitting and the accuracy of representation}.
        \begin{table}[ht!]
        \caption{Pearson correlation coefficient between the Markov Shapley value and the physical implication in the power distribution networks such that $p^{PV}_{i} + 0.01 \times q^{PV}_{i}$ in variant test cases. The results are calculated by 480 samples generated by trained multi-agent models. Note that the p-values of the results are within the accepted significance level (i.e. within 5\%). To enable the table to be neat, we choose not to report it in the table.}
        \centering
        \scalebox{1.0}{
        \begin{tabular}{@{}ccccccc@{}}
        \toprule
               & 33-summer & 33-winter & 141-summer & 141-winter & 322-summer & 322-winter \\ \midrule
        SQDDPG & 0.812        & 0.579        & 0.591         & 0.585         & -0.757        & -0.375        \\
        SMFPPO & 0.513        & 0.247        & 0.325         & 0.203         & 0.175         & 0.072         \\
        \bottomrule
        \end{tabular}
        }
        \label{tab:correlation_power_network}
        \end{table}
        
        \paragraph{Discussion.} The above results show the correlation between the Markov Shapley value and the active power adding small portion of reactive power of the bus where it is located. This finding can verify the effectiveness and representative capability (encoding the physical implication) of Markov Shapley value, given a known physical fact in the power distribution networks (i.e., the droop control sets the local active power as the input). From the reverse direction, it is an evidence to let us believe that Markov Shapley value can potentially ``discover'' more physical facts to an environment that is unknown to human beings. The remaining issue is how the linkage between the Markov Shapley value and the ``language'' that is familiar to human beings can be established, which might be a future direction to enhance the interpretability of Markov Shapley value.

\chapter{Conclusion and Future Work} 
\label{chap:conclusion}
    This thesis aims at solving cooperative multi-agent reinforcement learning using credit assignment, which is a longstanding problem in a cooperative game called global reward game. In a global reward game, agents only receive a global reward when performing decentralised policies to interact with the environment. The global reward encodes a shared goal to which all agents need to collaborate. Technically, the agents should jointly optimize the cumulative discounted global rewards (a.k.a. the global value). During training, if only with the global value as a signal to improve (or learn) agents' policies, the resulting joint policy could be inefficient (e.g., only one agent learns a beneficial policy to solve the task, while the other agents perform like dummies). This motivates us to introduce a concept in cooperative game theory called Shapley value as a credit assignment scheme to fairly assign the credit to each agent.
    
    Since the Shapley value is a concept for a cooperative game model in the cooperative game theory called convex game, it cannot be directly applied to the global reward game. To bridge this gap, we firstly extend the convex game and Shapley value to Markov decision process to fit the setting of global reward game, named as Markov convex game and Markov Shapley value, respectively. We prove that Markov Shapley value is a solution to the Markov convex game, agreeing on a solution concept that we extend from the core (i.e. a solution concept for the convex game) named as Markov core. Then, we show that a global reward game with credit assignment can be represented as a Markov convex game (under the grand coalition) with a payoff distribution scheme. This implies that the Markov Shapley value as a payoff distribution scheme solving a Markov convex game can be applied as a solution to the credit assignment in a global reward game. We further prove that Markov Shapley value inherits the fairness of the original Shapley value, which is one of the motivations of employing Markov Shaley value to solve the global reward game. In more details, the fairness property can assist the interpretation of agents' behaviours.
    
    To achieve the optimal global value, each agent should reach the optimal Markov Shapley value. For the ease of incorporating Markov Shapley value into multi-agent reinforcement learning (MARL), we drive its equivalent form called Markov Shapley Q-value. We propose an optimality criterion called Bellman-Shapley optimality equation and an operator called Shapley-Bellman operator, extended from the well known Bellman optimality equation and Bellman operator in Markov decision process and reinforcement learning. The Shapley-Bellman optimality equation describes an evaluation of the optimal Markov Shapley Q-value and the optimal joint policy. We prove that recursively running the Shapley-Bellman operator leads to the Shapley-Bellman optimality equation. 
    % , and the resulting optimal Markov Shapley Q-value is equal among agents, which can be seen as a yardstick to justify the fairness such that each agent will receive the identical credit whatever its role is
    % Note that during the training procedure the Markov Shapley Q-value of each agent is with a different value.
    
    Furthermore, we derive the stochastic approximation of Bellman-Shapley operator and prove its convergence to the optimal Markov Shapley Q-value under some technical assumptions. Based on the additional function approximation of Markov Shapley Q-value via deep learning, we propose two practical MARL algorithms such as Shapley Q-learning (SHAQ) and Shapley Q-value deep deterministic policy gradient (SQDDPG). SHAQ is a value-based algorithm which aims at solving the problems with discrete actions, while SQDDPG is a policy-based algorithm which aims at solving the problems with continuous actions. SHAQ and SQDDPG are testified in the benchmark tasks in the community of machine learning research, such as Predator-Prey, Traffic Junction, and StarCraft Multi-Agent Challenges (SMAC). Both algorithms demonstrate superior performance to the baseline algorithms and the ability of interpreting agents' behaviours. To address the issue of the direct approximation of marginal contribution in SQDDPG, we further propose Shapley model-free proximal policy gradient (SMFPPO). In addition, we extend Markov convex game to partial observability, named as partially observable Markov convex game. We also propose Shapley value iteration and Shapley policy gradient, which support the Shapley value based MARL algorithms in implementation to tackle the partially observable problems.
    
    Finally, we apply the SQDDPG and SMFPPO to a real-world problem in energy networks called active voltage control in power distribution networks. In more details, the main objective is controlling the voltage of all buses in distributed manner in a power distribution network within the safety range. To enable the problem to be compatible with MARL, we are the first one to formally formulate this problem as a Dec-POMDP (i.e. matching each component in power distribution networks to a concept in the Dec-POMDP). In simulation, SMFPPO and SQDDPG generally perform better than the baseline algorithms. Moreover, we compare these two algorithms with the traditional control methods: droop control and optimal power flow (OPF). Due to the extra inner-loop control (i.e. time consuming) in the droop control and the additional global observation and system specifications (i.e., unavailable in rapid-changing scenarios caused by the injection of renewable energy) in the OPF, SMFPPO and SQDDPG cannot beat them in the large-scale scenario. In the small-scale scenarios, both MARL algorithms can reach the same control performance, and yield less power loss than the droop control. Moreover, we investigate the connection between the Markov Shapley Q-value and the physical implication of power distribution networks. In our finding, the Markov Shapley Q-value is correlated to the active power adding the reactive power of the bus where it is located. This shows a motivation of studying a general methodology to connect the Markov Shapley Q-value with the physical implications in the future work.
    
% ==================================
% Conclusions
% ==================================

\section{Summary of Achievements}
\label{sec:summary_of_achievements}
    The thesis has proposed several innovations which push forward the advances of cooperative multi-agent reinforcement learning with credit assignment. The progress is summarized as follows.
    \begin{itemize}
        \item[1.] \textbf{Establishing the connection between cooperative game theory and global reward game.} Convex game in the cooperative game theory is extended to Markov decision process named Markov convex game, and connected with the global reward game under theoretical guarantees. The background and theory behind the convex game are rich in payoff distribution, which makes credit assignment in the global reward game valid with reasonable and solid foundations.
        
        \item[2.] \textbf{Introducing Shapley value into the global reward game.} For the sake of the connection between the Markov convex game and the global reward game, a payoff distribution scheme in the cooperative game theory called Shapley value is valid to be introduced into the global reward game as a credit assignment scheme. The main advantage of Shapley value is that it can assign credit fairly. To make Shapley value fit the global reward game, we extend it to Markov convex game, named as Markov Shapley value. Besides, we derive an equivalent form of Markov Shapley value called Markov Shapley Q-value, which is a more useful form to multi-agent reinforcement learning.
        
        \item[3.] \textbf{Constructing a theoretical framework to describe the Markov Shapley Q-value.} We extend Bellman optimality equation to describe the optimal Markov Shapley Q-value called Shapley-Bellman optimality equation, whereby the reach of the optimal joint policy is guaranteed. Moreover, we propose Shapley-Bellman operator by extending the well known Bellman operator. Recursively running Shapley-Bellman operator is proved to reach the optimal Markov Shapley Q-value. 
        % In contrast to the Markov Shapley Q-value during training which are different across agents, it will probably become equal to each agent. This can be seen as a type of fairness such that no matter the agent's role is, its credit should be equal to others' if the objective of a teamwork is achieved.
        
        \item[4.] \textbf{Proposing three MARL algorithms based on Markov Shapley Q-value.} Based on the theoretical framework of Markov Shapley Q-value, we propose three multi-agent reinforcement learning algorithms named as Shapley Q-learning (SHAQ), Shapley Q-value deep deterministic policy gradient (SQDDPG) and Shapley model-free proximal policy gradient (SMFPPO). SHAQ belongs to the value-based methods, while SQDDPG and SMFPPO belong to the policy-based methods.
        
        \item[5.] \textbf{Generalising Markov convex game to partial observablility to enable Shapley value based MARL algorithms to solve partially observable tasks.} Markov convex game is generalised to partial observability named as partially observable Markov convex game (POMCG), in which we also propose partially observable Shapley value iteration and partially observable Shapley policy gradient. Thanks to these results, the Shapley value based MARL algorithms can be implemented in some tricks to deal with the partially observable tasks.
        
        \item[6.] \textbf{Formulating the active voltage control problem as a Dec-POMDP.} To solve the active voltage control in power distribution networks by MARL, we formulate the problem as a Dec-POMDP which is a partially observable version of the global reward game. For the sake of the theoretical results of the POMCG, it is feasible to leverage the Shapley value based MARL algorithms to solve the problem. In simulation, it shows that the Shapley value based MARL algorithms can partially solve the active voltage control problem, still facing the challenge to the large-scale network. 
        
        \item[7.] \textbf{Releasing the open-source simulator for the active voltage control in power distribution networks.} To bridge the gap between the machine learning community and the power society, we release an open-source environment that is friendly to MARL for simulating the process of the active voltage control in power distribution networks. The machine learning researchers can therefore easily attempt any state-of-the-art MARL algorithm to solve this real-world problem.
    \end{itemize}

% ===================================
% List of publications
% ===================================

% \section{List of Publications}

% The main contents of this thesis are based on the author's publications during the PhD study, as listed below in chronological order:

% \textbf{Journal Papers:}

% [J1] ...

% \textbf{Conference Paper:}

% [C1] ...

% \textbf{Papers Under Review or to be Submitted Soon:}

% A few more papers based on the work in the thesis are currently under review or will be submitted soon.

% \textbf{Notes:}

% [J1] is the journal paper published from the work in chapter 2... 

\section{Future Work}
\label{sec:future_work}
    Although the thesis has achieved convincing breakthroughs in multi-agent reinforcement learning with credit assignment and its application in the active voltage control problem, there still exist multiple issues to be tackled in the future work.
    
    \paragraph{Alignment between Markov Shapley value and Physical Implication.} Although we have shown some implication related to Markov Shapley value, these are based on the hypotheses from human beings. In other words, one should give a hypothesis and then use some statistical metric (e.g. Pearson correlation coefficient) to verify the correlation. However, the successes of special cases cannot guarantee the success in general situations. To address this issue, it is possible to incorporate the prior knowledge about the physical systems or other environments into the construction of marginal contributions. Thereby, it is potential to align the existing theoretical results in physical systems or other environments with the concepts of Shapley value. As a result, Markov Shapley value could become understandable to human beings with physical implication.
    
    \paragraph{Fully Decentralised Training under Credit Assignment.} The training paradigm in this thesis mainly concentrates on the centralised training, which means that agents should share the collected data in a central hub during training to gather coalition information for estimating the related term (e.g. the marginal contribution). Nevertheless, in many cases the data gathering is not allowed, e.g., due to the privacy problems and communication difficulties. Therefore, it is necessary to derive a methodology that can well estimate the coalition information based on the history of local information to each agent. To this end, the Markov Shapley value is able to be estimated in the fully decentralised manner. The main challenge here is how a predictor for an agent should be designed, which may be influenced by the uncertainty of other agents' behaviours. Technically, this is highly related to a cutting-edge research problem called generalizable multi-agent reinforcement learning.
    
    \paragraph{Improving Voltage Barrier Function.} About the application aspect of this thesis, one future work that could be improving the design of voltage barrier functions. As our simulation results show, the reward shape such as voltage barrier function seriously affects the result of MARL algorithms. On the other hand, the design of voltage barrier functions could influence the trade-off between the voltage safety and the power loss, which forms a dilemma. For this reason, it is necessary to design a voltage barrier function that solves both the dilemma in the voltage control problem itself and the generalizability to different MARL algorithms.

\section{Publications}
\label{sec:publications}
    The works introduced in this thesis are constituted of the following publications \cite{Wang_2020,wang2021multi,wang2022shaq}.
    \begin{enumerate}
        \item \textbf{Wang, Jianhong, Yuan Zhang, Tae-Kyun Kim, and Yunjie Gu. "Shapley Q-value: A local reward approach to solve global reward games." In Proceedings of the AAAI Conference on Artificial Intelligence, vol. 34, no. 05, pp. 7285-7292. 2020.} $\rightarrow$ This paper extended the convex game to Markov decision process for the first time, meanwhile, it proved that a global reward game can be represented by a Markov convex game under the grand coalition. Moreover, this paper proposed a novel algorithm named SQDDPG, which firstly incorporated Shapley value into multi-agent reinforcement learning heuristically. The contents are covered by Chapter 3 and 4 in this thesis.
        
        \item \textbf{Wang, Jianhong, Wangkun Xu, Yunjie Gu, Wenbin Song, and Tim C. Green. "Multi-agent reinforcement learning for active voltage control on power distribution networks." Advances in Neural Information Processing Systems 34 (2021): 3271-3284.} $\rightarrow$ This paper firstly defined the active voltage control problem in power distribution networks as a decentralised partially observable Markov decision process (Dec-POMDP) \cite{oliehoek2012decentralized} in a rigorous manner. Additionally, we released an open-source environment (simulator) for this problem and evaluated the state-of-the-art multi-agent algorithms on it. The contents are covered by Chapter 5 in this thesis.

        % Based on the above results, an approximation with the application of deep neural networks was proposed. 
        \item \textbf{Wang, Jianhong, Yuan Zhang, Yunjie Gu, and Tae-Kyun Kim. "Shaq: Incorporating shapley value theory into multi-agent q-learning." Advances in Neural Information Processing Systems 35 (2022): 5941-5954.} $\rightarrow$ This paper formally defined and analysed the generalisation of Shapley value to Markov decision process called Markov Shapley value, which filled in the gap left in \cite{Wang_2020}. Furthermore, the Markov Shapley value was incorporated into multi-agent Q-learning called SHAQ. Its optimality and reliability (the convergence property) were rigorously analysed and proved. The contents are covered by Chapter 3 and 4.
    \end{enumerate}

\appendix
% appendices come here

\chapter{Mathematical Proofs}

% \section{section 1}
% contents.
\section{Proof of Marginal Contribution}
\label{sec:proof_of_marginal_contribution}
    \begingroup
    \def\theproposition{\ref{prop:optimal_action_coalition_marginal_contribution}}
    \begin{proposition}
        Agent $i$'s action marginal contribution can be derived as follows:
        \begin{equation}
            \begin{split}
                \Upphi_{i}(\mathbf{s}, a_{i} | \mathcal{C}_{i}) 
                = \max_{\mathbf{a}_{\scriptscriptstyle \mathcal{C}_{i}}} Q^{\pi_{\mathcal{C}_{i}}^{*}}(\mathbf{s}, \mathbf{a}_{\scriptscriptstyle\mathcal{C}_{i} \cup \{i\}})
                - \max_{\mathbf{a}_{\mathcal{C}_{i}}} Q^{\pi_{\mathcal{C}_{i}}^*}(\mathbf{s}, \mathbf{a}_{\scriptscriptstyle\mathcal{C}_{i}}).
            \end{split}
        % \label{eq:marginal_contribution_q}
        \end{equation}
    \end{proposition}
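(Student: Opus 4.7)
The plan is to lift the policy-level marginal contribution in Definition~\ref{def:marginal_contribution} to an action-level object by applying the standard value-to-Q identity on each of the two terms. I would first recall that $\Phi_{i}(\mathbf{s}\mid\mathcal{C}_{i}) = \max_{\pi_{\mathcal{C}_{i}}} V^{\pi_{\mathcal{C}_{i}\cup\{i\}}}(\mathbf{s}) - \max_{\pi_{\mathcal{C}_{i}}} V^{\pi_{\mathcal{C}_{i}}}(\mathbf{s})$ and note that by Assumption~\ref{assm:assumption_for_joint_policy_factorisation} every coalition policy factorises into disjoint decentralised policies and each coalition value function is a well-defined characteristic function. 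This lets me treat the inner optimisation as a joint optimisation over the individual policies belonging to $\mathcal{C}_{i}$.

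Next, I would invoke the familiar reinforcement-learning identity that, for any coalition $\mathcal{C}$ whose policy is chosen optimally, $\max_{\pi_{\mathcal{C}}} V^{\pi_{\mathcal{C}}}(\mathbf{s}) = \max_{\mathbf{a}_{\mathcal{C}}} Q^{\pi_{\mathcal{C}}^{*}}(\mathbf{s},\mathbf{a}_{\mathcal{C}})$, which follows because an optimal deterministic policy assigns probability one to the action achieving the maximum Q-value and because policy-value equations give $V^{\pi}(\mathbf{s})=\mathbb{E}_{\mathbf{a}\sim\pi(\cdot\mid\mathbf{s})}[Q^{\pi}(\mathbf{s},\mathbf{a})]$. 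Applied to the subtrahend this directly gives the second term $\max_{\mathbf{a}_{\mathcal{C}_{i}}} Q^{\pi_{\mathcal{C}_{i}}^{*}}(\mathbf{s},\mathbf{a}_{\mathcal{C}_{i}})$. Applied to the minuend, where the outer $\max$ is only over $\pi_{\mathcal{C}_{i}}$ but the value function is that of the larger coalition $\mathcal{C}_{i}\cup\{i\}$, I obtain $\max_{\mathbf{a}_{\mathcal{C}_{i}}} Q^{\pi_{\mathcal{C}_{i}}^{*}}(\mathbf{s},\mathbf{a}_{\mathcal{C}_{i}\cup\{i\}})$; here the notation $\pi_{\mathcal{C}_{i}}^{*}$ is exactly the ``optimal on the sub-coalition, suboptimal (or fixed) on the rest'' convention introduced after Eq.~\ref{eq:mcg_assumption}, so agent $i$'s contribution appears through its action argument rather than through the optimisation.

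Finally, to refine the policy-level marginal contribution into the action-level object $\Upphi_{i}(\mathbf{s},a_{i}\mid\mathcal{C}_{i})$, I would condition on agent $i$'s action being $a_{i}$ rather than being marginalised out by agent $i$'s inherent policy (Assumption~\ref{assm:agent_policy_assumption}). This amounts to disaggregating the expectation over $a_{i}$ in the first term and reading the integrand at a specific $a_{i}$, producing the announced expression. The equality is consistent because the subtrahend does not depend on $a_{i}$, so the action-level contribution inherits the same structure as its policy-level parent.

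The main obstacle I expect is bookkeeping the precise semantics of $Q^{\pi_{\mathcal{C}_{i}}^{*}}(\mathbf{s},\mathbf{a}_{\mathcal{C}_{i}\cup\{i\}})$: the superscript indicates optimal play only on $\mathcal{C}_{i}$, while $a_{i}$ is the free variable we want to evaluate the action marginal contribution at, and the $\max_{\mathbf{a}_{\mathcal{C}_{i}}}$ covers only the sub-coalition. Making this split clean, rather than computational, is the whole content of the proof; the remaining manipulations are immediate consequences of the coalition-level Bellman relation.
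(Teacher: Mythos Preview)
Your proposal is correct and follows essentially the same route as the paper: expand each term of $\Phi_{i}(\mathbf{s}\mid\mathcal{C}_{i})$ via the $V$-to-$Q$ identity $\max_{\pi_{\mathcal{C}}}V^{\pi_{\mathcal{C}}}(\mathbf{s})=\max_{\mathbf{a}_{\mathcal{C}}}Q^{\pi_{\mathcal{C}}^{*}}(\mathbf{s},\mathbf{a}_{\mathcal{C}})$, observe that the subtrahend does not involve $a_{i}$, and then read off the action-level object by leaving $a_{i}$ free. The paper's proof is exactly this, with the only additional content being an explicit chain of equalities passing through $\sum_{\mathbf{a}}\pi(\mathbf{a}\mid\mathbf{s})Q(\mathbf{s},\mathbf{a})$ before taking the max, and a further display computing $\max_{\pi_{i}}\Upphi_{i}$ (which goes slightly beyond the proposition as stated).
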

    \begin{proof}
        We now rewrite $\max_{\pi_{\mathcal{C}_{i}}} V^{\pi_{\mathcal{C}_{i} \cup \{i\}}}(\mathbf{s})$ as follows:
        \begin{align}
            \max_{\pi_{\mathcal{C}_{i}}} V^{\pi_{\mathcal{C}_{i} \cup \{i\}}}(\mathbf{s}) &= \max_{\pi_{\mathcal{C}_{i}}} \sum_{\mathbf{a}_{\scriptscriptstyle\mathcal{C}_{i} \cup \{i\}}} \pi_{\scriptscriptstyle\mathcal{C}_{i} \cup \{i\}}(\mathbf{a}_{\scriptscriptstyle\mathcal{C}_{i} \cup \{i\}} | \mathbf{s}) \ Q^{\pi_{\mathcal{C}_{i} \cup \{i\}}}(\mathbf{s}, \mathbf{a}_{\scriptscriptstyle\mathcal{C}_{i} \cup \{i\}}) \nonumber \\
            &= \max_{\mathbf{a}_{\mathcal{C}_{i}}} \max_{\pi_{\mathcal{C}_{i}}} Q^{\pi_{\mathcal{C}_{i} \cup \{i\}}}(\mathbf{s}, \mathbf{a}_{\scriptscriptstyle\mathcal{C}_{i} \cup \{i\}}) \nonumber \\
            &\triangleq \max_{\mathbf{a}_{\mathcal{C}_{i}}} Q^{\pi_{\mathcal{C}_{i}}^{*}}(\mathbf{s}, \mathbf{a}_{\scriptscriptstyle\mathcal{C}_{i} \cup \{i\}}).
            \label{eq:V_rewirte_1}
        \end{align}
        
        Similarly, we rewrite $\max_{\pi_{\mathcal{C}_{i}}} V^{\pi_{\mathcal{C}_{i}}}(\mathbf{s})$ as follows:
        \begin{align}
            \max_{\pi_{\mathcal{C}_{i}}} V^{\pi_{\mathcal{C}_{i}}}(\mathbf{s}) = \max_{\mathbf{a}_{\mathcal{C}_{i}}} \max_{\pi_{\mathcal{C}_{i}}} Q^{\pi_{\mathcal{C}_{i}}}(\mathbf{s}, \mathbf{a}_{\scriptscriptstyle\mathcal{C}_{i}}) = \max_{\mathbf{a}_{\mathcal{C}_{i}}} Q^{\pi_{\mathcal{C}_{i}}^{*}}(\mathbf{s}, \mathbf{a}_{\scriptscriptstyle\mathcal{C}_{i}}).
            \label{eq:V_rewrite_2}
        \end{align}
        
        Since $\max_{\pi_{\mathcal{C}_{i}}} V^{\pi_{\mathcal{C}_{i}}}(\mathbf{s})$ is irrelevant to $\mathit{a}_{i}$, by Eq.~\ref{eq:V_rewirte_1} and \ref{eq:V_rewrite_2} we can get that
        \begin{equation}
            \Upphi_{i}(\mathbf{s}, a_{i}|\mathcal{C}_{i}) = \max_{\mathbf{a}_{\mathcal{C}_{i}}} Q^{\pi_{\mathcal{C}_{i}}^{*}}(\mathbf{s}, \mathbf{a}_{\scriptscriptstyle\mathcal{C}_{i} \cup \{i\}})
            - \max_{\mathbf{a}_{\mathcal{C}_{i}}} Q^{\pi_{\mathcal{C}_{i}}^{*}}(\mathbf{s}, \mathbf{a}_{\scriptscriptstyle\mathcal{C}_{i}}).
        \label{eq:coalition_marginal_contribution_action}
        \end{equation}
        
        By Eq.~\ref{eq:coalition_marginal_contribution_action}, we can get the following result such that
        \begin{align}
            \Upphi_{i}^{*}(\mathbf{s}, a_{i}|\mathcal{C}_{i}) &= \max_{\pi_{i}} \Upphi_{i}(\mathbf{s}, a_{i}|\mathcal{C}_{i}) \nonumber \\
            &= \max_{\pi_{i}} \bigg\{ \max_{\mathbf{a}_{\mathcal{C}_{i}}} Q^{\pi_{\mathcal{C}_{i}}^{*}}(\mathbf{s}, \mathbf{a}_{\scriptscriptstyle\mathcal{C}_{i} \cup \{i\}})
            - \max_{\mathbf{a}_{\mathcal{C}_{i}}} Q^{\pi_{\mathcal{C}_{i}}^{*}}(\mathbf{s}, \mathbf{a}_{\scriptscriptstyle\mathcal{C}_{i}}) \bigg\} \nonumber \\
            &= \max_{\pi_{i}} \bigg\{ \max_{\mathbf{a}_{\mathcal{C}_{i}}} \max_{\pi_{\mathcal{C}_{i}}} Q^{\pi_{\mathcal{C}_{i} \cup \{i\}}}(\mathbf{s}, \mathbf{a}_{\scriptscriptstyle\mathcal{C}_{i} \cup \{i\}}) -  \max_{\mathbf{a}_{\mathcal{C}_{i}}} \max_{\pi_{\mathcal{C}_{i}}} Q^{\pi_{\mathcal{C}_{i}}}(\mathbf{s}, \mathbf{a}_{\scriptscriptstyle\mathcal{C}_{i}}) \bigg\} \nonumber \\
            &= \max_{\pi_{i}} \max_{\mathbf{a}_{\mathcal{C}_{i}}} \max_{\pi_{\mathcal{C}_{i}}} Q^{\pi_{\mathcal{C}_{i} \cup \{i\}}}(\mathbf{s}, \mathbf{a}_{\scriptscriptstyle\mathcal{C}_{i} \cup \{i\}}) - \max_{\mathbf{a}_{\mathcal{C}_{i}}} \max_{\pi_{\mathcal{C}_{i}}} Q^{\pi_{\mathcal{C}_{i}}}(\mathbf{s}, \mathbf{a}_{\scriptscriptstyle\mathcal{C}_{i}}) \nonumber \\
            &= \max_{\mathbf{a}_{\mathcal{C}_{i}}} \max_{\pi_{\mathcal{C}_{i} \cup \{i\}}} Q^{\pi_{\mathcal{C}_{i} \cup \{i\}}}(\mathbf{s}, \mathbf{a}_{\scriptscriptstyle\mathcal{C}_{i} \cup \{i\}}) - \max_{\mathbf{a}_{\mathcal{C}_{i}}} \max_{\pi_{\mathcal{C}_{i}}} Q^{\pi_{\mathcal{C}_{i}}}(\mathbf{s}, \mathbf{a}_{\scriptscriptstyle\mathcal{C}_{i}}) \nonumber \\
            &= \max_{\mathbf{a}_{\mathcal{C}_{i}}} Q^{\pi_{\mathcal{C}_{i} \cup \{i\}}^{*}}(\mathbf{s}, \mathbf{a}_{\scriptscriptstyle\mathcal{C}_{i} \cup \{i\}})
            - \max_{\mathbf{a}_{\mathcal{C}_{i}}} Q^{\pi_{\mathcal{C}_{i}}^{{*}}}(\mathbf{s}, \mathbf{a}_{\scriptscriptstyle\mathcal{C}_{i}}).
        \label{eq:optimal_marginal_q}
        \end{align}
        The proof is completed.
    \end{proof}
    
    \begingroup
    \def\theproposition{\ref{prop:condition_coalition_marginal_contribution}}
    \begin{proposition}
        $\forall \mathcal{C}_{i} \ \mathlarger{\mathlarger{\subseteq}} \ \mathcal{N}$ and $\forall \mathbf{s} \in \mathcal{S}$, Eq.~\ref{eq:mcg_assumption} is satisfied if and only if $\max_{\pi_{i}} \Phi_{i}(\mathbf{s}|\mathcal{C}_{i}) \geq 0$.
    \end{proposition}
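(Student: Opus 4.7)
The plan is to prove the two directions separately. The forward direction $(\Rightarrow)$ follows by a one-step specialisation of Eq.~\ref{eq:mcg_assumption}: I would set $\mathcal{C}_m=\mathcal{C}_i$ and $\mathcal{C}_k=\{i\}$ (so that $\mathcal{C}_\cup=\mathcal{C}_i\cup\{i\}$ and disjointness holds whenever $i\notin\mathcal{C}_i$), which gives
\begin{equation*}
    \max_{\pi_{\mathcal{C}_i\cup\{i\}}} V^{\pi_{\mathcal{C}_i\cup\{i\}}}(\mathbf{s}) \ \geq\ \max_{\pi_{\mathcal{C}_i}} V^{\pi_{\mathcal{C}_i}}(\mathbf{s})\ +\ \max_{\pi_i} V^{\pi_{\{i\}}}(\mathbf{s}).
\end{equation*}
Because the coalition reward takes values in $[0,\infty)$, the singleton optimum satisfies $\max_{\pi_i} V^{\pi_{\{i\}}}(\mathbf{s})\geq V^{\pi_\emptyset}(\mathbf{s})=0$, and rearranging together with Definition \ref{def:marginal_contribution} delivers $\max_{\pi_i}\Phi_i(\mathbf{s}|\mathcal{C}_i)\geq 0$.

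For the reverse direction $(\Leftarrow)$, I plan to proceed by induction on $|\mathcal{C}_k|$ using the telescoping identity
\begin{equation*}
    \max_{\pi_\mathcal{C}} V^{\pi_\mathcal{C}}(\mathbf{s}) \ =\ \sum_{\ell=1}^{|\mathcal{C}|} \max_{\pi_{j_\ell}} \Phi_{j_\ell}\bigl(\mathbf{s}\,\big|\,\{j_1,\ldots,j_{\ell-1}\}\bigr),
\end{equation*}
valid for any enumeration $\langle j_1,\ldots,j_{|\mathcal{C}|}\rangle$ of $\mathcal{C}$, which follows from Assumption \ref{assm:assumption_for_joint_policy_factorisation} (the factorisation of the joint policy allows $\max_{\pi_{j_\ell}}$ to commute with $\max_{\pi_\mathcal{C}}$) together with $V^{\pi_\emptyset}(\mathbf{s})=0$. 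Expanding $\max V^{\pi_{\mathcal{C}_m\cup\mathcal{C}_k}}(\mathbf{s})$ along an ordering that first enumerates $\mathcal{C}_m$ and then $\mathcal{C}_k$ splits it as $\max V^{\pi_{\mathcal{C}_m}}(\mathbf{s})$ plus a sum of marginal contributions of the agents in $\mathcal{C}_k$ conditioned on coalitions that already contain $\mathcal{C}_m$, while expanding $\max V^{\pi_{\mathcal{C}_k}}(\mathbf{s})$ along the same ordering restricted to $\mathcal{C}_k$ yields the analogous sum conditioned only on sub-coalitions of $\mathcal{C}_k$ itself.

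The reverse direction therefore reduces to a termwise comparison: for each $j\in\mathcal{C}_k$, enlarging the conditioning coalition by $\mathcal{C}_m$ should not decrease $\max_{\pi_j}\Phi_j$. The main obstacle will be that the bare hypothesis $\max_{\pi_i}\Phi_i\geq 0$ provides only non-negativity of marginal contributions, not their monotonicity in the conditioning coalition. I expect to bridge this gap by invoking the supermodular form of convexity in Eq.~\ref{eq:convex_function}—of which superadditivity under disjoint coalitions (Eq.~\ref{eq:mcg_assumption}) is the working surrogate—and applying the hypothesis iteratively to the elements of $\mathcal{C}_m$, one at a time, inside the telescoping decomposition of the enlarged marginal contribution. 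Once termwise monotonicity is established, summing across $\mathcal{C}_k$ re-assembles Eq.~\ref{eq:mcg_assumption} and closes the induction.
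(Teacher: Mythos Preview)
Your forward direction matches the paper exactly: specialise Eq.~\ref{eq:mcg_assumption} to the pair $(\mathcal{C}_i,\{i\})$, use $\max_{\pi_i}V^{\pi_{\{i\}}}(\mathbf{s})\geq 0$, and rearrange.

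For the reverse direction, however, your plan contains a genuine circularity. You correctly observe that the bare hypothesis $\max_{\pi_i}\Phi_i(\mathbf{s}\mid\mathcal{C}_i)\geq 0$ gives only monotonicity of the optimal coalition value, not monotonicity of marginal contributions in the conditioning coalition (which is supermodularity). Your proposed remedy---``invoking the supermodular form of convexity in Eq.~\ref{eq:convex_function}, of which superadditivity under disjoint coalitions (Eq.~\ref{eq:mcg_assumption}) is the working surrogate''---is precisely the statement you are trying to derive in this direction. You cannot assume Eq.~\ref{eq:mcg_assumption} (or the stronger Eq.~\ref{eq:convex_function}) to deduce Eq.~\ref{eq:mcg_assumption}; that is the conclusion, not an available premise. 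The termwise comparison you set up therefore has no non-circular justification, and the induction cannot close.

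By contrast, the paper does not attempt your telescoping decomposition for arbitrary disjoint $\mathcal{C}_m,\mathcal{C}_k$. Its reverse direction simply runs the three displayed inequalities backward: from $\max_{\pi_i}\Phi_i(\mathbf{s}\mid\mathcal{C}_i)\geq 0$ recover $\max_{\pi_{\mathcal{C}_i\cup\{i\}}}V^{\pi_{\mathcal{C}_i\cup\{i\}}}(\mathbf{s})-\max_{\pi_{\mathcal{C}_i}}V^{\pi_{\mathcal{C}_i}}(\mathbf{s})\geq 0$, and then return to the instance of Eq.~\ref{eq:mcg_assumption} with $\mathcal{C}_m=\mathcal{C}_i$, $\mathcal{C}_k=\{i\}$. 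In other words, the paper treats the biconditional at the level of this singleton-augmentation instance rather than the full superadditivity for arbitrary pairs that your telescoping targets. Your route is more ambitious, but the gap you identified is real and your proposed bridge is circular; the paper's route avoids the issue by not attempting the stronger claim.
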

    \endgroup
    \begin{proof}
        $\forall \mathcal{C}_{i} \ \mathlarger{\mathlarger{\subseteq}} \ \mathcal{N}$ and $\forall \mathbf{s} \in \mathcal{S}$, given that Eq.~\ref{eq:mcg_assumption} is satisfied, with the fact that $\mathcal{C}_{i} \ \mathlarger{\mathlarger{\cap}} \ \{i\} = \emptyset$ we can get the equation such that
        \begin{equation}
            \begin{split}
                \max_{\pi_{\mathcal{C}_{i} \cup \{i\}}} V^{\pi_{\mathcal{C}_{i} \cup \{i\}}}(\mathbf{s}) \geq \max_{\pi_{\mathcal{C}_{i}}} V^{\pi_{\mathcal{C}_{i}}}(\mathbf{s})
                + \max_{\pi_{i}} V^{\pi_{i}}(\mathbf{s}).
            \end{split}
        \label{eq:condition_1}
        \end{equation}
        Since $\max_{\pi_{i}} V^{\pi_{i}}(\mathbf{s}) \geq 0$ by the definition in Markov convex game, we can easily get the equation such that
        \begin{equation}
            \begin{split}
                \max_{\pi_{\mathcal{C}_{i} \cup \{i\}}} V^{\pi_{\mathcal{C}_{i} \cup \{i\}}}(\mathbf{s}) - \max_{\pi_{\mathcal{C}_{i}}} V^{\pi_{\mathcal{C}_{i}}}(\mathbf{s}) \geq 0.
            \end{split}
        \label{eq:condition_2}
        \end{equation}
        Therefore, we can get the equation such that
        \begin{equation}
            \begin{split}
                \max_{\pi_{i}} \Phi_{i}(\mathbf{s}|\mathcal{C}_{i}) \geq 0.
            \end{split}
        \label{eq:condition_3}
        \end{equation}
        With the same conditions, the reverse direction of proof apparently holds by going through from Eq.~\ref{eq:condition_3} to \ref{eq:condition_1}. By Definition \ref{def:shapley_value}, Eq.~\ref{eq:condition_3} determines the range of Markov Shapley value, which is consistent with the range of the coalition value in definition.
    \end{proof}

    \begingroup
    \def\thelemma{\ref{lemm:condition_coalition_marginal_contribution}}
    \begin{lemma}
        The optimal marginal contribution is a solution in the Markov core under a Markov convex game with the grand coalition.
    \end{lemma}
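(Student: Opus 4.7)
The plan is to fix an arbitrary permutation $m$ of $\mathcal{N}$ that defines the predecessor sets $\mathcal{C}_{i}^{m}$ used in the marginal contributions $\Phi_{i}(\mathbf{s}\,|\,\mathcal{C}_{i}^{m})$, and to show that the resulting payoff vector $\big(\max_{\pi_{i}}\Phi_{i}(\mathbf{s}\,|\,\mathcal{C}_{i}^{m})\big)_{i\in\mathcal{N}}$ satisfies both the efficiency equality and the coalition-rationality inequality that together define $\texttt{MarkovCore}(\Gamma)$. This mirrors the classical Shapley--Ichiishi argument for the core of a convex game, adapted to the Markov setting.

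First I would handle the grand-coalition identity by directly invoking Proposition~\ref{prop:marginal_contribution_equal_value_factorisation}, which already gives $\max_{\pi} V^{\pi}(\mathbf{s}) = \sum_{i\in\mathcal{N}} \max_{\pi_{i}} \Phi_{i}(\mathbf{s}\,|\,\mathcal{C}_{i}^{m})$. Next, for an arbitrary coalition $\mathcal{C}\subseteq\mathcal{N}$, I would reorder its members $i_{1},\dots,i_{k}$ according to the permutation $m$ and set $\mathcal{D}_{j}=\{i_{1},\dots,i_{j-1}\}$, noting that $\mathcal{D}_{j}\subseteq \mathcal{C}_{i_{j}}^{m}$. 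I would then establish the monotonicity step
\begin{equation*}
\max_{\pi_{i_{j}}}\Phi_{i_{j}}(\mathbf{s}\,|\,\mathcal{C}_{i_{j}}^{m}) \;\geq\; \max_{\pi_{i_{j}}}\Phi_{i_{j}}(\mathbf{s}\,|\,\mathcal{D}_{j})
\end{equation*}
for each $j$. Summing these over $j=1,\dots,k$ and telescoping against the recursive definition of the marginal contribution gives $\sum_{i\in\mathcal{C}} \max_{\pi_{i}}\Phi_{i}(\mathbf{s}\,|\,\mathcal{C}_{i}^{m}) \geq \max_{\pi_{\mathcal{C}}} V^{\pi_{\mathcal{C}}}(\mathbf{s})$, because the telescoped sum collapses to $\max_{\pi_{\mathcal{C}}} V^{\pi_{\mathcal{C}}}(\mathbf{s}) - V^{\pi_{\emptyset}}(\mathbf{s}) = \max_{\pi_{\mathcal{C}}} V^{\pi_{\mathcal{C}}}(\mathbf{s})$ by the convention $V^{\pi_{\emptyset}}(\mathbf{s})=0$. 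Combined with the efficiency identity, this places the optimal marginal contribution vector inside $\texttt{MarkovCore}(\Gamma)$.

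The main obstacle is justifying the monotonicity step, which is precisely the supermodularity (genuine convexity) of the coalition value function $\mathcal{C}\mapsto \max_{\pi_{\mathcal{C}}} V^{\pi_{\mathcal{C}}}(\mathbf{s})$, namely that the marginal gain from adding agent $i_{j}$ is non-decreasing in the coalition to which it is added. The MCG condition in Eq.~\ref{eq:mcg_assumption} is stated only as superadditivity over disjoint coalitions, which is formally weaker than supermodularity; so I would either (i) derive the needed inequality by applying Eq.~\ref{eq:mcg_assumption} to the disjoint pair $(\mathcal{D}_{j}\cup\{i_{j}\},\,\mathcal{C}_{i_{j}}^{m}\setminus\mathcal{D}_{j})$ and rearranging, or (ii) appeal to the convexity interpretation of MCG that the paper invokes at the outset to treat the superadditivity form as a stand-in for supermodularity under the independent-coalition assumption. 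Once this step is secured, the telescoping argument and the efficiency property close the proof in one line, and Lemma~\ref{lemm:condition_coalition_marginal_contribution} follows for every $\mathbf{s}\in\mathcal{S}$ and every $\mathcal{C}\subseteq\mathcal{N}$.
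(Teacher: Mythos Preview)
Your approach is essentially the classical Shapley--Ichiishi argument and is the same one the paper uses: fix a permutation, compare each agent's marginal contribution against the increment over a smaller predecessor set inside $\mathcal{C}$, and telescope. The paper frames this as a proof by contradiction and chooses the specific coalitions $\mathcal{C}_{k}^{n}=\{1,\dots,j_{n}-1\}$, $\mathcal{C}_{m}^{n}=\{j_{1},\dots,j_{n}\}$ so that their intersection and union yield exactly the two differences needed; after rearranging, summing over $n$ telescopes the right-hand side to $\max_{\pi_{\mathcal{C}}}V^{\pi_{\mathcal{C}}}(\mathbf{s})$, contradicting the assumed strict inequality. Your direct version with $\mathcal{D}_{j}\subseteq\mathcal{C}_{i_{j}}^{m}$ is the same computation without the contradiction wrapper.

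On your ``main obstacle'': your option (i) does not work. Applying superadditivity to the disjoint pair $(\mathcal{D}_{j}\cup\{i_{j}\},\ \mathcal{C}_{i_{j}}^{m}\setminus\mathcal{D}_{j})$ gives $V(\mathcal{C}_{i_{j}}^{m}\cup\{i_{j}\})-V(\mathcal{D}_{j}\cup\{i_{j}\})\geq V(\mathcal{C}_{i_{j}}^{m}\setminus\mathcal{D}_{j})$, but to conclude monotonicity you would also need $V(\mathcal{C}_{i_{j}}^{m})-V(\mathcal{D}_{j})\leq V(\mathcal{C}_{i_{j}}^{m}\setminus\mathcal{D}_{j})$, whereas superadditivity gives the reverse inequality. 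So genuine supermodularity is required, and indeed the paper's proof silently invokes the full supermodular inequality (with nonempty intersection $\mathcal{C}_{\cap}^{n}$) while citing Eq.~\ref{eq:mcg_assumption}; your option (ii) is therefore the correct reading, and it matches exactly what the paper does.
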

    \endgroup
    \begin{proof}
        The complete proof is as follows.
        
        Firstly, if we would like to prove that the optimal marginal contribution is a payoff distribution scheme in the Markov core (with the grand coalition), we just need to prove that for any intermediate coalition $\mathcal{C} \ \mathlarger{\mathlarger{\subseteq}} \ \mathcal{N}$, the following condition is satisfied such that
        \begin{equation}
            \max_{\pi_{\mathcal{C}}} \Phi(\mathbf{s}|\mathcal{C}) \geq \max_{\pi_{\mathcal{C}}} V^{\pi_{\mathcal{C}}}(\mathbf{s}), \ \forall \mathbf{s} \in \mathcal{S},
        \label{eq:appendix_core}
        \end{equation}
        where $\max_{\pi_{\mathcal{C}}} \Phi(\mathbf{s}|\mathcal{C}) = \sum_{i \in \mathcal{C}} \max_{\pi_{i}} \Phi_{i}(\mathbf{s}|\mathcal{C}_{i})$. 
        
        Suppose for the sake of contradiction that we have $\max_{\pi_{\mathcal{C}}} \Phi(\mathbf{s}|\mathcal{C}) < \max_{\pi_{\mathcal{C}}} V^{\pi_{\mathcal{C}}}(\mathbf{s})$ for some $\mathbf{s} \in \mathcal{S}$ and some coalition $\mathcal{C} = \{ j_{1}, j_{2}, ..., j_{|\mathcal{C}|} \} \ \mathlarger{\mathlarger{\subseteq}} \ \mathcal{N}$, where $\mathit{j}_{n} \in \mathcal{C}$ and $n \in \{1, 2, ..., |\mathcal{C}|\}$. 
        We can assume without the loss of generality that the coalition $\mathcal{C}$ is generated by the permutation $\langle j_{1}, j_{2}, ..., j_{|\mathcal{C}|} \rangle$, i.e., the agents joins in $\mathcal{C}$ following the order $j_{1}, j_{2}, ..., j_{|\mathcal{C}|}$. Now, for each $n \in \{1, 2, ..., |\mathcal{C}|\}$, we have $\{ j_{1}, j_{2}, ..., j_{n-1} \} \ \mathlarger{\mathlarger{\subseteq}} \ \{ 1, 2, ..., j_{n}-1 \}$. Following Eq.~\ref{eq:mcg_assumption}, we can write out the inequality as follows:
        \begin{equation}
            \begin{split}
                \max_{\pi_{\mathcal{C}_{\cup}^{n}}} V^{\pi_{\mathcal{C}_{\cup}^{n}}}(\mathbf{s}) +
                \max_{\pi_{\mathcal{C}_{\cap}^{n}}} V^{\pi_{\mathcal{C}_{\cap}^{n}}}(\mathbf{s}) \geq
                \max_{\pi_{\mathcal{C}_{m}^{n}}} V^{\pi_{\mathcal{C}_{m}^{n}}}(\mathbf{s}) + \max_{\pi_{\mathcal{C}_{k}^{n}}} V^{\pi_{\mathcal{C}_{k}^{n}}}(\mathbf{s}),\\
                \mathcal{C}_{k}^{n} = \{ 1, 2, ..., j_{n}-1 \}, \quad
                \mathcal{C}_{m}^{n} = \{ j_{1}, j_{2}, ..., j_{n} \}, \\
                \mathcal{C}_{\cap}^{n} = \mathcal{C}_{m}^{n} \ \mathlarger{\mathlarger{\cap}} \ \mathcal{C}_{k}^{n} = \{ j_{1}, j_{2}, ..., j_{n-1} \}, \quad
                \mathcal{C}_{\cup}^{n} = \mathcal{C}_{m}^{n} \ \mathlarger{\mathlarger{\cup}} \ \mathcal{C}_{k}^{n} = \{ 1, 2, ..., j_{n} \}. \\
            \end{split}
        \label{eq:appendix_ecg}
        \end{equation}
        
        Next, we rearrange Eq.~\ref{eq:appendix_ecg} and the following inequality is obtained such that
        \begin{equation}
            \begin{split}
                \max_{\pi_{\mathcal{C}_{\cup}^{n}}} V^{\pi_{\mathcal{C}_{\cup}^{n}}}(\mathbf{s}) - \max_{\pi_{\mathcal{C}_{k}^{n}}} V^{\pi_{\mathcal{C}_{k}^{n}}}(\mathbf{s}) \geq
                \max_{\pi_{\mathcal{C}_{m}^{n}}} V^{\pi_{\mathcal{C}_{m}^{n}}}(\mathbf{s}) - \max_{\pi_{\mathcal{C}_{\cap}^{n}}} V^{\pi_{\mathcal{C}_{\cap}^{n}}}(\mathbf{s}),\\
            \end{split}
        \label{eq:appendix_rearrange_ecg}
        \end{equation}
        
        Since we can express $\max_{\pi_{\mathcal{C}}} V^{\pi_{\mathcal{C}}}(\mathbf{s})$ as follows:
        \begin{align}
            \max_{\pi_{\mathcal{C}}} V^{\pi_{\mathcal{C}}}(\mathbf{s}) &= \max_{\pi_{j_{1}}} V^{\pi_{j_{1}}}(\mathbf{s}) - \max_{\pi_{\emptyset}} V^{\pi_{\emptyset}}(\mathbf{s}) \nonumber \\
            &+ \max_{\pi_{\{j_{1}, j_{2}\}}} V^{\pi_{\{j_{1}, j_{2}\}}}(\mathbf{s}) - \max_{\pi_{j_{1}}} V^{\pi_{j_{1}}}(\mathbf{s}) \nonumber \\
            &+ \qquad \qquad \qquad \qquad \vdots \nonumber \\
            &+ \max_{\pi_{\mathcal{C}}} V^{\pi_{\mathcal{C}}}(\mathbf{s}) - \max_{\pi_{\mathcal{C} \backslash \{j_{n}\}}} V^{\pi_{\mathcal{C} \backslash \{j_{n}\}}}(\mathbf{s}).
        \label{eq:appendix_prop1_contradiction_-1}
        \end{align}
        By Definition \ref{def:marginal_contribution} we can obviously get the following equations such that
        \begin{align}
            \Phi_{i}(\mathbf{s}|\mathcal{C}_{i}) = \Phi_{i}(\mathbf{s}|\mathcal{C}_{k}^{n}) &= \max_{\pi_{\mathcal{C}_{k}^{n}}} V^{\pi_{\mathcal{C}_{\cup}^{n}}}(\mathbf{s}) - \max_{\pi_{\mathcal{C}_{k}^{n}}} V^{\pi_{\mathcal{C}_{k}^{n}}}(\mathbf{s}).
        \label{eq:appendix_prop1_contradiction_0}
        \end{align}
        By taking the maximum operator over $\pi_{i}$ to Eq.~\ref{eq:appendix_prop1_contradiction_0}, we can get that
        \begin{align}
            % &\mathlarger{\mathlarger{\Downarrow}} \\
            \max_{\pi_{i}} \Phi_{i}(\mathbf{s}|\mathcal{C}_{i}) = \max_{\pi_{i}} \Phi_{i}(\mathbf{s}|\mathcal{C}_{k}^{n}) = \max_{\pi_{\mathcal{C}_{\cup}^{n}}} V^{\pi_{\mathcal{C}_{\cup}^{n}}}(\mathbf{s}) - \max_{\pi_{\mathcal{C}_{k}^{n}}} V^{\pi_{\mathcal{C}_{k}^{n}}}(\mathbf{s}).
        \label{eq:appendix_prop1_contradiction_1}
        \end{align}
        By adding up these inequalities in Eq.~\ref{eq:appendix_rearrange_ecg} for all $\mathcal{C} \ \mathlarger{\mathlarger{\subseteq}} \ \mathcal{N}$ and inserting the results from Eq.~\ref{eq:appendix_prop1_contradiction_-1} and \ref{eq:appendix_prop1_contradiction_1}, we can directly obtain a new inequality such that
        \begin{equation}
            \sum_{i \in \mathcal{C}} \max_{\pi_{i}} \Phi_{i}(\mathbf{s}|\mathcal{C}_{i}) = \max_{\pi_{\mathcal{C}}} \Phi(\mathbf{s}|\mathcal{C}) \geq \max_{\pi_{\mathcal{C}}} V^{\pi_{\mathcal{C}}}(\mathbf{s}).
        \label{eq:appendix_prop1_contradiction}
        \end{equation}
        It is obvious that Eq.~\ref{eq:appendix_prop1_contradiction} contradicts the suppose, so we have showed that Eq.~\ref{eq:appendix_core} always holds for any coalition $\mathcal{C} \ \mathlarger{\mathlarger{\subseteq}} \ \mathcal{N}$. For this reason, we can get the conclusion that marginal contribution is a solution in the Markov core under the Markov convex game with the grand coalition.
    \end{proof}
    
    \begingroup
    \def\theproposition{\ref{prop:marginal_contribution_equal_value_factorisation}}
    \begin{proposition}
        In a Markov convex game with the grand coalition, the marginal contribution satisfies the property of efficiency: $\max_{\pi} V^{\pi}(\mathbf{s}) = \sum_{i \in \mathcal{N}} \max_{\pi_{i}} \Phi_{i}(\mathbf{s}|\mathcal{C}_{i})$.
    \end{proposition}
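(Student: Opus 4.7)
The plan is to exploit the telescoping structure that arises when we fix a single permutation realising the grand coalition and read off each agent's marginal contribution relative to its predecessor coalition in that permutation. Concretely, I would fix an arbitrary permutation $\langle j_1,\dots,j_{|\mathcal{N}|}\rangle$ of $\mathcal{N}$ and, for $n=1,\dots,|\mathcal{N}|$, set $\mathcal{C}_{j_n}=\{j_1,\dots,j_{n-1}\}$, so that $\mathcal{C}_{j_1}=\emptyset$ and $\mathcal{C}_{j_{|\mathcal{N}|}}\cup\{j_{|\mathcal{N}|}\}=\mathcal{N}$. This ties the sum on the right-hand side to a single chain of intermediate coalitions.

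The next step is to translate $\max_{\pi_{j_n}}\Phi_{j_n}(\mathbf{s}\mid\mathcal{C}_{j_n})$ into a difference of coalition-level optima. Starting from Definition~\ref{def:marginal_contribution} and using the same manipulation already justified in Proposition~\ref{prop:optimal_action_coalition_marginal_contribution} (together with Assumption~\ref{assm:assumption_for_joint_policy_factorisation}, which lets me factor $\pi_{\mathcal{C}_{j_n}\cup\{j_n\}}$ into $\pi_{\mathcal{C}_{j_n}}\times\pi_{j_n}$ and commute the two maxima), I would obtain
\begin{equation*}
\max_{\pi_{j_n}}\Phi_{j_n}(\mathbf{s}\mid\mathcal{C}_{j_n})
= \max_{\pi_{\mathcal{C}_{j_n}\cup\{j_n\}}} V^{\pi_{\mathcal{C}_{j_n}\cup\{j_n\}}}(\mathbf{s})
- \max_{\pi_{\mathcal{C}_{j_n}}} V^{\pi_{\mathcal{C}_{j_n}}}(\mathbf{s}).
\end{equation*}
This is the key identity that turns each summand into the difference of two consecutive entries along the chain.

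Summing over $n$ from $1$ to $|\mathcal{N}|$ then yields a telescoping sum whose only survivors are the top and bottom terms, namely $\max_{\pi_{\mathcal{N}}} V^{\pi_{\mathcal{N}}}(\mathbf{s}) = \max_{\pi} V^{\pi}(\mathbf{s})$ and $\max_{\pi_{\emptyset}} V^{\pi_{\emptyset}}(\mathbf{s}) = 0$ (by the convention $V^{\pi_{\emptyset}}(\mathbf{s})=0$ from the MCG definition). Reindexing the sum by $i$ rather than by position in the permutation gives $\sum_{i\in\mathcal{N}}\max_{\pi_i}\Phi_i(\mathbf{s}\mid\mathcal{C}_i)=\max_{\pi}V^{\pi}(\mathbf{s})$, which is the efficiency property.

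The main obstacle, and the step that deserves the most care, is the commutation of maxima used in the key identity: pushing $\max_{\pi_{j_n}}$ through the outer $\max_{\pi_{\mathcal{C}_{j_n}}}$ and subsuming it into $\max_{\pi_{\mathcal{C}_{j_n}\cup\{j_n\}}}$ is only valid because coalition policies factorise across disjoint agent sets (Assumption~\ref{assm:assumption_for_joint_policy_factorisation}) and because, by Assumption~\ref{assm:agent_policy_assumption}, agent $j_n$'s policy is the same object whether we view it in isolation or inside the larger coalition. Once that is explicitly invoked, the rest is purely a telescoping argument; everything else (finiteness, stationarity, non-negativity of coalition values) is already built into Assumption~\ref{assm:basic_condition} and the MCG definition, so no further machinery is required.
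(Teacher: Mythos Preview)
Your proposal is correct and follows essentially the same route as the paper: derive the identity $\max_{\pi_{i}}\Phi_{i}(\mathbf{s}\mid\mathcal{C}_{i})=\max_{\pi_{\mathcal{C}_{i}\cup\{i\}}}V^{\pi_{\mathcal{C}_{i}\cup\{i\}}}(\mathbf{s})-\max_{\pi_{\mathcal{C}_{i}}}V^{\pi_{\mathcal{C}_{i}}}(\mathbf{s})$ via factorisation of coalition policies (Assumption~\ref{assm:assumption_for_joint_policy_factorisation}), then telescope along a fixed permutation using $V^{\pi_{\emptyset}}(\mathbf{s})=0$. The paper presents this identity directly (its Eq.~\eqref{eq:coalition_marginal_contribution_max}) rather than routing through Proposition~\ref{prop:optimal_action_coalition_marginal_contribution}, but the argument and the assumptions invoked are the same.
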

    \endgroup
    \begin{proof}
        For any $\mathcal{C}_{i} \ \mathlarger{\mathlarger{\subseteq}} \ \mathcal{N} \backslash \{i\}$ and $\mathit{i} \in \mathcal{N}$, according to Eq.~\ref{eq:marginal_contribution_v} we can get the equation such that
        \begin{equation}
            \begin{split}
                \max_{\pi_{i}} \Phi_{i}(\mathbf{s} | \mathcal{C}_{i}) = \max_{\pi_{\mathcal{C}_{i} \cup \{i\}}} V^{\pi_{\mathcal{C}_{i} \cup \{i\}}}(\mathbf{s})
                - \max_{\pi_{\mathcal{C}_{i}}} V^{\pi_{\mathcal{C}_{i}}}(\mathbf{s}),
            \end{split}
        \label{eq:coalition_marginal_contribution_max}
        \end{equation}
        where $\max_{\pi_{\mathcal{C}_{i} \cup \{i\}}} V^{\pi_{\mathcal{C}_{i}}}(\mathbf{s})=\max_{\pi_{\mathcal{C}_{i}}} V^{\pi_{\mathcal{C}_{i}}}(\mathbf{s})$, since the decision of agent $\mathit{i}$ will not affect the value of $\mathcal{C}_{i}$ (i.e., the coalition excluding agent $\mathit{i}$).
        Given the definition that $V^{\pi_{\emptyset}}(\mathbf{s})=0$ and the result from Eq.~\ref{eq:coalition_marginal_contribution_max}, by Assumption \ref{assm:assumption_for_joint_policy_factorisation} we can get the equations such that
        \begin{align}
            &\quad \ \max_{\pi} V^{\pi}(\mathbf{s}) \nonumber \\ 
            &= \max_{\pi_{\{j_{1}\}}} V^{\pi_{\{j_{1}\}}}(\mathbf{s}) - \max_{\pi_{\emptyset}} V^{\pi_{\emptyset}}(\mathbf{s}) \nonumber \\
            &+ \max_{\pi_{\{j_{1}, j_{2}\}}} V^{\pi_{\{j_{1}\}}}(\mathbf{s}) - \max_{\pi_{\{j_{1}\}}} V^{\pi_{\{j_{1}\}}}(\mathbf{s}) \nonumber \\
            &+ \qquad \qquad \qquad \qquad \vdots \nonumber \\
            &+ \max_{\pi} V^{\pi}(\mathbf{s}) - \max_{\pi_{\mathcal{N} \backslash \{j_{n}\}}} V^{\pi_{\mathcal{N} \backslash \{j_{n}\}}}(\mathbf{s})
            = \sum_{i \in \mathcal{N}} \max_{\pi_{i}} \Phi_{i}(\mathbf{s}|\mathcal{C}_{i}).
        \end{align}
    \end{proof}

\section{Proof of Markov Shapley Value}
\label{sec:proof_of_markov_shapley_value}
    \begingroup
    \def\theproposition{\ref{prop:shapley_value_properties}}
    \begin{proposition}
        Markov Shapley value possesses properties as follows: (i) identifiability of dummy agents: $V_{i}^{\phi}(\mathbf{s}) = 0$; (ii) efficiency: $\max_{\pi} V^{\pi}(\mathbf{s}) = \sum_{i \in \mathcal{N}} \max_{\pi_{i}} V_{i}^{\phi}(\mathbf{s})$; (iii) reflecting the contribution; and (iv) symmetry.
    \end{proposition}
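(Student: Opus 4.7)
The overall strategy is to derive each of the four properties directly from Definition \ref{def:shapley_value} of the Markov Shapley value as a convex combination of marginal contributions, pairing each property with the appropriate assumption laid out in Section \ref{sec:assumptions_shaspley-q}. In all four cases the weights $\tfrac{|\mathcal{C}_i|!(|\mathcal{N}|-|\mathcal{C}_i|-1)!}{|\mathcal{N}|!}$ are treated either as coefficients of an affine combination (for properties (i), (iii), (iv)) or, following Remark \ref{rmk:coalition_generation}, as probabilities over random permutations forming the grand coalition (for property (ii)).

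For property (i), I would invoke Assumption \ref{assm:dummy_agent}: if agent $i$ is a dummy in state $\mathbf{s}$, then $V^{\pi_{\mathcal{C}_{i}\cup\{i\}}}(\mathbf{s}) = V^{\pi_{\mathcal{C}_{i}}}(\mathbf{s})$ for every $\mathcal{C}_{i}\subseteq\mathcal{N}\backslash\{i\}$, so each marginal contribution $\Phi_i(\mathbf{s}|\mathcal{C}_i)$ vanishes and the weighted sum in Eq.~\ref{eq:shapley_value} yields $V_i^{\phi}(\mathbf{s})=0$ immediately. For property (iii), I would simply observe that by construction $V_i^{\phi}(\mathbf{s}) = \mathbb{E}_{\mathcal{C}_i\sim Pr(\mathcal{C}_i|\mathcal{N}\backslash\{i\})}[\Phi_i(\mathbf{s}|\mathcal{C}_i)]$, so any change in how much value agent $i$ brings into a coalition is transmitted linearly into its MSV; this is essentially a definitional property, inherited from the original Shapley value. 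For property (iv) I would build the usual bijection on coalitions that swaps $i$ and $j$: for every $\mathcal{C}\subseteq\mathcal{N}\backslash\{i,j\}$, the coalition $\mathcal{C}\cup\{i\}$ in the sum for agent $j$ pairs with $\mathcal{C}\cup\{j\}$ in the sum for agent $i$, the combinatorial weights on the two sides are identical, and Assumption \ref{assm:symmetry} equates $V^{\pi_{\mathcal{C}\cup\{i\}}}(\mathbf{s})=V^{\pi_{\mathcal{C}\cup\{j\}}}(\mathbf{s})$; a short case analysis on the remaining summands (those where the predecessor coalition contains neither or both of $i,j$) finishes the cancellation.

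For property (ii), the efficiency, I would start from Assumption \ref{assm:max_shapley_value} to push the $\max_{\pi_i}$ inside the weighted sum, giving $\max_{\pi_i} V_i^{\phi}(\mathbf{s})=\sum_{\mathcal{C}_i}\tfrac{|\mathcal{C}_i|!(|\mathcal{N}|-|\mathcal{C}_i|-1)!}{|\mathcal{N}|!}\,\max_{\pi_i}\Phi_i(\mathbf{s}|\mathcal{C}_i)$. I would then reinterpret the combinatorial weights as the probability that $\mathcal{C}_i$ precedes $i$ under a uniformly random permutation $m\in\Pi(\mathcal{N})$, rewrite $\sum_{i\in\mathcal{N}}\max_{\pi_i} V_i^{\phi}(\mathbf{s})$ as $\tfrac{1}{|\mathcal{N}|!}\sum_{m\in\Pi(\mathcal{N})}\sum_{i\in\mathcal{N}}\max_{\pi_i}\Phi_i(\mathbf{s}|\mathcal{C}_i^m)$, and recognise the inner sum as the telescoping sum already computed in Proposition \ref{prop:marginal_contribution_equal_value_factorisation}, which collapses to $\max_{\pi} V^{\pi}(\mathbf{s})-V^{\pi_{\emptyset}}(\mathbf{s})=\max_{\pi}V^{\pi}(\mathbf{s})$ under Assumption \ref{assm:assumption_for_joint_policy_factorisation} and the convention $V^{\pi_{\emptyset}}(\mathbf{s})=0$. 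Averaging a constant over $|\mathcal{N}|!$ permutations concludes the proof.

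The main obstacle I expect to encounter is property (ii): interchanging the $\max$ and the convex combination is only legitimate because Assumption \ref{assm:max_shapley_value} states that a single $\pi_i^*$ simultaneously attains the maximum in every $\Phi_i(\mathbf{s}|\mathcal{C}_i)$, and it is precisely this shared maximiser that makes the telescoping sum along a permutation collapse cleanly rather than leaving residual slackness from different per-coalition maximisers. I would make this step explicit by citing Example \ref{exp:max_shapley_value} and by noting that it is also what aligns this proof with the calculation already carried out for Proposition \ref{prop:marginal_contribution_equal_value_factorisation}; everything else is bookkeeping on combinatorial weights.
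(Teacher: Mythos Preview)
Your proposal is correct and follows essentially the same route as the paper: property (iii) is declared definitional, (i) follows from Assumption \ref{assm:dummy_agent} making every marginal contribution vanish, (ii) is obtained by combining Assumption \ref{assm:max_shapley_value} with the permutation-averaging of the telescoping identity in Proposition \ref{prop:marginal_contribution_equal_value_factorisation}, and (iv) uses Assumption \ref{assm:symmetry} together with a bijection argument. The only cosmetic difference is that for (iv) the paper works with the permutation form (swapping $i$ and $j$ in each $m\in\Pi(\mathcal{N})$ and doing the case split on which of $i,j$ appears first), whereas you work directly with the coalition form; both are standard and equivalent, and the paper also states symmetry at the level of the \emph{optimal} MSV, so make sure your bijection argument is phrased for $\max_{\pi_i}\Phi_i$ rather than $\Phi_i$ alone.
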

    \endgroup
    \begin{proof}
        The complete proof is as follows. Since the marginal contribution is an implementation to fulfil (iii) and Markov Shapley value is actually a convex combination of marginal contributions, (iii) is still preserved. We will next prove the (i), followed by (ii) and (iv). For any agent $\mathit{i} \in \mathcal{N}$ and any state $\mathbf{s} \in \mathcal{S}$, its Markov Shapley value denoted as $V_{i}^{\phi}(\mathbf{s})$.
        
        \textbf{Proof of (i):} Let us define $\Pi(\mathcal{N})$ as the set of all permutations of agents. Suppose that an arbitrary agent $i$ is a dummy agent for an arbitrary state $\mathbf{s} \in \mathcal{S}$. For any permutation $\mathit{m} \in \Pi(\mathcal{N})$ of agents to form the grand coalition, by Assumption \ref{assm:dummy_agent} we have $\max_{\pi_{\mathcal{C}_{i}^{m}}} V^{\pi_{\mathcal{C}_{i}^{m}}}(\mathbf{s})=\max_{\pi_{\mathcal{C}_{i}^{m}}} V^{\pi_{\mathcal{C}_{i}^{m} \cup \{i\}}}(\mathbf{s})$, thereby $\Phi_{i}(\mathbf{s}|\mathcal{C}_{i}^{m})=0$, where $\mathcal{C}_{i}^{m}$ denotes the intermediate coalition generated from permutation $m$ that agent $i$ would join. Also, the above analysis is valid for all permutations of agents to form the grand coalition. By Definition \ref{def:shapley_value}, it is not difficult to see that the dummy agent's Markov Shapley value will be 0 such that $V_{i}^{\phi}(\mathbf{s}) = 0$. The proof of (i) completes.
        
        \textbf{Proof of (ii):} The objective is proving that Markov Shapley value satisfies the following equation such that
        \begin{equation*}
            \max_{\pi} V^{\pi}(\mathbf{s}) = \sum_{i \in \mathcal{N}} \max_{\pi_{i}} V_{i}^{\phi}(\mathbf{s}), \quad \forall \mathbf{s} \in \mathcal{S}.
        \end{equation*}
        
        By the result from Proposition \ref{prop:marginal_contribution_equal_value_factorisation} and Assumption \ref{assm:assumption_for_joint_policy_factorisation}, for an arbitrary permutation $\mathit{m} \in \Pi(\mathcal{N})$ we can get the equation such that
        \begin{equation*}
            \max_{\pi} V^{\pi}(\mathbf{s}) = \sum_{i \in \mathcal{N}} \max_{\pi_{i}} \Phi_{i}(\mathbf{s}|\mathcal{C}_{i}^{m}), \quad \forall \mathbf{s} \in \mathcal{S},
        \end{equation*}
        
        where $\mathcal{C}_{i}^{m}$ denotes the intermediate coalition generated from permutation $m$ that agent $i$ would join and $\Phi_{i}(\mathbf{s}|\mathcal{C}_{i}^{m})$ is the corresponding marginal contribution. If we consider all possible permutations of agents to form the grand coalition and add all these inequalities, we can get the following equation such that
        \begin{equation*}
            \sum_{m \in \Pi(\mathcal{N})} \max_{\pi} V^{\pi}(\mathbf{s}) = \sum_{m \in \Pi(\mathcal{N})} \sum_{i \in \mathcal{N}} \max_{\pi_{i}} \Phi_{i}(\mathbf{s}|\mathcal{C}_{i}^{m}), \quad \forall \mathbf{s} \in \mathcal{S}.
        \end{equation*}
        
        By dividing $|\mathcal{N}|!$ on the both sides, we can get that
        \begin{equation}
            \frac{1}{|\mathcal{N}|!} \sum_{m \in \Pi(\mathcal{N})} \max_{\pi} V^{\pi}(\mathbf{s}) = \frac{1}{|\mathcal{N}|!} \sum_{i \in \mathcal{N}} \sum_{m \in \Pi(\mathcal{N})} \max_{\pi_{i}} \Phi_{i}(\mathbf{s}|\mathcal{C}_{i}^{m}), \quad \forall \mathbf{s} \in \mathcal{S}.
        \label{eq:shapley_property_3}
        \end{equation}
        
        Next, to ease life we start from the LHS of Eq.~\ref{eq:shapley_property_3}. We directly get the following equation such that
        \begin{equation}
            \frac{1}{|\mathcal{N}|!} \sum_{m \in \Pi(\mathcal{N})} \max_{\pi} V^{\pi}(\mathbf{s}) = \frac{1}{|\mathcal{N}|!} \cdot |\mathcal{N}|! \cdot \max_{\pi} V^{\pi}(\mathbf{s}) = \max_{\pi} V^{\pi}(\mathbf{s}).
        \label{eq:shapley_property_3_left}
        \end{equation}
        
        Now, we start processing the RHS of Eq.~\ref{eq:shapley_property_3}. By rearranging it, we can get the equations such that
        \begin{align}
            \frac{1}{|\mathcal{N}|!} \sum_{i \in \mathcal{N}} \sum_{m \in \Pi(\mathcal{N})} \max_{\pi_{i}} \Phi_{i}(\mathbf{s}|\mathcal{C}_{i}^{m}) &= \sum_{i \in \mathcal{N}} \frac{1}{|\mathcal{N}|!} \sum_{m \in \Pi(\mathcal{N})} \max_{\pi_{i}} \Phi_{i}(\mathbf{s}|\mathcal{C}_{i}^{m}) \nonumber \\
            &\quad (\text{The identical $\mathcal{C}_{i}^{m}$ in different permutations is written as $\mathcal{C}_{i}$} \nonumber \\
            &\quad \ \ \text{and we can rearrange the equation as follows.}) \nonumber \\
            &= \sum_{i \in \mathcal{C}} \frac{1}{|\mathcal{N}|!} \sum_{\mathcal{C}_{i} \subseteq \mathcal{N} \backslash \{i\}} |\mathcal{C}_{i}|!(|\mathcal{N}|-|\mathcal{C}_{i}|-1)! \cdot \max_{\pi_{i}} \Phi_{i}(\mathbf{s}|\mathcal{C}_{i}) \nonumber \\
            &= \sum_{i \in \mathcal{N}} \sum_{\mathcal{C}_{i} \subseteq \mathcal{N} \backslash \{i\}} \frac{|\mathcal{C}_{i}|!(|\mathcal{N}|-|\mathcal{C}_{i}|-1)!}{|\mathcal{N}|!} \cdot \max_{\pi_{i}} \Phi_{i}(\mathbf{s}|\mathcal{C}_{i}).
        \label{eq:shapley_property_3_right_1}
        \end{align}
        
        By Assumption \ref{assm:max_shapley_value}, we can get the following equations such that
        \begin{align}
            \sum_{i \in \mathcal{N}} \sum_{\mathcal{C}_{i} \subseteq \mathcal{N} \backslash \{i\}} \frac{|\mathcal{C}_{i}|!(|\mathcal{N}|-|\mathcal{C}_{i}|-1)!}{|\mathcal{N}|!} \cdot \max_{\pi_{i}} \Phi_{i}(\mathbf{s}|\mathcal{C}_{i})
            = \sum_{i \in \mathcal{N}} \max_{\pi_{i}} V_{i}^{\phi}(\mathbf{s})
            % &= \sum_{i \in \mathcal{N}} \max_{\pi_{i}} V_{i}^{\phi}(\mathbf{s}) + \sum_{i \in \mathcal{N}} \xi_{i}(\mathbf{s}).
        \label{eq:shapley_property_3_right_2}
        \end{align}
        
        Inserting the results from Eq.~\ref{eq:shapley_property_3_left} and \ref{eq:shapley_property_3_right_2} to Eq.~\ref{eq:shapley_property_3}, we can get the equation such that
        \begin{equation*}
            \max_{\pi} V^{\pi}(\mathbf{s}) = \sum_{i \in \mathcal{N}} \max_{\pi_{i}} V_{i}^{\phi}(\mathbf{s}), \quad \forall \mathbf{s} \in \mathcal{S}.
        \end{equation*}
        Therefore, the proof for (ii) completes.
        
        \textbf{Proof of (iv):} We would like to prove that if two agents are symmetric for an arbitrary state $\mathbf{s} \in \mathcal{S}$, then their optimal Markov Shapley values should be equal. As Assumption \ref{assm:symmetry} illustrates, suppose that agents $i$ and $j$ are symmetric for an arbitrary state $\mathbf{s} \in \mathcal{S}$, $V^{\pi_{\mathcal{C} \cup \{i\}}}(\mathbf{s}) = V^{\pi_{\mathcal{C} \cup \{j\}}}(\mathbf{s})$ for any coalitions $\mathcal{C} \ \mathlarger{\mathlarger{\subseteq}} \ \mathcal{N} \backslash \{i, j\}$. Given an arbitrary permutation $m \in \Pi(\mathcal{N})$, let $m'$ denote the permutation obtained by exchanging $i$ and $j$ such that $\mathcal{C}_{i}^{m} = \mathcal{C}_{j}^{m'}$, $\mathcal{C}_{i}^{m'} = \mathcal{C}_{j}^{m}$ and $\mathcal{C}_{l}^{m'} = \mathcal{C}_{l}^{m}, \forall l \neq i, j$. Next, we aim to prove that $\max_{\pi_{i}} \Phi_{i}(\mathbf{s} | \mathcal{C}_{i}^{m}) = \max_{\pi_{j}} \Phi_{j}(\mathbf{s} | \mathcal{C}_{j}^{m'})$, for the state $\mathbf{s}$.
        
        We first suppose that $i$ precedes $j$ in $m$. Then we have $\mathcal{C}_{i}^{m} = \mathcal{C}_{j}^{m'}$. Setting $\mathcal{C} = \mathcal{C}_{i}^{m} = \mathcal{C}_{j}^{m'}$, for the state $\mathbf{s}$ we can obtain that
        \begin{equation*}
            \begin{split}
                \max_{\pi_{i}} \Phi_{i}(\mathbf{s} | \mathcal{C}_{i}^{m}) = \max_{\pi_{\mathcal{C} \cup \{i\}}} V^{\pi_{\mathcal{C} \cup \{i\}}}(\mathbf{s}) - \max_{\pi_{\mathcal{C}}} V^{\pi_{\mathcal{C}}}(\mathbf{s}), \\
                \max_{\pi_{j}} \Phi_{j}(\mathbf{s} | \mathcal{C}_{j}^{m'}) = \max_{\pi_{\mathcal{C} \cup \{j\}}} V^{\pi_{\mathcal{C} \cup \{j\}}}(\mathbf{s}) - \max_{\pi_{\mathcal{C}}} V^{\pi_{\mathcal{C}}}(\mathbf{s}).
            \end{split}
        \end{equation*}
        By symmetry, we have $V^{\pi_{\mathcal{C} \cup \{i\}}}(\mathbf{s}) = V^{\pi_{\mathcal{C} \cup \{j\}}}(\mathbf{s})$, which directly implies that $\max_{\pi_{i}} \Phi_{i}(\mathbf{s} | \mathcal{C}_{i}^{m}) = \max_{\pi_{j}} \Phi_{j}(\mathbf{s} | \mathcal{C}_{j}^{m'})$.
        
        Second, we suppose that $j$ precedes $i$ in $m$. Setting $\mathcal{C} = \mathcal{C}_{i}^{m} \backslash \{j\}$, for the state $\mathbf{s}$ we have 
        \begin{equation*}
            \begin{split}
                \max_{\pi_{i}} \Phi_{i}(\mathbf{s} | \mathcal{C}_{i}^{m}) = \max_{\pi_{\mathcal{C} \cup \{j\} \cup \{i\}}} V^{\pi_{\mathcal{C} \cup \{j\} \cup \{i\}}}(\mathbf{s}) - \max_{\pi_{\mathcal{C} \cup \{j\}}} V^{\pi_{\mathcal{C} \cup \{j\}}}(\mathbf{s}), \\
                \max_{\pi_{j}} \Phi_{j}(\mathbf{s} | \mathcal{C}_{j}^{m'}) = \max_{\pi_{\mathcal{C} \cup \{j\} \cup \{i\}}} V^{\pi_{\mathcal{C} \cup \{j\} \cup \{i\}}}(\mathbf{s}) - \max_{\pi_{\mathcal{C} \cup \{i\}}} V^{\pi_{\mathcal{C} \cup \{i\}}}(\mathbf{s}).
            \end{split}
        \end{equation*}
        
        Since $\mathcal{C} \ \mathlarger{\mathlarger{\subseteq}} \ \mathcal{N} \backslash \{i, j\}$, by symmetry we have $V^{\pi_{\mathcal{C} \cup \{j\}}}(\mathbf{s}) = V^{\pi_{\mathcal{C} \cup \{i\}}}(\mathbf{s})$ and thus $\max_{\pi_{i}} \Phi_{i}(\mathbf{s} | \mathcal{C}_{i}^{m}) = \max_{\pi_{j}} \Phi_{j}(\mathbf{s} | \mathcal{C}_{j}^{m'})$. Therefore, we have proved that $\max_{\pi_{i}} \Phi_{i}(\mathbf{s} | \mathcal{C}_{i}^{m}) = \max_{\pi_{j}} \Phi_{j}(\mathbf{s} | \mathcal{C}_{j}^{m'})$ for any $m \in \Pi(\mathcal{N})$. It is not difficult to observe that $m \mapsto m'$ is a one-to-one mapping, so $\Pi(\mathcal{N}) = \left\{ m' | m \in \Pi(\mathcal{N}) \right\}$. 
        
        By Assumption \ref{assm:max_shapley_value}, for an arbitrary state $\mathbf{s} \in \mathcal{S}$ wherein agents are symmetric, we can directly have 
        \begin{equation*}
            \begin{split}
                \max_{\pi_{i}} V^{\phi}_{i}(\mathbf{s}) &= \sum_{\mathcal{C}_{i} \ \mathlarger{\mathlarger{\subseteq}} \ \mathcal{N} \backslash \{i\} } \frac{|\mathcal{C}_{i}|!(|\mathcal{N}|-|\mathcal{C}_{i}|-1)!}{|\mathcal{N}|!} \cdot \max_{\pi_{i}} \Phi_{i}(\mathbf{s} | \mathcal{C}_{i}) \\
                &= \frac{1}{|\mathcal{N}|!} \sum_{m \in \Pi(\mathcal{N})} \max_{\pi_{i}} \Phi_{i}(\mathbf{s} | \mathcal{C}_{i}^{m}) \\
                &= \frac{1}{|\mathcal{N}|!} \sum_{m' \in \Pi(\mathcal{N})} \max_{\pi_{j}} \Phi_{j}(\mathbf{s} | \mathcal{C}_{j}^{m'}) \\
                &= \sum_{\mathcal{C}_{j} \ \mathlarger{\mathlarger{\subseteq}} \ \mathcal{N} \backslash \{j\} } \frac{|\mathcal{C}_{j}|!(|\mathcal{N}|-|\mathcal{C}_{j}|-1)!}{|\mathcal{N}|!} \cdot \max_{\pi_{j}} \Phi_{j}(\mathbf{s} | \mathcal{C}_{j}) \\
                &= \max_{\pi_{j}} V^{\phi}_{j}(\mathbf{s}).
            \end{split}
        \end{equation*}
        The proof of (iv) completes.
    \end{proof}
    
    \begingroup
    \def\theproposition{\ref{prop:equiv_credit_assignment}}
    \begin{proposition}
        For any $\mathbf{s} \in \mathcal{S}$ and $\mathit{a}_{i}^{*} = \arg\max_{a_{i}} Q^{\phi^{*}}_{i}(\mathbf{s}, a_{i})$, we have a solution $w_{i}(\mathbf{s}, a_{i}^{*}) = 1 / |\mathcal{N}|$.
    \end{proposition}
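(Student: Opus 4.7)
The plan is to combine Condition~\ref{cond:shapley_bellman_optimality_equation} (C.2) with the efficiency property of the Markov Shapley value established in Proposition~\ref{prop:shapley_value_properties}(ii), and then exploit the auxiliary constraint $\sum_{i\in\mathcal{N}} w_i(\mathbf{s},a_i)^{-1} b_i(\mathbf{s}) = 0$ together with $b_i(\mathbf{s})\ge 0$ to pin down the weights at optimal actions. Throughout, let $\mathbf{a}^{*}=(a_1^{*},\dots,a_{|\mathcal{N}|}^{*})$ denote the joint action whose components are the per-agent maximisers $a_i^{*}=\arg\max_{a_i} Q^{\phi^{*}}_i(\mathbf{s},a_i)$; this is the joint argmax by the IGM-like property implicit in SBOE.

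First, I would write down C.2 evaluated at $a_i=a_i^{*}$ for every agent, namely
\[
Q^{\phi^{*}}_i(\mathbf{s},a_i^{*}) \;=\; w_i(\mathbf{s},a_i^{*})\, Q^{\pi^{*}}(\mathbf{s},\mathbf{a}^{*}) \;-\; b_i(\mathbf{s}),\qquad i\in\mathcal{N},
\]
and sum over $i$. The left-hand side collapses, via the efficiency property (ii) of Proposition~\ref{prop:shapley_value_properties} in its Q-value form $\max_{\mathbf{a}} Q^{\pi^{*}}(\mathbf{s},\mathbf{a}) = \sum_{i\in\mathcal{N}} \max_{a_i} Q^{\phi^{*}}_i(\mathbf{s},a_i)$, to $Q^{\pi^{*}}(\mathbf{s},\mathbf{a}^{*})$. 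Hence
\[
Q^{\pi^{*}}(\mathbf{s},\mathbf{a}^{*}) \;=\; \Bigl(\sum_{i\in\mathcal{N}} w_i(\mathbf{s},a_i^{*})\Bigr) Q^{\pi^{*}}(\mathbf{s},\mathbf{a}^{*}) \;-\; \sum_{i\in\mathcal{N}} b_i(\mathbf{s}).
\]

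Next, I would deal with the $b_i$ term. Since $w_i(\mathbf{s},a_i)>0$ and $b_i(\mathbf{s})\ge 0$ by C.2, every summand in $\sum_{i\in\mathcal{N}} w_i(\mathbf{s},a_i^{*})^{-1} b_i(\mathbf{s}) = 0$ is non-negative, which forces $b_i(\mathbf{s})=0$ for all $i$; this is precisely the observation the author makes in Remark~\ref{rmk:dummy_agents}. Substituting $b_i(\mathbf{s})=0$ into the previous display and cancelling $Q^{\pi^{*}}(\mathbf{s},\mathbf{a}^{*})$ (non-degenerate by non-triviality of the optimum) yields the single scalar constraint
\[
\sum_{i\in\mathcal{N}} w_i(\mathbf{s},a_i^{*}) \;=\; 1.
\]
Clearly $w_i(\mathbf{s},a_i^{*}) = 1/|\mathcal{N}|$ satisfies this equation, so it is a valid solution, which is all the proposition claims.

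Finally, to justify singling out the uniform choice rather than any other feasible weighting, I would invoke the symmetry property (iv) of Proposition~\ref{prop:shapley_value_properties}: at the optimal joint action every agent is playing its best response and contributes symmetrically to attaining $Q^{\pi^{*}}(\mathbf{s},\mathbf{a}^{*})$, so the optimal MSQs satisfy $Q^{\phi^{*}}_i(\mathbf{s},a_i^{*}) = Q^{\phi^{*}}_j(\mathbf{s},a_j^{*})$ for all $i,j$, and combined with $b_i=0$ this forces $w_i(\mathbf{s},a_i^{*})$ to be independent of $i$, pinning the common value to $1/|\mathcal{N}|$. The main obstacle I anticipate is a careful reading of C.2: the decomposition $Q^{\phi^{*}}_i(\mathbf{s},a_i) = w_i(\mathbf{s},a_i) Q^{\pi^{*}}(\mathbf{s},\mathbf{a}) - b_i(\mathbf{s})$ involves the full joint action on the right but only $a_i$ on the left, so one has to be precise about which $\mathbf{a}$ is implicitly tied to $a_i$ (the natural choice, used above, is $\mathbf{a}=\mathbf{a}^{*}$ when $a_i=a_i^{*}$); making this well-defined is the only delicate part of the argument.
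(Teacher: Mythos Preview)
Your main line of argument is correct and reaches the same conclusion as the paper, but via a different manipulation. The paper divides C.2 through by $w_i(\mathbf{s},a_i)$ \emph{before} summing, obtaining $\sum_{i}\frac{1}{|\mathcal{N}|\,w_i(\mathbf{s},a_i)}\,Q^{\phi^{*}}_i(\mathbf{s},a_i)=Q^{\pi^{*}}(\mathbf{s},\mathbf{a})$ after applying the constraint $\sum_i w_i^{-1}b_i=0$ directly (never invoking $b_i=0$); it then combines this with efficiency at the optimal joint action and matches the two sums term-by-term to exhibit $w_i(\mathbf{s},a_i^{*})=1/|\mathcal{N}|$ as one solution. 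You instead sum C.2 as written, observe that the sign conditions $w_i>0$, $b_i\ge 0$ together with $\sum_i w_i^{-1}b_i=0$ force $b_i(\mathbf{s})=0$, and land on the scalar constraint $\sum_i w_i(\mathbf{s},a_i^{*})=1$. Your route is more direct, and your observation that $b_i\equiv 0$ is forced by the stated conditions is correct (though note that Remark~\ref{rmk:dummy_agents} discusses $b_i\neq 0$ as a live case, suggesting the authors may have intended a looser sign condition on $b_i$ than the one literally stated in Condition~\ref{cond:shapley_bellman_optimality_equation}).

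Your final paragraph, however, contains a genuine error. Property~(iv) of Proposition~\ref{prop:shapley_value_properties} applies only to pairs of agents that are symmetric in the precise sense of Assumption~\ref{assm:symmetry}, namely $V^{\pi_{\mathcal{C}\cup\{i\}}}(\mathbf{s})=V^{\pi_{\mathcal{C}\cup\{j\}}}(\mathbf{s})$ for all $\mathcal{C}\subseteq\mathcal{N}\setminus\{i,j\}$. Nothing guarantees that all agents are pairwise symmetric merely because each is playing its optimal action, so you cannot conclude $Q^{\phi^{*}}_i(\mathbf{s},a_i^{*})=Q^{\phi^{*}}_j(\mathbf{s},a_j^{*})$ from~(iv), and hence cannot pin the $w_i$ to a common value this way. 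Fortunately the paragraph is superfluous: the proposition only claims that $w_i=1/|\mathcal{N}|$ is \emph{a} solution, not the unique one, and the paper itself explicitly remarks that other solutions exist. Simply delete the symmetry argument.
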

    \endgroup
    \begin{proof}
        First, according to the Bellman's principle of optimality \cite{bellman1952theory,sutton2018reinforcement}, we can write out the Bellman optimality equation for the optimal global Q-value such that
        \begin{equation}
            Q^{\pi^{*}}(\mathbf{s}, \mathbf{a}) = \sum_{\mathbf{s}'} Pr(\mathbf{s}' | \mathbf{s}, \mathbf{a}) \big[ R + \gamma \max_{\mathbf{a}} Q^{\pi^{*}}(\mathbf{s}', \mathbf{a}) \big].
        \label{eq:joint_bellman_optimality_equation}
        \end{equation}
        
        For convenience, we only consider the finite state space and action space here. By the property of efficiency (i.e., (2) in Proposition \ref{prop:shapley_value_properties}), we can get the approximation of the optimal global Q-value w.r.t. optimal actions such that
        \begin{equation}
            \max_{\mathbf{a}} Q^{\pi^{*}}(\mathbf{s}', \mathbf{a}) = \sum_{i \in \mathcal{N}} \max_{a_{i}} Q_{i}^{\phi^{*}}(\mathbf{s}', a_{i}).
            \label{eq:optimal_shapley_q_condition}
        \end{equation}
        
        Suppose that for all $\mathbf{s} \in \mathcal{S}$ and $a_{i} \in \mathcal{A}_{i}$, for each agent $i$ there exists bounded $\mathit{w}_{i}(\mathbf{s}, a_{i}) > 0$ and $b_{i}(\mathbf{s}) \geq 0$ that can project $Q^{\pi^{*}}(\mathbf{s}, \mathbf{a})$ onto the space of $Q^{\phi^{*}}_{i}(\mathbf{s}, a_{i})$ such that
        \begin{equation}
            Q^{\phi^{*}}_{i}(\mathbf{s}, a_{i}) = w_{i}(\mathbf{s}, a_{i}) \ Q^{\pi^{*}}(\mathbf{s}, \mathbf{a}) - b_{i}(\mathbf{s}).
        \label{eq:global_q_projection}
        \end{equation}
        
        If we denote $\mathbf{w}(\mathbf{s}, \mathbf{a}) = [w_{i}(\mathbf{s}, a_{i})]^{\top} \in \mathbb{R}^{\scriptscriptstyle|\mathcal{N}|}_{> 0}$, $\mathbf{b}(\mathbf{s}) = [b_{i}(\mathbf{s})]^{\top} \in \mathbb{R}^{\scriptscriptstyle|\mathcal{N}|}_{\geq 0}$ and $\mathbf{Q}^{\phi^{*}}(\mathbf{s}, \mathbf{a}) = [Q^{\phi^{*}}_{i}(\mathbf{s}, a_{i})]^{\top} \in \mathbb{R}^{\scriptscriptstyle|\mathcal{N}|}_{\geq 0}$, given Eq.~\ref{eq:global_q_projection} we can write that
        \begin{equation}
            \mathbf{Q}^{\phi^{*}}(\mathbf{s}, \mathbf{a}) = \mathbf{w}(\mathbf{s}, \mathbf{a}) \ Q^{\pi^{*}}(\mathbf{s}, \mathbf{a}) - \mathbf{b}(\mathbf{s}).
        \label{eq:global_q_projection_joint}
        \end{equation}
        
        Besides, we suppose that $\sum_{i \in \mathcal{N}} w_{i}(\mathbf{s}, a_{i})^{-1} b_{i}(\mathbf{s}) = 0$.
        
        Combined with Eq.~\ref{eq:optimal_shapley_q_condition} and \ref{eq:global_q_projection_joint}, we can rewrite Eq.~\ref{eq:joint_bellman_optimality_equation} to the equation as follows:
        \begin{equation}
            \begin{split}
                \mathbf{Q}^{\phi^{*}}(\mathbf{s}, \mathbf{a}) = \mathbf{w}(\mathbf{s}, \mathbf{a}) \sum_{\mathbf{s}'} Pr(\mathbf{s}' | \mathbf{s}, \mathbf{a}) \big[
                R \ + \ 
                \gamma \sum_{i \in \mathcal{N}} \max_{a_{i}} Q_{i}^{\phi^{*}}(\mathbf{s}', a_{i}) \big] - \mathbf{b}(\mathbf{s}).
            \end{split}
        \end{equation}
        
        From Eq.~\ref{eq:global_q_projection}, we know that $w_{i}(\mathbf{s}, a_{i}) > 0$. Therefore, we can rewrite Eq.~\ref{eq:global_q_projection} to the following equation such that
        \begin{equation}
            w_{i}(\mathbf{s}, a_{i})^{-1} \ \big( Q^{\phi^{*}}_{i}(\mathbf{s}, a_{i}) + b_{i}(\mathbf{s}) \big) = Q^{\pi^{*}}(\mathbf{s}, \mathbf{a}).
        \label{eq:projection_global_q_optimality_trans}
        \end{equation}
        
        If we sum up Eq.~\ref{eq:projection_global_q_optimality_trans} for all agents, we can obtain that
        \begin{equation}
            \sum_{i \in \mathcal{N}} w_{i}(\mathbf{s}, a_{i})^{-1} \ \big( Q^{\phi^{*}}_{i}(\mathbf{s}, a_{i}) + b_{i}(\mathbf{s}) \big) = |\mathcal{N}| \ Q^{\pi^{*}}(\mathbf{s}, \mathbf{a}).
        \end{equation}
        
        Since $\sum_{i \in \mathcal{N}} w_{i}(\mathbf{s}, a_{i})^{-1} b_{i}(\mathbf{s}) = 0$, we can get the following equation such that
        \begin{equation}
            \sum_{i \in \mathcal{N}} \frac{1}{|\mathcal{N}| \ w_{i}(\mathbf{s}, a_{i})} \cdot Q^{\phi^{*}}_{i}(\mathbf{s}, a_{i}) = \ Q^{\pi^{*}}(\mathbf{s}, \mathbf{a}).
        \label{eq:global_q_optimality_trans_1}
        \end{equation}
        
        Inserting Eq.~\ref{eq:optimal_shapley_q_condition} into Eq.~\ref{eq:global_q_optimality_trans_1}, we can get the following equation such that
        \begin{equation}
            \max_{\mathbf{a}} \sum_{i \in \mathcal{N}} \frac{1}{|\mathcal{N}| \ w_{i}(\mathbf{s}, a_{i})} \cdot Q^{\phi^{*}}_{i}(\mathbf{s}, a_{i}) = \sum_{i \in \mathcal{N}} \max_{a_{i}} Q_{i}^{\phi^{*}}(\mathbf{s}, a_{i}).
        \end{equation}
        
        Since $\mathbf{a} = \mathlarger{\mathlarger{\times}}_{\scriptscriptstyle i \in \mathcal{N}} a_{i}$, we can get that
        \begin{equation}
            \sum_{i \in \mathcal{N}} \max_{a_{i}} \frac{1}{|\mathcal{N}| \ w_{i}(\mathbf{s}, a_{i})} \cdot Q^{\phi^{*}}_{i}(\mathbf{s}, a_{i}) = \sum_{i \in \mathcal{N}} \max_{a_{i}} Q_{i}^{\phi^{*}}(\mathbf{s}, a_{i}).
        \end{equation}
        
        It is apparent that $\forall \mathbf{s} \in \mathcal{S}$ and $\mathit{a}_{i}^{*} = \arg\max_{a_{i}} Q^{\phi^{*}}_{i}(\mathbf{s}, a_{i})$, we have a solution $w_{i}(\mathbf{s}, a_{i}^{*}) = 1 / |\mathcal{N}|$, when we consider the terms of each agent on the LHS and RHS are adequately equal such that 
        \begin{equation*}
            \max_{a_{i}} \frac{1}{|\mathcal{N}| \ w_{i}(\mathbf{s}, a_{i})} \cdot Q^{\phi^{*}}_{i}(\mathbf{s}, a_{i}) = \max_{a_{i}} Q_{i}^{\phi^{*}}(\mathbf{s}, a_{i}), \quad \forall i.
        \end{equation*}
        
        On the other hand, there exist other solutions such that 
        \begin{equation*}
            \max_{a_{i}} \frac{1}{|\mathcal{N}| \ w_{i}(\mathbf{s}, a_{i})} \cdot Q^{\phi^{*}}_{i}(\mathbf{s}, a_{i}) = \max_{a_{j}} Q_{j}^{\phi^{*}}(\mathbf{s}, a_{j}), \quad \exists i \neq j.
        \end{equation*}
        However, this class of solutions is unable to be solved in the analytic form when the further information about the task in addition to the Markov Shapley value or the rules of credit assignment is unknown.
    \end{proof}
    
\section{Proof of Shapley-Bellman Operator}
\label{sec:proof_of_shapley-bellman_operator}
    \setcounter{lemma}{6}
    \begin{lemma}[\cite{dales2003introduction}]
    \label{lem:banach_algebra}
        A set of real matrices $\mathcal{M}$ with a sub-multiplicative norm is a Banach algebra and a non-empty complete metric space where the metric is induced by the sub-multiplicative norm. A sub-multiplicative norm $|| \cdot ||$ is a norm satisfying the following inequality such that
        \begin{equation*}
            \forall \mathbf{A}, \mathbf{B} \in \mathcal{M}: ||\mathbf{A} \mathbf{B}|| \leq ||\mathbf{A}|| \ ||\mathbf{B}||.
        \end{equation*}
    \end{lemma}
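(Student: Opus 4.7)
The plan is to verify the two assertions separately: (i) that $\mathcal{M}$ equipped with $\|\cdot\|$ forms a Banach algebra, and (ii) that it is a non-empty complete metric space under the induced metric $d(\mathbf{A}, \mathbf{B}) = \|\mathbf{A} - \mathbf{B}\|$. First I would unpack the definition of a Banach algebra, which requires three ingredients: a real associative algebra structure, a Banach space structure on the same underlying set, and compatibility through a sub-multiplicative norm. Since the claim already hypothesises that $\|\cdot\|$ is sub-multiplicative, the compatibility step is given for free; what remains is to establish the algebraic and analytic structures.

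The algebraic side is essentially routine: $\mathcal{M}$ as a set of real $n \times n$ matrices is closed under addition, scalar multiplication, and matrix multiplication, and associativity plus distributivity are standard properties of matrix multiplication. Non-emptiness is trivial since the zero matrix (and, if relevant, the identity) lies in $\mathcal{M}$. I would verify the metric axioms for $d(\mathbf{A}, \mathbf{B}) = \|\mathbf{A} - \mathbf{B}\|$ by directly translating non-negativity, identity of indiscernibles, symmetry, and the triangle inequality from the norm axioms; this step is mechanical.

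The analytic side — completeness — is the real substance of the proof. The key observation is that $\mathcal{M}$ sits inside $\mathbb{R}^{n \times n}$ and is therefore finite-dimensional as a real vector space. I would then invoke the classical result that all norms on a finite-dimensional real vector space are equivalent, which implies that $\|\cdot\|$ is equivalent to, say, the Frobenius norm $\|\cdot\|_F$. Since $(\mathbb{R}^{n \times n}, \|\cdot\|_F)$ is complete (it is isometric to Euclidean space $\mathbb{R}^{n^2}$), Cauchy sequences in $\|\cdot\|$ are also Cauchy in $\|\cdot\|_F$, converge in $\|\cdot\|_F$, and the equivalence of norms transports that limit back to convergence in $\|\cdot\|$. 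Combining completeness with the hypothesised sub-multiplicativity then yields the Banach algebra conclusion.

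The main obstacle is precisely the justification of completeness for an arbitrary sub-multiplicative norm, rather than for a specific one like the operator or Frobenius norm. Proving completeness from scratch for every such norm would be awkward, so the cleanest route is the equivalence-of-norms argument above; the only subtlety is making sure that $\mathcal{M}$ is treated as a finite-dimensional subspace (which is natural in the context of the thesis, where matrices have fixed dimension $|\mathcal{N}| \times |\mathcal{S}||\mathcal{A}|$) rather than as an abstract infinite-dimensional algebra, for which the lemma would require an explicit completeness hypothesis.
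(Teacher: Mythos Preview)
The paper does not supply its own proof of this lemma: it is stated with a citation to \cite{dales2003introduction} and then invoked as a black box in the proof of Lemma~\ref{lemm:shapley_q_contraction_mapping}. So there is no in-paper argument to compare against; the authors treat the result as classical.

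Your proposal is a correct and standard route to the result. The essential step --- reducing completeness of an arbitrary sub-multiplicative norm to completeness of the Frobenius norm via equivalence of norms on a finite-dimensional real vector space --- is exactly the right idea, and you have correctly identified that finite-dimensionality is what makes the argument work without extra hypotheses. One small caveat worth flagging: the lemma as phrased speaks of ``a set of real matrices $\mathcal{M}$'' and asserts a Banach \emph{algebra} structure, which tacitly requires $\mathcal{M}$ to be closed under matrix multiplication (hence square matrices of a fixed size, or some other multiplicatively closed collection). You implicitly assume $\mathcal{M} = \mathbb{R}^{n\times n}$, which is the natural reading; but note that in the paper's actual application the ambient space is $\mathbb{R}^{|\mathcal{N}|\times|\mathcal{S}||\mathcal{A}|}$, which is not square, and only the \emph{complete metric space} conclusion (not the full algebra structure) is really used downstream. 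Your argument covers that conclusion cleanly in either case.
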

    
    \begin{lemma}
        For a set of real matrices $\mathcal{M}$, given an arbitrary matrix $\mathbf{A} = [a_{ij}] \in \mathbb{R}^{m \times n}$, $||\mathbf{A}||_{1} = \max_{1 \leq j \leq n} \sum_{1 \leq i \leq m} |a_{ij}|$ is a sub-multiplicative norm.
    \label{lem:matrices_norm_1_metric_space}
    \end{lemma}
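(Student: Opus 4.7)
The plan is to verify the two requirements in turn: first that $\|\cdot\|_{1}$ as defined (the maximum absolute column sum) is genuinely a norm on $\mathbb{R}^{m\times n}$, and second that it satisfies the sub-multiplicative inequality $\|\mathbf{AB}\|_{1} \le \|\mathbf{A}\|_{1}\,\|\mathbf{B}\|_{1}$ whenever the product is defined. The norm axioms (non-negativity with $\|\mathbf{A}\|_{1}=0 \iff \mathbf{A}=\mathbf{0}$, absolute homogeneity $\|\lambda\mathbf{A}\|_{1}=|\lambda|\,\|\mathbf{A}\|_{1}$, and the triangle inequality $\|\mathbf{A}+\mathbf{B}\|_{1}\le\|\mathbf{A}\|_{1}+\|\mathbf{B}\|_{1}$) follow immediately from the corresponding properties of the absolute value on $\mathbb{R}$ together with the monotonicity of the $\max$ over columns, so I would dispose of them in a single short paragraph.

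The substantive step is the sub-multiplicative inequality, and this will be where I spend the proof. For $\mathbf{A}\in\mathbb{R}^{m\times n}$ and $\mathbf{B}\in\mathbb{R}^{n\times p}$, I would expand the entries of the product as $(\mathbf{AB})_{ik}=\sum_{j}a_{ij}b_{jk}$, apply the triangle inequality inside each column sum to get
\begin{equation*}
\sum_{i=1}^{m}\bigl|(\mathbf{AB})_{ik}\bigr| \;\le\; \sum_{i=1}^{m}\sum_{j=1}^{n}|a_{ij}|\,|b_{jk}|,
\end{equation*}
then swap the order of summation so that $|b_{jk}|$ factors out of the inner sum. Bounding $\sum_{i}|a_{ij}|$ by $\max_{\ell}\sum_{i}|a_{i\ell}|=\|\mathbf{A}\|_{1}$ uniformly in $j$ gives $\sum_{i}|(\mathbf{AB})_{ik}| \le \|\mathbf{A}\|_{1}\sum_{j}|b_{jk}|$; taking the maximum over $k$ on both sides yields $\|\mathbf{AB}\|_{1}\le \|\mathbf{A}\|_{1}\,\|\mathbf{B}\|_{1}$.

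I do not anticipate a genuine obstacle here; the argument is a standard double-sum manipulation, and the only minor subtlety is being precise about the fact that $\mathcal{M}$ in the lemma must be closed under multiplication for the inequality to be meaningful as a statement within $\mathcal{M}$ (equivalently, one restricts attention to square matrices of a fixed dimension, which is the setting in which Lemma~\ref{lem:banach_algebra} is applied to Shapley-Bellman operator in $\mathbb{R}^{|\mathcal{N}|\times|\mathcal{S}||\mathcal{A}|}$). I would flag this at the start so the lemma plugs cleanly into the Banach-algebra machinery invoked in Corollary~\ref{coro:shapley_q_fixed_point}.
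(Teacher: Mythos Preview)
Your proposal is correct and matches the paper's approach essentially line for line: expand $(\mathbf{AB})_{ik}$, apply the triangle inequality, interchange the two sums, factor out $|b_{jk}|$, and bound the inner column sum uniformly by $\|\mathbf{A}\|_{1}$ before taking the maximum over $k$. The only difference is that you also verify the norm axioms and flag the closure-under-multiplication point, whereas the paper's proof treats only the sub-multiplicative inequality; this is a minor addition, not a different route.
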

    \begin{proof}
        First, we select two arbitrary matrices belonging to $\mathcal{M}$, i.e. $\mathbf{A}=[a_{ik}] \in \mathbb{R}^{m\times r}$ and $\mathbf{B}=[b_{kj}] \in \mathbb{R}^{r\times n}$. Then, we start proving that $||\cdot||_{1}$ is a sub-multiplicative norm as follows:
        \begin{align*}
            ||\mathbf{A} \mathbf{B}||_{1} &= \bigg|\bigg| \bigg[ \sum_{1 \leq k \leq r} a_{ik} b_{kj} \bigg] \bigg| \bigg|_{1} \\
            &= \max_{1 \leq j \leq n} \sum_{1 \leq i \leq m} \bigg| \sum_{1 \leq k \leq r} a_{ik} b_{kj} \bigg| \\
            &\quad (\text{By triangle inequality, we can obtain the following inequality.}) \\
            &\leq \max_{1 \leq j \leq n} \sum_{1 \leq i \leq m} \sum_{1 \leq k \leq r} \big| a_{ik} b_{kj} \big| \\
            &= \max_{1 \leq j \leq n} \sum_{1 \leq i \leq m} \sum_{1 \leq k \leq r} \big| a_{ik} \big| \ \big| b_{kj} \big| \\
            &= \max_{1 \leq j \leq n} \sum_{1 \leq k \leq r} \sum_{1 \leq i \leq m} \big| a_{ik} \big| \ \big| b_{kj} \big| \\
            &= \max_{1 \leq j \leq n} \sum_{1 \leq k \leq r} \big| b_{kj} \big| \sum_{1 \leq i \leq m} \big| a_{ik} \big| \\
            &\leq \big|\big| \mathbf{B} \big| \big|_{1} \max_{1 \leq k \leq r} \sum_{1 \leq i \leq m} \big| a_{ik} \big| \\
            &= \big|\big| \mathbf{B} \big| \big|_{1} \big|\big| \mathbf{A} \big| \big|_{1}
            = \big|\big| \mathbf{A} \big| \big|_{1} \big|\big| \mathbf{B} \big| \big|_{1}.
        \end{align*}
        Therefore, we prove that given an arbitrary matrix $\mathbf{A} = [a_{ij}] \in \mathbb{R}^{m \times n}$, $||\mathbf{A}||_{1} = \max_{1 \leq j \leq n} \sum_{1 \leq i \leq m} |a_{ij}|$ is a sub-multiplicative norm.
    \end{proof}
    
    \begingroup
    \def\thelemma{\ref{lemm:shapley_q_contraction_mapping}}
    \begin{lemma}
        For all $\mathbf{s} \in \mathcal{S}$ and $\mathbf{a} \in \mathcal{A}$, Shapley-Bellman operator is a contraction mapping in a non-empty complete metric space when $\max_{\mathbf{s}} \big\{ \sum_{i \in \mathcal{N}} \max_{a_{i}} w_{i}(\mathbf{s}, a_{i}) \big\} < \frac{1}{\gamma}$.
    \end{lemma}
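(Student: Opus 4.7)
The plan is to verify the two ingredients required to call $\mathlarger{\Upsilon}$ a contraction mapping: a non-empty complete metric space to host the operator, and a Lipschitz constant strictly less than one. For the ambient space, I would view the tuple $\mathlarger{\mathlarger{\times}}_{i \in \mathcal{N}} Q_{i}^{\phi}(\mathbf{s}, a_{i})$ as an element of $\mathbb{R}^{|\mathcal{N}| \times |\mathcal{S}||\mathcal{A}|}$ (finite by Assumption \ref{assm:basic_condition}) and equip it with the induced matrix $1$-norm. Lemma \ref{lem:matrices_norm_1_metric_space} tells me that this is a sub-multiplicative norm, and Lemma \ref{lem:banach_algebra} then upgrades the pair to a non-empty complete (Banach) metric space, which is exactly what Corollary \ref{coro:shapley_q_fixed_point} will feed into Banach's fixed-point theorem.

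To bound the Lipschitz constant, I would pick two arbitrary points $\mathbf{Q}^{\phi}$ and $\tilde{\mathbf{Q}}^{\phi}$ in this space and examine $\mathlarger{\Upsilon}(\mathbf{Q}^{\phi}) - \mathlarger{\Upsilon}(\tilde{\mathbf{Q}}^{\phi})$. The additive terms $\mathbf{b}(\mathbf{s})$ and the reward $R$ cancel out and what remains factors as $\gamma \, \mathbf{w}(\mathbf{s},\mathbf{a}) \sum_{\mathbf{s}'} Pr(\mathbf{s}'|\mathbf{s},\mathbf{a}) \bigl[ \sum_{i} \max_{a_i} Q_i^{\phi}(\mathbf{s}',a_i) - \sum_{i} \max_{a_i} \tilde{Q}_i^{\phi}(\mathbf{s}',a_i) \bigr]$. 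I would then apply, in order: the triangle inequality for $\|\cdot\|_1$, the standard fact that $|\max_x f(x) - \max_x g(x)| \le \max_x |f(x) - g(x)|$ to strip the inner maxima, the identity $\sum_{\mathbf{s}'} Pr(\mathbf{s}'|\mathbf{s},\mathbf{a}) = 1$ to discard the expectation, and sub-multiplicativity of $\|\cdot\|_1$ to peel off the weight matrix. What survives is a product in which the scalar factor is $\gamma \cdot \|\mathbf{w}(\mathbf{s},\mathbf{a})\|_1$ maximised over $\mathbf{s}$, where the appropriate notion of the $1$-norm here reduces to $\sum_{i \in \mathcal{N}} \max_{a_i} w_i(\mathbf{s},a_i)$ because the coordinates $w_i(\mathbf{s},a_i)$ are decoupled across $i$ and the $\max_{\mathbf{a}}$ distributes as a sum of coordinate-wise $\max_{a_i}$.

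Putting these estimates together yields $\|\mathlarger{\Upsilon}(\mathbf{Q}^{\phi}) - \mathlarger{\Upsilon}(\tilde{\mathbf{Q}}^{\phi})\|_1 \le \gamma \cdot \max_{\mathbf{s}} \bigl\{ \sum_{i \in \mathcal{N}} \max_{a_i} w_i(\mathbf{s},a_i) \bigr\} \cdot \|\mathbf{Q}^{\phi} - \tilde{\mathbf{Q}}^{\phi}\|_1$, and the hypothesis $\max_{\mathbf{s}} \bigl\{ \sum_{i} \max_{a_i} w_i(\mathbf{s},a_i) \bigr\} < 1/\gamma$ immediately forces the leading scalar below one, delivering the contraction property.

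The main obstacle I anticipate is the bookkeeping in the second step: $\mathbf{w}(\mathbf{s},\mathbf{a})$ is a vector-valued, state-action-dependent weight, and after the inner maxima are pulled out we still have to argue cleanly that the $1$-norm of the residual factors as $\sum_i \max_{a_i} w_i(\mathbf{s},a_i)$ rather than as $\max_{\mathbf{a}} \sum_i w_i(\mathbf{s},a_i)$ or some cross-term. This relies crucially on the fact (also used in Proposition \ref{prop:equiv_credit_assignment}) that the joint action decomposes as $\mathbf{a} = \mathlarger{\mathlarger{\times}}_{i \in \mathcal{N}} a_i$, so that the coordinate-wise maxima commute with the summation over $i$. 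Once this decomposition is handled carefully, the rest is routine application of triangle inequality and Lemma \ref{lem:matrices_norm_1_metric_space}.
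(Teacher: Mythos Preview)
Your proposal is correct and follows essentially the same approach as the paper: the paper also works in $\mathbb{R}^{|\mathcal{N}|\times|\mathcal{S}||\mathcal{A}|}$ with the induced matrix $1$-norm (invoking the same two lemmas for completeness), cancels $R$ and $\mathbf{b}(\mathbf{s})$, peels off $\gamma\max_{\mathbf{s},\mathbf{a}}\mathbf{1}^{\top}|\mathbf{w}(\mathbf{s},\mathbf{a})|$ as the Lipschitz factor $\delta$, then uses the probability-summing-to-one, triangle inequality, the $|\max f-\max g|\le\max|f-g|$ trick, and finally the decomposition $\mathbf{a}=\mathlarger{\mathlarger{\times}}_{i}a_i$ to identify $\delta$ with $\gamma\max_{\mathbf{s}}\sum_i\max_{a_i}w_i(\mathbf{s},a_i)$. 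The only cosmetic difference is that the paper separates the weight factor first and then handles the remaining scalar, whereas you apply the triangle and max-max estimates before invoking sub-multiplicativity; the obstacle you flag about the $\max_{\mathbf{a}}$ distributing across $i$ is exactly the step the paper singles out as well.
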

    \endgroup
    \begin{proof}
        To ease life, we firstly define some variables that will be used for proof such that 
        \begin{align*}
            &\mathbf{Q}^{\phi} = \times_{i \in \mathcal{N}} Q_{i}^{\phi} \in \mathbb{R}^{|\mathcal{N}|\times|\mathcal{S}||\mathcal{A}|}, \\
            &\mathbf{w} \in \mathbb{R}^{|\mathcal{N}|\times|\mathcal{S}||\mathcal{A}|}, \\
            &Pr \in \mathbb{R}^{|\mathcal{S}||\mathcal{A}|\times|\mathcal{S}|}, \\
            &\mathbf{1} = [1,1,...,1]^{\top},
        \end{align*}
        
        where $\mathcal{A}=\mathlarger{\mathlarger{\times}}_{i \in \mathcal{N}} \mathcal{A}_{i}$. Then, for an arbitrary matrix $\mathbf{A} \in \mathbb{R}^{m \times n}$, we define the $||\cdot||_{1}$ for the induced matrix norm such that
        \begin{equation*}
            ||\mathbf{A}||_{1} = \max_{1 \leq j \leq n} \sum_{1 \leq i \leq m} |a_{ij}|,
        \end{equation*}
        
        where $\mathit{a}_{ij}$ is an arbitrary element in $\mathbf{A}$. By Lemma \ref{lem:matrices_norm_1_metric_space}, $||\cdot||_{1}$ defined here is a sub-multiplicative norm. By Lemma \ref{lem:banach_algebra}, the set of real matrices $\mathbb{R}^{|\mathcal{N}|\times |\mathcal{S}| |\mathcal{A}|}$ with the norm $||\cdot||_{1}$ is a Banach algebra and a non-empty complete metric space with the metric induced by $||\cdot||_{1}$.
        
        To show that the operator $\mathlarger{\Upsilon}$ is a contraction mapping in the supremum norm, we just need to show that for any $\mathbf{Q}^{\phi}_{1} = \times_{i \in \mathcal{N}} \big( Q_{i}^{\phi} \big)_{1} \in \mathbb{R}^{|\mathcal{N}|\times |\mathcal{S}| |\mathcal{A}|}$ and $\mathbf{Q}^{\phi}_{2} = \times_{i \in \mathcal{N}} \big( Q_{i}^{\phi} \big)_{2} \in \mathbb{R}^{|\mathcal{N}|\times |\mathcal{S}| |\mathcal{A}|}$, we have $|| \mathlarger{\Upsilon} \mathbf{Q}^{\phi}_{1}  - \mathlarger{\Upsilon} \mathbf{Q}^{\phi}_{2} ||_{1} \leq \delta ||\mathbf{Q}^{\phi}_{1} - \mathbf{Q}^{\phi}_{2}||_{1}$, where $\delta \in (0, 1)$.
        \begin{align*}
            &\quad || \mathlarger{\Upsilon} \mathbf{Q}^{\phi}_{1}  - \mathlarger{\Upsilon} \mathbf{Q}^{\phi}_{2} ||_{1} \\
            &= \max_{\mathbf{s}, \mathbf{a}} \mathbf{1}^{\top} \bigg| \mathbf{w}(\mathbf{s}, \mathbf{a}) \ \sum_{\mathbf{s}' \in \mathcal{S}} Pr(\mathbf{s}'|\mathbf{s}, \mathbf{a}) \big[ R(\mathbf{s}, \mathbf{a}) + \gamma \sum_{i \in \mathcal{N}} \max_{a_{i}} \big( Q_{i}^{\phi} \big)_{1}(\mathbf{s}', a_{i}) \big] - \mathbf{b}(\mathbf{s}) \\
            &- \mathbf{w}(\mathbf{s}, \mathbf{a}) \ \sum_{\mathbf{s}' \in \mathcal{S}} Pr(\mathbf{s}'|\mathbf{s}, \mathbf{a}) \big[ R(\mathbf{s}, \mathbf{a}) + \gamma \sum_{i \in \mathcal{N}} \max_{a_{i}} \big( Q_{i}^{\phi} \big)_{2}(\mathbf{s}', a_{i}) \big] + \mathbf{b}(\mathbf{s}) \bigg| \\
            &= \gamma \max_{\mathbf{s}, \mathbf{a}} \mathbf{1}^{\top} \bigg| \mathbf{w}(\mathbf{s}, \mathbf{a}) \ \sum_{\mathbf{s}' \in \mathcal{S}} Pr(\mathbf{s}'|\mathbf{s}, \mathbf{a}) \big[ \sum_{i \in \mathcal{N}} \max_{a_{i}} \big( Q_{i}^{\phi} \big)_{1}(\mathbf{s}', a_{i}) - \sum_{i \in \mathcal{N}} \max_{a_{i}} \big( Q_{i}^{\phi} \big)_{2}(\mathbf{s}', a_{i}) \big] \bigg| \\
        \end{align*}
        \begin{align*}
            &\quad \gamma \max_{\mathbf{s}, \mathbf{a}} \mathbf{1}^{\top} \bigg| \mathbf{w}(\mathbf{s}, \mathbf{a}) \ \sum_{\mathbf{s}' \in \mathcal{S}} Pr(\mathbf{s}'|\mathbf{s}, \mathbf{a}) \big[ \sum_{i \in \mathcal{N}} \max_{a_{i}} \big( Q_{i}^{\phi} \big)_{1}(\mathbf{s}', a_{i}) - \sum_{i \in \mathcal{N}} \max_{a_{i}} \big( Q_{i}^{\phi} \big)_{2}(\mathbf{s}', a_{i}) \big] \bigg| \\
            &\leq \gamma \max_{\mathbf{s}, \mathbf{a}} \mathbf{1}^{\top} \bigg| \mathbf{w}(\mathbf{s}, \mathbf{a}) \bigg| \max_{\mathbf{s}, \mathbf{a}} \bigg| \sum_{\mathbf{s}' \in \mathcal{S}} Pr(\mathbf{s}'|\mathbf{s}, \mathbf{a}) \big[ \sum_{i \in \mathcal{N}} \max_{a_{i}} \big( Q_{i}^{\phi} \big)_{1}(\mathbf{s}', a_{i}) - \sum_{i \in \mathcal{N}} \max_{a_{i}} \big( Q_{i}^{\phi} \big)_{2}(\mathbf{s}', a_{i}) \big] \bigg| \\
            &\quad(\text{If we write $\delta = \gamma \max_{\mathbf{s}, \mathbf{a}} \mathbf{1}^{\top} \big| \mathbf{w}(\mathbf{s}, \mathbf{a}) \big|$, we can have the following equation.}) \\
            &= \delta \max_{\mathbf{s}, \mathbf{a}} \bigg| \sum_{\mathbf{s}' \in \mathcal{S}} Pr(\mathbf{s}'|\mathbf{s}, \mathbf{a}) \big[ \sum_{i \in \mathcal{N}} \max_{a_{i}} \big( Q_{i}^{\phi} \big)_{1}(\mathbf{s}', a_{i}) - \sum_{i \in \mathcal{N}} \max_{a_{i}} \big( Q_{i}^{\phi} \big)_{2}(\mathbf{s}', a_{i}) \big] \bigg| \\
            &\leq \delta \max_{\mathbf{s}, \mathbf{a}} \sum_{\mathbf{s}' \in \mathcal{S}} Pr(\mathbf{s}'|\mathbf{s}, \mathbf{a}) \bigg| \sum_{i \in \mathcal{N}} \max_{a_{i}} \big( Q_{i}^{\phi} \big)_{1}(\mathbf{s}', a_{i}) - \sum_{i \in \mathcal{N}} \max_{a_{i}} \big( Q_{i}^{\phi} \big)_{2}(\mathbf{s}', a_{i}) \bigg| \\
            &= \delta \bigg| \sum_{i \in \mathcal{N}} \big[ \max_{a_{i}} \big( Q_{i}^{\phi} \big)_{1}(\mathbf{s}', a_{i}) - \max_{a_{i}} \big( Q_{i}^{\phi} \big)_{2}(\mathbf{s}', a_{i}) \big] \bigg| \\
            &\quad (\text{By triangle inequality, we can obtain the following inequality.}) \\
            &\leq \delta \sum_{i \in \mathcal{N}} \bigg| \max_{a_{i}} \big( Q_{i}^{\phi} \big)_{1}(\mathbf{s}', a_{i}) - \max_{a_{i}} \big( Q_{i}^{\phi} \big)_{2}(\mathbf{s}', a_{i}) \bigg| \\
            &\leq \delta \sum_{i \in \mathcal{N}} \max_{a_{i}} \bigg| \big( Q_{i}^{\phi} \big)_{1}(\mathbf{s}', a_{i}) - \big( Q_{i}^{\phi} \big)_{2}(\mathbf{s}', a_{i}) \bigg| \\
            &\quad (\text{Since $\mathbf{a} = \mathlarger{\mathlarger{\times}}_{\scriptscriptstyle{i \in \mathcal{N}}} a_{i}$, we have the following equation.}) \\
            &= \delta \max_{\mathbf{a}} \sum_{i \in \mathcal{N}} \bigg| \big( Q_{i}^{\phi} \big)_{1}(\mathbf{s}', a_{i}) - \big( Q_{i}^{\phi} \big)_{2}(\mathbf{s}', a_{i}) \bigg| \\
            &\leq \delta \max_{\mathbf{z}, \mathbf{a}} \sum_{i \in \mathcal{N}} \bigg| \big( Q_{i}^{\phi} \big)_{1}(\mathbf{z}, a_{i}) - \big( Q_{i}^{\phi} \big)_{2}(\mathbf{z}, a_{i}) \bigg| = \delta || \mathbf{Q}^{\phi}_{1}  - \mathbf{Q}^{\phi}_{2} ||_{1}.
        \end{align*}
        
        Now, we need to discuss the condition to $\delta \in (0, 1)$. Apparently, that $\delta > 0$ holds, so we just need to discuss the condition to guarantee that $\delta < 1$. We now have the following discussion such that
        \begin{align*}
            &\quad \ \ \delta = \gamma \max_{\mathbf{s}, \mathbf{a}} \mathbf{1}^{\top} \big| \mathbf{w}(\mathbf{s}, \mathbf{a}) \big| < 1 \ (\text{Since $w_{i}(\mathbf{s}, a_{i}) > 0$.})\\
            % &\mathlarger{\Downarrow} \\
            &\Rightarrow \gamma \max_{\mathbf{s}, \mathbf{a}} \sum_{i \in \mathcal{N}} w_{i}(\mathbf{s}, a_{i}) < 1 \ (\text{When $\gamma \neq 0$, we can have the following inequality.})\\
            % &\mathlarger{\Downarrow} \\
            &\Rightarrow \max_{\mathbf{s}, \mathbf{a}} \sum_{i \in \mathcal{N}} w_{i}(\mathbf{s}, a_{i}) < \frac{1}{\gamma} \ (\text{Since $\mathbf{a} = \mathlarger{\mathlarger{\times}}_{\scriptscriptstyle i \in \mathcal{N}} a_{i}$, we have the following equation.}) \\
            % &\mathlarger{\Downarrow} \\
            &\Rightarrow \max_{\mathbf{s}} \left\{ \sum_{i \in \mathcal{N}} \max_{a_{i}} w_{i}(\mathbf{s}, a_{i}) \right\} < \frac{1}{\gamma}.
        \end{align*}
        
        Therefore, we show that Shapley-Bellman operator $\mathlarger{\Upsilon}$ is a contraction mapping in the non-empty complete metric space generated by $\mathbb{R}^{|\mathcal{N}| \times |\mathcal{S}| |\mathcal{A}|}$ with the metric induced by $||\cdot||_{1}$, when $$\max_{\mathbf{s}} \left\{ \sum_{i \in \mathcal{N}} \max_{a_{i}} w_{i}(\mathbf{s}, a_{i}) \right\} < \frac{1}{\gamma}.$$ Finally, it is apparent that $w_{i}(\mathbf{s}, a_{i}) = 1 / |\mathcal{N}|$ when $\mathit{a}_{i} = \arg\max_{a_{i}} Q^{\phi}_{i}(\mathbf{s}, a_{i})$ satisfies the above condition.
    \end{proof}
    
    \begingroup
    \def\thetheorem{\ref{thm:proof_of_shapley_q_learning}}
    \begin{theorem}
        For a Markov convex game, the Q-learning algorithm derived by Shapley-Bellman operator given by the update rule such that
        \begin{equation*}
            \mathbf{Q}^{\phi}_{t+1}(\mathbf{s}, \mathbf{a}) \leftarrow \mathbf{Q}^{\phi}_{t}(\mathbf{s}, \mathbf{a}) + \alpha_{t}(\mathbf{s}, \mathbf{a}) \left[ \mathbf{w}(\mathbf{s}, \mathbf{a}) \left( R_{t} + \gamma \sum_{i \in \mathcal{N}} \max_{a_{i}} (Q_{i}^{\phi})_{t}(\mathbf{s}', a_{i}) \right) - \mathbf{b}(\mathbf{s}) - \mathbf{Q}^{\phi}_{t}(\mathbf{s}, \mathbf{a}) \right],
        \end{equation*}
        
        converges w.p.1 to the optimal Markov Shapley Q-value if 
        \begin{equation}
            \sum_{t} \alpha_{t}(\mathbf{s}, \mathbf{a}) = \infty \ \ \ \ \ \ \ \ \sum_{t} \alpha^{2}_{t}(\mathbf{s}, \mathbf{a}) \leq \infty
        % \label{eq:alpha_condition}
        \end{equation}
        
        for all $\mathbf{s} \in \mathcal{S}$ and $\mathbf{a} \in \mathcal{A}$ as well as $\max_{\mathbf{s}} \left\{ \sum_{i \in \mathcal{N}} \max_{a_{i}} w_{i}(\mathbf{s}, a_{i}) \right\} < \frac{1}{\gamma}$.
    \end{theorem}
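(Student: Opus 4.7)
The plan is to apply the classical stochastic approximation lemma (Lemma \ref{lemm:stochastic_process}) to the error process $\Delta_{t}(\mathbf{s}, \mathbf{a}) = \mathbf{Q}^{\phi}_{t}(\mathbf{s}, \mathbf{a}) - \mathbf{Q}^{\phi^{*}}(\mathbf{s}, \mathbf{a})$ with the weighted supremum norm induced by the sub-multiplicative $\|\cdot\|_{1}$ used in Lemma \ref{lemm:shapley_q_contraction_mapping}. First, I would subtract $\mathbf{Q}^{\phi^{*}}(\mathbf{s}, \mathbf{a})$ from both sides of the update rule and rearrange it into the canonical form required by Lemma \ref{lemm:stochastic_process}, namely $\Delta_{t+1}(\mathbf{s}, \mathbf{a}) = (1 - \alpha_{t}(\mathbf{s}, \mathbf{a}))\,\Delta_{t}(\mathbf{s}, \mathbf{a}) + \alpha_{t}(\mathbf{s}, \mathbf{a})\, F_{t}(\mathbf{s}, \mathbf{a})$, where
\begin{equation*}
F_{t}(\mathbf{s}, \mathbf{a}) = \mathbf{w}(\mathbf{s}, \mathbf{a})\left[R_{t} + \gamma \sum_{i \in \mathcal{N}} \max_{a_{i}}(Q_{i}^{\phi})_{t}(\mathbf{s}', a_{i})\right] - \mathbf{b}(\mathbf{s}) - \mathbf{Q}^{\phi^{*}}(\mathbf{s}, \mathbf{a}).
\end{equation*}
The assumption on $\alpha_{t}$ directly verifies the first bullet of Lemma \ref{lemm:stochastic_process}.

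Next I would handle the contraction bullet. Taking the conditional expectation with respect to the natural filtration $\mathcal{F}_{t}$ generated by the past history, the sampled transition $\mathbf{s}'$ is integrated out against $Pr(\mathbf{s}'|\mathbf{s},\mathbf{a})$, so $\mathbb{E}[F_{t}(\mathbf{s}, \mathbf{a}) \mid \mathcal{F}_{t}] = \mathlarger{\Upsilon}(\mathbf{Q}^{\phi}_{t})(\mathbf{s}, \mathbf{a}) - \mathbf{Q}^{\phi^{*}}(\mathbf{s}, \mathbf{a})$. Because $\mathbf{Q}^{\phi^{*}}$ is the unique fixed point of $\mathlarger{\Upsilon}$ by Corollary \ref{coro:shapley_q_fixed_point}, this equals $\mathlarger{\Upsilon}(\mathbf{Q}^{\phi}_{t})(\mathbf{s}, \mathbf{a}) - \mathlarger{\Upsilon}(\mathbf{Q}^{\phi^{*}})(\mathbf{s}, \mathbf{a})$. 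Invoking Lemma \ref{lemm:shapley_q_contraction_mapping} under the hypothesis $\max_{\mathbf{s}}\{\sum_{i}\max_{a_{i}} w_{i}(\mathbf{s}, a_{i})\} < 1/\gamma$ gives $\|\mathbb{E}[F_{t}\mid \mathcal{F}_{t}]\|_{1} \le \delta \|\Delta_{t}\|_{1}$ with $\delta = \gamma \max_{\mathbf{s}}\sum_{i}\max_{a_{i}} w_{i}(\mathbf{s}, a_{i}) < 1$, which is exactly the second bullet.

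For the third bullet, I would argue boundedness of $F_{t}$ and hence of its conditional variance. Finiteness of $\mathcal{S}$ and $\mathcal{A}$ from Assumption \ref{assm:basic_condition} together with boundedness of $R$, of $\mathbf{w}(\mathbf{s},\mathbf{a})$, and of $\mathbf{b}(\mathbf{s})$ guarantee that $\mathbf{var}[F_{t}(\mathbf{s}, \mathbf{a}) \mid \mathcal{F}_{t}]$ can be bounded by a constant plus a multiple of $\max_{i}\max_{a_{i}} (Q_{i}^{\phi})_{t}(\mathbf{s}', a_{i})^{2}$, which in turn is bounded by $C(1 + \|\Delta_{t}\|_{1}^{2})$ after writing $(Q_{i}^{\phi})_{t} = \Delta_{t,i} + Q_{i}^{\phi^{*}}$ and using boundedness of $Q_{i}^{\phi^{*}}$. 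Having verified all three hypotheses, Lemma \ref{lemm:stochastic_process} yields $\Delta_{t} \to 0$ w.p.1, i.e.\ $\mathbf{Q}^{\phi}_{t} \to \mathbf{Q}^{\phi^{*}}$ almost surely.

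The main obstacle I anticipate is making the contraction bullet rigorous in the matrix-valued norm setting: Lemma \ref{lemm:shapley_q_contraction_mapping} is stated for $\mathlarger{\Upsilon}$ acting on $\mathbb{R}^{|\mathcal{N}|\times|\mathcal{S}||\mathcal{A}|}$ with the induced $\|\cdot\|_{1}$, whereas Lemma \ref{lemm:stochastic_process} is phrased for a scalar random process $\Delta_{t}(x)$; so I would either (i) apply Lemma \ref{lemm:stochastic_process} coordinate-wise to each agent-indexed scalar entry of $\mathbf{Q}^{\phi}_{t}(\mathbf{s}, \mathbf{a})$ with a common weighted supremum norm dominating the $\|\cdot\|_{1}$ used in Lemma \ref{lemm:shapley_q_contraction_mapping}, or (ii) invoke the vector-valued extension of Jaakkola's result. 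The variance bound is routine but requires care since $\|\Delta_{t}\|_{1}$ involves a maximum over $(\mathbf{s},\mathbf{a})$; boundedness of the optimal Markov Shapley Q-values follows from finite state-action space and bounded rewards, which keeps the constant $C$ well-defined throughout.
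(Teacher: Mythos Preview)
Your proposal is correct and follows essentially the same route as the paper: define $\Delta_t = \mathbf{Q}^{\phi}_t - \mathbf{Q}^{\phi^*}$, rewrite the update in the canonical form $\Delta_{t+1} = (1-\alpha_t)\Delta_t + \alpha_t F_t$, identify $\mathbb{E}[F_t\mid\mathcal{F}_t] = \mathlarger{\Upsilon}\mathbf{Q}^{\phi}_t - \mathlarger{\Upsilon}\mathbf{Q}^{\phi^*}$ and invoke the contraction Lemma~\ref{lemm:shapley_q_contraction_mapping} for the second bullet, then bound the conditional variance via boundedness of $R_t$, $\mathbf{w}$ and $\mathbf{b}$ before applying Lemma~\ref{lemm:stochastic_process}. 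Your discussion of the norm-compatibility issue between the matrix-valued $\|\cdot\|_1$ of Lemma~\ref{lemm:shapley_q_contraction_mapping} and the scalar setting of Lemma~\ref{lemm:stochastic_process} is in fact more careful than the paper, which simply applies the latter without comment.
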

    \endgroup
    \begin{proof}
        The proof follows the sketch of proving the convergence of Q-learning given by \cite{melo2001convergence}. First, we rewrite Eq.~\ref{eq:shapley_q_learning_primal} in the form such that
        $$\mathbf{Q}^{\phi}_{t}(\mathbf{s}, \mathbf{a}) = \left(1 - \alpha_{t}(\mathbf{s}, \mathbf{a})\right) \mathbf{Q}^{\phi}_{t}(\mathbf{s}, \mathbf{a}) + \alpha_{t}(\mathbf{s}, \mathbf{a}) \left[ \mathbf{w}(\mathbf{s}, \mathbf{a}) \left( R_{t} + \gamma \sum_{i \in \mathcal{N}} \max_{a_{i}} (Q_{i}^{\phi})_{t}(\mathbf{s}', a_{i}) \right) - \mathbf{b}(\mathbf{s}) \right].$$
        By subtracting $\mathbf{Q}^{\phi^{*}}(\mathbf{s}, \mathbf{a})$ and letting 
        $$\Delta_{t}(\mathbf{s}, \mathbf{a}) = \mathbf{Q}^{\phi}_{t}(\mathbf{s}, \mathbf{a}) - \mathbf{Q}^{\phi^{*}}(\mathbf{s}, \mathbf{a}),$$ we can transform Eq.~\ref{eq:shapley_q_learning_primal} to
        $$\Delta_{t+1}(\mathbf{s}, \mathbf{a}) = (1 - \alpha_{t}(\mathbf{s}, \mathbf{a})) \Delta_{t}(\mathbf{s}, \mathbf{a}) + \alpha_{t}(\mathbf{s}, \mathbf{a}) F_{t}(\mathbf{s}, \mathbf{a}),$$
        
        where $$F_{t}(\mathbf{s}, \mathbf{a}) = \mathbf{w}(\mathbf{s}, \mathbf{a}) \left( R_{t} + \gamma \sum_{i \in \mathcal{N}} \max_{a_{i}} (Q_{i}^{\phi})_{t}(\mathbf{s}', a_{i}) \right) - \mathbf{b}(\mathbf{s}) - \mathbf{Q}^{\phi^{*}}(\mathbf{s}, \mathbf{a}).$$
        
        Since $\mathbf{s}' \in \mathcal{S}$ is a random sample from Markov chain, so we can get that
        \begin{align*}
            \mathbb{E}[ F_{t}(\mathbf{s}, \mathbf{a}) | \mathcal{F}_{t} ] &= \sum_{\mathbf{s}' \in \mathcal{S}} Pr(\mathbf{s}'|\mathbf{s}, \mathbf{a}) \left[ \mathbf{w}(\mathbf{s}, \mathbf{a}) \left( R_{t} + \gamma \sum_{i \in \mathcal{N}} \max_{a_{i}} (Q_{i}^{\phi})_{t}(\mathbf{s}', a_{i}) \right) - \mathbf{b}(\mathbf{s}) - \mathbf{Q}^{\phi^{*}}(\mathbf{s}, \mathbf{a}) \right] \\
            &= \mathbf{w}(\mathbf{s}, \mathbf{a}) \sum_{\mathbf{s}' \in \mathcal{S}} Pr(\mathbf{s}'|\mathbf{s}, \mathbf{a}) \left( R_{t} + \gamma \sum_{i \in \mathcal{N}} \max_{a_{i}} (Q_{i}^{\phi})_{t}(\mathbf{s}', a_{i}) \right) - \mathbf{b}(\mathbf{s}) - \mathbf{Q}^{\phi^{*}}(\mathbf{s}, \mathbf{a}) \\
            & \quad \left(\text{Since $\max_{\mathbf{s}} \left\{ \sum_{i \in \mathcal{N}} \max_{a_{i}} w_{i}(\mathbf{s}, a_{i}) \right\} < \frac{1}{\gamma}$.}\right) \\
            &= \mathlarger{\Upsilon} \mathbf{Q}^{\phi}_{t}(\mathbf{s}, \mathbf{a}) - \mathlarger{\Upsilon} \mathbf{Q}^{\phi^{*}}(\mathbf{s}, \mathbf{a}).
        \end{align*}
        
        By the results from Theorem \ref{lemm:shapley_q_contraction_mapping}, we can get that
        \begin{equation*}
            ||\mathbb{E}[ F_{t}(\mathbf{s}, \mathbf{a}) | \mathcal{F}_{t} ]||_{1} \leq \delta ||\mathbf{Q}^{\phi}_{t}(\mathbf{s}, \mathbf{a}) - \mathbf{Q}^{\phi^{*}}(\mathbf{s}, \mathbf{a})||_{1} = \delta ||\Delta_{t}(\mathbf{s}, \mathbf{a})||_{1},
        \end{equation*}
        where $\delta \in (0, 1)$.
        
        Next, we get that
        \begin{align*}
            \textbf{var}[F_{t}(\mathbf{s}, \mathbf{a})| \mathcal{F}_{t}] &= \mathlarger{\mathbb{E}} \Bigg[ \bigg( \mathbf{w}(\mathbf{s}, \mathbf{a}) \big( R_{t} + \gamma \sum_{i \in \mathcal{N}} \max_{a_{i}} (Q_{i}^{\phi})_{t}(\mathbf{s}', a_{i}) \big) - \mathbf{b}(\mathbf{s}) - \mathbf{Q}^{\phi^{*}}(\mathbf{s}, \mathbf{a}) \\
            &- \mathlarger{\Upsilon} \mathbf{Q}^{\phi}_{t}(\mathbf{s}, \mathbf{a}) + \mathbf{Q}^{\phi^{*}}(\mathbf{s}, \mathbf{a}) \bigg)^{2} \Bigg] \\
            &= \mathlarger{\mathbb{E}} \left[ \left( \mathbf{w}(\mathbf{s}, \mathbf{a}) \big( R_{t} + \gamma \sum_{i \in \mathcal{N}} \max_{a_{i}} (Q_{i}^{\phi})_{t}(\mathbf{s}', a_{i}) \big) - \mathbf{b}(\mathbf{s}) - \mathlarger{\Upsilon} \mathbf{Q}^{\phi}_{t}(\mathbf{s}, \mathbf{a}) \right)^{2} \right] \\
            &= \textbf{var} \left[ \mathbf{w}(\mathbf{s}, \mathbf{a}) \big( R_{t} + \gamma \sum_{i \in \mathcal{N}} \max_{a_{i}} (Q_{i}^{\phi})_{t}(\mathbf{s}', a_{i}) \big) - \mathbf{b}(\mathbf{s}) \bigg| \mathcal{F}_{t} \right].
        \end{align*}
        
        Since $R_{t}$, $\mathbf{w}(\mathbf{s}, \mathbf{a})$ and $\mathbf{b}(\mathbf{s})$ are bounded, it clearly verifies that 
        \begin{equation*}
            \textbf{var}[F_{t}(\mathbf{s}, \mathbf{a})| \mathcal{F}_{t}] \leq C (1 + ||\Delta_{t}(\mathbf{s}, \mathbf{a})||_{1}^{2})
        \end{equation*}
        for some constant $C$.
        
        Finally, by Lemma \ref{lemm:stochastic_process} it is easy to see that $\Delta_{t}$ converges to 0 w.p.1, i.e., $ \mathbf{Q}^{\phi}_{t}(\mathbf{s}, \mathbf{a})$ converges to $ \mathbf{Q}^{\phi^{*}}(\mathbf{s}, \mathbf{a})$ w.p.1, given the condition in Eq.~\ref{eq:alpha_condition}.
    \end{proof}

\section{Proof of Validity and Interpretability}
\label{sec:validity_and_interpretability}
    \begingroup
    \def\thelemma{\ref{lemm:markov_core_convex_set}}
    \begin{lemma}
        Markov core is a convex set.
    \end{lemma}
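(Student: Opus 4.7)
The plan is to establish convexity directly from the defining inequality of \texttt{MarkovCore}$(\Gamma)$, mirroring the classical argument that the core of a characteristic function game is convex. The right-hand side of the defining inequality, namely $\max_{\pi_{\mathcal{C}}} V^{\pi_{\mathcal{C}}}(\mathbf{s})$, depends only on the game $\Gamma$ and not on the payoff distribution, so the constraint is linear in the payoff vector and should be preserved under convex combinations.

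First, I would fix two arbitrary payoff distribution schemes $\mathbf{x}^{(1)} = \bigl(\max_{\pi_{i}} x_{i}^{(1)}(\mathbf{s})\bigr)_{i \in \mathcal{N}}$ and $\mathbf{x}^{(2)} = \bigl(\max_{\pi_{i}} x_{i}^{(2)}(\mathbf{s})\bigr)_{i \in \mathcal{N}}$ in \texttt{MarkovCore}$(\Gamma)$, together with an arbitrary $\lambda \in [0,1]$, and form the candidate convex combination $\mathbf{x}^{(\lambda)} := \lambda \mathbf{x}^{(1)} + (1-\lambda) \mathbf{x}^{(2)}$. The goal is to show $\mathbf{x}^{(\lambda)} \in \texttt{MarkovCore}(\Gamma)$, i.e., for every $\mathcal{C} \subseteq \mathcal{N}$ and every $\mathbf{s} \in \mathcal{S}$,
\begin{equation*}
\sum_{i \in \mathcal{C}} \bigl[\lambda \max_{\pi_{i}} x_{i}^{(1)}(\mathbf{s}) + (1-\lambda) \max_{\pi_{i}} x_{i}^{(2)}(\mathbf{s})\bigr] \;\geq\; \max_{\pi_{\mathcal{C}}} V^{\pi_{\mathcal{C}}}(\mathbf{s}).
\end{equation*}

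Next, I would invoke the membership of $\mathbf{x}^{(1)}$ and $\mathbf{x}^{(2)}$: each satisfies $\sum_{i \in \mathcal{C}} \max_{\pi_{i}} x_{i}^{(k)}(\mathbf{s}) \geq \max_{\pi_{\mathcal{C}}} V^{\pi_{\mathcal{C}}}(\mathbf{s})$ for $k \in \{1,2\}$. Multiplying the first by $\lambda \geq 0$ and the second by $1-\lambda \geq 0$, then summing, and using linearity of summation over agents, gives
\begin{equation*}
\sum_{i \in \mathcal{C}} \bigl[\lambda \max_{\pi_{i}} x_{i}^{(1)}(\mathbf{s}) + (1-\lambda) \max_{\pi_{i}} x_{i}^{(2)}(\mathbf{s})\bigr] \;\geq\; \lambda \max_{\pi_{\mathcal{C}}} V^{\pi_{\mathcal{C}}}(\mathbf{s}) + (1-\lambda) \max_{\pi_{\mathcal{C}}} V^{\pi_{\mathcal{C}}}(\mathbf{s}) \;=\; \max_{\pi_{\mathcal{C}}} V^{\pi_{\mathcal{C}}}(\mathbf{s}),
\end{equation*}
which is exactly the defining inequality for $\mathbf{x}^{(\lambda)}$. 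Since $\mathcal{C}$ and $\mathbf{s}$ were arbitrary, this yields $\mathbf{x}^{(\lambda)} \in \texttt{MarkovCore}(\Gamma)$ and convexity follows.

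The argument is essentially a one-step linearity check, so there is no genuine obstacle; the only subtlety worth commenting on is that the convex combination is taken in the space of payoff vectors (one component per agent, for each state), and the right-hand side $\max_{\pi_{\mathcal{C}}} V^{\pi_{\mathcal{C}}}(\mathbf{s})$ is a fixed game-theoretic quantity independent of the payoff scheme, which is what allows the two inequalities to combine cleanly. If one wanted to be fully rigorous, one could also remark that the Markov core is non-empty whenever it contains the optimal marginal contribution (Lemma~\ref{lemm:condition_coalition_marginal_contribution}), so the convex set in question is non-trivial; this is a useful companion observation for the subsequent use of this lemma in Theorem~\ref{thm:shapley_value_core}.
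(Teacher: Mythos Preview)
Your proposal is correct and follows essentially the same approach as the paper: take two payoff vectors in the Markov core, form a convex combination, and verify the defining coalition inequality by multiplying the two membership inequalities by $\lambda$ and $1-\lambda$ and summing. The paper's proof is line-for-line the same linearity check (with $\alpha$ in place of your $\lambda$), without your closing remark on non-emptiness.
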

    \endgroup
    \begin{proof}
        Let $\big( \max_{\pi_{i}} x_{i}(\mathbf{s}) \big)_{i \in \mathcal{N}}$ and $\big( \max_{\pi_{i}} y_{i}(\mathbf{s}) \big)_{i \in \mathcal{N}}$ be two vectors in the Markov core and $\alpha \in [0, 1)$ be an arbitrary scalar. To ease the derivation, for any $i \in \mathcal{N}$ we let $\max_{\pi_{i}} z_{i}(\mathbf{s}) = \alpha \max_{\pi_{i}} x_{i}(\mathbf{s}) + (1 - \alpha) \max_{\pi_{i}} y_{i}(\mathbf{s})$. By definition, for any coalition $\mathcal{C} \subseteq \mathcal{N}$ we have
        \begin{align*}
            \max_{\pi_{\mathcal{C}}} z(\mathbf{s} | \mathcal{C}) &= \sum_{i \in \mathcal{C}} \max_{\pi_{i}} z_{i}(\mathbf{s}) \\
            &= \sum_{i \in \mathcal{C}} \alpha \max_{\pi_{i}} x_{i}(\mathbf{s}) + (1 - \alpha) \max_{\pi_{i}} y_{i}(\mathbf{s})\\
            &= \alpha \sum_{i \in \mathcal{C}} \max_{\pi_{i}} x_{i}(\mathbf{s}) + (1 - \alpha) \sum_{i \in \mathcal{C}} \max_{\pi_{i}} y_{i}(\mathbf{s})\\
            &\geq \alpha \max_{\pi_{\mathcal{C}}} V^{\pi_{\mathcal{C}}}(\mathbf{s}) + (1 - \alpha) \max_{\pi_{\mathcal{C}}} V^{\pi_{\mathcal{C}}}(\mathbf{s})
            = \max_{\pi_{\mathcal{C}}} V^{\pi_{\mathcal{C}}}(\mathbf{s}).
        \end{align*}
        Therefore, we have proved that Markov core is a convex set.
    \end{proof}

\section{Proof of The Implementation of Shapley Q-Learning}
\label{sec:proof_of_the_implementation_of_shapley_q-learning}
    \begingroup
    \def\theproposition{\ref{prop:shapley_value_approximate}}
    \begin{proposition}
        Suppose that any action marginal contribution can be factorised into the form such that $\Upphi_{i}(\mathbf{s}, a_{i} | \mathcal{C}_{i}) = \sigma(\mathbf{s}, \mathbf{a}_{ \scriptscriptstyle\mathcal{C}_{i} \cup \{i\} }) \ \hat{Q}_{i}(\mathbf{s}, a_{i})$. With the condition that
        \begin{equation*}
            \mathbb{E}_{\mathcal{C}_{i} \sim Pr(\mathcal{C}_{i} | \mathcal{N} \backslash \{i\})}[\sigma(\mathbf{s}, \mathbf{a}_{ \scriptscriptstyle\mathcal{C}_{i} \cup \{i\} })] =
            \begin{cases} 
                 1 & \ \ a_{i} = \arg\max_{a_{i}} Q^{\phi}_{i}(\mathbf{s}, a_{i}), \\
                 K \in (0, 1) & \ \ a_{i} \neq \arg\max_{a_{i}} Q^{\phi}_{i}(\mathbf{s}, a_{i}),
            \end{cases}
        \end{equation*}
        
        we have
        \begin{equation}
            \begin{cases} 
                 Q_{i}^{\phi}(\mathbf{s}, a_{i}) = \hat{Q}_{i}(\mathbf{s}, a_{i}) & \ \ a_{i} = \arg\max_{a_{i}} \hat{Q}_{i}(\mathbf{s}, a_{i}), \\
                 \alpha_{i}(\mathbf{s}, a_{i}) \ Q^{\phi}_{i}(\mathbf{s}, a_{i}) = \hat{\alpha}_{i}(\mathbf{s}, a_{i}) \ \hat{Q}_{i}(\mathbf{s}, a_{i}) & \ \ a_{i} \neq \arg\max_{a_{i}} \hat{Q}_{i}(\mathbf{s}, a_{i}),
            \end{cases}
        % \label{eq:shapley_q_approximate}
        \end{equation}
        
        where $\hat{\alpha}_{i}(\mathbf{s}, a_{i}) = \mathbb{E}_{\mathcal{C}_{i} \sim Pr(\mathcal{C}_{i} | \mathcal{N} \backslash \{i\})}[ \hat{\psi}_{i}(\mathbf{s}, a_{i}; \mathbf{a}_{ \scriptscriptstyle\mathcal{C}_{i} }) ]$ and $\hat{\psi}_{i}(\mathbf{s}, a_{i}; \mathbf{a}_{ \scriptscriptstyle\mathcal{C}_{i} }) := \alpha_{i}(\mathbf{s}, a_{i}) \ \sigma(\mathbf{s}, \mathbf{a}_{ \scriptscriptstyle\mathcal{C}_{i} \cup \{i\} })$.
    \end{proposition}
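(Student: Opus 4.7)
The plan is to reduce the proposition to a direct computation using the probabilistic view of Markov Shapley Q-value established in Remark \ref{rmk:coalition_generation}, namely
\[
Q_i^\phi(\mathbf{s}, a_i) \;=\; \mathbb{E}_{\mathcal{C}_i \sim Pr(\mathcal{C}_i \mid \mathcal{N}\setminus\{i\})}\big[\Upphi_i(\mathbf{s}, a_i \mid \mathcal{C}_i)\big].
\]
Substituting the hypothesised factorisation $\Upphi_i(\mathbf{s}, a_i \mid \mathcal{C}_i) = \sigma(\mathbf{s}, \mathbf{a}_{\mathcal{C}_i \cup \{i\}})\,\hat{Q}_i(\mathbf{s}, a_i)$ and pulling the $\mathcal{C}_i$-independent factor $\hat{Q}_i(\mathbf{s}, a_i)$ out of the expectation yields the single master identity
\[
Q_i^\phi(\mathbf{s}, a_i) \;=\; \mathbb{E}_{\mathcal{C}_i}\!\big[\sigma(\mathbf{s}, \mathbf{a}_{\mathcal{C}_i \cup \{i\}})\big]\cdot \hat{Q}_i(\mathbf{s}, a_i).
\]
This is the only nontrivial piece; everything else follows by instantiating this identity in two cases.

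First I would handle the optimal branch. Fix $a_i = \arg\max_{a_i} Q_i^\phi(\mathbf{s}, a_i)$; by hypothesis $\mathbb{E}_{\mathcal{C}_i}[\sigma] = 1$, so the master identity immediately gives $Q_i^\phi(\mathbf{s}, a_i) = \hat{Q}_i(\mathbf{s}, a_i)$. A small lemma is needed to transfer the $\arg\max$ back and forth between $Q_i^\phi$ and $\hat{Q}_i$ (so that the precondition $a_i = \arg\max_{a_i} Q_i^\phi$ matches the claim's precondition $a_i = \arg\max_{a_i} \hat{Q}_i$); this is essentially an argument that on the optimal branch $Q_i^\phi$ and $\hat{Q}_i$ coincide pointwise, combined with the case assumption on $\mathbb{E}[\sigma]$, so they have the same argmax.

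Next I would handle the suboptimal branch. Fix $a_i \neq \arg\max_{a_i} \hat{Q}_i(\mathbf{s}, a_i)$; then $\mathbb{E}_{\mathcal{C}_i}[\sigma] = K \in (0,1)$, so the master identity gives $Q_i^\phi(\mathbf{s}, a_i) = K\,\hat{Q}_i(\mathbf{s}, a_i)$. Since $\alpha_i(\mathbf{s}, a_i)$ is a deterministic function of $(\mathbf{s}, a_i)$ and does not depend on $\mathcal{C}_i$, I can pull it out of the expectation defining $\hat{\alpha}_i$:
\[
\hat{\alpha}_i(\mathbf{s}, a_i) \;=\; \mathbb{E}_{\mathcal{C}_i}\!\big[\alpha_i(\mathbf{s}, a_i)\,\sigma(\mathbf{s}, \mathbf{a}_{\mathcal{C}_i \cup \{i\}})\big] \;=\; \alpha_i(\mathbf{s}, a_i)\cdot K.
\]
Multiplying through yields $\hat{\alpha}_i(\mathbf{s}, a_i)\,\hat{Q}_i(\mathbf{s}, a_i) = \alpha_i(\mathbf{s}, a_i)\,K\,\hat{Q}_i(\mathbf{s}, a_i) = \alpha_i(\mathbf{s}, a_i)\,Q_i^\phi(\mathbf{s}, a_i)$, which is the second case of the claim.

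The calculation itself is routine; the only point demanding a small argument is justifying that $\alpha_i$ and $\hat{Q}_i$ are legitimately $\mathcal{C}_i$-measurable constants that can be pulled through $\mathbb{E}_{\mathcal{C}_i}$, and that the two notions of $\arg\max$ (over $Q_i^\phi$ vs.\ over $\hat{Q}_i$) agree so that the case split in the conclusion is consistent with the case split in the hypothesis on $\mathbb{E}[\sigma]$. I expect the latter alignment of argmaxes to be the most delicate bookkeeping step, since it is really a consistency condition linking the implementation-side quantities $(\hat{Q}_i, \hat{\alpha}_i)$ with the theoretical quantities $(Q_i^\phi, \alpha_i)$; everything else is one line of algebra each.
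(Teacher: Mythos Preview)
Your proposal is correct and mirrors the paper's proof almost exactly: the paper also writes $Q_i^\phi(\mathbf{s},a_i)=\mathbb{E}_{\mathcal{C}_i}[\sigma(\mathbf{s},\mathbf{a}_{\mathcal{C}_i\cup\{i\}})]\,\hat{Q}_i(\mathbf{s},a_i)$, reads off the optimal case from $\mathbb{E}[\sigma]=1$, and in the suboptimal case multiplies by $\alpha_i(\mathbf{s},a_i)$ and pulls it inside the expectation to obtain $\hat{\alpha}_i\,\hat{Q}_i$. Your flagged ``delicate bookkeeping'' about aligning the two argmaxes is handled in the paper by the one-line remark that $\hat{Q}_i$ is a positive rescaling ($1/K$) of $Q_i^\phi$, so the decisions coincide.
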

    \endgroup
    \begin{proof}
        We suppose for any $\mathbf{s} \in \mathcal{S}$ and $\mathbf{a} \in \mathcal{A}$, we have $\Upphi_{i}(\mathbf{s}, a_{i} | \mathcal{C}_{i}) = \sigma(\mathbf{s}, \mathbf{a}_{ \scriptscriptstyle\mathcal{C}_{i} \cup \{i\} }) \ \hat{Q}_{i}(\mathbf{s}, a_{i})$ and $\mathbb{E}_{\mathcal{C}_{i}}[\sigma(\mathbf{s}, \mathbf{a}_{ \scriptscriptstyle\mathcal{C}_{i} \cup \{i\} })] = 1$ when $\mathit{a}_{i} = \arg\max_{a_{i}} Q^{\phi}_{i}(\mathbf{s}, a_{i})$. By the definition of the Markov Shapley Q-value, it is not difficult to obtain
        \begin{align*}
            Q^{\phi}_{i}(\mathbf{s}, a_{i}) &= \mathbb{E}_{\scriptscriptstyle\mathcal{C}_{i}}[ \Upphi_{i}(\mathbf{s}, a_{i} | \mathcal{C}_{i}) ] \\
            &= \mathbb{E}_{\scriptscriptstyle\mathcal{C}_{i}}[ \sigma(\mathbf{s}, \mathbf{a}_{ \scriptscriptstyle\mathcal{C}_{i} \cup \{i\} }) \ \hat{Q}_{i}(\mathbf{s}, a_{i}) ] \\
            &= \mathbb{E}_{\scriptscriptstyle\mathcal{C}_{i}}[ \sigma(\mathbf{s}, \mathbf{a}_{ \scriptscriptstyle\mathcal{C}_{i} \cup \{i\} }) ] \ \hat{Q}_{i} (\mathbf{s}, a_{i}).
        \end{align*}
        
        Recall that $\delta_{i}(\mathbf{s}, a_{i})$ is defined as follows:
        \begin{equation}
            \delta_{i}(\mathbf{s}, a_{i}) = \begin{cases} 
                                                  1 & a_{i} = \arg\max_{a_{i}} Q^{\phi}_{i}(\mathbf{s}, a_{i}), \\
                                                  \alpha_{i}(\mathbf{s}, a_{i}) & a_{i} \neq \arg\max_{a_{i}} Q^{\phi}_{i}(\mathbf{s}, a_{i}).
                                             \end{cases}
        \label{eq:delta_copy}
        \end{equation}
        
        If $a_{i} = \arg\max_{a_{i}} Q^{\phi}_{i}(\mathbf{s}, a_{i})$, it is not difficult to get that $Q_{i}^{\phi}(\mathbf{s}, a_{i}) = \hat{Q}_{i}(\mathbf{s}, a_{i})$.
        
        If $a_{i} \neq \arg\max_{a_{i}} Q^{\phi}_{i}(\mathbf{s}, a_{i})$, we can have the following equation such that
        \begin{align*}
            \alpha_{i}(\mathbf{s}, a_{i}) \ Q^{\phi}_{i}(\mathbf{s}, a_{i}) &= \alpha_{i}(\mathbf{s}, a_{i}) \ \mathbb{E}_{\scriptscriptstyle\mathcal{C}_{i}}[ \sigma(\mathbf{s}, \mathbf{a}_{ \scriptscriptstyle\mathcal{C}_{i} \cup \{i\} }) \ \hat{Q}_{i}(\mathbf{s}, a_{i}) ] \\
            &= \mathbb{E}_{\scriptscriptstyle\mathcal{C}_{i}}[ \alpha_{i}(\mathbf{s}, a_{i}) \sigma(\mathbf{s}, \mathbf{a}_{ \scriptscriptstyle\mathcal{C}_{i} \cup \{i\} }) ] \ \hat{Q}_{i}(\mathbf{s}, a_{i}) \\
            &:= \mathbb{E}_{\scriptscriptstyle\mathcal{C}_{i}}[ \hat{\psi}_{i}(\mathbf{s}, a_{i}; \mathbf{a}_{ \scriptscriptstyle\mathcal{C}_{i} }) ] \ \hat{Q}_{i}(\mathbf{s}, a_{i}),
        \end{align*}
        
        where $\alpha_{i}(\mathbf{s}, a_{i}) \sigma(\mathbf{s}, \mathbf{a}_{ \scriptscriptstyle\mathcal{C}_{i} \cup \{i\} })$ is defined as $\hat{\psi}_{i}(\mathbf{s}, a_{i}; \mathbf{a}_{ \scriptscriptstyle\mathcal{C}_{i} })$. Since under this situation $\hat{Q}_{i}(\mathbf{s}, a_{i})$ is always a scaled $Q^{\phi}_{i}(\mathbf{s}, a_{i})$ with the scale of $1/K$, the decisions are consistent to the original decision.
    \end{proof}
    
    \begingroup
    \def\theproposition{\ref{prop:hatalpha_satisfying_condition}}
    \begin{proposition}
        $\hat{\alpha}_{i}(\mathbf{s}, a_{i})$ satisfies the condition $\max_{\mathbf{s}} \big\{ \sum_{i \in \mathcal{N}} \max_{a_{i}} w_{i}(\mathbf{s}, a_{i}) \big\} < \frac{1}{\gamma}$.
    \end{proposition}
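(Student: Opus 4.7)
The plan is to unpack the definition of $\hat{\alpha}_i(\mathbf{s}, a_i)$ in Eq.~\ref{eq:alpha_deep_representation}, translate back to the underlying weights $w_i(\mathbf{s}, a_i)$ via the identity $\delta_i(\mathbf{s}, a_i) = \tfrac{1}{|\mathcal{N}| w_i(\mathbf{s}, a_i)}$ established just before Eq.~\ref{eq:delta_new}, and then bound $\max_{a_i} w_i$ uniformly in $\mathbf{s}$ and $i$.

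First, I would recall that by construction
\[
\hat{\alpha}_i(\mathbf{s}, a_i) \;=\; \frac{1}{M}\sum_{k=1}^{M} F_{\mathbf{s}}\bigl(\hat{Q}_{\mathcal{C}_i^k}(\tau_{\mathcal{C}_i^k}, \mathbf{a}_{\mathcal{C}_i^k}),\, \hat{Q}_i(\tau_i, a_i)\bigr) + 1,
\]
where $F_{\mathbf{s}}$ carries an absolute‑value activation on its output, so $F_{\mathbf{s}}(\cdot) \geq 0$ pointwise for every sample. Averaging nonnegative quantities and adding $1$ yields $\hat{\alpha}_i(\mathbf{s}, a_i) \geq 1$ for all $\mathbf{s}$, $a_i$ and $i$. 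This is the single property I really need from the network architecture, and identifying it as the crux of the argument is the first step.

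Next, I would split on whether $a_i$ is optimal under $\hat{Q}_i$. Using Eq.~\ref{eq:delta_new} together with the relation $w_i = 1/(|\mathcal{N}|\,\delta_i)$, the optimal case gives $w_i(\mathbf{s}, a_i) = 1/|\mathcal{N}|$ exactly (this is precisely the solution isolated in Proposition~\ref{prop:equiv_credit_assignment}), while for any sub‑optimal action
\[
w_i(\mathbf{s}, a_i) \;=\; \frac{1}{|\mathcal{N}|\,\hat{\alpha}_i(\mathbf{s}, a_i)} \;\leq\; \frac{1}{|\mathcal{N}|},
\]
since $\hat{\alpha}_i \geq 1$. Consequently $\max_{a_i} w_i(\mathbf{s}, a_i) = 1/|\mathcal{N}|$, attained at the greedy action. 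Summing over agents gives $\sum_{i \in \mathcal{N}} \max_{a_i} w_i(\mathbf{s}, a_i) = 1$, uniformly in $\mathbf{s}$, and hence
\[
\max_{\mathbf{s}} \Bigl\{\,\sum_{i \in \mathcal{N}} \max_{a_i} w_i(\mathbf{s}, a_i)\Bigr\} \;=\; 1 \;<\; \frac{1}{\gamma},
\]
where the strict inequality uses $\gamma \in (0, 1)$ as stipulated in Section~\ref{sec:markov_convex_game}.

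The only real obstacle is being careful about what ``$w_i$'' means for sub‑optimal actions: the paper defines it implicitly through $\delta_i$ and $\hat{\alpha}_i$, and the parametrisation in Eq.~\ref{eq:alpha_deep_representation} is what makes the bound $\hat{\alpha}_i \geq 1$ hold — if the ``$+1$'' or the absolute activation on $F_{\mathbf{s}}$ were dropped, the argument would collapse. So the proof essentially reduces to checking that the specific architectural choices made in the implementation are exactly those that enforce $\hat{\alpha}_i \geq 1$, after which the contraction condition of Lemma~\ref{lemm:shapley_q_contraction_mapping} (and therefore the convergence guarantee of Theorem~\ref{thm:proof_of_shapley_q_learning}) is satisfied with room to spare.
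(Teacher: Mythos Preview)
Your proof is correct and follows essentially the same route as the paper's: observe that the absolute activation together with the $+1$ forces $\hat{\alpha}_i \geq 1$, split on optimal versus sub-optimal actions, and conclude the sum is at most $1 < 1/\gamma$. The paper is marginally more careful in distinguishing the ``true'' $\alpha_i$ from the implemented $\hat{\alpha}_i$ --- it invokes $\alpha_i = K^{-1}\hat{\alpha}_i$ with $K\in(0,1)$ from Proposition~\ref{prop:shapley_value_approximate} to obtain a strict inequality $w_i < 1/|\mathcal{N}|$ on sub-optimal actions --- but your non-strict bound there is harmless since the greedy-action case already attains $1/|\mathcal{N}|$ exactly.
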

    \endgroup
    \begin{proof}
        As introduced in the main part of paper, when $a_{i} \neq \arg\max_{a_{i}} \hat{Q}_{i}(\mathbf{s}, a_{i})$, $\hat{\alpha}_{i}(\mathbf{s}, a_{i})$ is implemented as follows:
        \begin{equation*}
            \hat{\alpha}_{i}(\mathbf{s}, a_{i}) = \frac{1}{M}\sum_{k = 1}^{M} \mathlarger{F}_{\mathbf{s}} \Big( \hat{Q}_{\mathcal{C}_{i}^{k}}(\tau_{\mathcal{C}_{i}^{k}}, \mathbf{a}_{\mathcal{C}_{i}^{k}}), \ \hat{Q}_{i}(\tau_{i}, a_{i}) \Big) + 1,
        \label{eq:delta_deep_representation}
        \end{equation*}
        
        where
        \begin{equation*}
            \hat{Q}_{\mathcal{C}_{i}^{k}}(\tau_{\mathcal{C}_{i}^{k}}, \mathbf{a}_{\mathcal{C}_{i}^{k}}) = \frac{1}{|\mathcal{C}_{i}^{k}|}\sum_{j \in \mathcal{C}_{i}^{k}} \hat{Q}_{j}(\tau_{j}, a_{j})
        \end{equation*}
        
        and $\mathcal{C}_{i}^{k} \sim \mathit{Pr}(\mathcal{C}_{i} | \mathcal{N} \backslash \{i\})$ that follows the distribution with respect to the occurrence frequency of $\mathcal{C}_{i}$; and $\mathlarger{F}_{\mathbf{s}}(\cdot, \cdot)$ is a monotonic function with an absolute activation function on the output whose weights are generated from hypernetworks with the global state as the input, similar to the architecture of QMIX \cite{RashidSWFFW18}. Since $\mathlarger{F}_{\mathbf{s}}(\cdot, \cdot) \geq 0$ always holds, it is not difficult to obtain that $\hat{\alpha}_{i}(\mathbf{s}, a_{i}) \geq 1$ always holds. As Eq.~\ref{eq:shapley_q_approximate} shows, it is not difficult to get that $\alpha_{i}(\mathbf{s}, a_{i}) = K^{-1} \ \hat{\alpha}_{i}(\mathbf{s}, a_{i})$. Since $K \in (0, 1)$, we get that $\alpha_{i}(\mathbf{s}, a_{i}) > 1$.
        
        As introduced in the main part of this thesis, the following equation is satisfied such that
        \begin{equation*}
            \delta_{i}(\mathbf{s}, a_{i}) = \frac{1}{|\mathcal{N}| \ w_{i}(\mathbf{s}, a_{i})}.
        \end{equation*}
        
        For all $\mathbf{s} \in \mathcal{S}$ and $a_{i} \neq \arg\max_{a_{i}} \hat{Q}_{i}(\mathbf{s}, a_{i})$, $\delta_{i}(\mathbf{s}, a_{i}) = \alpha_{i}(\mathbf{s}, a_{i}) > 1$. So, we can derive that
        \begin{equation*}
            \begin{split}
                &\quad \ \ w_{i}(\mathbf{s}, a_{i}) = \frac{1}{|\mathcal{N}| \ \alpha_{i}(\mathbf{s}, a_{i})} \\
                % &\Downarrow \\
                &\Rightarrow \max_{a_{i}} w_{i}(\mathbf{s}, a_{i}) = \max_{a_{i}} \frac{1}{|\mathcal{N}| \ \alpha_{i}(\mathbf{s}, a_{i})} = \frac{1}{|\mathcal{N}| \ \min_{a_{i}} \alpha_{i}(\mathbf{s}, a_{i})} < \frac{1}{|\mathcal{N}|} \\
                % &\Downarrow \\
                &\Rightarrow 0 < \sum_{i \in \mathcal{N}} \max_{a_{i}} w_{i}(\mathbf{s}, a_{i}) < 1.
            \end{split}
        \end{equation*}
        
        For all $\mathbf{s} \in \mathcal{S}$ and $a_{i} = \arg\max_{a_{i}} \hat{Q}_{i}(\mathbf{s}, a_{i})$, $\delta_{i}(\mathbf{s}, a_{i}) = \hat{\delta}_{i}(\mathbf{s}, a_{i}) = 1$. So, we can derive that
        \begin{equation*}
            \begin{split}
                &\quad \ \ w_{i}(\mathbf{s}, a_{i}) = \frac{1}{|\mathcal{N}|} \\
                % &\Downarrow \\
                &\Rightarrow \sum_{i \in \mathcal{N}} \max_{a_{i}} w_{i}(\mathbf{s}, a_{i}) = 1.
            \end{split}
        \end{equation*}
        
        Therefore, we can directly obtain that for all $\mathbf{s} \in \mathcal{S}$ and $\mathbf{a} \in \mathcal{A}$, $$0 < \max_{\mathbf{s}} \Big\{ \sum_{i \in \mathcal{N}} \max_{a_{i}} w_{i}(\mathbf{s}, a_{i}) \Big\} \leq 1.$$
        
        Since $\gamma \in (0, 1)$, we can get that $\frac{1}{\gamma} > 1$. As a result, we show that for all $\mathbf{s} \in \mathcal{S}$ and $\mathbf{a} \in \mathcal{A}$, $$0 < \max_{\mathbf{s}} \Big\{ \sum_{i \in \mathcal{N}} \max_{a_{i}} w_{i}(\mathbf{s}, a_{i}) \Big\} < \frac{1}{\gamma}.$$
        
        We conclude that our implementation of $\hat{\alpha}_{i}(\mathbf{s}, a_{i})$ satisfies the condition in Theorem \ref{thm:shapley_q_optimal}.
    \end{proof}

\section{Proof of Fixing The Inconsistency Problem}
\label{subsec:fixing_the_problem_of_inconsistency}
    To fix the problem of inconsistent coalition values, a possible solution is learning only one parametric coalition value function $\max_{\pi_{\mathcal{C}}} V^{\pi_{\mathcal{C}}}(\mathbf{s}; \ \theta)$ to represent a set of maximum coalition values. More specifically, the maximum coalition values with the identical coalition within two successive coalition marginal contributions can be cancelled so that the properties of efficiency and fairness are preserved. Next, we prove the feasibility of implementing this method in practice, i.e., \textit{whether the approximation error of this method is under control}.
    
    \begin{assumption}
        The ability of learning a function is invariant to model classes. That is, $\forall \hat{f}(\cdot), \hat{g}(\cdot) \in L^{p}$, $\left|\hat{f}(\cdot) - f(\cdot)\right| \leq \epsilon$ and $\left|\hat{g}(\cdot) - g(\cdot)\right| \leq \epsilon$ hold.
    \label{assm:model_learning}
    \end{assumption}
    
    \setcounter{lemma}{8}
    \begin{lemma}
        Assuming that $\left| \max_{\pi_{\mathcal{C}}} \hat{V}^{\pi_{\mathcal{C}}}(\mathbf{s}; \ \theta) - \max_{\pi_{\mathcal{C}}} V^{\pi_{\mathcal{C}}}(\mathbf{s}; \theta)\right| \leq \epsilon$, the maximum coalition marginal contribution $\max_{\pi_{i}} \hat{\Phi}_{i}(\mathbf{s}|\mathcal{C}_{i}; \theta)$ generated by $\hat{V}^{\pi_{\mathcal{C}}}(\mathbf{s}; \theta)$ is with the approximation error bound $2\epsilon$.
    \label{lemm:approximation_error_bound_cmc}
    \end{lemma}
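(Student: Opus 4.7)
The plan is to unfold the definition of the approximate marginal contribution in terms of the approximate coalition value function and then reduce the required bound to the given per-coalition assumption via a simple triangle inequality. Concretely, I would start from Definition \ref{def:marginal_contribution} and the result of Proposition \ref{prop:optimal_action_coalition_marginal_contribution} (specifically Eq.~\ref{eq:optimal_marginal_q} as derived in the appendix), which give
\[
\max_{\pi_{i}} \Phi_{i}(\mathbf{s}\mid \mathcal{C}_{i}) = \max_{\pi_{\mathcal{C}_{i} \cup \{i\}}} V^{\pi_{\mathcal{C}_{i} \cup \{i\}}}(\mathbf{s}) - \max_{\pi_{\mathcal{C}_{i}}} V^{\pi_{\mathcal{C}_{i}}}(\mathbf{s}),
\]
and the analogous identity with $\hat{V}$ in place of $V$ for $\max_{\pi_i}\hat{\Phi}_i$. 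The point is that once we know the marginal contribution can be written as a difference of two maximum coalition values (one for $\mathcal{C}_i\cup\{i\}$ and one for $\mathcal{C}_i$), bounding the error in $\hat{\Phi}$ reduces to bounding two independent value errors.

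Next I would form the difference $\max_{\pi_i}\hat{\Phi}_i(\mathbf{s}\mid\mathcal{C}_i;\theta) - \max_{\pi_i}\Phi_i(\mathbf{s}\mid\mathcal{C}_i;\theta)$, rearrange the four terms so that the approximate and exact coalition values with the same underlying coalition are paired, and apply the triangle inequality. Each of the two resulting absolute differences is exactly of the form controlled by the hypothesis $\big|\max_{\pi_{\mathcal{C}}}\hat{V}^{\pi_{\mathcal{C}}}(\mathbf{s};\theta) - \max_{\pi_{\mathcal{C}}}V^{\pi_{\mathcal{C}}}(\mathbf{s};\theta)\big|\leq \epsilon$, once for $\mathcal{C}=\mathcal{C}_i\cup\{i\}$ and once for $\mathcal{C}=\mathcal{C}_i$, yielding the $\epsilon+\epsilon=2\epsilon$ bound.

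Because the assumption is stated uniformly over coalitions, and Assumption \ref{assm:model_learning} guarantees that the same approximation quality $\epsilon$ is attainable for any coalition value function in the chosen model class, nothing more is needed: I do not have to reason about how the maximisers $\arg\max\pi_{\mathcal{C}}$ change under approximation, only about the \emph{values} of the maxima, which is precisely what the hypothesis controls. There is therefore no real obstacle in the proof; the only point to be careful about is making the correspondence between $\max_{\pi_i}\hat{\Phi}_i$ and the difference of two $\hat{V}$-maxima explicit, by repeating the chain of identities used to derive Eq.~\ref{eq:optimal_marginal_q} but with $\hat{V}$ throughout. Once that identification is done, the triangle-inequality step is immediate and the $2\epsilon$ bound follows without any further assumptions on the model class.
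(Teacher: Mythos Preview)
Your proposal is correct and is essentially identical to the paper's own proof: both write $\max_{\pi_i}\hat{\Phi}_i$ and $\max_{\pi_i}\Phi_i$ as differences of two maximum coalition values, pair the corresponding terms, and apply the triangle inequality together with the per-coalition $\epsilon$-bound to obtain $2\epsilon$. The only minor difference is that the paper does not explicitly invoke Assumption~\ref{assm:model_learning} inside this lemma (it is used later in Proposition~\ref{prop:feasibility_msv_by_cv}), but this does not affect the argument.
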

    \begin{proof}
        The maximum coalition marginal contribution generated by $\max_{\pi_{\mathcal{C}}} \hat{V}^{\pi_{\mathcal{C}}}(\mathbf{s}; \theta)$ is expressed as $\max_{\pi_{i}} \hat{\Phi}_{i}(\mathbf{s}|\mathcal{C}_{i}; \theta) = \max_{\pi_{\mathcal{C}_{i} \cup \{i\}}} \hat{V}^{\pi_{\mathcal{C}_{i} \cup \{i\}}}(\mathbf{s};\theta) - \max_{\pi_{\mathcal{C}_{i}}} \hat{V}^{\pi_{\mathcal{C}_{i}}}(\mathbf{s};\theta)$. Using the assumption, we can get the following approximation error bound for the coalition marginal contribution such that
        \begin{align*}
            &\quad \left| \max_{\pi_{i}} \hat{\Phi}_{i}(\mathbf{s}|\mathcal{C}_{i}; \theta) - \max_{\pi_{i}} \Phi_{i}(\mathbf{s}|\mathcal{C}_{i}) \right| \\
            &= \left| \max_{\pi_{\mathcal{C}_{i} \cup \{i\}}} \hat{V}^{\pi_{\mathcal{C}_{i} \cup \{i\}}}(\mathbf{s};\theta) - \max_{\pi_{\mathcal{C}_{i}}} \hat{V}^{\pi_{\mathcal{C}_{i}}}(\mathbf{s};\theta) - \left[ \max_{\pi_{\mathcal{C}_{i} \cup \{i\}}} V^{\pi_{\mathcal{C}_{i} \cup \{i\}}}(\mathbf{s}) - \max_{\pi_{\mathcal{C}_{i}}} V^{\pi_{\mathcal{C}_{i}}}(\mathbf{s}) \right] \right| \\
            &\leq \left| \max_{\pi_{\mathcal{C}_{i} \cup \{i\}}} \hat{V}^{\pi_{\mathcal{C}_{i} \cup \{i\}}}(\mathbf{s};\theta) - \max_{\pi_{\mathcal{C}_{i}} \cup \{i\}} V^{\pi_{\mathcal{C}_{i} \cup \{i\}}}(\mathbf{s}) \right| + \left| \max_{\pi_{\mathcal{C}_{i}}} \hat{V}^{\pi_{\mathcal{C}_{i}}}(\mathbf{s};\theta) - \max_{\pi_{\mathcal{C}_{i}}} V^{\pi_{\mathcal{C}_{i}}}(\mathbf{s}) \right| \\
            &= 2\epsilon.
        \end{align*}
    \end{proof}
    
    \begin{lemma}
        Suppose that the approximation error bound of a coalition value function is $\epsilon$, i.e., $\left| \max_{\pi_{i}} \hat{\Phi}_{i}(\mathbf{s}|\mathcal{C}_{i}; \theta) - \max_{\pi_{i}} \Phi_{i}(\mathbf{s}|\mathcal{C}_{i}) \right| \leq \epsilon$. The corresponding approximate Markov Shapley value is also with the approximation error bound $\epsilon$.
    \label{lemm:approximation_error_bound_msv}
    \end{lemma}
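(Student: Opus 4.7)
The plan is to unwind the definition of the Markov Shapley value as a convex combination of (maximum) marginal contributions and pass the absolute value through by a direct triangle-inequality argument. By Assumption~\ref{assm:max_shapley_value} together with Definition~\ref{def:shapley_value}, both the true and the approximate Markov Shapley values admit the representation
\begin{equation*}
    \max_{\pi_i} V^{\phi}_i(\mathbf{s}) = \sum_{\mathcal{C}_i \subseteq \mathcal{N}\setminus\{i\}} w(\mathcal{C}_i)\,\max_{\pi_i}\Phi_i(\mathbf{s}\mid\mathcal{C}_i), \qquad \max_{\pi_i} \hat{V}^{\phi}_i(\mathbf{s};\theta) = \sum_{\mathcal{C}_i \subseteq \mathcal{N}\setminus\{i\}} w(\mathcal{C}_i)\,\max_{\pi_i}\hat{\Phi}_i(\mathbf{s}\mid\mathcal{C}_i;\theta),
\end{equation*}
with the Shapley weights $w(\mathcal{C}_i) = |\mathcal{C}_i|!(|\mathcal{N}|-|\mathcal{C}_i|-1)!/|\mathcal{N}|!$. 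The crucial observation, which is already noted beneath Definition~\ref{def:shapley_value} and formalised in Remark~\ref{rmk:coalition_generation}, is that these weights constitute a probability distribution over the predecessor coalitions of agent $i$, so $\sum_{\mathcal{C}_i \subseteq \mathcal{N}\setminus\{i\}} w(\mathcal{C}_i) = 1$.

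From there the proof is a two-line calculation. First I would subtract the two expressions above, use linearity of the finite sum, and apply the triangle inequality to move $|\cdot|$ inside the sum; this yields
\begin{equation*}
    \bigl|\max_{\pi_i}\hat{V}^{\phi}_i(\mathbf{s};\theta) - \max_{\pi_i} V^{\phi}_i(\mathbf{s})\bigr| \;\leq\; \sum_{\mathcal{C}_i \subseteq \mathcal{N}\setminus\{i\}} w(\mathcal{C}_i)\,\bigl|\max_{\pi_i}\hat{\Phi}_i(\mathbf{s}\mid\mathcal{C}_i;\theta) - \max_{\pi_i}\Phi_i(\mathbf{s}\mid\mathcal{C}_i)\bigr|.
\end{equation*}
Then I would substitute the hypothesised per-coalition bound $\epsilon$ into each summand and collapse the weighted sum using $\sum w(\mathcal{C}_i) = 1$, which gives the claimed bound $\epsilon$.

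There is no genuine obstacle here; the argument reduces to ``a convex combination of $\epsilon$-close quantities is $\epsilon$-close''. The only thing worth emphasising in the write-up is the justification that $w$ is a probability measure (either by the combinatorial identity $\sum_{k=0}^{|\mathcal{N}|-1}\binom{|\mathcal{N}|-1}{k}\frac{k!(|\mathcal{N}|-k-1)!}{|\mathcal{N}|!} = 1$, or by appealing to the probabilistic interpretation already introduced in the thesis), because this is what makes the bound tight and independent of $|\mathcal{N}|$. Combined with Lemma~\ref{lemm:approximation_error_bound_cmc}, this will immediately give a $2\epsilon$ bound on the approximate Markov Shapley value when the coalition value function itself is learned with error bound $\epsilon$, which is the practical conclusion motivating Proposition~\ref{prop:feasibility_msv_by_cv}.
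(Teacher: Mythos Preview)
Your proposal is correct and follows essentially the same argument as the paper: write the (approximate) maximum Markov Shapley value via Assumption~\ref{assm:max_shapley_value} as the Shapley-weighted sum of (approximate) maximum marginal contributions, subtract, apply the triangle inequality, bound each term by $\epsilon$, and use that the Shapley weights sum to $1$. Your additional remarks about the probabilistic interpretation of the weights and the downstream $2\epsilon$ consequence for Proposition~\ref{prop:feasibility_msv_by_cv} are also in line with the paper's presentation.
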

    \begin{proof}
        Given Assumption \ref{assm:max_shapley_value}, the approximate maximum MSV can be expressed as the following equation:
        \begin{equation}
            \max_{\pi_{i}} \hat{V}_{i}^{\phi}(\mathbf{s}; \theta) = \sum_{\mathcal{C}_{i} \subseteq \mathcal{N}} \frac{|\mathcal{C}_{i}|!(|\mathcal{N}| - |\mathcal{C}_{i}| - 1)!}{|\mathcal{N}|!} \cdot \max_{\pi_{i}} \hat{\Phi}_{i}(\mathbf{s}|\mathcal{C}_{i}; \theta).
        \label{eq:msv}
        \end{equation}
        
        By the result from Eq.~\ref{eq:msv}, we can get the approximation error bound of MSV such that 
        \begin{align*}
            \left| \max_{\pi_{i}} \hat{V}_{i}^{\phi}(\mathbf{s}; \theta) - \max_{\pi_{i}} V_{i}^{\phi}(\mathbf{s}) \right| &= \bigg| \sum_{\mathcal{C}_{i} \subseteq \mathcal{N}} \frac{|\mathcal{C}_{i}|!(|\mathcal{N}| - |\mathcal{C}_{i}| - 1)!}{|\mathcal{N}|!} \cdot \max_{\pi_{i}} \hat{\Phi}_{i}(\mathbf{s}|\mathcal{C}_{i}; \theta) \\
            &- \sum_{\mathcal{C}_{i} \subseteq \mathcal{N}} \frac{|\mathcal{C}_{i}|!(|\mathcal{N}| - |\mathcal{C}_{i}| - 1)!}{|\mathcal{N}|!} \cdot \max_{\pi_{i}} \Phi_{i}(\mathbf{s}|\mathcal{C}_{i}) \bigg| \\
            &= \left| \sum_{\mathcal{C}_{i} \subseteq \mathcal{N}} \frac{|\mathcal{C}_{i}|!(|\mathcal{N}| - |\mathcal{C}_{i}| - 1)!}{|\mathcal{N}|!} \cdot \left( \max_{\pi_{i}} \hat{\Phi}_{i}(\mathbf{s}|\mathcal{C}_{i}; \theta) - \max_{\pi_{i}} \Phi_{i}(\mathbf{s}|\mathcal{C}_{i}) \right) \right| \\
            &\leq \sum_{\mathcal{C}_{i} \subseteq \mathcal{N}} \frac{|\mathcal{C}_{i}|!(|\mathcal{N}| - |\mathcal{C}_{i}| - 1)!}{|\mathcal{N}|!} \cdot \left| \max_{\pi_{i}} \hat{\Phi}_{i}(\mathbf{s}|\mathcal{C}_{i}; \theta) - \max_{\pi_{i}} \Phi_{i}(\mathbf{s}|\mathcal{C}_{i}) \right| \\
            &\leq \sum_{\mathcal{C}_{i} \subseteq \mathcal{N}} \frac{|\mathcal{C}_{i}|!(|\mathcal{N}| - |\mathcal{C}_{i}| - 1)!}{|\mathcal{N}|!} \cdot \epsilon = \epsilon.
        \end{align*}
    \end{proof}
    
    \begingroup
    \def\theproposition{\ref{prop:feasibility_msv_by_cv}}
        \begin{proposition}
            The approximate maximum Markov Shapley value generated by approximate maximum coalition values is feasible to be learned in practice.
        % \label{prop:feasibility_msv_by_cv}
        \end{proposition}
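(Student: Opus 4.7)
My plan is to argue feasibility by chaining the two preceding lemmas (Lemma~\ref{lemm:approximation_error_bound_cmc} and Lemma~\ref{lemm:approximation_error_bound_msv}) on top of Assumption~\ref{assm:model_learning}, so that the overall approximation error of the MSV inherits a linear (and in particular bounded) dependence on the base approximation error $\epsilon$ of the parametric coalition value function. The first step is to invoke Assumption~\ref{assm:model_learning} directly on the single learned function $\max_{\pi_{\mathcal{C}}} \hat{V}^{\pi_{\mathcal{C}}}(\mathbf{s};\theta)$; this gives $|\max_{\pi_{\mathcal{C}}}\hat{V}^{\pi_{\mathcal{C}}}(\mathbf{s};\theta) - \max_{\pi_{\mathcal{C}}} V^{\pi_{\mathcal{C}}}(\mathbf{s})|\leq \epsilon$ uniformly in $\mathcal{C}$ and $\mathbf{s}$, which is precisely the hypothesis required by Lemma~\ref{lemm:approximation_error_bound_cmc}. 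Crucially, because a single parametric function is being used to represent all coalition values, the cancellation argument discussed around Eq.~\ref{eq:efficiency2} applies and the error bound is obtained once, independent of the permutation that generated the marginal contribution.

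The second step is to push this bound through to the marginal contribution by Lemma~\ref{lemm:approximation_error_bound_cmc}, yielding $|\max_{\pi_{i}}\hat{\Phi}_{i}(\mathbf{s}|\mathcal{C}_{i};\theta) - \max_{\pi_{i}}\Phi_{i}(\mathbf{s}|\mathcal{C}_{i})| \leq 2\epsilon$ for every intermediate coalition $\mathcal{C}_{i}\subseteq \mathcal{N}\backslash\{i\}$. The third step is to feed this bound into Lemma~\ref{lemm:approximation_error_bound_msv}; since that lemma exploits only the triangle inequality together with the fact that the Shapley weights form a probability distribution ($\sum_{\mathcal{C}_{i}\subseteq\mathcal{N}\backslash\{i\}}\tfrac{|\mathcal{C}_{i}|!(|\mathcal{N}|-|\mathcal{C}_{i}|-1)!}{|\mathcal{N}|!}=1$), the resulting bound on the approximate maximum MSV is $|\max_{\pi_{i}}\hat{V}_{i}^{\phi}(\mathbf{s};\theta) - \max_{\pi_{i}} V_{i}^{\phi}(\mathbf{s})| \leq 2\epsilon$.

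The final step is interpretive: because $2\epsilon$ is (i) independent of the number of agents $|\mathcal{N}|$, (ii) linear in the base error $\epsilon$ granted by Assumption~\ref{assm:model_learning}, and (iii) \emph{does not} exhibit the inconsistent-coalition-value pathology discussed in Section~\ref{subsec:problem_of_direct_approximation_of_cmc} (since one parametric function now encodes a single consistent set of coalition values so that efficiency in Eq.~\ref{eq:efficiency2} is preserved up to the same error $\epsilon$), we can conclude that the indirect approximation scheme is feasible in practice. I anticipate the main obstacle is not a technical one --- the chaining of the two lemmas is essentially immediate --- but rather an argumentative one: I need to make clear that feasibility here is meant in the sense of a controlled, $|\mathcal{N}|$-independent error bound, and to emphasise that the avoidance of the inconsistency problem (which motivated the whole subsection) is what genuinely distinguishes the indirect scheme from the direct approximation used in SQDDPG, despite the two constructions ostensibly sharing the same functional form for the MSV.
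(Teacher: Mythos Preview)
Your proposal is correct and follows the same chain as the paper: Assumption~\ref{assm:model_learning} $\to$ Lemma~\ref{lemm:approximation_error_bound_cmc} $\to$ Lemma~\ref{lemm:approximation_error_bound_msv}, arriving at the $2\epsilon$ bound on the approximate maximum MSV. The one place your interpretive step diverges is that the paper argues ``feasibility'' by an explicit \emph{comparison} with the direct approximation: invoking Assumption~\ref{assm:model_learning} again on the directly parameterised marginal contribution $\hat{\Phi}_{i}(\cdot|\cdot;\beta)$ gives error $\epsilon$, and Lemma~\ref{lemm:approximation_error_bound_msv} then yields MSV error $\epsilon$ for the direct scheme; the conclusion is that the indirect scheme is only a factor of two worse, hence feasible. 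Your standalone justification (bounded, $|\mathcal{N}|$-independent, consistency-preserving) is reasonable, but the paper's intended meaning of ``feasible'' is specifically this factor-two comparison rather than an absolute statement.
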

    \endgroup
    \begin{proof}
        First, we suppose that the approximation error bound of the directly approximate maximum coalition marginal contribution is given by $\left| \max_{\pi_{i}} \hat{\Phi}_{i}(\mathbf{s}|\mathcal{C}_{i}; \beta) - \max_{\pi_{i}} \Phi_{i}(\mathbf{s}|\mathcal{C}_{i}) \right| \leq \epsilon$. By Assumption \ref{assm:model_learning}, we can suppose that $\left| \max_{\pi_{\mathcal{C}}} \hat{V}^{\pi_{\mathcal{C}}}(\mathbf{s}; \ \theta) - \max_{\pi_{\mathcal{C}}} V^{\pi_{\mathcal{C}}}(\mathbf{s})\right| \leq \epsilon$ and by Lemma \ref{lemm:approximation_error_bound_cmc} the approximation error bound of the corresponding approximate maximum coalition marginal contribution is given by $\left| \max_{\pi_{i}} \hat{\Phi}_{i}(\mathbf{s}|\mathcal{C}_{i}; \theta) - \max_{\pi_{i}} \Phi_{i}(\mathbf{s}|\mathcal{C}_{i}) \right| \leq 2\epsilon$. By Lemma \ref{lemm:approximation_error_bound_msv}, the approximation error bounds of the maximum MSVs generated from these two methods are equal to that of the corresponding approximate coalition marginal contributions. Therefore, the approximation error bound of the approximate maximum MSV generated from the approximate maximum coalition values is only two times as the direct approximation of the coalition marginal contribution. As a result, we can conclude that the approximate maximum MSV generated by approximate maximum coalition values is feasible to be learned in practice.
    \end{proof}
        
\section{Proof of MARL Algorithms for POMCG}
\label{sec:proof_of_shapley_policy_iteration_fo_pomcg}
    \begin{theorem}[\cite{smallwood1973optimal}]
        If the value function is convex, for all $b \in \mathcal{B}$, the value iteration for belief-MDP such that
        \begin{equation}
            V_{m+1}(b) \leftarrow \max_{a} R(b, a) + \gamma \sum_{o' \in \mathcal{O}} Pr(o' | b, a) V_{m}\left(\tau(o', a, b)\right),
        \label{eq:belief_mdp_vi}
        \end{equation}
        converges to the optimal value function as $m \rightarrow \infty$ under the infinity norm, satisfying Bellman optimality equation such that
        \begin{equation*}
            V^{\pi^{*}}(b)  = \max_{a} R(b, a) + \gamma \sum_{o' \in \mathcal{O}} Pr(o' | b, a) V^{\pi^{*}}(\tau(o', a, b)),
        \end{equation*}
        where $\pi^{*}$ denotes the optimal policy.
    \label{thm:belief_mdp_vi}
    \end{theorem}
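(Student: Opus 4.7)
The plan is to cast the update rule as the Bellman optimality operator on a suitable function space and then invoke Banach's fixed-point theorem, exactly as in the classical POMDP argument. Concretely, I would define the operator
\begin{equation*}
(\mathcal{T}V)(b) = \max_{a}\left[ R_b(b,a) + \gamma \sum_{o'\in\mathcal{O}} Pr(o'|b,a)\, V(\tau(o',a,b)) \right]
\end{equation*}
acting on bounded real-valued functions $V:\mathcal{B}\to\mathbb{R}$, and work in the complete metric space $(\mathcal{F},\|\cdot\|_\infty)$ of bounded functions on $\mathcal{B}$ equipped with the supremum norm. The assumption that the value function is representable by a finite set of vectors $\zeta\in\mathbb{R}^{|\mathcal{S}||\mathcal{CS}|}$ (so that $V$ is convex and piecewise linear) is used to ensure $\mathcal{T}$ is well-defined and preserves this representation class, reducing all computations to finite-dimensional $\alpha$-vector operations.

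First I would verify that $\mathcal{T}$ maps the class of bounded convex piecewise-linear functions into itself. This follows because $R_b(b,a)=\sum_{\mathbf{s}} b(\mathbf{s}|\mathcal{CS})R(\mathbf{s},\mathbf{a})$ is linear in $b$, because the probability $Pr(o'|b,a)$ is also linear in $b$, and because the Bayesian update $\tau$ combined with a convex $V$ yields a pointwise-maximum of affine functionals of $b$ after multiplying by $Pr(o'|b,a)$; the outer $\max_a$ preserves convexity. Second, I would establish the contraction property: for $V_1,V_2\in\mathcal{F}$, using $|\max_a f_1(a)-\max_a f_2(a)|\leq\max_a|f_1(a)-f_2(a)|$,
\begin{equation*}
|(\mathcal{T}V_1)(b)-(\mathcal{T}V_2)(b)| \leq \gamma\max_a \sum_{o'} Pr(o'|b,a)\,|V_1(\tau(o',a,b))-V_2(\tau(o',a,b))| \leq \gamma\|V_1-V_2\|_\infty,
\end{equation*}
so $\|\mathcal{T}V_1-\mathcal{T}V_2\|_\infty\leq\gamma\|V_1-V_2\|_\infty$ with $\gamma\in(0,1)$.

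Third, by Banach's fixed-point theorem $\mathcal{T}$ admits a unique fixed point $V^\star\in\mathcal{F}$, and the iterates $V_{m+1}=\mathcal{T}V_m$ converge to $V^\star$ in $\|\cdot\|_\infty$ at geometric rate $\gamma$ from any bounded initialisation. Finally, I would identify $V^\star$ with $V^{\pi^\star}$ by invoking Bellman's principle of optimality for the equivalent belief-MDP: any fixed point of $\mathcal{T}$ coincides with the value function of the greedy policy obtained from it, which is optimal by a standard one-step deviation argument (the belief-MDP is fully observable, so the MDP optimality theorem applies directly). This yields the stated Bellman optimality equation.

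The main obstacle will be the first step—ensuring the iteration stays inside the space where the convexity/finite-representation assumption holds, and therefore justifying that the contraction argument in $\|\cdot\|_\infty$ is the appropriate one for the objects actually produced by the update. Once this invariance is established (which ultimately reduces to the Smallwood--Sondik $\alpha$-vector algebra), the contraction and fixed-point steps are routine, and the identification with $V^{\pi^\star}$ is immediate from the equivalence between the POMCG training problem and its belief-MDP reformulation noted earlier in the chapter.
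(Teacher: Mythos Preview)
Your proof proposal is correct and follows the standard Banach contraction argument for belief-MDP value iteration. However, the paper does not actually prove this theorem: it is stated as a cited result from \cite{smallwood1973optimal} (Smallwood--Sondik) and used as a black-box lemma in the subsequent proofs of Lemma~\ref{lemm:value_iteration} and Proposition~\ref{prop:value_iteration_pomcg}. There is nothing in the paper to compare your argument against beyond the citation itself.

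That said, your sketch is exactly the classical Smallwood--Sondik argument: showing $\mathcal{T}$ preserves the piecewise-linear convex class (the $\alpha$-vector invariance), establishing $\gamma$-contraction in the sup norm, and then invoking Banach's fixed-point theorem plus Bellman's principle of optimality on the fully-observable belief-MDP. So your proposal is both correct and faithful to the cited source, even though the thesis itself does not reproduce any of these steps.
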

        
    \begingroup
    \def\thelemma{\ref{lemm:value_iteration}}
    \begin{lemma}
        For all $b \in \mathcal{B}(\mathcal{CS})$ and $\mathcal{CS} \in \Psi(\mathcal{C}, \Lambda(\mathcal{CS}))$, the value iteration for $\pi_{\scriptscriptstyle\mathcal{C}}$ such that
        \begin{equation*}
            V_{m+1}(b) \leftarrow R_{b}(b, \mathbf{a}_{\scriptscriptstyle\mathcal{C}}) + \gamma \sum_{\mathbf{o}' \in \mathcal{O}} Pr(\mathbf{o}' | b, \mathbf{a}, \mathcal{CS}) V_{m}(\tau(\mathbf{o}', \mathbf{a}, b | \mathcal{CS})),
        \end{equation*}
        
        converges to $V^{\pi_{\mathcal{C}}}(b)$ as $m \rightarrow \infty$ under the infinity norm.
    \end{lemma}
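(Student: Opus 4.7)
The plan is to recast the value iteration as a policy-evaluation Bellman operator on functions of belief states, show that this operator is a $\gamma$-contraction in the supremum norm, and then invoke Banach's fixed-point theorem to conclude convergence to a unique fixed point which must coincide with $V^{\pi_{\mathcal{C}}}$.

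First I would define the operator $\mathcal{T}^{\pi_{\mathcal{C}}}$ acting on bounded value functions $V : \mathcal{B}(\mathcal{CS}) \rightarrow \mathbb{R}$ by
$$(\mathcal{T}^{\pi_{\mathcal{C}}} V)(b) \;=\; R_b(b, \mathbf{a}_{\scriptscriptstyle\mathcal{C}}) \;+\; \gamma \sum_{\mathbf{o}' \in \mathcal{O}} Pr(\mathbf{o}' \mid b, \mathbf{a}, \mathcal{CS}) \, V\!\left(\tau(\mathbf{o}', \mathbf{a}, b \mid \mathcal{CS})\right),$$
so that the update in the statement is precisely $V_{m+1} = \mathcal{T}^{\pi_{\mathcal{C}}} V_m$. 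By Assumption \ref{assm:partial_observability}, admissible value functions are representable by a finite set of $\alpha$-vectors and therefore form a bounded, closed subset of the space of real-valued functions on $\mathcal{B}(\mathcal{CS})$ equipped with the supremum norm; this gives a non-empty complete metric space on which $\mathcal{T}^{\pi_{\mathcal{C}}}$ operates.

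Next I would show that $\mathcal{T}^{\pi_{\mathcal{C}}}$ is a $\gamma$-contraction. For arbitrary bounded $V, V'$ and any $b \in \mathcal{B}(\mathcal{CS})$, the reward terms cancel and, using the triangle inequality together with $\sum_{\mathbf{o}'} Pr(\mathbf{o}' \mid b, \mathbf{a}, \mathcal{CS}) = 1$, one obtains
$$\left| (\mathcal{T}^{\pi_{\mathcal{C}}} V)(b) - (\mathcal{T}^{\pi_{\mathcal{C}}} V')(b) \right| \;\leq\; \gamma \sum_{\mathbf{o}'} Pr(\mathbf{o}' \mid b, \mathbf{a}, \mathcal{CS}) \, \| V - V' \|_{\infty} \;=\; \gamma \, \| V - V' \|_{\infty}.$$
Taking the supremum over $b$ gives $\| \mathcal{T}^{\pi_{\mathcal{C}}} V - \mathcal{T}^{\pi_{\mathcal{C}}} V' \|_{\infty} \leq \gamma \| V - V' \|_{\infty}$, and since $\gamma \in (0,1)$, $\mathcal{T}^{\pi_{\mathcal{C}}}$ is a contraction. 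Banach's fixed-point theorem then yields a unique fixed point $V^{\ast}$ to which $\{V_m\}$ converges under $\|\cdot\|_{\infty}$ from an arbitrary initial function. To close the argument I would verify by direct expansion that the coalition value $V^{\pi_{\mathcal{C}}}(b) = \mathbb{E}_{\pi_{\mathcal{C}}}\!\left[ \sum_{t=1}^{\infty} \gamma^{t-1} R_b(b_t, \mathbf{a}_{\scriptscriptstyle\mathcal{C},t}) \mid b_1 = b \right]$, together with the belief-MDP transition identity $b_{t+1} = \tau(\mathbf{o}_{t+1}, \mathbf{a}_t, b_t \mid \mathcal{CS})$ introduced in Section \ref{subsec:problem_formulation}, satisfies $V^{\pi_{\mathcal{C}}} = \mathcal{T}^{\pi_{\mathcal{C}}} V^{\pi_{\mathcal{C}}}$; uniqueness of the fixed point then forces $V^{\ast} = V^{\pi_{\mathcal{C}}}$. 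This is the policy-evaluation analogue of Theorem \ref{thm:belief_mdp_vi}, with the maximization over actions replaced by evaluation under the fixed coalition policy.

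The main obstacle I anticipate is the careful bookkeeping around the coalition structure $\mathcal{CS}$ and the belief space $\mathcal{B}(\mathcal{CS})$: one must check that the updated belief $\tau(\mathbf{o}', \mathbf{a}, b \mid \mathcal{CS})$ stays in $\mathcal{B}(\mathcal{CS})$ so that the iteration is well-defined on a closed invariant subset, and that the value-function class remains closed under $\mathcal{T}^{\pi_{\mathcal{C}}}$ so that Banach's theorem applies to a genuine complete metric space. Assumption \ref{assm:partial_observability} (finite $\alpha$-vector representability and convexity) is what makes this closure argument go through; once this is established, the contraction and fixed-point steps are essentially mechanical.
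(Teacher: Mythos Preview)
Your proof is correct, but it takes a different route from the paper's. You carry out the contraction-mapping argument from scratch: define the policy-evaluation operator $\mathcal{T}^{\pi_{\mathcal{C}}}$, verify it is a $\gamma$-contraction in $\|\cdot\|_{\infty}$, and invoke Banach's theorem to obtain a unique fixed point that must be $V^{\pi_{\mathcal{C}}}$. The paper instead reduces the statement to Theorem~\ref{thm:belief_mdp_vi} (the Smallwood--Sondik result on optimal value iteration in belief-MDPs): by restricting each agent's action at belief $b$ to the singleton $\{\pi_{\scriptscriptstyle\mathcal{C}}(b)\}$, the $\max$ in the optimal Bellman update becomes vacuous and the iteration collapses exactly to the policy-evaluation recursion, so convergence follows directly from the cited theorem. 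Your argument is more self-contained and makes explicit the completeness and invariance issues around $\mathcal{B}(\mathcal{CS})$ that the paper leaves implicit; the paper's argument is shorter but relies on the reader recognizing policy evaluation as a degenerate instance of optimal value iteration.
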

    \endgroup
    \begin{proof}
        The training phase of POMCG is a form of belief-MDP. If we rewrite $Pr(\mathbf{o}' | b, \mathbf{a}, \mathcal{CS})$ as $T(\mathbf{o}' | b, \mathbf{a})$ and $\tau(\mathbf{o}', \mathbf{a}, b | \mathcal{CS})$ as $\zeta(\mathbf{o}', \mathbf{a}, b)$ for any $\mathcal{CS} \in \Psi(\mathcal{C}, \Lambda(\mathcal{CS}))$, the value iteration for $\pi_{\scriptscriptstyle\mathcal{C}}$ is equivalently expressed as:
        \begin{equation}
            V_{m+1}(b) \leftarrow R_{b}(b, \mathbf{a}_{\scriptscriptstyle\mathcal{C}}) + \gamma \sum_{\mathbf{o}' \in \mathcal{O}} T(\mathbf{o}' | b, \mathbf{a}) V_{m}(\zeta(\mathbf{o}', \mathbf{a}, b)).
        \label{eq:pomcg_vi_new}
        \end{equation}
        
        The corresponding Bellman optimality equation is expressed as:
        \begin{equation}
            V^{\pi_{\mathcal{C}}^{*}}(b) = \max_{\mathbf{a}_{\mathcal{C}}} R_{b}(b, \mathbf{a}_{\scriptscriptstyle\mathcal{C}}) + \gamma \sum_{\mathbf{o}' \in \mathcal{O}} T(\mathbf{o}' | b, \mathbf{a}) V^{\pi_{\mathcal{C}}^{*}}(\zeta(\mathbf{o}', \mathbf{a}, b)).
        \label{eq:pomcg_boe}
        \end{equation}
        
        If we consider a stationary joint policy for an arbitrary coalition $\mathcal{C} \ \mathlarger{\mathlarger{\subseteq}} \ \mathcal{N}$, then $\mathbf{a}_{\scriptscriptstyle\mathcal{C}} \in \left\{ \pi_{\scriptscriptstyle\mathcal{C}}(b) \right\}$. From Theorem \ref{thm:belief_mdp_vi}, $\pi_{\scriptscriptstyle\mathcal{C}}$ solves Eq.~\ref{eq:pomcg_boe} such that
        \begin{equation}
            V_{m+1}(b) = R_{b}(b, \mathbf{a}_{\scriptscriptstyle\mathcal{C}}) + \gamma \sum_{\mathbf{o}' \in \mathcal{O}} T(\mathbf{o}' | b, \mathbf{a}) V_{m}(\zeta(\mathbf{o}', \mathbf{a}, b)).
        \label{eq:pomcg_boe_new}
        \end{equation}
        
        From Theorem \ref{thm:belief_mdp_vi}, Eq.~\ref{eq:pomcg_vi_new} converges to $V^{\pi_{\mathcal{C}}}(b)$ for all $b \in \mathcal{B}(\mathcal{CS})$ and $\mathcal{CS} \in \Psi(\mathcal{C}, \Lambda(\mathcal{CS}))$, as $m \rightarrow \infty$ under the infinity norm.
    \end{proof}
    
    \setcounter{lemma}{10}
    \begin{lemma}
    \label{lemm:policy_improvemnent_msv}
        Policy improvement of each agent by improving the Markov Shapley value improves coalition values.
    \end{lemma}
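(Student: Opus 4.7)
} The plan is to reduce the claim to a chain of elementary observations linking the Markov Shapley value (MSV) of a single agent to the coalition values of every coalition that contains it, and then to appeal to the structural assumptions already stated in the paper. Throughout, fix an arbitrary state $\mathbf{s}\in\mathcal{S}$ and let $\pi_i^{k}$ and $\pi_i^{k+1}$ denote agent $i$'s policy before and after a policy improvement step that satisfies $\max_{\pi_i}V_i^{\phi}(\mathbf{s})$ at $\pi_i^{k+1}$, while every other agent's policy is held fixed (this matches the sequential per-agent update inside Algorithm \ref{alg:pi_pomcg}, Lines 10--19; the simultaneous case will be obtained afterwards by iterating the argument).

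The first step is to unpack $V_i^{\phi}(\mathbf{s})$ through Definition \ref{def:shapley_value} as the convex combination $\sum_{\mathcal{C}_i\subseteq\mathcal{N}\setminus\{i\}} w(\mathcal{C}_i)\,\Phi_i(\mathbf{s}\mid\mathcal{C}_i)$ with the standard Shapley weights. The crucial structural fact is Assumption \ref{assm:max_shapley_value}, which asserts that a common maximizer $\pi_i^{\star}$ simultaneously attains the maximum of every marginal contribution $\Phi_i(\mathbf{s}\mid\mathcal{C}_i)$ and of their weighted average $V_i^{\phi}(\mathbf{s})$. Consequently, any policy update that strictly (or weakly) improves $V_i^{\phi}(\mathbf{s})$ can be chosen to simultaneously weakly improve each $\Phi_i(\mathbf{s}\mid\mathcal{C}_i)$, i.e.\ $\Phi_i(\mathbf{s}\mid\mathcal{C}_i)\big|_{\pi_i^{k+1}}\geq \Phi_i(\mathbf{s}\mid\mathcal{C}_i)\big|_{\pi_i^{k}}$ for every $\mathcal{C}_i\subseteq\mathcal{N}\setminus\{i\}$.

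The second step transfers this per-marginal improvement to coalition values. By Definition \ref{def:marginal_contribution},
\begin{equation*}
\Phi_i(\mathbf{s}\mid\mathcal{C}_i)=\max_{\pi_{\mathcal{C}_i}}V^{\pi_{\mathcal{C}_i\cup\{i\}}}(\mathbf{s})-\max_{\pi_{\mathcal{C}_i}}V^{\pi_{\mathcal{C}_i}}(\mathbf{s}),
\end{equation*}
and Assumption \ref{assm:assumption_for_joint_policy_factorisation} guarantees that the subtrahend $\max_{\pi_{\mathcal{C}_i}}V^{\pi_{\mathcal{C}_i}}(\mathbf{s})$ does not depend on $\pi_i$. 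Therefore the observed weak improvement of $\Phi_i(\mathbf{s}\mid\mathcal{C}_i)$ is entirely absorbed by the minuend, yielding $V^{\pi_{\mathcal{C}_i\cup\{i\}}}(\mathbf{s})\big|_{\pi_i^{k+1}}\geq V^{\pi_{\mathcal{C}_i\cup\{i\}}}(\mathbf{s})\big|_{\pi_i^{k}}$ for every coalition $\mathcal{C}_i\cup\{i\}\ni i$. For coalitions $\mathcal{C}$ that do not contain $i$, Assumption \ref{assm:assumption_for_joint_policy_factorisation} again implies $V^{\pi_{\mathcal{C}}}(\mathbf{s})$ is invariant under the update of $\pi_i$. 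Combining these two cases shows that the single-agent update weakly improves every coalition value, which is the desired statement. The multi-agent (simultaneous) case then follows by applying this monotonicity agent by agent along any order, noting that each subsequent update starts from a policy profile whose coalition values already dominate those at $\pi^{k}$.

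The main obstacle I anticipate is the simultaneous-update case, specifically the concern that agent $j$'s improvement of its own MSV could in principle lower $V^{\pi_{\mathcal{C}}}$ for a coalition $\mathcal{C}$ that contains $i$ but not $j$, through correlated changes in the coalition's action distribution. The resolution will be to pin down, using Assumption \ref{assm:assumption_for_joint_policy_factorisation}, that the coalition value $V^{\pi_{\mathcal{C}}}(\mathbf{s})$ depends only on $(\pi_k)_{k\in\mathcal{C}}$, so improvements by any agent outside $\mathcal{C}$ leave it untouched, while improvements by agents inside $\mathcal{C}$ are monotone by the single-agent argument above. A secondary (but routine) subtlety is that the $\max_{\pi_{\mathcal{C}_i}}$ in the definition of $\Phi_i$ is taken jointly over the other members of $\mathcal{C}_i$; we must ensure that when we attribute the improvement to $\pi_i$ alone, the outer maxima on both sides are taken over the same set of non-$i$ policies, which is immediate because the two maxima share that index set.
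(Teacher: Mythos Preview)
Your proposal is correct and follows essentially the same route as the paper: expand the MSV as a convex combination of marginal contributions, use the common-maximizer property (Assumption~\ref{assm:max_shapley_value}) to pass from improvement of the weighted sum to termwise improvement of each $\Phi_i(\cdot\mid\mathcal{C}_i)$, observe that the subtrahend is independent of $\pi_i$ so the gain lands entirely on $V^{\pi_{\mathcal{C}_i\cup\{i\}}}$, and then iterate over agents. The paper carries out the identical chain in the belief-state setting (with $b$ and the $Q$-value form $V^{\pi_{\mathcal{C}}^{k}}(b)=Q^{\pi_{\mathcal{C}}^{k}}(b,\mathbf{a}_{\mathcal{C}}^{k})$) and is in fact less explicit than you are about why the termwise inequality follows from the Shapley-level one.
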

    \begin{proof}
        For any agent $\mathit{i} \in \mathcal{N}$ and any $b \in \mathcal{B}(\mathcal{CS})$ and $\mathcal{CS} \in \Psi(\mathcal{C}, \Lambda(\mathcal{CS}))$, if we conduct the policy improvement we have the following inequalities such that 
        \begin{align*}
            V_{i}^{\phi^{k}}(b)
            &= \sum_{\mathcal{C}_{i} \ \mathlarger{\mathlarger{\subseteq}} \ \mathcal{N} \backslash \{i\}} \frac{|\mathcal{C}_{i}|!(|\mathcal{N}| - |\mathcal{C}_{i}| - 1)!}{|\mathcal{N}|!} \cdot \left[ V^{\pi_{\mathcal{C}_{i} \cup \{i\}}^{k}}(b) - V^{\pi_{\mathcal{C}_{i}}^{k}}(b) \right] \\
            &= \sum_{\mathcal{C}_{i} \ \mathlarger{\mathlarger{\subseteq}} \ \mathcal{N} \backslash \{i\} } \frac{|\mathcal{C}_{i}|!(|\mathcal{N}|-|\mathcal{C}_{i}|-1)!}{|\mathcal{N}|!} \cdot \left[ Q^{\pi_{\mathcal{C}_{i} \cup \{i\}}^{k}}(b, \mathbf{a}_{\scriptscriptstyle\mathcal{C}_{i} \cup \{i\}}^{k}) - Q^{\pi_{\mathcal{C}_{i}}^{k}}(b, \mathbf{a}_{\scriptscriptstyle\mathcal{C}_{i}}^{k}) \right] \\
            &= Q_{i}^{\phi^{k}}(b, \pi_{i}^{k}(b))
            \leq Q_{i}^{\phi^{k}}(b, \pi_{i}^{k+1}(b)) \\
            &= \sum_{\mathcal{C}_{i} \ \mathlarger{\mathlarger{\subseteq}} \ \mathcal{N} \backslash \{i\} } \frac{|\mathcal{C}_{i}|!(|\mathcal{N}|-|\mathcal{C}_{i}|-1)!}{|\mathcal{N}|!} \cdot \left[ Q^{\pi_{\mathcal{C}_{i} \cup \{i\}}^{k}}(b, \mathbf{a}_{\scriptscriptstyle\mathcal{C}_{i}}^{k}, \pi_{i}^{k+1}(b)) - Q^{\pi_{\mathcal{C}_{i}}^{k}}(b, \mathbf{a}_{\scriptscriptstyle\mathcal{C}_{i}}^{k}) \right],
        \end{align*}
        
        where we can get that $V^{\pi_{\mathcal{C}_{i} \cup \{i\}}^{k}}(b) \leq Q^{\pi_{\mathcal{C}_{i} \cup \{i\}}^{k}}(b, \mathbf{a}_{\scriptscriptstyle\mathcal{C}_{i}}^{k}, \pi_{i}^{k+1}(b))$.
        
        If we conduct the above policy improvement for every agent, it is not difficult to induce that for all $\mathcal{C} \ \mathlarger{\mathlarger{\subseteq}} \ \mathcal{N}$, $b \in \mathcal{B}(\mathcal{CS})$ and $\mathcal{CS} \in \Psi(\mathcal{C}, \Lambda(\mathcal{CS}))$, we have
        \begin{equation*}
            V^{\pi_{\mathcal{C}}^{k}}(b) = Q^{\pi_{\mathcal{C}}^{k}}(b, \mathbf{a}_{\scriptscriptstyle\mathcal{C}}^{k}) \leq Q^{\pi_{\mathcal{C}}^{k+1}}(b, \pi_{\scriptscriptstyle\mathcal{C}}^{k+1}(b)) = V^{\pi_{\mathcal{C}}^{k+1}}(b).
        \end{equation*}
    \end{proof}
    
    \begingroup
    \def\theproposition{\ref{prop:policy_iteration}}
    \begin{proposition}
        For all $\mathcal{C} \ \mathlarger{\mathlarger{\subseteq}} \ \mathcal{N}$, $b \in \mathcal{B}(\mathcal{CS})$ and $\mathcal{CS} \in \Psi(\mathcal{C}, \Lambda(\mathcal{CS}))$, partially observable Shapley policy iteration converges to the optimal coalition values and the optimal joint policy.
    \end{proposition}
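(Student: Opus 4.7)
The plan is to follow the classical policy iteration template adapted to the partially observable Markov convex game. First, I would establish that, within each outer iteration $k$ of Algorithm~\ref{alg:pi_pomcg}, the policy evaluation step (Line 3) correctly yields the coalition value $V^{\pi_{\mathcal{C}}^{k}}(b)$ for every coalition $\mathcal{C}\ \mathlarger{\mathlarger{\subseteq}}\ \mathcal{N}$, every belief $b\in\mathcal{B}(\mathcal{CS})$, and every $\mathcal{CS}\in\Psi(\mathcal{C},\Lambda)$. This is a direct appeal to Lemma~\ref{lemm:value_iteration}, which already guarantees that the Bellman operator for a fixed coalition policy $\pi_{\mathcal{C}}^{k}$ converges under the infinity norm on the belief-MDP induced by $(T_{b},R_{b},\Omega)$.

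Next, I would argue monotonic improvement across outer iterations. Using Lemma~\ref{lemm:policy_improvemnent_msv}, the per-agent update $\pi_{i}^{k+1}\leftarrow\arg\max_{\pi_{i}}\sum_{\mathcal{C}_{i}}w(\mathcal{C}_{i})\bigl[Q(b,\mathbf{a}_{\scriptscriptstyle\mathcal{C}_{i}\cup\{i\}})-Q(b,\mathbf{a}_{\scriptscriptstyle\mathcal{C}_{i}})\bigr]$ yields $V^{\pi_{\mathcal{C}}^{k+1}}(b)\ge V^{\pi_{\mathcal{C}}^{k}}(b)$ for every coalition and every admissible belief. Although Lemma~\ref{lemm:policy_improvemnent_msv} is stated for sequential agent updates with the bookkeeping step that replaces $Q(b,\mathbf{a}_{\mathcal{C}_{i}\cup\{i\}})$ by $Q(b,\mathbf{a}_{\mathcal{C}_{i}},\pi_{i}^{k+1}(b))$ before the next agent improves, I would note that this is exactly what Lines 14--19 in the algorithm implement, so the induction over agents $i\in\mathcal{N}$ transfers verbatim from the fully observable case to the belief-MDP formulation.

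With monotonicity in hand, I would combine it with the fact that coalition values are bounded above (since $R_{b}$ is bounded and $\gamma<1$) to invoke the monotone convergence theorem, obtaining a pointwise limit $V^{\pi_{\mathcal{C}}^{\infty}}(b)$ for each $\mathcal{C}$ and $b$. The termination criterion of Algorithm~\ref{alg:pi_pomcg} (Line 21) then forces $V^{\pi_{\mathcal{C}}^{k+1}}(b)=V^{\pi_{\mathcal{C}}^{k}}(b)$ at the fixed point; substituting this equality into the policy improvement step shows that $\pi^{\infty}$ is a greedy policy with respect to its own coalition Q-values. I would then argue that greediness with respect to the limit Q-values coincides with the Bellman optimality equation for POMCG in the belief-MDP form of Lemma~\ref{lemm:value_iteration}, whose unique solution is the optimal coalition value $V^{\pi_{\mathcal{C}}^{*}}(b)$; hence $V^{\pi_{\mathcal{C}}^{\infty}}=V^{\pi_{\mathcal{C}}^{*}}$ and $\pi^{\infty}=\pi^{*}$.

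The main obstacle I anticipate is in the last step: translating the per-agent greedy condition (written in terms of weighted marginal contributions over all coalitions $\mathcal{C}_{i}\subseteq\mathcal{N}\setminus\{i\}$) into a coalition-wise greedy condition that matches the Bellman optimality equation of Eq.~\ref{eq:pomcg_boe_new}. The subtlety is that the Shapley-weighted improvement over $\pi_{i}$ implicitly couples coalitions that do not all live in the same $\Psi(\mathcal{C},\Lambda)$, so one must verify that the fixed point aligns the greedy choices across all admissible coalition structures simultaneously. I expect this to be handled by noting that the efficiency property of Markov Shapley value (inherited from Proposition~\ref{prop:shapley_value_properties}) makes the per-agent greedy condition equivalent to joint greediness on the grand coalition Q-value, from which greediness on every sub-coalition follows by the convexity condition Eq.~\ref{eq:mcg_assumption} applied within each $\mathcal{CS}\in\Psi(\mathcal{C},\Lambda)$.
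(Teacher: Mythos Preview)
Your plan follows the paper's proof closely: both invoke Lemma~\ref{lemm:value_iteration} for the policy-evaluation step and Lemma~\ref{lemm:policy_improvemnent_msv} for monotone improvement of every coalition value across outer iterations, then argue that stabilisation forces each $V^{\pi_{\mathcal{C}}^{k}}$ to satisfy its Bellman optimality equation. The paper even organises the monotonicity argument by sandwiching $V^{\pi_{\mathcal{C}}^{k}}=V_{0}^{\mathcal{C}}\le V_{1}^{\mathcal{C}}\le\cdots\to V^{\pi_{\mathcal{C}}^{k+1}}$, which is the content you obtain by citing Lemma~\ref{lemm:policy_improvemnent_msv} directly.

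One point of divergence is worth noting. You reach the fixed point via boundedness plus the monotone convergence theorem, whereas the paper explicitly \emph{assumes the set of stationary policies is finite} and concludes that $V^{\pi_{\mathcal{C}}^{k}}=V^{\pi_{\mathcal{C}}^{k+1}}$ after finitely many iterations. Monotone convergence gives you a pointwise limit of values but does not by itself guarantee that the termination criterion in Line~21 is ever hit exactly, so your route leaves a small gap that the paper's finiteness assumption closes. As for the obstacle you anticipate in the last step---translating per-agent Shapley-weighted greediness into coalition-wise greediness---the paper handles it more tersely than you propose: once equality holds simultaneously for every $\mathcal{C}$, it writes down the Bellman optimality equation for each coalition directly, without routing through the efficiency property or Eq.~\ref{eq:mcg_assumption}. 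The alignment of maximisers across coalitions is already packaged into Lemma~\ref{lemm:policy_improvemnent_msv} (and, underneath it, Assumption~\ref{assm:max_shapley_value}), so your detour through the convexity inequality is unnecessary.
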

    \endgroup
    \begin{proof}
        The proof sketch follows Proposition 4.6.1 in \cite{bertsekas2019reinforcement}. Here we rewrite $Pr(\mathbf{o}' | b, \mathbf{a}, \mathcal{CS})$ as $T(\mathbf{o}' | b, \mathbf{a})$ and $\tau(\mathbf{o}', \mathbf{a}, b | \mathcal{CS})$ as $\zeta(\mathbf{o}', \mathbf{a}, b)$ for any $\mathcal{CS} \in \Psi(\mathcal{C}, \Lambda(\mathcal{CS}))$. For any $k$, we first consider policy evaluation for policy $\pi^{k+1}$:
        \begin{equation*}
            V_{m+1}^{\mathcal{C}}(b) \leftarrow R_{b}(b, \mathbf{a}_{\scriptscriptstyle\mathcal{C}}^{k+1}) + \gamma \sum_{\mathbf{o}' \in \mathcal{O}} T(\mathbf{o}' | b, \mathbf{a}^{k+1}) V_{m}^{\mathcal{C}}(\zeta(\mathbf{o}', \mathbf{a}, b)),
        \end{equation*}
        
        for all $\mathcal{C} \ \mathlarger{\mathlarger{\subseteq}} \ \mathcal{N}$, $b \in \mathcal{B}(\mathcal{CS})$ and $\mathcal{CS} \in \Psi(\mathcal{C}, \Lambda(\mathcal{CS}))$, $m = 0, 1, ...$, and 
        \begin{equation*}
            V_{0}^{\mathcal{C}}(b) = V^{\pi_{\mathcal{C}}^{k}}(b).
        \end{equation*}
        
        By the result from Lemma \ref{lemm:policy_improvemnent_msv}, \textit{after the procedure of policy improvement}, for all $\mathcal{C} \ \mathlarger{\mathlarger{\subseteq}} \ \mathcal{N}$, $b \in \mathcal{B}(\mathcal{CS})$ and $\mathcal{CS} \in \Psi(\mathcal{C}, \Lambda(\mathcal{CS}))$, we have 
        \begin{equation*}
            V_{0}^{\mathcal{C}}(b)
            \leq R_{b}(b, \mathbf{a}_{\scriptscriptstyle\mathcal{C}}^{k+1}) + \gamma \sum_{\mathbf{o}' \in \mathcal{O}} T(\mathbf{o}' | b, \mathbf{a}^{k+1}) V_{0}^{\mathcal{C}}(\zeta(\mathbf{o}', \mathbf{a}^{k+1}, b))
            = V_{1}^{\mathcal{C}}(b).
        \end{equation*}
        
        Using the above inequality, we have
        \begin{align*}
            V_{1}^{\mathcal{C}}(b) &= R_{b}(b, \mathbf{a}_{\scriptscriptstyle\mathcal{C}}^{k+1}) + \gamma \sum_{\mathbf{o}' \in \mathcal{O}} T(\mathbf{o}' | b, \mathbf{a}^{k+1}) V_{0}^{\mathcal{C}}(\zeta(\mathbf{o}', \mathbf{a}^{k+1}, b)) \\
            &\leq R_{b}(b, \mathbf{a}_{\scriptscriptstyle\mathcal{C}}^{k+1}) + \gamma \sum_{\mathbf{o}' \in \mathcal{O}} T(\mathbf{o}' | b, \mathbf{a}^{k+1}) V_{1}^{\mathcal{C}}(\zeta(\mathbf{o}', \mathbf{a}^{k+1}, b)) = V_{2}^{\mathcal{C}}(b).
        \end{align*}
        
        To continue similarly, we have the following inequalities such that
        \begin{equation*}
            V_{0}^{\mathcal{C}}(b) \leq V_{1}^{\mathcal{C}}(b) \leq V_{2}^{\mathcal{C}}(b) \leq ... \leq V_{m}^{\mathcal{C}}(b) \leq V_{m+1}^{\mathcal{C}}(b) \leq ...,
        \end{equation*}
        
        for all $\mathcal{C} \ \mathlarger{\mathlarger{\subseteq}} \ \mathcal{N}$, $b \in \mathcal{B}(\mathcal{CS})$ and $\mathcal{CS} \in \Psi(\mathcal{C}, \Lambda(\mathcal{CS}))$.
        
        By Lemma \ref{lemm:value_iteration}, we know that $V_{m}^{\mathcal{C}}(b) \rightarrow V^{\pi_{\mathcal{C}}^{k+1}}(b)$, as $m \rightarrow \infty$. As a result, we obtain that $$V^{\pi_{\mathcal{C}}^{k}}(b) = V_{0}^{\mathcal{C}}(b) \leq V^{\pi_{\mathcal{C}}^{k+1}}(b),$$ for all $\mathcal{C} \ \mathlarger{\mathlarger{\subseteq}} \ \mathcal{N}$, $b \in \mathcal{B}(\mathcal{CS})$ and $\mathcal{CS} \in \Psi(\mathcal{C}, \Lambda(\mathcal{CS}))$, and $k = 0, 1, ...$.
        
        It is easy to see that the sequence of generated coalition values is improving, and since the number of stationary policies is assumed to be finite, after a finite number of iterations, we can obtain that $V^{\pi_{\mathcal{C}}^{k}}(b) = V^{\pi_{\mathcal{C}}^{k+1}}(b)$, for all $\mathcal{C} \ \mathlarger{\mathlarger{\subseteq}} \ \mathcal{N}$, $b \in \mathcal{B}(\mathcal{CS})$ and $\mathcal{CS} \in \Psi(\mathcal{C}, \Lambda(\mathcal{CS}))$.
        
        Therefore, for all $\mathcal{C} \ \mathlarger{\mathlarger{\subseteq}} \ \mathcal{N}$, $b \in \mathcal{B}(\mathcal{CS})$ and $\mathcal{CS} \in \Psi(\mathcal{C}, \Lambda(\mathcal{CS}))$, we can obtain the equality
        \begin{equation*}
            V^{\pi_{\mathcal{C}}^{k}}(b) = \max_{\mathbf{a}_{\mathcal{C}}} R_{b}(b, \mathbf{a}_{\scriptscriptstyle\mathcal{C}}) + \gamma \sum_{\mathbf{o}' \in \mathcal{O}} T(\mathbf{o}' | b, \mathbf{a}) V^{\pi_{\mathcal{C}}^{k}}(\zeta(\mathbf{o}', \mathbf{a}, b)).
        \end{equation*}
        
        As a result, $V^{\pi_{\mathcal{C}}^{k}}(b)$ solves the Bellman equation and we can write it as $V^{\pi_{\mathcal{C}}^{k}}(b) = V^{\pi_{\mathcal{C}}^{*}}(b)$. It is not difficult to see that if setting $\mathcal{CS} = \{ \mathcal{N} \}$ and $\mathcal{C} = \mathcal{N}$, we can directly get that $V^{\pi^{k}}(b) = V^{\pi^{*}}(b)$ and conclude that $\pi^{k}$ converges to the optimal joint policy.
    \end{proof}
    
    \begingroup
    \def\theproposition{\ref{prop:value_iteration_pomcg}}
    \begin{proposition}
        For all $\mathcal{C} \ \mathlarger{\mathlarger{\subseteq}} \ \mathcal{N}$, $b \in \mathcal{B}(\mathcal{CS})$ and $\mathcal{CS} \in \Psi(\mathcal{C}, \Lambda(\mathcal{CS}))$, the partially observable Shapley value iteration such that
        \begin{equation}
            Q_{m+1}(b, \mathbf{a}_{\scriptscriptstyle \mathcal{C}}) \leftarrow R_{b}(b, \mathbf{a}_{\scriptscriptstyle\mathcal{C}}) + \gamma \sum_{\mathbf{o}' \in \mathcal{O}} Pr(\mathbf{o}' | b, \mathbf{a}, \mathcal{CS}) \max_{\mathbf{a}_{\scriptscriptstyle \mathcal{C}}'} Q_{m}(\tau(\mathbf{o}', \mathbf{a}, b | \mathcal{CS}), \mathbf{a}_{\scriptscriptstyle \mathcal{C}}'),
        \end{equation}
        converges to the optimal coalition Q-values and the optimal joint policy.
    \end{proposition}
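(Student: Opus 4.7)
The plan is to establish convergence via the Banach fixed-point theorem, mirroring the structure already used for Shapley-Bellman operator in Lemma \ref{lemm:shapley_q_contraction_mapping} and for belief-MDP value iteration in Theorem \ref{thm:belief_mdp_vi}. First I would fix an arbitrary coalition $\mathcal{C} \subseteq \mathcal{N}$ and coalition structure $\mathcal{CS} \in \Psi(\mathcal{C}, \Lambda)$, and define the partially observable Shapley value iteration operator $\mathcal{T}_{\mathcal{C}}$ acting on bounded functions $Q : \mathcal{B}(\mathcal{CS}) \times \mathcal{A}_{\mathcal{C}} \to \mathbb{R}$ via the right-hand side of Eq.~\ref{eq:shapley_value_iteration}. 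The space of such bounded functions equipped with the supremum norm forms a non-empty complete metric space, which is the standard setting for the contraction argument.

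Second, the core technical step is to show that $\mathcal{T}_{\mathcal{C}}$ is a $\gamma$-contraction in the supremum norm. For any two candidate coalition Q-value functions $Q_1$ and $Q_2$, the difference $|\mathcal{T}_{\mathcal{C}} Q_1(b, \mathbf{a}_{\mathcal{C}}) - \mathcal{T}_{\mathcal{C}} Q_2(b, \mathbf{a}_{\mathcal{C}})|$ can be bounded using (i) cancellation of the reward term $R_b(b, \mathbf{a}_{\mathcal{C}})$, (ii) the elementary inequality $|\max_x f(x) - \max_x g(x)| \leq \max_x |f(x) - g(x)|$ applied inside the expectation, and (iii) the fact that $\sum_{\mathbf{o}'} Pr(\mathbf{o}' | b, \mathbf{a}, \mathcal{CS}) = 1$. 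These together yield $\|\mathcal{T}_{\mathcal{C}} Q_1 - \mathcal{T}_{\mathcal{C}} Q_2\|_\infty \leq \gamma \|Q_1 - Q_2\|_\infty$. By Banach fixed-point theorem, iterating $\mathcal{T}_{\mathcal{C}}$ from any initialization converges to a unique fixed point $Q^*_{\mathcal{C}}$.

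Third, I would identify this fixed point with the optimal coalition Q-value $Q^{\pi_{\mathcal{C}}^*}(b, \mathbf{a}_{\mathcal{C}})$. The fixed-point equation satisfied by $Q^*_{\mathcal{C}}$ is precisely the Bellman optimality equation for coalition Q-values in the belief-MDP induced by the coalition $\mathcal{C}$ under $\mathcal{CS}$. An alternative shortcut is to invoke Proposition \ref{prop:policy_iteration}: since POSVI is the special case of POSPI where policy evaluation is performed for a single iteration and immediately followed by greedy improvement, its fixed point necessarily coincides with the converged value produced by POSPI, which is already shown to be the optimal coalition Q-value. Finally, specialising to the grand coalition $\mathcal{C} = \mathcal{N}$ and extracting the greedy joint action $\arg\max_{\mathbf{a}} Q^*_{\mathcal{N}}(b, \mathbf{a})$ recovers the optimal joint policy.

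The main obstacle will be the careful handling of the dependence of $\tau(\mathbf{o}', \mathbf{a}, b | \mathcal{CS})$ on the coalition structure: the belief update, observation probabilities, and therefore the operator $\mathcal{T}_{\mathcal{C}}$ are all indexed by $\mathcal{CS}$, so the contraction argument must be carried out within each $\mathcal{B}(\mathcal{CS})$ separately while respecting the constraint $\mathcal{CS} \in \Psi(\mathcal{C}, \Lambda)$. This is more bookkeeping than deep mathematics, but it is the step that distinguishes POSVI from vanilla POMDP value iteration and where sloppiness would invalidate the fixed-point uniqueness claim.
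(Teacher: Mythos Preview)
Your proposal is correct but takes a different route from the paper. The paper does not carry out a fresh contraction argument on the Q-operator; instead it performs a reduction: it suppresses the $\mathcal{CS}$-dependence by renaming $Pr(\mathbf{o}'\mid b,\mathbf{a},\mathcal{CS})$ and $\tau(\mathbf{o}',\mathbf{a},b\mid\mathcal{CS})$, then sets $\hat{V}(b)=\max_{\mathbf{a}_{\mathcal{C}}}Q(b,\mathbf{a}_{\mathcal{C}})$ to transform the Q-iteration into a V-iteration of exactly the form in Theorem~\ref{thm:belief_mdp_vi} (Smallwood's classical belief-MDP result), and cites that theorem directly for convergence. Your approach proves the $\gamma$-contraction of $\mathcal{T}_{\mathcal{C}}$ from scratch in the sup norm and invokes Banach, which is more self-contained and makes the bookkeeping over $\mathcal{CS}$ explicit; the paper's approach is shorter precisely because it delegates both the contraction and the fixed-point identification to the cited theorem in one stroke. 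Your alternative shortcut via Proposition~\ref{prop:policy_iteration} (POSVI as one-step POSPI) is also not the route the paper takes, though it would work; the paper keeps the argument entirely on the value side without appealing to policy iteration.
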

    \endgroup
    \begin{proof}
        If we rewrite $Pr(\mathbf{o}' | b, \mathbf{a}, \mathcal{CS})$ as $T(\mathbf{o}' | b, \mathbf{a})$ and $\tau(\mathbf{o}', \mathbf{a}, b | \mathcal{CS})$ as $\zeta(\mathbf{o}', \mathbf{a}, b)$ for any $\mathcal{CS} \in \Psi(\mathcal{C}, \Lambda(\mathcal{CS}))$, the value iteration for $\pi_{\scriptscriptstyle\mathcal{C}}$ is equivalently expressed as:
        \begin{equation}
        \label{eq:shapley_value_iteration_new}
            Q_{m+1}(b, \mathbf{a}_{\scriptscriptstyle \mathcal{C}}) \leftarrow R_{b}(b, \mathbf{a}_{\scriptscriptstyle\mathcal{C}}) + \gamma \sum_{\mathbf{o}' \in \mathcal{O}} T(\mathbf{o}' | b, \mathbf{a}) \max_{\mathbf{a}_{\scriptscriptstyle \mathcal{C}}'} Q_{m}(\zeta(\mathbf{o}', \mathbf{a}, b), \mathbf{a}_{\scriptscriptstyle \mathcal{C}}').
        \end{equation}
        
        If we represent $\max_{\mathbf{a}_{\scriptscriptstyle \mathcal{C}}} Q(b, \mathbf{a}_{\scriptscriptstyle \mathcal{C}})$ as $\hat{V}(b)$, the above equation can be transformed as follows:
        \begin{equation}
        \label{eq:shapley_value_iteration_new_1}
            \hat{V}_{m+1}(b) \leftarrow \max_{\mathbf{a}_{\scriptscriptstyle\mathcal{C}}} R_{b}(b, \mathbf{a}_{\scriptscriptstyle\mathcal{C}}) + \gamma \sum_{\mathbf{o}' \in \mathcal{O}} T(\mathbf{o}' | b, \mathbf{a}) \hat{V}_{m}(\zeta(\mathbf{o}', \mathbf{a}, b)).
        \end{equation}
        
        By Theorem \ref{thm:belief_mdp_vi}, it is not difficult to see that Eq.~\ref{eq:shapley_value_iteration_new_1} converges to the following Bellman optimality equation of the transformed coalition value such that
        \begin{equation}
        \label{eq:shapley_value_iteration_new_2}
            \hat{V}^{\pi^{*}_{\mathcal{C}}}(b) = \max_{\mathbf{a}_{\scriptscriptstyle\mathcal{C}}} R_{b}(b, \mathbf{a}_{\scriptscriptstyle\mathcal{C}}) + \gamma \sum_{\mathbf{o}' \in \mathcal{O}} T(\mathbf{o}' | b, \mathbf{a}) \hat{V}^{\pi^{*}_{\mathcal{C}}}(\zeta(\mathbf{o}', \mathbf{a}, b)),
        \end{equation}
        
        where $\pi^{*}_{\scriptscriptstyle\mathcal{C}}$ is the optimal policy of coalition $\mathcal{C}$. If setting $\mathcal{CS} = \{ \mathcal{N} \}$ and $\mathcal{C} = \mathcal{N}$, we can directly get that the optimal joint policy is achieved. Since $\hat{V}^{\pi^{*}_{\mathcal{C}}}(b) = \max_{\mathbf{a}_{\scriptscriptstyle \mathcal{C}}} Q^{\pi^{*}_{\mathcal{C}}}(b, \mathbf{a}_{\scriptscriptstyle \mathcal{C}})$, we can get that the optimal coalition Q-values is obtained.
    \end{proof}
\chapter{Experimental Details}

\section{Experimental Details of Benchmarks for SQDDPG}
\label{sec:experimental_settings_of_benchmark_for_sqddpg}
    As for the setups of experiments, because different environments may involve variant complexity and dynamics, we give different hyperparameters for variant tasks.\footnote{The open-source code is released on \url{https://github.com/hsvgbkhgbv/SQDDPG}.} All of algorithms use MLPs as hidden layers for the policy networks. All of policy networks only use one hidden layer. About the critic networks, every algorithm uses MLPs with one hidden layer. For each experiment, we maintain the learning rate, entropy regularization coefficient, update frequency, batch size and the number of hidden units identical on each algorithm, except for the algorithms with the natural gradients (e.g., COMA and A2C). These algorithms need the special learning rates to maintain the stability of training. In experiments, each agent has its own observation in execution for the policy. During training, the agents with the centralised critics share the observations while those with the decentralised critics only observe its own observation. The rest of details in experimental setups are introduced as below. All of models are trained by the Adam optimizer \cite{kingma2014adam} with the default hyperparameters (except for the learning rate).
    
    \subsection{Additional Details of Cooperative Navigation} 
        The specific hyperparameters of the algorithms solving Cooperative Navigation are shown in Table \ref{tab:hyperparameters_cn}. 
        \begin{table*}[ht!]
        \caption{Hyperparameters for Cooperative Navigation.}
        \centering
        \scalebox{0.85}{
            \begin{tabular}{lcl}
            \toprule
            \textbf{Hyperparameters}               & \textbf{\#} & \textbf{Description}                                                \\ 
            \midrule
            hidden units                           & 32            & The number of hidden units for both policy and critic network \\
            training episodes                      & 5000          & The number of training episodes                               \\
            episode length                         & 200           & Maximum time steps per episode                                \\
            discount factor                        & 0.9           & The importance of future rewards                              \\
            update frequency for behaviour network & 100           & Behaviour network updates every \# steps                      \\
            learning rate for policy network       & 1e-4          & Policy network learning rate                                  \\
            learning rate for policy network(COMA)  & 1e-2          & Policy network learning rate for COMA                                 \\
            learning rate for policy network(IA2C)  & 1e-6          & Policy network learning rate for IA2C                                 \\                
            learning rate for critic network       & 1e-3          & Critic network learning rate                                  \\
            learning rate for critic network(COMA)  & 1e-4          & Critic network learning rate for COMA                                 \\
            learning rate for critic network(IA2C)  & 1e-5          & Critic network learning rate for IA2C                                 \\
            update frequency for target network    & 200           & Target network updates every \# steps                         \\
            target update rate                     & 0.1           & Target network update rate                                    \\
            entropy regularization coefficient     & 1e-2          & Weight or regularization for exploration                      \\
            batch size                             & 32            & The number of  transitions for each update                    \\ 
            \bottomrule
            \end{tabular}}
            \setlength{\abovecaptionskip}{10pt}
        \label{tab:hyperparameters_cn}
        \end{table*}
    
    \subsection{Additional Details of Prey-and-Predator} 
        The specific hyperparameters of each algorithm solving Predator-Prey are shown in Table \ref{tab:hyperparameters_pp}.
        \begin{table*}[ht!]
        \centering
        \caption{Hyperparameters for Predator-Prey.}
        \scalebox{0.8}{
            \begin{tabular}{lcl}
            \toprule
            \textbf{Hyperparameters}               & \textbf{\#} & \textbf{Description}                                                \\ 
            \midrule
            hidden units                           & 128           & The number of hidden units for both policy and critic network \\
            training episodes                      & 5000          & The number of training episodes                               \\
            episode length                         & 200           & Maximum time steps per episode                                \\
            discount factor                        & 0.99          & The importance of future rewards                              \\
            update frequency for behaviour network & 100           & Behaviour network updates every \# steps                      \\
            learning rate for policy network       & 1e-4          & Policy network learning rate                                  \\
            learning rate for policy network(COMA/IA2C)  & 1e-3    & Policy network learning rate for COMA and IA2C                                 \\
            learning rate for critic network       & 5e-4          & Critic network learning rate                                  \\
            learning rate for critic network(COMA/IA2C)  & 1e-4    & Critic network learning rate for COMA and IA2C                                 \\
            update frequency for target network    & 200           & Target network updates every \# steps                         \\
            target update rate                     & 0.1           & Target network update rate                                    \\
            entropy regularization coefficient     & 1e-3          & Weight or regularization for exploration                      \\
            batch size                             & 128           & The number of  transitions for each update                    \\ 
            \bottomrule
            \end{tabular}}
            \setlength{\abovecaptionskip}{10pt} 
        \label{tab:hyperparameters_pp}
        \end{table*}

    \subsection{Additional Details of Traffic Junction} 
        The specific hyperparameters of the algorithms solving Traffic Junction are shown in Table \ref{tab:hyperparameters_tj}. To exhibit the training procedure in more details, we also show the figures of mean rewards (see Figure \ref{fig:traffic_easy_reward}$\sim$\ref{fig:traffic_hard_reward}) and the figures of success rate (see Figure \ref{fig:traffic_easy_success}$\sim$\ref{fig:traffic_hard_success}).
        \begin{figure*}[ht!]
            \centering
            \begin{subfigure}[b]{0.48\textwidth}
                \includegraphics*[width=\textwidth]{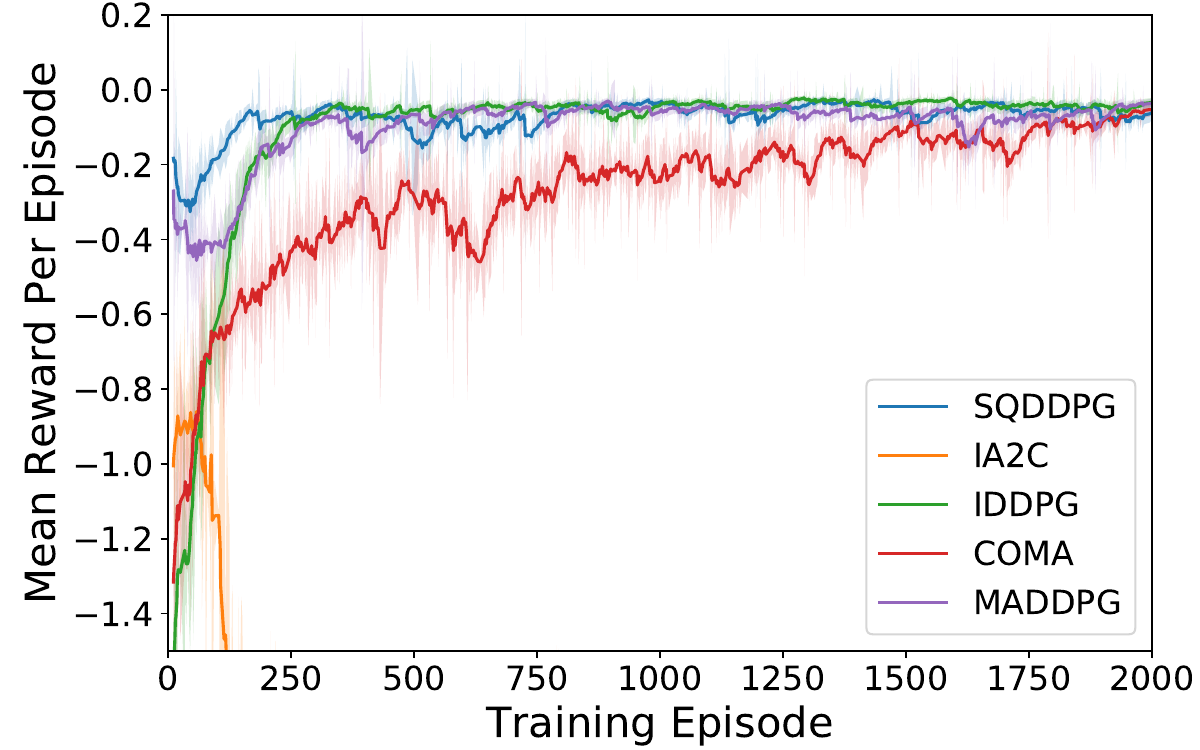}
                \caption{Mean reward per episode during training in Traffic Junction on easy version.}
                \label{fig:traffic_easy_reward}
            \end{subfigure}
            \begin{subfigure}[b]{0.48\textwidth}
                \includegraphics*[width=\textwidth]{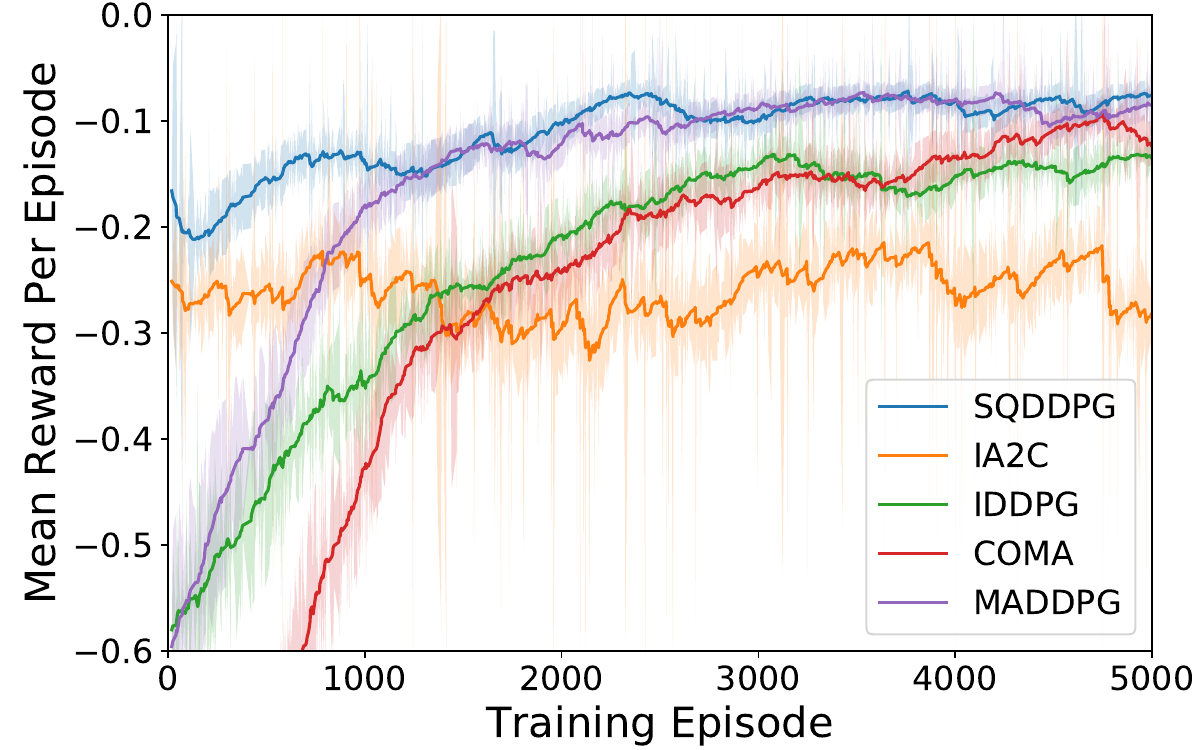}
                \caption{Mean reward per episode during training in Traffic Junction on medium version.}
                \label{fig:traffic_medium_reward}
            \end{subfigure}
            \begin{subfigure}[b]{0.48\textwidth}
                \includegraphics*[width=\textwidth]{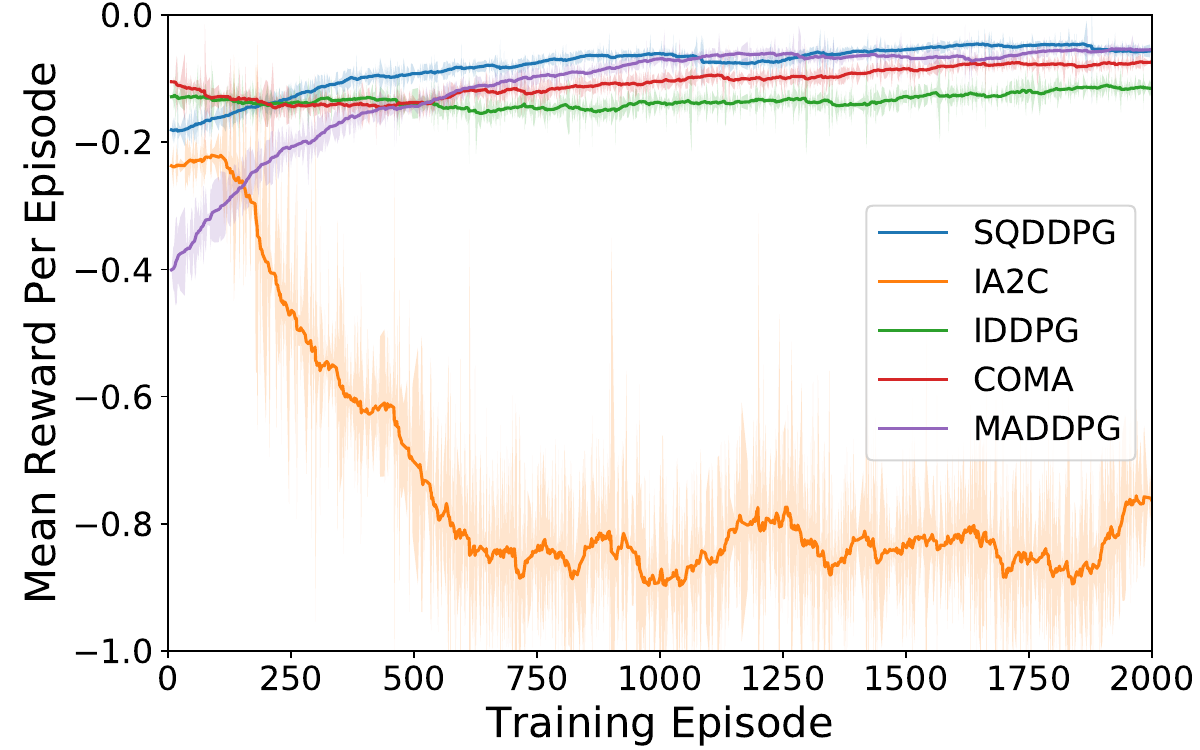}
                \caption{Mean reward per episode during training in Traffic Junction on high version.}
                \label{fig:traffic_hard_reward}
            \end{subfigure}
            \quad
            \begin{subfigure}[b]{0.48\textwidth}
                \includegraphics*[width=\textwidth]{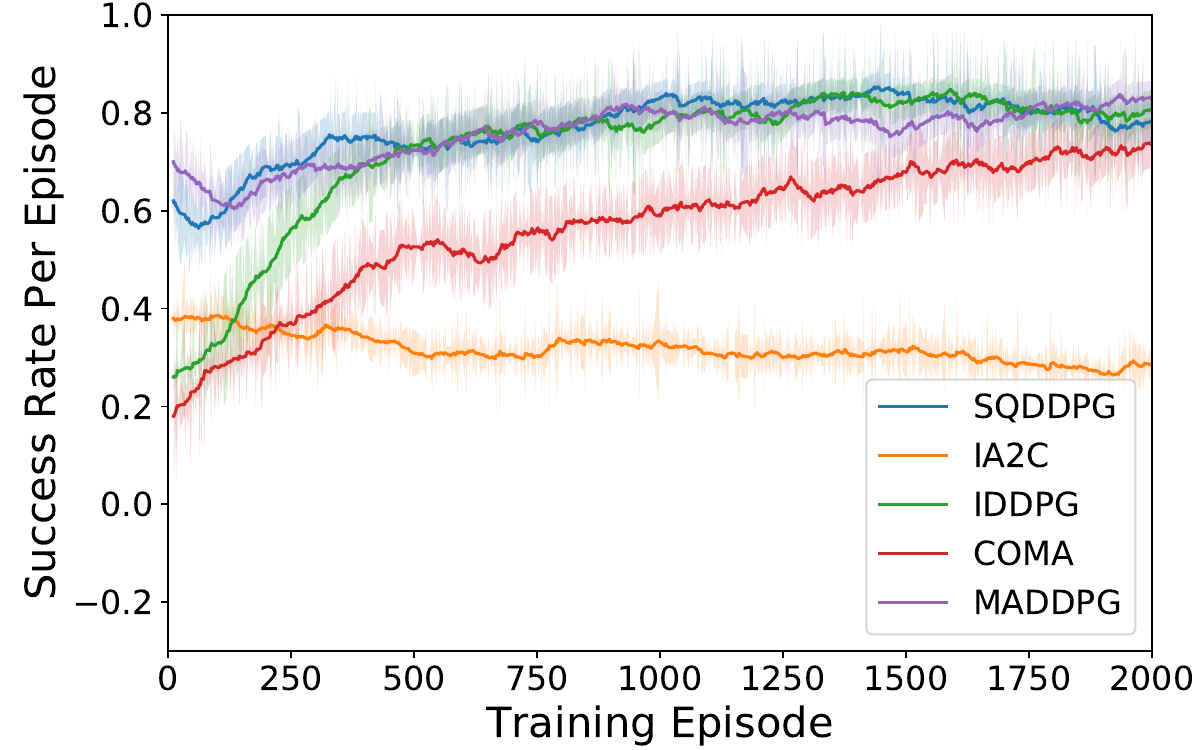}
                \caption{Success rate per episode during training in Traffic Junction on easy version.}
                \label{fig:traffic_easy_success}
            \end{subfigure}
            \begin{subfigure}[b]{0.48\textwidth}
                \includegraphics*[width=\textwidth]{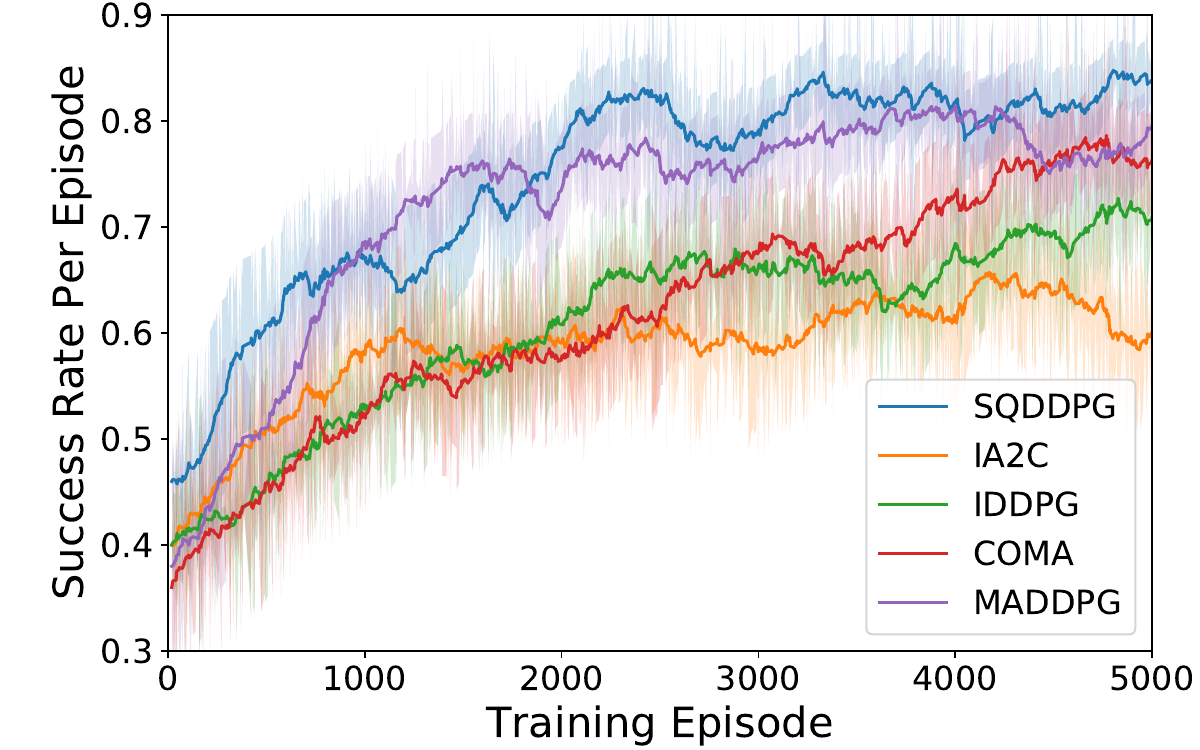}
                \caption{Success rate per episode during training in Traffic Junction on medium version.}
                \label{fig:traffic_medium_success}
            \end{subfigure}
            \begin{subfigure}[b]{0.48\textwidth}
                \includegraphics*[width=\textwidth]{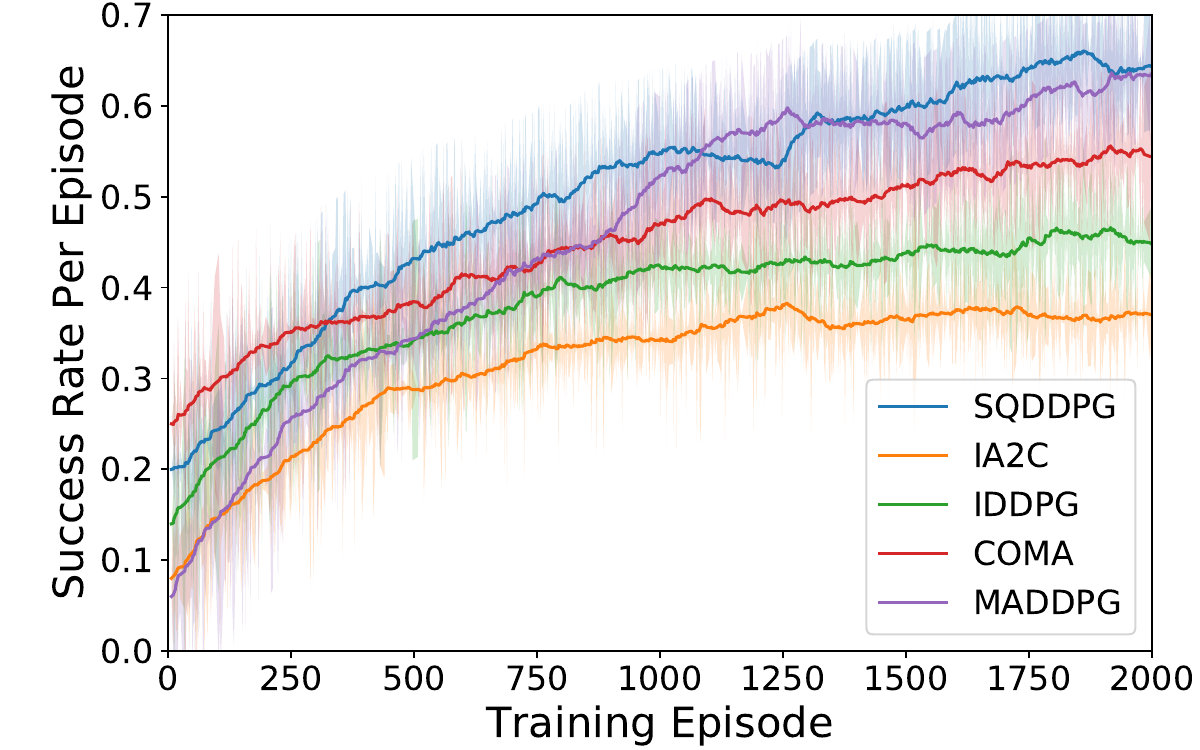}
                \caption{Success rate per episode during training in Traffic Junction on hard version.}
                \label{fig:traffic_hard_success}
            \end{subfigure}
            \caption{Mean reward and success rate per episode during training in the Traffic Junction environment on all difficulty levels.}
            \label{fig:traffic_junction_curve}
        \end{figure*}
        
        \begin{table*}[ht!]
        \caption{Setting of Traffic Junction for different difficulty levels. \textbf{$p_{\text{arrive}}$} means the probability to add an available car into the environment. \textbf{$N_{\max}$} means the existing number of the cars. \textbf{Entry-Points \#} means the number of possible entry points for each car. \textbf{Routes \#} means the number of possible routes starting from every entry point.}
        \centering
        \scalebox{1.0}{
            \begin{tabular}{cccccccc}
                \toprule
                \textbf{Difficulty}  & \textbf{$p_{\text{arrive}}$} & \textbf{$N_{\max}$} & \textbf{Entry-Points \#} & \textbf{Routes \#} &  \textbf{Two-way} &\textbf{Junctions \#} & \textbf{Dimension} \\ 
                \midrule
                Easy  & 0.3  &  5  & 2  & 1 & F & 1 & 7x7 \\
                Medium & 0.2 &  10  & 4  & 3 & T & 1 & 14x14 \\
                Hard & 0.05  &  20  & 8  & 7 & T & 4 & 18x18 \\
                \bottomrule
            \end{tabular}}
        \label{tab:tj_setting}
        \end{table*}
    
        \begin{table*}[ht!]
        \caption{Hyperparameters for Traffic Junction.}
        \centering
        \scalebox{0.73}{
            \begin{tabular}{lcccl}
            \toprule
            \textbf{Hyperparameters}               & \textbf{Easy} & \textbf{Meidum} & \textbf{Hard} & \textbf{Description}                                                \\ 
            \midrule
            hidden units                           & 128           & 128             & 128           & The number of hidden units for both policy and critic network \\
            training episodes                      & 2000          & 5000            & 2000          & The number of training episodes                               \\
            episode length                         & 50            & 50              & 100           & Maximum time steps per episode                                \\
            discount factor                        & 0.99          & 0.99            & 0.99          & The importance of future rewards                              \\
            update frequency for behaviour network & 25            & 25              & 25            & Behaviour network updates every \# steps                      \\
            learning rate for policy network       & 1e-4          & 1e-4            & 1e-4          & Policy network learning rate                                  \\
            learning rate for critic network       & 1e-3          & 1e-3            & 1e-3          & Critic network learning rate                                  \\
            update frequency for target network    & 50            & 50              & 50            & Target network updates every \# steps                         \\
            target update rate                     & 0.1           & 0.1             & 0.1           & Target network update rate                                    \\
            entropy regularization coefficient     & 1e-4          & 1e-4            & 1e-4          & Weight or regularization for exploration                      \\
            batch size                             & 64            & 32              & 32            & The number of  transitions for each update                    \\ 
            \bottomrule
            \end{tabular}}
            \setlength{\abovecaptionskip}{10pt}
        \label{tab:hyperparameters_tj}
        \end{table*}

\section{Experimental Details of Benchmarks for SHAQ}
\label{sec:experimental_settings_of_benchmark_for_shaq}
    \subsection{Implementation Details of Shapley Q-learning}
    \label{subsec:implementation_details_shapley_q_learning}
        We now provide the additional implementation details that are omitted from the main part of this thesis.\footnote{The open-source code of the implementation of SHAQ is released on \url{https://github.com/hsvgbkhgbv/shapley-q-learning}.} First, $\mathit{F}_{s}(\cdot, \cdot)$ is a 3-layer network (consecutively with two affine transformation and an activation of absolute), where the hidden-layer dimension is 32. The parameters of each affine transformation are generated by hyper-networks \cite{ha2017hyper} with the global state as the input, whose details are shown in Table \ref{tab:SHAQ_hypernet}. The architecture of each agent's Q-value is a RNN with GRUs cell \cite{chung2014empirical}, whose hidden-layer dimension is 64. The input dimension is state dimension and the output dimension is action dimension.
        \begin{table}[ht!]
        	\caption{The specifications for $\mathit{F}_{s}(\cdot, \cdot)$.}
        	\begin{center} 
        		\begin{small}
        			\begin{sc}
        			  \scalebox{0.95}{
        				\begin{tabular}{ll}
        					\toprule
        					\textbf{Network} & \textbf{Structure} \\
        					\midrule
        					1st weight matrix & [ linear(state\_dim, 64), ReLU, linear(64, 32*2), absolute ]\\
        					1st bias & [ linear(state\_dim, 64) ]\\
        				    2nd weight matrix & [ linear(state\_dim, 64), ReLU, linear(64, 32), absolute ]\\
        				    2nd bias & [ linear(state\_dim, 32), ReLU, linear(32, 1) ] \\
        					\bottomrule 
        				\end{tabular}
        				}
        			\end{sc}
        		\end{small}
        	\end{center}
        \label{tab:SHAQ_hypernet}
        \end{table}
        
        Taking the lesson of training two coupling modules from GANs \cite{goodfellow2014generative}, we provide two separate learning rates for $\hat{\alpha}_{i}(\mathbf{s}, a_{i})$ and $\hat{Q}_{i}(\mathbf{s}, a_{i})$. The learning rate for $\hat{Q}_{i}(\mathbf{s}, a_{i})$ is fixed at 0.0005 for all tasks. Nevertheless, the learning rate for $\hat{\alpha}_{i}(\mathbf{s}, a_{i})$ is dependent on the number of controllable agents. We use the RMSProp optimizer \cite{tieleman2012lecture} for training in all tasks. All models are implemented in PyTorch 1.4.0 and each experiment is run on Nvidia GeForce RTX 2080Ti for 4 to 26 hours with a single process of environment.
    
    \subsection{Hyperparameters of Baselines}
    \label{subsec:hyperparameters_of_baselines}
        The hyperparameters of all baselines except for SQDDPG \cite{Wang_2020} are consistent with \cite{rashid2020weighted} and \cite{wang2020qplex}. The hyperparamers of SQDDPG are shown as follows: (1) The policy network is consistent with the other baselines, while the critic network is with 3 hidden layers and each layer whose dimension is 64. (2) The policy network is updated every 2 timesteps, while the critic network is updated each timestep. (3) The multiplier of the entropy of policy is 0.005. The rest of settings are identical with other baselines.
        
    \subsection{Predator-Prey}
    \label{subsec:predator_prey}
        We give the experimental setups of Predator-Prey \cite{bohmer2020deep} in Table \ref{tab:prey_and_predator_hyperparameters}.
        \begin{table}[ht!]
        	\caption{The experimental setups of Predator-Prey.}
        	\begin{center}
        		\begin{small}
        			\begin{sc}
        			  \scalebox{0.85}{
        				\begin{tabular}{lcll}
        					\toprule
        					\textbf{Hyperparameters} & \textbf{Value} & \textbf{Description}  \\
        					\midrule
        					batch size & 32 & The number of episodes for each update\\
        					discount factor $\gamma$ & 0.99 & The importance of future rewards  \\
        					replay buffer size & 5,000 & The maximum number of episodes to store in memory\\
        					episode length & 200 & Maximum time steps per episode \\
        					test episode & 16 & The number of episodes for evaluating the performance   \\
        					test interval  & 10,000 & The time step frequency for evaluating the performance \\
        					epsilon start & 1.0 & The start epsilon $\epsilon$ value for exploration \\
        					epsilon finish & 0.05 & The final epsilon $\epsilon$ value for exploration \\
        					exploration step & 1,000,000 & The number of steps for linearly annealing $\epsilon$  \\
        					max training step & 1,000,000 & The number of training steps \\
        				    target update interval & 200  &  The update frequency for target network \\
        					learning rate  & 0.0001 & The learning rate for $\delta_{i}(\mathbf{s}, a_{i})$  \\
        					$\alpha$ for W-QMIX variants & 0.1 & The weight for CW-QMIX and OW-QMIX  \\
        					sample size & 10 & The sample size for coalition sampling   \\
        					\bottomrule 
        				\end{tabular}
        				}
        			\end{sc}
        		\end{small}
        	\end{center}
        	\label{tab:prey_and_predator_hyperparameters}
        \end{table}
        
    \subsection{StarCraft Multi-Agent Challenge}
    \label{subsec:starcraft_multiagent_benchmark_tasks}
        In this thesis, we evaluate SHAQ on 11 typical combat scenarios in SMAC that can be classified into three categories: easy (8m, 3s5z, 1c3s5z and 10m\_vs\_11m), hard (5m\_vs\_6m, 3s\_vs\_5z and 2c\_vs\_64zg), and super-hard (3s5z\_vs\_3s6z, Corridor, MMM2 and 6h\_vs\_8z). More details of these tasks are provided in Table \ref{tab:smac_benchmarks}. The specific experimental setups for SMAC are shown in Table \ref{tab:smac_hyperparameters} and \ref{tab:relation_learning_rate_and_agents}.
        \begin{table}[t]
        	\caption{Introduction of maps and characters in SMAC.}
        	\begin{center}
        		\begin{small}
        			\begin{sc}
        			 \scalebox{0.8}{
        				\begin{tabular}{cllc}
        					\toprule
        					\textbf{Map Name} & \textbf{Ally Units} & \textbf{Enemy Units} & \textbf{Categories}  \\
        					\midrule
        					3s5z    &   3 Stalkers $\&$ 5 Zealots  &  3 Stalkers $\&$ 5 Zealots    &   easy \\
        					1c3s5z  &   1 Colossi $\&$ 3 Stalkers $\&$ 5 Zealots &    1 Colossi $\&$ 3 Stalkers $\&$ 5 Zealots   &   easy  \\
        					8m               &   8 Marines    & 8 Marines    & easy \\
        					10m\_vs\_11m     &   10 Marines   & 11 Marines   &  easy \\
        					5m\_vs\_6m       &   5 Marines  &  6 Marines   &   hard \\
        					3s\_vs\_5z       &   3 Stalkers   & 5 Zealots    &  hard\\
        					2c\_vs\_64zg     &   2 Colossi   &  64 Zerglings  & hard\\
        					3s5z\_vs\_3s6z   &  3 Stalkers $\&$ 5 Zealots  &  3 Stalkers $\&$ 6 Zealots   & super-hard \\
        					MMM2 & 1 Medivac, 2 Marauders $\&$  7 Marines     & 1 Medivac, 3 Marauders $\&$  8 Marines    &  super-hard  \\
        					6h\_vs\_8z &  6 Hydralisks    &  8 Zerglings   & super-hard  \\
        					Corridor & 6 Zealots & 24 Zerglings & super-hard \\
        					\bottomrule
        				\end{tabular}
        				}
        			\end{sc}
        		\end{small}
        	\end{center}
        	\label{tab:smac_benchmarks}
        \end{table}
    
        \begin{table}[ht!]
        	\caption{The experimental setups for SMAC.}
        	\begin{center}
        		\begin{small}
        			\begin{sc}
        			    \scalebox{0.7}{
        				\begin{tabular}{lcccl}
        					\toprule
        					\textbf{Hyperparameters} & \textbf{Easy} &\textbf{Hard} &\textbf{Super Hard} & \textbf{Description}  \\
        					\midrule
        					    batch size &    32   &   32   &   32   &   The number of episodes for each update \\
        						discount factor $\gamma$ &  0.99    &    0.99    &  0.99    &   The importance of future rewards  \\
        						replay buffer size & 5,000 & 5,000 & 5,000 & The maximum number of episodes to store in memory\\
        					    max training step  & 2,000,000    & 2,000,000  &  5,000,000    & The number of training steps\\
        					    test episode &   32   &   32   &   32  &  The number of episodes for evaluation  \\
        					    test interval  & 10,000 & 10,000 & 10,000  & The time step frequency for evaluating the performance \\
        					    epsilon start & 1.0  &  1.0  &  1.0  &  The start epsilon $\epsilon$ value for exploration \\
        					    epsilon finish  &  0.05  &   0.05   &    0.05   &   The final epsilon $\epsilon$  value for exploration \\
        					    exploration step &  50,000   &  50,000  &   1,000,000   & The number of steps for linearly annealing $\epsilon$ \\
        					    target update interval &    200   &   200   &   200   &  The update frequency for target network \\ 
        					  	$\alpha$ for OW-QMIX  &    0.5   &   0.5    &   0.5   &   The weight for OW-QMIX\\
        					  	$\alpha$ for CW-QMIX  &    0.75  &   0.75   &   0.75  &   The weight for CW-QMIX\\
        					  	sample size           &    10    &    10    &   10    &   The sample size for coalition sampling   \\
        					\bottomrule
        				\end{tabular}
        				}
        			\end{sc}
        		\end{small}
        	\end{center}
        	\label{tab:smac_hyperparameters}
        \end{table}
        
        \begin{table}[ht]
        	\caption{The learning rate for training $\hat{\alpha}_{i}(\mathbf{s}, a_{i})$ of SHAQ for various maps in SMAC.}
        	\begin{center}
        		\begin{small}
        			\begin{sc}
        				\scalebox{0.95}{
        					\begin{tabular}{ccc}
        						\toprule
        						\textbf{Map Name} & \textbf{Number of Agents} & \textbf{Learning Rate for $\hat{\alpha}_{i}(\mathbf{s}, a_{i})$} \\
        						\midrule
        						    2c\_vs\_64zg	&  2   & 0.002  \\
        						    3s\_vs\_5z 	    &  3   & 0.001  \\
        							5m\_vs\_6m		&  5   & 0.0005 \\
        							6h\_vs\_8z		&  6   & 0.0005 \\
        							Corridor        &  6   & 0.0005 \\
        							8m              &  8   & 0.0003 \\
        							3s5z			&  8   & 0.0003 \\
        							3s5z\_vs\_3s6z	&  8   & 0.0003 \\
        						    1c3s5z			&  9   & 0.0002 \\
        							10m\_vs\_11m    & 10   & 0.0001 \\
        						    MMM2            & 10   & 0.0001 \\
        						\bottomrule
        					\end{tabular}
        				}
        			\end{sc}
        		\end{small}
        	\end{center}
        	\label{tab:relation_learning_rate_and_agents}
        \end{table}
    
    \subsection{Experimental Results on Extra SMAC Maps}
    \label{subsec:experimental_results_on_extra_smac_maps}
        To thoroughly compare the performance of SHAQ with baselines, we also run experiments on 5 extra maps in SMAC as Figure \ref{fig:extra_results_smac} shows. 8m, 3s5z, 1c3s5z and 10m\_vs\_11m are an easy maps and MMM2 is a super-hard map. The strategy of epsilon annealing is consistent with the previous experiments for SMAC. It is obvious that SHAQ also performs generally well on these 5 maps.
        \begin{figure*}[ht!]
            \centering
            \begin{subfigure}[b]{0.48\linewidth}
                \centering
                \includegraphics[width=\textwidth]{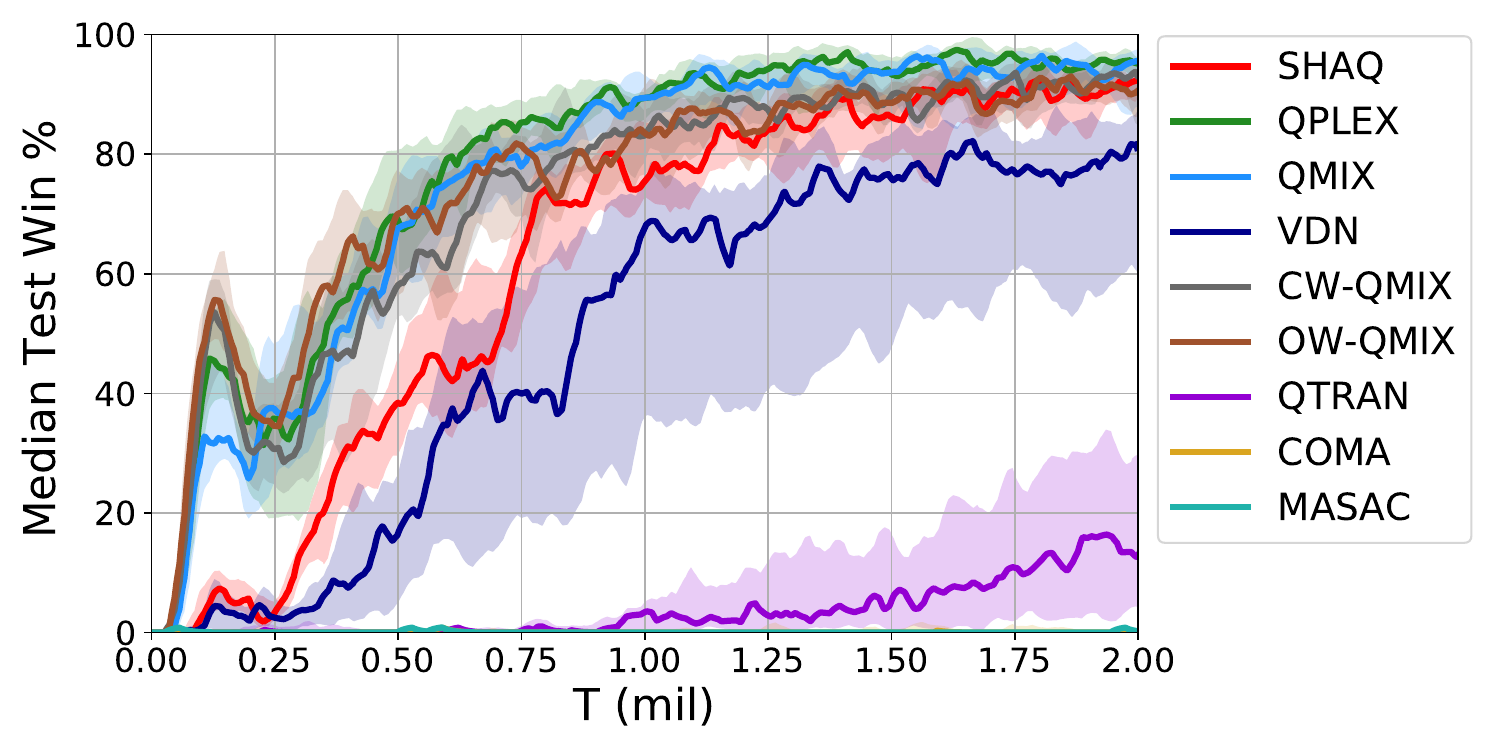}
                \caption{3s5z.}
            \end{subfigure}
            ~
            \begin{subfigure}[b]{0.48\linewidth}
                \centering
                \includegraphics[width=\textwidth]{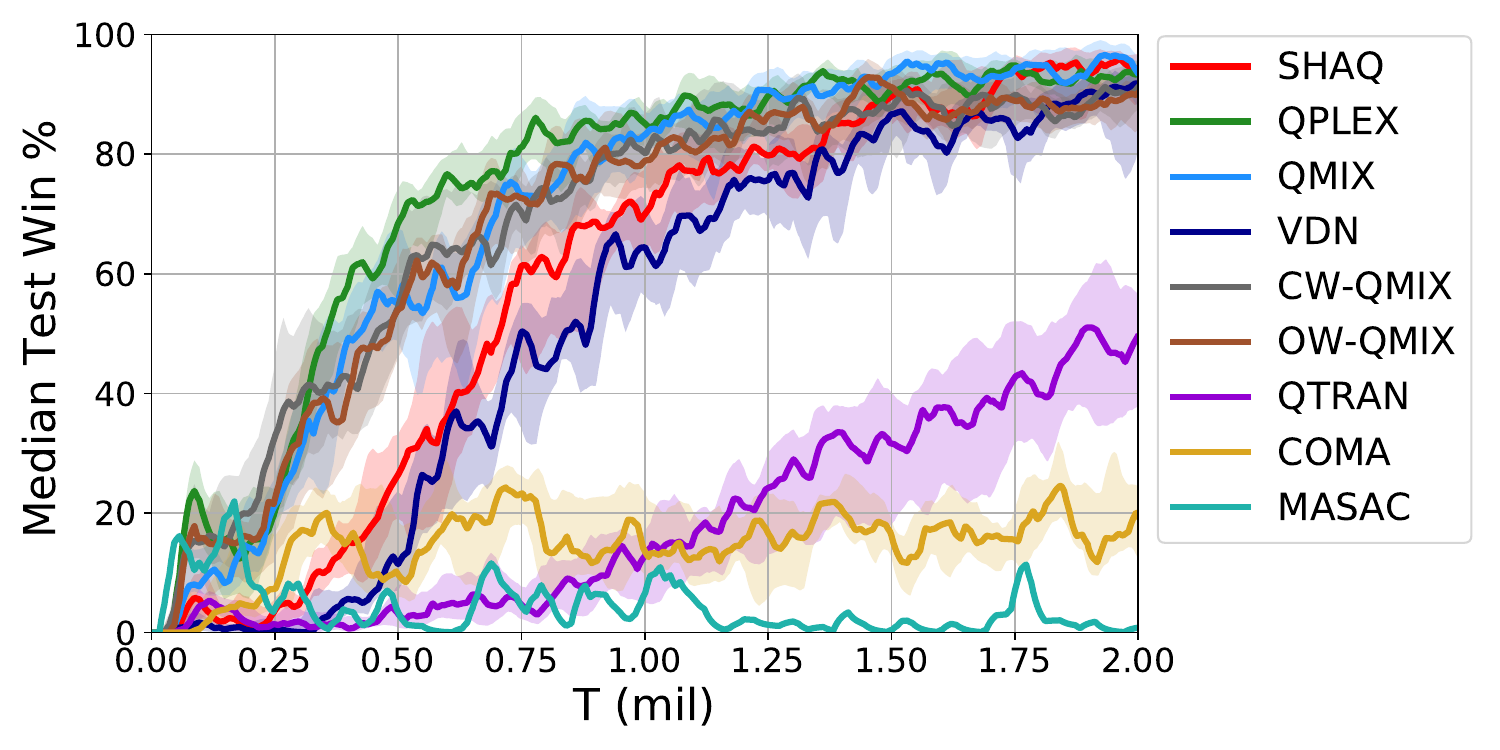}
                \caption{1c3s5z.}
            \end{subfigure}
            ~
            \begin{subfigure}[b]{0.48\linewidth}
                \centering
                \includegraphics[width=\textwidth]{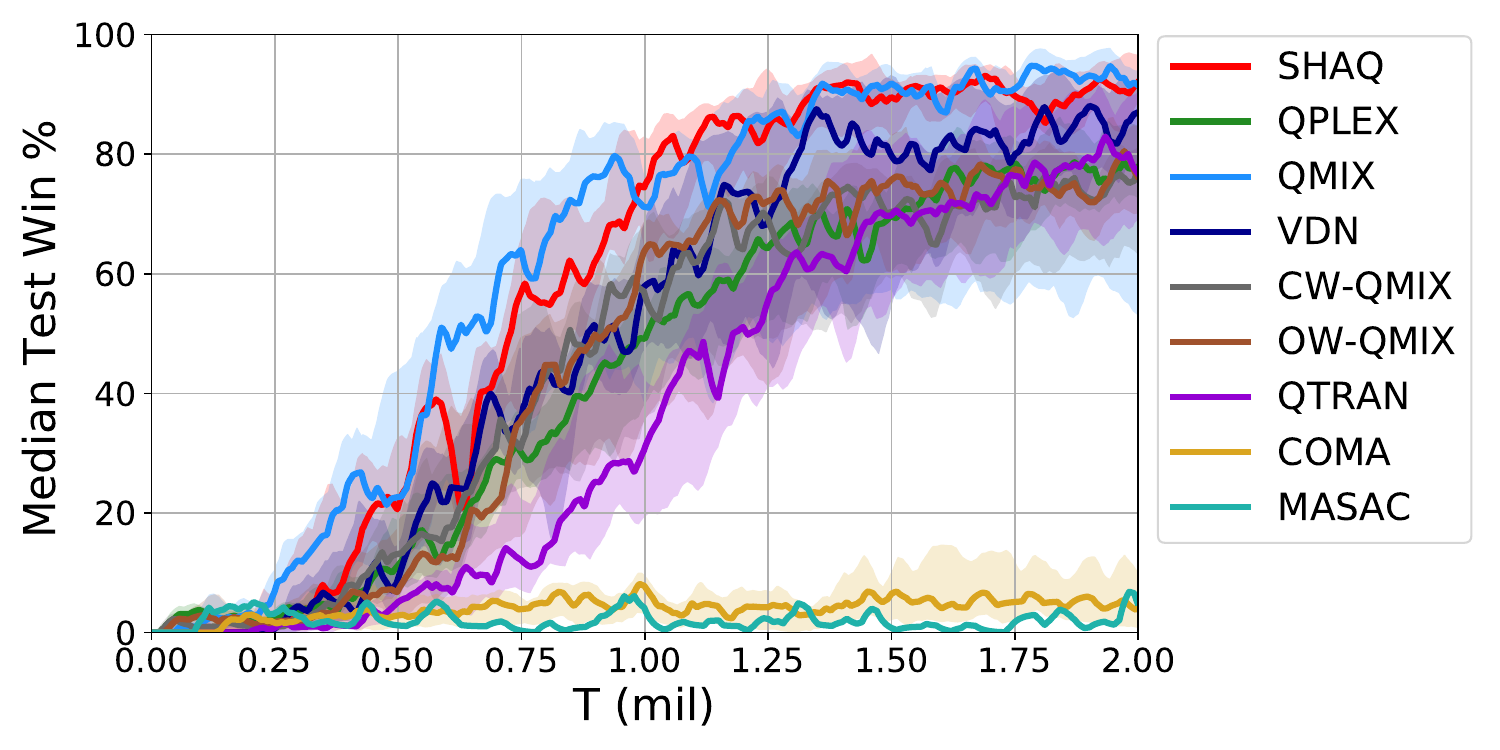}
                \caption{10m\_vs\_11m.}
            \end{subfigure}
            ~
            \begin{subfigure}[b]{0.48\textwidth}
                \centering                \includegraphics[width=\textwidth]{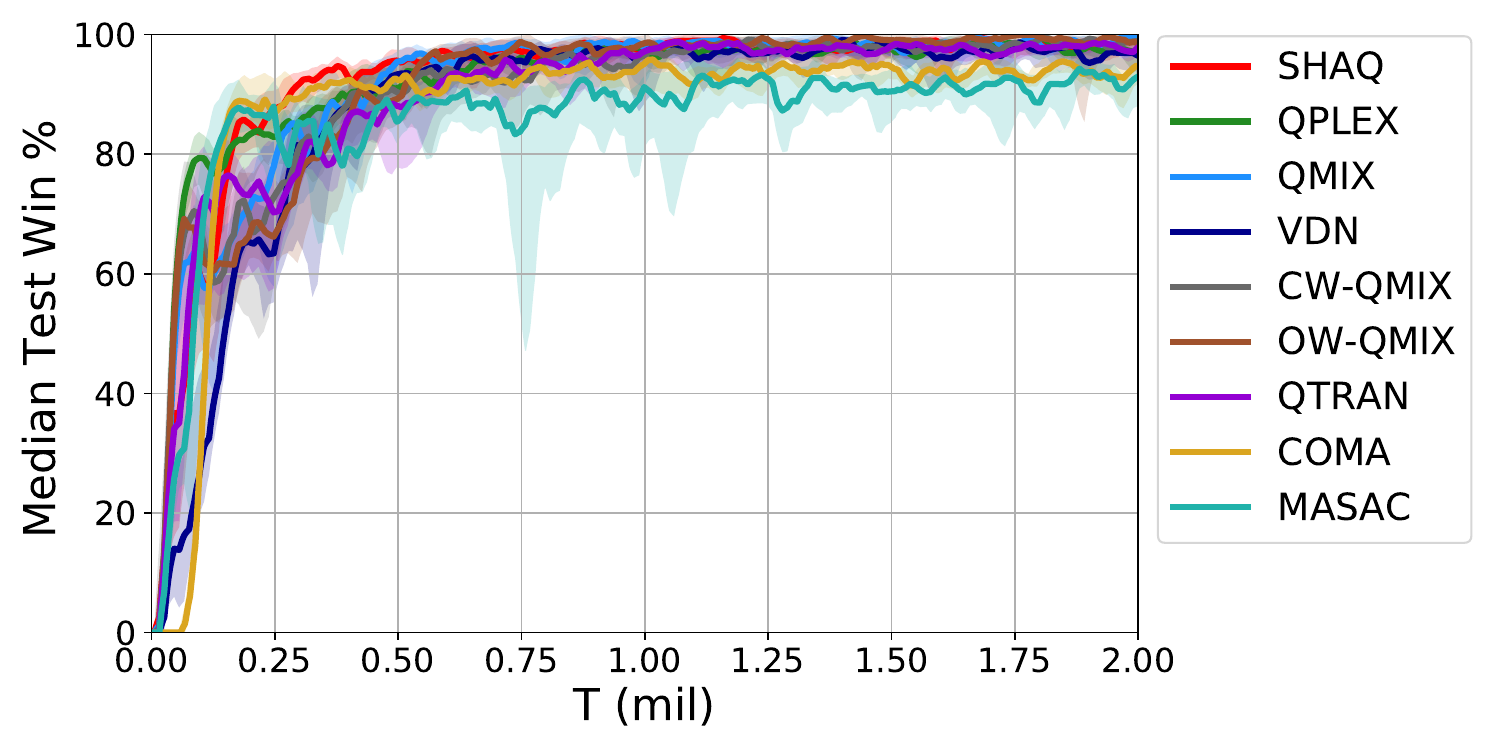}
                \caption{8m.}
            \end{subfigure}
            ~
            \begin{subfigure}[b]{0.48\textwidth}
                \includegraphics[width=\textwidth]{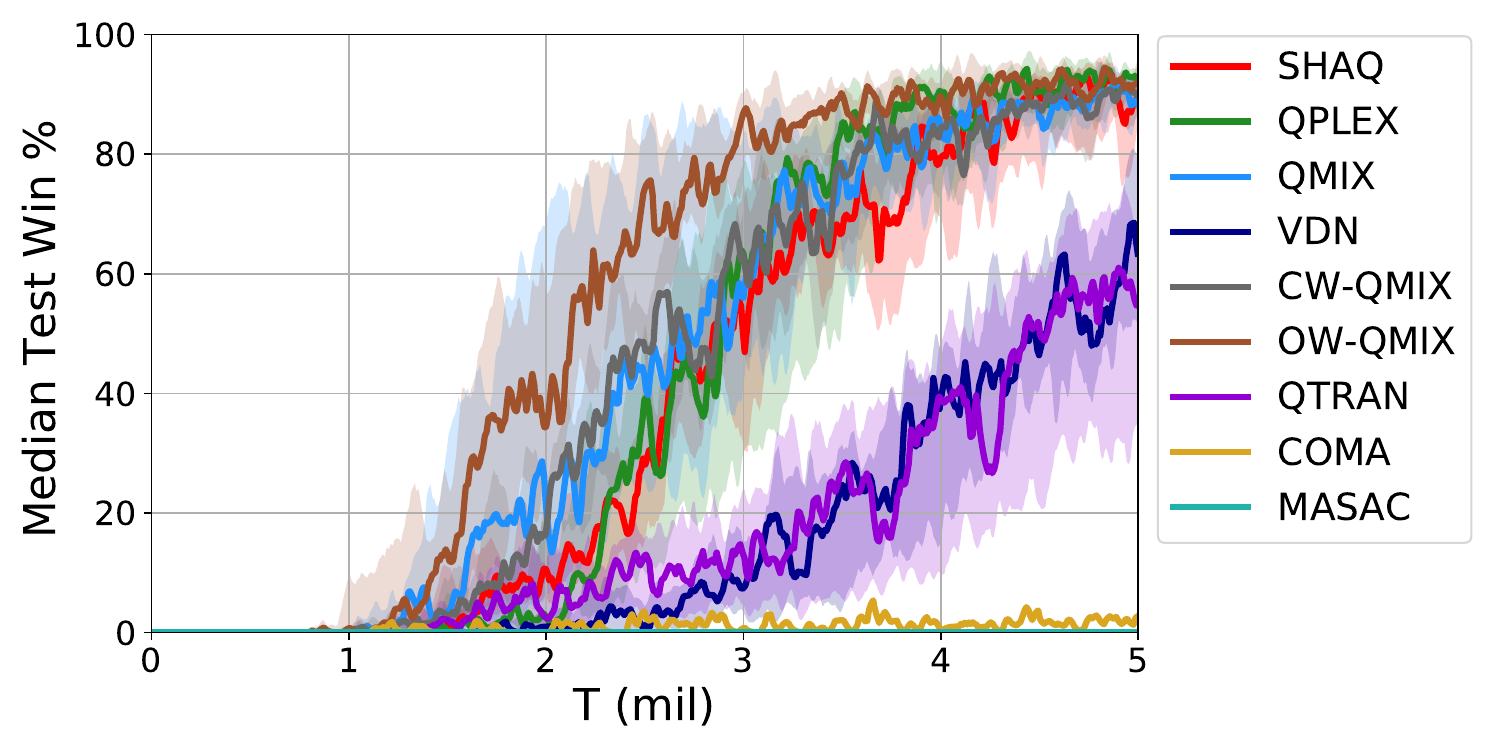}
                \caption{MMM2.}
            \end{subfigure}
            \caption{Median test win \% for 5 extra maps in SMAC.}
        \label{fig:extra_results_smac}
        \end{figure*}
        
    \subsection{Extra Animations for SMAC}
    \label{subsec:extra_animations_for_smac}
        We show the intermediate animations generated from the test of SHAQ on all maps in SMAC in Figure \ref{fig:3s5z_visualization} - \ref{fig:6h_vs_8z_visualization}, so that readers can have an intuitive picture for experiments. In these figures, the controllable agents are in red while the enemies are in blue.
        \begin{figure}[ht!]
    		\centering
    		\includegraphics[width=0.95\textwidth]{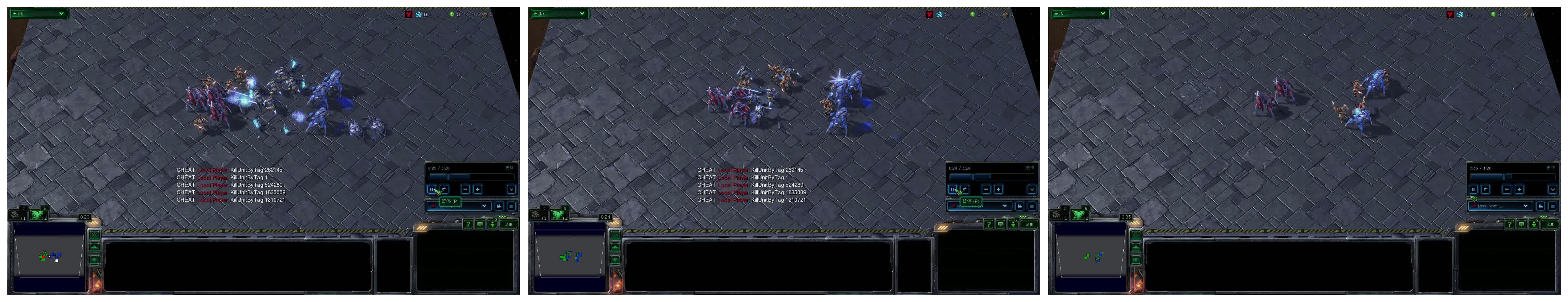}
    		\caption{Intermediate animations for 3s5z.}
    		\label{fig:3s5z_visualization}
    	\end{figure}
    	
    	\begin{figure}[ht!]
    		\centering
    		\includegraphics[width=0.95\textwidth]{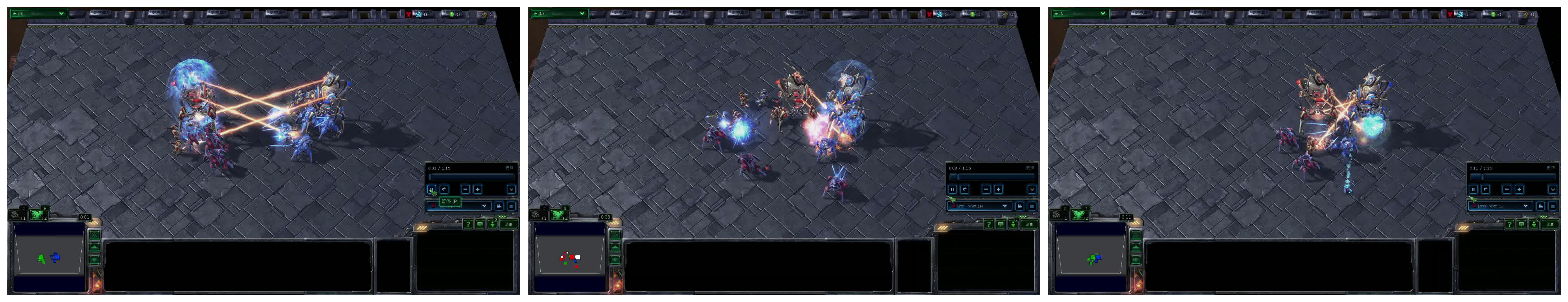}
    		\caption{Intermediate animations for 1c3s5z.}
    		\label{fig:1c3s5z_visualization}
    	\end{figure}
    	
    	\begin{figure}[ht!]
    		\centering
    		\includegraphics[width=0.95\textwidth]{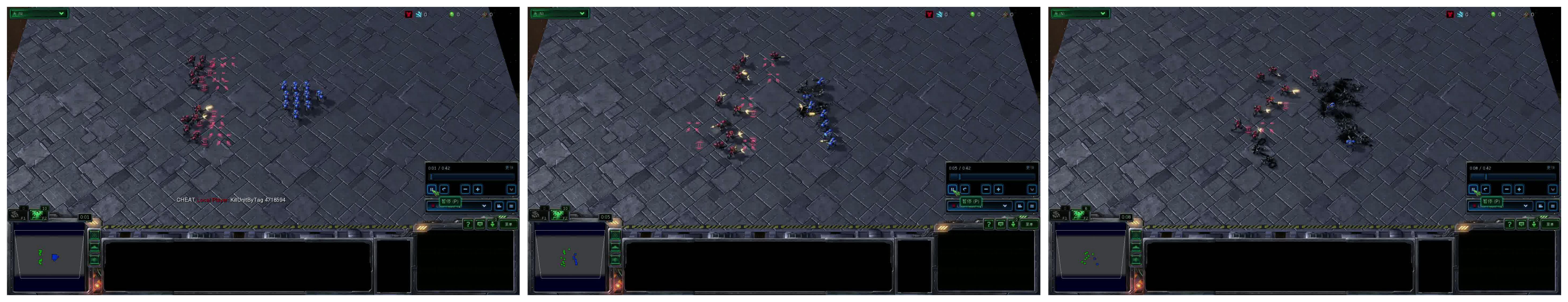}
    		\caption{Intermediate animations for 10m\_vs\_11m.}
    		\label{fig:10m_vs_11m_visualization}
    	\end{figure}
    	
    	\begin{figure}[ht!]
    		\centering
    		\includegraphics[width=0.95\textwidth]{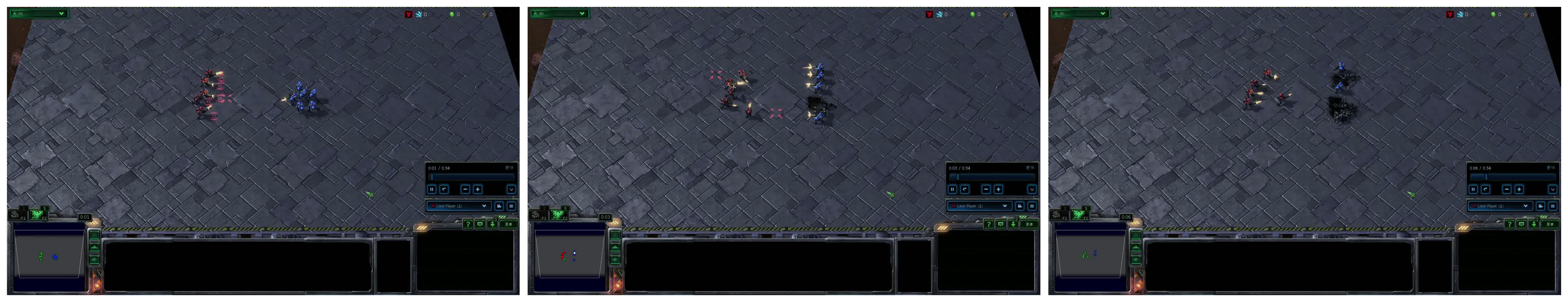}
    		\caption{Intermediate animations for 5m\_vs\_6m.}
    		\label{fig:5m_vs_6m_visualization}
    	\end{figure}
    	
    	\begin{figure}[ht!]
    		\centering
    		\includegraphics[width=0.95\textwidth]{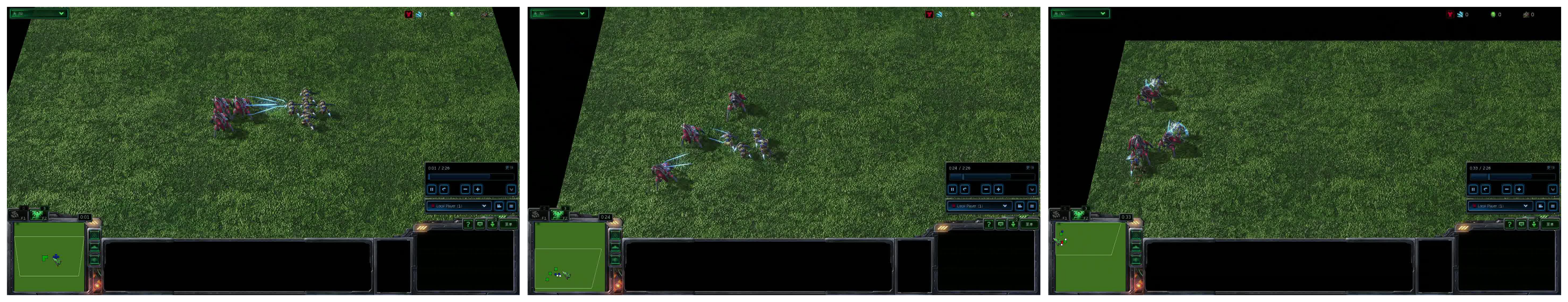}
    		\caption{Intermediate animations for 3s\_vs\_5z.}
    		\label{fig:3s_vs_5z_visualization}
    	\end{figure}
    
    	\begin{figure}[ht!]
    		\centering
    		\includegraphics[width=0.95\textwidth]{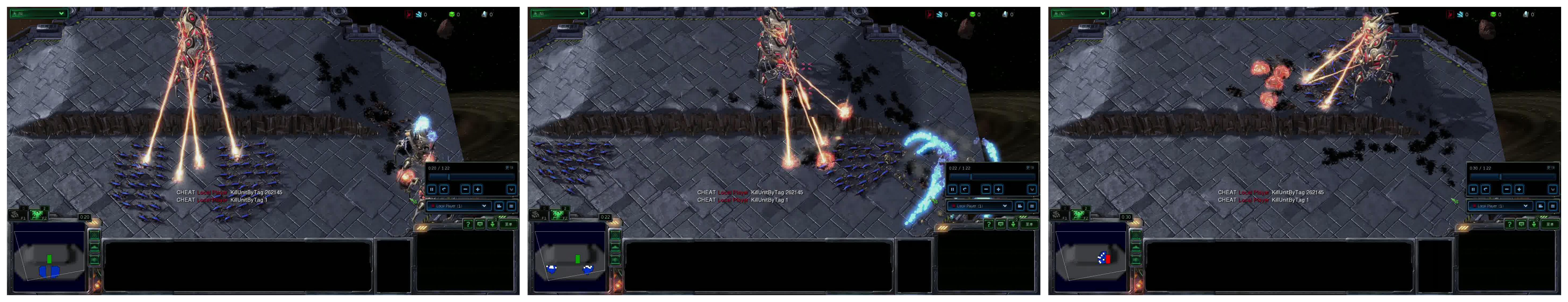}
    		\caption{Intermediate animations for 2c\_vs\_64zg.}
    		\label{fig:2c_vs_64zg_visualization}
    	\end{figure}
    	
    	\begin{figure}[ht!]
    		\centering
    		\includegraphics[width=0.95\textwidth]{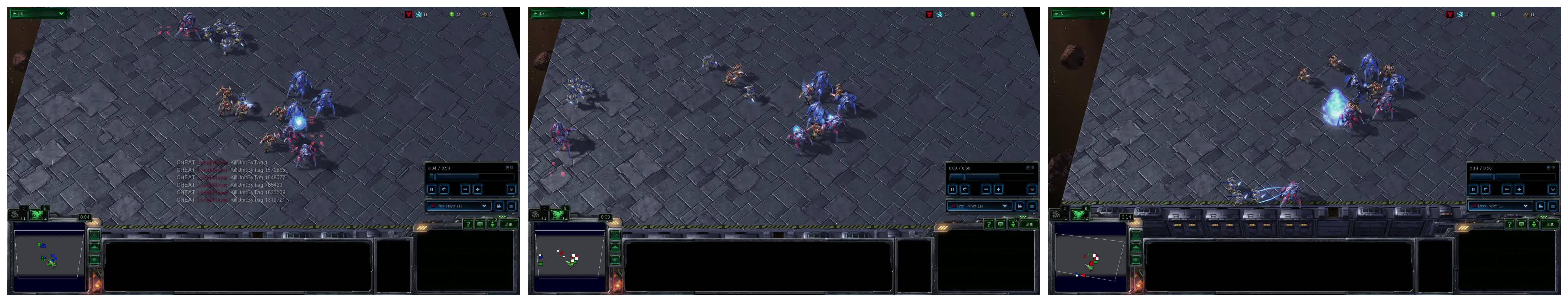}
    		\caption{Intermediate animations for 3s5z\_vs\_3s6z.}
    		\label{fig:3s5z_vs_3s6zg_visualization}
    	\end{figure}
    	
    	\begin{figure}[ht!]
    		\centering
    		\includegraphics[width=0.95\textwidth]{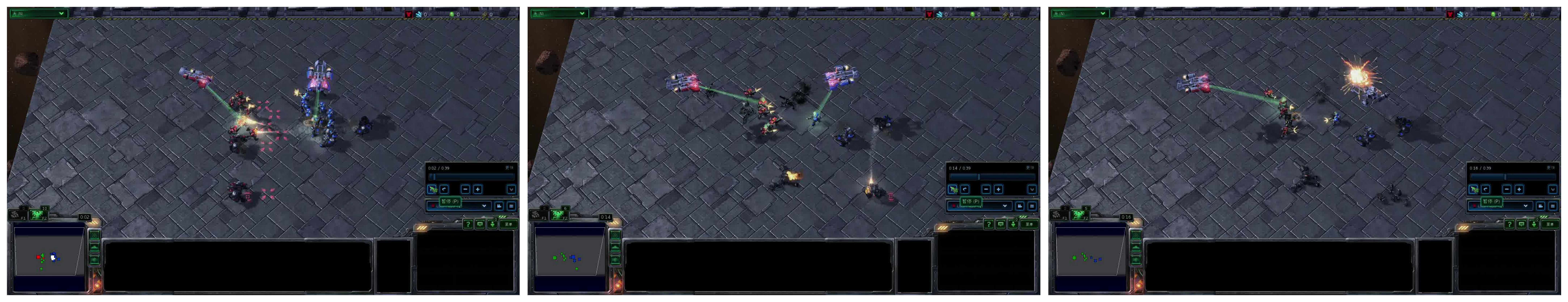}
    		\caption{Intermediate animations for MMM2.}
    		\label{fig:MMM2_visualization}
    	\end{figure}
    	
    	\begin{figure}[ht!]
    		\centering
    		\includegraphics[width=0.95\textwidth]{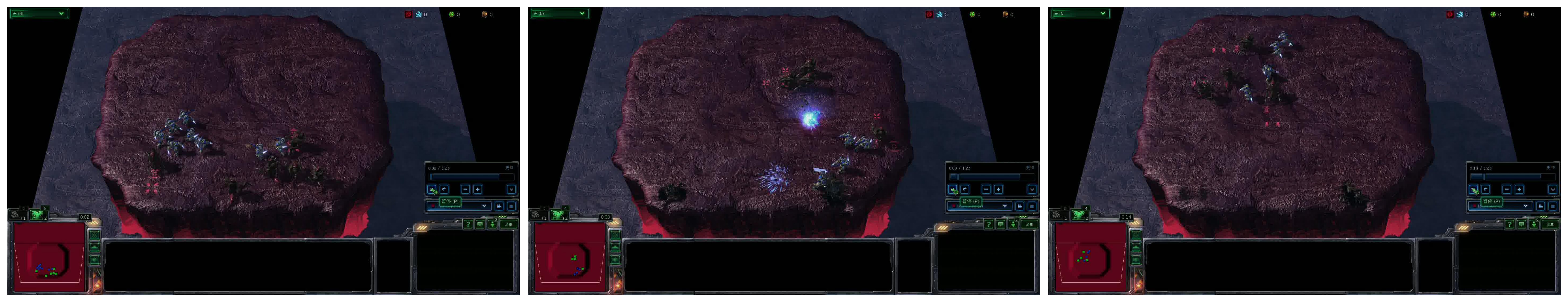}
    		\caption{Intermediate animations for 6h\_vs\_8z.}
    		\label{fig:6h_vs_8z_visualization}
    	\end{figure}
    
    \subsection{Extra Experimental Results on W-QMIX}
    \label{subsec:extra_weighted_qmix_variant_hyperparameters}
        We show the results of W-QMIX in Figure \ref{fig:wqmix_pp} with the annealing steps as 50k to support that the poor performance of W-QMIX on Predator-Prey is due to its poor robustness to the increased explorations.
        \begin{figure*}[ht!]
            \centering
            \begin{subfigure}[b]{0.55\linewidth}
                \centering
                \includegraphics[width=\textwidth]{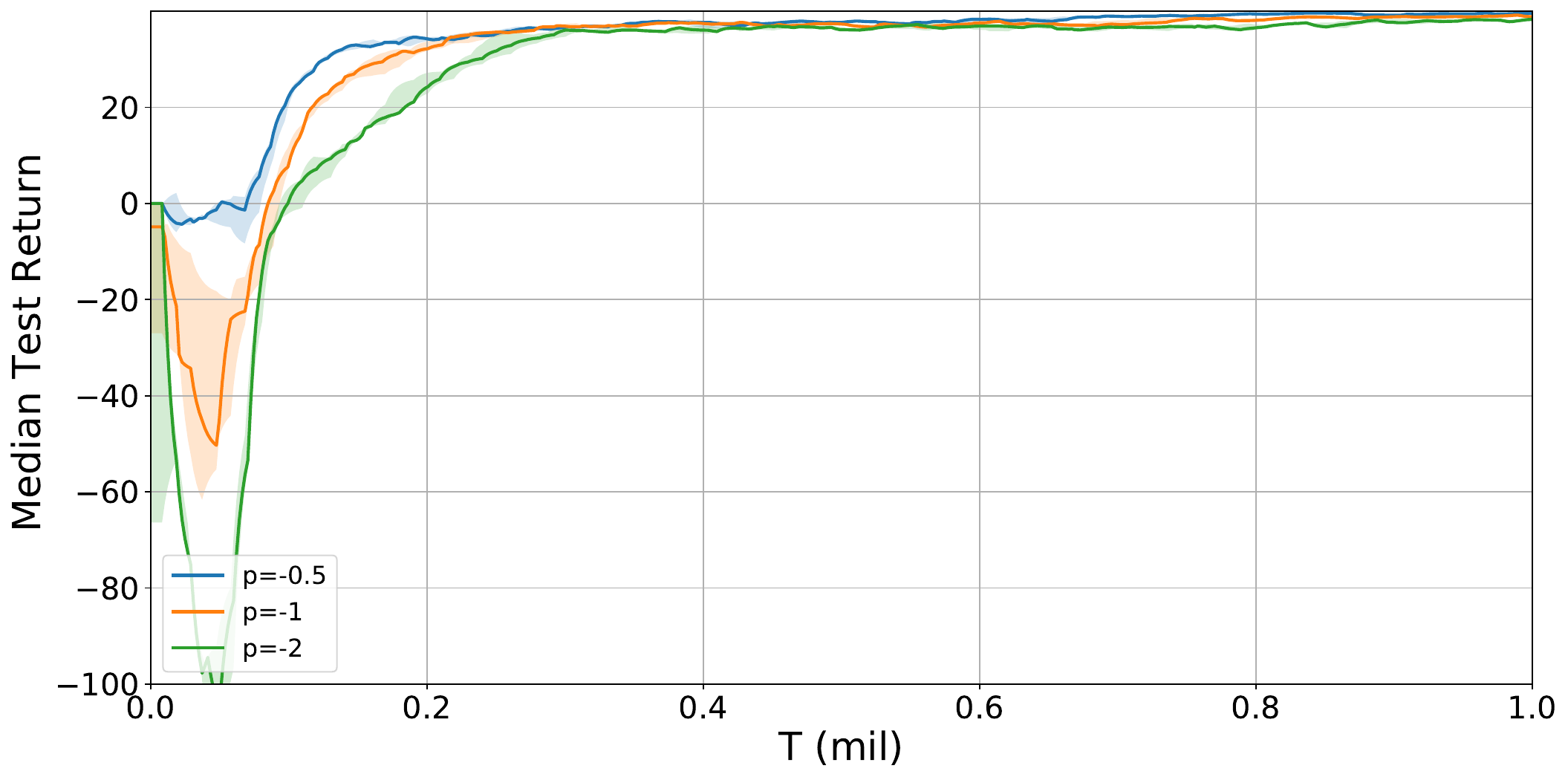}
                \caption{CW-QMIX.}
            \label{fig:wqmix_pp_b}
            \end{subfigure}
            ~
            \begin{subfigure}[b]{0.55\linewidth}
                \centering
                \includegraphics[width=\textwidth]{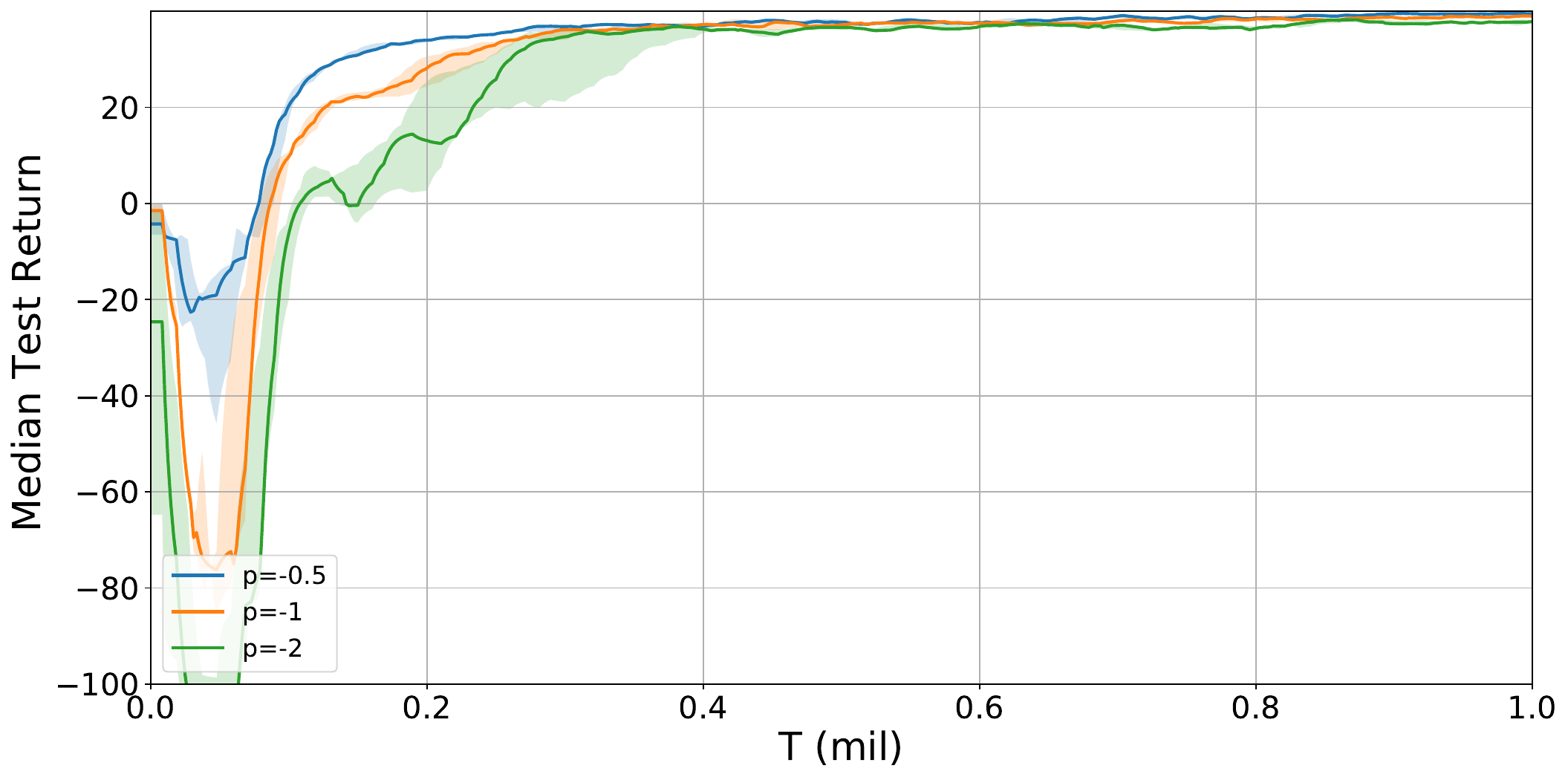}
                \caption{OW-QMIX.}
            \label{fig:wqmix_pp_c}
            \end{subfigure}
            \caption{Median test return for W-QMIX (including OW-QMIX and CW-QMIX) on Predator-Prey.}
        \label{fig:wqmix_pp}
        \end{figure*}
        
        To show the significance of tuning $\mathbf{\alpha}$ for W-QMIX, we also run W-QMIX with $\alpha=0.1$ in addition to the best $\alpha$ reported in \cite{rashid2020weighted}. We can observe from Figure \ref{fig:easy_hard_smac_wqmix} that the performance of W-QMIX is not comparatively identical under each choice of $\alpha$. As a result, W-QMIX suffers from the separate tuning of $\alpha$ for each scenario. Unfortunately, \cite{rashid2020weighted} did not provide an empirical law for selecting $\alpha$, while SHAQ enjoys an empirical law to select $\hat{\alpha}_{i}(\mathbf{s}, a_{i})$ as Figure \ref{fig:manual_approximate_alpha} shows.
        \begin{figure*}[ht!]
            \centering
            \begin{subfigure}[b]{0.48\textwidth}
                \centering
                \includegraphics[width=\textwidth]{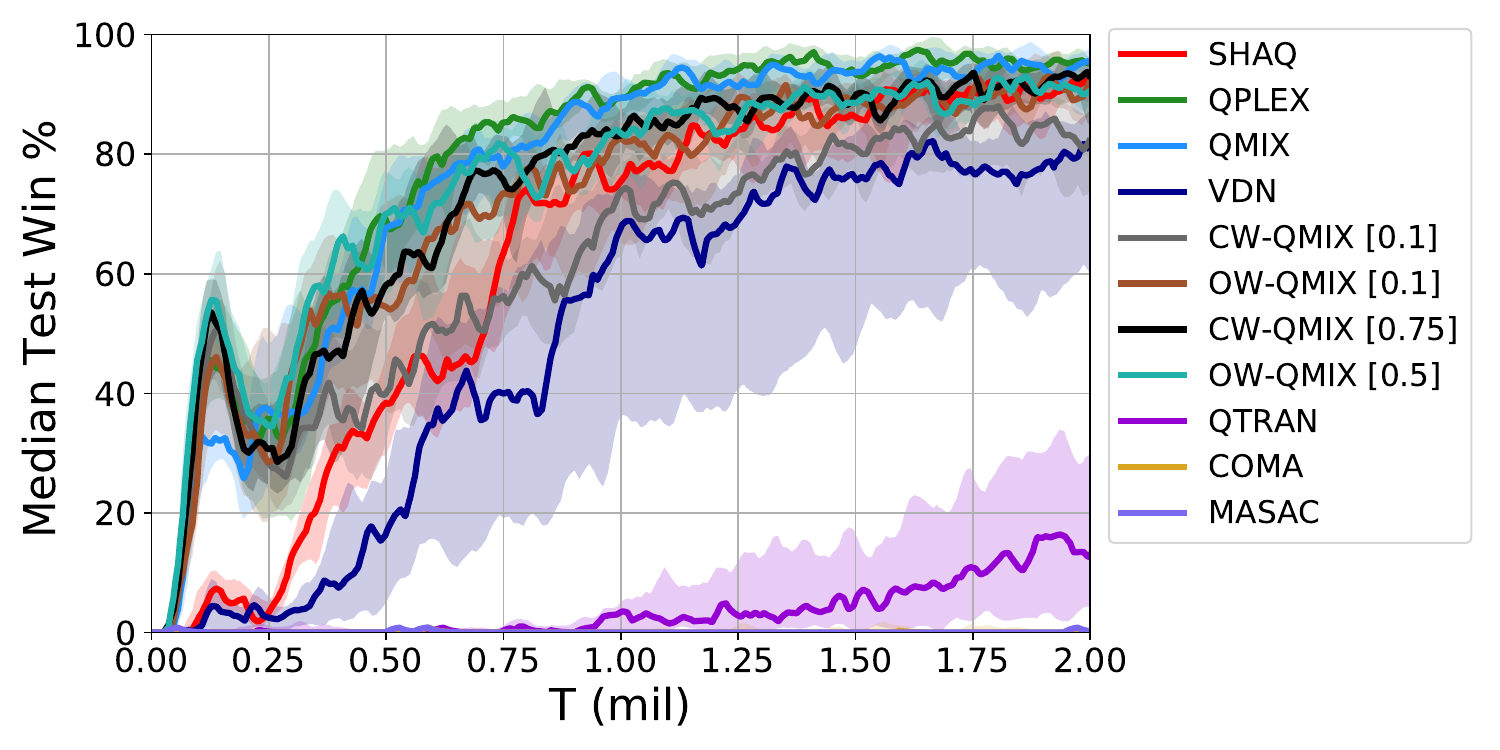}
                \caption{3s5z.}
            \end{subfigure}
            ~
            \begin{subfigure}[b]{0.48\textwidth}
                \centering
                \includegraphics[width=\textwidth]{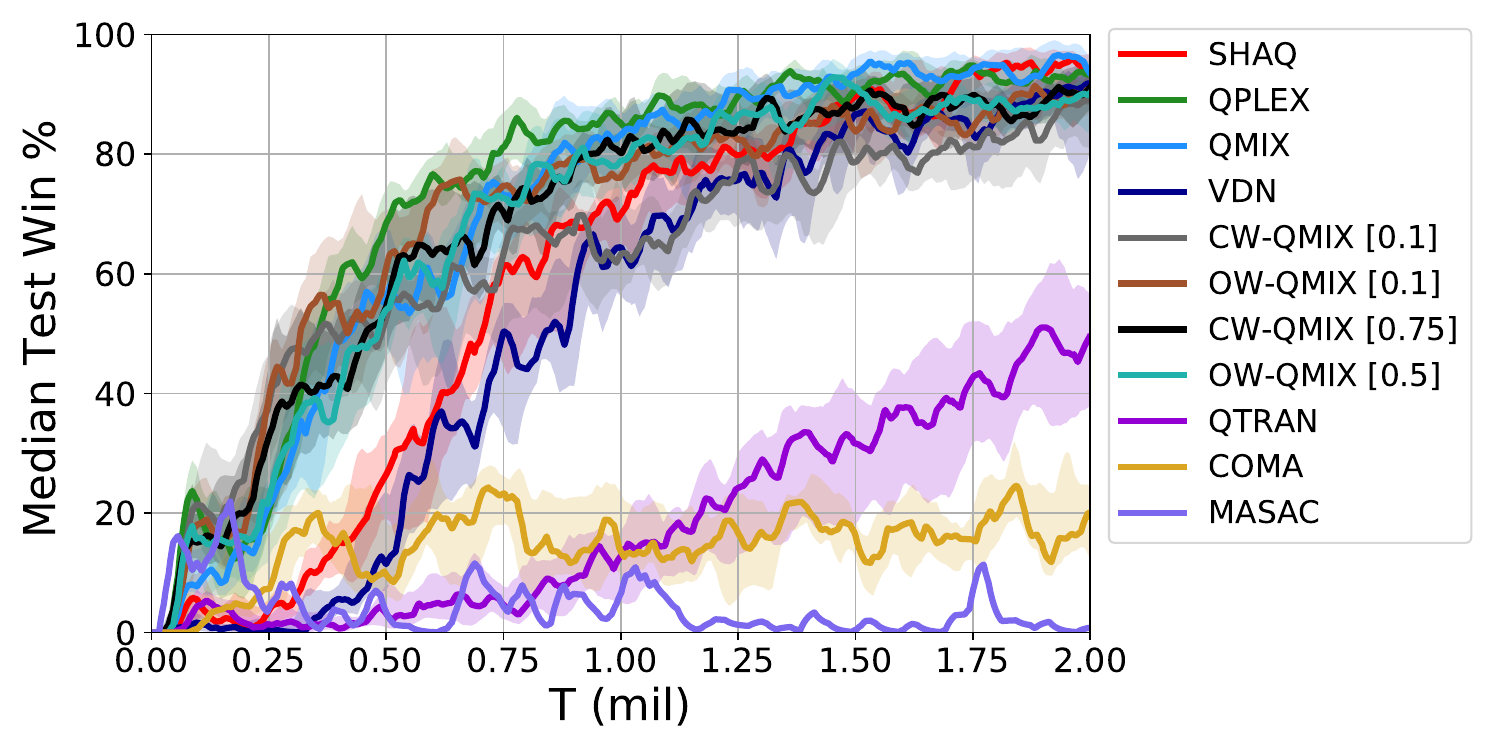}
                \caption{1c3s5z.}
            \end{subfigure}
            ~
            \begin{subfigure}[b]{0.48\textwidth}
                \centering
                \includegraphics[width=\textwidth]{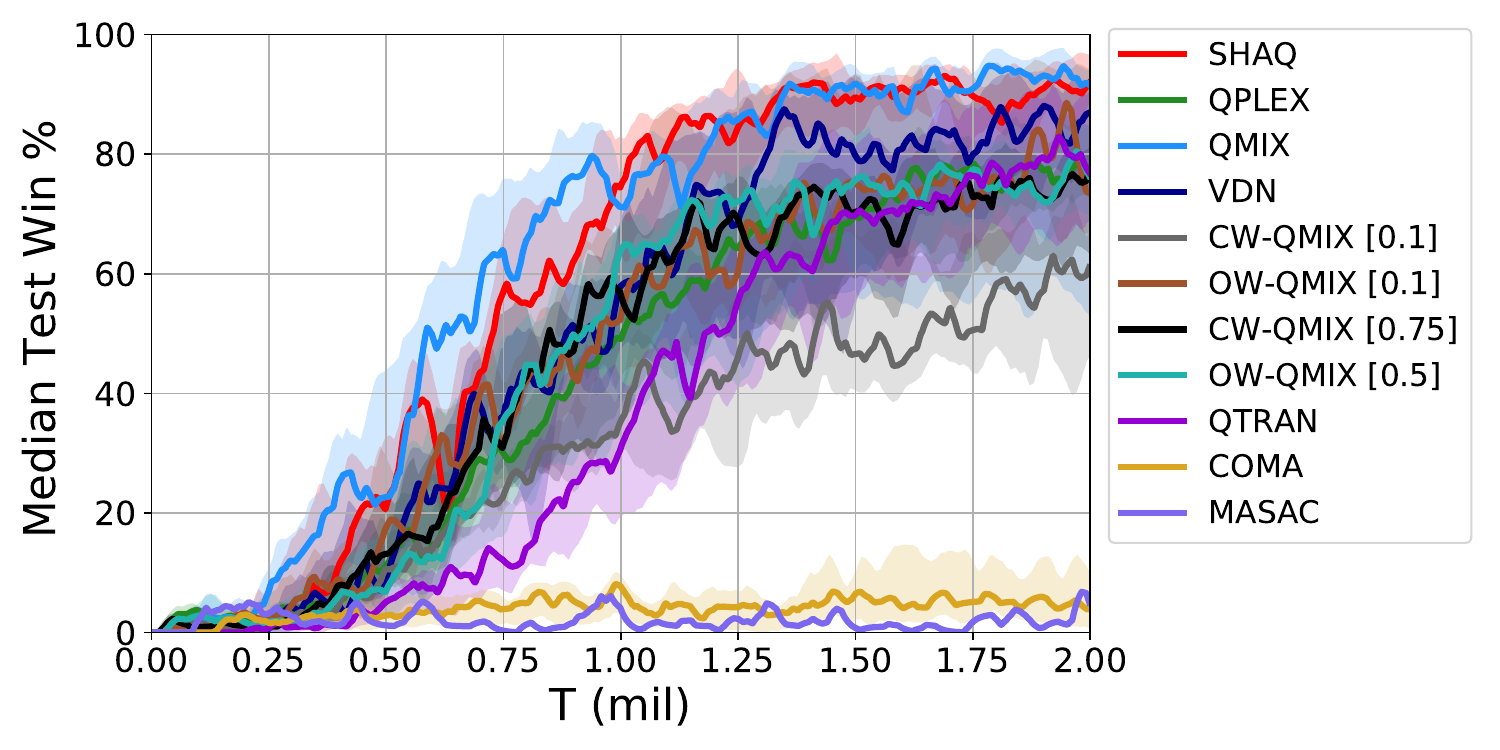}
                \caption{10m\_vs\_11m.}
            \end{subfigure}
            ~
            \begin{subfigure}[b]{0.48\textwidth}
                \centering
                \includegraphics[width=\textwidth]{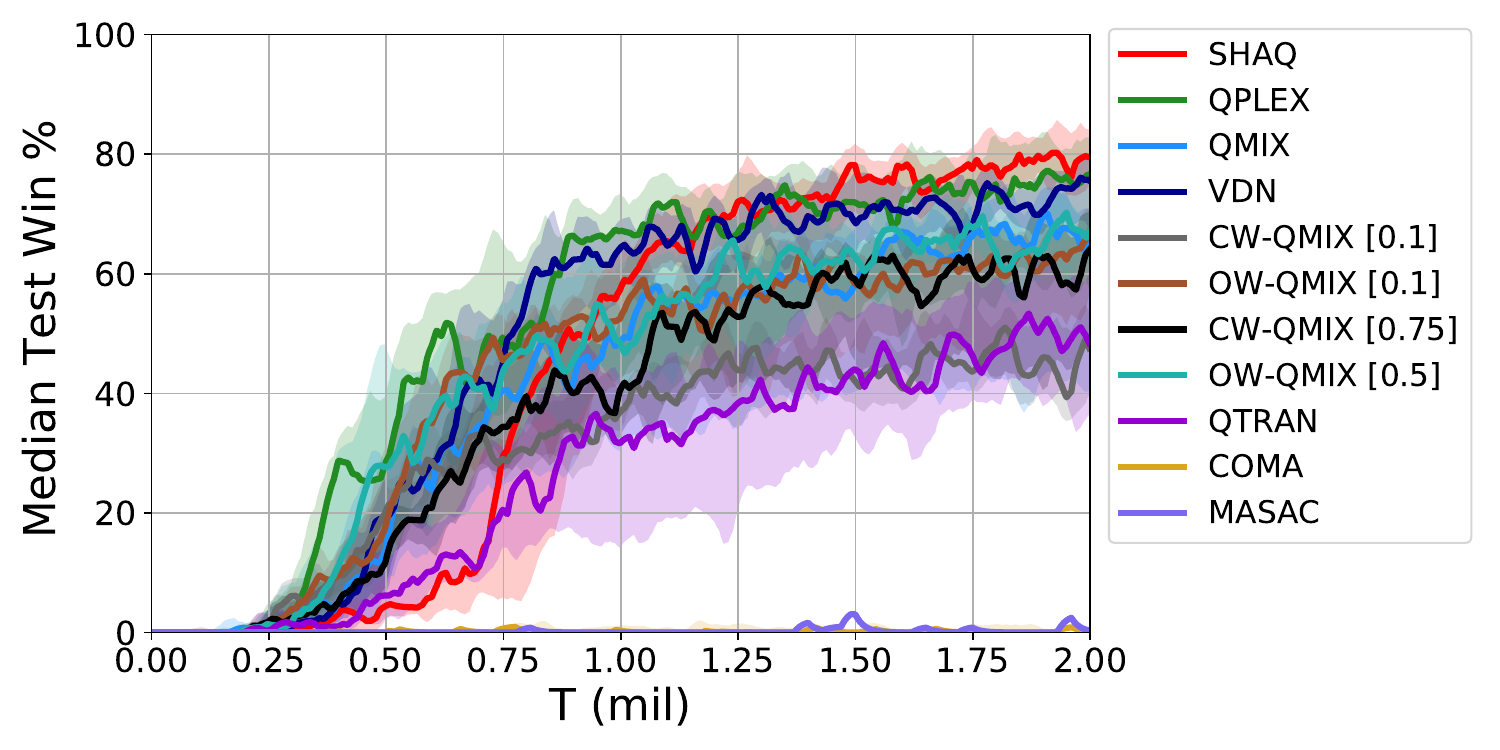}
                \caption{5m\_vs\_6m.}
            \end{subfigure}
            ~
            \begin{subfigure}[b]{0.48\textwidth}
                \centering
                \includegraphics[width=\textwidth]{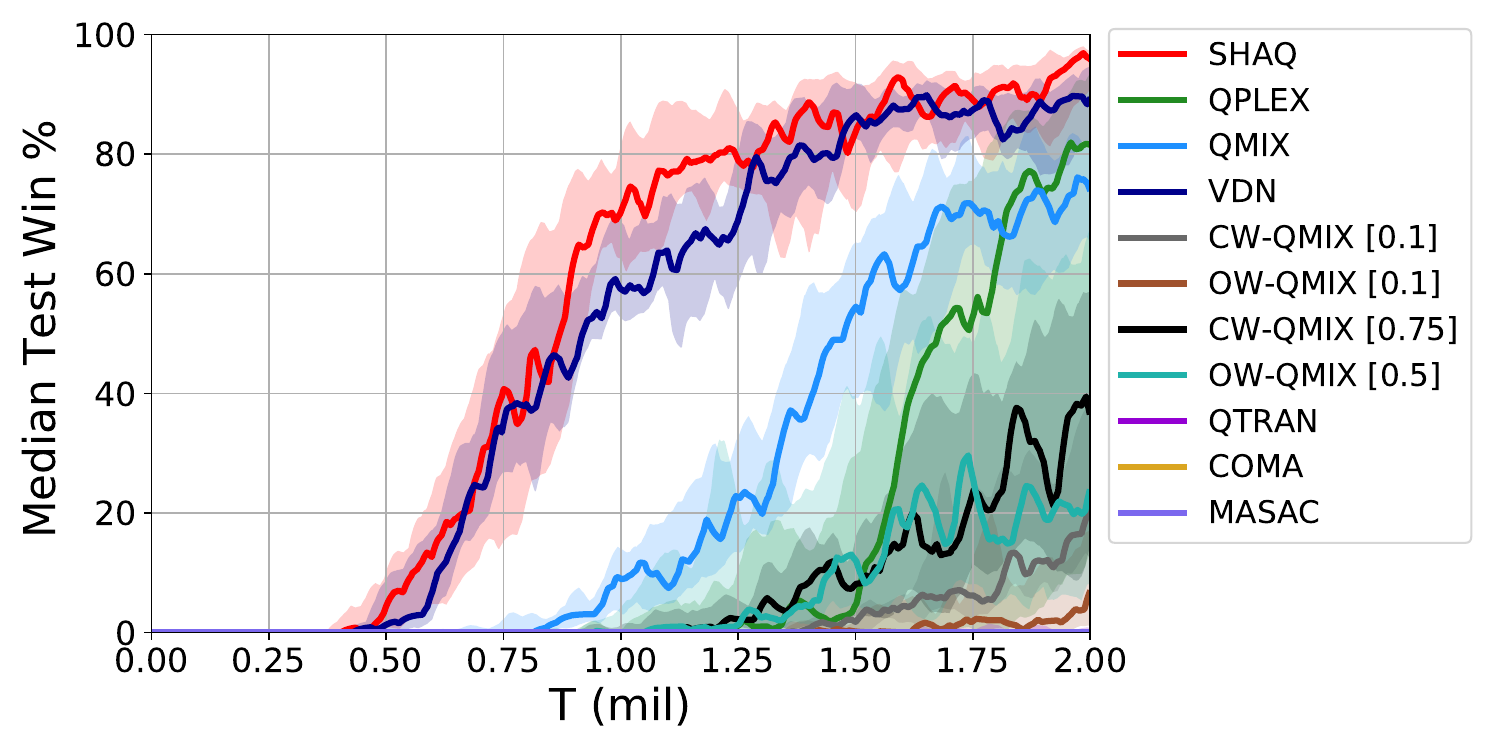}
                \caption{3s\_vs\_5z.}
            \end{subfigure}
            ~
            \begin{subfigure}[b]{0.48\textwidth}
                \centering
                \includegraphics[width=\textwidth]{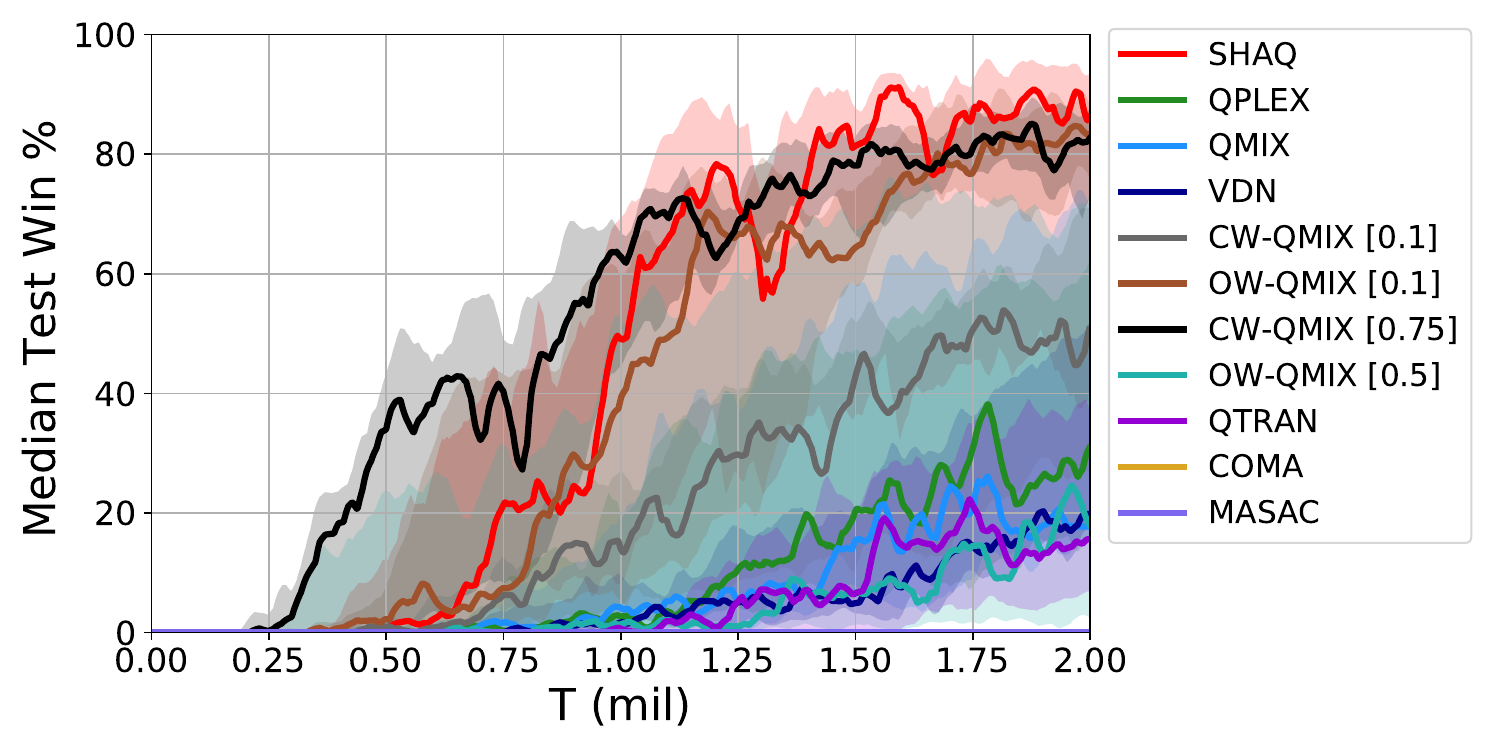}
                \caption{2c\_vs\_64zg.}
            \end{subfigure}
            \caption{Median test win \% for easy (1st row) and hard (2nd row) maps of SMAC for W-QMIX with different $\alpha$.}
            \label{fig:easy_hard_smac_wqmix}
        \end{figure*}

\section{Simulation Details of Active Voltage Control}
\label{sec:experimental_settings_of_distributed_active_voltage_control}
    \subsection{COMA with Continuous Actions}
    \label{subsec:continuous_coma}
        COMA \cite{foerster2018counterfactual} is an MARL algorithm with credit assignment via the mechanism of counterfactual regret, however, it can only serve for the discrete action space. In this thesis, to enable COMA to be eligible for the continuous action space, we conduct some tiny adjustments on the construction of Q-value for each agent. The original version of calculating each agent's Q-value assignment with respect to the discrete action is shown as follows:
        \begin{equation}
            Q_{i}(\mathbf{s}, \mathbf{a}) = Q(\mathbf{s}, \mathbf{a}) - \sum_{a_{i}' \in \mathcal{A}_{i}} \pi_{i}(a_{i}' | \tau_{i}) Q(\mathbf{s}, \mathbf{a}_{-i}, a_{i}'),
        \label{eq:coma_discrete}
        \end{equation}
        
        where $\tau_{i}$ is a history of agent $i$; $\mathbf{a}_{-i} = \times_{j \neq i} a_{j}$. To fit the continuous action, we simply change Eq.~\ref{eq:coma_discrete} to the form such that
        \begin{equation}
            Q_{i}(\mathbf{s}, \mathbf{a}) = Q(\mathbf{s}, \mathbf{a}) - \int_{a_{i}' \in \mathcal{A}_{i}} Q(\mathbf{s}, \mathbf{a}_{-i}, a_{i}') \ d\pi_{i}(a_{i}' | \tau_{i}),
        \label{eq:continuous_coma}
        \end{equation}
        
        where $\pi_{i}(a_{i}' | \tau_{i})$ is a Gaussian distribution over $a_{i}'$. In practice, $\int_{a_{i}' \in \mathcal{A}_{i}} Q(\mathbf{s}, \mathbf{a}_{-i}, a_{i}') \ d\pi_{i}(a_{i}' | \tau_{i})$ is approximated via Monte Carlo sampling, so it can be rewritten as follows:
        \begin{equation}
            Q_{i}(\mathbf{s}, \mathbf{a}) = Q(\mathbf{s}, \mathbf{a}) - \frac{1}{M} \sum_{k=1}^{M} Q(\mathbf{s}, \mathbf{a}_{-i}, (a_{i}')_{k} ), \ \ \ (a_{i}')_{k} \sim \pi_{i}(a_{i}' | \tau_{i}).
        \label{eq:continuous_coma_approximate}
        \end{equation}
    
    \subsection{Algorithm Settings and Training Details}
    \label{subsec:algo_settings_and_training_details}
        Since IDDPG and MADDPG do not possess any extra hyperparameters (other than common settings), we only introduce the special hyperparameters of COMA, MATD3, SQDDPG, SMFPPO, IPPO, and MAPPO. All hyperparameters reported here are tuned by the grid search and the best ones are selected as the final choice.
        
        \paragraph{Common Settings.} All algorithms are trained with online learning (i.e., for the on-policy algorithm like COMA, SMFPPO, MAPPO and IPPO, the behaviour policies/values are updated every 60 timesteps; and for the off-policy algorithms like SQDDPG, IDDPG and MADDPG, the behaviour policies/values are updated every 60 timesteps, where all data used for training are collected online) and the target policy/critic networks are updated every interval that is twice as the update interval of behaviour policy/critic introduced above. Taking the lesson from \cite{Wang_2020,fujimoto2018addressing}, for all algorithms except for MAPPO, IPPO and SMFPPO, the critic networks are updated with 10 epochs while the policy networks are updated with 1 epoch. For MAPPO, IPPO and SMFPPO, 10 epochs of training are conducted for both policy network and critic network. All algorithms are trained under the normalised reward and the action bound enforcement trick \cite{haarnoja2018soft} (i.e. working better than the hard clipping in our trials). The target update learning rate is set to $0.1$ (for off-policy algorithms). The gradient is clipped with L1 norm and the clip bound is set to $1$. The batch size of training data is set to 32 and the replay buffer size for off-policy algorithms is set to $5,000$. 
        
        Agent ID is concatenated with the observation and the layer normalisation \cite{ba2016layer} is applied to the first layer after the observation input. The parameters are shared among agents in this simulation. As for the policy network, RNN with GRUs \cite{chung2014empirical} is applied as a filter to solve the partial observation problems. The critic network is constructed with pure MLPs. The general setting of the policy and critic networks are shown in Table \ref{tab:network_spec}. During training, a fixed standard deviation as $1.0$ is applied to conduct the exploration. For the policy loss with entropy, the entropy penalty is set to 1e-3. The parameter initialisation is implemented by sampling from the Gaussian distribution $\mathcal{N}(0, 0.1)$. RMSProp \cite{tieleman2012lecture} is used as the optimizer, with the learning rate of 1e-4 for updating both policies and critics. The activation function of hidden layers is hyperbolic tangent function (Tanh) for PPO based algorithms \cite{engstrom2020implementation}, while it is ReLU for the rest algorithms.
        
        \paragraph{COMA.} The sample size M of COMA for the continuous action proposed in this thesis is set to 10 in simulation. 
        
        \paragraph{MATD3.} The clip boundary c for clipping the exploration noise is set to 1 in simulation. 
        
        \paragraph{SQDDPG.} The sample size M of SQDDPG is set to 10 in simulation. 
        
        \paragraph{IPPO, MAPPO and SMFPPO.} We apply generalised advantage estimation (GAE) \cite{chulmanMLJA15} to evaluate the return with $\lambda=0.95$ for IPPO and MAPPO to reach the best performance. The return of SMFPPO is evaluated by the Shapley value mechanism with the sample size M of 10. The value loss coefficient is set to 1. The $\epsilon$ for clipping the objective function is set to 0.1 for 33-bus networks and 0.3 for the rest scenarios. Since the policies of PPO based methods are modelled as learnable Gaussian distributions, the exploration range of log std is critical to the performance. It is from -1.0 to 0.5 for all scenarios for IPPO and MAPPO, while from -1.0 to 0.5 for the 33-bus network; from 0.0 to 0.5 for the rest scenarios for SMFPPO. The parameter initialisation is implemented by orthogonal initialisation \cite{saxe2013exact}. All the tricks of PPO are from the suggestions in \cite{engstrom2020implementation}.
        \begin{table}[ht!]
        	\caption{General specifications of policy and value networks.}
        	\vskip 0.15in
        	\begin{center}
        		\begin{small}
        			\begin{sc}
        			  \scalebox{1.0}{
        				\begin{tabular}{ll}
        					\toprule
        					\textbf{Network} & \textbf{Structure} \\
        					\midrule
        					Policy & GRU(state\_dim, 64) $\rightarrow$ Layernorm() $\rightarrow$ ReLU/Tanh() \\
        					& $\rightarrow$ Linear(64, 64) $\rightarrow$ Linear(64, action\_dim) \\
        				    Critic & Linear(input\_dim, 64) $\rightarrow$ Layernorm() $\rightarrow$ ReLU/Tanh() \\ 
        				    & $\rightarrow$ Linear(state\_dim, 64) $\rightarrow$ ReLU/Tanh() $\rightarrow$ Linear(64, output\_dim) \\
        					\bottomrule 
        				\end{tabular}
        				}
        			\end{sc}
        		\end{small}
        	\end{center}
        	\label{tab:network_spec}
        	\vskip -0.1in
        \end{table}
    
    \subsection{Process of Simulation}
    \label{subsec:process_of_simulations}
        We plot the flow chart in Figure \ref{fig:env_diagram} to illustrate the process of the simulation for the active voltage control on power distribution networks.\footnote{The open-source code for the simulator is placed on \url{https://github.com/Future-Power-Networks/MAPDN}.} At the beginning of each episode, a series of consecutive PV and load profiles for 480 timesteps (i.e. 1 day) is in the buffer. At each timestep, the relevant PV and load profile are extracted, combined with the voltage status computed by Pandapower \cite{thurner2018pandapower} (i.e., computing the power flow) to establish the next state. Additionally, the reward is also calculated according to the result computed by Pandapower. Before fed to agents, the received state will be split into a batch of observations as per the region where each agent is located. Each agent only receives a local observation and the global reward, then it makes next decision. The above procedure is repeated until the end of an episode.
        \begin{figure}[ht!]
            \centering
            \includegraphics[width=0.65\textwidth]{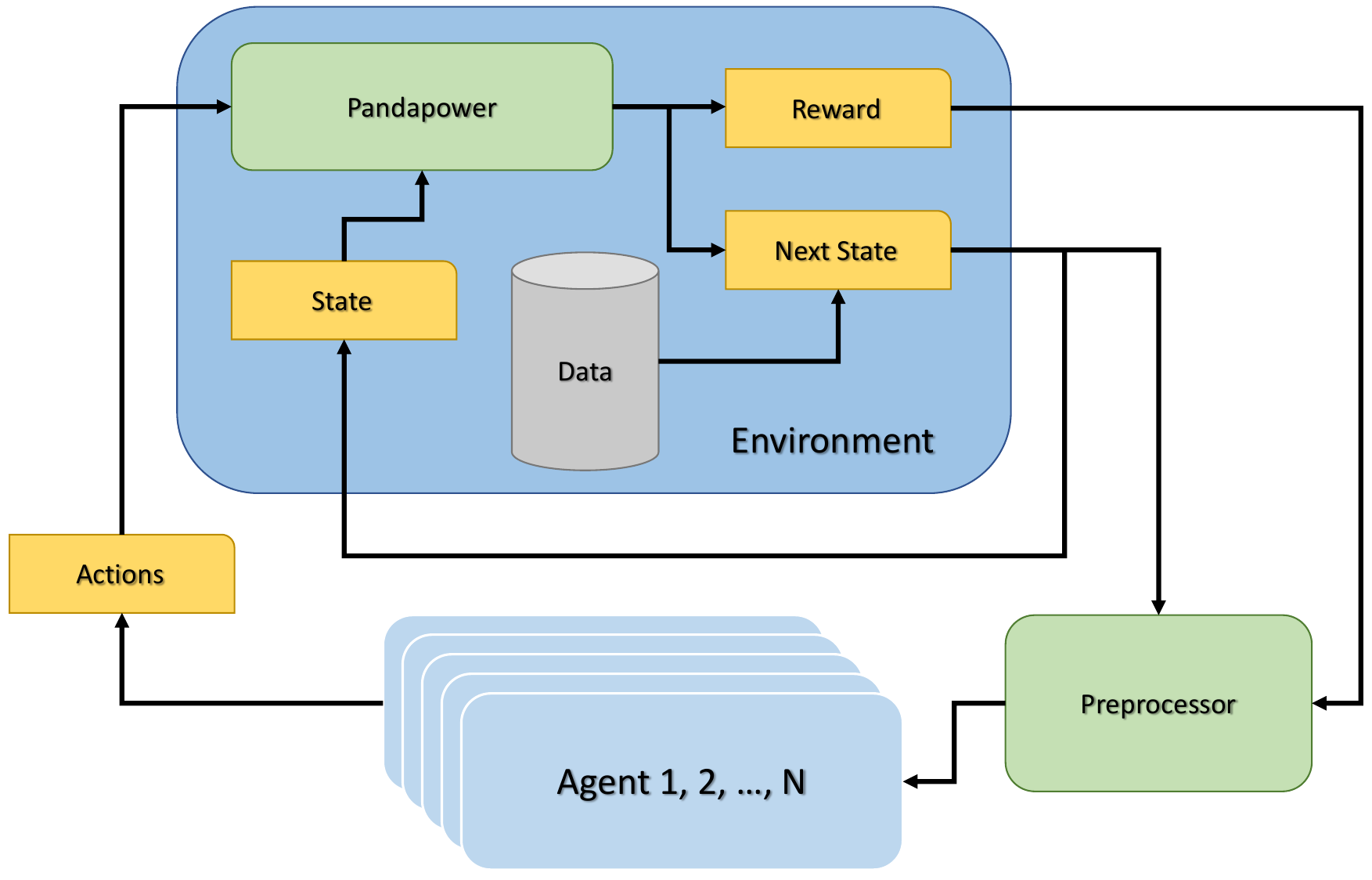}
            \caption{Flow chart of the implementation of environment for active voltage control on power distribution networks.}
        \label{fig:env_diagram}
        \end{figure}
        
    \subsection{Network Topology}
    \label{subsec:network_topology}
        We present 3 MV/LV distribution network models, each of which is composed of distinct topology and parameters, a load profile (including both active and reactive powers) describing different user behaviours, and a PV profile describing the active power generation from PVs. Although it is possible to partition the control regions by the voltage sensitivity of each bus \cite{chai2018network}, they are commonly determined by different distribution network owners in practice. Consequently, the control regions in this thesis are partitioned by the shortest path between the coupling bus and the terminal bus. Besides, each region consists of 1-4 PVs depending on the zonal sizes. A summary of the 3 networks is recorded in Table \ref{tab:network_summary} and the specific topologies are demonstrated in Figure \ref{fig:network_topology}.
        \begin{figure}[ht!]
            \centering
            \begin{subfigure}[h]{0.32\textwidth}
                  \centering
                  \includegraphics[width=\textwidth]{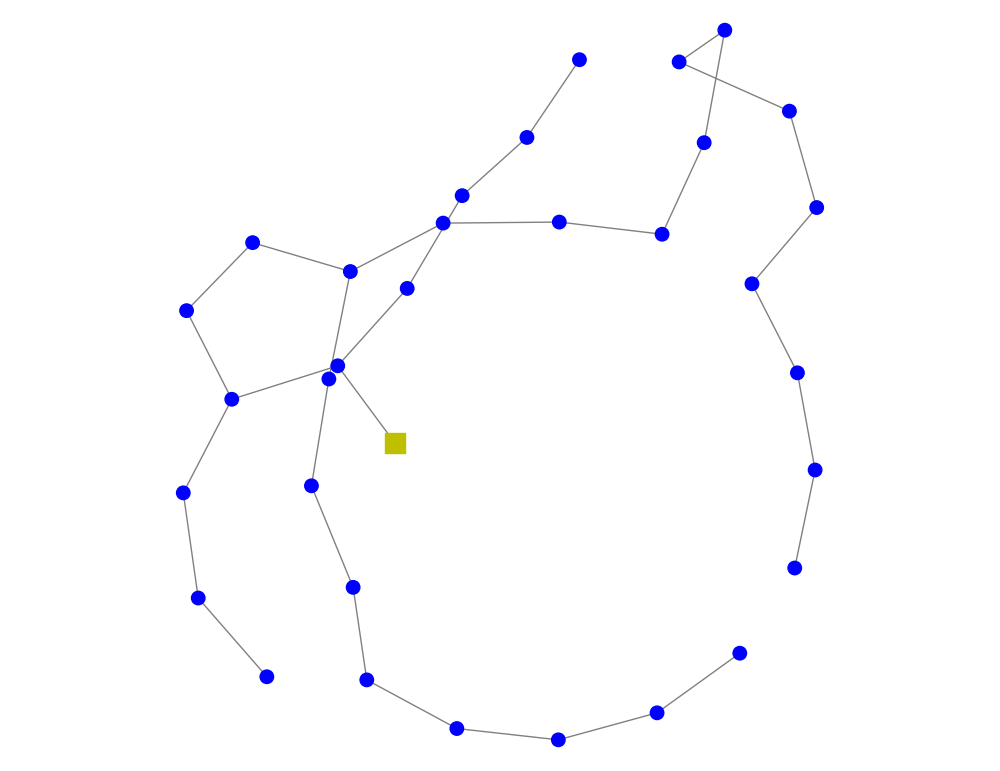}
                  \caption{33-bus network.}
            \end{subfigure}
            \begin{subfigure}[h]{0.32\textwidth}
                  \centering
                  \includegraphics[width=\textwidth]{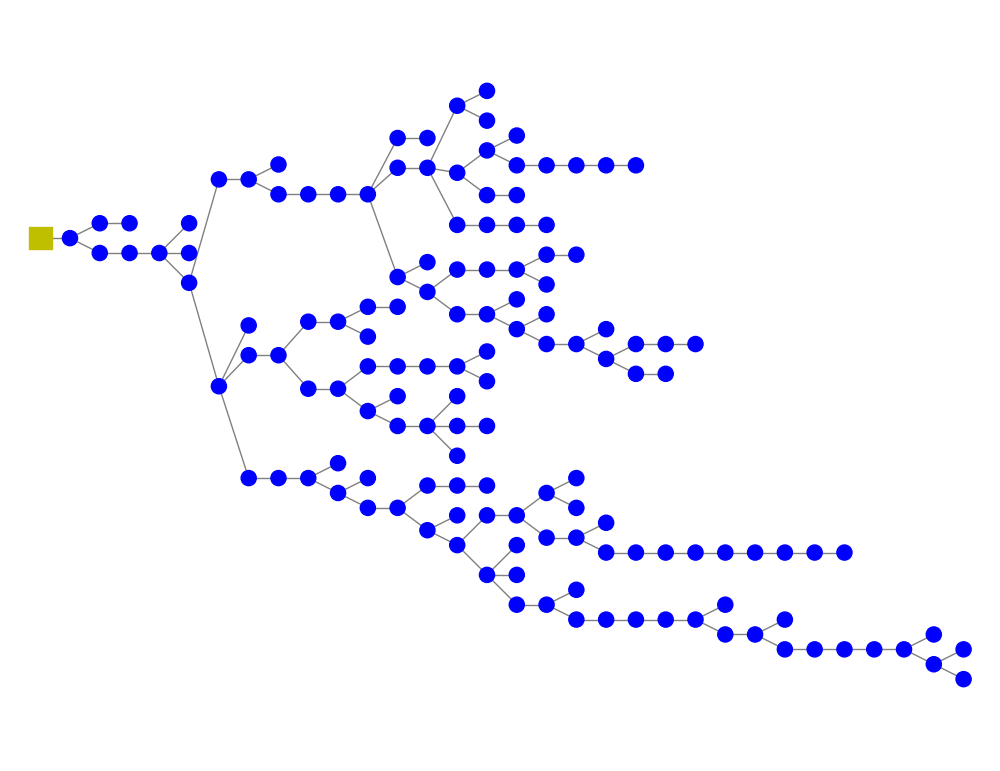}
                  \caption{141-bus network.}
            \end{subfigure}
            \begin{subfigure}[h]{0.32\textwidth}
                  \centering
                  \includegraphics[width=\textwidth]{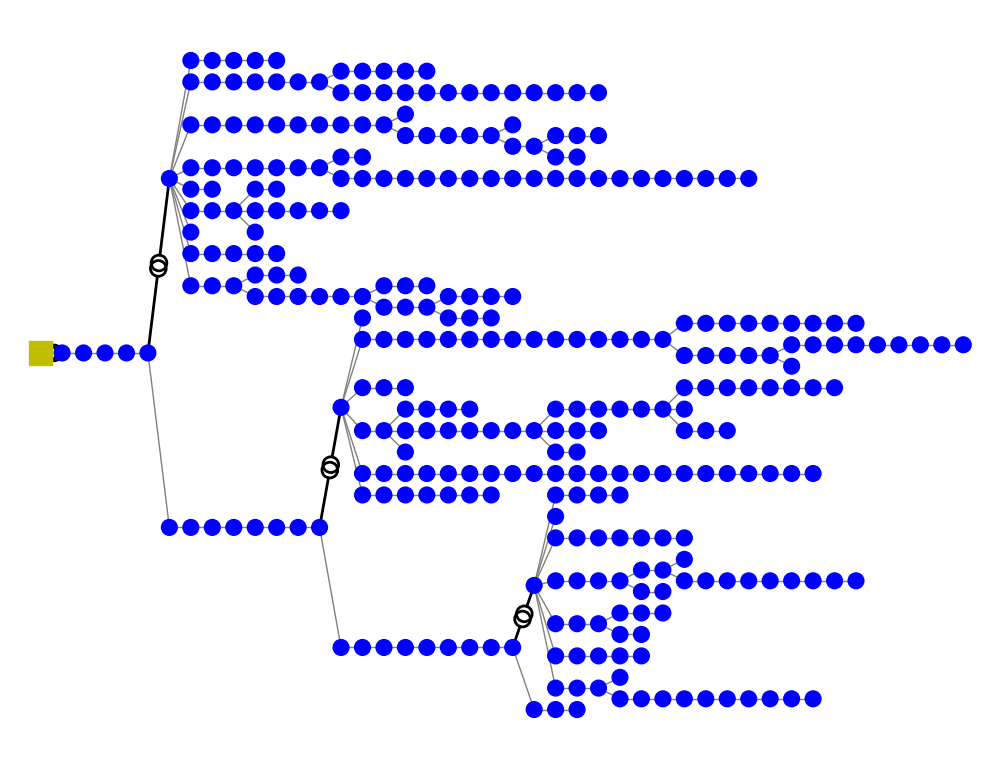}
                  \caption{322-bus network.}
            \end{subfigure}
            \caption{Topologies of power networks. The yellow square is the reference bus (a.k.a. the slack bus) and each blue circle is a non-reference bus. Transformers are highlighted as double-circles.}
            \label{fig:network_topology}
        \end{figure}
        
        \begin{table}[ht!]
            \centering
            \caption{Network specifications of all scenarios.}
            \begin{tabular}{ccccccc}
            \toprule
              & Rated Voltage & No. Loads & No. Regions & No. PVs & $p^{\scriptscriptstyle L}_{\max}$ & $p^{\scriptscriptstyle PV}_{\max}$ \\
            \midrule
                33-bus & 12.66 kV & 32 & 4 & 6 & 3.5 MW & 8.75 MW \\ 
            \midrule
                141-bus  & 12.5 kV & 84 & 9 & 22 & 20 MW & 80 MW  \\
            \midrule
                322-bus & 110-20-0.4 kV & 337 & 22 & 38 &1.5 MW & 3.75 MW \\
            \bottomrule
            \end{tabular}
            \label{tab:network_summary}
        \end{table}
        \paragraph{The 33-bus Network.} The 33-bus network is modified from the case33bw in MATPOWER \cite{zimmerman2010matpower} and PandaPower \cite{thurner2018pandapower}. To promise the tree structure, similar to \cite{liu2021online}, we drop lines 33-37 to avoid any loops. 6 PVs are added unevenly on bus 13 and 18 (zone 1), bus 22 (zone 2), bus 25 (zone 3), bus 29 and 33 (zone 4). The PV-load ratio is $PR=2.5$. 
        
        \paragraph{The 141-bus Network.} The 141-bus network is modified from the case141 in MATPOWER \cite{zimmerman2010matpower} as well. A similar procedure is followed as done for the 33-bus network.
        
        \paragraph{The 322-bus Network.} The proposed 322-bus network consists of an external 110-kV bus, a long medium-voltage (20 kV) line (25 buses in total) and 3 LV feeders (0.4 kV) representing rural (128 buses), semi-urban (110 buses), and urban (58 buses) areas defined by SimBench \cite{meinecke20simbench}. Areas with different voltage levels are connected though standard transformers defined in PandaPower \cite{thurner2018pandapower}. The rural area has the lowest power consumption level and some buses are with no loads, while more than one load are allowed to locate on a bus in the urban area, so the total number of loads is higher than the number of buses in the 322-bus network. The users can also generate their own synthetic networks by following out procedure. To simplify the setting, we aggregate the multiple loads at each bus into one.
        
    \subsection{Data Description}
    \label{subsec:data_descriptions}
        \paragraph{Load Profile.} The load profile of each network is modified based on the real-time Portuguese electricity consumption accounting for 232 consumers of 3 years.\footnote{\url{https://archive.ics.uci.edu/ml/datasets/ElectricityLoadDiagrams20112014}.} The original dataset contains 370 residential and industrial clients electricity usage from 2011 to 2014 in 15-min resolution. As some of the data record does not start at the beginning, we collect the data from 2012-01-01 00:15:00 and delete the locations that contain more than 20 missing data. The remaining missing data (mostly because of the winter time to daylight saving time switch) is interpolated linearly. The load data is then interpolated in 3-min resolution which is consistent with the real-time control period in the grid. The final data size is $526080\times232$ accounting for load profiles for 232 consumers of 1096 days (three years). We then remove the outliers that are outside $7\sigma$ against the mean value. For the 33-bus and the 141-bus networks, the 232 load profiles are randomly assigned to each bus. For the 322-bus network, repeated load profiles are allowed. In practice, the Gaussian noises are added to the load active power and the load reactive power.
        
        \paragraph{PV Profile.} Ten cities/regions/provinces in Belgium, Netherlands, and Luxembourg are considered to represent the distinct zonal solar radiation levels, including Antwerp, Brussels, Flemish-Brabant (a province of Belgium), Hainaut (a province of Belgium), Liege, Limburg (a province of Netherland), Luxembourg, Namur, Walloon-Brabant (a province of Belgium), and West-Flanders (a province of Belgium). The PV data is collected from Elia group,\footnote{\url{https://www.elia.be/en/grid-data/power-generation/solar-pv-power-generation-data}.} a Belgium’s power network operator. The PV data is also interpolated in 3-min resolution resulting in $526080\times10$ data in total. For the 33-bus (with 4 regions) and the 141-bus (with 9 regions) networks, the PV profiles are randomly assigned to each region. For the 322-bus (with 22 regions) network, different regions can have the same PV profile. Note that the PVs in the same control region share the same PV profiles as they are geometrically contiguous. In real time, we also add the Gaussian noise to the PV active power. 
        
        We summarise the load and PV profiles of different scales in Figure \ref{fig:33_bus_total}-\ref{fig:power_factor} below. Figure \ref{fig:33_bus_total} illustrates the total PV active power generation and active load consumption in the 33-bus network. Figure \ref{fig:33_bus_single} illustrates four distinct PV buses in the 33-bus network in January and July. Note that bus 13 and bus 18 are in the same region, so they have the same PV profiles. Figure \ref{fig:141_bus_total} illustrates the total PV active power generation and active load consumption in the 141-bus network. Figure \ref{fig:141_bus_single} illustrates four distinct PV buses in the 141-bus network in January and July. Note that bus 36 and bus 111 are in the same region, so they have the same PV profiles. Figure \ref{fig:322_bus_total} illustrates the total PV active power generation and active load consumption in the 322-bus network. Figure \ref{fig:322_bus_single} illustrates four distinct PV buses in the 322-bus network in January and July. Figure \ref{fig:power_factor} illustrates the power factors (PFs) of the three systems under test. THe high power factors ($>0.9$) usually represents the residential consumers, while the low power factors ($<0.5$) represent the industrial consumers.
        \begin{figure}[ht!]
            \centering
            \begin{subfigure}[h]{0.45\textwidth}
                  \centering
                  \includegraphics[width=\textwidth]{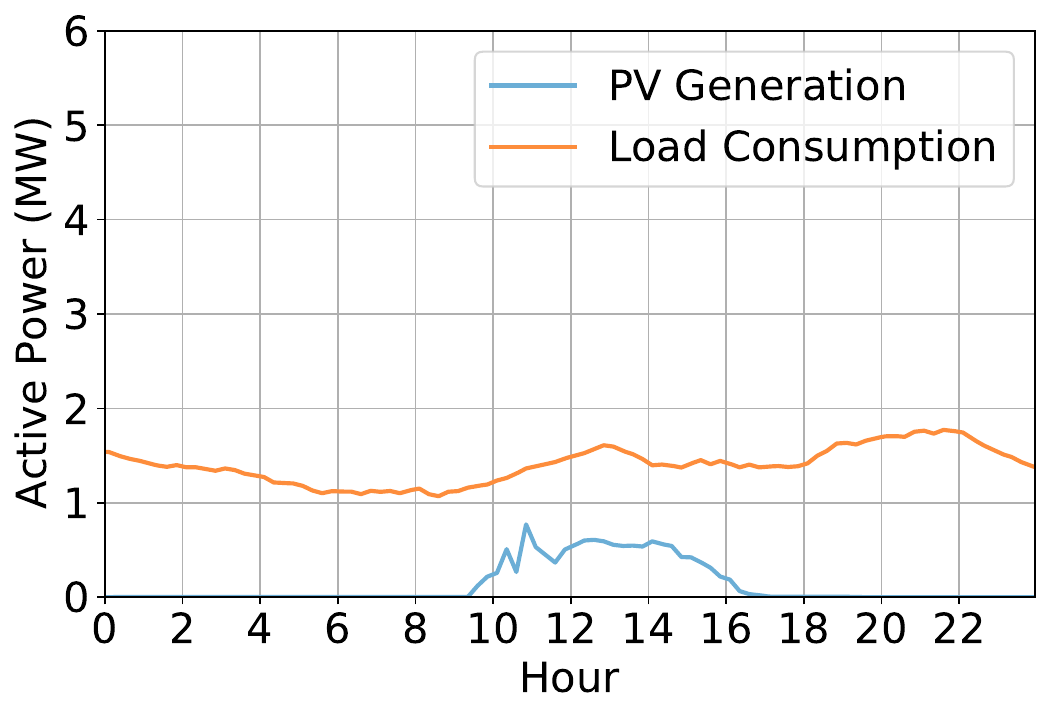}
                  \caption{A winter day.}
            \end{subfigure}
            \begin{subfigure}[h]{0.45\textwidth}
                  \centering
                  \includegraphics[width=\textwidth]{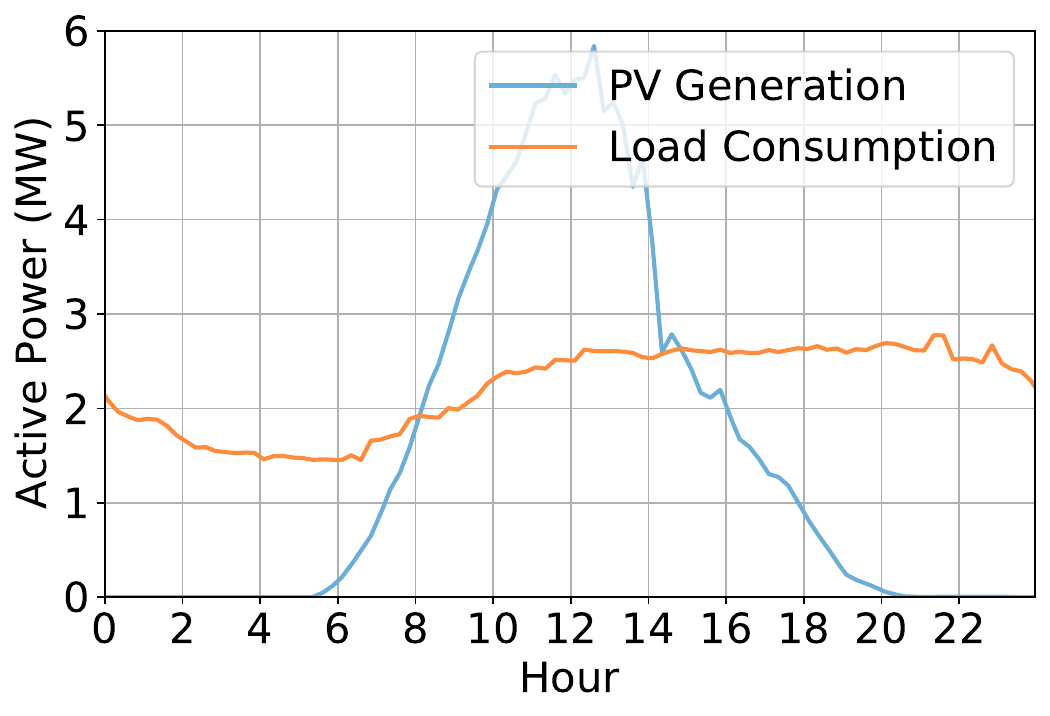}
                  \caption{A summer day.}
            \end{subfigure}
            \begin{subfigure}[h]{0.45\textwidth}
                  \centering
                  \includegraphics[width=\textwidth]{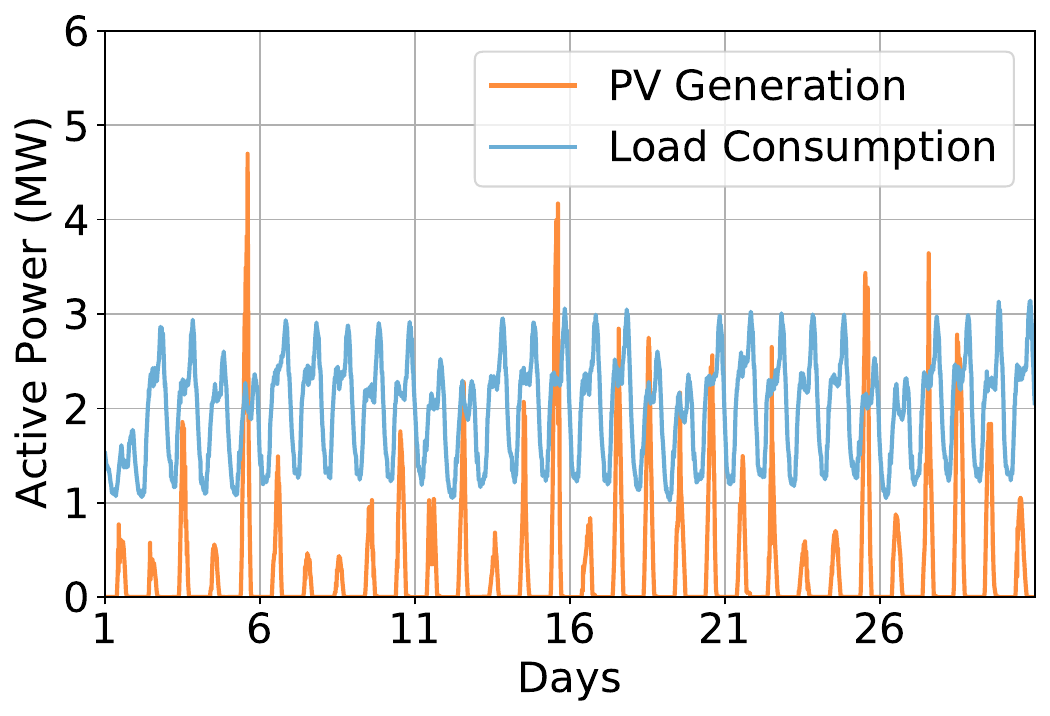}
                  \caption{A winter month (January).}
            \end{subfigure}
            \begin{subfigure}[h]{0.45\textwidth}
                  \centering
                  \includegraphics[width=\textwidth]{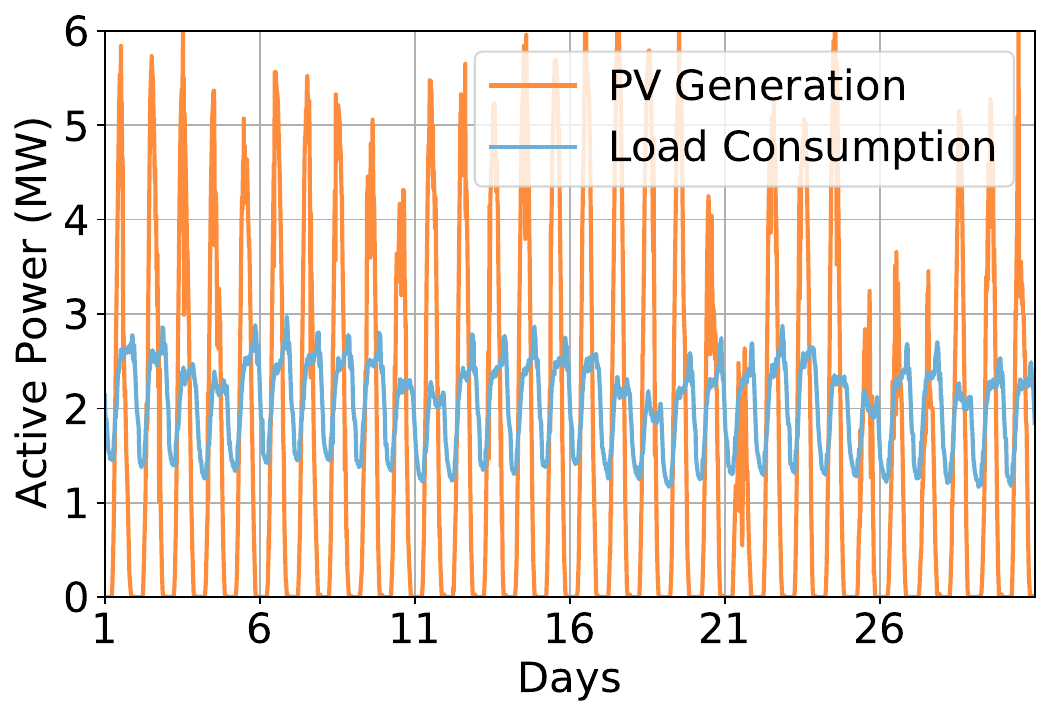}
                  \caption{A summer month (July).}
            \end{subfigure}
            \caption{Total power of the 33-bus network.}
            \label{fig:33_bus_total}
        \end{figure}
        
        \begin{figure}[ht!]
            \centering
            \begin{subfigure}[h]{0.45\textwidth}
                  \centering
                  \includegraphics[width=\textwidth]{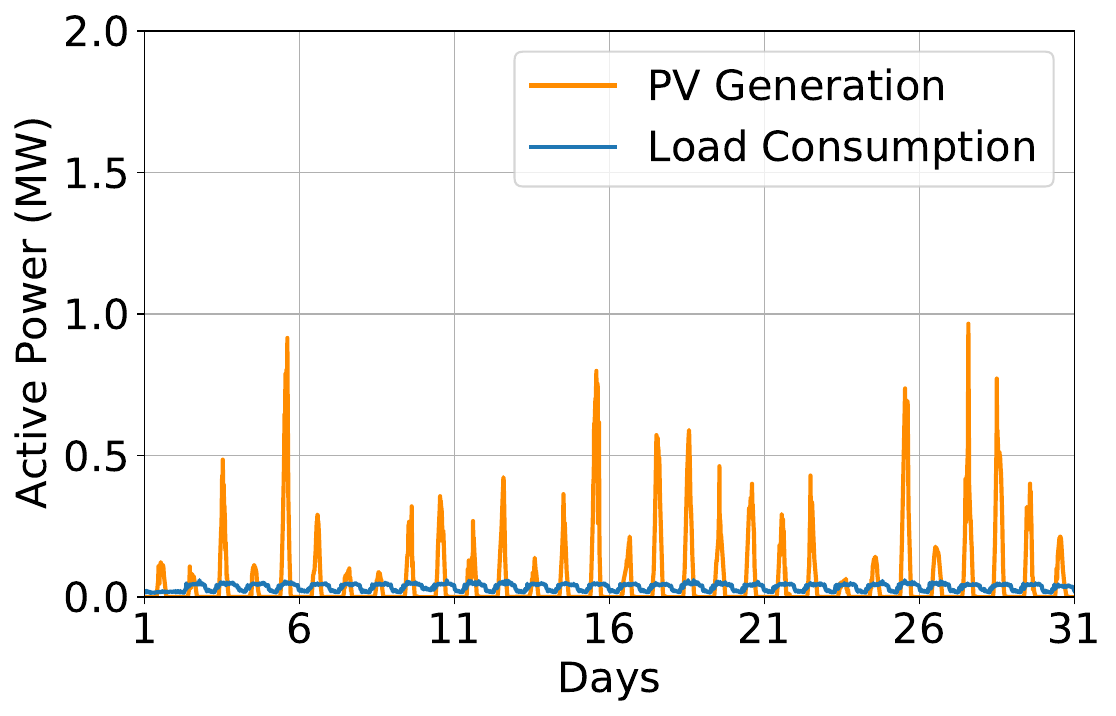}
                  \caption{A winter month of bus-13.}
            \end{subfigure}
            \begin{subfigure}[h]{0.45\textwidth}
                  \centering
                  \includegraphics[width=\textwidth]{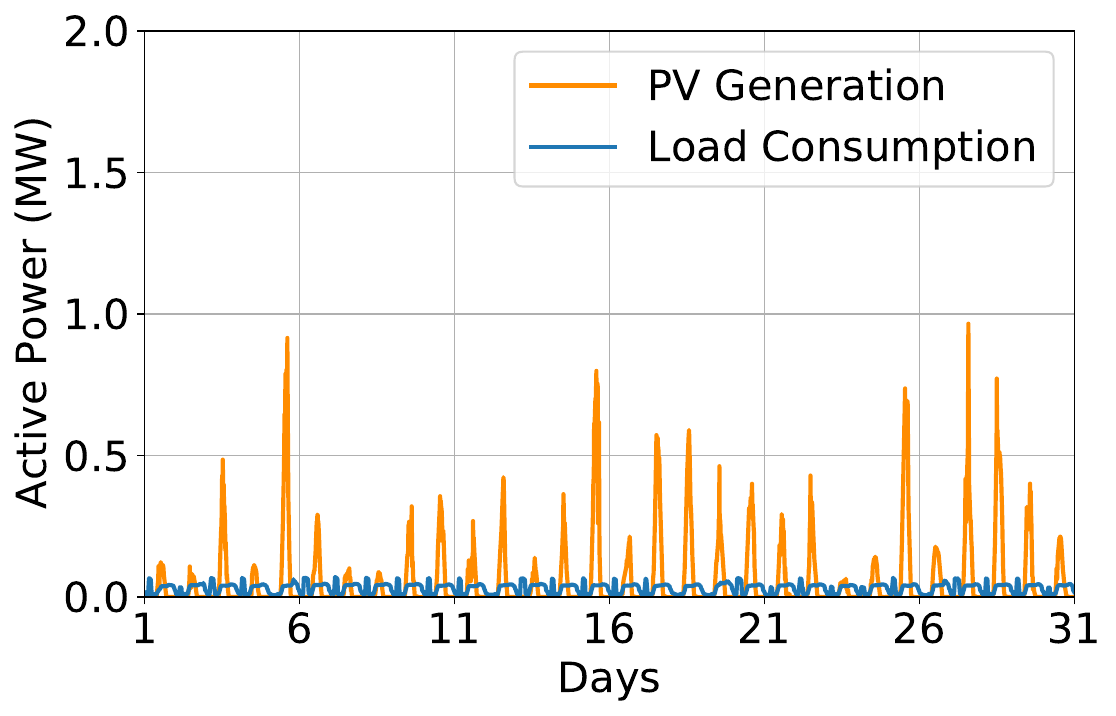}
                  \caption{A winter month of bus-18.}
            \end{subfigure}
            \begin{subfigure}[h]{0.45\textwidth}
                  \centering
                  \includegraphics[width=\textwidth]{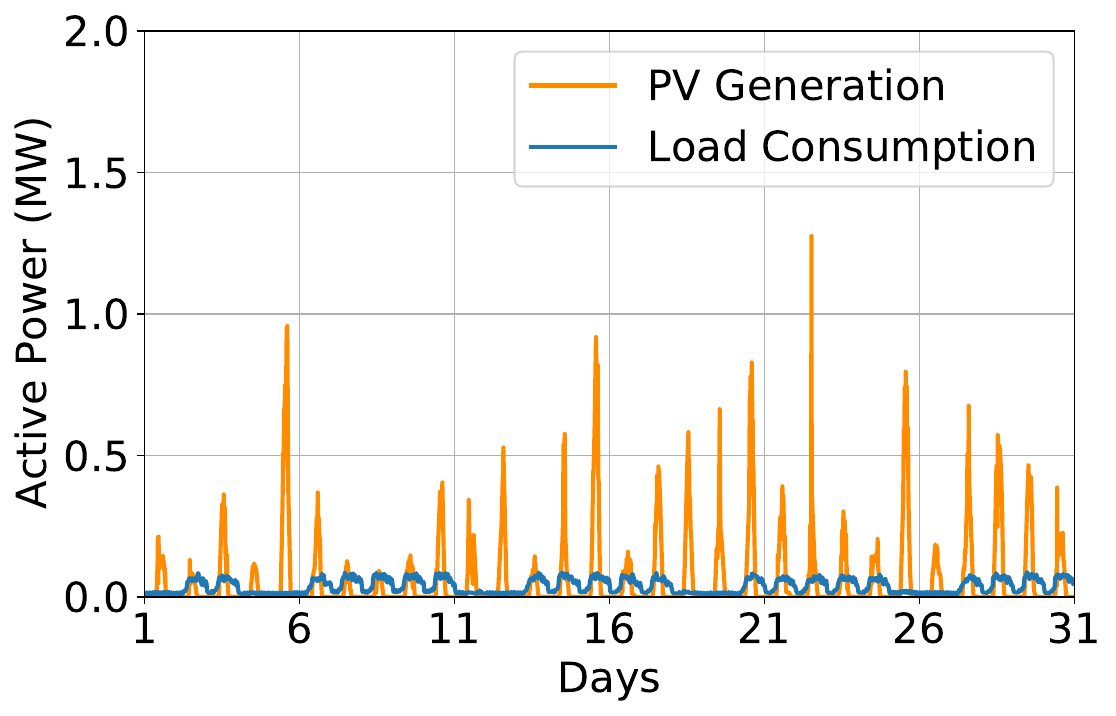}
                  \caption{A winter month of bus-22.}
            \end{subfigure}
            \begin{subfigure}[h]{0.45\textwidth}
                  \centering
                  \includegraphics[width=\textwidth]{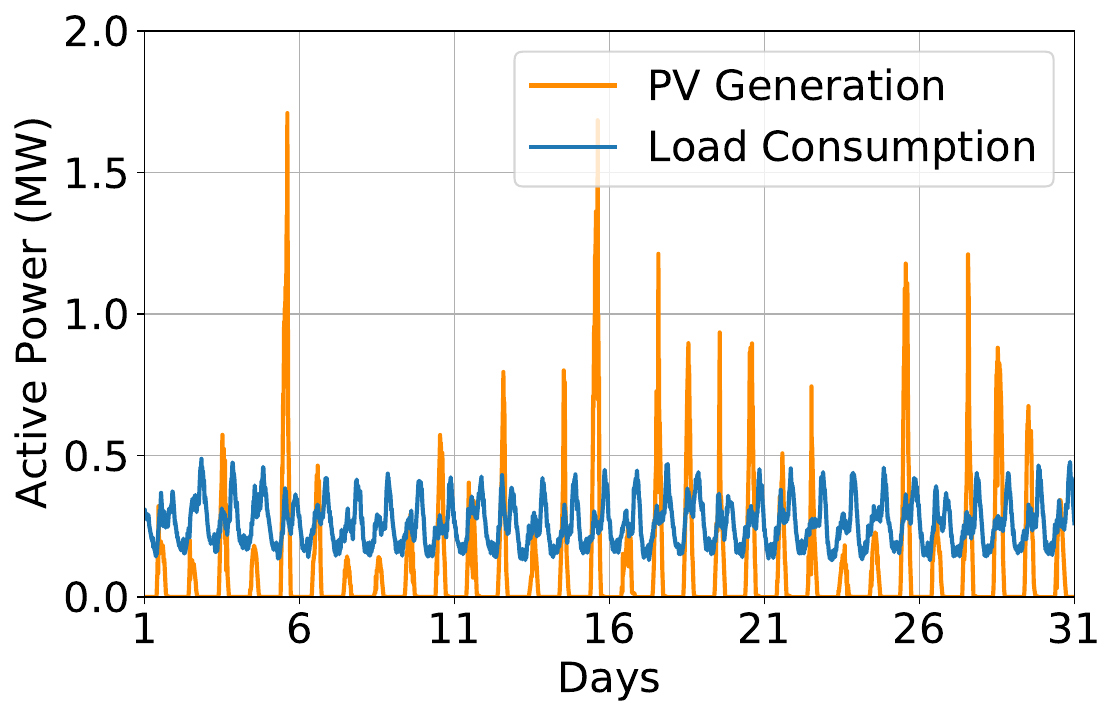}
                  \caption{A winter month of bus-25.}
            \end{subfigure}
            \begin{subfigure}[h]{0.45\textwidth}
                  \centering
                  \includegraphics[width=\textwidth]{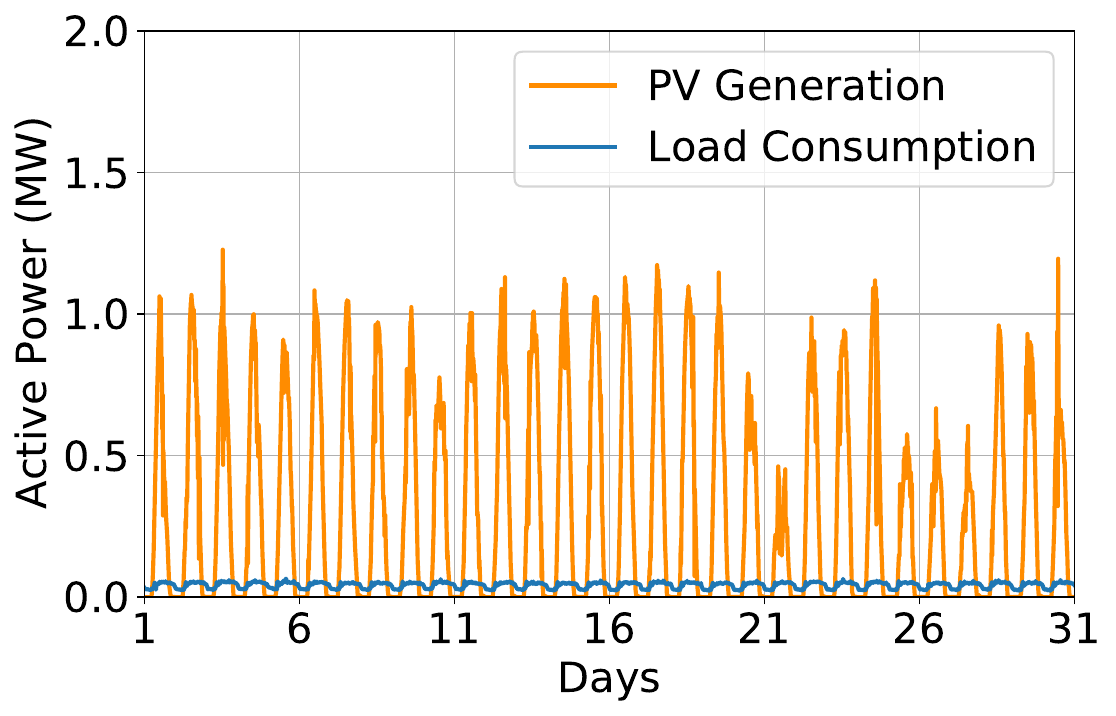}
                  \caption{A summer month of bus-13.}
            \end{subfigure}
            \begin{subfigure}[h]{0.45\textwidth}
                  \centering
                  \includegraphics[width=\textwidth]{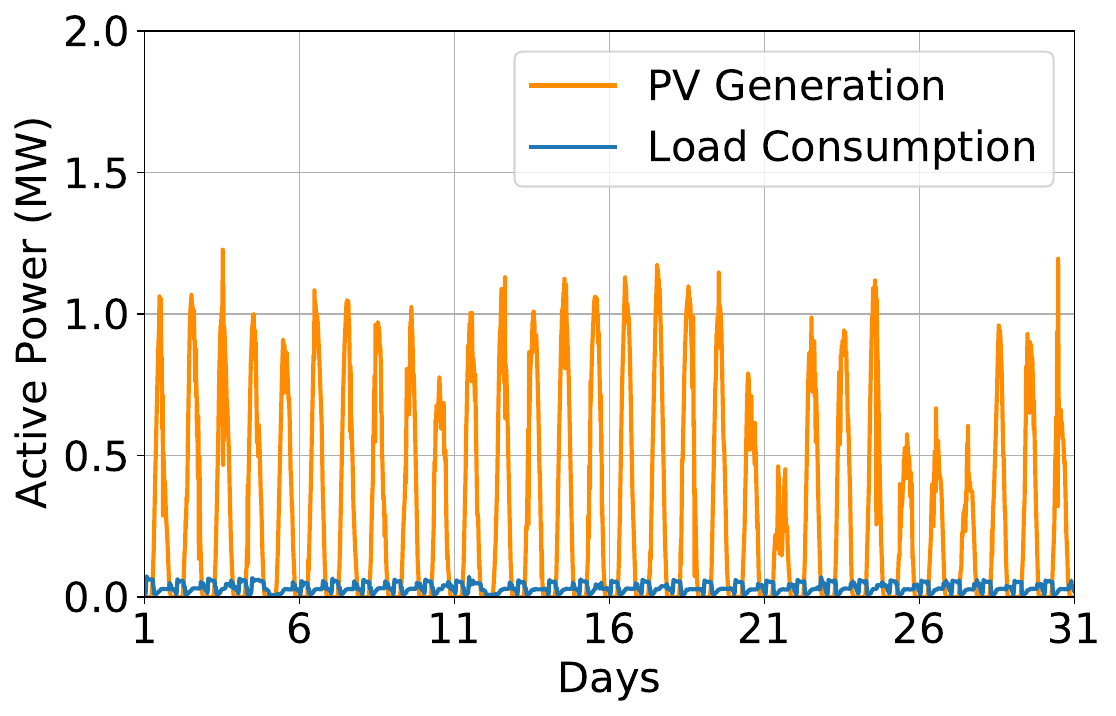}
                  \caption{A summer month of bus-18.}
            \end{subfigure}
            \begin{subfigure}[h]{0.45\textwidth}
                  \centering
                  \includegraphics[width=\textwidth]{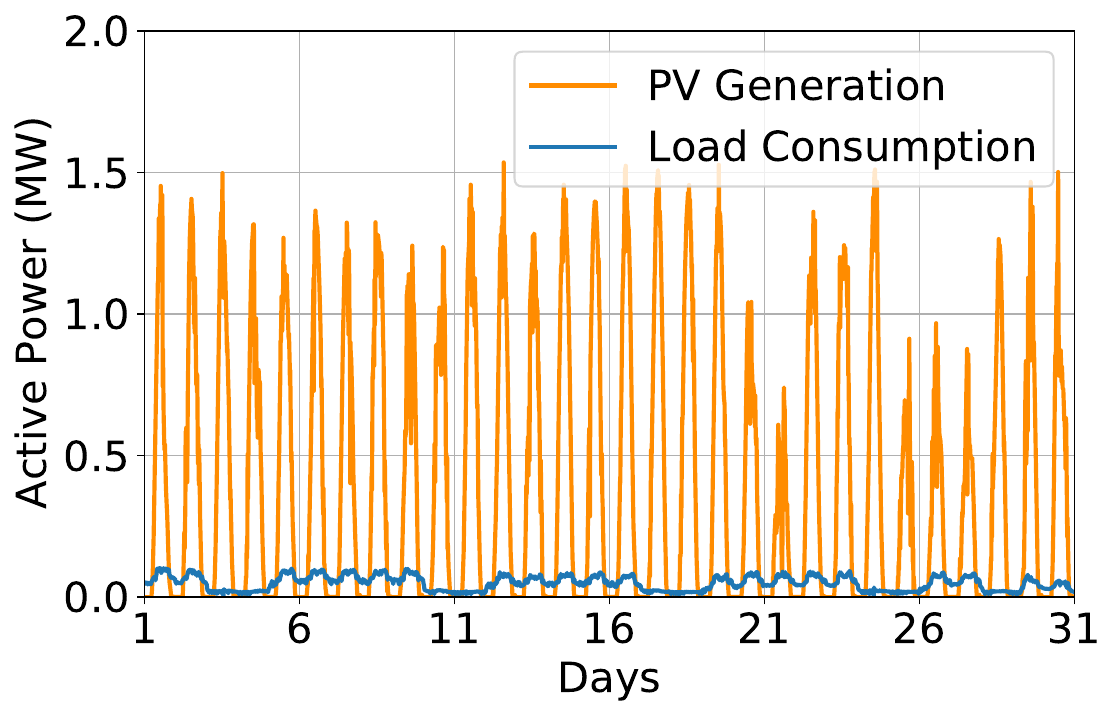}
                  \caption{A summer month of bus-22.}
            \end{subfigure}
            \begin{subfigure}[h]{0.45\textwidth}
                  \centering
                  \includegraphics[width=\textwidth]{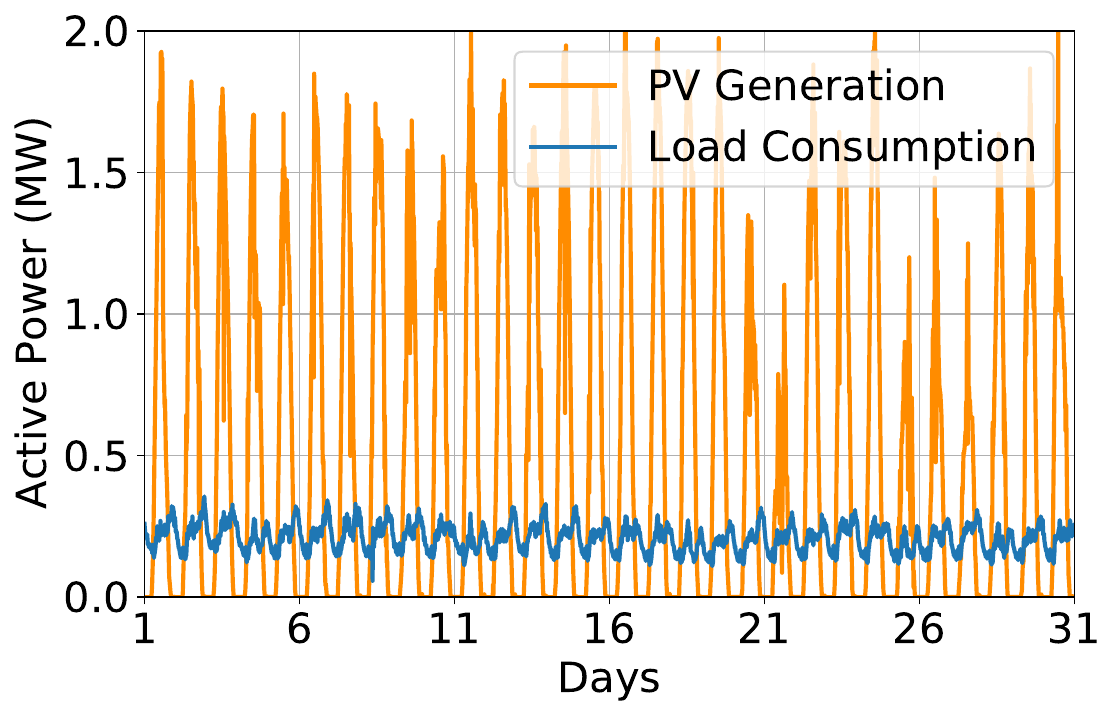}
                  \caption{A summer month of bus-25.}
            \end{subfigure}
            \caption{Daily power of the 33-bus network: active PV power generation and active load consumption for different buses in the 33-bus network.}
            %  (a): bus-13, (b): bus-18, (c): bus-22 (d): bus-25. The first row: power in a winter month (January), second row: power in a summer month (July).
            \label{fig:33_bus_single}
        \end{figure}
        
        \begin{figure}[ht!]
            \centering
            \begin{subfigure}[h]{0.45\textwidth}
                  \centering
                  \includegraphics[width=\textwidth]{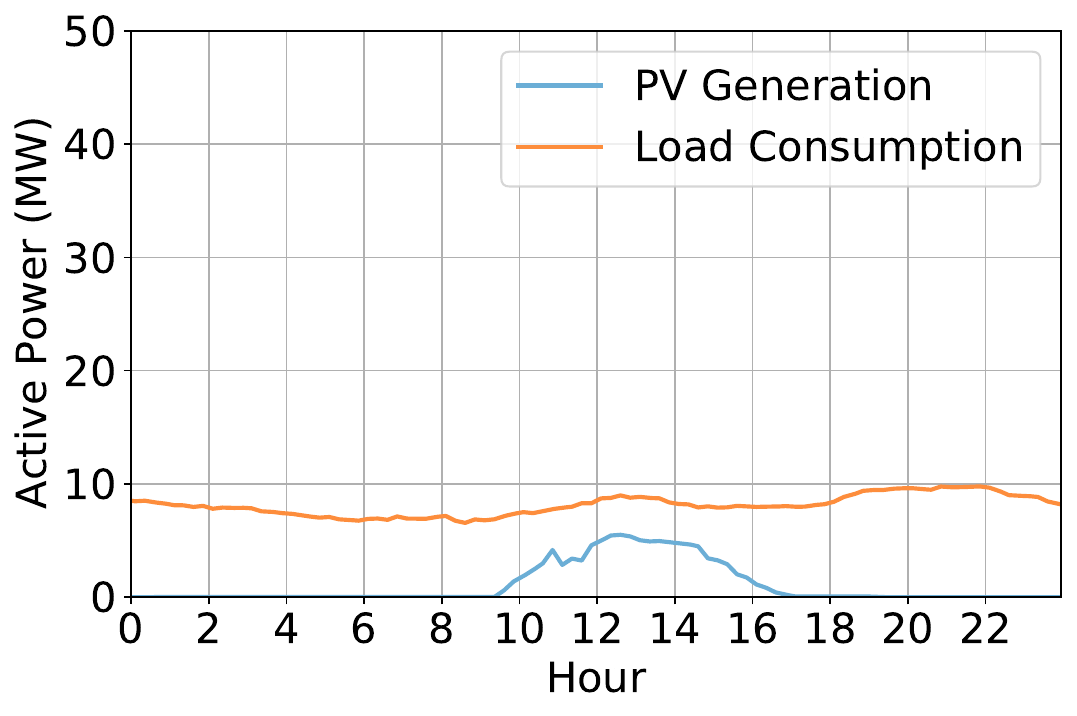}
                  \caption{A winter day.}
            \end{subfigure}
            \begin{subfigure}[h]{0.45\textwidth}
                  \centering
                  \includegraphics[width=\textwidth]{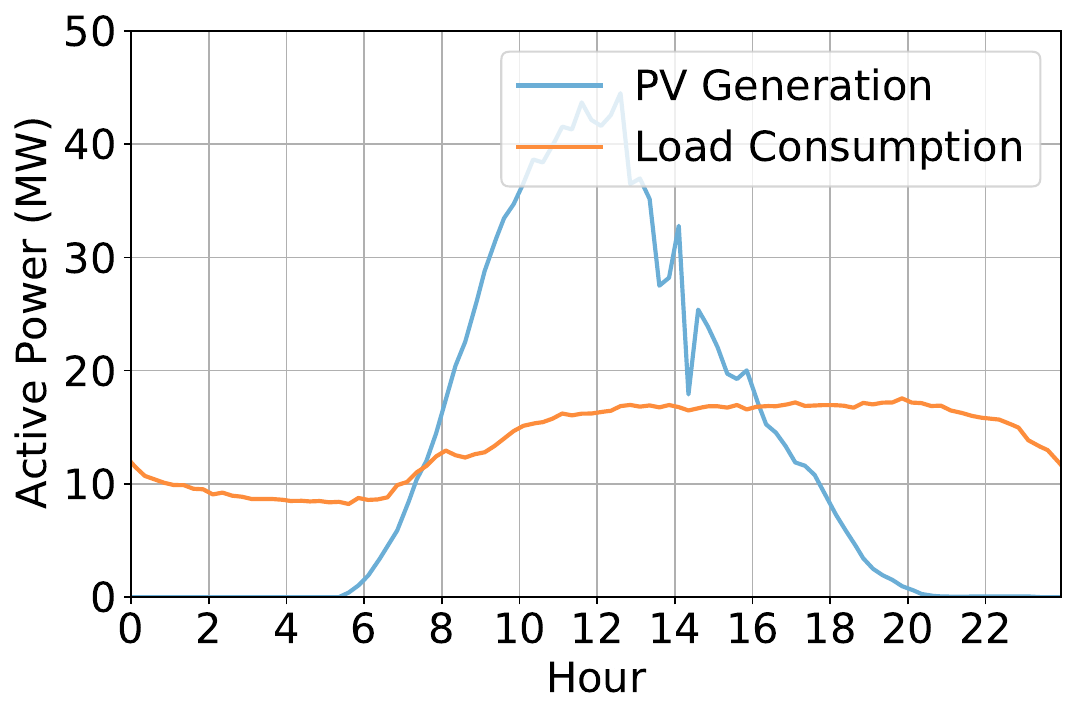}
                  \caption{A summer day.}
            \end{subfigure}
            \begin{subfigure}[h]{0.45\textwidth}
                  \centering
                  \includegraphics[width=\textwidth]{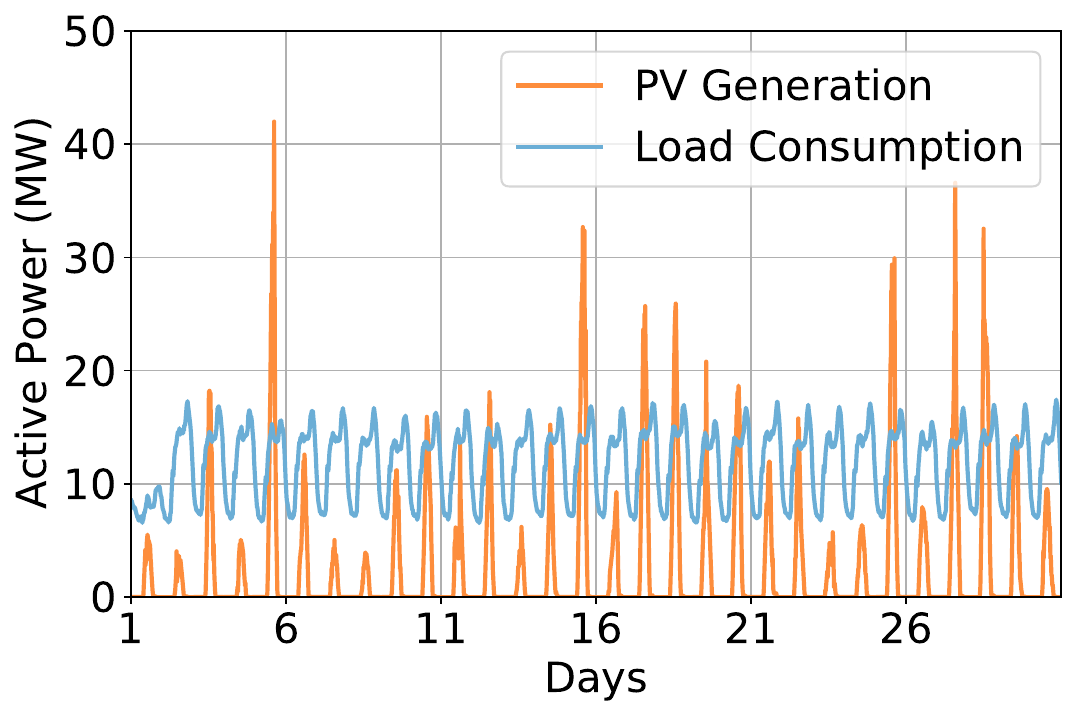}
                  \caption{A winter month (January).}
            \end{subfigure}
            \begin{subfigure}[h]{0.45\textwidth}
                  \centering
                  \includegraphics[width=\textwidth]{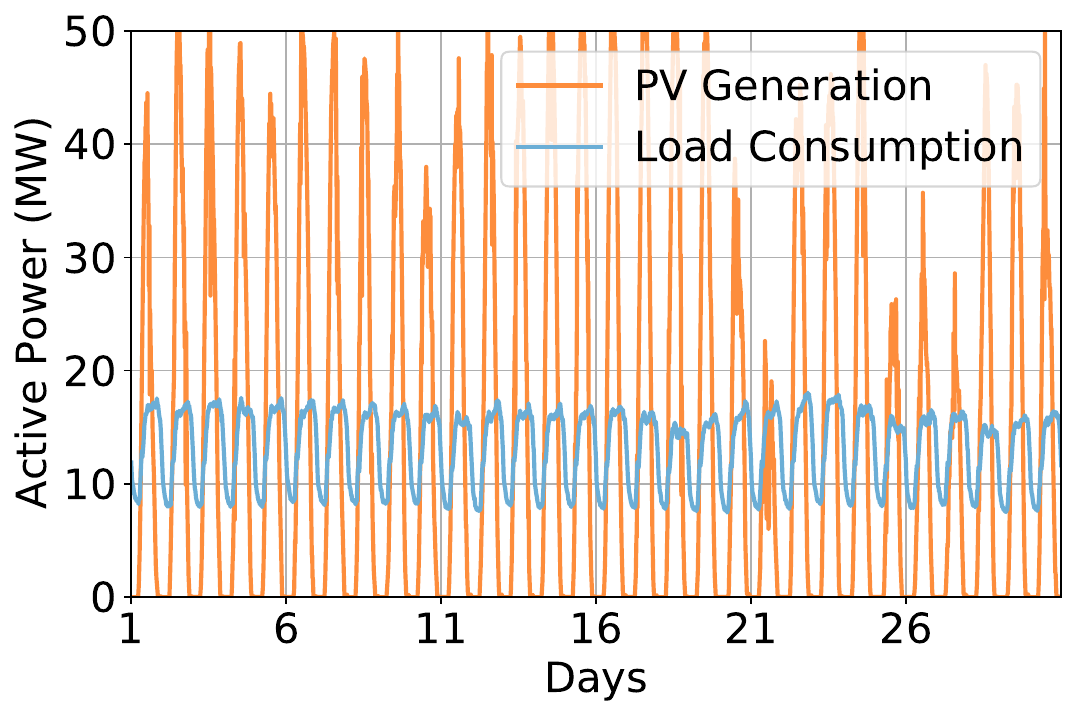}
                  \caption{A summer month (July).}
            \end{subfigure}
            \caption{Total power of the 141-bus network.}
            \label{fig:141_bus_total}
        \end{figure}
        
        \begin{figure}[ht!]
            \centering
            \begin{subfigure}[h]{0.45\textwidth}
                  \centering
                  \includegraphics[width=\textwidth]{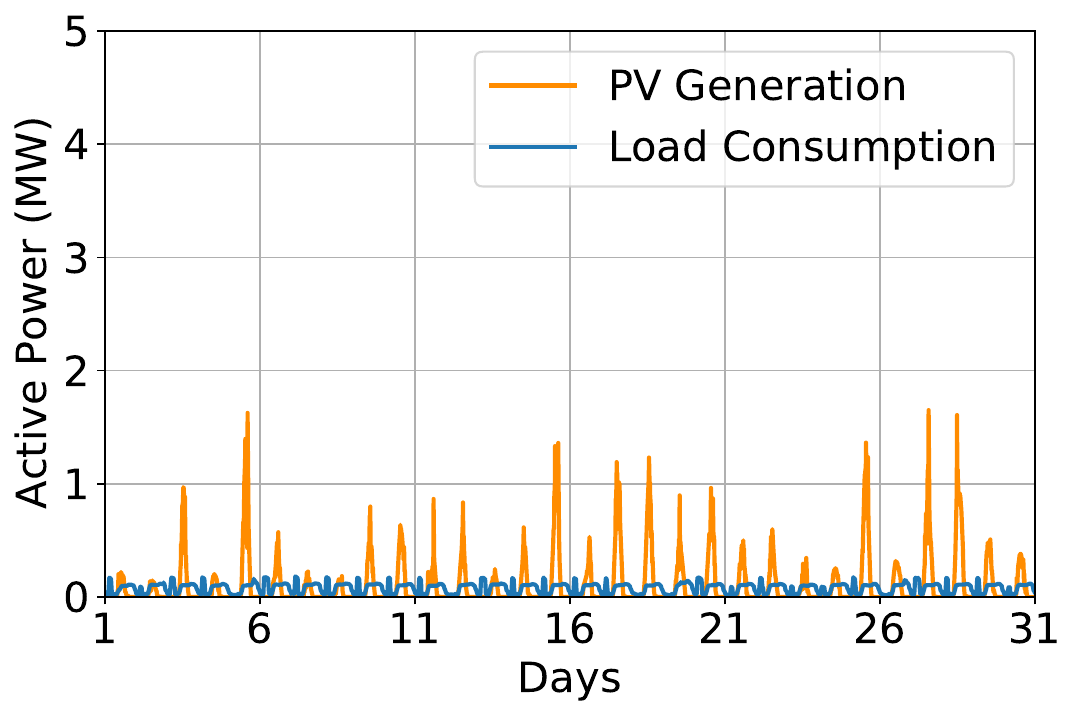}
                  \caption{A winter month of bus-36.}
            \end{subfigure}
            \begin{subfigure}[h]{0.45\textwidth}
                  \centering
                  \includegraphics[width=\textwidth]{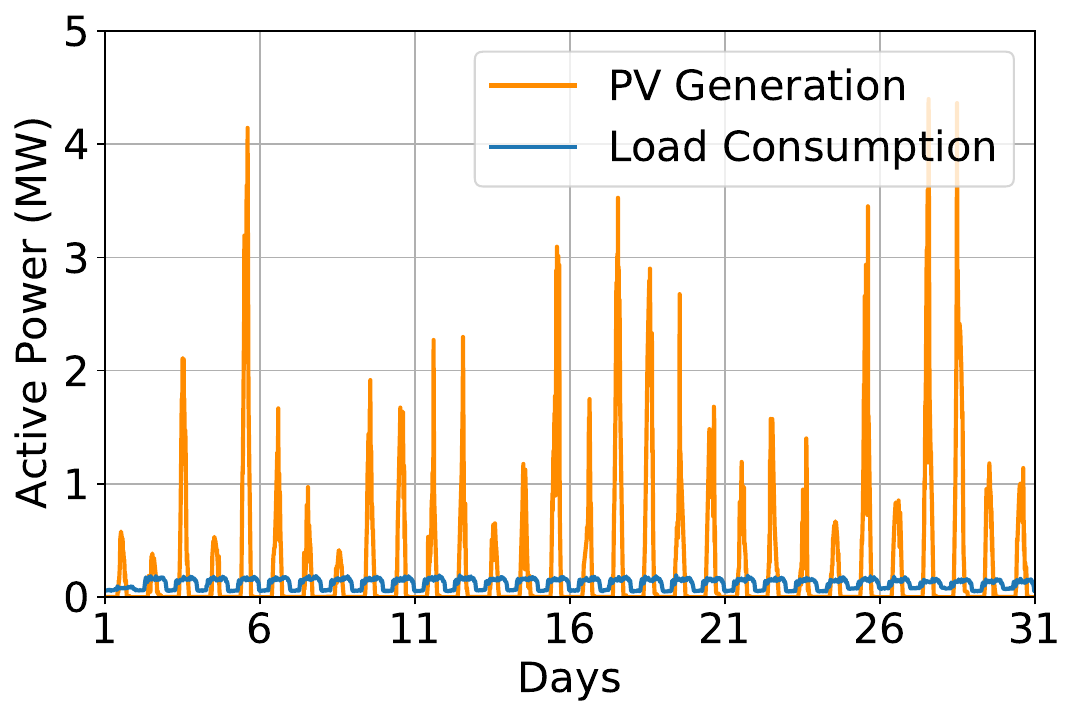}
                  \caption{A winter month of bus-77.}
            \end{subfigure}
            \begin{subfigure}[h]{0.45\textwidth}
                  \centering
                  \includegraphics[width=\textwidth]{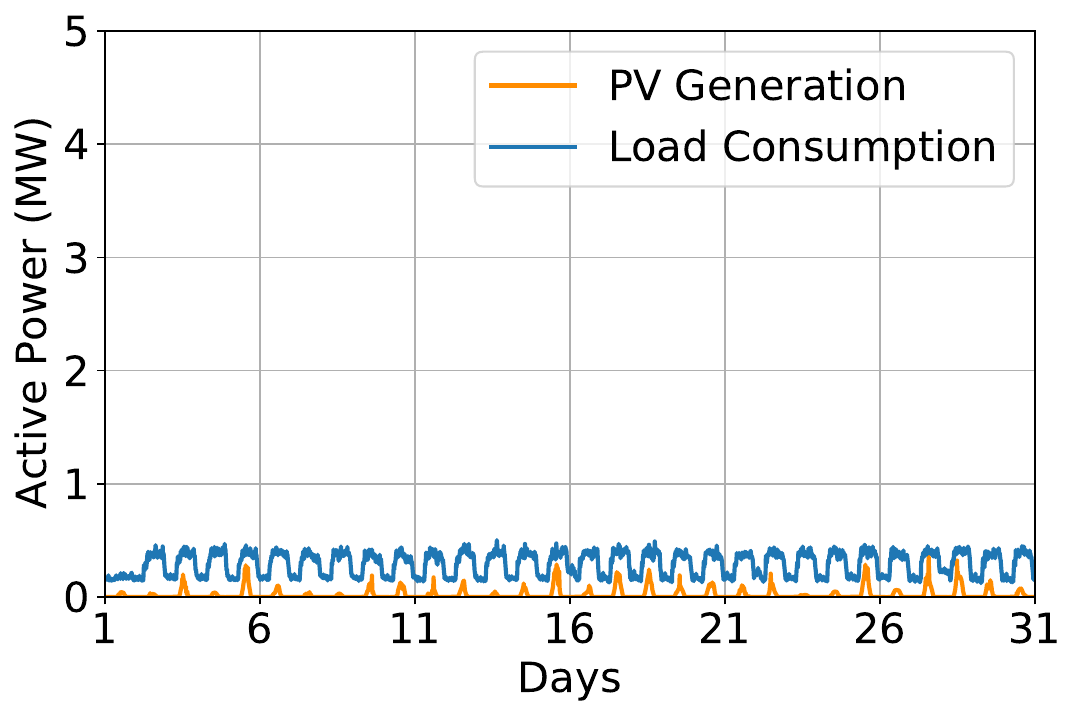}
                  \caption{A winter month of bus-100.}
            \end{subfigure}
            \begin{subfigure}[h]{0.45\textwidth}
                  \centering
                  \includegraphics[width=\textwidth]{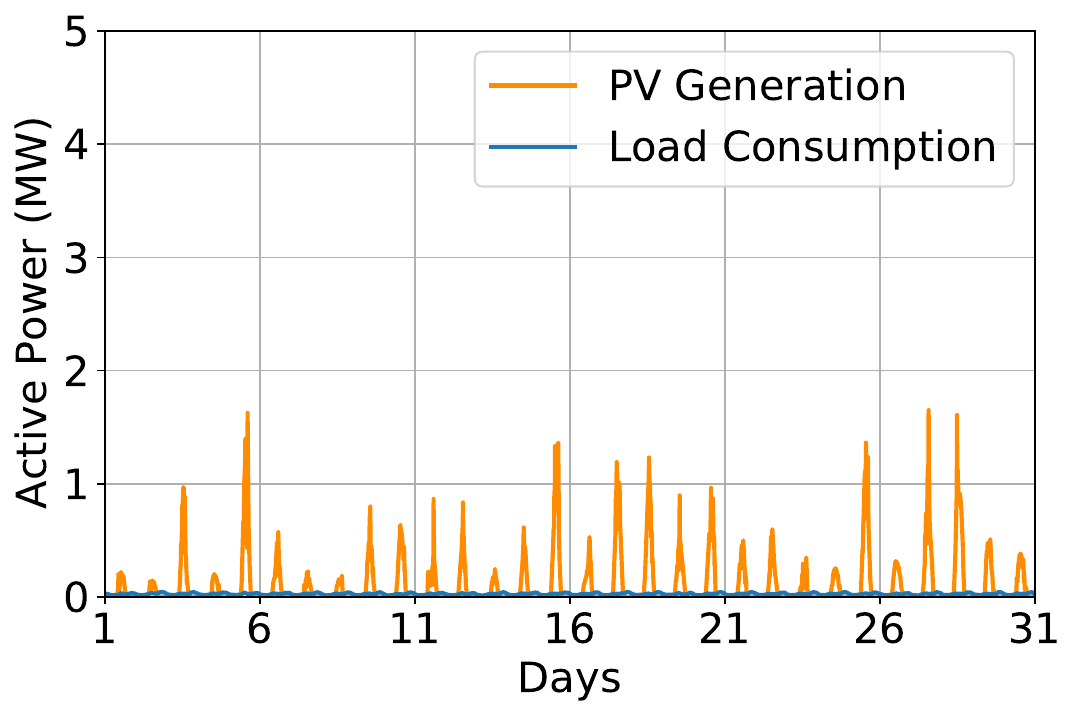}
                  \caption{A winter month of bus-111.}
            \end{subfigure}
            \begin{subfigure}[h]{0.45\textwidth}
                  \centering
                  \includegraphics[width=\textwidth]{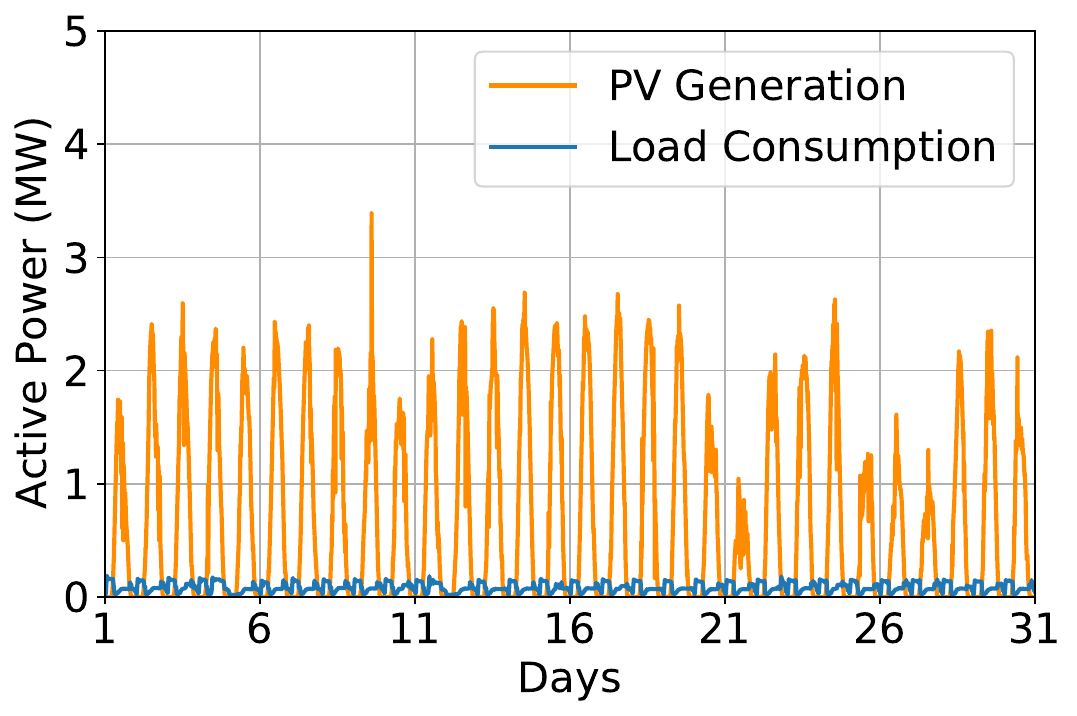}
                  \caption{A summer month of bus-36.}
            \end{subfigure}
            \begin{subfigure}[h]{0.45\textwidth}
                  \centering
                  \includegraphics[width=\textwidth]{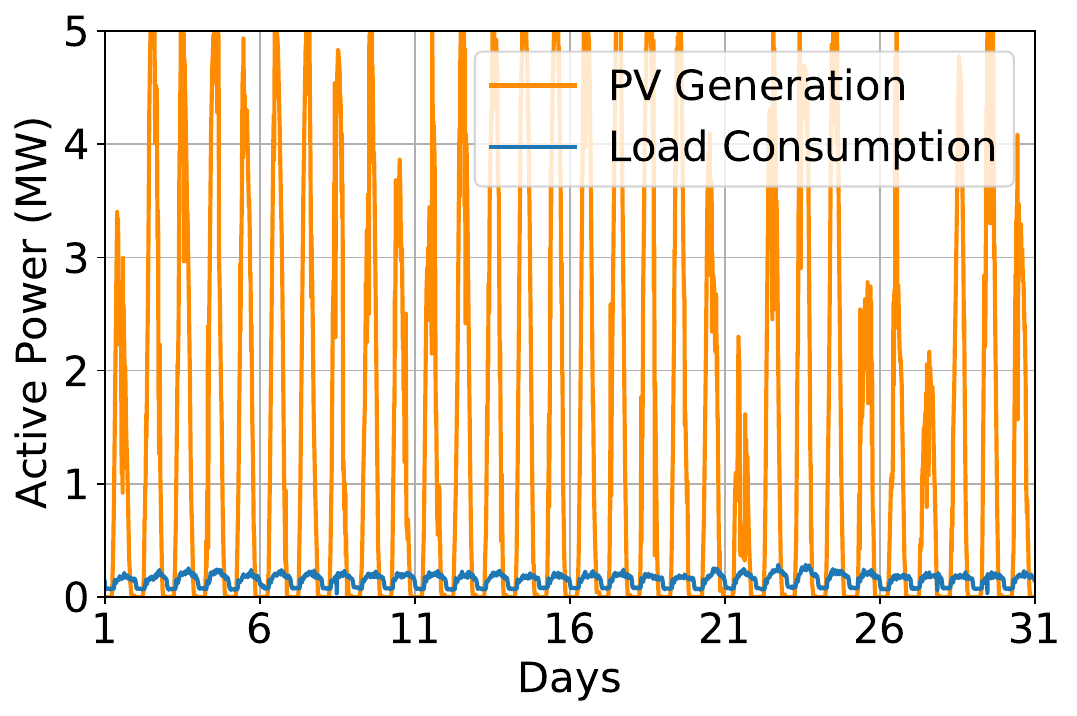}
                  \caption{A summer month of bus-77.}
            \end{subfigure}
            \begin{subfigure}[h]{0.45\textwidth}
                  \centering
                  \includegraphics[width=\textwidth]{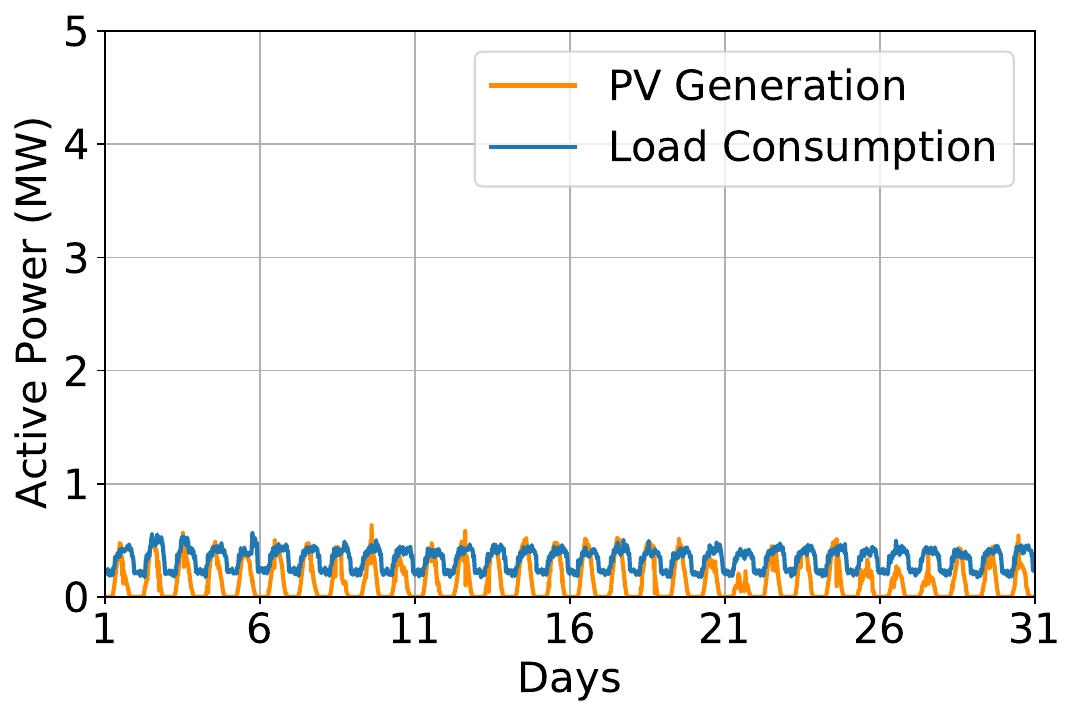}
                  \caption{A summer month of bus-100.}
            \end{subfigure}
            \begin{subfigure}[h]{0.45\textwidth}
                  \centering
                  \includegraphics[width=\textwidth]{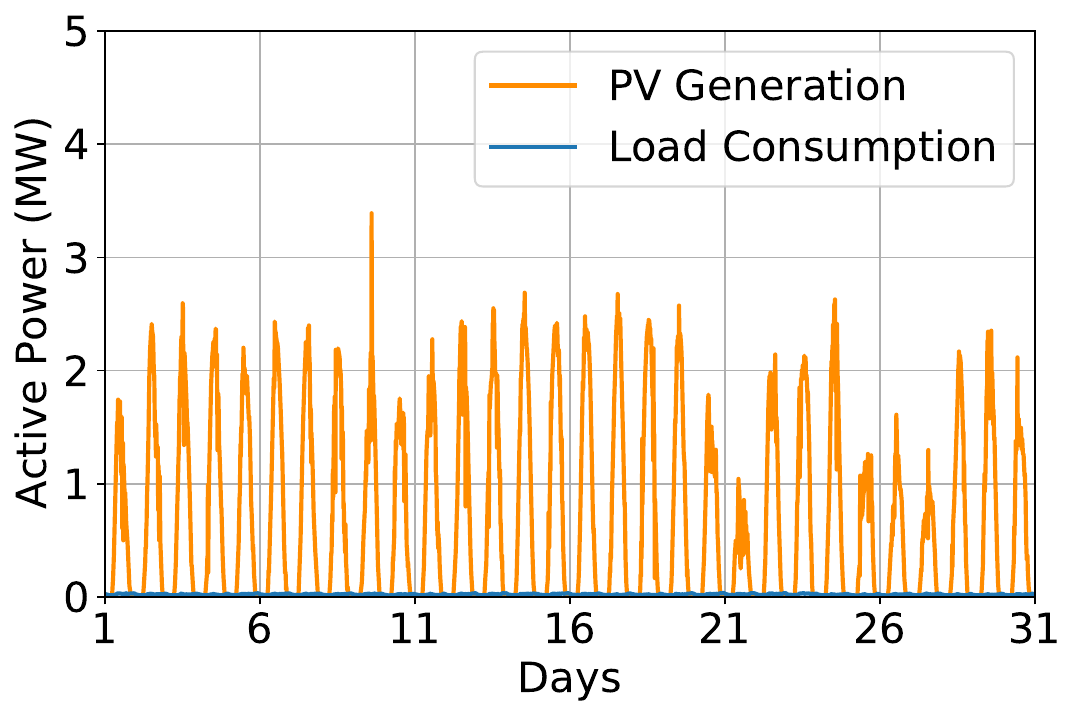}
                  \caption{A summer month of bus-111.}
            \end{subfigure}
            \caption{Daily power of the 141-bus network: active PV power generation and active load consumption for different buses in the 141-bus network.}
            %  (a): bus-36, (b): bus-77, (c): bus-100 (d): bus-111. The first row: power in winter month (January), second row: power in summer month (July).
            \label{fig:141_bus_single}
        \end{figure}
        
        \begin{figure}[ht!]
            \centering
            \begin{subfigure}[h]{0.45\textwidth}
                  \centering
                  \includegraphics[width=\textwidth]{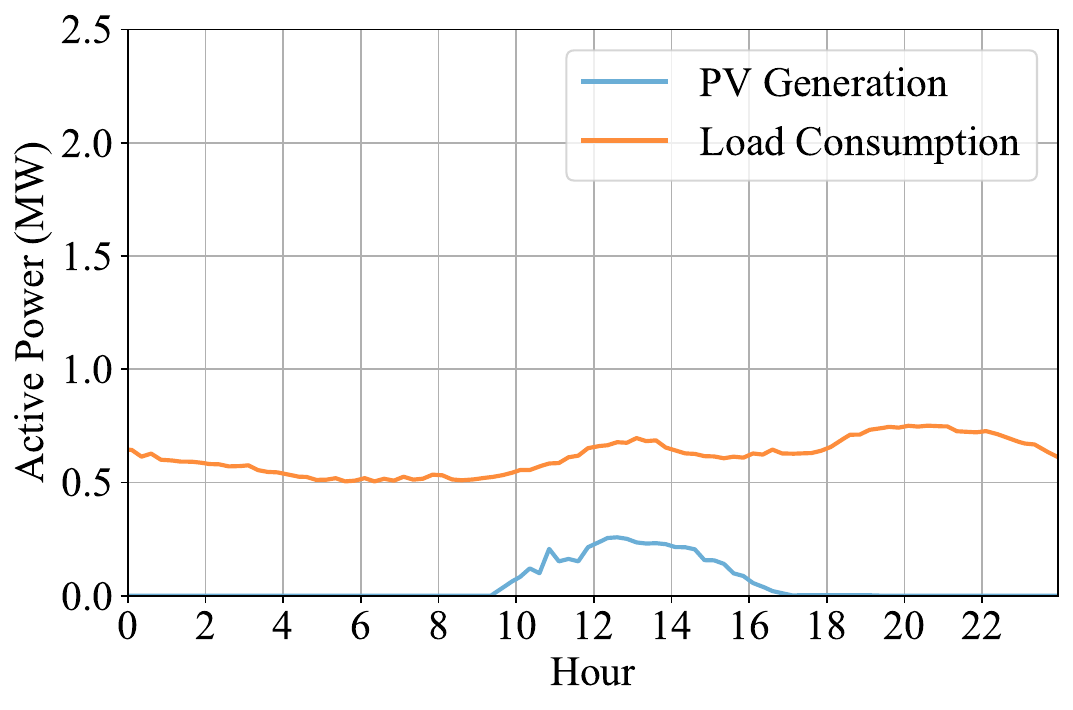}
                  \caption{A winter day.}
            \end{subfigure}
            \begin{subfigure}[h]{0.45\textwidth}
                  \centering
                  \includegraphics[width=\textwidth]{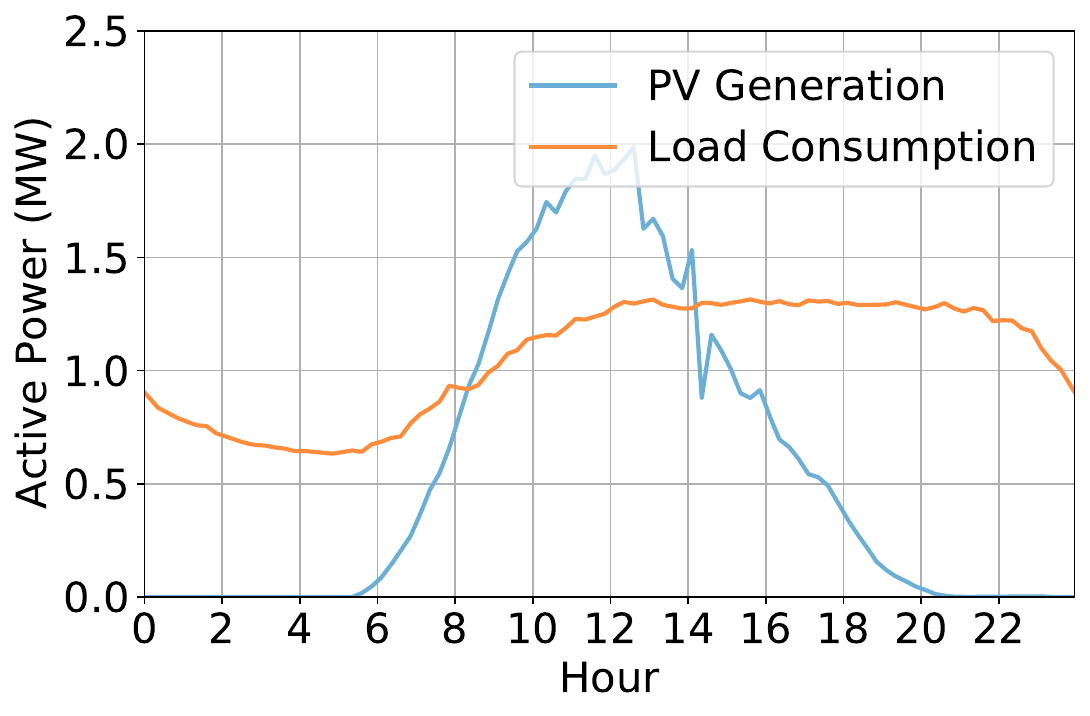}
                  \caption{A summer day.}
            \end{subfigure}
            \begin{subfigure}[h]{0.45\textwidth}
                  \centering
                  \includegraphics[width=\textwidth]{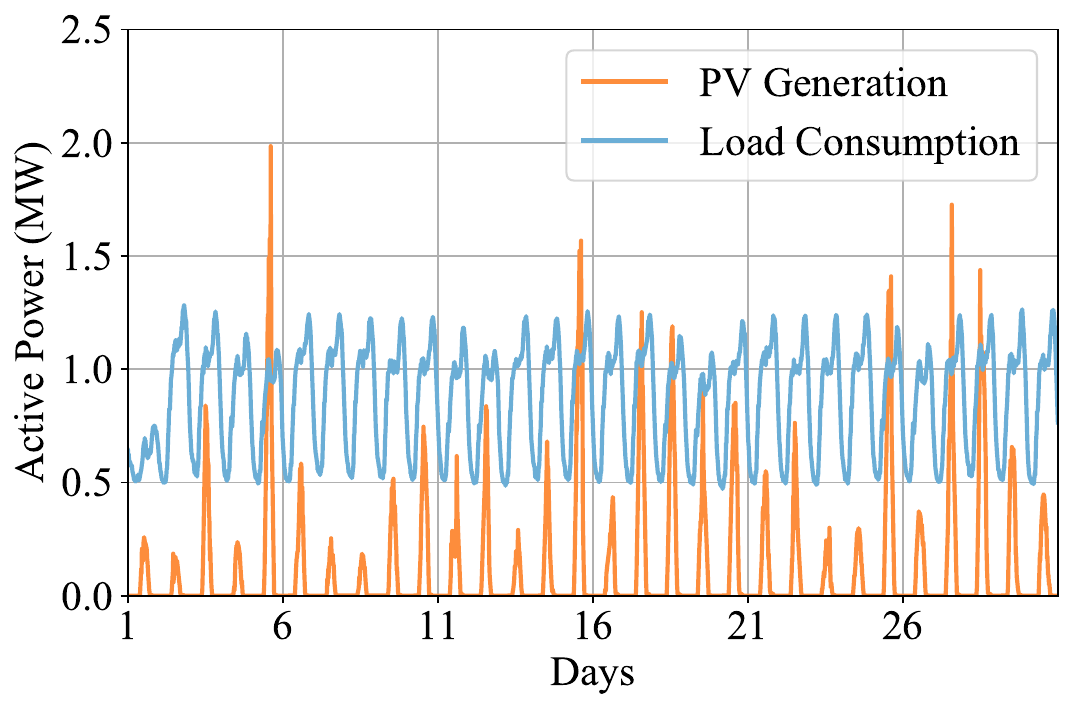}
                  \caption{A winter month (January).}
            \end{subfigure}
            \begin{subfigure}[h]{0.45\textwidth}
                  \centering
                  \includegraphics[width=\textwidth]{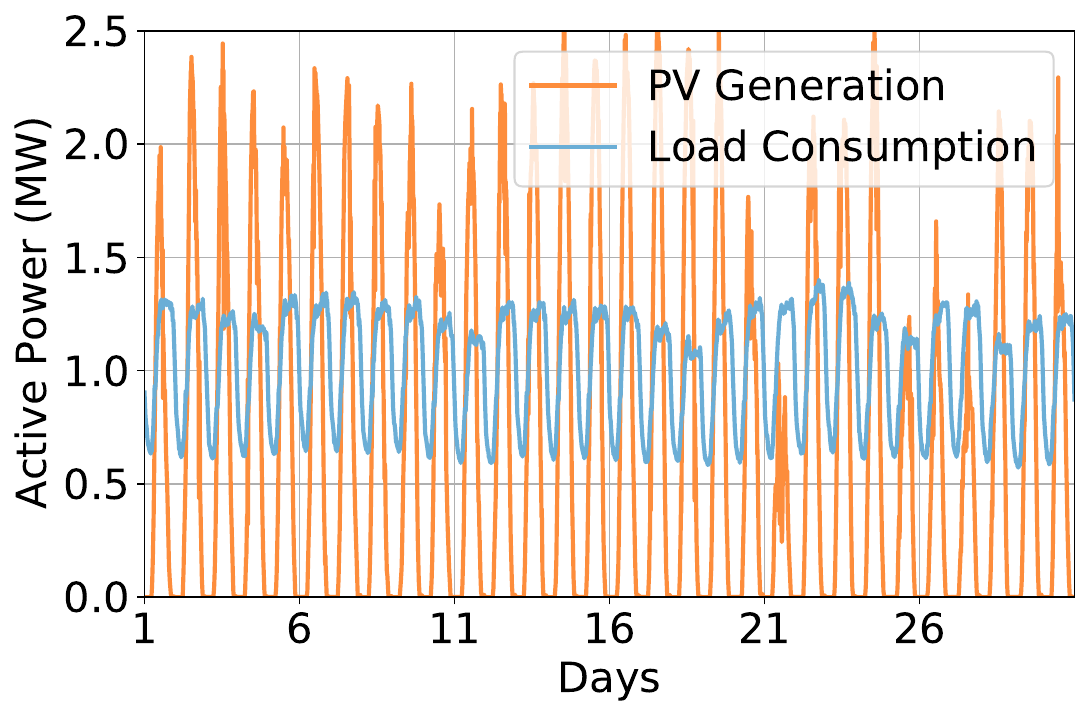}
                  \caption{A summer month (July).}
            \end{subfigure}
            \caption{Total power of the 322-bus network.}
            % : (a): a winter day, (b): a summer day, (c): a winter month (January), (d): a summer month (July)
            \label{fig:322_bus_total}
        \end{figure}
        
        \begin{figure}[ht!]
            \centering
            \begin{subfigure}[h]{0.45\textwidth}
                  \centering
                  \includegraphics[width=\textwidth]{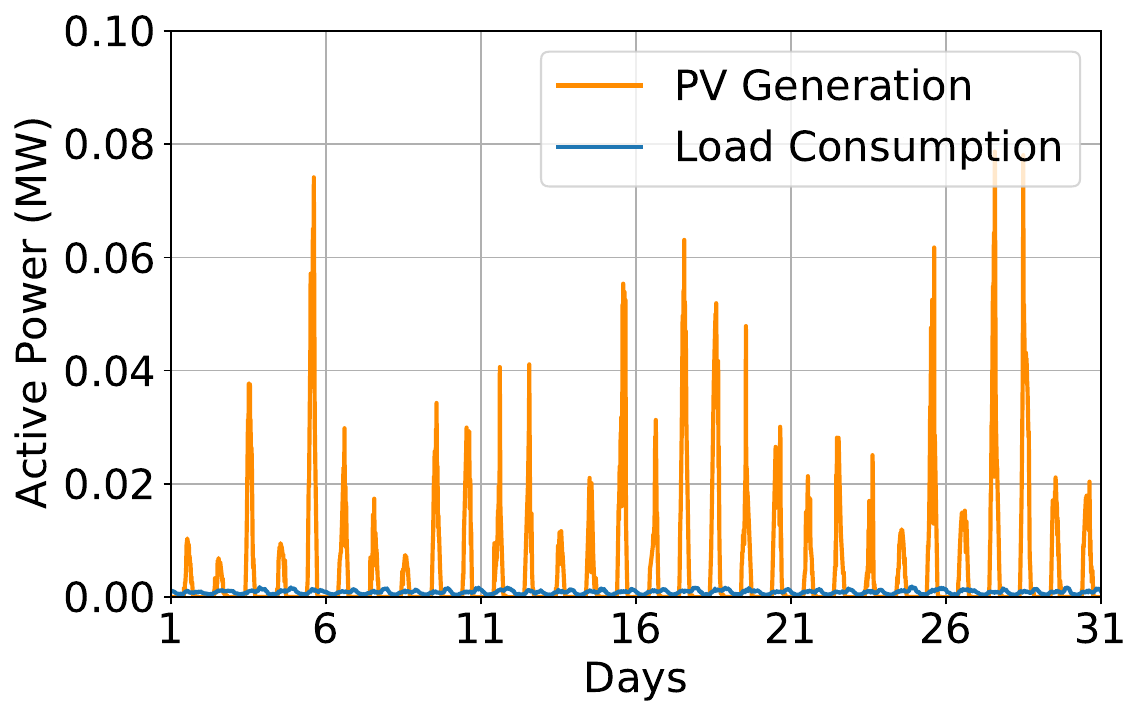}
                  \caption{A winter month of bus-54.}
            \end{subfigure}
            \begin{subfigure}[h]{0.45\textwidth}
                  \centering
                  \includegraphics[width=\textwidth]{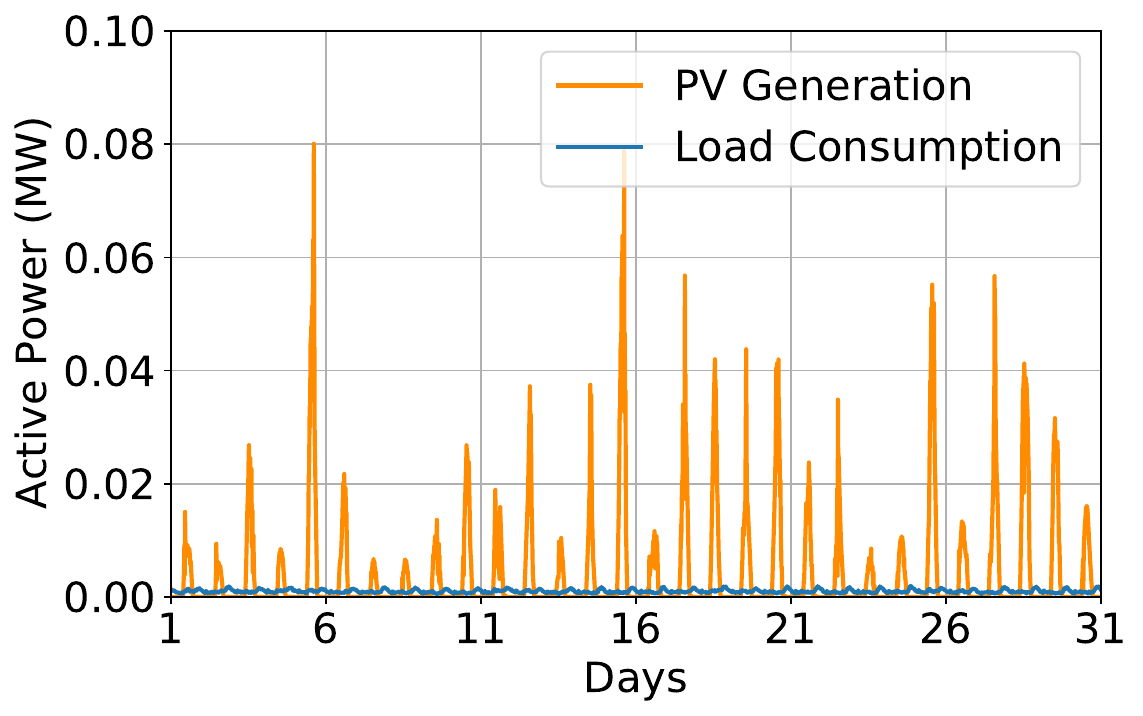}
                  \caption{A winter month of bus-147.}
            \end{subfigure}
            \begin{subfigure}[h]{0.45\textwidth}
                  \centering
                  \includegraphics[width=\textwidth]{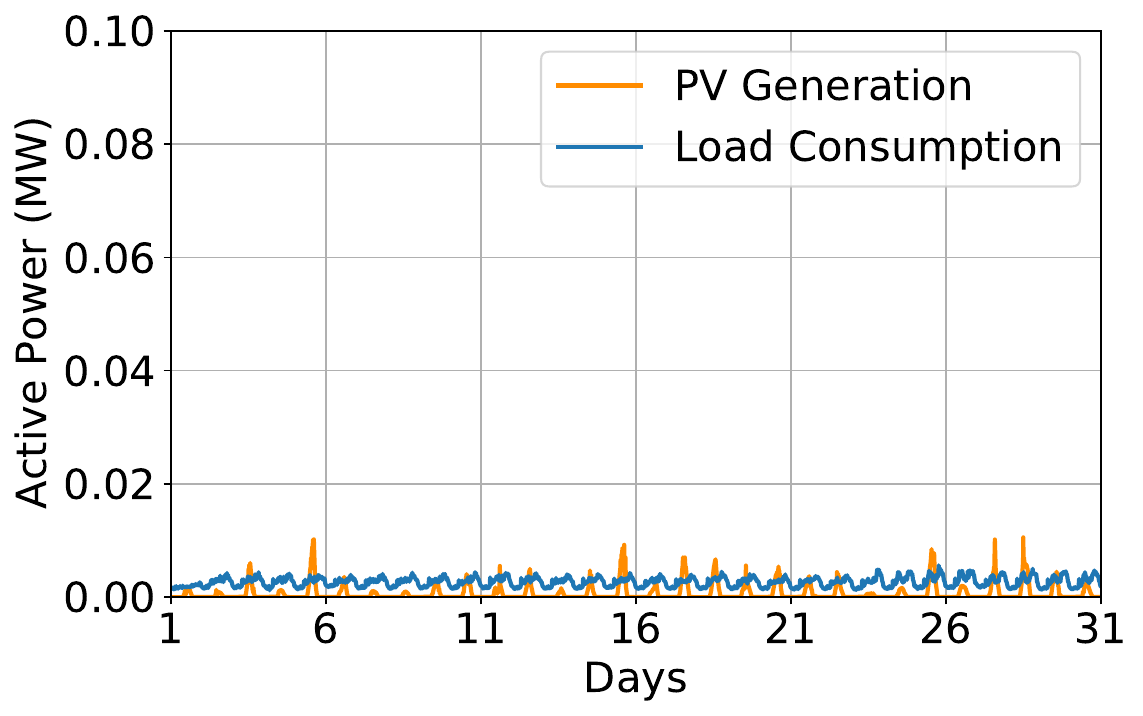}
                  \caption{A winter month of bus-297.}
            \end{subfigure}
            \begin{subfigure}[h]{0.45\textwidth}
                  \centering
                  \includegraphics[width=\textwidth]{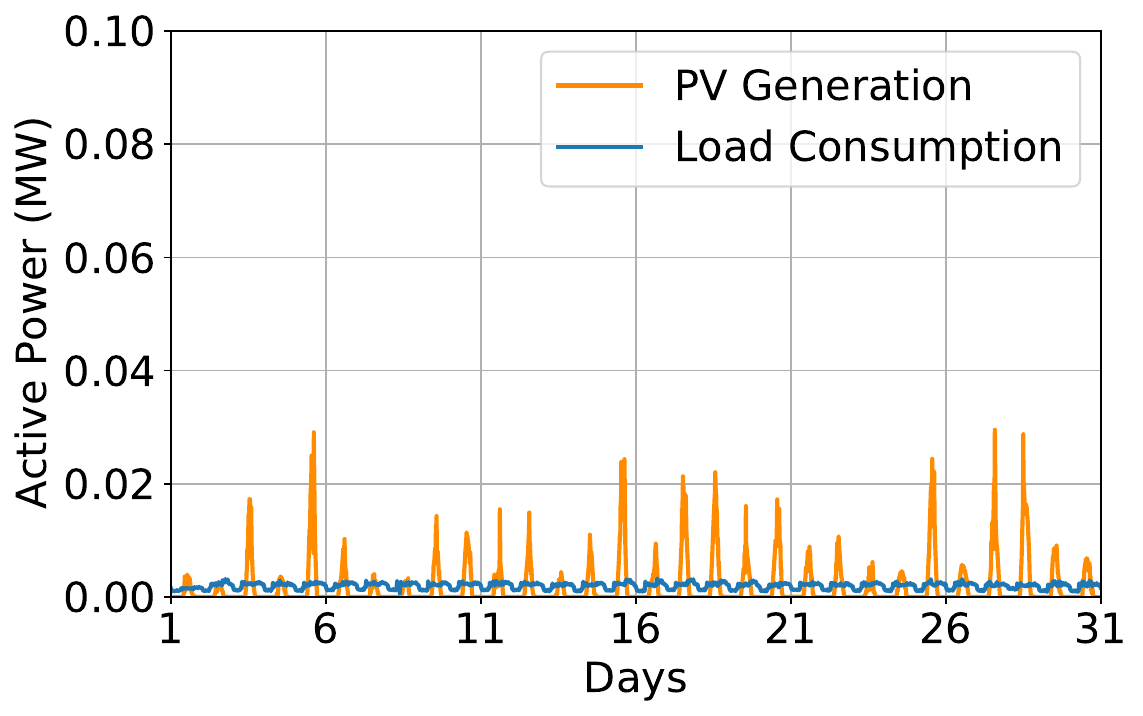}
                  \caption{A winter month of bus-322.}
            \end{subfigure}
            \begin{subfigure}[h]{0.45\textwidth}
                  \centering
                  \includegraphics[width=\textwidth]{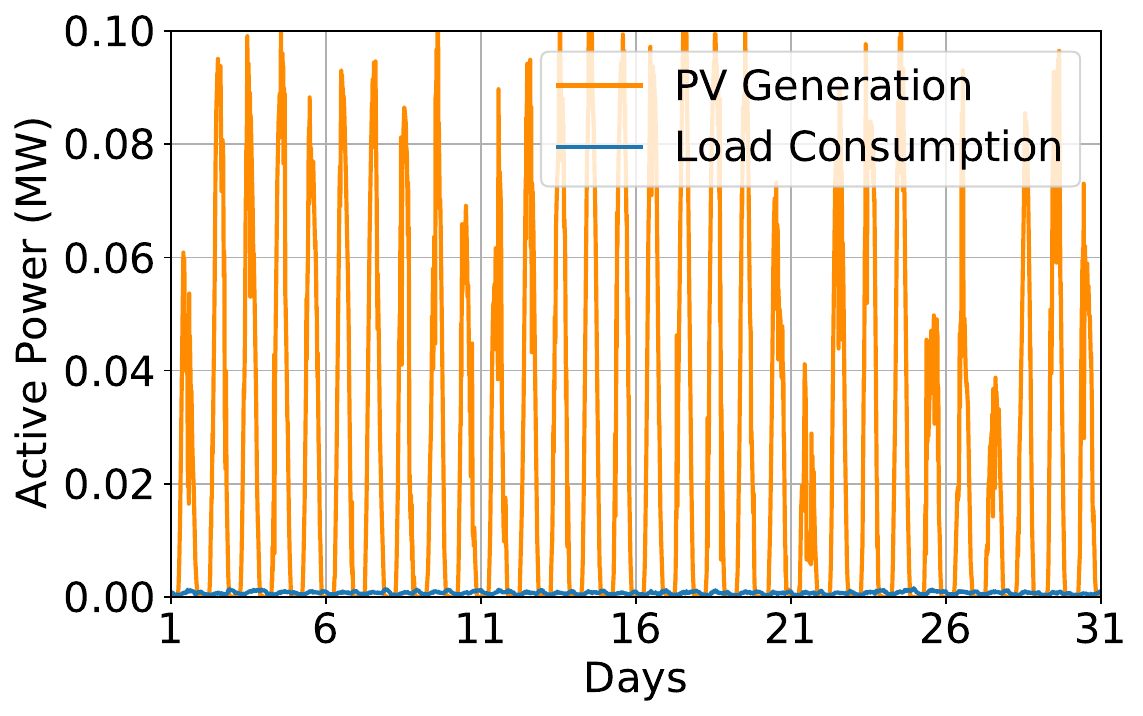}
                  \caption{A summer month of bus-54.}
            \end{subfigure}
            \begin{subfigure}[h]{0.45\textwidth}
                  \centering
                  \includegraphics[width=\textwidth]{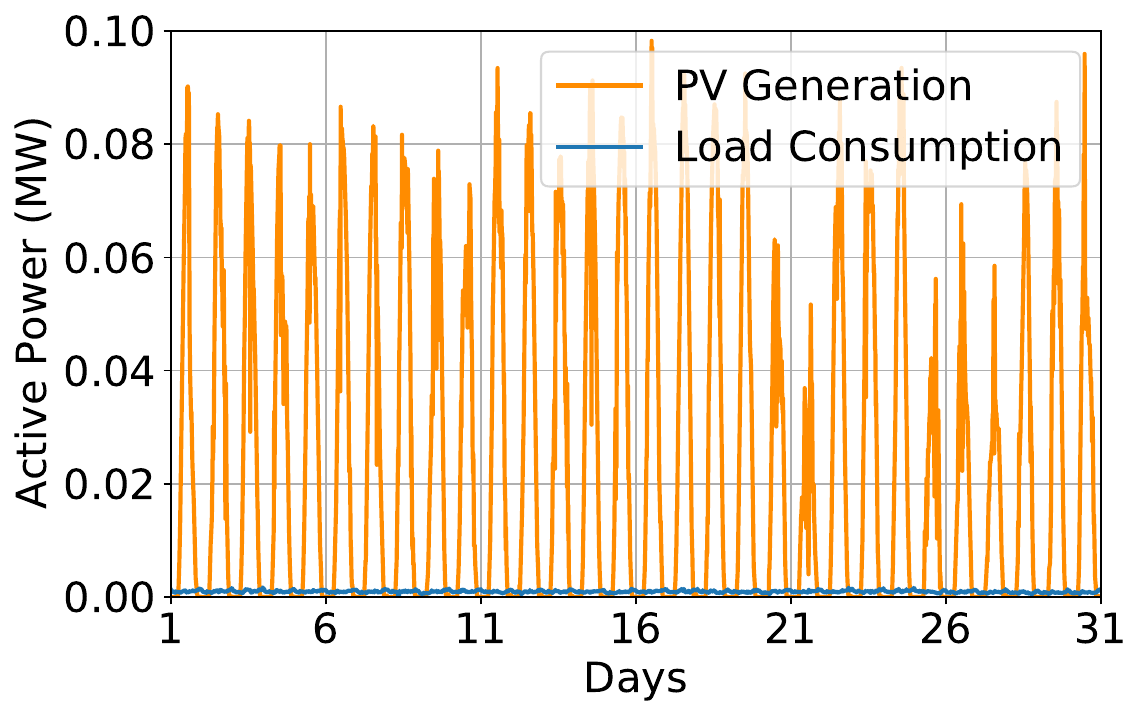}
                  \caption{A summer month of bus-147.}
            \end{subfigure}
            \begin{subfigure}[h]{0.45\textwidth}
                  \centering
                  \includegraphics[width=\textwidth]{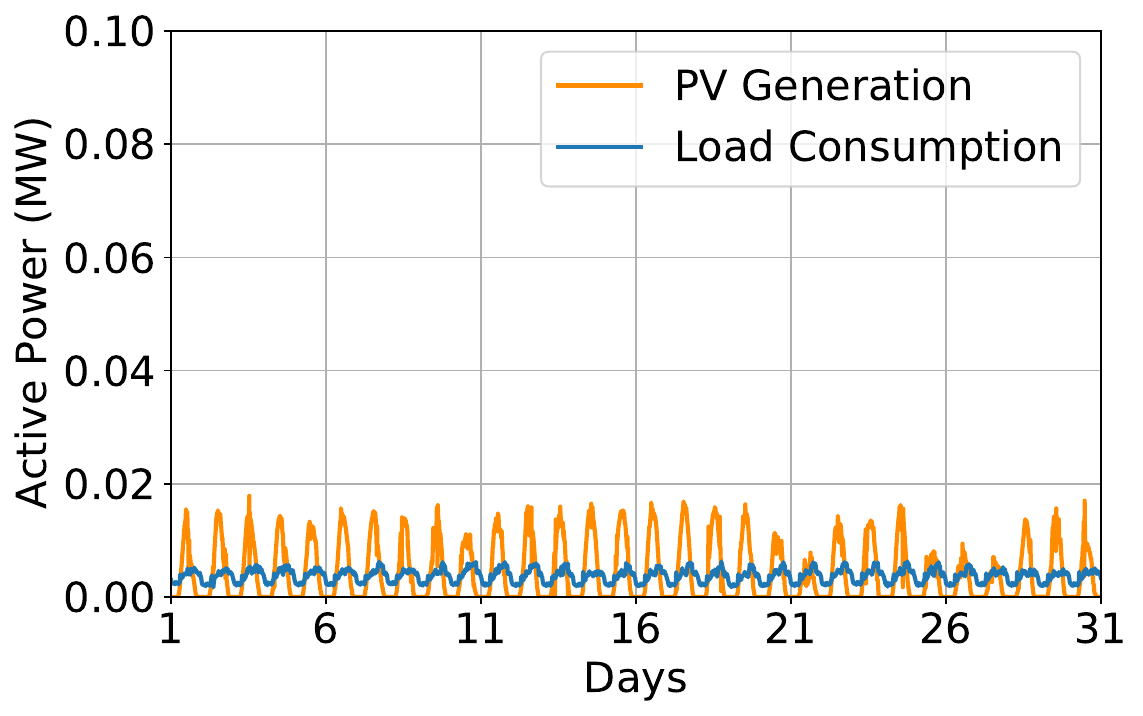}
                  \caption{A summer month of bus-297.}
            \end{subfigure}
            \begin{subfigure}[h]{0.45\textwidth}
                  \centering
                  \includegraphics[width=\textwidth]{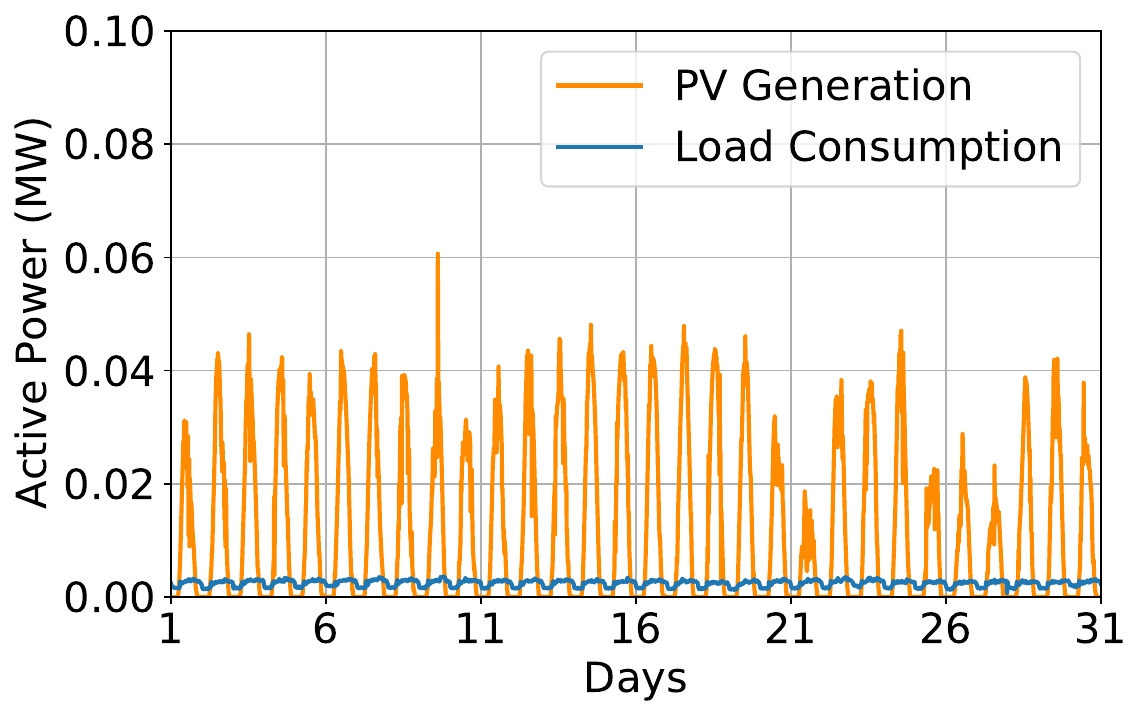}
                  \caption{A summer month of bus-322.}
            \end{subfigure}
            \caption{Daily power of the 322-bus network: active PV power generation and active load consumption for different buses in the 322-bus network.}
            \label{fig:322_bus_single}
        \end{figure}
        
        \begin{figure}[ht!]
            \centering
            \begin{subfigure}[h]{0.45\textwidth}
                  \centering
                  \includegraphics[width=\textwidth]{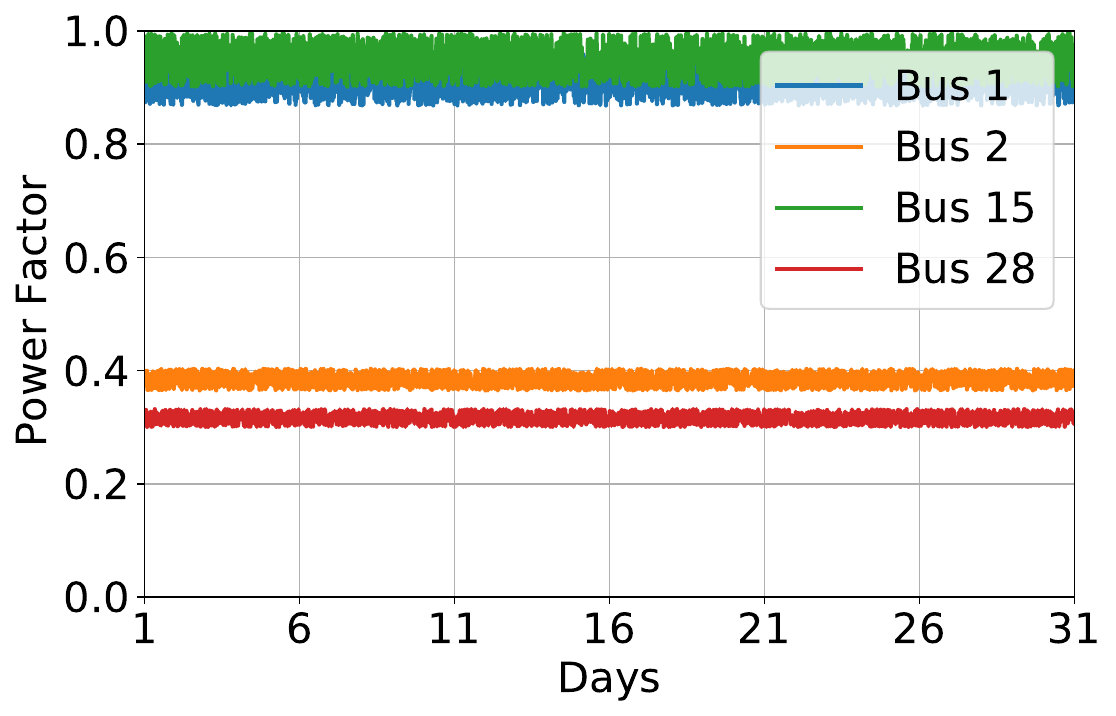}
                  \caption{The 33-bus network.}
            \end{subfigure}
            \begin{subfigure}[h]{0.45\textwidth}
                  \centering
                  \includegraphics[width=\textwidth]{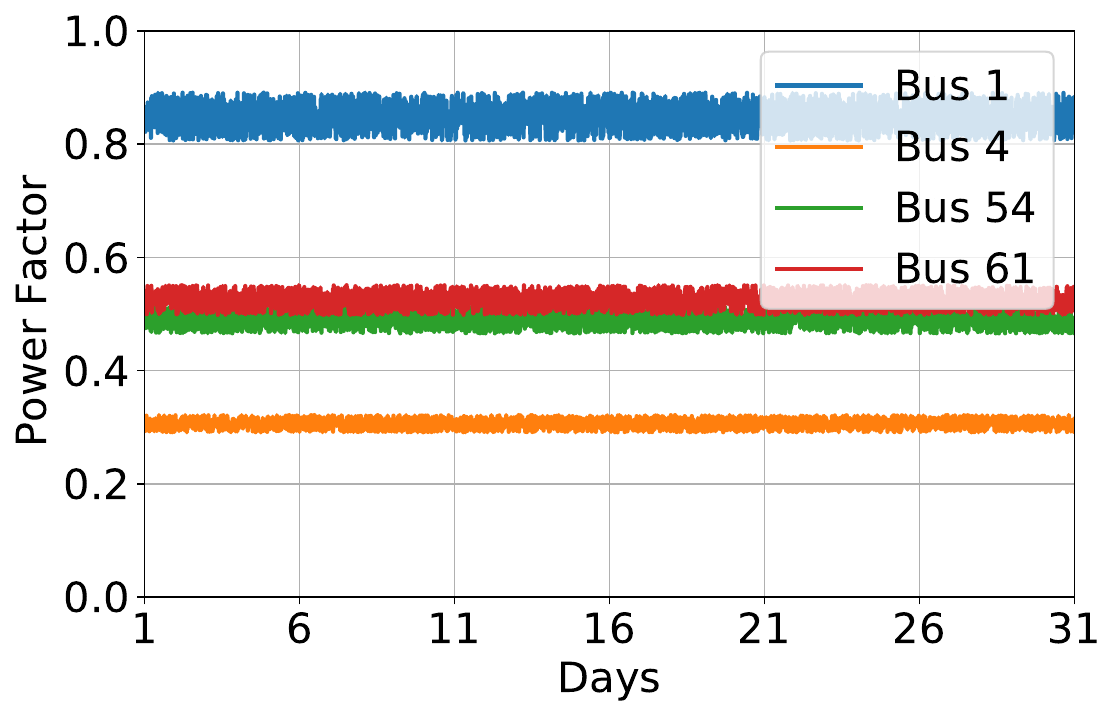}
                  \caption{The 141-bus network.}
            \end{subfigure}
            \begin{subfigure}[h]{0.45\textwidth}
                  \centering
                  \includegraphics[width=\textwidth]{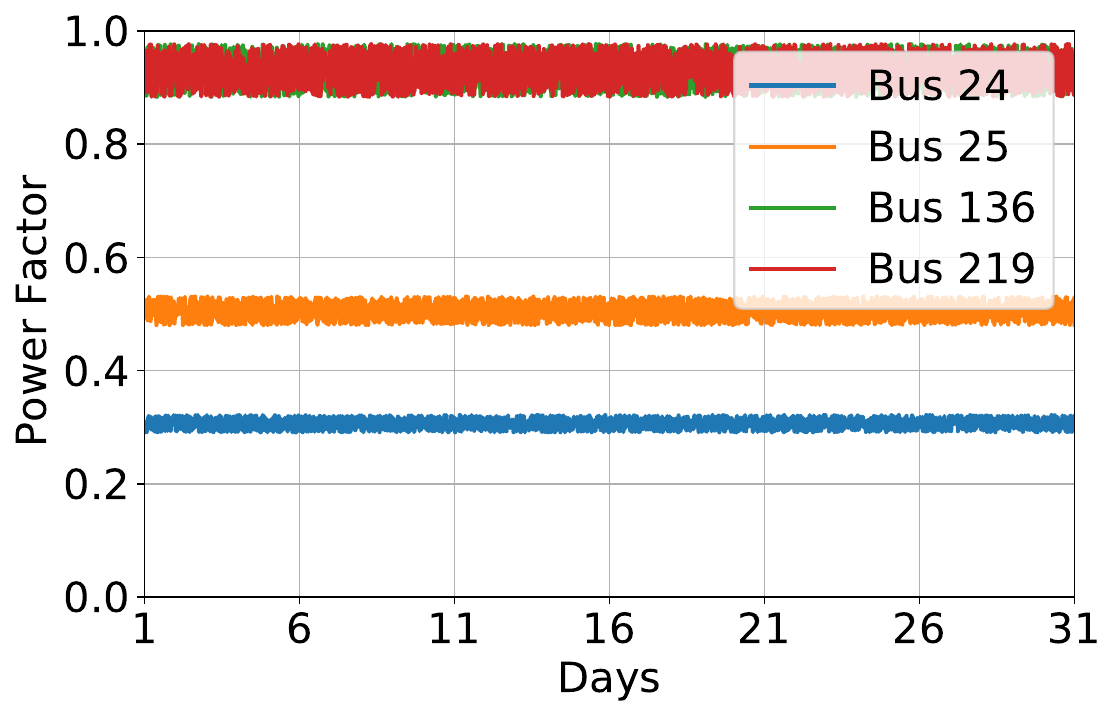}
                  \caption{The 322-bus network.}
            \end{subfigure}
            \caption{Power factors of four buses in all network topologies.}
            \label{fig:power_factor}
        \end{figure}

\addcontentsline{toc}{chapter}{Bibliography}
\bibliographystyle{IEEEtran}
\bibliography{bibliography}

\end{document}